\newtheorem{theorem}{Theorem}[section]
\newtheorem{lemma}[theorem]{Lemma}
\newtheorem{corollary}[theorem]{Corollary}
\newtheorem{proposition}[theorem]{Proposition}
\renewcommand{\mathbf}{\boldsymbol}
\newcommand{\mb}{\mathbf}
\newcommand{\mc}{\mathcal}
\newcommand{\bb}{\mathbb}
\newcommand{\magnitude}[1]{ \left| #1 \right| } 
\newcommand{\set}[1]{\left\{ #1 \right\}}
\newcommand{\condset}[2]{ \left\{ #1 \;\middle|\; #2 \right\} }
\newcommand{\reals}{\bb R}
\newcommand{\eps}{\varepsilon}
\newcommand{\R}{\reals}
\newcommand{\N}{\bb N}
\newcommand{\indicator}[1]{\mathbbm 1_{#1}}
\newcommand{ \Brac }[1]{\left\lbrace #1 \right\rbrace}
\newcommand{ \brac }[1]{\left[ #1 \right]}
\newcommand{ \paren }[1]{ \left( #1 \right) }
\DeclareMathOperator{\trace}{tr}
\DeclareMathOperator{\supp}{supp}
\DeclareMathOperator{\diag}{diag}
\DeclareMathOperator{\sign}{sign}
\DeclareMathOperator{\grad}{grad}
\DeclareMathOperator{\Hess}{Hess}
\DeclareMathOperator{\mini}{minimize}
\DeclareMathOperator{\st}{subject\; to}
\newcommand{\event}{\mc E}
\newcommand{\rconcave}{r_\fgecap}
\newcommand{\Lconcave}{L_\fgecap}
\newcommand{\rconvex}{r_\fgecup}
\newcommand{\Lconvex}{L_\fgecup}
\newcommand{\wh}{\widehat}
\newcommand{\wt}{\widetilde}
\newcommand{\ol}{\overline}
\newcommand{\betaconcave}{\beta_\fgecap}
\newcommand{\rI}{R_{\mathtt{I}}}
\newcommand{\rII}{R_{\mathtt{II}}}
\newcommand{\rIII}{R_{\mathtt{III}}}
\newcommand{\dI}{d_{\mathtt{I}}}
\newcommand{\dII}{d_{\mathtt{II}}}
\newcommand{\dIII}{d_{\mathtt{III}}}
\newcommand{\betagrad}{\beta_{\mathrm{grad}}}
\newcommand{\norm}[2]{\left\| #1 \right\|_{#2}}
\newcommand{\abs}[1]{\left| #1 \right|}
\newcommand{\innerprod}[2]{\left\langle #1,  #2 \right\rangle}
\newcommand{\prob}[1]{\bb P\left[ #1 \right]}
\newcommand{\expect}[1]{\bb E\left[ #1 \right]}
\numberwithin{equation}{section}
\def \endprf{\hfill {\vrule height6pt width6pt depth0pt}\medskip}
\newenvironment{proof}{\noindent {\bf Proof} }{\endprf\par}
\title{Complete Dictionary Recovery over the Sphere}
\author{Ju Sun, Qing Qu, and John Wright \\
\texttt{\{js4038, qq2105, jw2966\}@columbia.edu} \\
Department of Electrical Engineering, Columbia University, New York, USA
}
\date{April 25, 2015  \quad Revised: \today}
\begin{document}
\maketitle

\vspace{-0.3in}
\begin{abstract}
We consider the problem of recovering a complete (i.e., square and invertible) matrix $\mb A_0$, from $\mb Y \in \R^{n \times p}$ with $\mb Y = \mb A_0 \mb X_0$, provided $\mb X_0$ is sufficiently sparse. This recovery problem is central to the theoretical understanding of dictionary learning, which seeks a sparse representation for a collection of input signals, and finds numerous applications in modern signal processing and machine learning. We give the first efficient algorithm that provably recovers $\mb A_0$ when $\mb X_0$ has $O\paren{n}$ nonzeros per column, under suitable probability model for $\mb X_0$. In contrast, prior results based on efficient algorithms provide recovery guarantees when $\mb X_0$ has only $O\paren{n^{1-\delta}}$ nonzeros per column for any constant $\delta \in (0, 1)$.

Our algorithmic pipeline centers around solving a certain nonconvex optimization problem with a spherical constraint, and hence is naturally phrased in the language of manifold optimization. To show this apparently hard problem is tractable, we first provide a geometric characterization of the high-dimensional objective landscape, which shows that with high probability there are no ``spurious'' local minima. This particular geometric structure allows us to design a Riemannian trust region algorithm over the sphere that provably converges to one local minimizer with an arbitrary initialization, despite the presence of saddle points. The geometric approach we develop here may also shed light on other problems arising from nonconvex recovery of structured signals. 
\end{abstract}

\textbf{Keywords.} Dictionary learning, Nonconvex optimization, Spherical constraint, Trust region method, Escaping saddle point, Manifold optimization, Function landscape, Second-order geometry, Inverse problem, Structured signal, Nonlinear approximation

\textbf{Mathematics Subject Classification.}  68P30, 58C05, 94A12, 94A08, 68T05, 90C26, 90C48, 90C55

\indent {\bf Acknowledgement.} We thank Dr. Boaz Barak for pointing out an inaccurate comment made on overcomplete dictionary learning using SOS. We thank Cun Mu and Henry Kuo of Columbia University for discussions related to this project. JS thanks the Wei Family Private Foundation for their generous support. This work was partially supported by grants ONR N00014-13-1-0492, NSF 1343282, and funding from the Moore and Sloan Foundations. 

\indent {\bf Note.} This technical report has subsequently been divided into two papers~\cite{sun2015complete_a} and~\cite{sun2015complete_b}. All future updates will be made only to the separate papers. 

\newpage
\tableofcontents 

\section{Introduction}
Given $p$ signal samples from $\R^n$, i.e., $\mb Y \doteq \brac{\mb y_1, \dots, \mb y_p}$, is it possible to construct a dictionary $\mb A \doteq \brac{\mb a_1, \dots, \mb a_m}$ with $m$ much smaller than $p$, such that $\mb Y \approx \mb A \mb X$ and the coefficient matrix $\mb X$ has as few nonzeros as possible? In other words, this model \emph{dictionary learning} (DL) problem seeks a concise representation for a collection of input signals. Concise signal representations play a central role in compression, and also prove useful for many other important tasks, such as signal acquisition, denoising, and classification. 

Traditionally, concise signal representations have relied heavily on explicit analytic bases constructed in nonlinear approximation and harmonic analysis. This constructive approach has proved highly successfully; the numerous theoretical advances in these fields (see, e.g., ~\cite{devore1998nonlinear, temlyakov2003nonlinear, devore2009nonlinear, candes2002new, ma2010review} for summary of relevant results) provide ever more powerful representations, ranging from the classic Fourier to modern multidimensional, multidirectional, multiresolution bases, including wavelets, curvelets, ridgelets, and so on. However, two challenges confront practitioners in adapting these results to new domains: which function class best describes signals at hand, and consequently which representation is most appropriate. These challenges are coupled, as function classes with known ``good'' analytic bases are rare. \footnote{As Donoho et al~\cite{donoho1998data} put it, ``...in effect, uncovering the optimal codebook structure of naturally occurring data involves more challenging empirical questions than any that have ever been solved in empirical work in the mathematical sciences.''}

Around 1996, neuroscientists Olshausen and Field discovered that sparse coding, the principle of encoding a signal with few atoms from a learned dictionary, reproduces important properties of the receptive fields of the simple cells that perform early visual processing~\cite{olshausen1996emergence, olshausen1997sparse}. The discovery has spurred a flurry of algorithmic developments and successful applications for DL in the past two decades, spanning classical image processing, visual recognition, compressive signal acquisition, and also recent deep architectures for signal classification (see, e.g., \cite{elad2010sparse, mairal2014sparse} for review this development). 

The learning approach is particularly relevant to modern signal processing and machine learning, which deal with data of huge volume and great variety (e.g., images, audios, graphs, texts, genome sequences, time series, etc). The proliferation of problems and data seems to preclude analytically deriving optimal representations for each new class of data in a timely manner. On the other hand, as datasets grow, learning dictionaries directly from data looks increasingly attractive and promising. When armed with sufficiently many data samples of one signal class, by solving the model DL problem, one would expect to obtain a dictionary that allows sparse representation for the whole class. This hope has been borne out in a number of successful examples~\cite{elad2010sparse, mairal2014sparse} and theories~\cite{maurer2010dimensional, Vainsencher:2011, Mehta13, gribonval2013sample}. 

\subsection{Theoretical and Algorithmic Challenges}

In contrast to the above empirical successes, the theoretical study of dictionary learning is still developing. For applications in which dictionary learning is to be applied in a ``hands-free'' manner, it is desirable to have efficient algorithms which are guaranteed to perform correctly, when the input data admit a sparse model. There have been several important recent results in this direction, which we will review in Section \ref{sec:lit_review}, after our sketching main results. Nevertheless, obtaining algorithms that provably succeed under broad and realistic conditions remains an important research challenge. 

To understand where the difficulties arise, we can consider a model formulation, in which we attempt to obtain the dictionary $\mb A$ and coefficients $\mb X$ which best trade-off sparsity and fidelity to the observed data:
\begin{align} \label{eq:dl_concept}
\mini_{\mb A \in \R^{n \times m}, \mb X \in \R^{m \times p}}\; \lambda \norm{\mb X}{1} + \frac{1}{2} \norm{\mb A \mb X - \mb Y}{F}^2, \; \st \;  \mb A \in \mc A.  
\end{align}
Here, $\norm{\mb X}{1} \doteq \sum_{i, j} \abs{X_{ij}}$ promotes sparsity of the coefficients, $\lambda \ge 0$ trades off the level of coefficient sparsity and quality of approximation, and $\mc A$ imposes desired structures on the dictionary.

This formulation is nonconvex: the admissible set $\mc A$ is typically nonconvex (e.g., orthogonal group, matrices with normalized columns)\footnote{For example, in nonlinear approximation and harmonic analysis, orthonormal basis or (tight-)frames are preferred; to fix the scale ambiguity discussed in the text, a common practice is to require that $\mb A$ to be column-normalized. There is no obvious reason to believe that convexifying these constraint sets would leave the optima unchanged. For example, the convex hull of the orthogonal group $O_n$ is the operator norm ball $\Brac{\mb X \in \R^{n \times n}: \norm{\mb X}{} \le 1}$. If there are no effective symmetry breaking constraints, any convex objective function tends to have minimizers inside the ball, which obviously will not be orthogonal matrices. Other ideas such as lifting may not play together with the objective function, nor yield tight relaxations (see, e.g.,~\cite{bandeira2013approximating, briet2014tight}).}, while the most daunting nonconvexity comes from the bilinear mapping: $\paren{\mb A, \mb X} \mapsto \mb A \mb X$. Because $\paren{\mb A, \mb X}$ and $\paren{\mb A \mb \Pi \mb \Sigma, \mb \Sigma^{-1} \mb \Pi^* \mb X}$ result in the same objective value for the conceptual formulation~\eqref{eq:dl_concept}, where $\mb \Pi$ is any permutation matrix, and $\mb \Sigma$ any diagonal matrix with diagonal entries in $\{ \pm 1 \}$, and $\paren{\cdot}^*$ denotes matrix transpose. Thus, we should expect the problem to have combinatorially many global minima. Because there are multiple isolated global minima, the problem does not appear to be amenable to convex relaxation (see similar discussions in, e.g.,~\cite{gribonval2010dictionary} and~\cite{geng2011local}).\footnote{Semidefinite programming (SDP) lifting may be one useful general strategy to convexify bilinear inverse problems, see, e.g., \cite{ahmed2014blind, choudhary2014identifiability}. However, for problems with general nonlinear constraints, it is unclear whether the lifting always yield tight relaxation, consider, e.g.,~\cite{bandeira2013approximating, briet2014tight} again.} This contrasts sharply with problems in sparse recovery and compressed sensing, in which simple convex relaxations are often provably effective
\cite{donoho2009observed, oymak2010new, candes2011robust, donoho2013phase, mccoy2014sharp, mu2013square, chandrasekaran2012convex, candes2013phaselift, amelunxen2014living, candes2014mathematics}. Is there any hope to obtain global solutions to the DL problem? 

\subsection{An Intriguing Numerical Experiment with Real Images} \label{sec:intro_exp}
\begin{figure}[t]
\centering  
\begin{subfigure}[t]{0.3\textwidth}
\centering
\includegraphics[width = 0.9\linewidth]{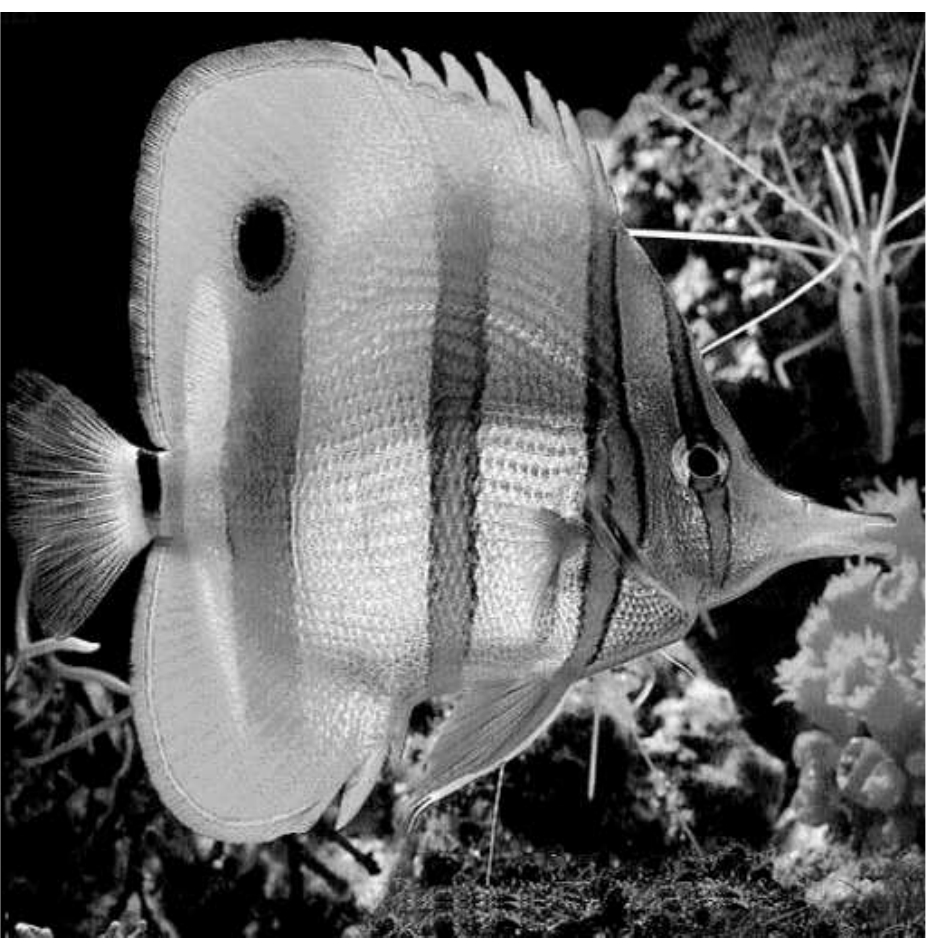} 
\end{subfigure}
\begin{subfigure}[t]{0.3\textwidth}
\centering
\includegraphics[width = 0.9\linewidth]{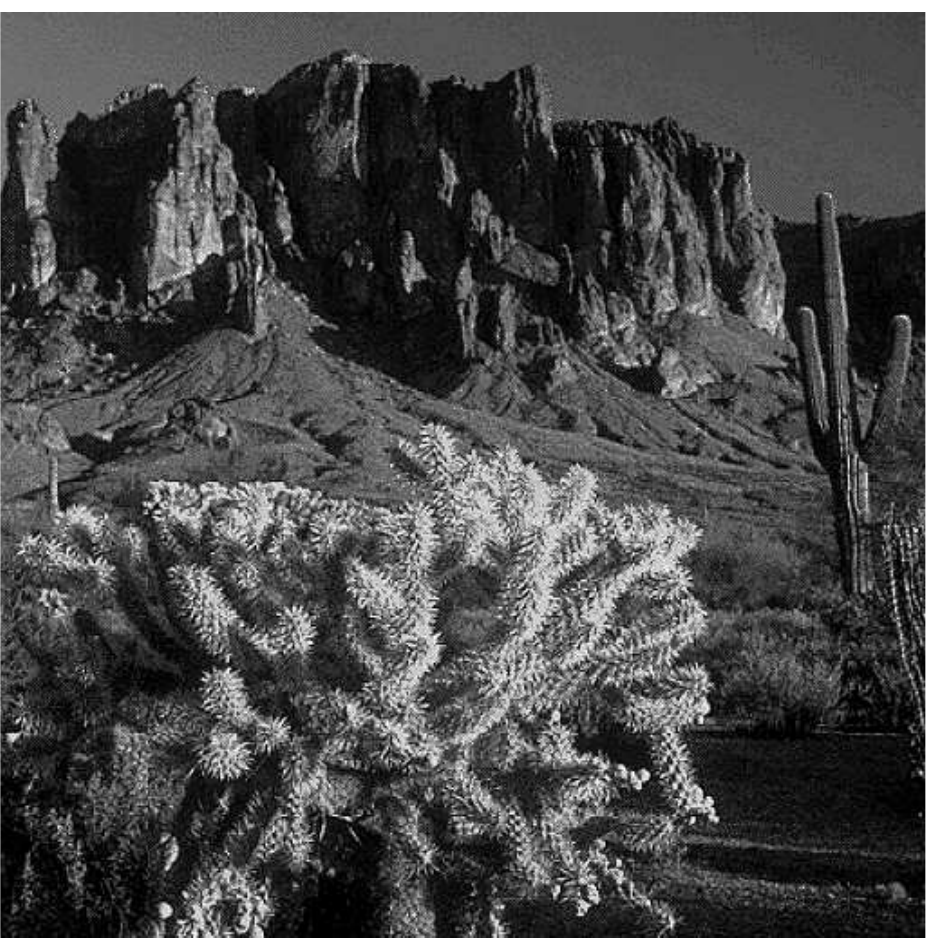} 
\end{subfigure}
\begin{subfigure}[t]{0.3\textwidth}
\centering
\includegraphics[width = 0.9\linewidth]{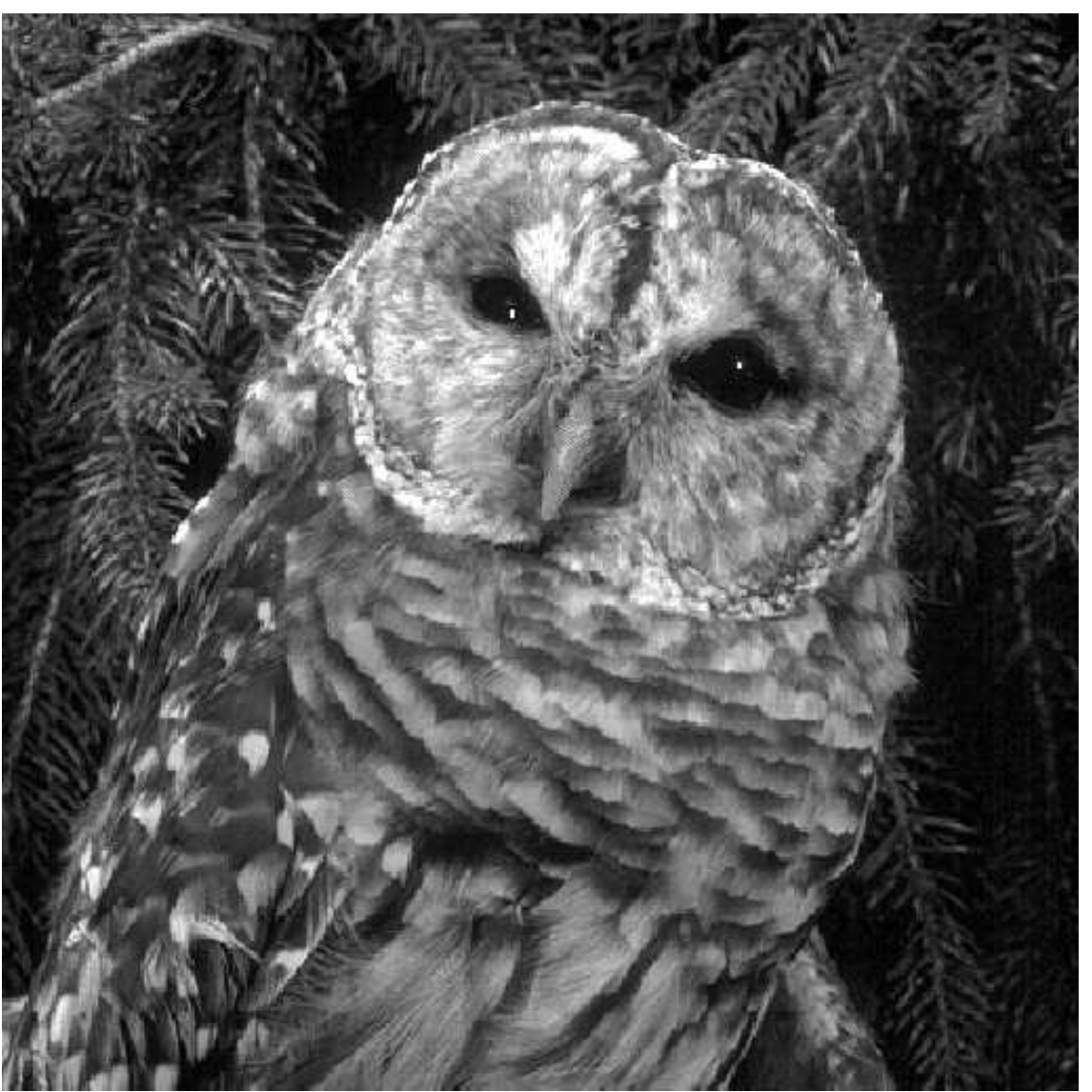} 
\end{subfigure} \\
\begin{subfigure}[t]{0.3\textwidth}
\centering
\includegraphics[width = 0.95\linewidth]{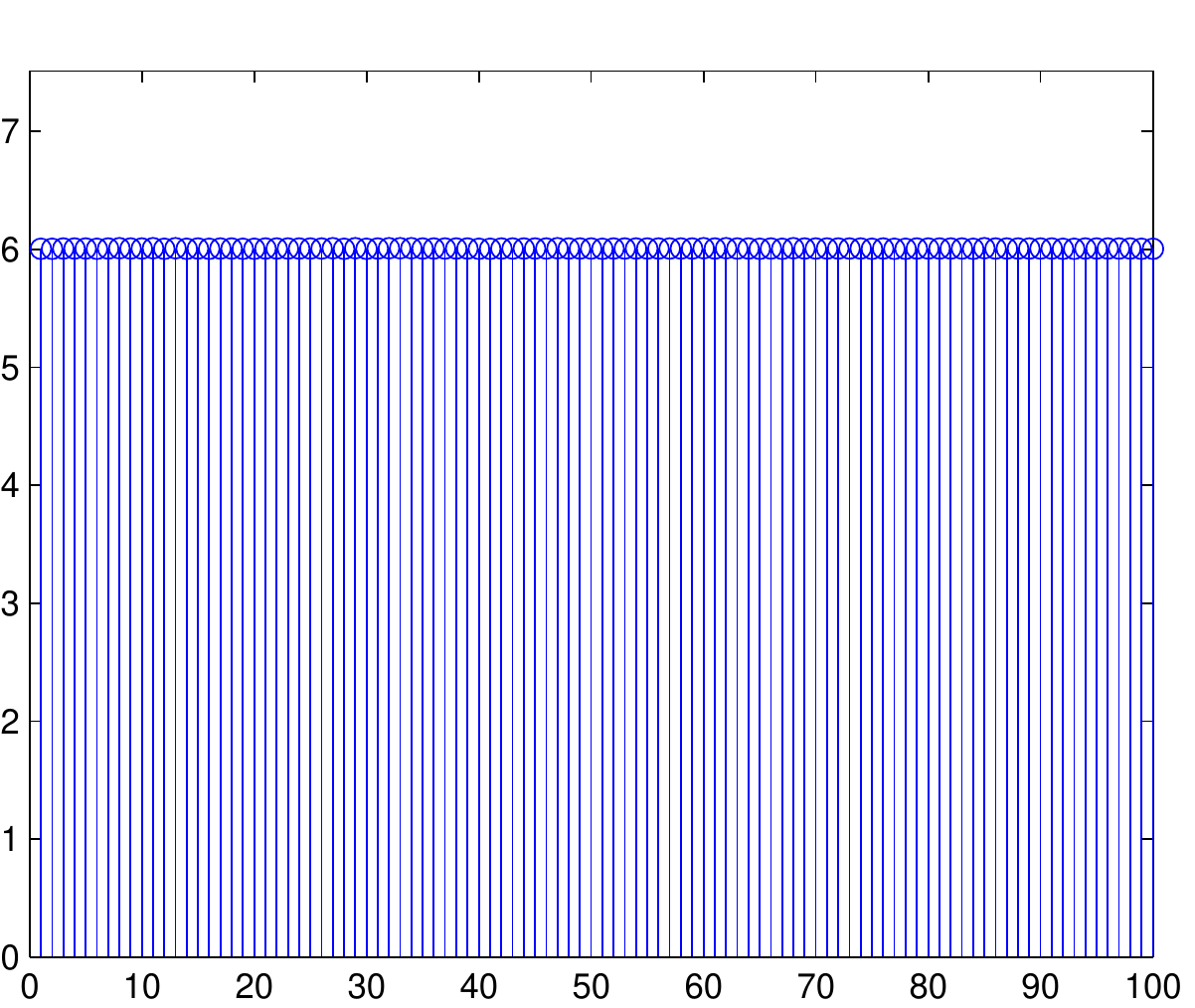} 
\end{subfigure}
\begin{subfigure}[t]{0.3\textwidth}
\centering
\includegraphics[width = 0.95\linewidth]{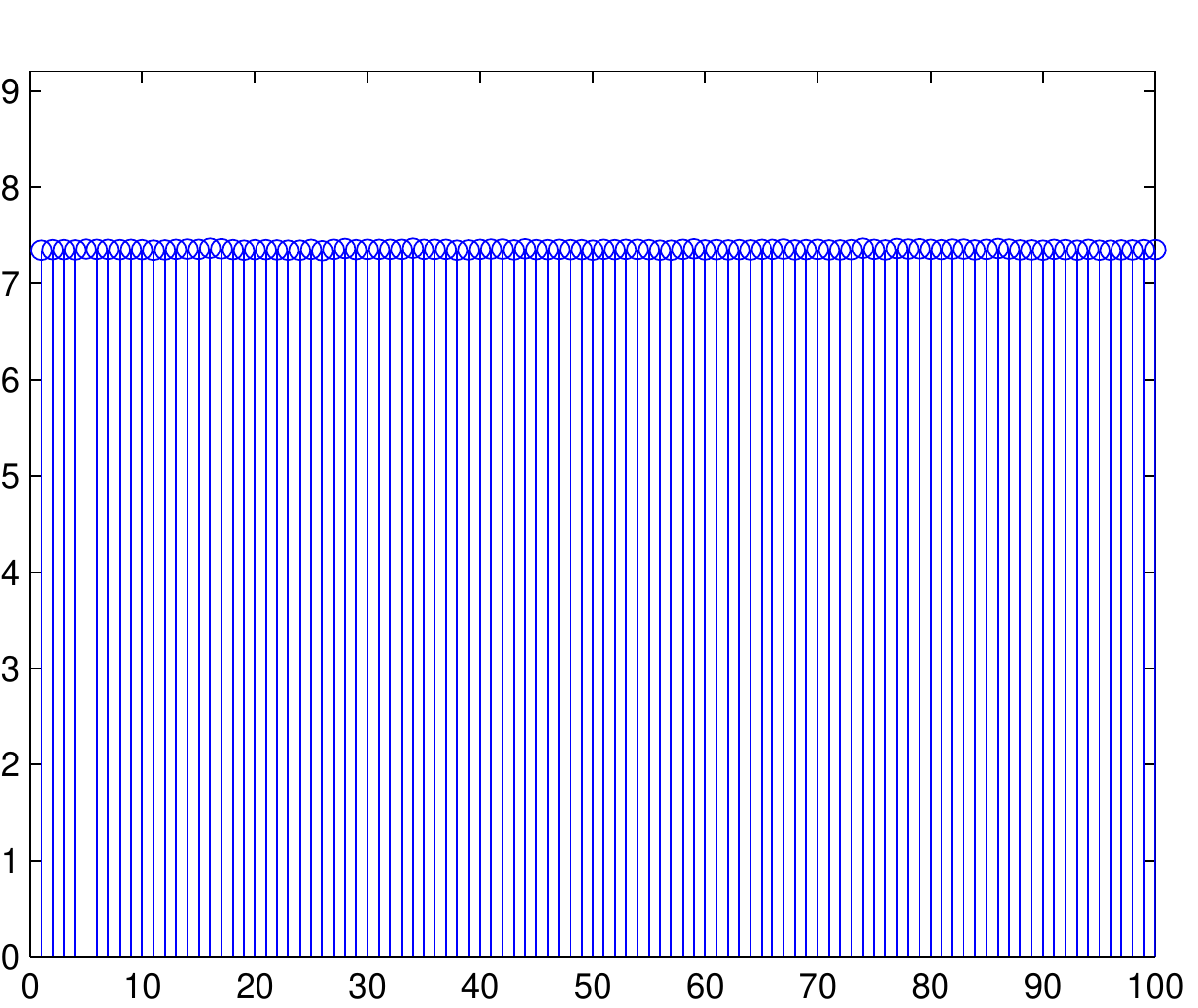}
\end{subfigure}
\begin{subfigure}[t]{0.3\textwidth}
\centering
\includegraphics[width = 0.95\linewidth]{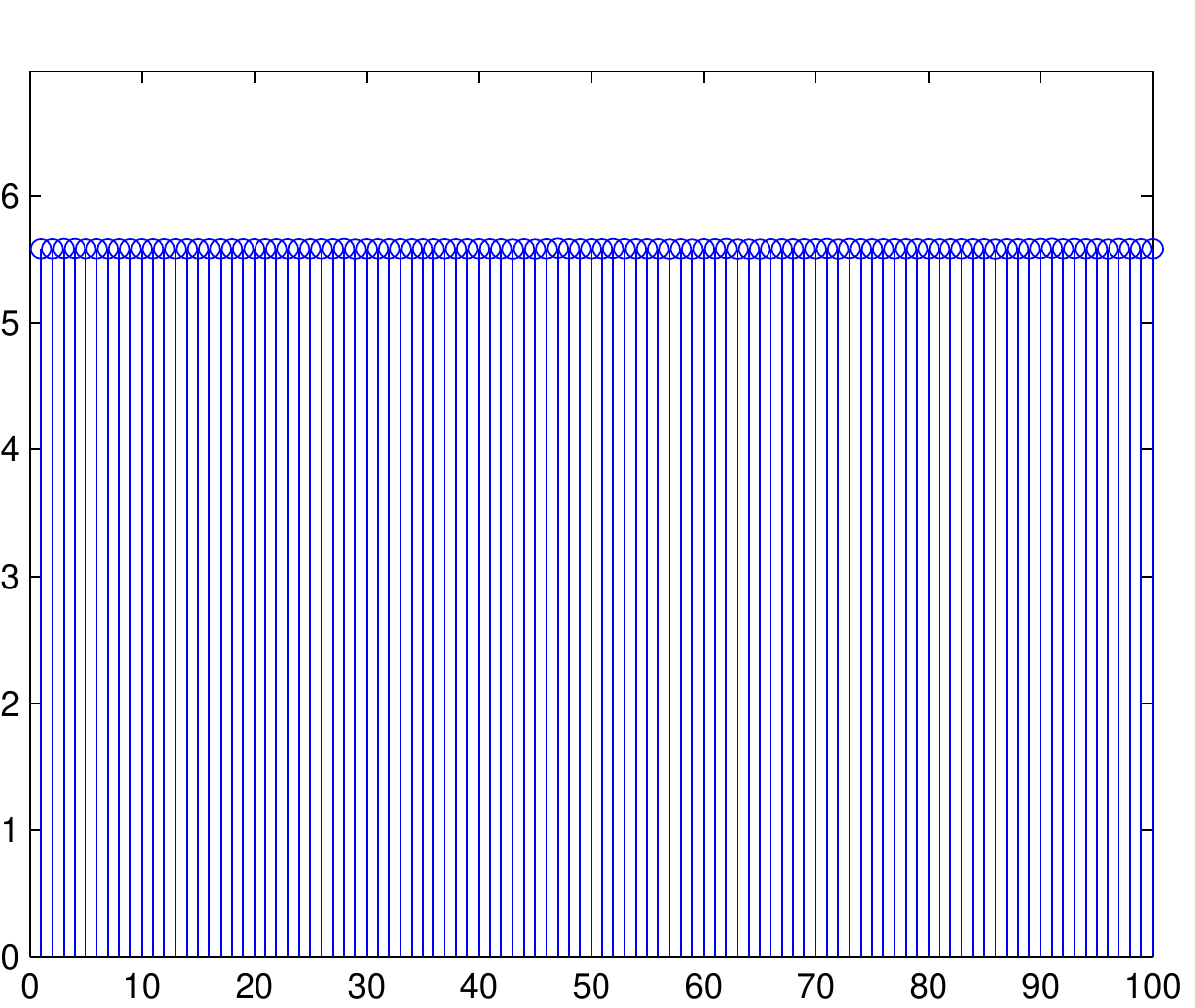}
\end{subfigure}
\caption{\textbf{Alternating direction method for~\eqref{eq:p_l1p} on uncompressed real images seems to always produce the same solution!} \textbf{Top}: Each image is $512 \times 512$ in resolution and encoded in the uncompressed \texttt{pgm} format (uncompressed images to prevent possible bias towards standard bases used for compression, such as DCT or wavelet bases). Each image is evenly divided into $8 \times 8$ non-overlapping image patches ($4096$ in total), and these patches are all vectorized and then stacked as columns of the data matrix $\mb Y$. \textbf{Bottom}: Given each $\mb Y$, we solve~\eqref{eq:p_l1p} $100$ times with independent and randomized (uniform over the orthogonal group) initialization $\mb A_0$. The plots show the values of $\norm{\mb A^*_{\infty} \mb Y}{1}$ across the independent repetitions. They are virtually the same and the relative differences are less than $10^{-3}$! }
\label{fig:odl_examples}
\end{figure}
We provide empirical evidence in support of a positive answer to the above question. Specifically, we learn orthogonal bases (orthobases) for real images patches. Orthobases are of interest because typical hand-designed dictionaries such as discrete cosine (DCT) and wavelet bases are orthogonal, and orthobases seem competitive in performance for applications such as image denoising, as compared to overcomplete dictionaries~\cite{bao2013fast}\footnote{See Section~\ref{sec:intro_rec} for more detailed discussions of this point. \cite{lesage2005learning} also gave motivations and algorithms for learning (union of) orthobases as dictionaries. }. 

We divide a given greyscale image into $8 \times 8$ non-overlapping patches, which are converted into $64$-dimensional vectors and stacked column-wise into a data matrix $\mb Y$. Specializing~\eqref{eq:dl_concept} to this setting, we obtain the optimization problem: 
\begin{align} \label{eq:p_l1p}
\mini_{\mb A \in \R^{n \times n}, \mb X \in \R^{n \times p}}\; \lambda \norm{\mb X}{1} + \frac{1}{2}\norm{\mb A \mb X - \mb Y}{F}^2, \; \st \; \mb A \in O_n. 
\end{align}
To derive a concrete algorithm for~\eqref{eq:p_l1p}, one can deploy the alternating direction method (ADM)\footnote{This method is also called alternating minimization or (block) coordinate descent method. see, e.g., ~\cite{bertsekas1989parallel, tseng2001convergence} for classic results and~\cite{attouch2010proximal,bolte2014proximal} for several interesting recent developments. }, i.e., alternately minimizing the objective function with respect to (w.r.t.) one variable while fixing the other. The iteration sequence actually takes very simple form: for $k = 1, 2, 3, \dots$, 
\begin{align*}
\mb X_k = \mc S_{\lambda}\brac{\mb A^*_{k-1} \mb Y}, \qquad \mb A_k = \mb U \mb V^* \;\text{for} \; \mb U \mb D \mb V^* = \mathtt{SVD}\paren{\mb Y \mb X_k^*}
\end{align*}
where $\mc S_{\lambda}\brac{\cdot}$ denotes the well-known soft-thresholding operator acting elementwise on matrices, i.e.,  $\mc S_{\lambda}\brac{x} \doteq \sign\paren{x} \max\paren{\abs{x} - \lambda, 0}$ for any scalar $x$. 

Figure~\ref{fig:odl_examples} shows what we obtained using the simple ADM algorithm, with \emph{independent and randomized initializations}: 
\begin{quote}
The algorithm seems to always produce the same solution, regardless of the initialization.
\end{quote} 
This observation implies the heuristic ADM algorithm may \emph{always converge to one global minimizer}! \footnote{Technically, the converge to global solutions is surprising because even convergence of ADM to critical points is atypical, see, e.g., \cite{attouch2010proximal,bolte2014proximal} and references therein. Section~\ref{sec:discuss} includes more detailed discussions on this point.} Equally surprising is that the phenomenon has been observed on real images\footnote{Actually the same phenomenon is also observed for simulated data when the coefficient matrix obeys the Bernoulli-Gaussian model, which is defined later. The result on real images supports that previously claimed empirical successes over two decades may be non-incidental. }. One may imagine only random data typically have ``favorable'' structures; in fact, almost all existing theories for DL pertain only to random data~\cite{spielman2012exact,agarwal2013learning,arora2013new,agarwal2013exact,arora2014more,arora2015simple}. 

\subsection{Dictionary Recovery and Our Results} \label{sec:intro_rec}
In this paper, we take a step towards explaining the surprising effectiveness of nonconvex optimization heuristics for DL. We focus on the \emph{dictionary recovery} (DR) setting: given a data matrix $\mb Y$ generated as $\mb Y = \mb A_0 \mb X_0$, where $\mb A_0 \in \mc A \subseteq \R^{n \times m}$ and $\mb X_0 \in \R^{m \times p}$ is ``reasonably sparse'', try to recover $\mb A_0$ and $\mb X_0$. Here recovery means to return any pair $\paren{\mb A_0 \mb \Pi \mb \Sigma, \mb \Sigma^{-1} \mb \Pi^* \mb X_0}$, where $\mb \Pi$ is a permutation matrix and $\mb \Sigma$ is a nonsingular diagonal matrix, i.e., recovering up to sign, scale, and permutation.

To define a reasonably simple and structured problem, we make the following assumptions: 
\begin{itemize}
\item The target dictionary $\mb A_0$ is complete, i.e., square and invertible ($m = n$). In particular, this class includes orthogonal dictionaries. Admittedly overcomplete dictionaries tend to be more powerful for modeling and to allow sparser representations. Nevertheless, most classic hand-designed dictionaries in common use are orthogonal. Orthobases are competitive in performance for certain tasks such as image denoising~\cite{bao2013fast}, and admit faster algorithms for learning and encoding. \footnote{Empirically, there is no systematic evidence supporting that overcomplete dictionaries are strictly necessary for good performance in all published applications (though~\cite{olshausen1997sparse} argues for the necessity from neuroscience perspective). Some of the ideas and tools developed here for complete dictionaries may also apply to certain classes of structured overcomplete dictionaries, such as tight frames. See Section~\ref{sec:discuss} for relevant discussion. }  
\item The coefficient matrix $\mb X_0$ follows the Bernoulli-Gaussian (BG) model with rate $\theta$: $\brac{\mb X_0}_{ij} = \Omega_{ij} V_{ij}$, with $\Omega_{ij} \sim \mathrm{Ber}\paren{\theta}$ and $V_{ij} \sim \mc N\paren{0, 1}$, where all the different random variables are mutually independent. We write compactly $\mb X_0 \sim_{i.i.d.} \mathrm{BG}\paren{\theta}$. 
\end{itemize}
We prove the following result: 
\begin{theorem}[Informal statement of our results]
For any $\theta \in \paren{0, 1/3}$, given $\mb Y = \mb A_0 \mb X_0$ with $\mb A_0$ a complete dictionary and $\mb X_0 \sim_{i.i.d.} \mathrm{BG}\paren{\theta}$, there is a polynomial time algorithm that recovers $\mb A_0$ and $\mb X_0$ with high probability (at least $1-O(p^{-6})$) whenever $p \ge p_{\star}\paren{n, 1/\theta, \kappa\paren{\mb A_0}, 1/\mu}$ for a fixed polynomial $p_\star\paren{\cdot}$, where $\kappa\paren{\mb A_0}$ is the condition number of $\mb A_0$ and $\mu$ is a parameter that can be set as $cn^{-5/4}$ for a fixed positive numerical constant $c$. 
\end{theorem}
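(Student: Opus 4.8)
The plan is to reduce dictionary recovery to a sequence of nonconvex optimization problems over the sphere, each asking for one ``sparse direction'' in a known subspace, to show each such problem has a benign (no spurious local minima, ``ridable'' saddles) geometry, and then to solve it with a saddle-escaping Riemannian trust-region method and finish with a rounding step. For the reduction, note that since $\mb A_0$ is invertible, $\mathrm{row}(\mb Y) = \mathrm{row}(\mb X_0)$, and the rows of $\mb X_0$ are, up to scale, the sparsest nonzero vectors in this known subspace; recovering all of them is equivalent to recovering $\mb A_0^{-1}$, hence $\mb A_0$. To make the subspace geometry isotropic, first precondition by replacing $\mb Y$ with $\ol{\mb Y} \doteq \paren{\tfrac{1}{p\theta}\mb Y \mb Y^*}^{-1/2}\mb Y$, so that $\ol{\mb Y} \approx \mb U \mb X_0$ for an orthogonal $\mb U$ (justified by a matrix concentration bound on $\tfrac1p\mb X_0 \mb X_0^*$, with the approximation quality controlled by $\kappa(\mb A_0)$ and $p$). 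It then suffices to extract the sparse rows of $\mb X_0$ from $\ol{\mb Y}$ one at a time by solving
\begin{equation} \label{eq:sketch_obj}
\mini_{\mb q \in \Sp^{n-1}} \; f(\mb q) \doteq \frac{1}{p}\sum_{k=1}^{p} h_\mu\paren{\mb q^* \ol{\mb y}_k},
\end{equation}
where $h_\mu$ is an even, $C^3$, $\mu$-smoothed proxy for $\abs{\cdot}$ (e.g. $h_\mu(z) = \mu\log\cosh(z/\mu)$); the intended minimizers are the signed targets $\pm\mb U\mb e_i$, for which $\mb q^*\ol{\mb Y}$ is approximately a row of $\mb X_0$.

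The heart of the argument is a geometric characterization of \eqref{eq:sketch_obj}. First compute $\expect{f}$ under $\mb X_0 \sim_{i.i.d.}\mathrm{BG}(\theta)$ and show that, over a symmetric ``cap'' around one target, it has exactly one local (hence global) minimizer at the target, exactly one saddle (the balanced point equidistant from several targets), no other critical points, a strictly negative Hessian eigenvalue at the saddle, and a quantitative Riemannian-gradient lower bound away from the critical points; concretely, partition the sphere into regions $\rI, \rII, \rIII$ on which respectively $\norm{\grad f}{2}$ is large, $\Hess f$ has an eigenvalue below $-\dII$, and $f$ is $\dIII$-strongly convex near a target — for $\expect f$ these are one-dimensional calculus estimates in the scalar coordinates $\mb q^*\mb U\mb e_i$. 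Then transfer this structure to the empirical $f$: bound the Lipschitz constants of $f$, $\grad f$, $\Hess f$ on $\Sp^{n-1}$ (using boundedness of $h_\mu', h_\mu'', h_\mu'''$ and, after truncating the heavy Gaussian tails, of $\norm{\ol{\mb y}_k}{\infty}$), pass to an $\eps$-net of the sphere of cardinality $e^{O(n\log(1/\eps))}$, apply Bernstein/moment inequalities to the i.i.d. sums at each net point, and union-bound. This yields, for $p \ge p_\star\paren{n, 1/\theta, \kappa(\mb A_0), 1/\mu}$ and with probability at least $1 - O(p^{-6})$, uniform $\eps$-closeness of $(f, \grad f, \Hess f)$ to their expectations; choosing $\mu = cn^{-5/4}$ makes the error strictly smaller than the margins $\dI, \dII, \dIII$, so the benign-landscape property survives, and the same bound absorbs the preconditioning error $\norm{\ol{\mb Y} - \mb U\mb X_0}{}$.

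Given this landscape, run a Riemannian trust-region method over $\Sp^{n-1}$ from an arbitrary initialization: in $\rI$ the trust-region step exploits the large gradient to decrease $f$; in $\rII$ it exploits the negative-curvature direction to decrease $f$; once the iterate enters $\rIII$ the method converges quadratically to a target — all in polynomially many iterations, each requiring polynomial work. This returns $\mb q_\star$ with $\mb q_\star^*\ol{\mb Y}$ close to a scaled row of $\mb X_0$. Deflate by restricting \eqref{eq:sketch_obj} to the orthogonal complement of the directions already recovered and repeat $n$ times to obtain $\wh{\mb A}, \wh{\mb X}$ with $\wh{\mb A} \approx \mb A_0\mb\Pi\mb\Sigma$. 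Finally, a rounding/refinement step — for instance, solving for each recovered row a linear program in the spirit of Spielman--Wang--Wright, or taking a few Newton steps — upgrades approximate recovery to exact recovery of $\paren{\mb A_0\mb\Pi\mb\Sigma, \mb\Sigma^{-1}\mb\Pi^*\mb X_0}$. A union bound over the $n$ deflation rounds, the preconditioning event, and the rounding event completes the proof.

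The main obstacle is the joint package of the uniform Hessian concentration and the population-landscape estimates: the heavy-tailed, $\mu$-dependent random functions $h_\mu''(\mb q^*\ol{\mb y}_k)$ must be controlled \emph{uniformly} over a continuum with only polynomially many samples, and the negative-curvature margin $\dII$ and strong-convexity margin $\dIII$ must be shown to dominate the concentration error precisely at the scaling $\mu = cn^{-5/4}$ and for $\theta$ up to the constant $1/3$ — this is where the $O(n)$ sparsity level, and the exact polynomial $p_\star$, are won or lost. By comparison, the trust-region convergence analysis on a manifold with saddle points, and the deflation and rounding arguments, are delicate but essentially mechanical once the three-region geometry is in hand.
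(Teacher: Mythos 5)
Your proposal follows essentially the same route as the paper: precondition via $\overline{\mb Y} = \paren{\tfrac{1}{p\theta}\mb Y\mb Y^*}^{-1/2}\mb Y$ to reduce to a perturbed orthogonal instance, establish the three-region landscape (strong convexity near each target, large gradient, negative curvature elsewhere) for the expected $\log\cosh$ objective and transfer it to the finite-sample objective via Lipschitz estimates, $\eps$-nets and Bernstein-type concentration, then run a saddle-escaping Riemannian trust-region method, round each approximate minimizer with the $\ell^1$ linear program, and deflate. The only step you treat more lightly than the paper is deflation in the genuinely complete (non-orthogonal) case, where the recovered directions are only approximately orthogonal and one must verify that the reduced problem is again a small perturbation of an orthogonal instance — but this is precisely the perturbation bookkeeping the paper carries out, so your approach is correct and matches theirs.
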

Obviously, even if $\mb X_0$ is known, one needs $p \ge n$ to make the identification problem well posed. Under our particular probabilistic model, a simple coupon collection argument implies that one needs $p \ge \Omega\paren{\tfrac{1}{\theta}\log n}$ to ensure all atoms in $\mb A_0$ are observed with high probability (w.h.p.). To ensure that an efficient algorithm exists may demand more. Our result implies when $p$ is polynomial in $n$, $1/\theta$ and $\kappa(\mb A_0)$, recovery with efficient algorithm is possible. 

The parameter $\theta$ controls the sparsity level of $\mb X_0$. Intuitively, the recovery problem is easy for small $\theta$ and becomes harder for large $\theta$.\footnote{Indeed, when $\theta$ is small enough such that columns of $\mb X_0$ are predominately $1$-sparse, one directly observes scaled versions of the atoms (i.e., columns of $\mb X_0$); when $\mb X_0$ is fully dense corresponding to $\theta = 1$, recovery is never possible as one can easily find another complete $\mb A_0'$ and fully dense $\mb X_0'$ such that $\mb Y = \mb A_0' \mb X_0'$ with $\mb A_0'$ not equivalent to $\mb A_0$. 
} It is perhaps surprising that an efficient algorithm can succeed up to constant $\theta$, i.e., linear sparsity in $\mb X_0$. Compared to the case when $\mb A_0$ is known, there is only at most a constant gap in the sparsity level one can deal with. 

For DL, our result gives the first efficient algorithm that provably recovers complete $\mb A_0$ and $\mb X_0$ when $\mb X_0$ has $O(n)$ nonzeros per column under appropriate probability model. Section~\ref{sec:lit_review} provides detailed comparison of our result with other recent recovery results for complete and overcomplete dictionaries. 


\subsection{Main Ingredients and Innovations}
In this section we describe three main ingredients that we use to obtain the stated result. 

\subsubsection{A Nonconvex Formulation}
Since $\mb Y = \mb A_0 \mb X_0$ and $\mb A_0$ is complete, $\mathrm{row}\paren{\mb Y} = \mathrm{row}\paren{\mb X_0}$ ($\mathrm{row}\paren{\cdot}$ denotes the row space of a matrix) and hence rows of $\mb X_0$ are sparse vectors in the known (linear) subspace $\mathrm{row}\paren{\mb Y}$. We can use this fact to first recover the rows of $\mb X_0$, and subsequently recover $\mb A_0$ by solving a system of linear equations. In fact, for $\mb X_0 \sim_{i.i.d.} \mathrm{BG}\paren{\theta}$, rows of $\mb X_0$ are the $n$ \emph{sparsest} vectors (directions) in $\mathrm{row}\paren{\mb Y}$ w.h.p. whenever $p \ge \Omega\paren{n\log n}$~\cite{spielman2012exact}. Thus one might try to recover rows of $\mb X_0$ by solving
\begin{align}
\mini\; \norm{\mb q^* \mb Y}{0}\; \; \st\;\; \mb q \neq \mb 0. 
\end{align}
The objective is discontinuous, and the domain is an open set. In particular, the homogeneous constraint is nonconventional and tricky to deal with. Since the recovery is up to scale, one can remove the homogeneity by fixing the scale of $\mb q$. Known relaxations~\cite{spielman2012exact, demanet2014scaling} fix the scale by setting $\norm{\mb q^* \mb Y}{\infty} = 1$, where $\norm{\cdot}{\infty}$ is the elementwise $\ell^{\infty}$ norm. The optimization problem reduces to a sequence of convex programs, which recover $\paren{\mb A_0, \mb X_0}$ for very sparse $\mb X_0$, but provably break down when columns of $\mb X_0$ has more than $O\paren{\sqrt{n}}$ nonzeros, or $\theta \ge \Omega\paren{1/\sqrt{n}}$. Inspired by our previous image experiment, we work with a \emph{nonconvex} alternative\footnote{A similar formulation has been proposed in~\cite{zibulevsky2001blind} in the context of blind source separation; see also~\cite{qu2014finding}. }:
\begin{align} \label{eq:main_l2}
\mini\;f(\mb q; \widehat{\mb Y}) \doteq \frac{1}{p} \sum_{k=1}^p h_{\mu}\paren{\mb q^* \widehat{\mb y}_k}, \; \st \; \norm{\mb q}{} = 1, 
\end{align}
where $\widehat{\mb Y} \in \R^{n\times p}$ is a proxy for $\mb Y$ (i.e., after appropriate processing), $k$ indexes columns of $\widehat{\mb Y}$, and $\norm{\cdot}{}$ is the usual $\ell^2$ norm for vectors. Here $h_{\mu}\paren{\cdot}$ is chosen to be a convex smooth approximation to $\abs{\cdot}$, namely,  
\begin{align} \label{eq:logexp}
h_{\mu}\paren{z} = \mu \log\paren{\frac{\exp\paren{z/\mu} + \exp\paren{-z/\mu}}{2}} = \mu \log \cosh(z/\mu), 
\end{align} 
which is infinitely differentiable and $\mu$ controls the smoothing level.\footnote{In fact, there is nothing special about this choice and we believe that any valid smooth (twice continuously differentiable) approximation to $\abs{\cdot}$ would work and yield qualitatively similar results. We also have some preliminary results showing the latter geometric picture remains the same for certain nonsmooth functions, such as a modified version of the Huber function, though the analysis involves handling a different set of technical subtleties. The algorithm also needs additional modifications.} The spherical constraint is nonconvex. Hence, a-priori, it is unclear whether \eqref{eq:main_l2} admits efficient algorithms that attain global optima. Surprisingly, simple descent algorithms for \eqref{eq:main_l2} exhibit very striking behavior: on many practical numerical examples\footnote{... not restricted to the model we assume here for $\mb A_0$ and $\mb X_0$. }, they appear to produce global solutions. Our next section will uncover interesting geometrical structures underlying the phenomenon. 

\subsubsection{A Glimpse into High-dimensional Function Landscape} \label{sec:overview_geometry}
\begin{figure}[t]
\centerline{\includegraphics[width=0.3\textwidth]{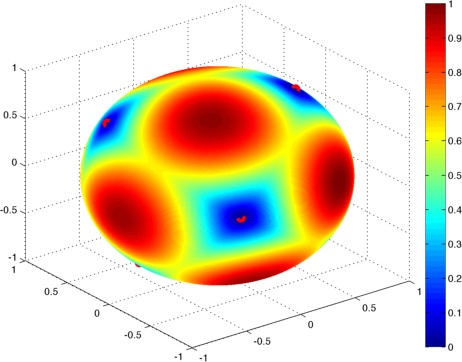} \hspace{.1in} \includegraphics[width=0.3\textwidth]{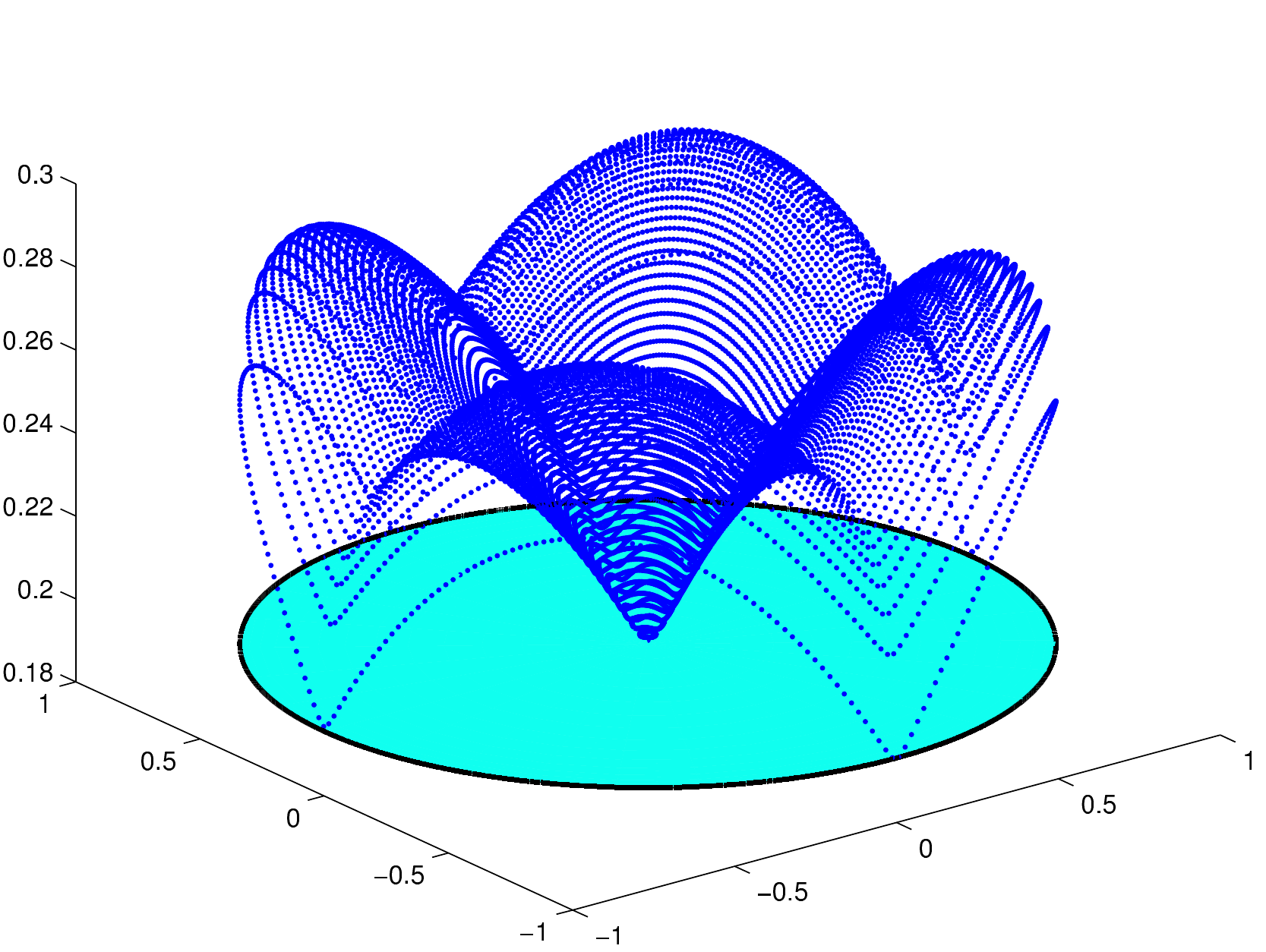} \hspace{.1in} \includegraphics[width=0.2\textwidth]{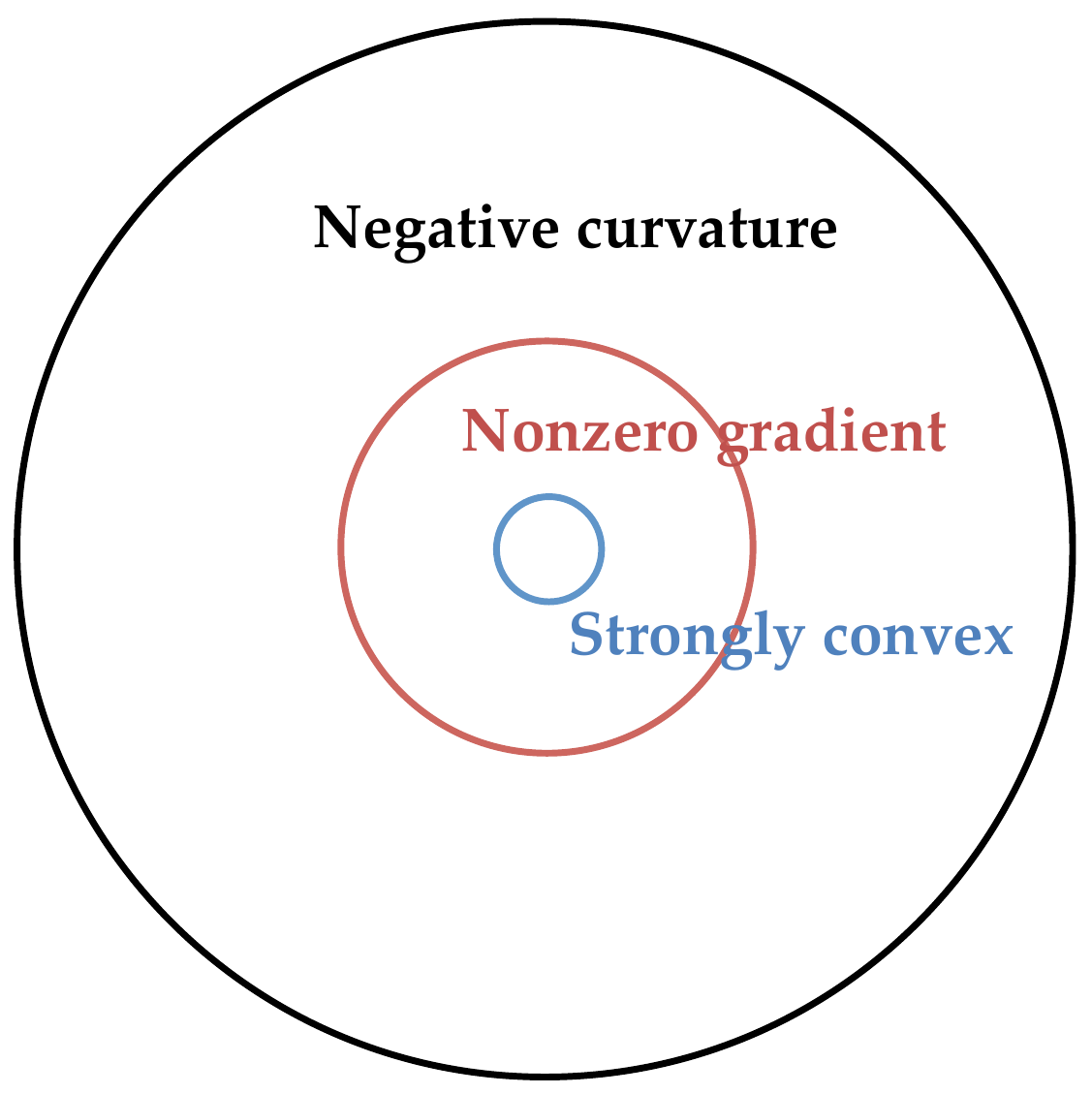} }
\caption{\textbf{Why is dictionary learning over $\bb S^{n-1}$ tractable?} Assume the target dictionary $\mb A_0$ is orthogonal. \textbf{Left:} Large sample objective function $\bb E_{\mb X_0}\brac{f\paren{\mb q}}$. The only local minima are the columns of $\mb A_0$ and their negatives. \textbf{Center:} the same function, visualized as a height above the plane $\mb a_1^\perp$ ($\mb a_1$ is the first column of $\mb A_0$). \textbf{Right:} Around the optimum, the function exhibits a small region of positive curvature, a region of large gradient, and finally a region in which the direction away from $\mb a_1$ is a direction of negative curvature.}
\label{fig:large-sample-sphere}
\end{figure}
For the moment, suppose $\mb A_0$ is orthogonal, and take $\widehat{\mb Y} = \mb Y = \mb A_0 \mb X_0$ in~\eqref{eq:main_l2}. Figure~\ref{fig:large-sample-sphere} (left) plots $\bb E_{\mb X_0}\brac{f\paren{\mb q; \mb Y}}$ over $\mb q \in \bb S^2$ ($n=3$). Remarkably, $\bb E_{\mb X_0}\brac{f\paren{\mb q; \mb Y}}$ has no spurious local minima. In fact, every local minimizer $\widehat{\mb q}$ produces a row of $\mb X_0$: $\widehat{\mb q}^* \mb Y = \alpha \mb e_i^* \mb X_0$ for some $\alpha \neq 0$. 

To better illustrate the point, we take the particular case $\mb A_0 = \mb I$ and project the upper hemisphere above the equatorial plane $\mb e_3^\perp$ onto $\mb e_3^\perp$. The projection is bijective and we equivalently define a reparameterization $g: \mb e_3^\perp \mapsto \R$ of $f$. Figure~\ref{fig:large-sample-sphere} (center) plots the graph of $g$. Obviously the only local minimizers are $\mb 0, \pm \mb e_1, \pm \mb e_2$, and they are also global minimizers. Moreover, the apparent nonconvex landscape has interesting structures around $\mb 0$: when moving away from $\mb 0$, one sees successively a strongly convex region, a nonzero gradient region, and a region where at each point one can always find a direction of negative curvature, as shown schematically in Figure~\ref{fig:large-sample-sphere} (right). This geometry implies that at any nonoptimal point, there is always at least one direction of descent. Thus, any algorithm that can take advantage of the descent directions will likely converge to one global minimizer, irrespective of initialization. 

Two challenges stand out when implementing this idea. For geometry, one has to show similar structure exists for general complete $\mb A_0$, in high dimensions ($n \ge 3$), when the number of observations $p$ is finite (vs.\ the expectation in the experiment). For algorithms, we need to be able to take advantage of this structure without knowing $\mb A_0$ ahead of time. In Section~\ref{sec:overview_alg}, we describe a Riemannian trust region method which addresses the latter challenge. 

\paragraph{Geometry for orthogonal $\mb A_0$.} In this case, we take $\widehat{\mb Y} = \mb Y = \mb A_0 \mb X_0$. Since $f\paren{\mb q; \mb A_0 \mb X_0} \allowbreak = f\paren{\mb A_0^* \mb q; \mb X_0}$, the landscape of $f\paren{\mb q; \mb A_0 \mb X_0}$ is simply a rotated version of that of $f\paren{\mb q; \mb X_0}$, i.e., when $\mb A_0 = \mb I$. Hence we will focus on the case when $\mb A_0 = \mb I$. Among the $2n$ symmetric sections of $\bb S^{n-1}$ centered around the signed basis vectors $\pm \mb e_1, \dots, \pm \mb e_n$, we work with the symmetric section around $\mb e_n$ as an example. The result will carry over to all sections with the same argument; together this provides a complete characterization of the function $f\paren{\mb q; \mb X_0}$ over $\bb S^{n-1}$.    

We again invoke the projection trick described above, this time onto the equatorial plane $\mb e_n^\perp$. This can be formally captured by the reparameterization mapping: 
\begin{align}
\mb q\paren{\mb w} = \paren{\mb w, \sqrt{1-\norm{\mb w}{}^2}}, \; \mb w \in \bb B^{n-1}, 
\end{align}
where $\mb w$ is the new variable in $\mb e_n^\perp \cap \bb B^{n-1}$ and $\bb B^{n-1}$ is the unit ball in $\R^{n-1}$. We first study the composition $g\paren{\mb w; \mb X_0} \doteq f\paren{\mb q\paren{\mb w}; \mb X_0}$ over the set
\begin{align}
\Gamma \doteq \Brac{\mb w: \norm{\mb w}{} < \sqrt{\tfrac{4n-1}{4n}}}.
\end{align}
It can be verified the section we chose to work with is contained in this set\footnote{Indeed, if $\innerprod{\mb q}{\mb e_n} \ge \abs{\innerprod{\mb q}{\mb e_i}}$ for any $i \neq n$, $1 - \norm{\mb w}{}^2 = q_n^2 \ge 1/n$, implying $\norm{\mb w}{}^2 \le \tfrac{n-1}{n} < \tfrac{4n-1}{4n}$. The reason we have defined an open set instead of a closed (compact) one is to avoid potential trivial local minimizers located on the boundary. }. 

Our analysis characterizes the properties of $g\paren{\mb w; \mb X_0}$ by studying three quantities 
\begin{align*}
\nabla^2 g\paren{\mb w; \mb X_0}, \quad \frac{\mb w^* \nabla g\paren{\mb w; \mb X_0}}{\norm{\mb w}{}}, \quad \frac{\mb w^* \nabla^2 g\paren{\mb w; \mb X_0} \mb w}{\norm{\mb w}{}^2}
\end{align*}
respectively over three consecutive regions moving away from the origin, corresponding to the three regions in Figure~\ref{fig:large-sample-sphere} (right). In particular, through typical expectation-concentration style argument, we show that there exists a positive constant $c$ such that 
\begin{align} \label{eq:intro_geo_ineq}
\nabla^2 g\paren{\mb w; \mb X_0} \succeq \frac{1}{\mu} c\theta \mb I,  \quad \frac{\mb w^* \nabla g\paren{\mb w; \mb X_0}}{\norm{\mb w}{}} \ge c \theta, \quad \frac{\mb w^* \nabla^2 g\paren{\mb w; \mb X_0} \mb w}{\norm{\mb w}{}^2} \le -c\theta
\end{align}
over the respective regions w.h.p., confirming our low-dimensional observations described above. In particular, the favorable structure we observed for $n = 3$ persists in high dimensions, w.h.p., even when $p$ is large \emph{yet finite}, for the case $\mb A_0$ is orthogonal. Moreover, the local minimizer of $g\paren{\mb w; \mb X_0}$ over $\Gamma$ is very close to $\mb 0$, within a distance of $O\paren{\mu}$. 

\paragraph{Geometry for complete $\mb A_0$.} For general complete dictionaries $\mb A_0$, we hope that the function $f$  retains the nice geometric structure discussed above. We can ensure this by ``preconditioning'' $\mb Y$ such that the output looks as if being generated from a certain orthogonal matrix, possibly plus a small perturbation. We can then argue that the perturbation does not significantly affect the properties of the graph of the objective function. Write
\begin{align} \label{eq:precond_eq}
\overline{\mb Y} = \paren{\tfrac{1}{p\theta} \mb Y \mb Y^*}^{-1/2} \mb Y. 
\end{align}
Note that for $\mb X_0 \sim_{i.i.d.} \mathrm{BG}\paren{\theta}$, $\expect{\mb X_0 \mb X_0^*}/\paren{p\theta} = \mb I$. Thus, one expects $\tfrac{1}{p\theta} \mb Y \mb Y^* = \tfrac{1}{p\theta} \mb A_0 \mb X_0 \mb X_0^* \mb A_0^*$ to behave roughly like $\mb A_0 \mb A_0^*$ and hence $\overline{\mb Y}$ to behave like 
\begin{align}
\paren{\mb A_0 \mb A_0^*}^{-1/2} \mb A_0 \mb X_0 = \mb U \mb V^* \mb X_0
\end{align}
where we write the SVD of $\mb A_0$ as $\mb A_0 = \mb U \mb \Sigma \mb V^*$. It is easy to see $\mb U \mb V^*$ is an orthogonal matrix. Hence the preconditioning scheme we have introduced is technically sound.  

Our analysis shows that $\overline{\mb Y}$ can be written as 
\begin{align}
\overline{\mb Y} = \mb U \mb V^* \mb X_0 + \mb \Xi \mb X_0, 
\end{align}
where $\mb \Xi$ is a matrix with small magnitude. Simple perturbation argument shows that the constant $c$ in~\eqref{eq:intro_geo_ineq} is at most shrunk to $c/2$ for all $\mb w$ when $p$ is sufficiently large. Thus, the qualitative aspects of the geometry have not been changed by the perturbation. 

\subsubsection{A Second-order Algorithm on Manifold: Riemannian Trust Region Method} \label{sec:overview_alg}
We do not know $\mb A_0$ ahead of time, so our algorithm needs to take advantage of the structure described above without knowledge of $\mb A_0$. Intuitively, this seems possible as the descent direction in the $\mb w$ space appears to also be a local descent direction for $f$ over the sphere. Another issue is that although the optimization problem has no spurious local minima, it does have many saddle points (Figure~\ref{fig:large-sample-sphere}). We can use second-order information to guarantee to escape saddle points. We derive an algorithm based on the Riemannian trust region method (TRM)~\cite{absil2007trust, absil2009} over the sphere for this purpose. 

For a function $f: \R^n \to \R$ and an unconstrained optimization problem 
\begin{align*}
\min_{\mb x \in \R^n} f\paren{\mb x}, 
\end{align*}
typical (second-order) TRM proceeds by successively forming second-order approximations to $f$ at the current iterate, 
\begin{align} \label{eqn:trm_quad_approx}
\widehat{f}\paren{\mb \delta; \mb x^{(k-1)}} \doteq f\paren{\mb x^{(k-1)}} + \nabla^* f\paren{\mb x^{(k-1)}} \mb \delta   + \tfrac{1}{2} \mb \delta^* \mb Q\paren{\mb x^{(k-1)}} \mb \delta, 
\end{align}
where $\mb Q\paren{\mb x^{(k-1)}}$ is a proxy for the Hessian matrix $\nabla^2 f\paren{\mb x^{(k-1)}}$, which encodes the second-order geometry. The next movement direction is determined by seeking a minimum of $\widehat{f}\paren{\mb \delta; \mb x^{(k-1)}}$ over a small region, normally a norm ball $\norm{\mb \delta}{p} \le \Delta$, called the trust region, inducing the well studied trust-region subproblem: 
\begin{align}
\mb \delta^{(k)} \doteq \mathop{\arg\min}_{\mb \delta \in \R^n, \norm{\mb \delta}{p} \le \Delta} \widehat{f}\paren{\mb \delta; \mb x^{(k-1)}}, 
\end{align}
where $\Delta$ is called the trust-region radius that controls how far the movement can be made. A ratio 
\begin{align}
\rho_k \doteq \frac{f\paren{\mb x^{(k-1)}} - f\paren{\mb x^{(k-1)} + \mb \delta^{(k)}}}{\widehat{f}\paren{\mb 0} - \widehat{f}\paren{\mb \delta^{(k-1)}}}
\end{align}
is defined to measure the progress and typically the radius $\Delta$ is updated dynamically according to $\rho_k$ to adapt to the local function behavior. Detailed introductions to the classical TRM can be found in the texts~\cite{ConnGouldToint, NocedalWright}. 

\begin{figure}[!htbp]
\centerline{\includegraphics[width=0.25\textwidth]{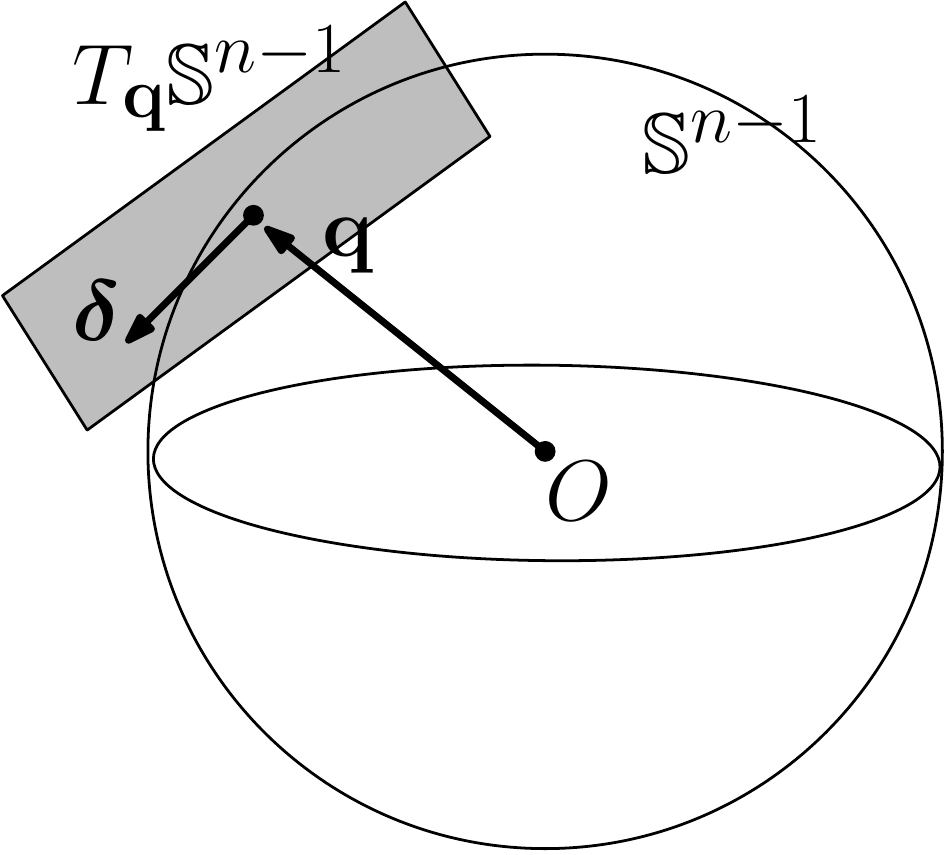} \hspace{.3in} \includegraphics[width=0.3\linewidth]{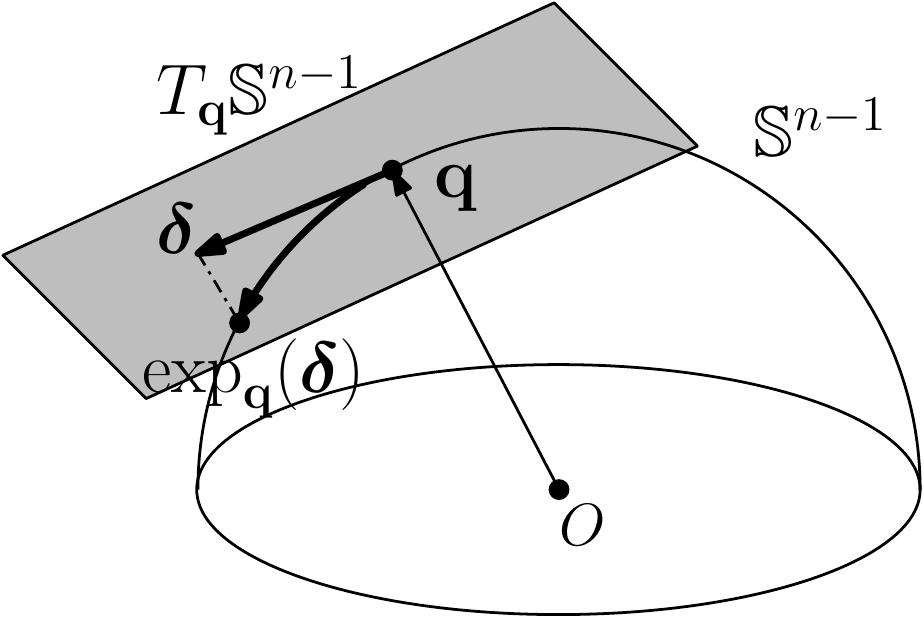}}
\caption{Illustrations of the tangent space $T_{\mb q}\bb S^{n-1}$ and exponential map $\exp_{\mb q}\paren{\mb \delta}$ defined on the sphere $\bb S^{n-1}$.} \label{fig:exp-map}
\end{figure}

To generalize the idea to smooth manifolds, one natural choice is to form the approximation over the tangent spaces~\cite{absil2007trust, absil2009}. Specific to our spherical manifold, for which the tangent space at an iterate $\mb q^{(k)} \in \bb S^{n-1}$ is $T_{\mb q^{(k)}} \bb S^{n-1} \doteq \set{\mb v: \mb v^* \mb q^{(k)} = 0}$ (see Figure~\ref{fig:exp-map}), we work with a ``quadratic'' approximation $\widehat{f}: T_{\mb q^{(k)}} \bb S^{n-1} \mapsto \R$ defined as 
\begin{align}
\widehat{f}(\mb \delta; \mb q^{(k)}) \;\doteq\; f(\mb q^{(k)}) + \innerprod{ \nabla f(\mb q^{(k)}) }{\mb \delta }  + \frac{1}{2} \mb \delta^* \left( \nabla^2 f( \mb q^{(k)}) - \innerprod{ \nabla f(\mb q^{(k)}) }{ \mb q^{(k)} } \mb I \right) \mb \delta.  \label{eqn:f-appx}
\end{align}
To interpret this approximation, let $\mc P_{T_{\mb q^{(k)}} \bb S^{n-1}} \doteq \paren{\mb I - \mb q^{(k)} \paren{\mb q^{(k)}}^*}$ be the orthoprojector onto $T_{\mb q^{(k)}} \bb S^{n-1}$ and write~\eqref{eqn:f-appx} into an equivalent form:  
\begin{multline*}
\widehat{f}(\mb \delta; \mb q^{(k)}) \;\doteq\; f(\mb q^{(k)}) + \innerprod{\mc P_{T_{\mb q^{(k)}} \bb S^{n-1}} \nabla f(\mb q^{(k)}) }{\mb \delta } \\
+ \frac{1}{2} \mb \delta^* \mc P_{T_{\mb q^{(k)}} \bb S^{n-1}} \left( \nabla^2 f( \mb q^{(k)}) - \innerprod{ \nabla f(\mb q^{(k)}) }{ \mb q^{(k)} } \mb I \right) \mc P_{T_{\mb q^{(k)}} \bb S^{n-1}} \mb \delta. 
\end{multline*}
The two terms 
\begin{align*}
\mathrm{grad} f\paren{\mb q^{(k)}} &\doteq \mc P_{T_{\mb q^{(k)}} \bb S^{n-1}} \nabla f(\mb q^{(k)}), \\
\mathrm{Hess} f\paren{\mb q^{(k)}} &\doteq \mc P_{T_{\mb q^{(k)}} \bb S^{n-1}} \left( \nabla^2 f( \mb q^{(k)}) - \innerprod{ \nabla f(\mb q^{(k)}) }{ \mb q^{(k)} } \mb I \right) \mc P_{T_{\mb q^{(k)}} \bb S^{n-1}}
\end{align*} 
are the Riemannian gradient and Riemannian Hessian of $f$ w.r.t. $\bb S^{n-1}$, respectively~\cite{absil2007trust, absil2009}; the above approximation is reminiscent of the usual quadratic approximation described in~\eqref{eqn:trm_quad_approx}. 

Then the Riemannian trust-region subproblem is
\begin{align} \label{eq:trust-region}
\min_{\mb \delta \in T_{\mb q^{(k)}} \bb S^{n-1},\; \norm{\mb \delta}{} \le \Delta} \widehat{f}\paren{\mb \delta; \mb q^{(k)}},  
\end{align}
where we take the simple $\ell^2$ norm ball for the trust region. This can be transformed into a classical trust region subprolem: indeed, taking any orthonormal basis $\mb U_{\mb q^{(k)}}$ for $T_{\mb q^{(k)}} \bb S^{n-1}$, the above problem is equivalent to 
\begin{align} \label{eqn:alg_rie_tsp}
\min_{\norm{\mb \xi}{} \le \Delta} \widehat{f}\paren{\mb U_{\mb q^{(k)}} \mb \xi, \mb q^{(k)}}, 
\end{align}
where the objective is quadratic in $\mb \xi$. This is the classical trust region problem (with $\ell^2$ norm ball constraint) that admits very efficient numerical algorithms~\cite{more1983computing,hazan2014linear}. Once we obtain the minimizer $\mb \xi_\star$, we set $\mb \delta_\star = \mb U \mb \xi_\star$, which solves~\eqref{eq:trust-region}. 

One additional issue as compared to the Euclidean setting is that now $\mb \delta_\star$ is one vector in the tangent space and additive update leads to a point outside the sphere. We resort to the natural exponential map to pull the tangent vector to a point on the sphere: 
\begin{align}
\mb q^{(k+1)} \doteq \exp_{\mb q^{(k)}}\paren{\mb \delta_\star} = \mb q^{(k)} \cos \norm{\mb \delta_\star}{} + \tfrac{\mb \delta_\star}{\norm{\mb \delta_\star}{}} \sin \norm{\mb \delta_\star}{}.  
\end{align}
As seen from Figure~\ref{fig:exp-map}, the movement to the next iterate is ``along the direction"\footnote{Technically, moving along the geodesic whose velocity at time zero is $\mb \delta_\star$. } of $\mb \delta_\star$ while staying over the sphere. 

Using the above geometric characterizations, we prove that w.h.p., the algorithm converges to a local minimizer when the parameter $\Delta$ is sufficiently small\footnote{For simplicity of analysis, we have assumed $\Delta$ is fixed throughout the analysis. In practice, dynamic updates to $\Delta$ lead to faster convergence.}. In particular, we show that (1) the trust region step induces at least a fixed amount of decrease to the objective value in the negative curvature and nonzero gradient region; (2) the trust region iterate sequence will eventually move to and stay in the strongly convex region, and converge to the local minimizer contained in the region with an asymptotic quadratic rate. In short, the geometric structure implies that from \emph{any initialization}, the iterate sequence converges to a close approximation to the target solution in a polynomial number of steps. 

\subsection{Prior Arts and Connections} \label{sec:lit_review}
It is far too ambitious to include here a comprehensive review of the exciting developments of DL algorithms and applications after the pioneer work~\cite{olshausen1996emergence}. We refer the reader to Chapter 12 - 15 of the book~\cite{elad2010sparse} and the survey paper~\cite{mairal2014sparse} for summaries of relevant developments in image analysis and visual recognition. In the following, we focus on reviewing recent developments on the theoretical side of dictionary learning, and draw connections to problems and techniques that are relevant to the current work. 

\paragraph{Theoretical Dictionary Learning.} 
The theoretical study of DL in the recovery setting started only very recently. \cite{aharon2006uniqueness} was the first to provide an algorithmic procedure to correctly extract the generating dictionary. The algorithm requires exponentially many samples and has exponential running time; see also~\cite{hillar2011can}. Subsequent work~\cite{gribonval2010dictionary, geng2011local, schnass2014local, schnass2014identifiability,schnass2015convergence} studied when the target dictionary is a local optimum of natural recovery criteria. These meticulous analyses show that polynomially many samples are sufficient to ensure local correctness under natural assumptions. However, these results do not imply that one can design efficient algorithms to obtain the desired local optimum and hence the dictionary. 

\cite{spielman2012exact} initiated the on-going research effort to provide efficient algorithms that globally solve DR. They showed that one can recover a complete dictionary $\mb A_0$ from $\mb Y = \mb A_0 \mb X_0$ by solving a certain sequence of linear programs, when $\mb X_0$ is a sparse random matrix with $O(\sqrt{n})$ nonzeros per column. \cite{agarwal2013learning, agarwal2013exact} and~\cite{arora2013new, arora2015simple} give efficient algorithms that provably recover overcomplete ($m \ge n$) and incoherent dictionaries, based on a combination of \{clustering or spectral initialization\} and local refinement. These algorithms again succeed when $\mb X_0$ has $\widetilde{O}(\sqrt{n})$ \footnote{The $\widetilde{O}$ suppresses some logarithm factors.} nonzeros per column. Recent work \cite{barak2014dictionary} provides the first polynomial-time algorithm that provably recovers most ``nice'' overcomplete dictionaries when $\mb X_0$ has $O(n^{1-\delta})$ nonzeros per column for any constant $\delta \in (0, 1)$. However, the proposed algorithm runs in super-polynomial time when the sparsity level goes up to $O(n)$. Similarly, \cite{arora2014more} also proposes a super-polynomial (quasipolynomial) time algorithm that guarantees recovery with (almost) $O\paren{n}$ nonzeros per column. By comparison, we give the first \emph{polynomial-time} algorithm that provably recovers complete dictionary $\mb A_0$ when $\mb X_0$ has $O\paren{n}$ nonzeros per column. 

Aside from efficient recovery, other theoretical work on DL includes results on identifiability~\cite{aharon2006uniqueness, hillar2011can, wu2015local}, generalization bounds~\cite{maurer2010dimensional, Vainsencher:2011, Mehta13, gribonval2013sample}, and noise stability~\cite{GribonvalJB14}. 

\paragraph{Finding Sparse Vectors in a Linear Subspace.} 
We have followed~\cite{spielman2012exact} and cast the core problem as finding the sparsest vectors in a given linear subspace, which is also of independent interest. Under a planted sparse model\footnote{... where one sparse vector embedded in an otherwise random subspace.}, \cite{demanet2014scaling} shows solving a sequence of linear programs similar to~\cite{spielman2012exact} can recover sparse vectors with sparsity up to $O\paren{p/\sqrt{n}}$, sublinear in the vector dimension. \cite{qu2014finding} improved the recovery limit to $O\paren{p}$ by solving a nonconvex spherical constrained problem similar to~\eqref{eq:main_l2}\footnote{The only difference is that they chose to work with the Huber function as a proxy of the $\norm{\cdot}{1}$ function. } via an ADM algorithm. The idea of seeking rows of $\mb X_0$ sequentially by solving the above core problem sees precursors in~\cite{zibulevsky2001blind} for blind source separation, and \cite{gottlieb2010matrix} for matrix sparsification. \cite{zibulevsky2001blind} also proposed a nonconvex optimization similar to~\eqref{eq:main_l2} here and that employed in~\cite{qu2014finding}. 

\paragraph{Nonconvex Optimization Problems.} For other nonconvex optimization problems of recovery of structured signals\footnote{This is a body of recent work studying nonconvex recovery up to statistical precision, including, e.g., \cite{loh2011high,loh2013regularized,wang2014nonconvex,balakrishnan2014statistical,wang2014high,loh2014support,loh2015statistical,sun2015provable}. }, 
including low-rank matrix completion/recovery~\cite{keshavan2010matrix, jain2013low, hardt2014understanding, hardt2014fast, netrapalli2014non, jain2014fast, sun2014guaranteed, zheng2015convergent, tu2015low, chen2015fast}, phase retreival~\cite{netrapalli2013phase, candes2015phase, chen2015solving, white2015local}, tensor recovery~\cite{jain2014provable, anandkumar2014guaranteed, anandkumar2014analyzing, anandkumar2015tensor}, mixed regression~\cite{yi2013alternating, lee2013near}, structured element pursuit~\cite{qu2014finding}, and recovery of simultaneously structured signals~\cite{lee2013near}, numerical linear algebra and optimization~\cite{jain2015computing, bhojanapalli2015dropping}, the initialization plus local refinement strategy adopted in theoretical DL~\cite{agarwal2013learning, agarwal2013exact, arora2013new, arora2015simple, arora2014more} is also crucial: nearness to the target solution enables exploiting the local geometry of the target to analyze the local refinement.\footnote{The powerful framework~\cite{attouch2010proximal,bolte2014proximal} to establish local convergence of ADM algorithms to critical points applies to DL/DR also, see, e.g., \cite{bao2014l0, bao2014convergent, bao2014convergence}. However, these results do not guarantee to produce global optima. } By comparison, we provide a complete characterization of the global geometry, which admits efficient algorithms without any special initialization. The idea of separating the geometric analysis and algorithmic design may also prove valuable for other nonconvex problems discussed above. 

\paragraph{Optimization over Riemannian Manifolds.} Our trust-region algorithm on the sphere builds on the extensive research efforts to generalize Euclidean numerical algorithms to (Riemannian) manifold settings. We refer the reader to the monographs~\cite{udriste1994convex,helmke1994optimization,absil2009} for survey of developments in this field. In particular, ~\cite{edelman1998geometry} developed Newton and conjugate-gradient methods for the Stiefel manifolds, of which the spherical manifold is a special case. \cite{absil2007trust} generalized the trust-region methods to Riemannian manifolds. We cannot, however, adopt the existing convergence results that concern either global convergence (convergence to critical points) or local convergence (convergence to a local minimum within a radius). The particular geometric structure forces us to piece together different arguments to obtain the global result. 

\paragraph{Independent Component Analysis (ICA) and Other Matrix Factorization Problems. } 
\begin{figure}[ht]
\centering  
\begin{subfigure}[t]{0.3\textwidth}
\centering
\includegraphics[width = 1\linewidth]{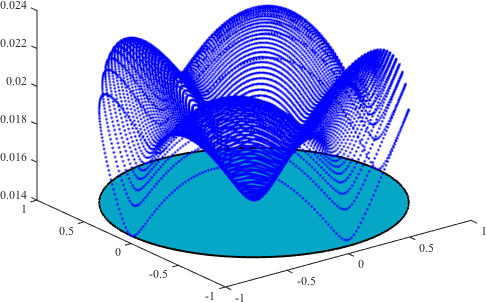} 
\caption{\footnotesize Correlated Gaussian, $\theta = 0.1$}
\label{subfig:gaussian-corr-0.1}
\end{subfigure}
\begin{subfigure}[t]{0.3\textwidth}
\centering
\includegraphics[width = 1\linewidth]{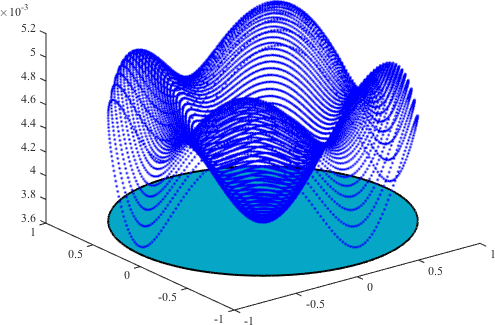} 
\caption{\footnotesize Correlated Uniform, $\theta = 0.1$}
\label{subfig:uniform-corr-0.1}
\end{subfigure} 
\begin{subfigure}[t]{0.3\textwidth}
\centering
\includegraphics[width = 1\linewidth]{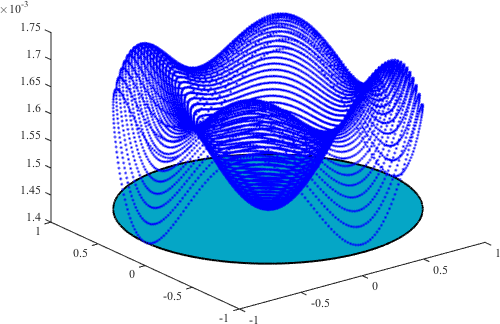}
\caption{\footnotesize Independent Uniform, $\theta = 0.1$}
\label{subfig:uniform-ind-0.1}
\end{subfigure} \\
\begin{subfigure}[t]{0.3\textwidth}
\centering
\includegraphics[width = 1\linewidth]{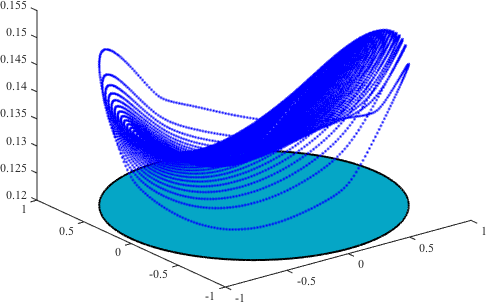} 
\caption{\footnotesize Correlated Gaussian, $\theta = 0.9$}
\label{subfig:gaussian-corr-0.9}
\end{subfigure}
\begin{subfigure}[t]{0.3\textwidth}
\centering
\includegraphics[width = 1\linewidth]{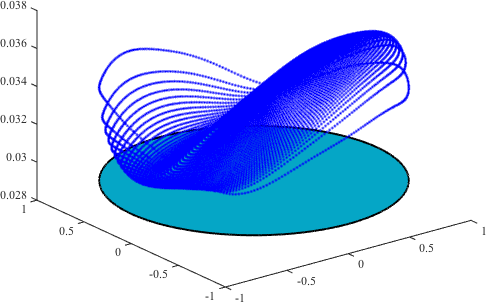}
\caption{\footnotesize Correlated Uniform, $\theta = 0.9$}
\label{subfig:uniform-corr-0.9}
\end{subfigure}
\begin{subfigure}[t]{0.3\textwidth}
\centering
\includegraphics[width = 1\linewidth]{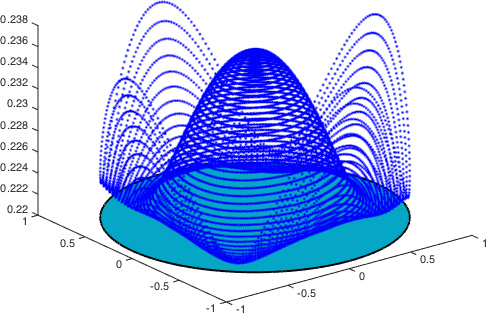}
\caption{\footnotesize Independent Uniform, $\theta = 1$}
\label{subfig:uniform-ind-0.9}
\end{subfigure}
\caption{\textbf{Asymptotic function landscapes when rows of $\mb X_0$ are not independent.} W.l.o.g., we again assume $\mb A_0 = \mb I$. In (a) and (d), $\mb X_0 = \mb \Omega \odot \mb V$, with $\mb \Omega \sim_{i.i.d.} \mathrm{Ber}(\theta)$ and columns of $\mb X_0$ i.i.d. Gaussian vectors obeying $\mb v_i \sim \mc N(\mb 0, \mb \Sigma^2)$ for symmetric $\mb \Sigma$ with $1$'s on the diagonal and i.i.d. off-diagonal entries distributed as $\mc N(0, \sqrt{2}/20)$. Similarly, in (b) and (e), $\mb X_0 = \mb \Omega \odot \mb W$, with $\mb \Omega \sim_{i.i.d.} \mathrm{Ber}(\theta)$ and columns of $\mb X_0$ i.i.d. vectors generated as $\mb w_i = \mb \Sigma \mb u^i$ with $\mb u_i \sim_{i.i.d.} \mathrm{Uniform}[-0.5,0.5]$. For comparison, in (c) and (f), $\mb X_0 = \mb \Omega \odot \mb W$ with $\mb \Omega \sim_{i.i.d.} \mathrm{Ber}(\theta)$ and $\mb W \sim_{i.i.d.} \mathrm{Uniform}[-0.5,0.5]$. Here $\odot$ denote the elementwise product, and the objective function is still based on the $\log\cosh$ function as in~\eqref{eq:main_l2}. }
\label{fig:DL-dependent}
\end{figure}
DL can also be considered in the general framework of matrix factorization problems,  which encompass the classic principal component analysis (PCA), ICA, and clustering, and more recent problems such as nonnegative matrix factorization (NMF), multi-layer neural nets (deep learning architectures). Most of these problems are NP-hard. Identifying tractable cases of practical interest and providing provable efficient algorithms are subject of on-going research endeavors; see, e.g., recent progresses on NMF~\cite{arora2012computing}, and learning deep neural nets~\cite{arora2013provable,sedghi2014provable,neyshabur2013sparse,livni2014computational}. 

ICA factors a data matrix $\mb Y$ as $\mb Y = \mb A \mb X$ such that $\mb A$ is square and rows of $\mb X$ are as independent as possible~\cite{hyvarinen2000independent,hyvarinen2001}. In theoretical study of the recovery problem, it is often assumed that rows of $\mb X_0$ are (weakly) independent (see, e.g., ~\cite{comon1994independent, frieze1996learning, arora2012provable}). Our i.i.d.\ probability model on $\mb X_0$ implies rows of $\mb X_0$ are independent, aligning our problem perfectly with the ICA problem. More interestingly, the $\log\cosh$ objective we analyze here was proposed as a general-purpose \emph{contrast function} in ICA that has not been thoroughly analyzed~\cite{hyvarinen99fast}, and algorithm and analysis with another popular contrast function, the fourth-order cumulants, indeed overlap with ours considerably~\cite{frieze1996learning, arora2012provable}\footnote{Nevertheless, the objective functions are apparently different. Moreover, we have provided a complete geometric characterization of the objective, in contrast to~\cite{frieze1996learning, arora2012provable}. We believe the geometric characterization could not only provide insight to the algorithm, but also help improve the algorithm in terms of stability and also finding all components. }. While this interesting connection potentially helps port our analysis to ICA, it is a fundamental question to ask what is playing the vital role for DR, sparsity or independence. 

Figure~\ref{fig:DL-dependent} helps shed some light in this direction, where we again plot the asymptotic objective landscape with the natural reparameterization as in Section~\ref{sec:overview_geometry}. From the left and central panels, it is evident even without independence, $\mb X_0$ with sparse columns induces the familiar geometric structures we saw in Figure~\ref{fig:large-sample-sphere}; such structures are broken when the sparsity level becomes large. We believe all our later analyses can be generalized to the correlated cases we experimented with. On the other hand, from the right panel\footnote{We have not showed the results on the BG model here, as it seems the structure persists even when $\theta$ approaches $1$. We suspect the ``phase transition'' of the landscape occurs at different points for different distributions and Gaussian is the outlying case where the transition occurs at $1$. }, it seems with independence, the function landscape undergoes a transition as sparsity level grows - target solution goes from minimizers of the objective to the maximizers of the objective. Without adequate knowledge of the true sparsity, it is unclear whether one would like to minimize or maximize the objective.\footnote{For solving the ICA problem, this suggests the $\log \cosh$ contrast function, that works well empirically~\cite{hyvarinen99fast}, may not work for all distributions (rotation-invariant Gaussian excluded of course).} This suggests sparsity, instead of independence, makes our current algorithm for DR work. 

\paragraph{Nonconvex Problems with Similar Geometric Structure} Besides ICA discussed above, it turns out that a handful of other practical problems arising in signal processing and machine learning induce the ``no spurious minimizers, all saddles are second-order'' structure under natural setting, including the eigenvalue problem, generalized phase retrieval~\cite{sun2015geometric}, tensor decomposition~\cite{ge2015escaping}, linear neural nets learning~\cite{baldi1989neural}. \cite{sun2015nonconvex} gave a review of these problems, and discussed how the methodology developed in this and the companion paper~\cite{sun2015complete_b} can be generalized to solve those problems.

\subsection{Notations, Organization, and Reproducible Research}
We use bold capital and small letters such as $\mb X$ and $\mb x$ to denote matrices and vectors, respectively. Small letters are reserved for scalars. Several specific mathematical objects we will frequently work with: $O_k$ for the orthogonal group of order $k$, $\bb S^{n-1}$ for the unit sphere in $\R^n$, $\bb B^n$ for the unit ball in $\R^n$, and $[m] \doteq \set{1, \dots, m}$ for positive integers $m$, $n$, $k$. We use $\paren{\cdot}^*$ for matrix transposition, causing no confusion as we will work entirely on the real field. We use superscript to index rows of a matrix, such as $\mb x^i$ for the $i$-th row of the matrix $\mb X$,  and subscript to index columns, such as $\mb x_j$. All vectors are defaulted to column vectors. So the $i$-th row of $\mb X$ as a row vector will be written as $\paren{\mb x^i}^*$. For norms, $\norm{\cdot}{}$ is the usual $\ell^2$ norm for a vector and to the operator norm (i.e., $\ell^2 \to \ell^2$) for a matrix; all other norms will be indexed by subscript, for example the Frobenius norm $\norm{\cdot}{F}$ for matrices and the element-wise max-norm $\norm{\cdot}{\infty}$. We use $\mb x \sim \mc L$ to mean that the random variable $x$ is distributed according to the law $\mc L$. Let $\mc N$ denote the Gaussian law. Then $\mb x \sim \mc N\paren{\mb 0, \mb I}$ means that $\mb x$ is a standard Gaussian vector. Similarly, we use $\mb x \sim_{i.i.d.} \mc L$ to mean elements of $\mb x$ are independently and identically distributed according to the law $\mc L$. So the fact $\mb x \sim \mc N\paren{\mb 0, \mb I}$ is equivalent to that $\mb x \sim_{i.i.d.} \mc N\paren{0, 1}$. One particular distribution of interest for this paper is the Bernoulli-Gaussian with rate $\theta$: $Z \sim B \cdot G$, with $G \sim \mc N\paren{0, 1}$ and $B \sim \mathrm{Ber}\paren{\theta}$. We also write this compactly as $Z \sim \mathrm{BG}\paren{\theta}$. We frequently use indexed $C$ and $c$ for numerical constants when stating and proving technical results. The scopes of such constants are local unless otherwise noted. We use standard notations for most other cases, with exceptions clarified locally. 

The rest of the paper is organized as follows. In Section~\ref{sec:geometry} we present major technical results for a complete characterization of the geometry sketched in Section~\ref{sec:overview_geometry}. Similarly in Section~\ref{sec:algorithm} we present necessary technical machinery and results for convergence proof of the Riemannian trust-region algorithm over the sphere, corresponding to Section~\ref{sec:overview_alg}. In Section~\ref{sec:main_result}, we discuss the whole algorithmic pipeline for recovering complete dictionaries given $\mb Y$, and present the main theorems. After presenting a simple simulation to corroborate our theory in Section~\ref{sec:exp}, we wrap up the main content in Section~\ref{sec:discuss} by discussing possible improvement and future directions after this work. All major proofs of geometrical and algorithmic results are deferred to Section~\ref{sec:proof_geometry} and Section~\ref{sec:proof_algorithm}, respectively. Section~\ref{sec:proof_main} augments the main results. The appendices cover some recurring technical tools and auxiliary results for the proofs. 

The codes to reproduce all the figures and experimental results can be found online: 
\begin{quote}
\centering
\url{https://github.com/sunju/dl_focm}
\end{quote}

\section{High-dimensional Function Landscapes} \label{sec:geometry}
To characterize the function landscape of $f\paren{\mb q; \mb X_0}$ over $\bb S^{n-1}$, we mostly work with the function 
\begin{align}\label{eqn:function-g}
g\paren{\mb w} \doteq f\paren{\mb q\paren{\mb w}; \mb X_0} = \frac{1}{p} \sum_{k=1}^p h_{\mu}\paren{\mb q\paren{\mb w}^* \paren{\mb x_0}_k}, 
\end{align}
induced by the reparametrization
\begin{align}
\mb q\paren{\mb w} = \paren{\mb w, \sqrt{1-\norm{\mb w}{}^2}},  \quad \mb w \in \bb B^{n-1}. 
\end{align}
In particular, we focus our attention to the smaller set 
\begin{align}
\Gamma = \set{\mb w: \norm{\mb w}{} < \sqrt{\frac{4n-1}{4n}}},  
\end{align}
because $\mb q\paren{\Gamma}$ contains all points $\mb q \in \bb S^{n-1}$ with $n \in \mathop{\arg\max}_{i \in \pm [n]} \mb q^* \mb e_i$ and we can characterize other parts of $f$ on $\bb S^{n-1}$ using projection onto other equatorial planes. Note that over $\Gamma$, $q_n = \paren{1-\norm{\mb w}{}^2}^{1/2} \ge \frac{1}{2\sqrt{n}}$. 

\subsection{Main Geometric Theorems} 
\begin{theorem}[High-dimensional landscape - orthogonal dictionary]\label{thm:geometry_orth}
Suppose $\mb A_0 = \mb I$ and hence $\mb Y = \mb A_0 \mb X_0 = \mb X_0$. There exist positive constants $c_\star$ and $C$, such that for any $\theta \in (0,1/2)$ and $\mu < \min\Brac{c_a \theta n^{-1}, c_b n^{-5/4}}$, whenever 
$
p \ge \frac{C}{\mu^2 \theta^2} n^3 \log \frac{n}{\mu \theta}$,
the following hold simultaneously with high probability: 
\begin{align}
\nabla^2 g(\mb w; \mb X_0) &\succeq \frac{c_\star \theta}{\mu} \mb I \quad &\forall \, \mb w \quad \text{s.t.}& \quad \norm{\mb w}{} \le \frac{\mu}{4\sqrt{2}},  \label{eqn:hess-zero-uni-orth} \\
\frac{\mb w^* \nabla g(\mb w; \mb X_0)}{\norm{\mb w}{}} &\ge c_\star \theta \quad &\forall \, \mb w \quad \text{s.t.}& \quad \frac{\mu}{4\sqrt{2}} \le \norm{\mb w}{} \le \frac{1}{20 \sqrt{5}} \label{eqn:grad-uni-orth} \\
\frac{\mb w^* \nabla^2 g(\mb w; \mb X_0) \mb w}{\norm{\mb w}{}^2} &\le - c_\star \theta \quad &\forall \, \mb w \quad \text{s.t.}& \quad \frac{1}{20 \sqrt{5}} \le \norm{\mb w}{} \le \sqrt{\frac{4n-1}{4n}},   \label{eqn:curvature-uni-orth}
\end{align}
{\em and} the function $g(\mb w; \mb X_0)$ has exactly one local minimizer $\mb w_\star$ over the open set $\Gamma \doteq \set{ \mb w: \norm{\mb w}{} < \sqrt{ \tfrac{4n-1}{4n} } }$, which satisfies
\begin{equation}
\norm{\mb w_\star - \mb 0 }{} \;\le\; \min\Brac{\frac{c_c\mu}{\theta} \sqrt{\frac{n \log p}{p}}, \frac{\mu}{16}}.    
\end{equation}
In particular, with this choice of $p$, the probability the claim fails to hold is at most 
$
4np^{-10} + \theta(np)^{-7} + \exp\paren{-0.3\theta n p} + c_d\exp\left( - c_e p \mu^2 \theta^2/n^2\right)
$. 
Here $c_a$ to $c_e$ are all positive numerical constants. 
\end{theorem}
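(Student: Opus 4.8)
The proof follows the ``control the expectation, then concentrate uniformly'' template, but the three–region structure forces us to run it for both the gradient and the Hessian, and the smoothing parameter $\mu$ must be tracked throughout. Concretely, I would first show that the population quantities $\expect{\nabla^2 g(\mb w; \mb X_0)}$, $\mb w^*\expect{\nabla g(\mb w; \mb X_0)}/\norm{\mb w}{}$ and $\mb w^*\expect{\nabla^2 g(\mb w; \mb X_0)}\mb w/\norm{\mb w}{}^2$ satisfy \eqref{eqn:hess-zero-uni-orth}--\eqref{eqn:curvature-uni-orth} with the constant doubled (say $2c_\star$) on the respective regions; then establish a uniform-over-$\Gamma$ deviation bound for the finite-sample versions; then combine to get the stated inequalities with $c_\star$, and finally read off uniqueness and the location of $\mb w_\star$ from the geometry.

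\textbf{Population landscape.} Fix $\mb w\in\Gamma$, set $\mb q=\mb q(\mb w)$, and condition a single column $\mb x\sim_{i.i.d.}\mathrm{BG}(\theta)$ on its support $S=\supp(\Omega)$: then $\mb q^*\mb x\mid S\sim\mc N(0,\norm{\mb q_S}{}^2)$, and by Stein's identity every quantity of interest collapses to a one–dimensional Gaussian integral of $h_\mu$, of $h_\mu'=\tanh(\cdot/\mu)$, or of $h_\mu''=\mu^{-1}\operatorname{sech}^2(\cdot/\mu)$, averaged over the random restriction $S$. Using the chain rule through $\mb q(\mb w)$ — whose Jacobian and its derivative are controlled on $\Gamma$ since $q_n\ge\tfrac1{2\sqrt n}$ there — I would (i) evaluate $\expect{\nabla^2 g(\mb w)}$ for $\norm{\mb w}{}$ small and show it is $\succeq\tfrac{2c_\star\theta}{\mu}\mb I$, the leading $\Theta(\theta/\mu)$ term coming from $\expect{h_\mu''(x_n)}\cdot\theta$ (dominated by the $\Omega_n=0$ contribution $\tfrac{1-\theta}{\mu}$), the competing terms being only $O(\theta)$; (ii) compute the radial derivative $\mb w^*\expect{\nabla g(\mb w)}/\norm{\mb w}{}$ on the middle annulus and show it is $\ge 2c_\star\theta$, by exhibiting it as a strictly increasing function of $\norm{\mb w}{}$ that is already positive once $\norm{\mb w}{}\ge\mu/(4\sqrt2)$; (iii) bound the radial curvature $\mb w^*\expect{\nabla^2 g(\mb w)}\mb w/\norm{\mb w}{}^2$ above by $-2c_\star\theta$ on the outer shell, where the $-\abs{\cdot}$-like behavior of $h_\mu$ dominates. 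These estimates are elementary but delicate because different pieces scale like $\mu$, like $1$, and like $\mu^{-1}$, and the useful curvature near $\mb 0$ is exactly the $\mu^{-1}$ piece that must be separated cleanly from lower-order terms.

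\textbf{Uniform concentration.} For a fixed $\mb w$, each of $\nabla^2 g(\mb w)=\tfrac1p\sum_k h_\mu''(\mb q^*\mb x_k)\mb x_k\mb x_k^*$, $\mb w^*\nabla g(\mb w)/\norm{\mb w}{}$, and $\mb w^*\nabla^2 g(\mb w)\mb w/\norm{\mb w}{}^2$ is a normalized sum of $p$ i.i.d.\ terms; testing the matrix against an $\varepsilon$-net of directions on $\Sp^{n-2}$ reduces all cases to scalar averages $\tfrac1p\sum_k\psi(\mb q^*\mb x_k)\innerprod{\mb v}{\mb x_k}^j$ with $j\in\{0,1,2\}$ and $\abs{\psi}$ bounded (by $1$ for the gradient, by $\mu^{-1}$ for the Hessian). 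After a standard truncation of the sub-Gaussian tails of $\mb x_k$, Bernstein's inequality for sub-exponential sums gives the pointwise bound; the $\mu^{-1}$ factor is precisely what forces the $\mu^{-2}$ in the sample requirement, while $\expect{\norm{\mb x_k}{}^2}\asymp\theta n$ and the cardinalities of the direction and parameter nets produce the $n^3\log\tfrac{n}{\mu\theta}$ scaling. Covering $\Gamma$ by a fine net and using Lipschitz bounds for $\mb w\mapsto\nabla g(\mb w)$ and $\mb w\mapsto\nabla^2 g(\mb w)$ on $\Gamma$ (finite since $q_n$ is bounded away from $0$) upgrades pointwise to uniform, and the union bound yields the stated failure probability.

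\textbf{From uniform control to the geometry, and the main obstacle.} On the good event, \eqref{eqn:hess-zero-uni-orth}--\eqref{eqn:curvature-uni-orth} hold with constant $c_\star$; strong convexity on $\{\norm{\mb w}{}\le\mu/(4\sqrt2)\}$ forces at most one critical point there, necessarily a local minimizer $\mb w_\star$; the strictly positive radial gradient on the annulus precludes critical points; the strictly negative radial curvature on the outer shell makes any critical point a strict saddle; hence $\mb w_\star$ is the unique local minimizer in $\Gamma$. For its location, $\expect{\nabla g(\mb 0)}=\mb 0$ by the sign symmetry of the Bernoulli--Gaussian law, so concentration makes $\norm{\nabla g(\mb 0)}{}$ small (of order $\theta\sqrt{n\log p/p}$ up to constants); combining with $\tfrac{c_\star\theta}{\mu}$-strong convexity and $\nabla g(\mb w_\star)=\mb 0$ gives $\norm{\mb w_\star}{}\lesssim\tfrac{\mu}{\theta}\norm{\nabla g(\mb 0)}{}$, which is $\le\min\{\tfrac{c_c\mu}{\theta}\sqrt{n\log p/p},\,\mu/16\}$ once $p$ exceeds the threshold. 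I expect the bottleneck to be the population step: extracting region-uniform constant bounds with the right $\mu$-scaling — especially isolating the $\Theta(\theta/\mu)$ curvature near $\mb 0$ — together with the Hessian concentration attaining the sharp $\mu^{-2}\theta^{-2}n^3$ dependence, since the $\mu^{-1}$ blow-up of $h_\mu''$ makes the sub-exponential moment bookkeeping the most error-prone part of the argument.
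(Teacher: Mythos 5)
Your proposal follows essentially the same route as the paper's proof: region-by-region control of the expected gradient and Hessian, pointwise concentration via Bernstein-type moment control, Lipschitz estimates plus an $\eps$-net over $\Gamma$ to upgrade to uniform bounds, and then strong convexity near $\mb 0$ combined with a vector-Bernstein bound on $\norm{\nabla g(\mb 0)}{}$ to pin down the unique local minimizer. The only deviations are cosmetic --- the paper uses a matrix Bernstein inequality rather than a direction net for the Hessian concentration, proves the annulus gradient bound by direct estimation rather than any monotonicity in $\norm{\mb w}{}$ (the expectation is not a function of $\norm{\mb w}{}$ alone), and spells out the short Weierstrass argument for existence of $\mb w_\star$ that you leave implicit --- none of which changes the substance of the argument.
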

Here $\mb q\paren{\mb 0} = \mb e_n$, which exactly recovers the last row of $\mb X_0$, $\mb x_0^n$. Though the unique local minimizer $\mb w_\star$ may not be $\mb 0$, it is very near to $\mb 0$. Hence the resulting $\mb q\paren{\mb w_\star}$ produces a close approximation to $\mb x_0^n$. Note that $\mb q\paren{\Gamma}$ (strictly) contains all points $\mb q \in \bb S^{n-1}$ such that $n = \mathop{\arg \max}_{i \in \pm [n]} \mb q^* \mb e_i$. We can characterize the graph of the function $f\paren{\mb q; \mb X_0}$ in the vicinity of other signed basis vector $\pm \mb e_i$ simply by changing the plane $\mb e_n^\perp$ to $\mb e_i^\perp$. Doing this $2n$ times (and multiplying the failure probability in Theorem~\ref{thm:geometry_orth} by $2n$), we obtain a characterization of $f\paren{\mb q; \mb X_0}$ over the entirety of $\bb S^{n-1}$.\footnote{In fact, it is possible to pull the very detailed geometry captured in~\eqref{eqn:hess-zero-uni-orth} through~\eqref{eqn:curvature-uni-orth} back to the sphere (i.e., the $\mb q$ space) also; analysis of the Riemannian trust-region algorithm later does part of these. We will stick to this simple global version here. }
 The result is captured by the next corollary. 

\begin{corollary} \label{cor:geometry_orth}
Suppose $\mb A_0 = \mb I$ and hence $\mb Y = \mb A_0 \mb X_0 = \mb X_0$. There exist positive constant $C$, such that for any $\theta \in (0,1/2)$ and $\mu < \min\Brac{c_a \theta n^{-1}, c_b n^{-5/4}}$, whenever $
p \ge \frac{C}{\mu^2 \theta^2} n^3 \log \frac{n}{\mu \theta} $, with probability at least $1 - 8n^2p^{-10} - \theta(np)^{-7} - \exp\paren{-0.3\theta n p} - c_c\exp\paren{- c_d p \mu^2 \theta^2/n^2}$, the function $f\paren{\mb q; \mb X_0}$ has exactly $2n$ local minimizers over the sphere $\bb S^{n-1}$. In particular, there is a bijective map between these minimizers and signed basis vectors $\set{\pm \mb e_i}_i$, such that the corresponding local minimizer $\mb q_\star$ and $\mb b \in \set{\pm \mb e_i}_i$ satisfy 
\begin{align}
\norm{\mb q_\star - \mb b}{} \le \sqrt{2}\min\Brac{\frac{c_c\mu}{\theta} \sqrt{\frac{n \log p}{p}}, \frac{\mu}{16}}. 
\end{align} 
Here $c_a$ to $c_d$ are numerical constants (possibly different from that in the above theorem). \end{corollary}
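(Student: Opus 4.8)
The plan is to derive the corollary from Theorem~\ref{thm:geometry_orth} by a covering argument over the sphere together with a union bound. I would start by covering $\bb S^{n-1}$ with the $2n$ closed symmetric sections $\mc C_{i,\sigma} \doteq \set{\mb q \in \bb S^{n-1} : \sigma q_i \ge \abs{q_j}\ \forall\, j \in [n]}$, $i \in [n]$, $\sigma \in \set{+1,-1}$; these cover $\bb S^{n-1}$ since every unit vector has a coordinate of maximal magnitude. Near $\sigma \mb e_i$ I would use the reparametrization obtained by projecting onto $\mb e_i^\perp$ (composed with the sign flip when $\sigma = -1$); on the relevant domain the $i$-th coordinate stays bounded away from $0$ (just as $q_n \ge \tfrac{1}{2\sqrt n}$ on $\Gamma$), so this map is a smooth diffeomorphism onto an open subset of $\bb S^{n-1}$ whose image strictly contains $\mc C_{i,\sigma}$, exactly as $\mb q(\Gamma)$ strictly contains the section around $\mb e_n$ in the excerpt. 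Since the law $\mathrm{BG}(\theta)$ governing the entries of $\mb X_0$ is invariant under permuting rows and flipping signs of $\mb X_0$, the proof of Theorem~\ref{thm:geometry_orth} applies verbatim in each such chart, producing an event of the stated probability on which the restricted objective has exactly one local minimizer, lying within $\min\set{(c_c\mu/\theta)\sqrt{n\log p/p},\ \mu/16}$ of $\sigma\mb e_i$ in chart coordinates. A union bound over the $2n$ charts puts us on the intersection of these events: the factor $2n$ on the leading term gives exactly the $8n^2 p^{-10}$ in the corollary, and on the remaining lower-order terms it is absorbed by a harmless change of numerical constants since $p$ is polynomially large in $n$.

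On that intersection I would then argue two things. (i) Transfer to the sphere: since each chart map is a diffeomorphism onto an \emph{open} subset of $\bb S^{n-1}$, the unique chart-local minimizer corresponds to a genuine local minimizer $\mb q_\star^{(i,\sigma)}$ of $f$ over all of $\bb S^{n-1}$, and the elementary bound $\norm{\mb q(\mb w) - \mb q(\mb 0)}{}^2 = \norm{\mb w}{}^2 + \paren{1 - \sqrt{1 - \norm{\mb w}{}^2}}^2 \le 2\norm{\mb w}{}^2$ yields $\norm{\mb q_\star^{(i,\sigma)} - \sigma\mb e_i}{} \le \sqrt{2}\min\set{(c_c\mu/\theta)\sqrt{n\log p/p},\ \mu/16}$. (ii) Exhaustiveness: any local minimizer $\mb q_\star$ of $f$ over $\bb S^{n-1}$ lies in some closed section $\mc C_{i,\sigma}$, hence in the corresponding open chart, and is a local minimizer of the restriction there, so it must equal $\mb q_\star^{(i,\sigma)}$; therefore the set of local minimizers of $f$ over $\bb S^{n-1}$ is precisely $\set{\mb q_\star^{(i,\sigma)} : i \in [n],\ \sigma \in \set{\pm 1}}$. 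Finally, each $\mb q_\star^{(i,\sigma)}$ is within $\sqrt{2}\,\mu/16$ of the distinct signed basis vector $\sigma\mb e_i$, distinct signed basis vectors are at mutual distance at least $\sqrt{2}$, and $\mu$ is far below any numerical constant in the admissible range; hence the $\mb q_\star^{(i,\sigma)}$ are pairwise distinct, there are exactly $2n$ of them, and $\mb q_\star^{(i,\sigma)} \mapsto \sigma\mb e_i$ is the asserted bijection with the claimed distance bound.

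The one place that needs care is the interaction between the overlapping open charts and the closed sections they cover: because the charts properly contain the sections, one must ensure no local minimizer is missed or double-counted. This is automatic here, since Theorem~\ref{thm:geometry_orth} pins every local minimizer to within $O(\mu)$ of a signed basis vector, placing it deep in the interior of its section and far from every chart boundary, so the chart used to analyze it is irrelevant. Checking that the proof of Theorem~\ref{thm:geometry_orth} is genuinely invariant under the relevant coordinate changes (it invokes only the signed-permutation symmetry of $\mathrm{BG}(\theta)$) and cleaning up the $2n$-fold union bound are routine, so I do not anticipate any substantive new difficulty beyond Theorem~\ref{thm:geometry_orth} itself.
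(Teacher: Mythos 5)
Your proposal is correct and follows essentially the same route as the paper: apply Theorem~\ref{thm:geometry_orth} in each of the $2n$ symmetric sections (via the signed-permutation symmetry of $\mathrm{BG}(\theta)$), transfer the unique chart-local minimizer to the sphere using $\norm{\mb q(\mb w)-\mb q(\mb 0)}{}^2 \le 2\norm{\mb w}{}^2$, and rule out missed or double-counted minimizers because every local minimizer sits $O(\mu)$-close to its signed basis vector, deep inside one section. The only cosmetic difference is the probability bookkeeping: the terms $\theta(np)^{-7}+\exp(-0.3\theta np)$ arise from the single shared event controlling $\norm{\mb X_0}{\infty}$ and therefore need no $2n$ union bound at all (they appear unchanged in the corollary), rather than being "absorbed into constants" as you suggest.
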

\begin{proof}
By Theorem~\ref{thm:geometry_orth}, over $\mb q\paren{\Gamma}$, $\mb q\paren{\mb w_\star}$ is the unique local minimizer. Suppose not. Then there exist $\mb q' \in \mb q\paren{\Gamma}$ with $\mb q' \neq \mb q\paren{\mb w_\star}$ and $\eps > 0$, such that $f\paren{\mb q'; \mb X_0} \le f\paren{\mb q; \mb X_0}$ for all $\mb q \in \mb q\paren{\Gamma}$ satisfying $\norm{\mb q' - \mb q}{} < \eps$. Since the mapping $\mb w \mapsto \mb q\paren{\mb w}$ is $2\sqrt{n}$-Lipschitz (Lemma~\ref{lem:lip-h-mu}), $g\paren{\mb w\paren{\mb q'}; \mb X_0} \le g\paren{\mb w\paren{\mb q}; \mb X_0}$ for all $\mb w \in \Gamma$ satisfying $\norm{\mb w\paren{\mb q'} - \mb w\paren{\mb q}}{} < \eps/\paren{2\sqrt{n}}$, implying $\mb w\paren{\mb q'}$ is a local minimizer different from $\mb w_\star$, a contradiction. Let $\norm{\mb w_\star - \mb 0}{} = \eta$. Straightforward calculation shows 
\begin{align*}
\norm{\mb q\paren{\mb w_\star} - \mb e_n}{}^2 = \paren{1-\sqrt{1-\eta^2}}^2 + \eta^2 = 2 - 2\sqrt{1-\eta^2} \le 2\eta^2. 
\end{align*}
Repeating the argument $2n$ times in the vicinity of other signed basis vectors $\pm \mb e_i$ gives $2n$ local minimizers of $f$. Indeed, the $2n$ symmetric sections cover the sphere with certain overlaps, and a simple calculation shows that no such local minimizer lies in the overlapped regions (due to nearness to a signed basis vector). There is no extra local minimizer, as such local minimizer is contained in at least one of the $2n$ symmetric sections, resulting two different local minimizers in one section, contradicting the uniqueness result we obtained above. 
\end{proof}

Though the $2n$ isolated local minimizers may have different objective values, they are equally good in the sense any of them produces a close approximation to a certain row of $\mb X_0$. As discussed in Section~\ref{sec:overview_geometry}, for cases $\mb A_0$ is an orthobasis other than $\mb I$, the landscape of $f\paren{\mb q; \mb Y}$ is simply a rotated version of the one we characterized above. 

\begin{theorem}[High-dimensional landscape - complete dictionary]\label{thm:geometry_comp}
Suppose $\mb A_0$ is complete with its condition number $\kappa\paren{\mb A_0}$. There exist positive constants $c_\star$ and $C$, such that for any $\theta \in (0,1/2)$ and $\mu < \min\Brac{c_a \theta n^{-1}, c_b n^{-5/4}}$, when $p \ge \frac{C}{c_\star^2 \theta} \max\set{\frac{n^4}{\mu^4}, \frac{n^5}{\mu^2}} \kappa^8\paren{\mb A_0} \log^4\paren{\frac{\kappa\paren{\mb A_0} n}{\mu \theta}}$ and $\overline{\mb Y} \doteq \sqrt{p\theta}\paren{\mb Y \mb Y^*}^{-1/2} \mb Y$, $\mb U \mb \Sigma \mb V^* = \mathtt{SVD}\paren{\mb A_0}$, the following hold simultaneously with high probability: 
\begin{align}
\nabla^2 g(\mb w; \mb V \mb U^* \overline{\mb Y}) &\succeq \frac{c_\star \theta}{2\mu} \mb I \quad &\forall \, \mb w \quad \text{s.t.}& \quad \norm{\mb w}{} \le \frac{\mu}{4\sqrt{2}},  \label{eqn:hess-zero-uni-comp} \\
\frac{\mb w^* \nabla g(\mb w; \mb V \mb U^* \overline{\mb Y})}{\norm{\mb w}{}} &\ge \frac{1}{2}c_\star \theta \quad &\forall \, \mb w \quad \text{s.t.}& \quad \frac{\mu}{4\sqrt{2}} \le \norm{\mb w}{} \le \frac{1}{20 \sqrt{5}} \label{eqn:grad-uni-comp} \\
\frac{\mb w^* \nabla^2 g(\mb w; \mb V \mb U^* \overline{\mb Y}) \mb w}{\norm{\mb w}{}^2} &\le -\frac{1}{2} c_\star \theta \quad &\forall \, \mb w \quad \text{s.t.}& \quad \frac{1}{20 \sqrt{5}} \le \norm{\mb w}{} \le \sqrt{\frac{4n-1}{4n}},   \label{eqn:curvature-uni-comp}
\end{align}
{\em and} the function $g(\mb w; \mb V \mb U^* \overline{\mb Y})$ has exactly one local minimizer $\mb w_\star$ over the open set $\Gamma \doteq \set{ \mb w: \norm{\mb w}{} < \sqrt{ \tfrac{4n-1}{4n} } }$, which satisfies
\begin{equation}
\norm{\mb w_\star - \mb 0 }{} \;\le\; \frac{\mu}{7}.    
\end{equation}
In particular, with this choice of $p$, the probability the claim fails to hold is at most 
$
4np^{-10} + \theta(np)^{-7} + \exp\paren{-0.3\theta n p} + p^{-8} + c_d\exp\left( - c_e p \mu^2 \theta^2/n^2\right)
$. 
Here $c_a$ to $c_e$ are all positive numerical constants.
\end{theorem}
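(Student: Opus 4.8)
The plan is to reduce Theorem~\ref{thm:geometry_comp} to the orthogonal case, Theorem~\ref{thm:geometry_orth}, through the preconditioning-plus-perturbation strategy sketched in Section~\ref{sec:overview_geometry}. Writing the SVD $\mb A_0 = \mb U\mb\Sigma\mb V^*$, one computes directly that $\mb V\mb U^*\overline{\mb Y} = \sqrt{p\theta}\,\mb V\mb U^*\paren{\mb A_0\mb X_0\mb X_0^*\mb A_0^*}^{-1/2}\mb A_0\mb X_0$, which equals $\mb X_0$ exactly when $\tfrac{1}{p\theta}\mb X_0\mb X_0^* = \mb I$. So the first step is a matrix concentration estimate: under the BG model the columns of $\mb X_0$ are i.i.d.\ with $\expect{(\mb x_0)_k(\mb x_0)_k^*} = \theta\mb I$ and well-behaved tails, so matrix Bernstein (or an $\eps$-net over $\bb S^{n-1}$) gives $\norm{\tfrac{1}{p\theta}\mb X_0\mb X_0^* - \mb I}{} \le \eps_0$ with high probability once $p \gtrsim \eps_0^{-2}\tfrac{n}{\theta}\log n$. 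Propagating this bound through the matrix inverse square root and the surrounding products --- which inflates the error by a $\mathrm{poly}(\kappa(\mb A_0))$ factor --- yields a decomposition $\mb V\mb U^*\overline{\mb Y} = \mb X_0 + \mb\Xi\mb X_0$ in which $\norm{\mb\Xi}{}$ is controlled by $\eps_0$ and $\kappa(\mb A_0)$; with $p$ as in the theorem, $\norm{\mb\Xi}{}$ is made as small as we need.

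Next I would establish \emph{uniform} perturbation bounds for the gradient and Hessian of $g$ in its data argument. At fixed $\mb w\in\Gamma$, the functions $g(\mb w;\mb X_0+\mb\Xi\mb X_0)$ and $g(\mb w;\mb X_0)$ differ only through the arguments $\mb q(\mb w)^*\mb\Xi(\mb x_0)_k$ of $h_\mu$, each of size at most $\norm{\mb\Xi}{}\norm{(\mb x_0)_k}{}$. Since $h_\mu, h_\mu', h_\mu'', h_\mu'''$ are bounded by fixed multiples of $1, 1, \mu^{-1}, \mu^{-2}$ (elementary bounds for $\log\cosh$), and the reparametrization map is $2\sqrt n$-Lipschitz on $\Gamma$ (Lemma~\ref{lem:lip-h-mu}) because $q_n \ge \tfrac{1}{2\sqrt n}$ there, the differences $\norm{\nabla g(\mb w;\mb X_0+\mb\Xi\mb X_0)-\nabla g(\mb w;\mb X_0)}{}$ and $\norm{\nabla^2 g(\mb w;\mb X_0+\mb\Xi\mb X_0)-\nabla^2 g(\mb w;\mb X_0)}{}$ are bounded by $\norm{\mb\Xi}{}$ times $\mathrm{poly}(n,\mu^{-1})$ times norms of $\mb X_0$ (such as $\norm{\mb X_0}{}$ and $\max_k\norm{(\mb x_0)_k}{}$), and those norms concentrate at $\mathrm{poly}(n,\theta,\log p)$ w.h.p. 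Uniformity over $\mb w\in\Gamma$ follows either from a net argument using the controlled $\mb w$-Lipschitz constants of these quantities, or simply because the bounds above depend on $\mb w$ only through the Jacobian factor. With the stated choice of $p$, all of these perturbations drop below half of the corresponding quantities $\tfrac{c_\star\theta}{\mu}$ and $c_\star\theta$ in Theorem~\ref{thm:geometry_orth}.

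Adding these bounds to \eqref{eqn:hess-zero-uni-orth}--\eqref{eqn:curvature-uni-orth} then gives \eqref{eqn:hess-zero-uni-comp}--\eqref{eqn:curvature-uni-comp} with $c_\star$ degraded to $c_\star/2$ over the same three regions. For the local minimizer: \eqref{eqn:grad-uni-comp}--\eqref{eqn:curvature-uni-comp} force every local minimizer of $g(\cdot;\mb V\mb U^*\overline{\mb Y})$ over $\Gamma$ into the ball $\norm{\mb w}{}\le\tfrac{\mu}{4\sqrt2}$, on which \eqref{eqn:hess-zero-uni-comp} gives $\tfrac{c_\star\theta}{2\mu}$-strong convexity; hence there is exactly one critical point $\mb w_\star$ there, which is the unique local minimizer. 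To locate it, compare with the orthogonal minimizer $\mb w_\star^{\mathrm{orth}}$ of $g(\cdot;\mb X_0)$ supplied by Theorem~\ref{thm:geometry_orth}: since $\nabla g(\mb w_\star;\mb X_0) = -\brac{\nabla g(\mb w_\star;\mb V\mb U^*\overline{\mb Y}) - \nabla g(\mb w_\star;\mb X_0)}$ has norm at most the gradient perturbation, and $g(\cdot;\mb X_0)$ is strongly convex on the ball with $\mb w_\star^{\mathrm{orth}}$ as its unique critical point there, $\norm{\mb w_\star - \mb w_\star^{\mathrm{orth}}}{}$ is at most a constant multiple of $\tfrac{\mu}{\theta}$ times the gradient perturbation, which the choice of $p$ keeps below $\tfrac{\mu}{7}-\tfrac{\mu}{16}$; together with $\norm{\mb w_\star^{\mathrm{orth}}}{}\le\mu/16$ this gives $\norm{\mb w_\star}{}\le\mu/7$, and $\mu/7 < \mu/(4\sqrt2)$ keeps the argument self-consistent. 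A union bound over the matrix-concentration event, the $\mb X_0$-norm events, and the failure event of Theorem~\ref{thm:geometry_orth} yields the stated probability.

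The main obstacle is making the first two steps quantitative: controlling $\norm{\mb\Xi}{}$ with the correct $\mathrm{poly}(\kappa(\mb A_0))$ dependence through the matrix inverse square root, and then passing that error through $h_\mu$, whose $k$-th derivative scales like $\mu^{-(k-1)}$, so that the Hessian perturbation in particular stays below the $c_\star\theta/\mu$ threshold \emph{everywhere} on $\Gamma$. Tracking these two amplifications is exactly what dictates the sample complexity in the theorem --- notably the $\kappa^8$ from the preconditioner and the $\mu^{-4}$ from compounding the $\mu^{-2}$ scaling of the Hessian with a further $\mu^{-2}$-type loss in the norm and concentration estimates.
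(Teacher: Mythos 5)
Your proposal is correct and follows essentially the same route as the paper: control of $\norm{\mb \Xi}{}$ via concentration of $\tfrac{1}{p\theta}\mb X_0\mb X_0^*$ and perturbation of the inverse square root (Lemma~\ref{lem:pert_key_mag}), uniform gradient/Hessian perturbation bounds in the data argument (Lemma~\ref{lem:pert_key_grad_hess}), and addition of these to the bounds of Theorem~\ref{thm:geometry_orth} to obtain the $c_\star/2$ geometry and the minimizer location. The only cosmetic difference is that you locate $\mb w_\star$ by comparing it to the orthogonal-case minimizer via strong convexity of $g(\cdot;\mb X_0)$, whereas the paper bounds $\norm{\mb w_\star}{}$ directly from the perturbed gradient at $\mb 0$ using strong convexity of the perturbed function; the two arguments are interchangeable.
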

\begin{corollary}  \label{cor:geometry_comp}
Suppose $\mb A_0$ is complete with its condition number $\kappa\paren{\mb A_0}$. There exist positive constants $c_\star$ and $C$, such that for any $\theta \in (0,1/2)$ and $\mu < \min\Brac{c_a \theta n^{-1}, c_b n^{-5/4}}$, when $p \ge \frac{C}{c_\star^2 \theta} \max\set{\frac{n^4}{\mu^4}, \frac{n^5}{\mu^2}} \kappa^8\paren{\mb A_0} \allowbreak \log^4\paren{\frac{\kappa\paren{\mb A_0} n}{\mu\theta}}$ and $\overline{\mb Y} \doteq \sqrt{p\theta}\paren{\mb Y \mb Y^*}^{-1/2} \mb Y$, $\mb U \mb \Sigma \mb V^* = \mathtt{SVD}\paren{\mb A_0}$, with probability at least $1 - 8n^2p^{-10} - \theta(np)^{-7} - \exp\paren{-0.3\theta n p} - p^{-8} - c_d\exp\left( - c_e p \mu^2 \theta^2/n^2\right)$, the function $f\paren{\mb q; \mb V \mb U^* \overline{\mb Y}}$ has exactly $2n$ local minimizers over the sphere $\bb S^{n-1}$. In particular, there is a bijective map between these minimizers and signed basis vectors $\set{\pm \mb e_i}_i$, such that the corresponding local minimizer $\mb q_\star$ and $\mb b \in \set{\pm \mb e_i}_i$ satisfy 
\begin{align}
\norm{\mb q_\star - \mb b}{} \le \frac{\sqrt{2}\mu}{7}. 
\end{align} 
Here $c_a$ to $c_d$ are numerical constants (possibly different from that in the above theorem).
\end{corollary}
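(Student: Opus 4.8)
The plan is to follow the proof of Corollary~\ref{cor:geometry_orth} essentially verbatim, substituting $\mb V \mb U^* \overline{\mb Y}$ for $\mb X_0$ and invoking Theorem~\ref{thm:geometry_comp} in place of Theorem~\ref{thm:geometry_orth}. All the analytic substance — the three-region geometry \eqref{eqn:hess-zero-uni-comp}–\eqref{eqn:curvature-uni-comp} and the uniqueness and $O(\mu)$ localization of the local minimizer $\mb w_\star$ of $g(\mb w; \mb V\mb U^*\overline{\mb Y})$ over $\Gamma$ — is already packaged in Theorem~\ref{thm:geometry_comp}; what remains is the transfer from the reparametrized picture over $\Gamma$ to the sphere, and the $2n$-fold union bound.

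First I would fix the projection onto the equatorial plane $\mb e_n^\perp$ and apply Theorem~\ref{thm:geometry_comp} to obtain the unique local minimizer $\mb w_\star \in \Gamma$ of $g(\cdot; \mb V \mb U^* \overline{\mb Y})$ with $\norm{\mb w_\star - \mb 0}{} \le \mu/7$. Then, exactly as in Corollary~\ref{cor:geometry_orth}, I would transfer this to $f$: since $\mb w \mapsto \mb q(\mb w)$ is a $2\sqrt n$-Lipschitz bijection from $\Gamma$ onto the symmetric section $\set{\mb q \in \bb S^{n-1}: n \in \arg\max_{i\in\pm[n]} \mb q^*\mb e_i}$ (Lemma~\ref{lem:lip-h-mu}), any local minimizer of $f$ over $\mb q(\Gamma)$ other than $\mb q(\mb w_\star)$ would pull back to a second local minimizer of $g$ over $\Gamma$, contradicting uniqueness; hence $\mb q(\mb w_\star)$ is the unique local minimizer of $f$ on this section. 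The elementary identity $\norm{\mb q(\mb w_\star) - \mb e_n}{}^2 = 2 - 2\sqrt{1-\norm{\mb w_\star}{}^2} \le 2\norm{\mb w_\star}{}^2$ then gives $\norm{\mb q(\mb w_\star)-\mb e_n}{} \le \sqrt 2\,\mu/7$.

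Next I would repeat this for each of the $2n$ symmetric sections centered at $\pm\mb e_i$, $i\in[n]$; Theorem~\ref{thm:geometry_comp} applies unchanged since nothing in it privileges the last coordinate. A union bound over the $2n$ copies multiplies only the section-dependent failure term ($4np^{-10} \mapsto 8n^2 p^{-10}$), while the global events — the preconditioning estimate controlling $\mb \Xi$ (the $p^{-8}$ term), the column-support concentration ($\exp(-0.3\theta np)$), and the remaining concentration terms — are common to all sections and need not be multiplied, which reproduces the probability bound in the statement. Finally I would verify, as in Corollary~\ref{cor:geometry_orth}, that these $2n$ minimizers are distinct and exhaust all local minimizers of $f$ over $\bb S^{n-1}$: the $2n$ sections cover $\bb S^{n-1}$ with overlaps, but each minimizer lies within $\sqrt 2\,\mu/7$ of a signed basis vector, which for $\mu$ small is strictly interior to exactly one section; hence no minimizer lies in an overlap region and none is double-counted or missed. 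This delivers the bijection with $\set{\pm\mb e_i}_i$ and the distance bound $\sqrt 2\,\mu/7$.

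I do not expect a serious obstacle: the work is entirely inherited from Theorem~\ref{thm:geometry_comp}. The only points requiring care are the union-bound accounting — correctly separating ``per-section'' from ``global'' failure events so that the stated probability is reproduced — and checking that the coarser constants in Theorem~\ref{thm:geometry_comp} (the halved gradient/curvature margins and the weaker $\mu/7$ localization of $\mb w_\star$) are still small enough that the pairwise overlaps of the $2n$ sections contain no local minimizer.
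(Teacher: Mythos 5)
Your proposal is correct and matches the paper's intent: the paper explicitly omits this proof as "almost identical" to that of Corollary~\ref{cor:geometry_orth}, and your argument is exactly that proof (uniqueness on each section via the $2\sqrt{n}$-Lipschitz pullback, the identity $\norm{\mb q(\mb w_\star)-\mb e_n}{}^2 \le 2\norm{\mb w_\star}{}^2$, the $2n$-fold repetition with the section-dependent term in the union bound, and the overlap check) with Theorem~\ref{thm:geometry_comp} and the $\mu/7$ bound substituted in. No gaps.
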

We will omit the proof as it is almost identical to that of corollary~\ref{cor:geometry_orth}. 

\subsection{Useful Technical Lemmas and Proof Ideas for Orthogonal Dictionaries} \label{sec:geo_results_orth}
The proof of Theorem~\ref{thm:geometry_orth} is conceptually straightforward: one shows that $\bb E_{\mb X_0}\brac{g\paren{\mb w; \mb X_0}}$ has the claimed properties, and then proves that each of the quantities of interest concentrates uniformly about its expectation. The detailed calculations are nontrivial. 

The next three propositions show that in the expected function landscape, we see successively strongly convex region, nonzero gradient region, and directional negative curvature region when moving away from zero, as depicted in Figure~\ref{fig:large-sample-sphere} and sketched in Section~\ref{sec:overview_geometry}. Note that in this case 
\begin{align*}
\bb E_{\mb X_0}\brac{g\paren{\mb q; \mb X_0}} = \bb E_{\mb x \sim_{i.i.d.} \mathrm{BG}\paren{\theta}}\brac{h_{\mu}\paren{\mb q\paren{\mb w}^* \mb x}}. 
\end{align*}

\begin{proposition} \label{prop:geometry_asymp_curvature}
There exists a positive constant $c$, such that for every $\theta \in \paren{0, \tfrac{1}{2}}$ and any $R_h\in\paren{0, \sqrt{\tfrac{4n-1}{4n}}} $, if $\mu \le c\min\Brac{\theta R_h^2 n^{-1}, R_h n^{-5/4}}$, it holds for every $\mb w$ satisfying $R_h\le \norm{\mb w}{} \le \sqrt{\tfrac{4n-1}{4n}}$ that 
\begin{align*}
\frac{\mb w^* \nabla^2_{\mb w} \expect{h_{\mu}\left(\mb q^*\paren{\mb w} \mb x\right)} \mb w}{\norm{\mb w}{}^2}  \le -\frac{\theta }{2\sqrt{2\pi}}. 
\end{align*}
\end{proposition}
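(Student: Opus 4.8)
The plan is to compute the second directional derivative of $\expect{h_\mu(\mb q(\mb w)^* \mb x)}$ along $\mb w$ directly, and then show it is dominated by a single negative term once $\mu$ is small enough. First I would fix $\mb w$ with $\norm{\mb w}{} = r \in [R_h, \sqrt{(4n-1)/(4n)}]$ and reduce to a one-dimensional problem: since $\mb x \sim_{i.i.d.} \mathrm{BG}(\theta)$, the scalar $z \doteq \mb q(\mb w)^* \mb x$ is a mixture whose law depends on $\mb w$ only through the support pattern of $\mb x$ restricted to the last coordinate versus the first $n-1$. Write $q_n = \sqrt{1-r^2}$. Conditioning on whether $x_n$ is ``on'' (probability $\theta$) and on the partial sum $s \doteq \sum_{i<n} w_i x_i$, one gets $z = q_n x_n + s$ where $s$, conditioned on its support, is Gaussian with variance equal to the sum of the active $w_i^2$. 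I would then differentiate twice in $\mb w$ through $\mb q(\mb w) = (\mb w, \sqrt{1-\norm{\mb w}{}^2})$, being careful with the $\sqrt{1-\norm{\mb w}{}^2}$ term, which contributes the curvature of the sphere; the quadratic form $\mb w^* \nabla^2_{\mb w} \expect{h_\mu(\cdot)} \mb w / \norm{\mb w}{}^2$ splits into (i) the ``Euclidean'' second derivative of $\expect{h_\mu}$ restricted to the ray $t \mapsto \mb q(t\mb w/r)$ and (ii) a $\expect{h_\mu'(z) \cdot (\text{first derivative of } q_n)}$ term coming from the constraint.

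The main computation is to evaluate these derivatives in closed-ish form. Because $h_\mu(z) = \mu \log\cosh(z/\mu)$, we have $h_\mu'(z) = \tanh(z/\mu)$ and $h_\mu''(z) = \frac{1}{\mu}\operatorname{sech}^2(z/\mu)$. As $\mu \to 0$, $h_\mu(z) \to |z|$, $h_\mu'(z) \to \sign(z)$, and $h_\mu''(z)$ concentrates like a scaled Dirac at $0$; so the leading behavior of the curvature should match that of $\expect{|\mb q(\mb w)^* \mb x|}$, whose restriction to the sphere I expect has strictly negative curvature along the radial direction away from $\mb e_n$, with the bound $-\theta/(2\sqrt{2\pi})$ emerging from the $\theta \cdot \expect_{g\sim\mc N(0,1)}|g| = \theta\sqrt{2/\pi}$ contribution of the single ``spike'' coordinate $x_n$ weighted by the spherical-constraint term. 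Concretely I would (a) show the $\mu\to 0$ limiting quantity equals some explicit expression that is $\le -\theta/\sqrt{2\pi}$ on the stated range of $r$ — this is where the restriction $\theta \le 1/2$ and the lower bound $r \ge R_h$ get used, since near $r = 0$ the curvature is positive and one needs to be bounded away from the origin; and (b) bound the error incurred by finite $\mu$ by $O(\mu/(\theta R_h^2 n^{-1}))$-type terms using $|h_\mu(z) - |z|| \le \mu\log 2$, $|h_\mu'(z) - \sign(z)| \le 2\e^{-|z|/\mu}$, and a similar tail bound on $h_\mu''$, then integrate these against the Gaussian density. The hypothesis $\mu \le c\min\{\theta R_h^2 n^{-1}, R_h n^{-5/4}\}$ is exactly what makes this error at most $\theta/(2\sqrt{2\pi}) \cdot \tfrac{1}{?}$, so that $-\theta/\sqrt{2\pi} + \text{error} \le -\theta/(2\sqrt{2\pi})$.

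The step I expect to be the main obstacle is part (b): getting clean, dimension-aware bounds on the finite-$\mu$ correction to $\mb w^* \nabla^2_{\mb w}\expect{h_\mu(\mb q(\mb w)^*\mb x)}\mb w$. The Hessian in $\mb w$ involves terms like $\expect{h_\mu''(z)\, x_i x_j}$ and $\expect{h_\mu'(z)}$ multiplied by geometric factors that blow up as $q_n \to 0$ (i.e. as $r \to 1$); since we only have $q_n \ge 1/(2\sqrt n)$ on $\Gamma$, these factors carry powers of $n$, which is why the admissible $\mu$ must shrink polynomially in $n$ (the $n^{-5/4}$ and $n^{-1}$ factors in the hypothesis). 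Controlling $\expect{\operatorname{sech}^2(z/\mu) x_i x_j}$ uniformly in $\mb w$ requires splitting on the event that $x_n$ is active: when it is, $z$ is a non-degenerate Gaussian with variance $\ge q_n^2 \ge 1/(4n)$, so $\operatorname{sech}^2(z/\mu)$ integrates to $O(\mu\sqrt n)$ against that Gaussian; when $x_n$ is inactive, $z = s$ has variance that can be as small as $\min_i w_i^2$, and one must argue that the mass of $\mb w$ on such small-variance directions is limited, or absorb it into the strongly-convex-region analysis. Assembling these pieces into the stated single inequality, while tracking that the constant $c$ is absolute, is the delicate part; the rest is bookkeeping with the $\log\cosh$ derivatives and standard Gaussian moment estimates.
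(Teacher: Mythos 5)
There is a genuine gap at the heart of your step (a)/(b) split: you treat the $h_\mu''$ contribution as a finite-$\mu$ correction to be made small, and you attribute the limiting negativity to the spike coordinate $x_n$ weighted by the spherical constraint. But by the Hessian formula the quadratic form is
\begin{align*}
\mb w^* \nabla^2_{\mb w} \expect{h_{\mu}\paren{z}} \mb w \;=\; \frac{1}{\mu}\expect{\paren{1-\tanh^2\paren{\tfrac{z}{\mu}}}\paren{\mb w^*\ol{\mb x} - \tfrac{x_n\norm{\mb w}{}^2}{q_n}}^2} \;-\; \expect{\tanh\paren{\tfrac{z}{\mu}}x_n}\frac{\norm{\mb w}{}^2}{q_n^3}, \qquad z = \mb q\paren{\mb w}^*\mb x,
\end{align*}
and the first (positive) term does \emph{not} vanish as $\mu \to 0$: since $\frac{1}{\mu}\paren{1-\tanh^2(z/\mu)}$ acts like a multiple of a point mass at $z=0$, it converges to a density-at-zero times conditional-second-moment quantity of exactly the same order as the negative spike term $-\sqrt{2/\pi}\,\theta\,\norm{\mb w}{}^2 q_n^{-2}\,\bb E\brac{1/\norm{\mb q_{\mc I}}{}}$. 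Your own estimate makes this visible: bounding the $1-\tanh^2$ integral by $O(\mu\sqrt n)$ and multiplying by the $1/\mu$ prefactor leaves an $O(\sqrt n)$ \emph{positive} quantity, which cannot be absorbed as an error against $\theta/\sqrt{2\pi}$. The proposition holds only because of a precise near-cancellation between the two terms: conditioning on the support and computing both with matching leading constants, the sum collapses to $-\sqrt{2/\pi}\,\theta\,\expect{\norm{\mb w_{\mc J^c}}{}^2/\norm{\mb q_{\mc I}}{}^3}$ plus $O\paren{\mu + \mu/q_n^2 + \theta\mu^2/q_n^5}$ errors, i.e.\ the negativity is driven by the coordinates of $\mb w$ \emph{outside} the support, not by the spike term alone. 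Establishing this requires two-sided, constant-sharp estimates of $\expect{\paren{1-\tanh^2\paren{(X+Y)/\mu}}X^2\indicator{X+Y>0}}$ and $\expect{\tanh\paren{(X+Y)/\mu}X}$ (the paper gets these from Gaussian integration by parts, exact exponential-weighted Gaussian integrals, and a polynomial approximation of $t \mapsto (1+t)^{-2}$), together with Harris' association inequality to decouple $\norm{\mb w_{\mc J^c}}{}^2$ from $1/\norm{\mb q_{\mc I}}{}^3$ when averaging over the random support. None of these ingredients, nor the cancellation itself, appears in your outline, so the plan stalls exactly where you flag the ``main obstacle.''

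A secondary, fixable point: your worry about the event that $x_n$ is inactive and $z$ has tiny variance is unnecessary. On that event $\mb w^*\ol{\mb x} - x_n\norm{\mb w}{}^2/q_n = z$, so the integrand of the first term is bounded pointwise by $\frac{4}{\mu}\exp\paren{-2\abs{z}/\mu}z^2 \le 4\e^{-2}\mu$, an $O(\mu)$ contribution independent of how the mass of $\mb w$ is distributed; no appeal to the strongly convex region is needed.
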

\begin{proof}
See Section~\ref{sec:proof_geo_asym_curvature} on Page~\pageref{sec:proof_geo_asym_curvature}. 
\end{proof}

\begin{proposition} \label{prop:geometry_asymp_gradient}
For every $\theta \in \paren{0, \tfrac{1}{2}}$ and every $\mu \le 9/50$, it holds for every $\mb w$ satisfying $r_g \le \norm{\mb w}{} \le R_g$, where $r_g = \tfrac{\mu}{6\sqrt{2}}$ and $R_g = \tfrac{1-\theta}{10\sqrt{5}}$, that 
\begin{align*}
\frac{\mb w^* \nabla_{\mb w} \expect{h_\mu (\mb q^*\paren{\mb w} \mb x)}}{\norm{\mb w}{}} \ge  \frac{\theta}{20\sqrt{2\pi}}. 
\end{align*}
\end{proposition}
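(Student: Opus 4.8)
The plan is to collapse the $n$-dimensional expectation to one-dimensional Gaussian integrals by conditioning on the sparsity pattern of $\mb x$, and then to show that a genuinely positive ``coordinate-aligned'' contribution dominates a small $O(\mu)$ error term.

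First I would compute the radial derivative exactly. Writing $z \doteq \mb q^*\paren{\mb w}\mb x$, $\zeta \doteq \mb w^*\mb x_{1:n-1}$ (the inner product with the first $n-1$ entries of $\mb x$), and $q_n \doteq \sqrt{1-\norm{\mb w}{}^2}$, and using $h_\mu' = \tanh\paren{\cdot/\mu}$ together with $\norm{\mb w}{}^2 + q_n^2 = 1$, the chain rule gives $\mb w^*\nabla_{\mb w}h_\mu(z) = \tanh(z/\mu)\paren{\zeta - \tfrac{\norm{\mb w}{}^2}{q_n}x_n}$. Taking expectations, conditioning on $S \doteq \supp(\mb x)$ so that the surviving entries are i.i.d.\ $\mc N(0,1)$, and applying Gaussian integration by parts (Stein's identity) separately in $\zeta$ and in $x_n$, I expect to arrive at the identity
\[
\mb w^*\nabla_{\mb w}\expect{h_\mu\paren{\mb q^*\paren{\mb w}\mb x}} \;=\; \tfrac1\mu\,\bb E_I\brac{\,(1-\theta)\norm{\mb w_I}{}^2\,g(\norm{\mb w_I}{}) \;-\; \theta\norm{\mb w_{I^c}}{}^2\,g\paren{\sqrt{\norm{\mb w_I}{}^2+q_n^2}}\,},
\]
where $I\doteq\supp(\mb x_{1:n-1})$ (each coordinate present independently with probability $\theta$), $I^c\doteq [n-1]\setminus I$, and $g(\sigma)\doteq\expect{\operatorname{sech}^2(\sigma Z/\mu)}$ for $Z\sim\mc N(0,1)$ is a nonincreasing function of $\sigma\ge 0$. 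The first term is the ``good'' part — it comes from the event $x_n=0$, on which the integrand is $\zeta\tanh(\zeta/\mu)\ge 0$ — and the second is the potentially negative error.

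Second, I would bound both terms on the annulus $r_g \le \norm{\mb w}{} \le R_g$. For the error, note $q_n \ge \sqrt{1-R_g^2} \ge 0.99$; the substitution $u=\sigma z/\mu$ gives the clean estimate $g(\sigma) \le \tfrac{\mu}{\sigma\sqrt{2\pi}}\int_\R\operatorname{sech}^2(u)\,du = \tfrac{2\mu}{\sigma\sqrt{2\pi}}$, and monotonicity of $g$ with $\sqrt{\norm{\mb w_I}{}^2+q_n^2}\ge q_n$ together with $\bb E_I\brac{\norm{\mb w_{I^c}}{}^2}=(1-\theta)\norm{\mb w}{}^2$ bounds the error below by $-\tfrac{2\theta(1-\theta)}{q_n\sqrt{2\pi}}\norm{\mb w}{}^2$, i.e.\ $O(\norm{\mb w}{}^2)$ with a small explicit constant. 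For the good term, monotonicity of $g$ for $\sigma\le\mu$ and the reweighting bound $g(\sigma)\ge\tfrac{\mu}{\sigma\sqrt{2\pi}}\int_\R\operatorname{sech}^2(u)e^{-u^2/2}\,du$ for $\sigma>\mu$ yield $\tfrac1\mu\sigma^2 g(\sigma) \ge \alpha\min(\sigma^2/\mu,\,\sigma)$ with $\alpha=\expect{\operatorname{sech}^2(Z)}$; combining with the pointwise inequality $\min(\norm{\mb w_I}{}^2/\mu,\,\norm{\mb w_I}{}) \ge \norm{\mb w_I}{}^2/\max(\mu,\norm{\mb w}{})$ (valid since $\norm{\mb w_I}{}\le\norm{\mb w}{}$) and $\bb E_I\brac{\norm{\mb w_I}{}^2}=\theta\norm{\mb w}{}^2$ gives the good term $\ge (1-\theta)\alpha\theta\min(\norm{\mb w}{}^2/\mu,\,\norm{\mb w}{})$.

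Putting the pieces together yields
\[
\mb w^*\nabla_{\mb w}\expect{h_\mu\paren{\mb q^*\paren{\mb w}\mb x}} \;\ge\; \theta(1-\theta)\brac{\,\alpha\min\paren{\tfrac{\norm{\mb w}{}^2}{\mu},\,\norm{\mb w}{}} \;-\; \tfrac{2}{q_n\sqrt{2\pi}}\norm{\mb w}{}^2\,},
\]
and dividing by $\theta\norm{\mb w}{}$ reduces the proposition to a numerical inequality into which one feeds $\min(\norm{\mb w}{}/\mu,1)\ge r_g/\mu = 1/(6\sqrt 2)$ (from $\norm{\mb w}{}\ge r_g$), $\norm{\mb w}{}\le R_g$, $1-\theta\in(\tfrac12,1)$, and $q_n\ge 0.99$. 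This last piece of arithmetic is the real obstacle, and the reason the statement carries an explicit constant rather than an unspecified $c$: the good term and the error are both of order $\theta(1-\theta)\norm{\mb w}{}^2$, differing only through $1/\mu$ versus $1/\sqrt{2\pi}$ and through the $\operatorname{sech}^2$ averages, so the argument survives only because the thresholds $\mu\le 9/50$, $\norm{\mb w}{}\ge\mu/(6\sqrt2)$, and $\norm{\mb w}{}\le(1-\theta)/(10\sqrt5)$ are calibrated to leave just enough slack; in particular it matters to use the sharp error constant $2\mu/(\sigma\sqrt{2\pi})$ (from $\int_\R\operatorname{sech}^2 = 2$) rather than a cruder exponential-tail bound, and a decent lower bound on $\expect{\operatorname{sech}^2(Z)}$.
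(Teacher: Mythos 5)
Your proposal is correct in substance but follows a genuinely different route from the paper. You linearize everything through Stein's identity (Gaussian integration by parts) in both $\zeta$ and $x_n$, obtaining the exact representation $\tfrac1\mu\,\bb E_I\brac{(1-\theta)\norm{\mb w_I}{}^2 g(\norm{\mb w_I}{}) - \theta\norm{\mb w_{I^c}}{}^2 g\paren{\sqrt{\norm{\mb w_I}{}^2+q_n^2}}}$ — I checked the identity and both of your bounds on $g(\sigma)=\expect{\mathrm{sech}^2(\sigma Z/\mu)}$, and they are right. This is essentially the paper's Lemma~\ref{lem:aux_asymp_proof_a}, equations \eqref{eqn:lem-aux-aymp-proof-a-6}--\eqref{eqn:lem-aux-aymp-proof-a-7}, which the paper deploys in the strong-convexity analysis but \emph{not} here. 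The paper's own proof instead keeps $\tanh$: it discards the cross term $\expect{\tanh((X+Y)/\mu)Y}\ge 0$ via Harris' association inequality, lower-bounds the surviving term using $\tanh(z)\ge(1-e^{-2z})/2$ together with the closed-form Gaussian integral and a Type III tail bound (yielding $\tfrac{2\sigma_Y^2}{\mu\sqrt{2\pi}}(\sqrt{1+\norm{\mb w}{}^2/\mu^2}-\norm{\mb w}{}/\mu)$ and then $\sqrt{1+x^2}\ge x+\tfrac1{10x}$ for $x\ge\tfrac1{6\sqrt2}$), and bounds the $x_n$ term crudely by $\abs{\tanh}\le 1$. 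Your route buys a cleaner structure (an exact cancellation rather than an association argument, and an error term carrying an extra factor $(1-\theta)$); the paper's route buys constants that come entirely from closed-form tail estimates, with no need to certify any numerical value of $\expect{\mathrm{sech}^2(Z)}$.

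Two caveats on the arithmetic you deferred, since the proposition's explicit constant is the whole point. First, the endgame is tighter than your sketch suggests: if you decouple the two occurrences of $1-\theta$ (taking $1-\theta\ge 1/2$ in the good term but $\norm{\mb w}{}\le\tfrac1{10\sqrt5}$ in the error), the bound collapses to roughly $0.5\cdot\tfrac{\alpha}{6\sqrt2}-\tfrac{2}{10\sqrt5\sqrt{2\pi}}\approx 0.0357-0.0357$, which does not clear $\tfrac1{20\sqrt{2\pi}}\approx 0.0199$. You must keep $s=1-\theta$ in both terms, i.e.\ verify $s\,\tfrac{\alpha}{6\sqrt2}-\tfrac{2s^2}{10\sqrt5\,q_n\sqrt{2\pi}}\ge\tfrac1{20\sqrt{2\pi}}$ on $s\in[\tfrac12,1]$, which (by concavity, checking endpoints) works with comfortable margin — $\approx 0.027$ at $s=\tfrac12$ — provided $\alpha\ge 0.49$. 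Second, that lower bound on $\alpha=\expect{\mathrm{sech}^2(Z)}$ must actually be supplied: the true value is about $0.61$, and e.g.\ a monotone discretization $\alpha\ge\sum_k \mathrm{sech}^2(t_{k+1})\prob{t_k<\abs{Z}\le t_{k+1}}$ with step $0.25$ already gives $\alpha\ge 0.53$; the cruder bound $\mathrm{sech}^2(z)\ge e^{-2\abs z}$ gives only $2e^2\Phi^c(2)\approx 0.34$ and is not enough. With those two items filled in (plus a one-line justification for exchanging $\nabla_{\mb w}$ and $\expect{\cdot}$, as in Lemma~\ref{lemma:exchange_diff_int}), your argument is a complete and valid alternative proof.
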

\begin{proof}
See Section~\ref{sec:proof_geo_asym_gradient} on Page~\pageref{sec:proof_geo_asym_gradient}. 
\end{proof}

\begin{proposition} \label{prop:geometry_asymp_strong_convexity}
For every $\theta \in \paren{0, \frac{1}{2}}$, and every $\mu \le \frac{1}{20\sqrt{n}}$, it holds for every $\mb w$ satisfying $\norm{\mb w}{} \le \frac{\mu}{4\sqrt{2}}$ that 
\begin{align*}
\bb E\brac{\nabla^2_{\mb w} h_{\mu}\paren{\mb q^*\paren{\mb w}\mb x} } \succeq \frac{\theta}{25\sqrt{2\pi}\mu} \mb I. 
\end{align*}
\end{proposition}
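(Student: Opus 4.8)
The plan is to compute $\nabla^2_{\mb w}\, h_\mu(\mb q(\mb w)^*\mb x)$ in closed form, take its expectation over $\mb x\sim_{i.i.d.}\mathrm{BG}(\theta)$, and show that just one favorable piece of that expectation already dominates $\tfrac{\theta}{25\sqrt{2\pi}\mu}\mb I$ with plenty of room. The slack is deliberately large: the true curvature near the origin is of order $1/\mu$, far above the claimed bound, so the argument can afford to be wasteful at every step.

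\textbf{Hessian and the ``sphere-curvature'' term.} Write $\mb x=(\bar{\mb x},x_n)$ with $\bar{\mb x}\in\R^{n-1}$, and set $q_n\doteq\sqrt{1-\norm{\mb w}{}^2}$, $z\doteq\mb q(\mb w)^*\mb x=\mb w^*\bar{\mb x}+q_n x_n$. Using $\nabla_{\mb w}q_n=-\mb w/q_n$ and $\nabla^2_{\mb w}q_n=-\tfrac1{q_n}\mb I-\tfrac{\mb w\mb w^*}{q_n^3}$, the chain rule gives
\begin{align*}
\nabla^2_{\mb w}h_\mu(z)=h_\mu''(z)\,\mb g\mb g^* - h_\mu'(z)\,x_n\Big(\tfrac1{q_n}\mb I+\tfrac{\mb w\mb w^*}{q_n^3}\Big),\qquad \mb g\doteq\bar{\mb x}-\tfrac{x_n}{q_n}\mb w,
\end{align*}
with $h_\mu'(z)=\tanh(z/\mu)$ and $h_\mu''(z)=\tfrac1\mu\operatorname{sech}^2(z/\mu)\ge0$. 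The second term is handled crudely: $|h_\mu'|\le1$ gives $|\expect{h_\mu'(z)x_n}|\le\expect{|x_n|}=\theta\sqrt{2/\pi}=\tfrac{2\theta}{\sqrt{2\pi}}$, while $\tfrac1{q_n}\mb I+\tfrac{\mb w\mb w^*}{q_n^3}\preceq\tfrac1{q_n^3}\mb I$ with $\tfrac1{q_n^3}\le1.01$ since $\norm{\mb w}{}\le\mu/(4\sqrt2)$ is tiny. Hence, whatever the sign of $\expect{h_\mu'(z)x_n}$, the expectation of the second term is $\succeq-\tfrac{2.1\,\theta}{\sqrt{2\pi}}\mb I$.

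\textbf{The leading term.} Condition on $\Omega_n$. On $\{\Omega_n=1\}$ the term $h_\mu''(z)\mb g\mb g^*$ is PSD and is simply dropped; on $\{\Omega_n=0\}$ (probability $1-\theta\ge\tfrac12$) one has $x_n=0$, so $z=\mb w^*\bar{\mb x}$, $\mb g=\bar{\mb x}$, and $\expect{h_\mu''(z)\mb g\mb g^*}\succeq(1-\theta)\,\expect{h_\mu''(\mb w^*\bar{\mb x})\,\bar{\mb x}\bar{\mb x}^*}$. Since $h_\mu''(t)\ge\tfrac{\operatorname{sech}^2(1)}{\mu}\,\indicator{|t|\le\mu}$, for every unit $\mb u\in\R^{n-1}$,
\begin{align*}
\mb u^*\expect{h_\mu''(\mb w^*\bar{\mb x})\,\bar{\mb x}\bar{\mb x}^*}\mb u \ge \tfrac{\operatorname{sech}^2(1)}{\mu}\Big(\expect{(\mb u^*\bar{\mb x})^2}-\expect{(\mb u^*\bar{\mb x})^2\,\indicator{|\mb w^*\bar{\mb x}|>\mu}}\Big),
\end{align*}
where $\expect{(\mb u^*\bar{\mb x})^2}=\theta$ (coordinates of $\bar{\mb x}$ are independent, mean $0$, variance $\theta$), and the subtracted term is at most $\sqrt{\expect{(\mb u^*\bar{\mb x})^4}}\,\sqrt{\prob{|\mb w^*\bar{\mb x}|>\mu}}\le\sqrt{3\theta}\,\sqrt{\theta/32}<\theta/3$, using $\expect{(\mb u^*\bar{\mb x})^4}\le3\theta$ and Chebyshev $\prob{|\mb w^*\bar{\mb x}|>\mu}\le\theta\norm{\mb w}{}^2/\mu^2\le\theta/32$. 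Thus the leading term's expectation is $\succeq(1-\theta)\tfrac{\operatorname{sech}^2(1)}{\mu}\cdot\tfrac{2\theta}{3}\,\mb I\succeq\tfrac{\operatorname{sech}^2(1)\,\theta}{3\mu}\mb I$. Adding the two pieces, $\expect{\nabla^2_{\mb w}h_\mu(\mb q(\mb w)^*\mb x)}\succeq\big(\tfrac{\operatorname{sech}^2(1)}{3\mu}-\tfrac{2.1}{\sqrt{2\pi}}\big)\theta\,\mb I$, and since $\operatorname{sech}^2(1)>0.4$ and $\mu\le\tfrac1{20}$ a one-line numerical check gives $\tfrac{\operatorname{sech}^2(1)}{3\mu}-\tfrac1{25\sqrt{2\pi}\mu}\ge\tfrac{2.1}{\sqrt{2\pi}}$, i.e.\ the coefficient is $\ge\tfrac1{25\sqrt{2\pi}\mu}$, which is the claim. (Only $\mu\le1/20$ is needed; the stated $\mu\le1/(20\sqrt n)$ is stronger.)

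\textbf{Main obstacle.} Everything above is mechanical except the truncation estimate: one must argue that restricting to the event $\{|\mb w^*\bar{\mb x}|\le\mu\}$, on which $h_\mu''$ is $\Theta(1/\mu)$ rather than exponentially small, still captures a constant fraction of $\expect{\bar{\mb x}\bar{\mb x}^*}=\theta\mb I$. This is exactly where the hypothesis $\norm{\mb w}{}\le\mu/(4\sqrt2)$ is used: it makes $\operatorname{Var}(\mb w^*\bar{\mb x})=\theta\norm{\mb w}{}^2\le\theta\mu^2/32$, so the truncated mass is $O(\sqrt\theta\cdot\sqrt\theta)=O(\theta)$, a lower-order loss; the same smallness also keeps $q_n$ near $1$ in the curvature-term bound. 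The Bernoulli--Gaussian structure enters only through elementary second/fourth moments and the convenience of conditioning on $\Omega_n$.
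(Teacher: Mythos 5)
Your proposal is correct, and it reaches the bound by a genuinely more elementary route than the paper. The paper's proof splits $\mb g\mb g^*=(\bar{\mb x}-\tfrac{x_n}{q_n}\mb w)(\bar{\mb x}-\tfrac{x_n}{q_n}\mb w)^*$ into $\bar{\mb x}\bar{\mb x}^*$, cross terms, and a $\mb w\mb w^*$ piece, controls each via operator-norm estimates and the integration-by-parts identity $\bb E[\tanh(\tfrac{X+Y}{\mu})X]=\tfrac{\sigma_X^2}{\mu}\bb E[1-\tanh^2]$, and then lower-bounds the key matrix $\bb E[(1-\tanh^2(\mb w^*\bar{\mb x}/\mu))\bar{\mb x}\bar{\mb x}^*]$ by conditioning on the support, projecting the test vector onto $\widetilde{\mb w}_{\mc S}$ and its complement, and evaluating exact exponential-Gaussian integrals with Type III tail bounds (yielding the constant $\tfrac{1}{\sqrt{2\pi}}(2-\tfrac34\sqrt 2)$); it also treats $\mb w=\mb 0$ separately. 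You instead (i) discard the entire PSD first term on $\{\Omega_n=1\}$, which eliminates the cross terms and the $\mb w\mb w^*$ piece at once, (ii) replace the exact-integral machinery by the pointwise truncation $1-\tanh^2(t/\mu)\ge \operatorname{sech}^2(1)\indicator{|t|\le\mu}$ combined with the second/fourth BG moments, Cauchy--Schwarz, and Chebyshev (your computation $\bb E[(\mb u^*\bar{\mb x})^4]\le 3\theta$ and $\bb P[|\mb w^*\bar{\mb x}|>\mu]\le\theta\|\mb w\|^2/\mu^2\le\theta/32$ are both right, and this is exactly where the hypothesis $\norm{\mb w}{}\le\mu/(4\sqrt2)$ does its work), and (iii) bound the sphere-curvature term crudely by $|\tanh|\le 1$ together with $1/q_n^3\le 1.01$, rather than via the identity. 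What the paper's computation buys is a sharper constant in the leading term; what yours buys is a shorter, dimension-free argument that needs only $\mu\le 1/20$ rather than $\mu\le 1/(20\sqrt n)$ (the paper's simplifications use the crude region-wide bound $q_n\ge \tfrac{1}{2\sqrt n}$, whereas you exploit that $q_n\approx 1$ on this small ball), and it handles $\mb w=\mb 0$ uniformly. The final numerical check, $\tfrac{1}{\mu}\bigl(\tfrac{\operatorname{sech}^2(1)}{3}-\tfrac{1}{25\sqrt{2\pi}}\bigr)\ge \tfrac{2.1}{\sqrt{2\pi}}$ for $\mu\le 1/20$, is valid, so the stated bound follows with room to spare.
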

\begin{proof}
See Section~\ref{sec:proof_geo_asym_strcvx} on Page~\pageref{sec:proof_geo_asym_strcvx}. 
\end{proof}

To prove that the above hold qualitatively for finite $p$, i.e., the function $g\paren{\mb w; \mb X_0}$, we will need first prove that for a fixed $\mb w$ each of the quantity of interest concentrate about their expectation w.h.p., and the function is nice enough (Lipschitz) such that we can extend the results to all $\mb w$ via a discretization argument. The next three propositions provide the desired pointwise concentration results.  

\begin{proposition}\label{prop:concentration-hessian-negative}
Suppose $0 <\mu \leq \frac{1}{\sqrt{n}}$. For every $\mb w \in \Gamma$, it holds that for any $t > 0$, 
\begin{align*}
\bb P \brac{\abs{\frac{\mb w^*\nabla^2 g(\mb w; \mb X_0)\mb w}{\norm{\mb w}{}^2} - \bb E\brac{\frac{\mb w^*\nabla^2 g(\mb w; \mb X_0)\mb w}{\norm{\mb w}{}^2}}}\ge t}\leq 4\exp\paren{- \frac{p\mu^2t^2}{512n^2+32n\mu t}}.
\end{align*}
\end{proposition}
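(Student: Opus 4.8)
The plan is to write the target quantity as an empirical mean of i.i.d.\ scalars, bound each as a sub-exponential random variable with Orlicz parameter $O(n/\mu)$, and conclude by a Bernstein inequality with the constants tracked. First I would compute the Hessian by the chain rule. Writing $\phi_k(\mb w) \doteq \mb q(\mb w)^*(\mb x_0)_k$, splitting $(\mb x_0)_k = (\mb a_k, \beta_k)$ with $\mb a_k \in \R^{n-1}$ and $\beta_k \in \R$, and setting $s \doteq \sqrt{1-\norm{\mb w}{}^2} = q_n$, one gets
\begin{align*}
\nabla^2_{\mb w} h_\mu(\phi_k) = h_\mu''(\phi_k)\,\nabla\phi_k\,\nabla\phi_k^* + h_\mu'(\phi_k)\,\nabla^2\phi_k, \quad \nabla\phi_k = \mb a_k - \tfrac{\beta_k}{s}\mb w, \quad \nabla^2\phi_k = -\beta_k\paren{\tfrac{1}{s}\mb I + \tfrac{1}{s^3}\mb w\mb w^*},
\end{align*}
so that, contracting with $\mb w/\norm{\mb w}{}$ and averaging, $\tfrac{\mb w^*\nabla^2 g(\mb w;\mb X_0)\mb w}{\norm{\mb w}{}^2} = \tfrac{1}{p}\sum_{k=1}^p Z_k$, where $Z_k = Z_k(\mb w)$ depends only on the single $\mathrm{BG}(\theta)$ column $(\mb x_0)_k$.

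Next I would bound $Z_k$. Using the elementary facts $0 \le h_\mu'' \le \mu^{-1}$ and $\abs{h_\mu'} \le 1$, together with the geometry of $\Gamma$ (which gives $s = q_n \ge \tfrac{1}{2\sqrt n}$ and $\norm{\mb w}{} < 1$), and writing $\wh{\mb w} \doteq \mb w/\norm{\mb w}{}$,
\begin{align*}
\abs{Z_k} \;\le\; \frac{2}{\mu}\abs{\wh{\mb w}^*\mb a_k}^2 + \frac{8n}{\mu}\beta_k^2 + 10\,n^{3/2}\abs{\beta_k}.
\end{align*}
The one step that matters here is to treat $\abs{\mb w^*\mb a_k}^2$ as a rank-one quadratic form, i.e.\ (up to $\norm{\mb w}{}^2$) the square of a \emph{single} unit-norm linear functional of $\mb a_k$ — which is sub-exponential with $\psi_1$-norm $O(1)$ — rather than bounding it by $\norm{\mb w}{}^2\norm{\mb a_k}{}^2$, which would cost a spurious factor of $n$. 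The second term is $8n/\mu$ times the square of a $\mathrm{BG}(\theta)$ scalar (sub-exponential, $\psi_1$-norm $O(n/\mu)$), and the third is $O(n^{3/2})$ times a $\mathrm{BG}(\theta)$ scalar (sub-Gaussian, $\psi_2$-norm $O(n^{3/2}) \le O(n/\mu)$ since $\mu \le n^{-1/2}$). Using $\expect{\beta_k^{2m}} = \theta(2m-1)!! \le \theta\,2^m m!$ to verify the Bernstein moment condition on the quadratic pieces, $Z_k - \expect{Z_k}$ satisfies $\expect{\abs{Z_k - \expect{Z_k}}^m} \le \tfrac{m!}{2}v\,b^{m-2}$ for $m \ge 2$ with $v = O(n^2/\mu^2)$ and $b = O(n/\mu)$. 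Then the classical Bernstein inequality for the i.i.d.\ sum $\tfrac{1}{p}\sum_k(Z_k - \expect{Z_k})$ gives $2\exp\paren{-\tfrac{pt^2}{2(v + bt)}}$; choosing the constants so $v = 256\,n^2/\mu^2$, $b = 16\,n/\mu$ reproduces $2\exp\paren{-\tfrac{p\mu^2 t^2}{512 n^2 + 32 n\mu t}}$, and splitting the quadratic and linear parts (or the two tails) gives the prefactor $4$.

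I expect the main obstacle to be bookkeeping rather than insight: proving the pointwise bound on $Z_k$ with numerical constants sharp enough, and checking the Bernstein moment condition for the quadratic-in-$\mathrm{BG}$ terms cleanly enough, to land on exactly $512$ and $32$. The genuinely load-bearing observation is the rank-one treatment of $\abs{\wh{\mb w}^*\mb a_k}^2$ noted above; and the hypothesis $\mu \le n^{-1/2}$ enters precisely so that the $s^{-3}$-curvature contribution ($n^{3/2}$ scale) stays below the dominant $n/\mu$ scale.
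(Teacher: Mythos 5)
Your proposal is correct and follows essentially the same route as the paper: write the quantity as $\tfrac1p\sum_k Y_k$, split each summand into the $\ddot h_\mu$-quadratic part (bounded by $1/\mu$ times the square of a single linear functional of the BG column, whose moments are Gaussian-dominated at scale $1/q_n^2\le 4n$ rather than via a lossy $\|\mb w\|^2\|\mb a_k\|^2$ bound) and the $\dot h_\mu$-curvature part of size $O(n^{3/2}|\beta_k|)$, verify the Bernstein moment condition for each, apply scalar Bernstein to the two halves at level $t/2$ (whence the prefactor $4$), and use $\mu\le n^{-1/2}$ to fold the curvature piece's exponent into the stated $512n^2+32n\mu t$ form. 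The only deviation is cosmetic — the paper keeps $(\mb w^*\bar{\mb x}_k/\|\mb w\|-x_k(n)\|\mb w\|/q_n)^2$ intact instead of your $(a-b)^2\le 2a^2+2b^2$ split — and your acknowledged constant bookkeeping is exactly where the paper's $R_W=4n/\mu$, $\sigma_W^2=16n^2/\mu^2$, $R_V=8n^{3/2}$, $\sigma_V^2=64n^3$ come from.
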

\begin{proof}
See Page~\pageref{proof:pt_cn_curvature} under Section~\ref{proof:cn_point}. 
\end{proof}

\begin{proposition} \label{prop:concentration-gradient}
For every $\mb w \in \Gamma$, it holds that for any $t > 0$, 
\begin{align*}
\bb P\brac{\abs{\frac{\mb w^*\nabla g(\mb w; \mb X_0)
}{\norm{\mb w}{}}-\expect{\frac{\mb w^*\nabla g(\mb w; \mb X_0)
}{\norm{\mb w}{}}}}\geq t} \leq 2\exp\paren{-\frac{pt^2}{8n+4\sqrt{n}t}}.
\end{align*}
\end{proposition}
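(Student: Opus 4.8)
}
The plan is to write $\mathbf{w}^{*}\nabla g(\mathbf{w};\mathbf{X}_{0})/\norm{\mathbf{w}}{}$ as a normalized sum of $p$ i.i.d.\ scalars (one per column of $\mathbf{X}_{0}$), to bound the moment generating function of a single summand using the boundedness of $h_{\mu}' = \tanh(\cdot/\mu)$ together with the Bernoulli--Gaussian structure, and then to invoke a standard Bernstein inequality for sums of independent sub-exponential variables.

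First I would carry out the gradient computation. Write a generic column of $\mathbf{X}_{0}$ as $\mathbf{x} = (\bar{\mathbf{x}}; x_{n})$, splitting off the last coordinate. The Jacobian of $\mathbf{w}\mapsto\mathbf{q}(\mathbf{w}) = (\mathbf{w}; q_{n})$ with $q_{n} = \sqrt{1-\norm{\mathbf{w}}{}^{2}}$ is $(\mathbf{I}_{n-1}; -\mathbf{w}^{*}/q_{n})$, so by the chain rule (using $h_{\mu}'(z) = \tanh(z/\mu)$)
\begin{align*}
Z \;\doteq\; \frac{\mathbf{w}^{*}\nabla_{\mathbf{w}}\, h_{\mu}(\mathbf{q}(\mathbf{w})^{*}\mathbf{x})}{\norm{\mathbf{w}}{}}
&= \left(\innerprod{\tfrac{\mathbf{w}}{\norm{\mathbf{w}}{}}}{\bar{\mathbf{x}}} - \tfrac{\norm{\mathbf{w}}{}}{q_{n}}\,x_{n}\right)\tanh\!\paren{\tfrac{\mathbf{q}(\mathbf{w})^{*}\mathbf{x}}{\mu}} \\
&= \innerprod{\mathbf{c}}{\mathbf{x}}\,\tanh\!\paren{\tfrac{\mathbf{q}(\mathbf{w})^{*}\mathbf{x}}{\mu}},
\end{align*}
with $\mathbf{c} \doteq (\mathbf{w}/\norm{\mathbf{w}}{};\, -\norm{\mathbf{w}}{}/q_{n})\in\R^{n}$; hence $\mathbf{w}^{*}\nabla g(\mathbf{w};\mathbf{X}_{0})/\norm{\mathbf{w}}{} = \tfrac{1}{p}\sum_{k=1}^{p} Z_{k}$ with the $Z_{k}$ i.i.d.\ copies of $Z$ (we assume $\mathbf{w}\neq\mathbf{0}$, the ratio being otherwise undefined). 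The key quantitative fact is that on $\Gamma$ one has $q_{n} > \tfrac{1}{2\sqrt{n}}$ and $\norm{\mathbf{w}}{}^{2} < \tfrac{4n-1}{4n}$, whence $\norm{\mathbf{c}}{}^{2} = 1 + \norm{\mathbf{w}}{}^{2}/q_{n}^{2} < 4n$ and $\norm{\mathbf{c}}{\infty} < 2\sqrt{n}$.

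Next I would control $Z$. Because $\abs{\tanh}\le 1$ we have $\abs{Z}\le\abs{\innerprod{\mathbf{c}}{\mathbf{x}}}$ pointwise, so it suffices to control the linear functional $\innerprod{\mathbf{c}}{\mathbf{x}} = \sum_{i} c_{i}\Omega_{i}V_{i}$. Each $\Omega_{i}V_{i}\sim\mathrm{BG}(\theta)$ satisfies $\expect{e^{\lambda\Omega_{i}V_{i}}} = 1-\theta+\theta e^{\lambda^{2}/2}\le e^{\lambda^{2}/2}$, so by independence $\expect{e^{\lambda\innerprod{\mathbf{c}}{\mathbf{x}}}}\le e^{\norm{\mathbf{c}}{}^{2}\lambda^{2}/2}\le e^{2n\lambda^{2}}$ (equivalently: conditioned on the support $S$ of $\mathbf{\Omega}$, $\innerprod{\mathbf{c}}{\mathbf{x}}\mid S\sim\mc N(0,\norm{\mathbf{c}_{S}}{}^{2})$ with $\norm{\mathbf{c}_{S}}{}^{2}\le\norm{\mathbf{c}}{}^{2} < 4n$). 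Therefore $Z$, and after centering $Z-\expect{Z}$ (using $\abs{\expect{Z}}\le\expect{\abs{\innerprod{\mathbf{c}}{\mathbf{x}}}}\le\sqrt{\theta}\,\norm{\mathbf{c}}{} = O(\sqrt{n})$), is sub-Gaussian with variance proxy $O(n)$; a fortiori it is sub-exponential with variance proxy $O(n)$ and scale $O(\sqrt{n})$ (the scale visible from $\norm{\mathbf{c}}{\infty}$, and this Bernstein-type form presumably stated for uniformity with Proposition~\ref{prop:concentration-hessian-negative}). Feeding these parameters into the standard Bernstein inequality for $\tfrac{1}{p}\sum_{k}(Z_{k}-\expect{Z_{k}})$ and tracking constants yields the claimed tail $2\exp\!\paren{-pt^{2}/(8n+4\sqrt{n}\,t)}$; one may equally cite a generic Bernstein lemma from the appendix once $\norm{Z-\expect{Z}}{\psi_{1}} = O(\sqrt{n})$ is in hand.

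I do not expect a genuine obstacle here: once the chain-rule formula for $Z$ and the coefficient-norm bound $\norm{\mathbf{c}}{}^{2} < 4n$ (which is precisely where membership $\mathbf{w}\in\Gamma$, through $q_{n} > \tfrac{1}{2\sqrt{n}}$, is used) are in place, the estimate is routine Bernstein bookkeeping. The only mild care needed is in passing from the linear functional $\innerprod{\mathbf{c}}{\mathbf{x}}$ to the product $Z$ --- handled by $\abs{\tanh}\le 1$, which decouples the two $\mathbf{x}$-dependent factors at the cost of a harmless constant --- and in the centering step. The real technical weight in this part of the paper lies elsewhere: in the analogous but messier Hessian estimate (Proposition~\ref{prop:concentration-hessian-negative}), where the $1/\mu$ scaling and squared $\tanh'$ terms force a genuine Bernstein (rather than merely nominal sub-Gaussian) analysis, and in the subsequent uniformization of all three pointwise estimates over $\mathbf{w}\in\Gamma$ via Lipschitz/net arguments.
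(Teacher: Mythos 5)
Your proposal is correct and follows essentially the same route as the paper: bound $\abs{Z_k}\le\abs{\innerprod{\mb c}{\mb x_k}}$ via $\abs{\tanh}\le 1$ with $\norm{\mb c}{}=1/q_n(\mb w)\le 2\sqrt{n}$ on $\Gamma$, compare the Bernoulli--Gaussian linear form to a Gaussian (the paper does this through a moment-comparison lemma rather than the MGF, which is the same content), and apply the moment-control Bernstein inequality with $\sigma^2=4n$, $R=2\sqrt{n}$ to get exactly $2\exp\paren{-pt^2/(8n+4\sqrt{n}t)}$. The only trivial slip is attributing the Bernstein scale to $\norm{\mb c}{\infty}$ rather than to $\norm{\mb c}{}$, which changes nothing since both are at most $2\sqrt{n}$.
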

\begin{proof}
See Page~\pageref{proof:pt_cn_gradient} under Section~\ref{proof:cn_point}. 
\end{proof}

\begin{proposition}\label{prop:concentration-hessian-zero}
Suppose $0 <  \mu\leq \frac{1}{\sqrt{n}}$. For every $\mb w \in \Gamma \cap \set{\mb w: \norm{\mb w}{} \le 1/4}$, it holds that for any $t > 0$, 
\begin{align*}
\bb P\brac{\norm{\nabla^2 g(\mb w; \mb X_0) - \bb E\brac{\nabla^2 g(\mb w; \mb X_0)}}{} \geq t} \;\leq\; 4n\exp\paren{-\frac{p\mu^2t^2}{512n^2+32\mu n t}}. 
\end{align*} 
\end{proposition}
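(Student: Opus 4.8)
The plan is to view the Hessian as a normalized sum of i.i.d.\ random matrices and apply a Bernstein-type inequality for sums of independent random matrices. Write
\[
\nabla^2 g(\mb w;\mb X_0) \;=\; \frac1p\sum_{k=1}^p \mb Z_k, \qquad \mb Z_k \;\doteq\; \nabla^2_{\mb w} h_\mu\paren{\mb q(\mb w)^*(\mb x_0)_k},
\]
a sum of $p$ i.i.d.\ symmetric $(n-1)\times(n-1)$ matrices, and control $\norm{\frac1p\sum_k(\mb Z_k-\expect{\mb Z_k})}{}$ directly. The dimensional prefactor $4n$ in the claim is precisely the fingerprint of a matrix Bernstein bound (ambient dimension $n-1\le n$, doubled for the two-sided estimate, with harmless slack), in contrast to the purely scalar Bernstein bound used for the quadratic-form statement in Proposition~\ref{prop:concentration-hessian-negative}.

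First I would record the explicit Hessian. With $\mb q(\mb w)=(\mb w,q_n)$, $q_n=\sqrt{1-\norm{\mb w}{}^2}$, Jacobian $D\mb q(\mb w)=[\mb I_{n-1};\,-\mb w^*/q_n]$ and $\nabla^2_{\mb w} q_n = -q_n^{-1}\mb I - q_n^{-3}\mb w\mb w^*$, the chain rule gives, writing $\mb x=(\bar{\mb x},x_n)$ and $z=\mb q(\mb w)^*\mb x$,
\[
\mb Z_k \;=\; h_\mu''(z_k)\,(D\mb q)^*\mb x_k\mb x_k^*(D\mb q) \;+\; h_\mu'(z_k)\,x_{k,n}\paren{-q_n^{-1}\mb I - q_n^{-3}\mb w\mb w^*}.
\]
Using $0\le h_\mu''(z)\le 1/\mu$, $\abs{h_\mu'(z)}\le 1$, and the restriction $\norm{\mb w}{}\le 1/4 \Rightarrow q_n\ge \sqrt{15}/4$, this yields a per-term operator-norm bound of the form $\norm{\mb Z_k}{}\le \tfrac{C_1}{\mu}\norm{\mb x_k}{}^2 + C_2\abs{x_{k,n}}$ (the second term being lower order since $\mu\le 1/\sqrt n$), and a second-moment bound $\norm{\expect{\mb Z_k^2}}{}\le \expect{\norm{\mb Z_k}{}^2}\le \tfrac{C_3}{\mu^2}\expect{\norm{\mb x_k}{}^4}$. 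Since $\mb x_k\sim_{i.i.d.}\mathrm{BG}(\theta)$ we have $\expect{\norm{\mb x_k}{}^2}=n\theta$ and $\expect{\norm{\mb x_k}{}^4}=O(n^2)$, producing a variance proxy $\sigma^2=O(n^2/\mu^2)$ (matching the $512n^2$ in the denominator) and a sub-exponential scale $R=O(n/\mu)$ (matching the $32\mu n t$); the $\norm{\mb w}{}\le 1/4$ hypothesis is used only to keep every $q_n^{-1}$ factor bounded by an absolute constant.

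The one genuine subtlety — and the main obstacle — is that $\norm{\mb Z_k}{}$ is \emph{not} bounded almost surely, because $\norm{\mb x_k}{}^2$ is a sub-exponential (not bounded) random variable, so the plain bounded-difference matrix Bernstein does not apply verbatim. I would deal with this in one of two equivalent ways: (i) invoke a matrix Bernstein inequality for the unbounded case, verifying the factorial moment domination $\expect{(\mb Z_k-\expect{\mb Z_k})^m}\preceq \tfrac{m!}{2}R^{m-2}\mb A_k^2$ from the sub-exponential moment growth of $\norm{\mb x_k}{}^2$ and $\abs{x_{k,n}}$; or (ii) truncate onto the high-probability event $\set{\norm{\mb x_k}{}^2\le Cn\ \text{for all }k}$, apply the bounded matrix Bernstein there, and absorb the (exponentially small) complement into the failure probability. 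The absence of any additive term in the stated estimate suggests route (i) is what is used. Once $\sigma^2$ and $R$ are in hand, plugging them into the chosen inequality and tracking the constants down to the claimed $512$ and $32$ is the routine remaining step.
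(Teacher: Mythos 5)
Your proposal is correct and follows essentially the same route as the paper: the paper's Lemma~\ref{lem:mc_bernstein_matrix} is exactly the moment-controlled (sub-exponential) matrix Bernstein inequality of your route (i), with the factorial moment condition verified via the Gaussian-comparison Lemma~\ref{lem:U-moments-bound} and $\chi^2$ moment bounds, yielding $R = O(n/\mu)$ and $\sigma^2 = O(n^2/\mu^2)$ as you predict. The only cosmetic difference is that the paper splits $\mb Z_k$ into the $\ddot h_\mu$-term $\mb W_k$ and the $\dot h_\mu$-term $\mb V_k$, applies the inequality to each at level $t/2$, and absorbs the $\mb V$-part's bound using $\mu \le 1/\sqrt{n}$, rather than bounding the full $\mb Z_k$ in one shot.
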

\begin{proof}
See Page~\pageref{proof:pt_cn_strcvx} under Section~\ref{proof:cn_point}. 
\end{proof}

The next three propositions provide the desired Lipschitz results. 

\begin{proposition}[Hessian Lipschitz]\label{prop:lip-hessian-negative}
Fix any $\rconcave \in \paren{0, 1}$. Over the set $\Gamma \cap \set{\mb w: \norm{\mb w}{} \ge \rconcave}$, $\tfrac{\mb w^* \nabla^2 g(\mb w; \mb X_0) \mb w}{\norm{\mb w}{}^2}$ is $\Lconcave$-Lipschitz with 
\begin{align*}
\Lconcave \le \frac{16n^3}{\mu^2} \norm{\mb X_0}{\infty}^3 + \frac{8n^{3/2}}{\mu \rconcave} \norm{\mb X_0}{\infty}^2 + \frac{48 n^{5/2} }{\mu} \norm{\mb X_0}{\infty}^2 + 96 n^{5/2} \norm{\mb X_0}{\infty}.
\end{align*}
\end{proposition}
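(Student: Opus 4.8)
The plan is to bound the Euclidean gradient $\nabla_{\mb w}\phi(\mb w)$ of the scalar function $\phi(\mb w) \doteq \tfrac{\mb w^* \nabla^2 g(\mb w;\mb X_0)\mb w}{\norm{\mb w}{}^2} = \wh{\mb w}^* \nabla^2 g(\mb w;\mb X_0)\wh{\mb w}$, with $\wh{\mb w}\doteq \mb w/\norm{\mb w}{}$, uniformly over $\Gamma\cap\set{\norm{\mb w}{}\ge\rconcave}$, and then read off $\Lconcave$. (The region is an annulus rather than convex, but the gradient estimate in fact holds on $\set{\norm{\mb w}{}\ge\rconcave/2}$ up to doubling a few constants, which is all that the discretization step in the proof of Theorem~\ref{thm:geometry_orth} needs.) Differentiating $\phi$ produces two groups of terms: a reparametrization term from differentiating the two copies of $\wh{\mb w}$, equal to $\tfrac{2}{\norm{\mb w}{}}\paren{\mb I-\wh{\mb w}\wh{\mb w}^*}\nabla^2 g(\mb w;\mb X_0)\wh{\mb w}$, which is the sole source of the $\rconcave^{-1}$ factor, and a third-derivative term $\nabla^3 g(\mb w;\mb X_0)[\wh{\mb w},\wh{\mb w},\cdot]$.

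The computational core is the chain rule applied to $z_k(\mb w)\doteq \mb q(\mb w)^*(\mb x_0)_k$, together with the uniform bounds $\abs{h_\mu'}\le 1$, $\abs{h_\mu''}\le 1/\mu$, $\abs{h_\mu'''}\le 2/\mu^2$. Here $\nabla z_k$ equals the first $n-1$ coordinates of $(\mb x_0)_k$ minus $\tfrac{(\mb x_0)_{nk}}{q_n}\mb w$, and the crucial algebraic fact is that the reparametrization Hessian $\nabla^2 z_k = (\mb x_0)_{nk}\paren{-\tfrac1{q_n}\mb I - \tfrac1{q_n^3}\mb w\mb w^*}$ has $\wh{\mb w}$ as an eigenvector: using $q_n^2+\norm{\mb w}{}^2=1$ it collapses to $\nabla^2 z_k\,\wh{\mb w} = -\tfrac{(\mb x_0)_{nk}}{q_n^3}\wh{\mb w}$. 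Hence the $h_\mu'$-part of $\nabla^2 g(\mb w;\mb X_0)\wh{\mb w}$ is a scalar multiple of $\wh{\mb w}$ and is killed by $\mb I-\wh{\mb w}\wh{\mb w}^*$; this is exactly what keeps a term of order $\rconcave^{-1} n^{3/2}\norm{\mb X_0}{\infty}$ out of $\Lconcave$. An entirely analogous collapse, again via $q_n^2+\norm{\mb w}{}^2=1$, reduces $\nabla^3 z_k[\mb u,\mb u,\cdot]$, for any unit $\mb u$, to a vector of norm at most $3\abs{(\mb x_0)_{nk}}\norm{\mb w}{}/q_n^5$. Combining with $q_n\ge\tfrac1{2\sqrt n}$ on $\Gamma$ and $\norm{\mb w}{}<1$ gives the pointwise estimates $\norm{\nabla z_k}{}\le 2n\norm{\mb X_0}{\infty}$, $\norm{\nabla^2 z_k}{} = \abs{(\mb x_0)_{nk}}/q_n^3 \le 8n^{3/2}\norm{\mb X_0}{\infty}$, and $\norm{\nabla^3 z_k[\mb u,\mb u,\cdot]}{}\le 96 n^{5/2}\norm{\mb X_0}{\infty}$.

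It then remains to assemble. Expanding $\nabla^3 g(\mb w;\mb X_0)$ by the chain rule gives one $h_\mu'''$ contribution bounded by $\tfrac{2}{\mu^2}\norm{\nabla z_k}{}^3 \le \tfrac{16 n^3}{\mu^2}\norm{\mb X_0}{\infty}^3$, three $h_\mu''$ contributions together bounded by $\tfrac{3}{\mu}\norm{\nabla z_k}{}\norm{\nabla^2 z_k}{}\le\tfrac{48 n^{5/2}}{\mu}\norm{\mb X_0}{\infty}^2$, and one $h_\mu'$ contribution bounded by $\norm{\nabla^3 z_k[\wh{\mb w},\wh{\mb w},\cdot]}{}\le 96 n^{5/2}\norm{\mb X_0}{\infty}$ --- these are the first, third, and fourth terms of $\Lconcave$. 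For the reparametrization term, after the cancellation only the $h_\mu''$-part survives the projector, so its norm is at most $\tfrac{2}{\mu\norm{\mb w}{}}\abs{\wh{\mb w}^*\nabla z_k}\,\norm{\nabla z_k}{}\le\tfrac{8 n^{3/2}}{\mu\rconcave}\norm{\mb X_0}{\infty}^2$ on $\norm{\mb w}{}\ge\rconcave$ --- the second term of $\Lconcave$. Summing the four pieces gives the claim. I expect the main difficulty to be purely organizational: correctly enumerating the several terms of $\nabla^3 g$ and of $\nabla^2 q_n(\mb w),\nabla^3 q_n(\mb w)$ and tracking the dimension factors each one accrues ($\sqrt n$ from bounding $\ell^2$ by $\ell^\infty$, and powers of $q_n^{-1}\le 2\sqrt n$), while keeping the $q_n^2+\norm{\mb w}{}^2=1$ simplifications visible so that the spurious $\rconcave^{-1}$-term cancels. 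No probabilistic input enters --- this is a deterministic, pointwise estimate, which is precisely why $\norm{\mb X_0}{\infty}$ (rather than its typical value) appears throughout.
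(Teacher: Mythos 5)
Your proposal is correct in its core computations, but it takes a genuinely different route from the paper. The paper never differentiates the quadratic form: it writes $\tfrac{\mb w^*\nabla^2 g(\mb w;\mb X_0)\mb w}{\norm{\mb w}{}^2}$ per-sample as $\ddot h_\mu\paren{\mb q(\mb w)^*\mb x_k}\,t_{\mb x_k}^2(\mb w) \pm \dot h_\mu\paren{\mb q(\mb w)^*\mb x_k}\,x_k(n)/q_n^3(\mb w)$ and then invokes its generic "bounded $\times$ Lipschitz" product rule (Lemma~\ref{lem:Lip-combined}) together with separately proved bounds and Lipschitz constants for $\dot h_\mu,\ddot h_\mu$ (Lemma~\ref{lem:lip-h-mu}), for $t_{\mb x}$ and $t_{\mb x}^2$ (Lemma~\ref{lem:lip-g}, where the $1/\rconcave$ enters via $\mb w/\norm{\mb w}{}$), and a scalar mean-value bound for $a\mapsto a/(1-a^2)^{3/2}$ giving the $96n^{5/2}\norm{\mb X_0}{\infty}$ term; since these are difference estimates (the radial pieces depend on $\mb w$ only through $\norm{\mb w}{}$, which is $1$-Lipschitz), they hold globally on the non-convex annulus with no path argument. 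You instead bound $\norm{\nabla\phi}{}$ with $\phi=\wh{\mb w}^*\nabla^2 g\,\wh{\mb w}$, splitting into $\nabla^3 g[\wh{\mb w},\wh{\mb w},\cdot]$ plus the normalization term, and your bookkeeping checks out: $\nabla^2 z_k\,\wh{\mb w}=-\tfrac{x_k(n)}{q_n^3}\wh{\mb w}$ and the analogous collapse $\nabla^3 z_k[\wh{\mb w},\wh{\mb w},\cdot]=-\tfrac{3x_k(n)\norm{\mb w}{}}{q_n^5}\wh{\mb w}$ reproduce exactly the paper's first, third and fourth terms, and the projector cancellation isolates the single $1/\rconcave$ term (for its stated coefficient you should additionally use $(\mb I-\wh{\mb w}\wh{\mb w}^*)\nabla z_k=(\mb I-\wh{\mb w}\wh{\mb w}^*)\ol{\mb x}_k$, so this factor is only $\sqrt{n}\norm{\mb X_0}{\infty}$; with the crude $\norm{\nabla z_k}{}\le 2n\norm{\mb X_0}{\infty}$ the constant $8n^{3/2}$ is not quite reached). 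Your approach buys a one-shot calculation with a structural explanation of why only one term carries $\rconcave^{-1}$; the paper's buys reusable lemmas shared with Propositions~\ref{prop:lip-gradient} and~\ref{prop:lip-hessian-zero} and, more importantly, a Lipschitz bound valid between arbitrary points of the annulus. Your gradient bound only controls differences along segments staying in the region, and your proposed fix (extend to $\norm{\mb w}{}\ge\rconcave/2$) does not cover pairs whose connecting segment passes near the origin; as you note this is harmless for the $\eps$-net use in Theorem~\ref{thm:geometry_orth}, but to get the proposition verbatim you would need a small extra step, e.g.\ joining far-apart points by a radial-plus-arc path inside the annulus at the cost of a factor $\pi/2$ in $\Lconcave$.
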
 
\begin{proof}
See Page~\pageref{proof:cn_lips_curvature} under Section~\ref{proof:cn_lips}. 
\end{proof}

\begin{proposition}[Gradient Lipschitz]\label{prop:lip-gradient}
Fix any $r_g \in \paren{0, 1}$. Over the set $\Gamma \cap \set{\mb w: \norm{\mb w}{} \ge r_g}$, $\tfrac{\mb w^* \nabla g( \mb w; \mb X_0 )}{\norm{\mb w}{}}$ is $L_g$-Lipschitz with 
\begin{align*}
L_g \le \frac{2 \sqrt{n} \norm{\mb X_0}{\infty}}{r_g} + 8 n^{3/2} \norm{\mb X_0}{\infty} + \frac{4 n^2}{\mu} \norm{\mb X_0}{\infty}^2.
\end{align*}
\end{proposition}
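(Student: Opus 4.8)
Write $q_n\paren{\mb w}\doteq\sqrt{1-\norm{\mb w}{}^2}$ for the last coordinate of $\mb q\paren{\mb w}$, $\hat{\mb w}\doteq\mb w/\norm{\mb w}{}$, and let $G\paren{\mb w}\doteq \tfrac{\mb w^*\nabla g\paren{\mb w;\mb X_0}}{\norm{\mb w}{}}=\innerprod{\hat{\mb w}}{\nabla g\paren{\mb w;\mb X_0}}$ be the scalar function in the statement. I will only use that $h_\mu'\paren{z}=\tanh\paren{z/\mu}$ and $h_\mu''\paren{z}=\tfrac1\mu\mathrm{sech}^2\paren{z/\mu}$ (so $\abs{h_\mu'}\le1$, $0\le h_\mu''\le1/\mu$ on $\R$) and that $q_n\paren{\mb w}\ge\tfrac1{2\sqrt n}$ on $\Gamma$. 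First, with $\mb J\paren{\mb w}\doteq\mathrm{D}_{\mb w}\mb q\paren{\mb w}\in\R^{n\times(n-1)}$, differentiating $\mb q\paren{\mb w}^*\mb q\paren{\mb w}=1$ gives $\mb J\paren{\mb w}^*\mb J\paren{\mb w}=\mb I+q_n\paren{\mb w}^{-2}\mb w\mb w^*$, hence $\norm{\mb J\paren{\mb w}}{}=1/q_n\paren{\mb w}$; a one-line computation (using $q_n^2+\norm{\mb w}{}^2=1$) shows $\nabla^2_{\mb w}\paren{\mb q\paren{\mb w}^*\paren{\mb x_0}_k}$ has operator norm exactly $\abs{[\paren{\mb x_0}_k]_n}\,q_n\paren{\mb w}^{-3}$. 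By the chain rule, $\nabla g\paren{\mb w;\mb X_0}=\tfrac1p\sum_k h_\mu'\paren{\mb q\paren{\mb w}^*\paren{\mb x_0}_k}\,\mb J\paren{\mb w}^*\paren{\mb x_0}_k$ and $\nabla^2 g\paren{\mb w;\mb X_0}=\tfrac1p\sum_k\brac{h_\mu''(\cdot)\,\mb J^*\paren{\mb x_0}_k\paren{\mb x_0}_k^*\mb J+h_\mu'(\cdot)\,\nabla^2_{\mb w}\paren{\mb q^*\paren{\mb x_0}_k}}$.

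Next, from $\mathrm{D}_{\mb w}\hat{\mb w}=\norm{\mb w}{}^{-1}\paren{\mb I-\hat{\mb w}\hat{\mb w}^*}$ and the product rule,
\begin{align*}
\nabla_{\mb w}G\paren{\mb w}\;=\;\frac{1}{\norm{\mb w}{}}\paren{\mb I-\hat{\mb w}\hat{\mb w}^*}\nabla g\paren{\mb w;\mb X_0}\;+\;\nabla^2 g\paren{\mb w;\mb X_0}\,\hat{\mb w}.
\end{align*}
The crucial point is $\paren{\mb I-\hat{\mb w}\hat{\mb w}^*}\mb w=\mb 0$, which annihilates the $q_n^{-1}\mb w$-component of every $\mb J\paren{\mb w}^*\paren{\mb x_0}_k$: the first term above equals $\norm{\mb w}{}^{-1}\paren{\mb I-\hat{\mb w}\hat{\mb w}^*}\tfrac1p\sum_k h_\mu'(\cdot)\,\wt{\mb x}_k$, where $\wt{\mb x}_k\in\R^{n-1}$ collects the first $n-1$ coordinates of $\paren{\mb x_0}_k$, so on $\set{\norm{\mb w}{}\ge r_g}$ its norm is at most $r_g^{-1}\max_k\norm{\wt{\mb x}_k}{}\le \sqrt n\,r_g^{-1}\norm{\mb X_0}{\infty}$ (without the cancellation it would carry an extra $\norm{\mb w}{}/q_n\paren{\mb w}=O\paren{\sqrt n}$). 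For the second term, on $\Gamma$ one has $\norm{\mb J\paren{\mb w}}{}\le2\sqrt n$ and $q_n\paren{\mb w}^{-3}\le8n^{3/2}$, hence $\norm{\mb J^*\paren{\mb x_0}_k}{}^2\le4n\norm{\paren{\mb x_0}_k}{}^2\le4n^2\norm{\mb X_0}{\infty}^2$ and $\norm{\nabla^2_{\mb w}\paren{\mb q^*\paren{\mb x_0}_k}}{}\le8n^{3/2}\norm{\mb X_0}{\infty}$, which with $\abs{h_\mu'}\le1$, $h_\mu''\le1/\mu$ yields $\norm{\nabla^2 g\paren{\mb w;\mb X_0}}{}\le\tfrac{4n^2}{\mu}\norm{\mb X_0}{\infty}^2+8n^{3/2}\norm{\mb X_0}{\infty}$ uniformly on $\Gamma$. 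Thus $\norm{\nabla_{\mb w}G\paren{\mb w}}{}$ is bounded by the three displayed quantities (up to absolute constants) at every point of $\Gamma\cap\set{\norm{\mb w}{}\ge r_g}$. To obtain the Lipschitz statement I would estimate $G\paren{\mb w}-G\paren{\mb w'}=\paren{\hat{\mb w}-\hat{\mb w}'}^*\nabla g\paren{\mb w;\mb X_0}+\hat{\mb w}'^*\brac{\nabla g\paren{\mb w;\mb X_0}-\nabla g\paren{\mb w';\mb X_0}}$: the second increment is $\le\norm{\mb w-\mb w'}{}\sup_{\Gamma}\norm{\nabla^2 g}{}$ (legitimate since $\Gamma$ is convex, giving the last two terms of $L_g$ with constant $1$), and the first increment is controlled via $\norm{\hat{\mb w}-\hat{\mb w}'}{}\le\tfrac{2}{r_g}\norm{\mb w-\mb w'}{}$ after using $\paren{\mb I-\hat{\mb w}\hat{\mb w}^*}\mb w=\mb 0$ to replace $\nabla g\paren{\mb w;\mb X_0}$ by the $\sqrt n\norm{\mb X_0}{\infty}$-bounded vector $\paren{\mb I-\hat{\mb w}\hat{\mb w}^*}\tfrac1p\sum_k h_\mu'(\cdot)\wt{\mb x}_k$; collecting the pieces produces $L_g$ of the stated form.

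I do not expect a serious obstacle: the proposition is a calculus estimate. The two points that genuinely require care are (i) the identity $\paren{\mb I-\hat{\mb w}\hat{\mb w}^*}\mb w=\mb 0$, which is exactly what keeps the $1/r_g$-term free of a spurious $\sqrt n$ factor; and (ii) the fact that the domain $\Gamma\cap\set{\norm{\mb w}{}\ge r_g}$ is a ball with a small ball removed, hence \emph{not} convex, so a pointwise gradient bound does not upgrade to a Lipschitz bound along straight segments — one must either estimate increments directly as above or invoke the dimension-free quasiconvexity of annuli. Everything else — the explicit constants $2\sqrt n$, $8n^{3/2}$, $4n^2/\mu$ — falls out of $\abs{h_\mu'}\le1$, $h_\mu''\le1/\mu$, $q_n\paren{\mb w}\ge1/(2\sqrt n)$ on $\Gamma$, and $\norm{\paren{\mb x_0}_k}{}\le\sqrt n\norm{\mb X_0}{\infty}$.
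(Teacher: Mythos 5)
Your route is genuinely different from the paper's. The paper never forms $\nabla^2 g$ or differentiates $G(\mb w)\doteq \mb w^*\nabla g(\mb w;\mb X_0)/\norm{\mb w}{}$: it writes $G(\mb w)=\tfrac1p\sum_k \dot h_\mu\paren{\mb q(\mb w)^*\mb x_k}\,t_{\mb x_k}(\mb w)$ with $t_{\mb x}(\mb w)=\mb w^*\ol{\mb x}/\norm{\mb w}{}-x_n\norm{\mb w}{}/q_n(\mb w)$, establishes boundedness and Lipschitz estimates for each factor by direct increment estimates (Lemmas~\ref{lem:lip-h-mu} and~\ref{lem:lip-g}; the only mean-value arguments are applied to scalar functions of $\norm{\mb w}{}$, so the non-convexity of the annulus never arises), and combines them with the product rule of Lemma~\ref{lem:Lip-combined}. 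Your handling of the second increment --- bounding $\hat{\mb w}'^*\brac{\nabla g(\mb w)-\nabla g(\mb w')}$ by $\sup_\Gamma\norm{\nabla^2 g}{}\,\norm{\mb w-\mb w'}{}$ along the segment, which stays in the convex ball $\Gamma$ --- is correct, and your uniform bound $\sup_\Gamma\norm{\nabla^2 g}{}\le \tfrac{4n^2}{\mu}\norm{\mb X_0}{\infty}^2+8n^{3/2}\norm{\mb X_0}{\infty}$ reproduces the last two terms of $L_g$ exactly; this is a legitimate alternative to the paper's scalar bookkeeping, and your point (ii) about not upgrading a pointwise gradient bound on a non-convex set is well taken.

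The step that does not hold as written is the ``replacement'' in the first increment: $(\hat{\mb w}-\hat{\mb w}')^*\nabla g(\mb w)$ is not equal to $(\hat{\mb w}-\hat{\mb w}')^*(\mb I-\hat{\mb w}\hat{\mb w}^*)\tfrac1p\sum_k\dot h_\mu(\cdot)\,\wt{\mb x}_k$, because $\hat{\mb w}-\hat{\mb w}'$ is not orthogonal to $\hat{\mb w}$; the identity $(\mb I-\hat{\mb w}\hat{\mb w}^*)\mb w=\mb 0$ kills the $\mb w$-component in your formula for $\nabla G$, but not inside this inner product. The discarded cross term equals $\tfrac12\norm{\hat{\mb w}-\hat{\mb w}'}{}^2\,G(\mb w)$, which is nonzero; it is second order and can be absorbed using $\norm{\hat{\mb w}-\hat{\mb w}'}{}\le\min\Brac{2,\,\tfrac{2}{r_g}\norm{\mb w-\mb w'}{}}$ together with $\abs{G(\mb w)}\le 3\sqrt n\norm{\mb X_0}{\infty}$, but then the $1/r_g$ term acquires a larger absolute constant. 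Alternatively, you can skip the cancellation entirely: the component $x_{k,n}\mb w/q_n(\mb w)$ of $\nabla g$ has norm at most $2\sqrt n\norm{\mb X_0}{\infty}$, the same order as $\norm{\wt{\mb x}_k}{}$, so your parenthetical claim of an ``extra $O(\sqrt n)$'' is not right --- the loss is only a constant factor, and $\norm{\nabla g(\mb w)}{}\le3\sqrt n\norm{\mb X_0}{\infty}$ bounds the first increment by $\tfrac{6\sqrt n}{r_g}\norm{\mb X_0}{\infty}\norm{\mb w-\mb w'}{}$. Either repair yields $L_g$ of the stated form but with a worse absolute constant on the $\sqrt n/r_g$ term (roughly $6$--$8$ in place of $2$); this suffices for every downstream use of the proposition, but to get the literal constant $2\sqrt n\norm{\mb X_0}{\infty}/r_g$ you must pair $\hat{\mb w}-\hat{\mb w}'$ with $\ol{\mb x}_k$ only and route the $x_n\norm{\mb w}{}/q_n$ variation through the scalar map $\norm{\mb w}{}\mapsto\norm{\mb w}{}/q_n(\mb w)$, which is exactly what the paper's Lemma~\ref{lem:lip-g} does.
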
 
\begin{proof}
See Page~\pageref{proof:cn_lips_gradient} under Section~\ref{proof:cn_lips}. 
\end{proof}

\begin{proposition}[Lipschitz for Hessian around zero]\label{prop:lip-hessian-zero}
Fix any $\rconvex \in \paren{0, \frac{1}{2}}$. Over the set $\Gamma \cap \set{\mb w: \norm{\mb w}{} \le \rconvex}$, $\nabla^2 g( \mb w; \mb X_0)$ is $\Lconvex$-Lipschitz with  
\begin{align*}
\Lconvex \;\le\;\frac{4n^2}{\mu^2} \norm{\mb X_0}{\infty}^3+\frac{4n}{\mu}\norm{\mb X_0}{\infty}^2 +  \frac{8\sqrt{2}\sqrt{n}}{\mu}\norm{\mb X_0}{\infty}^2 + 8\norm{\mb X_0}{\infty}. 
\end{align*}
\end{proposition}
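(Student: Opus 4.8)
The plan is to reduce the Lipschitz claim to a pointwise bound on the ``third derivative'' of $g$. The set $\Gamma \cap \Brac{\mb w : \norm{\mb w}{} \le \rconvex}$ is convex, being the intersection of two Euclidean balls, so by the mean value inequality it suffices to bound, uniformly over this set and over all unit directions $\mb v$, the operator norm of the matrix $\frac{d}{dt}\big|_{t=0} \nabla^2 g(\mb w + t\mb v; \mb X_0)$, and then integrate along line segments. Writing $g(\mb w; \mb X_0) = \frac1p \sum_{k=1}^p h_\mu(z_k(\mb w))$ with $z_k(\mb w) \doteq \mb q(\mb w)^* (\mb x_0)_k$, the chain rule gives
\begin{align*}
\nabla^2_{\mb w}\, h_\mu(z_k) = h_\mu''(z_k)\, \nabla z_k \nabla z_k^* + h_\mu'(z_k)\, \nabla^2 z_k,
\end{align*}
and differentiating once more produces only the finitely many product-rule terms built from $h_\mu'''(z_k), h_\mu''(z_k), h_\mu'(z_k)$ and from $\nabla z_k, \nabla^2 z_k, \nabla^3 z_k$. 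The proof then amounts to bounding the operator norm of each such term and summing.

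Three families of elementary estimates feed into this. First, since $h_\mu'(z) = \tanh(z/\mu)$, one has the uniform bounds $\abs{h_\mu'} \le 1$, $\abs{h_\mu''} \le 1/\mu$, $\abs{h_\mu'''} \le 1/\mu^2$. Second --- and this is precisely why the proposition restricts to the small ball $\norm{\mb w}{} \le \rconvex < 1/2$ instead of all of $\Gamma$ --- on this ball $q_n(\mb w) = \sqrt{1 - \norm{\mb w}{}^2} \ge \sqrt{3}/2$, hence $1/q_n \le 2$ and the first three derivatives of $\mb w \mapsto q_n(\mb w)$ (namely $\nabla q_n = -\mb w/q_n$, $\nabla^2 q_n = -\frac1{q_n}\mb I - \frac1{q_n^3}\mb w\mb w^*$, and $\nabla^3 q_n$) all have operator norm $O(1)$ with explicit constants; near the boundary of $\Gamma$ the factor $1/q_n$ blows up and no such bound is available. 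Third, since $z_k = \innerprod{\mb w}{\bar{\mb x}_k} + q_n(\mb w)\, [(\mb x_0)_k]_n$ with $\bar{\mb x}_k$ the first $n-1$ coordinates of $(\mb x_0)_k$, whose entries are all bounded by $\norm{\mb X_0}{\infty}$, we get that $z_k$ and $\nabla z_k = \bar{\mb x}_k + [(\mb x_0)_k]_n \nabla q_n$ are bounded by a quantity of order $\sqrt n\, \norm{\mb X_0}{\infty}$, while $\nabla^2 z_k = [(\mb x_0)_k]_n \nabla^2 q_n$ and $\nabla^3 z_k = [(\mb x_0)_k]_n \nabla^3 q_n$ are bounded by a quantity of order $\norm{\mb X_0}{\infty}$.

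Assembling these, the term $h_\mu'''(z_k)\innerprod{\nabla z_k}{\mb v}\,\nabla z_k \nabla z_k^*$ yields the dominant contribution, of order $\mu^{-2}\norm{\mb X_0}{\infty}^3$ times a power of $n$ coming from a crude operator-norm estimate for the resulting third-order tensor; the cross terms $h_\mu''(z_k)(\nabla^2 z_k\,\mb v)\nabla z_k^*$, $h_\mu''(z_k)\nabla z_k(\nabla^2 z_k\,\mb v)^*$ and $h_\mu''(z_k)\innerprod{\nabla z_k}{\mb v}\nabla^2 z_k$ yield the two $\mu^{-1}\norm{\mb X_0}{\infty}^2$ terms; and $h_\mu'(z_k)\nabla^3 z_k[\mb v]$ yields the $O(\norm{\mb X_0}{\infty})$ term. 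Averaging over $k = 1, \dots, p$ only replaces each $z_k$-dependent quantity by its uniform bound, and collecting numerical constants gives the stated $\Lconvex$.

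There is no conceptual obstacle here; the main work --- and the only place something can go wrong --- is bookkeeping. Two points need care: (i) differentiating $\nabla^2 g$ cleanly through the reparametrization $\mb q(\mb w)$, tracking the nested $1/q_n$ factors and their own derivatives appearing in $\nabla^2 q_n$ and $\nabla^3 q_n$; and (ii) being consistent about which matrix/tensor norms are used when each product-rule term is passed to operator norm, since this is where the powers of $n$ in $\Lconvex$ are generated and mishandling or omitting a single cross term would break the estimate. The convexity of $\Gamma \cap \Brac{\mb w : \norm{\mb w}{} \le \rconvex}$, which legitimizes the segment-integration step, is immediate.
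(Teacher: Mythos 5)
Your strategy is sound, and it is genuinely different from the paper's. The paper never differentiates the Hessian: it writes $\nabla^2 g(\mb w;\mb X_0)=\frac1p\sum_k\big[\ddot h_\mu(\mb q(\mb w)^*\mb x_k)\,\mb \zeta_k(\mb w)\mb\zeta_k(\mb w)^*-\dot h_\mu(\mb q(\mb w)^*\mb x_k)\,\mb\Phi_k(\mb w)\big]$ and proves that each factor is bounded and Lipschitz on the ball (Lemmas~\ref{lem:lip-h-mu}, \ref{lem:lp-Phi}, \ref{lem:lp-zeta}), then invokes the ``bounded times Lipschitz'' product rule (Lemma~\ref{lem:Lip-combined}) and sums over $k$; this Lipschitz-algebra template needs no convexity of the domain and is exactly what the paper reuses for Propositions~\ref{prop:lip-hessian-negative} and \ref{prop:lip-gradient}, whose annular domains are not convex. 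You instead bound the directional derivative of $\nabla^2 g$ (the full product-rule expansion you list is complete) and integrate along segments, which is legitimate here precisely because $\Gamma\cap\Brac{\norm{\mb w}{}\le \rconvex}$ is just the ball of radius $\rconvex<1/2$, on which $1/q_n\le 2/\sqrt3$ and the first three derivatives of $q_n$ are $O(1)$; your $\nabla z_k$, $\nabla^2 z_k$ are literally the paper's $\mb\zeta_k$ and $-\mb\Phi_k$. Carried out, your bookkeeping gives a leading term of order $n^{3/2}\mu^{-2}\norm{\mb X_0}{\infty}^3$ rather than the stated $4n^2\mu^{-2}\norm{\mb X_0}{\infty}^3$, because you exploit that the Jacobian of $\mb w\mapsto\mb q(\mb w)$ is $O(1)$ on the small ball instead of the global $2\sqrt n$ Lipschitz constant the paper uses in Lemma~\ref{lem:lip-h-mu}; so you will not reproduce the stated four-term expression verbatim, but since the proposition only asserts an upper bound on $\Lconvex$, a smaller constant suffices. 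The only caution is the one you flag yourself: the estimate must be uniform in the direction $\mb v$ and over the closed ball, and none of the cross terms may be dropped; there is no conceptual gap.
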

\begin{proof}
See Page~\pageref{proof:cn_lips_strcvx} under Section~\ref{proof:cn_lips}. 
\end{proof}
Integrating the above pieces, Section~\ref{sec:proof_geometry_orth} provides a complete proof of Theorem~\ref{thm:geometry_orth}. 

\subsection{Extending to Complete Dictionaries} \label{sec:geo_results_comp}
As hinted in Section~\ref{sec:overview_geometry}, instead of proving things from scratch, we build on the results we have obtained for orthogonal dictionaries. In particular, we will work with the preconditioned data matrix 
\begin{align} \label{eq:precon_def}
\overline{\mb Y} \doteq \paren{\frac{1}{p \theta} \mb Y \mb Y^*}^{-1/2} \mb Y
\end{align}
and show that the function landscape $f\paren{\mb q; \overline{\mb Y}}$ looks qualitatively like that of orthogonal dictionaries (up to a global rotation), provided that $p$ is large enough.  

The next lemma shows $\overline{\mb Y}$ can be treated as being generated from an orthobasis with the same BG coefficients, plus small noise. 
\begin{lemma} \label{lem:pert_key_mag}
For any $\theta \in \paren{0, 1/2}$, suppose $\mb A_0$ is complete with condition number $\kappa\paren{\mb A_0}$ and $\mb X_0 \sim_{i.i.d.} \mathrm{BG}\paren{\theta}$. Provided $p \ge C\kappa^4\paren{\mb A_0} \theta n^2 \log (n \theta \kappa\paren{\mb A_0})$, one can write $\overline{\mb Y}$ as defined in~\eqref{eq:precon_def} as 
\begin{align*}
\overline{\mb Y} = \mb U \mb V^* \mb X_0 + \mb \Xi \mb X_0, 
\end{align*}
for a certain $\mb \Xi$ obeying $\norm{\mb \Xi}{} \le 20\kappa^4\paren{\mb A} \sqrt{\frac{\theta n \log p}{p}}$, with probability at least $1-p^{-8}$. Here $\mb U \mb \Sigma \mb V^* = \mathtt{SVD}\paren{\mb A_0}$, and $C$ is a positive numerical constant. 
\end{lemma}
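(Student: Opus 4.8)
The plan is to reduce the claim to a single operator-norm perturbation estimate for the symmetric preconditioning factor. Write $\mb M \doteq \tfrac{1}{p\theta}\mb Y\mb Y^*$, so that $\overline{\mb Y} = \mb M^{-1/2}\mb Y$ by definition of the preconditioner. Substituting $\mb Y = \mb A_0\mb X_0$,
\[
\mb M = \mb A_0\paren{\tfrac{1}{p\theta}\mb X_0\mb X_0^*}\mb A_0^* = \mb A_0\paren{\mb I + \mb\Delta}\mb A_0^*, \qquad \mb\Delta \doteq \tfrac{1}{p\theta}\mb X_0\mb X_0^* - \mb I,
\]
and $\expect{\mb\Delta} = \mb 0$ since $\expect{(\mb x_0)_k(\mb x_0)_k^*} = \theta\mb I$ under $\mb X_0\sim_{i.i.d.}\mathrm{BG}\paren{\theta}$. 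Using the SVD $\mb A_0 = \mb U\mb\Sigma\mb V^*$ and the identity $\paren{\mb A_0\mb A_0^*}^{-1/2}\mb A_0 = \mb U\mb\Sigma^{-1}\mb U^*\mb U\mb\Sigma\mb V^* = \mb U\mb V^*$, we are forced to set
\[
\mb\Xi \doteq \mb M^{-1/2}\mb A_0 - \mb U\mb V^* = \paren{\mb M^{-1/2} - \paren{\mb A_0\mb A_0^*}^{-1/2}}\mb A_0,
\]
which makes $\overline{\mb Y} = \mb M^{-1/2}\mb A_0\mb X_0 = \mb U\mb V^*\mb X_0 + \mb\Xi\mb X_0$ an exact identity (no rank argument for $\mb X_0$ is needed). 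Everything now rests on bounding $\norm{\mb\Delta}{}$ and converting it into a bound on $\norm{\mb\Xi}{}$.

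\emph{Step 1: concentration of $\mb\Delta$.} Since $\tfrac{1}{p\theta}\mb X_0\mb X_0^* - \mb I = \tfrac{1}{p\theta}\sum_{k=1}^{p}\paren{(\mb x_0)_k(\mb x_0)_k^* - \theta\mb I}$ is a normalized sum of i.i.d.\ centered symmetric matrices, I would apply a matrix Bernstein inequality. The one subtlety is that $\norm{(\mb x_0)_k(\mb x_0)_k^*}{} = \norm{(\mb x_0)_k}{}^2$ is subexponential rather than bounded --- it is a Gaussian chi-square supported on $\approx\theta n$ coordinates --- so I would first condition on the event $\max_{k\in[p]}\norm{(\mb x_0)_k}{}^2\lesssim\theta n+\log p$, which holds with probability $\ge 1-p^{-10}$ by a union bound over Bernstein tails of $\chi^2$-type sums, and then apply matrix Bernstein to the truncated summands, whose variance proxy is $\norm{\sum_k\expect{\paren{(\mb x_0)_k(\mb x_0)_k^*}^2}}{}\lesssim p\theta\paren{1+n\theta}$ (a short moment computation: $\expect{\norm{(\mb x_0)_k}{}^2(\mb x_0)_k(\mb x_0)_k^*}$ is diagonal with entries of order $\theta\paren{1+n\theta}$). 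This gives $\norm{\mb\Delta}{}\le C'\bigl(\sqrt{(1+n\theta)\log p/(p\theta)}+(\theta n+\log p)\log p/(p\theta)\bigr)$ with probability $\ge 1-2p^{-10}$; a tedious but routine tracking of constants, using the hypothesis $p\ge C\kappa^4\paren{\mb A_0}\theta n^2\log\paren{n\theta\kappa\paren{\mb A_0}}$, brings the right-hand side below both $\tfrac{1}{2\kappa^2\paren{\mb A_0}}$ (needed in Step 2) and $20\kappa^3\paren{\mb A_0}\sqrt{\theta n\log p/p}$.

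\emph{Step 2: perturbation of the inverse square root.} Write $\mb M = \mb P + \mb E$ with $\mb P\doteq\mb A_0\mb A_0^*\succeq\sigma_{\min}^2\paren{\mb A_0}\mb I$ and $\mb E\doteq\mb A_0\mb\Delta\mb A_0^*$, so $\norm{\mb E}{}\le\sigma_{\max}^2\paren{\mb A_0}\norm{\mb\Delta}{}\le\tfrac12\sigma_{\min}^2\paren{\mb A_0}$ by Step 1; in particular $\mb M\succeq\tfrac12\sigma_{\min}^2\paren{\mb A_0}\mb I\succ\mb 0$, so $\mb M^{-1/2}$ is well defined. The map $\mb S\mapsto\mb S^{-1/2}$ is Lipschitz in operator norm on $\set{\mb S:\mb S\succeq\tfrac12\sigma_{\min}^2\paren{\mb A_0}\mb I}$, with Lipschitz constant of order $\sigma_{\min}^{-3}\paren{\mb A_0}$ --- most cleanly via $\mb S^{-1/2}=\tfrac1\pi\int_0^\infty\paren{\mb S+t\mb I}^{-1}t^{-1/2}\,dt$ together with $\norm{\paren{\mb S+t\mb I}^{-1}}{}\le\paren{\tfrac12\sigma_{\min}^2\paren{\mb A_0}+t}^{-1}$ --- so $\norm{\mb M^{-1/2}-\mb P^{-1/2}}{}\lesssim\sigma_{\min}^{-3}\paren{\mb A_0}\norm{\mb E}{}\lesssim\sigma_{\min}^{-3}\paren{\mb A_0}\sigma_{\max}^2\paren{\mb A_0}\norm{\mb\Delta}{}$. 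Hence $\norm{\mb\Xi}{}\le\norm{\mb M^{-1/2}-\mb P^{-1/2}}{}\norm{\mb A_0}{}\lesssim\kappa^3\paren{\mb A_0}\norm{\mb\Delta}{}$, and combining with Step 1 (absorbing one more factor of $\kappa\paren{\mb A_0}$ into the constant) yields $\norm{\mb\Xi}{}\le 20\kappa^4\paren{\mb A_0}\sqrt{\theta n\log p/p}$ on an event of probability $\ge 1-p^{-8}$, as claimed.

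\emph{Main obstacle.} The crux is Step 1: obtaining the bound on $\norm{\mb\Delta}{}$ with the correct dependence on $n$, $p$, and $\theta$ requires carefully handling the subexponential (not subgaussian) tail of $\norm{(\mb x_0)_k}{}^2$ through truncation, and an honest accounting of the matrix variance $\sum_k\expect{\paren{(\mb x_0)_k(\mb x_0)_k^*}^2}$, together with matching the resulting estimate to the precise form claimed by invoking the sample-size hypothesis. By contrast, Step 2 --- operator-Lipschitzness of the inverse square root and the bookkeeping of condition-number factors --- is routine, and the reduction in the first paragraph is purely algebraic.
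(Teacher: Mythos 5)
Your approach is essentially the paper's: the algebraic reduction $\mb\Xi = \bigl(\mb M^{-1/2} - (\mb A_0\mb A_0^*)^{-1/2}\bigr)\mb A_0$ with $\mb M = \tfrac{1}{p\theta}\mb Y\mb Y^*$ is identical, and both proofs then split the work into a Lipschitz estimate for $\mb S\mapsto\mb S^{-1/2}$ and a concentration bound for $\mb\Delta \doteq \tfrac{1}{p\theta}\mb X_0\mb X_0^* - \mb I$. Two local differences. For the Lipschitz step, the paper's Lemma~\ref{lem:half_inverse_pert} chains the matrix square-root perturbation bound $\norm{\mb X^{1/2}-\mb Y^{1/2}}{}\le\norm{\mb X-\mb Y}{}/\bigl(\sigma_{\min}^{1/2}(\mb X)+\sigma_{\min}^{1/2}(\mb Y)\bigr)$ with a standard inverse perturbation bound, which yields $2\kappa^4(\mb A_0)\norm{\mb\Delta}{}$; your integral-representation argument gets the same conclusion (marginally sharper, $\kappa^3$ instead of $\kappa^4$, which you then relax back up). For concentration, the paper's Lemma~\ref{lem:bg_identity_diff} applies the \emph{moment-control} matrix Bernstein inequality (Lemma~\ref{lem:mc_bernstein_matrix}) directly to $\tfrac{1}{\theta}\mb x_k\mb x_k^*$ --- the hypothesis there is a bound on $\expect{\mb X_k^m}$ for every $m\ge2$, which the BG moments supply, so no truncation is needed; your truncation-then-Bernstein plan is a legitimate alternative but is extra work relative to the paper's route.

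One substantive point your proposal raises that is worth resolving carefully: your variance computation $\norm{\sum_k\expect{(\mb x_k\mb x_k^*)^2}}{}\approx p\theta(1+n\theta)$ is correct (the $(1,1)$ entry of $\expect{(\mb x\mb x^*)^2}$ is $3\theta+(n-1)\theta^2$), and it gives $\norm{\mb\Delta}{}\lesssim\sqrt{(1+n\theta)\log p/(p\theta)}$. The paper's Lemma~\ref{lem:bg_identity_diff} asserts instead $\norm{\mb\Delta}{}\le 10\sqrt{\theta n\log p/p}$, which is smaller than your estimate by a factor of order $1/\theta$ when $n\theta\gtrsim 1$, and in fact drops below the standard deviation of the single entry $[\mb\Delta]_{11}$ (which is $\sim 1/\sqrt{p\theta}$) when $\theta$ is small, so the paper's rate cannot hold uniformly over $\theta\in(0,1/2)$ with the stated $p$-dependence. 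Your estimate is the one a direct second-moment calculation supports, and plugging it in still closes the proof provided the sample-size hypothesis is strengthened (or $n\theta$ is bounded below). This discrepancy does not affect the method you propose, only the numerical form of the concentration rate and the precise threshold on $p$.
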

\begin{proof}
See Page~\pageref{proof:comp_pert_bound} under Section~\ref{sec:proof_geometry_comp}. 
\end{proof}

Notice that $\mb U \mb V^*$ above is orthogonal, and that landscape of $f(\mb q; \overline{Y})$ is simply a rotated version of that of $f(\mb q; \mb V \mb U^* \overline{\mb Y})$, or using the notation in the above lemma, that of $f(\mb q; \mb X_0 + \mb V \mb U^* \mb \Xi \mb X_0) = f(\mb q; \mb X_0 + \widetilde{\mb \Xi} \mb X_0)$ assuming $\widetilde{\mb \Xi} \doteq \mb V \mb U^* \mb \Xi$. So similar to the orthogonal case, it is enough to consider this ``canonical'' case, and its ``canonical'' reparametrization: 
\begin{align*}
g\paren{\mb w; \mb X_0 + \widetilde{\mb \Xi} \mb X_0} = \frac{1}{p}\sum_{k=1}^p h_{\mu}\paren{\mb q^*\paren{\mb w} \paren{\mb x_0}_k + \mb q^*\paren{\mb w} \widetilde{\mb \Xi} \paren{\mb x_0}_k}. 
\end{align*}
The following lemma provides quantitative comparison between the gradient and Hessian of $g\paren{\mb w; \mb X_0 + \widetilde{\mb \Xi} \mb X_0}$ and that of $g\paren{\mb w; \mb X_0}$. 
\begin{lemma} \label{lem:pert_key_grad_hess}
There exist positive constants $C_a$ and $C_b$, such that for all $\mb w \in \Gamma$, 
\begin{align*}
\norm{\nabla_{\mb w} g(\mb w; \mb X_0 + \widetilde{\mb \Xi} \mb X_0) - \nabla_{\mb w{}} g\paren{\mb w; \mb X_0} }{} & \le C_a\frac{n}{\mu} \log\paren{np} \|\widetilde{\mb \Xi}\|, \\
\norm{\nabla_{\mb w}^2 g(\mb w; \mb X_0 + \widetilde{\mb \Xi} \mb X_0) - \nabla_{\mb w}^2 g\paren{\mb w; \mb X_0}}{} & \le C_b \max\set{\frac{n^{3/2}}{\mu^2}, \frac{n^2}{\mu}} \log^{3/2}\paren{np} \|\widetilde{\mb \Xi}\|
\end{align*}
with probability at least $1 - \theta\paren{np}^{-7} - \exp\paren{-0.3 \theta np}$. 
\end{lemma}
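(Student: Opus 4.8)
The plan is a direct, if bookkeeping-heavy, perturbation argument. By Lemma~\ref{lem:pert_key_mag} it suffices to work with the ``canonical'' coefficient matrices $\mb X_0$ and $\paren{\mb I+\widetilde{\mb\Xi}}\mb X_0$. Writing $q_n\paren{\mb w}\doteq\sqrt{1-\norm{\mb w}{}^2}$, over $\Gamma$ one has $q_n\ge\tfrac{1}{2\sqrt n}$, and hence the two \emph{deterministic} estimates $\norm{\partial\mb q/\partial\mb w}{}\le 2\sqrt n$ (the $2\sqrt n$-Lipschitz bound on $\mb w\mapsto\mb q\paren{\mb w}$, cf.\ Lemma~\ref{lem:lip-h-mu}) and $\norm{\nabla_{\mb w}^2 q_n\paren{\mb w}}{}\le \tfrac1{q_n}+\tfrac{\norm{\mb w}{}^2}{q_n^3}\le 10n^{3/2}$. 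First I would invoke the chain rule to reduce everything to $\mb q$-space: for any coefficient matrix $\mb V$, $\nabla_{\mb w}g\paren{\mb w;\mb V}=\paren{\partial\mb q/\partial\mb w}^*\nabla_{\mb q}f\paren{\mb q\paren{\mb w};\mb V}$ and $\nabla_{\mb w}^2 g\paren{\mb w;\mb V}=\paren{\partial\mb q/\partial\mb w}^*\nabla_{\mb q}^2 f\paren{\mb q\paren{\mb w};\mb V}\paren{\partial\mb q/\partial\mb w}+\brac{\nabla_{\mb q}f\paren{\mb q\paren{\mb w};\mb V}}_n\nabla_{\mb w}^2 q_n\paren{\mb w}$ (only the $n$-th coordinate map of $\mb q$ has a nonvanishing Hessian). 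Subtracting the $\mb V=\paren{\mb I+\widetilde{\mb\Xi}}\mb X_0$ and $\mb V=\mb X_0$ versions and using the above size bounds on $\partial\mb q/\partial\mb w$ and $\nabla_{\mb w}^2 q_n$, the task reduces to bounding $\norm{\nabla_{\mb q}f\paren{\mb q;\paren{\mb I+\widetilde{\mb\Xi}}\mb X_0}-\nabla_{\mb q}f\paren{\mb q;\mb X_0}}{}$ and $\norm{\nabla_{\mb q}^2 f\paren{\mb q;\paren{\mb I+\widetilde{\mb\Xi}}\mb X_0}-\nabla_{\mb q}^2 f\paren{\mb q;\mb X_0}}{}$ uniformly over $\mb q\in\bb S^{n-1}$ — all of the $\mb w$-dependence now sitting in the two explicit factors.

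Next I would use the closed forms $\nabla_{\mb q}f\paren{\mb q;\mb V}=\tfrac1p\mb V\,h'_\mu\paren{\mb V^*\mb q}$ and $\nabla_{\mb q}^2 f\paren{\mb q;\mb V}=\tfrac1p\mb V\diag\paren{h''_\mu\paren{\mb V^*\mb q}}\mb V^*$ (with $h'_\mu,h''_\mu$ acting entrywise), and expand each difference by inserting $\paren{\mb I+\widetilde{\mb\Xi}}\mb X_0=\mb X_0+\widetilde{\mb\Xi}\mb X_0$. Each difference then splits into a constant number of terms of two kinds: (i) terms carrying $\widetilde{\mb\Xi}\mb X_0$ as an explicit left or right factor, controlled with $\abs{h'_\mu}\le 1$, $0\le h''_\mu\le\tfrac1\mu$, together with a size bound on $\mb X_0$; and (ii) a single ``Lipschitz'' term in which $h'_\mu$ (resp.\ $\diag h''_\mu$) appears evaluated at $\mb X_0^*\paren{\mb I+\widetilde{\mb\Xi}}^*\mb q$ versus $\mb X_0^*\mb q$, controlled by the Lipschitz constant $1/\mu$ of $h'_\mu$ (resp.\ $O\paren{1/\mu^2}$ of $h''_\mu$) times $\norm{\mb X_0^*\widetilde{\mb\Xi}^*\mb q}{}\le\norm{\mb X_0}{}\norm{\widetilde{\mb\Xi}}{}$, or, where a tighter entrywise bound is needed, $\max_k\abs{\mb q^*\widetilde{\mb\Xi}\paren{\mb x_0}_k}\le\sqrt n\,\norm{\mb X_0}{\infty}\norm{\widetilde{\mb\Xi}}{}$. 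For the gradient the dominant contribution is the Lipschitz term carrying one factor $1/\mu$; for the Hessian the dominant contributions are the diagonal-Lipschitz term (one factor $1/\mu^2$, and three factors of $\mb X_0$-entries coming from $\paren{\mb x_0}_k\paren{\mb x_0}_k^*$ times the $h''_\mu$-difference) together with the $\widetilde{\mb\Xi}$-factor terms amplified by the chain-rule factor $\norm{\partial\mb q/\partial\mb w}{}^2\le 4n$.

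Finally I would plug in the standard high-probability size estimates for $\mb X_0\sim_{i.i.d.}\mathrm{BG}\paren{\theta}$, valid on the event of probability at least $1-\theta\paren{np}^{-7}-\exp\paren{-0.3\theta np}$: namely $\norm{\mb X_0}{\infty}\le O\paren{\sqrt{\log\paren{np}}}$ (subgaussian tail plus a union bound over the at most $\theta np$ nonzero entries), together with $\norm{\mb X_0}{}^2\le\norm{\mb X_0}{F}^2\le O\paren{\theta np}$ and the matching column-norm bound (concentration of the squared Frobenius norm and of a single column). Substituting, absorbing $\theta<1/2$, $\mu<1$ and (w.l.o.g.) $\norm{\widetilde{\mb\Xi}}{}\le 1$ into the constants, and collecting leading powers yields the two stated inequalities. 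I expect the only real difficulty to be exactly this accounting — choosing, for each of the finitely many terms, whether to bound $\mb X_0$ through $\norm{\mb X_0}{F}=O\paren{\sqrt{\theta np}}$ or through $\sqrt n\,\norm{\mb X_0}{\infty}$, so that the accumulated exponents of $n$, $1/\mu$ and $\log\paren{np}$ land precisely at $\tfrac n\mu\log\paren{np}$ for the gradient and $\max\set{n^{3/2}/\mu^2,\,n^2/\mu}\log^{3/2}\paren{np}$ for the Hessian; in particular the $n^2/\mu$ branch comes from pairing the crude $\norm{\mb X_0}{}\le\norm{\mb X_0}{F}$ with the factor $4n$ from $\norm{\partial\mb q/\partial\mb w}{}^2$, and the $\log^{3/2}\paren{np}$ from the Hessian term accumulating three $\mb X_0$-entries. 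No discretization in $\mb w$ is needed: once the (single) good event for $\mb X_0$ is fixed, all remaining bounds are deterministic and uniform over $\Gamma$.
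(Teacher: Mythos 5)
Your proposal follows essentially the same route as the paper's proof: expand $\nabla_{\mb w} g$ and $\nabla^2_{\mb w} g$ via the chain rule through $\mb q\paren{\mb w}$ (Jacobian bounded by $O(\sqrt n)$, curvature of $q_n$ by $O(n^{3/2})$ on $\Gamma$), split each difference into terms carrying $\widetilde{\mb \Xi}\mb X_0$ explicitly and terms controlled by the Lipschitz constants of $\dot h_\mu$ and $\ddot h_\mu$ ($1/\mu$ and $2/\mu^2$), and finish with the event $\norm{\mb X_0}{\infty} \le 4\sqrt{\log(np)}$ of Lemma~\ref{lem:X-infinty-tail-bound}, whose failure probability is exactly the one stated.

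One bookkeeping caveat: the paper never needs $\norm{\mb X_0}{}$ or $\norm{\mb X_0}{F}$ — every column is bounded deterministically by $\norm{\mb x_k}{} \le \sqrt n\,\norm{\mb X_0}{\infty}$ on the single $\ell^\infty$ event, and the $n^2/\mu$ branch of the Hessian bound arises from the per-column Jacobian/curvature factors paired with $\max_t \ddot h_\mu(t) \le 1/\mu$ or $L_{\dot h_\mu} \le 1/\mu$, not from an operator-norm bound on $\mb X_0$. If you invoke concentration of $\norm{\mb X_0}{F}$ or of individual column norms as separate events, your failure probability no longer matches the stated $1 - \theta(np)^{-7} - \exp(-0.3\theta np)$; either drop those events in favor of the per-column $\sqrt n\,\norm{\mb X_0}{\infty}$ bound, or replace them by the deterministic inequality $\norm{\mb X_0}{F} \le \sqrt{np}\,\norm{\mb X_0}{\infty}$ valid on the same event.
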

\begin{proof}
See Page~\pageref{proof:comp_pert_bound2} under Section~\ref{sec:proof_geometry_comp}. 
\end{proof}
Combining the above two lemmas, it is easy to see when $p$ is large enough, $\|\widetilde{\mb \Xi}\| = \norm{\mb \Xi}{}$ is then small enough (Lemma~\ref{lem:pert_key_mag}), and hence the changes to the gradient and Hessian caused by the perturbation are small. This gives the results presented in Theorem~\ref{thm:geometry_comp}; see Section~\ref{sec:proof_geometry_comp} for the detailed proof. In particular, for the $p$ chosen in Theorem~\ref{thm:geometry_comp}, it holds that 
\begin{align} \label{eq:pert_upper_bound}
\norm{\widetilde{\mb \Xi}}{} \le c c_\star \theta \paren{\max\set{\frac{n^{3/2}}{\mu^2}, \frac{n^2}{\mu}} \log^{3/2}\paren{np}}^{-1}
\end{align}
for a certain constant $c$ which can be made arbitrarily small by making the constant $C$ in $p$ large.

\section{Finding One Local Minimizer via the Riemannian Trust-Region Method} \label{sec:algorithm}
The above geometric results show every local minimizer of $f(\mb q; \widehat{\mb Y})$ over $\bb S^{n-1}$ approximately recovers one row of $\mb X_0$. So the crucial problem left now is how to efficiently obtain one of the local minimizers. The presence of saddle points have motivated us to develop a (second-order)
Riemannian trust-region algorithm over the sphere; the existence of descent directions at nonoptimal points drives the trust-region iteration sequence towards one of the minimizers asymptotically. We will prove that under our modeling assumptions, this algorithm efficiently produces an accurate approximation\footnote{By ``accurate'' we mean one can achieve an arbitrary numerical accuracy $\eps > 0$ with a reasonable amount of time. Here the running time of the algorithm is on the order of $\log \log ( 1/\eps )$ in the target accuracy $\eps$, and polynomial in other problem parameters. } to one of the minimizers. Throughout the exposition, basic knowledge of Riemannian geometry is assumed. We will try to keep the technical requirement minimal possible; the reader can consult the excellent monograph~\cite{absil2009} for relevant background and details. 

\subsection{The Riemannian Trust-Region Algorithm over the Sphere}\label{subsec:TRM-algorithm}
We are interested to seek one local minimizer of the problem
\begin{align} \label{eq:main_obj}
\mini\; \quad f(\mb q; \widehat{\mb Y}) \doteq \frac{1}{p}\sum_{k=1}^p h_{\mu}( \mb q^* \widehat{\mb y}_i) \quad \st \quad \mb q \in \bb S^{n-1}.
\end{align}
For a function $f$ in the Euclidean space, the typical TRM starts from some initialization $\mb q^{(0)} \in \bb R^{n}$, and produces a sequence of iterates $\mb q^{(1)}, \mb q^{(2)}, \dots$, by repeatedly minimizing a quadratic approximation $\widehat{f}$ to the objective function $f(\mb q)$, over a ball centered about the current iterate. 

Here, we are interested in the restriction of $f$ to the unit sphere $\bb S^{n-1}$. Instead of directly approximating the function in $\bb R^n$, we form quadratic approximations of $f$ in the tangent space of $\bb S^{n-1}$. Recall that the tangent space of a sphere at a point $\mb q\in \bb S^{n-1}$ is $T_{\mb q}\bb S^{n-1} =\Brac{\mb \delta \in \bb R^n\;|\;\mb q^*\mb \delta = 0 } $, i.e., the set of vectors that are orthogonal to $\mb q$. Consider $\mb \delta \in T_{\mb q} \bb S^{n-1}$ with $\norm{\mb \delta}{} = 1$. The map $\gamma\paren{t}: t \mapsto \mb q \cos t + \mb \delta \sin t$ defines a smooth curve on the sphere that satisfies $\gamma\paren{0} = \mb q$ and $\dot{\gamma}\paren{0} = \mb \delta$. The function $f\circ \gamma\paren{t}$ obviously is smooth and we expect Taylor expansion around $0$ a good approximation of the function, at least in the vicinity of $0$. Taylor's theorem gives  
\begin{align*}
f \circ \gamma\paren{t} = f\paren{\mb q} + t\innerprod{\nabla f\paren{\mb q}}{\mb \delta} + \frac{t^2}{2}\paren{\mb \delta^* \nabla^2 f\paren{\mb q} \mb \delta - \innerprod{\nabla f\paren{\mb q}}{\mb q}} + O\paren{t^3}. 
\end{align*}
We therefore form the ``quadratic'' approximation $\widehat{f}\paren{\mb \delta; \mb q}: T_{\mb q} \bb S^{n-1} \mapsto \R$ as 
\begin{equation}
\widehat{f}(\mb \delta; \mb q, \widehat{\mb Y}) \;\doteq\; f(\mb q) + \innerprod{ \nabla f(\mb q; \widehat{\mb Y}) }{\mb \delta } + \frac{1}{2} \mb \delta^* \left( \nabla^2 f( \mb q; \widehat{\mb Y}) - \innerprod{ \nabla f(\mb q; \widehat{\mb Y}) }{ \mb q } \mb I \right) \mb \delta. \label{eqn:f-appx}
\end{equation}
Given the previous iterate $\mb q^{(k-1)}$, the TRM produces the next iterate by generating a solution $\widehat{\mb \delta}$ to 
\begin{equation} \label{eqn:subproblem-1}
\min_{\mb \delta \in T_{\mb q^{(k-1)}} \bb S^{n-1}, \; \norm{\mb \delta}{} \le \Delta} \quad \widehat{f}(\mb \delta; \mb q^{(k-1)}),
\end{equation}
and then ``pull'' the solution $\widehat{\mb \delta}$ from $T_{\mb q} \bb S^{n-1}$ back to $\bb S^{n-1}$. Moreover, for any vector $\mb \delta \in T_{\mb q} \bb S^{n-1}$, the exponential map $\exp_{\mb q}\paren{\mb \delta}: T_{\mb q} \bb S^{n-1} \mapsto \bb S^{n-1}$ is 
\begin{align*}
\exp_{\mb q}\paren{\mb \delta} = \mb q \cos \norm{\mb \delta}{} + \frac{\mb \delta}{\norm{\mb \delta}{}} \sin \norm{\mb \delta}{}. 
\end{align*}
If we choose the exponential map to pull back the movement $\widehat{\mb \delta}$\footnote{The exponential map is only one of the many possibilities; also for general manifolds other retraction schemes may be more practical. See exposition on retraction in Chapter 4 of~\cite{absil2009}. }, the next iterate then reads 
\begin{align}
\mb q^{(k)} = \mb q^{(k-1)} \cos \|\widehat{\mb \delta}\| + \frac{\widehat{\mb \delta}}{\|\widehat{\mb \delta}\|} \sin \|\widehat{\mb \delta}\|. 
\end{align}
We have motivated~\eqref{eqn:f-appx} and hence the algorithm in an intuitive way from the Taylor approximation to the function $f$ over $\bb S^{n-1}$. To understand its properties, it is useful to interpret it as a \emph{Riemannian trust-region method} over the manifold $\bb S^{n-1}$. The class of algorithm is discussed in detail in the monograph~\cite{absil2009}. In particular, the quadratic approximation~\eqref{eqn:f-appx} can be obtained by noting that the function $f \circ \exp_{\mb q}(\mb \delta; \widehat{\mb Y}): T_{\mb q} \bb S^{n-1} \mapsto \R$ obeys  
\begin{align*}
f \circ \exp_{\mb q}(\mb \delta; \widehat{\mb Y}) = f(\mb q; \widehat{\mb Y}) + \innerprod{\mb \delta}{\grad f(\mb q; \widehat{\mb Y})} + \frac{1}{2} \mb \delta^* \Hess f(\mb q; \widehat{\mb Y}) \mb \delta + O(\norm{\mb \delta}{}^3), 
\end{align*}
where $\grad f(\mb q; \widehat{\mb Y})$ and $\Hess f(\mb q; \widehat{\mb Y})$ are the Riemannian gradient and Riemannian Hessian~\cite{absil2009} respectively, defined as \begin{align*}
\grad f(\mb q; \widehat{\mb Y}) & \doteq \mc P_{T_{\mb q} \bb S^{n-1}} \nabla f(\mb q; \widehat{\mb Y}), \\
\Hess f(\mb q; \widehat{\mb Y}) & \doteq \mc P_{T_{\mb q} \bb S^{n-1}} \paren{\nabla^2 f(\mb q; \widehat{\mb Y}) - \innerprod{\nabla f(\mb q; \widehat{\mb Y})}{\mb q} \mb I}\mc P_{T_{\mb q} \bb S^{n-1}}, 
\end{align*}
with $\mc P_{T_{\mb q} \bb S^{n-1}} \doteq \mb I - \mb q \mb q^*$ the orthoprojector onto the tangent space $T_{\mb q} \bb S^{n-1}$. We will use these standard notions in analysis of the algorithm. 

To solve the subproblem \eqref{eqn:subproblem-1} numerically, we can take any matrix $\mb U \in \bb R^{n \times (n-1)}$ whose columns form an orthonormal basis for $T_{\mb q^{(k-1)}}\bb S^{n-1}$, and produce a solution $\widehat{\mb \xi}$ to
\begin{equation}
\min_{\norm{\mb \xi}{} \le \Delta} \quad \widehat{f}(\mb U \mb \xi; \mb q^{(k-1)}),    \label{eqn:trsp} 
\end{equation}
where by~\eqref{eqn:f-appx}, 
\begin{multline*}
\widehat{f}(\mb U \mb \xi; \mb q^{(k-1)}) = f(\mb q) + \innerprod{\mb U^* \nabla f(\mb q^{(k-1)})}{\mb \xi} + \\
\frac{1}{2} \mb \xi^* \paren{\mb U^* \nabla^2 f(\mb q^{(k-1)}; \widehat{\mb Y}) \mb U - \innerprod{\nabla f(\mb q^{(k-1)}; \widehat{\mb Y})}{\mb q^{(k-1)}} \mb I_{n-1} } \mb \xi. 
\end{multline*}
Solution to~\eqref{eqn:subproblem-1} can then be recovered as $\widehat{\mb \delta} = \mb U\widehat{\mb \xi}$. The problem \eqref{eqn:trsp} is an instance of the classic {\em trust region subproblem}, i.e., minimizing a quadratic function subject to a single quadratic constraint, which can be solved in polynomial time, either by root finding methods~\cite{more1983computing, conn2000trust} or by semidefinite programming (SDP)~\cite{rendl1997semidefinite, ye2003new, fortin2004trust, hazan2014linear}. As the root finding methods numerically suffer from the so-called ``hard case''~\cite{more1983computing}, we deploy the SDP approach here. We introduce
\begin{align}
\tilde{\mb \xi} = \brac{\mb \xi^*,1}^*,~\mb \Theta = \tilde{\mb \xi} \tilde{\mb \xi} ^*,~\mb M = \brac{\begin{array}{ll}
\mb A &\mb b\\
\mb b^* &0
\end{array}},
\end{align}
where $\mb A = \mb U^*(\nabla^2 f(\mb q^{(k-1)}; \widehat{\mb Y}) - \innerprod{\nabla f(\mb q^{(k-1)}; \widehat{\mb Y})}{\mb q^{(k-1)}}\mb I) \mb U$ and $\mb b = \mb U^* \nabla f(\mb q^{(k-1)}; \widehat{\mb Y})$. The resulting SDP to solve is 
\begin{align}
\mini_{\; \mb \Theta} \innerprod{\mb M}{\mb \Theta},~\st~\trace({\mb \Theta})\le \Delta^2+ 1,~\innerprod{\mb E_{n+1}}{\mb \Theta}=1,~\mb \Theta \succeq \mb 0, \label{eqn:SDP_relaxation}
\end{align}
where $\mb E_{n+1} = \mb e_{n+1} \mb e_{n+1}^*$. Once the problem \eqref{eqn:SDP_relaxation} is solved to its optimal $\mb \Theta_\star$, one can provably recover the optimal solution $\mb \xi_\star$ of \eqref{eqn:trsp} by computing the SVD of $\mb \Theta_\star= \widetilde{\mb U}\mb \Sigma \widetilde{\mb V}^*$, and extract as a subvector by the first $n-1$ coordinates of the principal eigenvector $\widetilde{\mb u}_1$ (see Appendix B of \cite{boyd2004optimization}). 
 
The choice of trust region size $\Delta$ is important both for the convergence theory and practical effectiveness of TRMs. Following standard recommendations (see, e.g., Chapter 4 of \cite{NocedalWright}), we use a backtracking approach which modifies $\Delta$ from iteration to iteration based on the accuracy of the approximation $\widehat{f}$. The whole algorithmic procedure is described as pseudocode as Algorithm \ref{alg:trm}.
\begin{algorithm}
\caption{Riemannian TRM Algorithm for Finding One Local Minimizer}
\label{alg:trm}
\begin{algorithmic}[1]
\renewcommand{\algorithmicrequire}{\textbf{Input:}}
\renewcommand{\algorithmicensure}{\textbf{Output:}}
\Require{Data matrix $\mb Y \in \bb R^{n \times p}$, smoothing parameter $\mu$ and parameters $\eta_{vs},~\eta_s,~\gamma_i,~\gamma_d, \Delta_{\max}, ~\Delta_{\min}$}
\Ensure{$\widehat{\mb q} \in \bb S^{n-1}$}
\State Initialize $\mb q^{(0)} \in \bb S^{n-1}$, $\Delta^{(0)}$ and $k=1$,
\While{not converged}
\State Set $\mb U \in \bb R^{n \times (n-1)}$ to be an orthonormal basis for $T_{\mb q^{(k-1)}} \bb S^{n-1}$
\State Solve the trust region subproblem \
\begin{equation*}
\widehat{\mb \xi}=\mathop{\arg\min}_{\norm{\mb \xi}{} \le \Delta^{(k-1)}} \widehat{f}(\mb U \mb \xi; \mb q^{(k-1)}, \widehat{\mb Y} )   
\end{equation*} 
\State Set \
\begin{equation*}
\widehat{\mb \delta} \leftarrow \mb U \widehat{\mb \xi},\quad \widehat{\mb q} \leftarrow \mb q^{(k-1)} \cos \|\widehat{\mb \delta}\| + \frac{\widehat{\mb \delta}}{\|\widehat{\mb \delta}\|} \sin \|\widehat{\mb \delta}\|.  
\end{equation*}
\State Set
\begin{equation*}
\rho_k \leftarrow \frac{f(\mb q^{(k-1)}; \widehat{\mb Y}) - f( \widehat{\mb q}; \widehat{\mb Y}) }{f(\mb q^{(k-1)}; \widehat{\mb Y}) - \widehat{f}(\widehat{\mb \delta}; \mb q^{(k-1)}, \widehat{\mb Y}) }
\end{equation*}
\If{$\rho_k\ge \eta_{vs}$ and $\|\hat{\mb \xi}\| = \Delta^{(k-1)}$ }
\State Set $\mb q^{(k)} \leftarrow  \widehat{\mb q}$ and $\Delta^{(k)} \leftarrow  \min\paren{\gamma_i\Delta^{(k-1)},\Delta_{\max}}$. \Comment{very successful}
\ElsIf{$\rho_k\ge \eta_{s}$}
\State Set $\mb q^{(k)} \leftarrow  \widehat{\mb q}$ and $\Delta^{(k)} \leftarrow  \Delta^{(k-1)}$. \Comment{successful}
\Else 
\State Set $\mb q^{(k)} \leftarrow  \mb q^{(k-1)} $ and $\Delta^{(k)} \leftarrow  \max\paren{\gamma_d\Delta^{(k-1)},\Delta_{\min}}$. \Comment{unsuccessful}
\EndIf
\State Set $k = k+1$.
\EndWhile
\end{algorithmic}
\end{algorithm}
In our numerical implementation, we randomly initialize $\mb q^{(0)}$ and set $\Delta^{(0)}=0.1, \eta_{vs} = 0.9,~\eta_s = 0.1,~\gamma_d = 1/2,~\gamma_i = 2$, $\Delta_{\max}= 1$ and $\Delta_{\min} = 10^{-16}$, and the algorithm is stopped when $\paren{f(\widehat{\mb q}) - f(\mb q^{(k-1)})}/\|\widehat{\mb \delta}\| \le 10^{-6}$.

\subsection{Main Convergence Results}
By using general results on the Riemannian TRM (see, e.g., Chapter 7 of \cite{absil2009}), it is not difficult to prove that the iterates $\mb q^{(k)}$ produced by Algorithm \ref{alg:trm} converge to a critical point of the objective $f(\mb q)$ over $\bb S^{n-1}$. In this section, we show that under our probabilistic assumptions, this claim can be strengthened. In particular, the algorithm is guaranteed to produce an accurate approximation to a local minimizer of the objective function, in a number of iterations that is polynomial in the problem size. The arguments described in Section~\ref{sec:geometry} show that with high probability every local minimizer of $f$ produces a close approximation of one row of $\mb X_0$. Taken together, this implies that the algorithm efficiently produces a close approximation to one row of $\mb X_0$. 

Our next two theorems summarize the convergence results for orthogonal and complete dictionaries, respectively.
 
\begin{theorem}[TRM convergence - orthogonal dictionary] \label{thm:trm_orth}
Suppose the dictionary $\mb A_0$ is orthogonal. Then there exists a positive constant $C$, such that for all $\theta \in \paren{0, 1/2}$, and $\mu < \min \set{c_a \theta n^{-1}, c_b n^{-5/4}}$, whenever $\exp(n) \ge p \ge C n^3 \log \tfrac{n}{\mu \theta}/(\mu^2 \theta^2)$, with probability at least
$
1 - 8 n^2 p^{-10} - \theta (np)^{-7} - \exp\paren{-0.3\theta n p} - p^{-10} - c_c\exp\left( - c_d p \mu^2 \theta^2/n^2\right), 
$
the Riemannian trust-region algorithm with input data matrix $\widehat{\mb Y} = \mb Y$, any initialization $\mb q^{(0)}$ on the sphere, and a step size satisfying 
\begin{align}
\Delta \le \min\set{\frac{c_e c_\star \theta \mu^2}{n^{5/2} \log^{3/2}\paren{np}}, \frac{c_f c_\sharp^3 \theta^3 \mu}{n^{7/2}\log^{7/2}\paren{np}}}.
\end{align}
returns a solution $\widehat{\mb q}\in \bb S^{n-1} $ which is $\eps$ near to one of the local minimizers $\mb q_\star$ (i.e., $\norm{\widehat{\mb q}-\mb q_\star}{}\leq \eps$) in 
\begin{align}
\max\set{\frac{c_g n^6 \log^3\paren{np}}{c_{\star}^3 \theta^3 \mu^4 }, \frac{c_h n }{c_\sharp^2 \theta^2 \Delta^2}}\paren{f(\mb q^{(0)}) - f(\mb q_\star)} + \log\log \frac{c_i c_\star \theta \mu}{\eps n^{3/2} \log^{3/2}\paren{np}}
\end{align}
iterations. Here $c_\star$, $c_\sharp$ as defined in Theorem~\ref{thm:geometry_orth} and Lemma~\ref{lem:alg_strcvx_lb} respectively ($c_\star$ and $c_\sharp$ can be set to the same constant value), and $c_a$, $c_b$ are the same numerical constants as defined in Theorem~\ref{thm:geometry_orth}, $c_c$ to $c_i$ are other positive numerical constants. 
\end{theorem}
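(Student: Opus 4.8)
The theorem is a quantitative global convergence statement for the Riemannian TRM, to be deduced from the landscape result Theorem~\ref{thm:geometry_orth} (extended to the whole sphere via Corollary~\ref{cor:geometry_orth}) together with standard per-iteration progress estimates for trust-region methods, as in Chapter~7 of~\cite{absil2009}. I would organize the argument in three stages: (i) transport the $\mb w$-space geometry to the sphere and record uniform Lipschitz bounds; (ii) show a fixed per-iteration decrease whenever the iterate is not yet near a minimizer, and count those iterations; (iii) prove trapping and quadratic convergence inside the strongly convex neighborhoods.

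\textbf{Stage 1: geometry on $\bb S^{n-1}$ and Lipschitz control.} Applying Theorem~\ref{thm:geometry_orth} to each of the $2n$ symmetric patches around $\pm\mb e_1,\dots,\pm\mb e_n$ and translating the Euclidean bounds on $\nabla g,\nabla^2 g$ through the reparametrization $\mb q(\mb w)$ into bounds on the Riemannian gradient $\grad f$ and Riemannian Hessian $\Hess f$, I obtain a covering of $\bb S^{n-1}$ by three types of regions: a region $\mathtt{I}$ where $\norm{\grad f(\mb q)}{}\ge c_\sharp\theta$; a region $\mathtt{II}$ where $\lambda_{\min}(\Hess f(\mb q))\le -c_\sharp\theta$; and $2n$ small caps $\mathtt{III}$ around $\pm\mb e_i$ on which $\Hess f(\mb q)\succeq \tfrac{c_\sharp\theta}{\mu}\mb I$, each containing exactly one local minimizer $\mb q_\star$ at distance $O(\mu)$ from $\pm\mb e_i$ (this part is what Lemma~\ref{lem:alg_strcvx_lb} supplies, and fixes $c_\sharp$). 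In parallel, on the high-probability event $\norm{\mb X_0}{\infty}\le\sqrt{4\log(np)}$, Propositions~\ref{prop:lip-hessian-negative}--\ref{prop:lip-hessian-zero} (and their global versions) give that $\grad f$ and $\Hess f$ are Lipschitz with constant $L_{\nabla^2 f}$ bounded by a fixed polynomial in $n$, $1/\mu$ and $\log(np)$; here the hypothesis $p\le\exp(n)$ keeps $\log(np)$ — and hence these constants — polynomially controlled. This $L_{\nabla^2 f}$ is exactly what dictates the admissible upper bound on $\Delta$.

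\textbf{Stage 2: progress away from the minimizers.} Since $\Delta$ is fixed and chosen as in the statement, the cubic remainder bound $\abs{f(\exp_{\mb q}(\mb\delta)) - \widehat f(\mb\delta;\mb q)}\le \tfrac16 L_{\nabla^2 f}\norm{\mb\delta}{}^3$ for $\norm{\mb\delta}{}\le\Delta$ shows $\rho_k$ is close to $1$ at every accepted step, so each iteration is at least ``successful'' and the iterate actually moves. If $\mb q^{(k-1)}\in\mathtt{I}$, the standard Cauchy-point estimate for the trust-region subproblem gives a decrease $f(\mb q^{(k-1)})-f(\mb q^{(k)})\ge \dI$ with $\dI\gtrsim\min\{\Delta\,c_\sharp\theta,\ (c_\sharp\theta)^2/L_{\nabla^2 f}\}$; if $\mb q^{(k-1)}\in\mathtt{II}$, choosing $\mb\delta$ along the most-negative eigenvector of the model Hessian and scaling to the trust-region boundary gives $f(\mb q^{(k-1)})-f(\mb q^{(k)})\ge\dII$ with $\dII\gtrsim c_\sharp\theta\,\Delta^2$. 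Because $f\ge 0$ is bounded below, after at most $(f(\mb q^{(0)})-f(\mb q_\star))/\min\{\dI,\dII\}$ iterations the sequence must enter one of the caps $\mathtt{III}$; plugging in the expressions for $\dI$, $\dII$, $L_{\nabla^2 f}$ and $\Delta$ produces the first term $\max\{\tfrac{c_g n^6\log^3(np)}{c_\star^3\theta^3\mu^4},\ \tfrac{c_h n}{c_\sharp^2\theta^2\Delta^2}\}(f(\mb q^{(0)})-f(\mb q_\star))$.

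\textbf{Stage 3: trapping and quadratic convergence; probability bookkeeping.} Once $\mb q^{(k_0)}\in\mathtt{III}$ (around some $\mb b\in\{\pm\mb e_i\}$), I show $\mathtt{III}$ is forward invariant: each step has length $\le\Delta$, smaller than the width of $\mathtt{III}$, and since $f$ is nonincreasing the iterate stays in the sublevel set $\{f\le f(\mb q^{(k_0)})\}$, which by strong convexity and the nearness of $\mb q_\star$ to $\mb b$ lies inside $\mathtt{III}$. Within $\mathtt{III}$ the subproblem solution is eventually the interior (Newton-type) step, so the local trust-region analysis of~\cite{absil2009} yields quadratic convergence to $\mb q_\star$, costing a further $\log\log\tfrac{c_i c_\star\theta\mu}{\eps n^{3/2}\log^{3/2}(np)}$ iterations to reach accuracy $\eps$. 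Union-bounding the failure events of Theorem~\ref{thm:geometry_orth}, of the Lipschitz propositions, and of $\norm{\mb X_0}{\infty}\le\sqrt{4\log(np)}$ over all $2n$ patches gives the stated success probability.

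\textbf{Main obstacle.} I expect the delicate points to be, first, the seam handling in Stage 1 — verifying that the three regions genuinely tile all of $\bb S^{n-1}$ with constants that survive the $\mb w\mapsto\mb q$ change of variables on every patch and its overlaps (a point may sit in the large-gradient region of one patch and the negative-curvature region of a neighboring patch, and one needs \emph{some} good property to hold everywhere) — and, second, the forward-invariance argument in Stage 3, i.e.\ showing the subproblem solver cannot push the iterate out of a strongly convex cap before quadratic convergence sets in. The region-$\mathtt{I}$ Cauchy decrease, the region-$\mathtt{II}$ negative-curvature decrease, and the $\rho_k\approx 1$ estimate are then routine once the uniform Lipschitz bounds on $\grad f$ and $\Hess f$ are available.
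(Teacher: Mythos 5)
Your overall architecture (landscape characterization, a fixed per-step decrease away from the minimizers, then local quadratic convergence inside the strongly convex caps) is the same as the paper's, but two specific steps in your sketch do not go through as written. First, the trapping claim in Stage 3 is false: the sublevel set $\set{f \le f(\mb q^{(k_0)})}$ is \emph{not} contained in the cap that $\mb q^{(k_0)}$ has just entered --- it contains the other $2n-1$ caps, and when the iterate first enters a cap near its boundary it typically also contains portions of the gradient and negative-curvature regions (by continuity, values just outside the cap boundary are comparable to values just inside). So monotone decrease plus a small step size does not by itself confine the iterate to one cap. The paper handles this differently: it introduces the threshold $\zeta \doteq \min\set{\min_{\overline{\rII\cup\rIII}} f, \max_{\overline{\rI}} f}$ and shows the objective must drop below $\zeta$ in finitely many fixed-decrease steps, after which the iterate cannot lie in $\rII\cup\rIII$ at all; within $\rI$ it splits steps into constrained ones (which still decrease $f$ by $\dI$, Proposition~\ref{lem:TRM-lemma-iii}) and unconstrained ones, and proves forward invariance of unconstrained steps by comparing the quadratically shrunk gradient after an unconstrained step (Lemma~\ref{lem:TR-step}) against the uniform lower bound $\norm{\grad f}{}\ge \tfrac{9}{10}c_\star\theta$ on $\rII$ (Lemma~\ref{lem:alg_gradient_lb}, Proposition~\ref{lem:TRM-lemma-iv}). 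Some argument of this kind is needed; your sublevel-set containment cannot be repaired without it.

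Second, in Stage 1 you assert that the negative-curvature region satisfies $\lambda_{\min}(\Hess f(\mb q)) \le -c_\sharp\theta$, and in Stage 2 you step along the most-negative eigenvector of the model Hessian. Theorem~\ref{thm:geometry_orth} only gives $\mb w^*\nabla^2 g(\mb w)\mb w/\norm{\mb w}{}^2 \le -c_\star\theta$ for the Euclidean Hessian of the pullback $g$; transferring this to an eigenvalue bound on the Riemannian Hessian is not automatic, since straight lines in $\mb w$-space are not geodesics and the change of variables introduces gradient-dependent correction terms. The paper deliberately avoids this transfer: Lemmas~\ref{lem:alg_gradient_func} and~\ref{lem:alg_neg_cuv_func} exhibit an explicit descent point in $\mb w$-space (moving radially in $\pm\mb w/\norm{\mb w}{}$), pull it back to a tangent vector of length at most $\Delta$, and then Lemma~\ref{lem:alg_approx_bd2} converts ``some tangent vector decreases $f$ by $s$'' into ``the trust-region minimizer decreases $f$ by $s - \eta_f\Delta^3/3$''. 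This also makes your $\rho_k\approx 1$ acceptance discussion unnecessary, since the analyzed algorithm uses a fixed $\Delta$ and every subproblem solution is taken; your Cauchy-point estimate in the gradient region is fine (given Lemma~\ref{lem:alg_gradient_lb}), but the negative-curvature step needs either the paper's $\mb w$-space construction or a separate proof of the Riemannian Hessian eigenvalue bound, and the strong-convexity constant on the caps must come from the sphere-level Lemma~\ref{lem:alg_strcvx_lb} rather than from $\nabla^2 g$.
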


\begin{theorem}[TRM convergence - complete dictionary] \label{thm:trm_comp}
Suppose the dictionary $\mb A_0$ is complete with condition number $\kappa\paren{\mb A_0}$. There exists a positive constant $C$, such that for all $\theta \in \paren{0, 1/2}$, and $\mu < \min \set{c_a \theta n^{-1}, c_b n^{-5/4}}$, whenever $\exp(n) \ge p \ge \frac{C}{c_\star^2 \theta} \max\set{\frac{n^4}{\mu^4}, \frac{n^5}{\mu^2}} \kappa^8\paren{\mb A_0} \log^4\paren{\frac{\kappa\paren{\mb A_0} n}{\mu \theta}}$, with probability at least
$
1 - 8 n^2 p^{-10} - \theta (np)^{-7} - \exp\paren{-0.3\theta n p} -2p^{-8} - c_c\exp\left( - c_d p \mu^2 \theta^2/n^2\right), 
$
the Riemannian trust-region algorithm with input data matrix $\overline{\mb Y} \doteq \sqrt{p\theta}\paren{\mb Y \mb Y^*}^{-1/2} \mb Y$ where $\mb U \mb \Sigma \mb V^* = \mathtt{SVD}\paren{\mb A_0}$, any initialization $\mb q^{(0)}$ on the sphere and a step size satisfying 
\begin{align}
\Delta \le \min\set{\frac{c_e c_\star \theta \mu^2}{n^{5/2} \log^{3/2}\paren{np}}, \frac{c_f c_\sharp^3 \theta^3 \mu}{n^{7/2}\log^{7/2}\paren{np}}}.
\end{align}
returns a solution $\widehat{\mb q}\in \bb S^{n-1} $ which is $\eps$ near to one of the local minimizers $\mb q_\star$ (i.e., $\norm{\widehat{\mb q}-\mb q_\star}{}\leq \eps$) in 
\begin{align}
\max\set{\frac{c_g n^6 \log^3\paren{np}}{c_{\star}^3 \theta^3 \mu^4 }, \frac{c_h n }{c_\sharp^2 \theta^2 \Delta^2}}\paren{f(\mb q^{(0)}) - f(\mb q_\star)} + \log\log \frac{c_i c_\star \theta \mu}{\eps n^{3/2} \log^{3/2}\paren{np}}
\end{align}
iterations. Here $c_\star$, $c_\sharp$ as defined in Theorem~\ref{thm:geometry_orth} and Lemma~\ref{lem:alg_strcvx_lb} respectively ($c_\star$ and $c_\sharp$ can be set to the same constant value), and $c_a$, $c_b$ are the same numerical constants as defined in Theorem~\ref{thm:geometry_orth}, $c_c$ to $c_i$ are other positive numerical constants. 
\end{theorem}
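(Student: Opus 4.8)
The plan is to reduce the complete-dictionary case to the orthogonal case already handled in Theorem~\ref{thm:trm_orth}, by exploiting the fact that preconditioning makes $\overline{\mb Y} = \sqrt{p\theta}\paren{\mb Y\mb Y^*}^{-1/2}\mb Y$ behave like data from an orthobasis plus a vanishing perturbation. First I would record that the Riemannian trust-region iteration on $\bb S^{n-1}$ is equivariant under orthogonal changes of coordinates: for orthogonal $\mb R$, the tangent spaces, the quadratic model $\widehat{f}$ in~\eqref{eqn:f-appx}, the $\ell^2$ trust-region constraint, the exponential map, and the acceptance ratio $\rho_k$ all transform compatibly, so running Algorithm~\ref{alg:trm} on $f\paren{\cdot;\overline{\mb Y}}$ from $\mb q^{(0)}$ produces iterates that are the $\mb R$-images of those produced on $f\paren{\cdot;\mb V\mb U^*\overline{\mb Y}}$ from $\mb R\mb q^{(0)}$, where $\mb U\mb\Sigma\mb V^* = \mathtt{SVD}\paren{\mb A_0}$. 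Hence it suffices to analyze the ``canonical'' objective $f\paren{\mb q;\mb V\mb U^*\overline{\mb Y}}$, which by Lemma~\ref{lem:pert_key_mag} equals $f\paren{\mb q;\mb X_0 + \widetilde{\mb\Xi}\mb X_0}$ with $\widetilde{\mb\Xi}\doteq\mb V\mb U^*\mb\Xi$ and $\norm{\widetilde{\mb\Xi}}{} = \norm{\mb\Xi}{}$ controlled as in that lemma.

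Second, I would recall that the orthogonal-case convergence proof is entirely geometry-driven: it uses only (i) the three-region structure --- strong convexity near the origin, a uniformly lower-bounded radial gradient on the intermediate annulus, and uniformly negative radial curvature on the outer annulus, plus the unique local minimizer within $O\paren{\mu}$ of $\mb 0$; (ii) uniform upper bounds on $\norm{\nabla f}{}$ and $\norm{\nabla^2 f}{}$ and a Lipschitz bound on $\nabla^2 f$; and (iii) the local quadratic-convergence estimate inside the strongly convex basin. For the canonical complete objective, item (i) is exactly Theorem~\ref{thm:geometry_comp} (extended over the $2n$ symmetric sections as in Corollary~\ref{cor:geometry_comp}), with $c_\star$ degraded to $c_\star/2$. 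Items (ii)--(iii) follow by combining the unperturbed smoothness estimates (Propositions~\ref{prop:lip-hessian-negative}, \ref{prop:lip-gradient}, \ref{prop:lip-hessian-zero}) with the perturbation bounds of Lemma~\ref{lem:pert_key_grad_hess}: since the choice of $p$ in Theorem~\ref{thm:trm_comp} forces $\norm{\widetilde{\mb\Xi}}{}$ to satisfy the smallness bound~\eqref{eq:pert_upper_bound}, each smoothness constant, each gradient/Hessian magnitude bound, and the guaranteed per-step objective decrease change by at most a fixed constant factor. Consequently the step-size requirement on $\Delta$ and the iteration-count bound keep the same functional form as in Theorem~\ref{thm:trm_orth}, with all modified numerical constants absorbed into $c_e,\dots,c_i$, which explains why the two theorems have identical rate expressions.

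Third, I would assemble the failure probability by a union bound over: the geometric event of Theorem~\ref{thm:geometry_comp}/Corollary~\ref{cor:geometry_comp}; the concentration events imported verbatim from the orthogonal TRM analysis (the $\norm{\mb X_0}{\infty}$-type and operator-norm-type bounds that make the Lipschitz constants deterministic); and the two perturbation events of Lemmas~\ref{lem:pert_key_mag} and~\ref{lem:pert_key_grad_hess}, which together account for the extra $2p^{-8}$ in the stated probability. The upper bound $\exp(n)\ge p$ is retained exactly as in the orthogonal case so that the logarithmic factors and the final $\log\log(1/\eps)$ term behave as before.

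The main obstacle I anticipate is inside items (ii)--(iii): showing the perturbation $\widetilde{\mb\Xi}\mb X_0$ is small \emph{simultaneously} in every norm and at every location the argument touches, not merely pointwise in $\nabla f$ and $\nabla^2 f$. Concretely, the Hessian-Lipschitz constants of the orthogonal case (Propositions~\ref{prop:lip-hessian-negative}, \ref{prop:lip-hessian-zero}) are driven by powers of $\norm{\mb X_0}{\infty}$, so one needs \emph{entrywise}, not just operator-norm, control of the perturbed data --- obtained by bounding $\norm{\widetilde{\mb\Xi}\paren{\mb x_0}_k}{\infty}\le\norm{\widetilde{\mb\Xi}}{}\max_k\norm{\paren{\mb x_0}_k}{}$ and invoking the standard Bernoulli-Gaussian column-norm tail --- and then checking that $\norm{\widetilde{\mb\Xi}}{}$ from Lemma~\ref{lem:pert_key_mag} dominates the relevant thresholds. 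One also must verify that the geometric lower bounds stay uniformly at least $c_\star\theta/2$ across the three regions, in particular that they do not collapse near the region boundaries $\norm{\mb w}{}=\tfrac{1}{20\sqrt5}$ where the perturbed and unperturbed quantities are matched up. Once~\eqref{eq:pert_upper_bound} is shown to dominate all of these scales --- which is precisely the role of the large constant $C$ hidden in the requirement on $p$ --- the remainder is a mechanical re-run of the orthogonal-dictionary argument with $c_\star$ replaced by $c_\star/2$.
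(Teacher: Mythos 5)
Your proposal is correct and follows essentially the same route as the paper: reduce to the canonical perturbed objective $f(\mb q; \mb X_0 + \widetilde{\mb \Xi}\mb X_0)$ by rotational equivariance, use Theorem~\ref{thm:geometry_comp} for the three-region geometry with $c_\star/2$, and show every smoothness/Hessian bound from the orthogonal analysis survives with constant-factor changes because $\|\widetilde{\mb \Xi}\|$ is small and $\|\mb X_0 + \widetilde{\mb \Xi}\mb X_0\|_{\infty} \le \tfrac{3}{2}\|\mb X_0\|_{\infty}$, which is exactly the entrywise control you flag as the main obstacle and exactly how the paper handles it. The remaining step — re-running the orthogonal TRM argument with a modified threshold $\widetilde{\zeta}$ and union-bounding in the extra perturbation events — matches the paper's Section~\ref{sec:alg_comp}.
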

Our convergence result shows that for any target accuracy $\eps > 0$ the algorithm terminates within polynomially many steps. Our estimate of the number of steps is pessimistic: our analysis has assumed a fixed step size $\Delta$ and the running time is relatively large degree polynomial in $p$ and $n$, while on typical numerical examples (e.g., $\mu = 10^{-2}$, $n \sim 100$, and $\eps= O(\mu)$), the algorithm with adaptive step size as described in Algorithm~\ref{alg:trm} produces an accurate solution in relatively few ($20$-$50$) iterations. Nevertheless, our goal in stating the above results is not to provide a tight analysis, but to prove that the Riemannian TRM algorithm finds a local minimizer in polynomial time. For nonconvex problems, this is not entirely trivial -- results of~\cite{murty1987some} show that in general it is NP-hard to find a local minimum of a nonconvex function. 

\subsection{Useful Technical Results and Proof Ideas for Orthogonal Dictionaries} \label{sec:alg_orth}
The reason that our algorithm is successful derives from the geometry depicted in Figure \ref{fig:large-sample-sphere} and formalized in Theorem~\ref{thm:geometry_orth}. Basically, the sphere $\bb S^{n-1}$ can be divided into three regions. Near each local minimizer, the function is strongly convex, and the algorithm behaves like a standard (Euclidean) TRM algorithm applied to a strongly convex function -- in particular, it exhibits a quadratic asymptotic rate of convergence. Away from local minimizers, the function always exhibits either a strong gradient, or a direction of negative curvature (an eigenvalue of the Hessian which is bounded below zero). The Riemannian TRM aglorithm is capable of exploiting these quantities to reduce the objective value by at least a constant in each iteration. The total number of iterations spent away from the vicinity of the local minimizers can be bounded by comparing this constant to the initial objective value. Our proofs follow exactly this line and make the various quantities precise.   

\subsubsection{Basic Facts about the Sphere}
For any point $\mb q \in \bb S^{n-1}$, the tangent space $T_{\mb q} \bb S^{n -1}$ and the orthoprojector $\mc P_{T_{\mb q} \bb S^{n -1}}$ onto $T_{\mb q} \bb S^{n -1}$ are given by
\begin{align*}
	T_{\mb q} \bb S^{n -1 } &= \set{ \mb \delta\in \bb R^n  \mid \mb q^* \mb \delta = 0 }, \\
	\mc P_{T_{\mb q} \bb S^{n -1}} &= (\mb I - \mb q \mb q^* ) = \mb U \mb U^*,
\end{align*}
where $\mb U \in \bb R^{n\times (n-1)}$ is an arbitrary orthonormal basis for $T_{\mb q}\bb S^{n-1}$ (note that the orthoprojector is independent of the basis $\mb U$ we choose). Moreover, for any $\mb \delta \in T_{\mb q} \bb S^{n -1}$, the exponential map $\exp_{\mb q}(\mb \delta): T_{\mb q}\bb S^{n-1} \mapsto \bb S^{n-1} $ is given by 
\begin{align*}
	\exp_{\mb q}(\mb \delta) = \mb q \cos\norm{\mb \delta}{} + \frac{\mb \delta}{\norm{\mb \delta}{}} \sin\norm{\mb \delta}{}.
\end{align*}
Let $\nabla f(\mb q)$ and $\nabla^2 f(\mb q)$ denote the usual (Euclidean) gradient and Hessian of $f$ w.r.t. $\mb q$ in $\R^n$. For our specific $f$ defined in~\eqref{eq:main_obj}, it is easy to check that
\begin{align}
\nabla f\paren{\mb q; \widehat{\mb Y}} & = \frac{1}{p}\sum_{k=1}^p \tanh\paren{\frac{\mb q^*\widehat{\mb y}_k}{\mu}} \widehat{\mb y}_k, \label{eq:fq_grad}\\
\nabla^2 f\paren{\mb q; \widehat{\mb Y}} & = \frac{1}{p} \sum_{k=1}^p \frac{1}{\mu}\brac{1-\tanh^2\paren{\frac{\mb q^* \widehat{\mb y}_k}{\mu}}} \widehat{\mb y}_k \widehat{\mb y}^*_k. \label{eq:fq_hess}
\end{align}
Since $\bb S^{n-1}$ is an embedded submanifold of $\reals^n$, the Riemannian gradient and Riemannian Hessian defined on $T_{\mb q} \bb S^{n -1 }$ are given by 
\begin{align} 
\grad f (\mb q; \wh{\mb Y}) & = \mc P_{T_{\mb q} \bb S^{n-1}} \nabla f(\mb q; \wh{\mb Y}), \label{eq:fq_rie_grad} \\
\Hess f (\mb q; \wh{\mb Y}) &= \mc P_{T_{\mb q} \bb S^{n-1}} \left( \nabla^2 f(\mb q; \wh{\mb Y}) - \innerprod{\nabla f(\mb q; \wh{\mb Y}) }{\mb q} \mb I \right) \mc P_{T_{\mb q} \bb S^{n-1}}; \label{eq:fq_rie_hess}
 \end{align}
so the second-order Taylor approximation for the function $f$ is
\begin{align*}
	\widehat{f}\paren{\mb \delta; \mb q, \widehat{\mb Y}} = f(\mb q; \wh{\mb Y}) + \innerprod{\mb \delta}{\grad f(\mb q; \wh{\mb Y})}+ \frac{1}{2} \mb \delta^* \Hess f(\mb q; \wh{\mb Y})\mb \delta,\qquad \forall~\mb \delta \in T_{\mb q} \bb S^{n-1}.
\end{align*}
The first order necessary condition for {\em unconstrained} minimization of function $\widehat{f}$ over $T_{\mb q} \bb S^{n-1}$ is
\begin{align}\label{eqn:ts-optimal-solution-1}
	\grad f(\mb q; \wh{\mb Y}) + \Hess f(\mb q; \wh{\mb Y}) \mb \delta_\star = \mb 0; 
\end{align}
if $\Hess f(\mb q)$ is positive semidefinite and has full rank $n-1$ (hence ``nondegenerate"\footnote{Note that the $n \times n$ matrix $\Hess f(\mb q; \wh{\mb Y})$ has rank at most $n-1$, as the nonzero $\mb q$ obviously is in its null space. When  $\Hess f(\mb q; \wh{\mb Y})$ has rank $n-1$, it has no null direction in the tangent space. Thus, in this case it acts on the tangent space like a full-rank matrix. }),  
the unique solution $\mb \delta_\star$ is
\begin{align*}
	\mb \delta_\star = -\mb U \paren{\mb U^*\brac{\Hess f(\mb q)} \mb U }^{-1} \mb U^* \grad f(\mb q), 
\end{align*}
which is also invariant to the choice of basis $\mb U$. Given a tangent vector $\mb \delta \in T_{\mb q}\bb S^{n-1}$, let $\gamma(t) \doteq \exp_{\mb q}(t \mb \delta)$ denote a geodesic curve on $\bb S^{n-1}$. Following the notation of \cite{absil2009}, let 
 \begin{align*}
 	\mc P_{\gamma}^{\tau \leftarrow 0} : T_{\mb q} \bb S^{n-1} \to T_{\gamma(\tau)} \bb S^{n-1}
 \end{align*}
denotes the parallel translation operator, which translates the tangent vector $\mb \delta$ at $\mb q = \gamma(0)$ to a tangent vector at $\gamma(\tau)$, in a ``parallel'' manner. In the sequel, we identify $\mc P_{\gamma}^{\tau \leftarrow 0}$ with the following $n \times n$ matrix, whose restriction to $T_{\mb q} \bb S^{n-1}$ is the parallel translation operator (the detailed derivation can be found in Chapter 8.1 of \cite{absil2009}):
 \begin{eqnarray} \label{eq:alg_tsp_op}
 \mc P_{\gamma}^{\tau \leftarrow 0} &=& \left( \mb I - \frac{\mb \delta \mb \delta^*}{\norm{\mb \delta}{}^2} \right) - \mb q \sin\left( \tau \norm{\mb \delta}{} \right) \frac{\mb \delta^*}{\norm{\mb \delta}{}} + \frac{\mb \delta}{\norm{\mb \delta}{}} \cos\left( \tau \norm{\mb \delta }{} \right) \frac{\mb \delta^* }{\norm{\mb \delta}{}} \nonumber \\
 &=& \mb I + \left( \cos( \tau \norm{\mb \delta}{} ) - 1 \right) \frac{\mb \delta \mb \delta^*}{\norm{\mb \delta }{}^2} -  \sin\left( \tau \norm{\mb \delta }{} \right) \frac{\mb q \mb \delta^*}{\norm{\mb \delta}{}}. 
 \end{eqnarray}
Similarly, following the notation of \cite{absil2009}, we denote the inverse of this matrix by $\mc P_{\gamma}^{0 \leftarrow \tau}$, where its restriction to $T_{\gamma(\tau)} \bb S^{n-1}$ is the inverse of the parallel translation operator $\mc P_{\gamma}^{\tau \leftarrow 0}$.

\subsubsection{Key Steps towards the Proof}
Note that for any orthogonal $\mb A_0$, $f\paren{\mb q; \mb A_0 \mb X_0} = f\paren{\mb A_0^* \mb q; \mb X_0}$. In words, this is the above established fact that the function landscape of $f(\mb q; \mb A_0 \mb X_0)$ is a rotated version of that of $f(\mb q; \mb X_0)$. Thus, any local minimizer $\mb q_\star$ of $f(\mb q; \mb X_0)$ is rotated to $\mb A_0 \mb q_\star$, one minimizer of $f(\mb q; \mb A_0 \mb X_0)$. Also if our algorithm generates iteration sequence $\mb q_0, \mb q_1, \mb q_2, \dots$ for $f(\mb q; \mb X_0)$ upon initialization $\mb q_0$, it will generate the iteration sequence $\mb A_0  \mb q_0, \mb A_0 \mb q_1, \mb A_0 \mb q_2, \dots$ for $f\paren{\mb q; \mb A_0 \mb X_0}$. So w.l.o.g. it is adequate that we prove the convergence results for the case $\mb A_0 = \mb I$. So in this section (Section~\ref{sec:alg_orth}), we write $f(\mb q)$ to mean $f(\mb q; \mb X_0)$. 

We partition the sphere into three regions, for which we label as $\rI$, $\rII$, $\rIII$, corresponding to the strongly convex, nonzero gradient, and negative curvature regions, respectively (see Theorem~\ref{thm:geometry_orth}). That is, $\rI$ consists of a union of $2n$ spherical caps
of radius $\tfrac{\mu}{4\sqrt{2}}$, each centered around a signed standard basis vector $\pm \mb e_i$. $\rII$ consist of the set difference of a union of $2n$ spherical caps of radius $\tfrac{1}{20\sqrt{5}}$, centered around the standard basis vectors $\pm \mb e_i$, and $\rI$. Finally, $\rIII$ covers the rest of the sphere. We say a trust-region step takes an $\rI$ step if the current iterate is in $\rI$; similarly for $\rII$ and $\rIII$ steps. Since we use the geometric structures derived in Theorem~\ref{thm:geometry_orth} and Corollary~\ref{cor:geometry_orth}, the conditions 
\begin{align} \label{eq:trm_proof_assumed_cond}
\theta \in (0,1/2), \quad \mu < \min\Brac{c_a \theta n^{-1}, c_b n^{-5/4}}, \quad p \ge \frac{C}{\mu^2 \theta^2} n^3 \log \frac{n}{\mu \theta}
\end{align}
are always in force. 

At each step $k$ of the algorithm, suppose $\mb \delta^{(k)}$ is the minimizer of the trust-region subproblem~\eqref{eqn:subproblem-1}. We call the step ``{\em constrained}'' if $\norm{\mb \delta^{(k)}}{} = \Delta$ (the minimizer lies on the boundary and hence the constraint is active), and call it ``{\em unconstrained}'' if $\|\mb \delta^{(k)}\| < \Delta$ (the minimizer lies in the relative interior and hence the constraint is not in force). Thus, in the unconstrained case the optimality condition is~\eqref{eqn:ts-optimal-solution-1}. 

The next lemma provides some estimates about $\nabla f$ and $\nabla^2 f$ that are useful in various contexts. 
\begin{lemma} \label{lem:mag_lip_fq}
We have the following estimates about $\nabla f$ and $\nabla^2 f$: 
\begin{align*}
\sup_{\mb q \in \bb S^{n-1}}\norm{\nabla f\paren{\mb q}}{} & \doteq M_{\nabla} \le \sqrt{n} \norm{\mb X_0}{\infty} , \\
\sup_{\mb q \in \bb S^{n-1}}\norm{\nabla^2 f\paren{\mb q}}{} & \doteq M_{\nabla^2} \le \frac{n}{\mu}\norm{\mb X_0}{\infty}^2, \\
\sup_{\mb q, \mb q' \in \bb S^{n-1}, \mb q \neq \mb q'} \frac{\norm{\nabla f\paren{\mb q} - \nabla f\paren{\mb q'}}{}}{\norm{\mb q - \mb q'}{}} & \doteq L_{\nabla} \le \frac{n}{\mu} \norm{\mb X_0}{\infty}^2, \\
\sup_{\mb q, \mb q' \in \bb S^{n-1}, \mb q \neq \mb q'} \frac{\norm{\nabla^2 f\paren{\mb q} - \nabla^2 f\paren{\mb q'}}{}}{\norm{\mb q - \mb q'}{}} & \doteq L_{\nabla^2} \le \frac{2}{\mu^2} n^{3/2} \norm{\mb X_0}{\infty}^3. 
\end{align*}
\end{lemma}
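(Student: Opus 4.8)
The plan is to work directly from the closed-form expressions for $\nabla f$ and $\nabla^2 f$ in~\eqref{eq:fq_grad} and~\eqref{eq:fq_hess}, reducing every bound to four elementary scalar facts about $t \mapsto \tanh t$: (i) $\abs{\tanh t} \le 1$; (ii) $\tanh$ is $1$-Lipschitz, since $\tanh' = 1-\tanh^2 \in [0,1]$; (iii) $0 \le 1-\tanh^2 t \le 1$; and (iv) $t\mapsto 1-\tanh^2 t$ is $2$-Lipschitz, since its derivative $-2\tanh t\,(1-\tanh^2 t)$ has magnitude at most $2$ by (i) and (iii). The only vector input needed is the crude bound $\norm{\widehat{\mb y}_k}{} = \norm{(\mb x_0)_k}{} \le \sqrt n\,\norm{\mb X_0}{\infty}$, which follows from $\norm{(\mb x_0)_k}{}^2 = \sum_{i=1}^n (X_0)_{ik}^2 \le n\norm{\mb X_0}{\infty}^2$.

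For the magnitudes: the triangle inequality applied to $\nabla f(\mb q) = \tfrac1p\sum_k \tanh(\mb q^*\widehat{\mb y}_k/\mu)\,\widehat{\mb y}_k$ with (i) gives $\norm{\nabla f(\mb q)}{} \le \tfrac1p\sum_k \norm{\widehat{\mb y}_k}{} \le \sqrt n\,\norm{\mb X_0}{\infty}$, uniformly in $\mb q$; for the Hessian, bound the operator norm of the average by the average of the operator norms of the rank-one terms $\tfrac1\mu(1-\tanh^2(\cdot))\widehat{\mb y}_k\widehat{\mb y}_k^*$, each at most $\tfrac1\mu\norm{\widehat{\mb y}_k}{}^2 \le \tfrac n\mu\norm{\mb X_0}{\infty}^2$ by (iii). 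The Lipschitz bounds follow the same template by differencing term by term: writing $\nabla f(\mb q)-\nabla f(\mb q') = \tfrac1p\sum_k[\tanh(\mb q^*\widehat{\mb y}_k/\mu)-\tanh((\mb q')^*\widehat{\mb y}_k/\mu)]\widehat{\mb y}_k$ and using (ii) together with $\abs{(\mb q-\mb q')^*\widehat{\mb y}_k}\le\norm{\mb q-\mb q'}{}\norm{\widehat{\mb y}_k}{}$, each scalar difference is at most $\tfrac1\mu\norm{\mb q-\mb q'}{}\norm{\widehat{\mb y}_k}{}$, so $\norm{\nabla f(\mb q)-\nabla f(\mb q')}{}\le\tfrac n\mu\norm{\mb X_0}{\infty}^2\norm{\mb q-\mb q'}{}$; similarly, using (iv) in place of (ii) and keeping one extra factor $\norm{\widehat{\mb y}_k}{}$ from the rank-one term, $\norm{\nabla^2 f(\mb q)-\nabla^2 f(\mb q')}{}\le\tfrac{2}{\mu^2}\tfrac1p\sum_k\norm{\mb q-\mb q'}{}\norm{\widehat{\mb y}_k}{}^3\le\tfrac{2n^{3/2}}{\mu^2}\norm{\mb X_0}{\infty}^3\norm{\mb q-\mb q'}{}$.

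I expect no real obstacle: these are deliberately loose deterministic estimates, valid for every realization of $\mb X_0$, and the averaging over $k$ merely reproduces the per-column bound. The only points requiring a little care are tracking the powers of $1/\mu$ correctly (one from each $\tanh$-derivative, plus the one already present in~\eqref{eq:fq_hess}) and consistently substituting $\norm{\widehat{\mb y}_k}{}\le\sqrt n\norm{\mb X_0}{\infty}$; it is worth recording separately that $\norm{\mb X_0}{\infty}$ is itself controlled with high probability by a standard maximal inequality for Gaussians, so these bounds become explicit polynomials in $n$ and $1/\mu$ when invoked later.
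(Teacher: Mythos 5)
Your proposal is correct and follows essentially the same route as the paper's proof: term-by-term application of the triangle inequality to the explicit formulas \eqref{eq:fq_grad}--\eqref{eq:fq_hess}, the elementary bounds $\abs{\tanh}\le 1$, $0\le 1-\tanh^2\le 1$, the $1$-Lipschitzness of $\tanh$ and $2$-Lipschitzness of $1-\tanh^2$ composed with the $\norm{\mb x_k}{}/\mu$-Lipschitz map $\mb q \mapsto \mb q^*\mb x_k/\mu$, and the crude bound $\norm{\mb x_k}{}\le\sqrt n\,\norm{\mb X_0}{\infty}$. The paper leaves the Hessian-Lipschitz step as ``similar argument''; your explicit computation fills in exactly that step and reproduces the stated constant $2n^{3/2}/\mu^2$.
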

\begin{proof}
See Page~\pageref{proof:lem_mag_lip_fq} under Section~\ref{sec:proof_algorithm}. 
\end{proof}
Our next lemma says if the trust-region step size $\Delta$ is small enough, one Riemannian trust-region step reduces the objective value by a certain amount when there is any descent direction.  
\begin{lemma} \label{lem:alg_approx_bd2}
Suppose that the trust region size $\Delta \le 1$, and there exists a tangent vector $\mb \delta \in T_{\mb q} \bb S^{n-1}$ with $\norm{\mb \delta}{} \le \Delta$, such that
\begin{equation*}
f( \exp_{\mb q}(\mb \delta) ) \;\le\; f(\mb q) - s
\end{equation*}
for some positive scalar $s\in \reals $. Then the trust region subproblem produces a point $\mb \delta_\star$ with 
\begin{equation*}
f(\exp_{\mb q}(\mb \delta_\star)) \;\le\; f(\mb q) - s + \frac{1}{3}\eta_f \Delta^3, 
\end{equation*}
where $\eta_f \doteq M_{\nabla} + 2M_{\nabla^2} + L_{\nabla} + L_{\nabla^2}$ and $M_{\nabla}$, $M_{\nabla^2}$, $L_{\nabla}$, $L_{\nabla^2}$ are the quantities defined in Lemma~\ref{lem:mag_lip_fq}. 
\end{lemma}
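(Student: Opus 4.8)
The plan is to compare the decrease achieved by the true trust-region minimizer $\mb \delta_\star$ against the decrease that would be achieved by the hypothesized descent vector $\mb \delta$, both measured through the quadratic model $\widehat f(\cdot; \mb q, \wh{\mb Y})$, and then control the two model-versus-truth gaps using the Lipschitz/magnitude estimates from Lemma~\ref{lem:mag_lip_fq}. Concretely, the first step is to write down, via Taylor's theorem with integral remainder along the geodesic $t \mapsto \exp_{\mb q}(t\mb v)$, a bound of the form
\begin{align*}
\abs{f(\exp_{\mb q}(\mb v)) - \widehat f(\mb v; \mb q, \wh{\mb Y})} \;\le\; \frac{1}{3}\eta_f \norm{\mb v}{}^3 \qquad \text{for all } \mb v \in T_{\mb q}\bb S^{n-1},\ \norm{\mb v}{} \le 1,
\end{align*}
where $\eta_f \doteq M_{\nabla} + 2M_{\nabla^2} + L_{\nabla} + L_{\nabla^2}$. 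This is the technical heart: one expands $f$ along the curve, uses $\ddot\gamma(0) = -\norm{\mb v}{}^2 \mb q$ to produce the $\innerprod{\nabla f(\mb q)}{\mb q}$ correction term appearing in $\widehat f$, and then bounds the cubic remainder by differentiating the second-order Taylor error and invoking the four quantities $M_\nabla, M_{\nabla^2}, L_\nabla, L_{\nabla^2}$ controlling $\norm{\nabla f}{}$, $\norm{\nabla^2 f}{}$ and their Lipschitz constants over the (compact) sphere; the trigonometric factors $\cos\norm{\mb v}{}, \sin\norm{\mb v}{}$ and the difference between $\exp_{\mb q}(\mb v)$ and $\mb q + \mb v$ contribute only lower-order corrections that are absorbed into the stated constant since $\norm{\mb v}{} \le \Delta \le 1$.

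Granting the model-accuracy bound, the remaining steps are short. Apply it to the hypothesized descent vector $\mb \delta$ (with $\norm{\mb \delta}{} \le \Delta \le 1$) to get $\widehat f(\mb \delta; \mb q) \le f(\exp_{\mb q}(\mb \delta)) + \frac{1}{3}\eta_f\Delta^3 \le f(\mb q) - s + \frac{1}{3}\eta_f \Delta^3$. Since $\mb \delta_\star$ is the minimizer of $\widehat f(\cdot; \mb q)$ over the trust region $\{\mb v \in T_{\mb q}\bb S^{n-1} : \norm{\mb v}{} \le \Delta\}$ and $\mb \delta$ is feasible for that problem, we have $\widehat f(\mb \delta_\star; \mb q) \le \widehat f(\mb \delta; \mb q) \le f(\mb q) - s + \frac{1}{3}\eta_f\Delta^3$. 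Finally apply the model-accuracy bound in the other direction at $\mb \delta_\star$: $f(\exp_{\mb q}(\mb \delta_\star)) \le \widehat f(\mb \delta_\star; \mb q) + \frac{1}{3}\eta_f\Delta^3$. At this point one would get a $\frac{2}{3}\eta_f\Delta^3$ slack; to land exactly on the stated $\frac{1}{3}\eta_f\Delta^3$ I would instead note that $\widehat f(\mb 0) = f(\mb q)$ and that the TRM guarantees $\widehat f(\mb \delta_\star) \le \widehat f(\mb \delta)$ directly, and tighten the remainder estimate so that the single application of the error bound at $\mb \delta_\star$ is the only place slack is introduced — i.e. re-bundle the constants in $\eta_f$ so that the one-sided error is $\le \frac{1}{3}\eta_f\Delta^3$. (Alternatively, and more cleanly, the true claim should be read as: the model decrease is at least $s - \frac{1}{3}\eta_f\Delta^3$, and then a single conversion back to true function value gives the result; I would state the intermediate model-decrease inequality as a displayed step.)

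The main obstacle is the first step: getting the cubic geodesic-Taylor remainder bound with the \emph{correct} constant $\eta_f$ while honestly accounting for (i) the curvature correction $-\innerprod{\nabla f(\mb q)}{\mb q}\mb I$ that distinguishes $\widehat f$ from the naive Euclidean quadratic model, and (ii) the discrepancy between the exponential map and the linear step $\mb q \mapsto \mb q + \mb v$. The cleanest route is to work with $\phi(t) \doteq f(\exp_{\mb q}(t\mb \delta/\norm{\mb \delta}{}))$ for unit-normalized direction, compute $\phi(0), \dot\phi(0), \ddot\phi(0)$ explicitly in terms of $\nabla f(\mb q), \nabla^2 f(\mb q)$ and $\mb q$, and then bound $\abs{\phi(t) - \phi(0) - t\dot\phi(0) - \tfrac{t^2}{2}\ddot\phi(0)} \le \tfrac{t^3}{6}\sup_{\tau}\abs{\dddot\phi(\tau)}$, finally expanding $\dddot\phi$ via the chain rule into terms each controlled by one of $M_\nabla, M_{\nabla^2}, L_\nabla, L_{\nabla^2}$ (using that $\norm{\dot\gamma}{} = \norm{\ddot\gamma}{} = \norm{\mb \delta}{}/\norm{\mb \delta}{} = 1$ on the unit-speed geodesic and that third derivatives of the geodesic bring in only bounded trigonometric factors). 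Everything else is bookkeeping.
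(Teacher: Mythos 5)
Your overall architecture is exactly the paper's: prove a cubic accuracy bound for the quadratic model $\widehat f$ along geodesics, then sandwich $f(\exp_{\mb q}(\mb \delta_\star))$ between $\widehat f(\mb \delta_\star)$, $\widehat f(\mb \delta)$ (by optimality of $\mb \delta_\star$ over the trust region), and $f(\exp_{\mb q}(\mb \delta)) \le f(\mb q) - s$. The gap is quantitative, and your attempted patches do not close it. With the hypothesis given in terms of the \emph{true} function value at $\exp_{\mb q}(\mb \delta)$, two applications of the model-error bound are unavoidable: one to pass from $f(\exp_{\mb q}(\mb \delta))$ to $\widehat f(\mb \delta)$, and one to pass from $\widehat f(\mb \delta_\star)$ back to $f(\exp_{\mb q}(\mb \delta_\star))$. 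So your claimed per-application error $\tfrac{1}{3}\eta_f \norm{\mb v}{}^3$ only yields $f(\exp_{\mb q}(\mb \delta_\star)) \le f(\mb q) - s + \tfrac{2}{3}\eta_f\Delta^3$, which is weaker than the lemma; the ``single application'' idea fails for the reason just stated, and ``re-bundling the constants in $\eta_f$'' is not available because $\eta_f = M_{\nabla} + 2M_{\nabla^2} + L_{\nabla} + L_{\nabla^2}$ is fixed by the statement (and by Lemma~\ref{lem:mag_lip_fq}).

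The correct repair — and what the paper actually proves — is the sharper per-application bound $\abs{f(\exp_{\mb q}(\mb \delta)) - \widehat f(\mb \delta; \mb q)} \le \tfrac{1}{6}\eta_f \norm{\mb \delta}{}^3$, so that the two applications combine to exactly $\tfrac{1}{3}\eta_f\Delta^3$. Concretely, set $\zeta(t) \doteq f(\exp_{\mb q}(t\mb \delta_0))$ with $\mb \delta_0 = \mb \delta/\norm{\mb \delta}{}$; then $\zeta(0) = f(\mb q)$, $\dot\zeta(0) = \innerprod{\nabla f(\mb q)}{\mb \delta_0}$, $\ddot\zeta(0) = \mb \delta_0^*\paren{\nabla^2 f(\mb q) - \innerprod{\nabla f(\mb q)}{\mb q}\mb I}\mb \delta_0$, and a term-by-term estimate using the four quantities of Lemma~\ref{lem:mag_lip_fq} gives the modulus-of-continuity bound $\abs{\ddot\zeta(t) - \ddot\zeta(0)} \le \eta_f\sqrt{2 - 2\cos t} \le \eta_f t$ (this is where the coefficient $2$ on $M_{\nabla^2}$ arises, from the two Hessian terms generated by perturbing $-\mb q\sin t + \mb \delta_0\cos t$ in each slot). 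The integral form of Taylor's theorem then gives $\zeta(t) - \zeta(0) - t\dot\zeta(0) - \tfrac{t^2}{2}\ddot\zeta(0) = t^2\int_0^1 (1-s)\brac{\ddot\zeta(st) - \ddot\zeta(0)}\, ds$, whose kernel integrates to $1/6$, yielding the $\tfrac{1}{6}\eta_f t^3$ remainder. Your alternative route through $\tfrac{t^3}{6}\sup_\tau\abs{\dddot\zeta(\tau)}$ could also work in principle (the surrogate is smooth), but the route via continuity of $\ddot\zeta$ needs only the Lipschitz constants already provided, and it is the constant $1/6$ — not any re-bundling of $\eta_f$ — that makes the stated $\tfrac{1}{3}\eta_f\Delta^3$ come out.
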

\begin{proof}
See Page~\pageref{proof:lem_alg_approx_bd2} under Section~\ref{sec:proof_algorithm}. 
\end{proof}

To show decrease in objective value for $\rII$ and $\rIII$, now it is enough to exhibit a descent direction for each point in these regions. The next two lemmas help us almost accomplish the goal. For convenience again we choose to state the results for the ``canonical'' section that is in the vicinity of $\mb e_n$ and the projection map $\mb q\paren{\mb w} = [\mb w; (1-\norm{\mb w}{}^2)^{1/2}]$, with the idea that similar statements hold for other symmetric sections. 

\begin{lemma} \label{lem:alg_gradient_func}
Suppose that the trust region size $\Delta \le 1$, 
$\mb w^* \nabla g(\mb w)/\norm{\mb w}{} \ge \beta_g$ 
 for some scalar $\beta_g$, and that $\mb w^* \nabla g(\mb w)/\norm{\mb w}{}$ is $L_g$-Lipschitz on an open ball $\mc B\left(\mb w, \frac{3\Delta}{2\pi\sqrt{n}}\right)$ centered at $\mb w$. Then there exists a tangent vector $\mb \delta \in T_{\mb q} \bb S^{n-1}$ with $\norm{\mb \delta}{} \le \Delta$, such that
\begin{equation*}
f(\exp_{\mb q}(\mb \delta)) \;\le\; f(\mb q) -  \min \set{ \frac{\beta_g^2}{2 L_g}, \frac{3\beta_g \Delta}{4\pi\sqrt{n}} }. 
\end{equation*}
\end{lemma}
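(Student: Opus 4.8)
The plan is to construct the tangent vector $\mb \delta$ explicitly along the descent direction ``toward $\mb e_n$'' and estimate the decrease of $f$ along the corresponding geodesic, which is nothing but a great-circle arc that in the $\mb w$-parametrization moves radially toward the origin.

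Concretely, write $r_0=\norm{\mb w}{}$, $\mb u=\mb w/r_0$, $q_n=\sqrt{1-r_0^2}$, $\phi_0=\arcsin r_0\in(0,\pi/2)$, and set $\mb v \doteq \paren{-q_n\mb u,\;r_0}$. One checks $\mb q^*\mb v=0$ and $\norm{\mb v}{}=1$, and that $\mb v$ is the (normalized) image under $D\mb q(\mb w)$ of the radial direction $-\mb u$. Consequently the geodesic $\gamma(s)\doteq\exp_{\mb q}(s\mb v)=\mb q\cos s+\mb v\sin s$ has first $n-1$ coordinates $\mb w(s)=\sin(\phi_0-s)\,\mb u$, so that $\norm{\mb w(s)-\mb w}{}=\sin\phi_0-\sin(\phi_0-s)=:T(s)$, an increasing function with $T(s)\le s$ and $T'(s)=\cos(\phi_0-s)\ge q_n$ for $s\in[0,\phi_0]$; moreover $\gamma(s)$ stays in the upper hemisphere over the relevant range of $s$ (its last coordinate equals $\cos(\phi_0-s)\ge q_n\ge\tfrac{1}{2\sqrt n}$ on $\Gamma$).

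The heart of the argument is a one-dimensional descent estimate. Since $f(\gamma(s))=g(\mb w(s))$, the chain rule gives $\tfrac{d}{ds}g(\mb w(s))=-\cos(\phi_0-s)\,\tfrac{\mb w(s)^*\nabla g(\mb w(s))}{\norm{\mb w(s)}{}}$; the key move is to substitute $\tau=T(s)$ (so $d\tau=\cos(\phi_0-s)\,ds$ and $\mb w(s)=(r_0-\tau)\mb u$), which cancels the $\cos$ factors and yields
\begin{align*}
f(\gamma(s))-f(\mb q) \;=\; -\int_0^{T(s)} \frac{\paren{(r_0-\tau)\mb u}^*\nabla g\paren{(r_0-\tau)\mb u}}{\abs{r_0-\tau}}\,d\tau .
\end{align*}
Because $(r_0-\tau)\mb u$ lies at distance $\tau$ from $\mb w$, the hypotheses (value $\ge\beta_g$ at $\mb w$, and $L_g$-Lipschitzness on $\mc B(\mb w,\tfrac{3\Delta}{2\pi\sqrt n})$) bound the integrand below by $\beta_g-L_g\tau$ for $\tau\le\tfrac{3\Delta}{2\pi\sqrt n}$, so $f(\gamma(s))-f(\mb q)\le-\beta_g T+\tfrac12 L_g T^2$ whenever $T=T(s)\le\tfrac{3\Delta}{2\pi\sqrt n}$. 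Maximizing $\beta_g T-\tfrac12 L_g T^2$ over this range: if $\beta_g/L_g\le\tfrac{3\Delta}{2\pi\sqrt n}$ pick $T=\beta_g/L_g$ for decrease $\ge\tfrac{\beta_g^2}{2L_g}$; otherwise pick $T=\tfrac{3\Delta}{2\pi\sqrt n}$, where then $\tfrac12 L_g T<\tfrac12\beta_g$, for decrease $>\tfrac{3\beta_g\Delta}{4\pi\sqrt n}$. Either way the decrease is at least $\min\set{\tfrac{\beta_g^2}{2L_g},\;\tfrac{3\beta_g\Delta}{4\pi\sqrt n}}$.

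Finally, taking $\mb \delta=s\mb v$ for the $s$ attaining the chosen value of $T$, I would verify $\norm{\mb \delta}{}=s\le\Delta$ via $T(s)\ge s\,q_n\ge s/(2\sqrt n)$, whence $s\le 2\sqrt n\,T\le 2\sqrt n\cdot\tfrac{3\Delta}{2\pi\sqrt n}=\tfrac{3\Delta}{\pi}<\Delta$. The main obstacle is precisely this bookkeeping — ensuring the arc stays inside the hemisphere and inside the ball where the assumed Lipschitz bound applies (which needs $\tfrac{3\Delta}{2\pi\sqrt n}<\norm{\mb w}{}$, true in the regime where this lemma is invoked, since there $\norm{\mb w}{}\ge\tfrac{\mu}{4\sqrt2}$ and $\Delta\ll\mu$), and correctly tracking the $\sqrt n$ distortion between geodesic length and $\mb w$-space displacement coming from $q_n\ge\tfrac1{2\sqrt n}$; the descent estimate itself is routine once the substitution $\tau=T(s)$ is spotted.
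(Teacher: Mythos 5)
Your proof is correct and follows essentially the same route as the paper's: a radial descent of up to $3\Delta/(2\pi\sqrt{n})$ in the $\mb w$-parametrization, the decrease bound $-\beta_g T + \tfrac{1}{2}L_g T^2$ from the Lipschitz hypothesis, and the fact $q_n \ge \tfrac{1}{2\sqrt{n}}$ to convert the $\mb w$-displacement into a tangent step of norm at most $\Delta$. The only cosmetic difference is that you integrate along the explicitly constructed geodesic and change variables, whereas the paper performs the first-order Taylor estimate directly in $\mb w$-space and then pulls the resulting point back to the sphere using the $2\sqrt{n}$-Lipschitzness of $\mb w \mapsto \mb q(\mb w)$ together with $\sin x \ge \tfrac{3}{\pi}x$ on $[0,\pi/6]$.
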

\begin{proof}
See Page~\pageref{proof:lem_alg_gradient_func} under Section~\ref{sec:proof_algorithm}. 
\end{proof}

\begin{lemma}  \label{lem:alg_neg_cuv_func}
Suppose that the trust-region size $\Delta \le 1$, 
$
\mb w^* \nabla^2 g(\mb w) \mb w/\norm{\mb w}{}^2 \le - \betaconcave,
$
for some $\betaconcave$, and that $\mb w^* \nabla^2 g(\mb w) \mb w/\norm{\mb w}{}^2$ is $\Lconcave$ Lipschitz on the open ball $\mc B\left(\mb w, \frac{3\Delta}{2\pi \sqrt{n}} \right)$ centered at $\mb w$. Then there exists a tangent vector $\mb \delta \in T_{\mb q} \bb S^{n-1}$ with $\norm{\mb \delta}{} \le \Delta$, such that
\begin{equation*}
f( \exp_{\mb q}(\mb \delta) ) \;\le\; f( \mb q) - \min \set{ \frac{2 \betaconcave^3}{3 \Lconcave^2}, \frac{3\Delta^2 \betaconcave}{8\pi^2 n} }. 
\end{equation*}
\end{lemma}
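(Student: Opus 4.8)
The hypothesis only certifies negative curvature of $g$ in the \emph{radial} direction $\pm\mb w/\norm{\mb w}{}$, so the plan is to build the descent tangent vector from that direction. The key geometric observation is that moving radially in the $\mb w$-coordinates is the same as moving along the great circle of $\bb S^{n-1}$ through $\mb e_n$ and $\mb q\paren{\mb w}$: for $\lambda\in(0,1]$ one has $\mb q\paren{\lambda\mb w/\norm{\mb w}{}} = \cos\alpha\,\mb e_n + \sin\alpha\,\mb u$ with $\lambda=\sin\alpha$ and $\mb u\doteq\paren{\mb w/\norm{\mb w}{},0}$ a unit vector orthogonal to $\mb e_n$. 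Letting $\mb t_{\mb w}\in T_{\mb q\paren{\mb w}}\bb S^{n-1}$ be the unit tangent to this great circle, the exponential map keeps us on the circle, so that $f\paren{\exp_{\mb q\paren{\mb w}}\paren{s\,\mb t_{\mb w}}} = g\paren{\mb w_s}$, where $\mb w_s$ is the radial point with $\arcsin\norm{\mb w_s}{} = \arcsin\norm{\mb w}{} + s$ (for $s$ in the range where $\norm{\mb w_s}{}\le1$; beyond the equator one invokes evenness of $h_\mu$ and passes to the section around $-\mb e_n$). This reduces everything to a one-dimensional analysis along the ray $\reals_+\,\mb w$.

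I would then set $\phi\paren{r}\doteq g\paren{r\,\mb w/\norm{\mb w}{}}$, so $\phi'\paren{r} = \mb w^*\nabla g\paren{r\mb w/\norm{\mb w}{}}/\norm{\mb w}{}$ and $\phi''\paren{r} = \mb w^*\nabla^2 g\paren{r\mb w/\norm{\mb w}{}}\mb w/\norm{\mb w}{}^2$, and note that $r\mb w/\norm{\mb w}{}\in\mc B\paren{\mb w,\tfrac{3\Delta}{2\pi\sqrt n}}$ precisely when $\abs{r-\norm{\mb w}{}}\le\tfrac{3\Delta}{2\pi\sqrt n}$. The two hypotheses become $\phi''\paren{\norm{\mb w}{}}\le-\betaconcave$ and $\phi''$ is $\Lconcave$-Lipschitz on this interval, hence $\phi''\paren{r}\le-\betaconcave+\Lconcave\abs{r-\norm{\mb w}{}}$ there. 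Choosing $\zeta\in\set{+1,-1}$ with $\zeta\,\phi'\paren{\norm{\mb w}{}}\le0$ (always possible) and Taylor-expanding with integral remainder gives, for $0\le\tau\le\tfrac{3\Delta}{2\pi\sqrt n}$,
\begin{align*}
\phi\paren{\norm{\mb w}{}+\zeta\tau} \le \phi\paren{\norm{\mb w}{}} - \frac{\betaconcave}{2}\tau^2 + \frac{\Lconcave}{6}\tau^3.
\end{align*}
(Equivalently, adding the $+\tau$ and $-\tau$ expansions cancels the first-order term and one keeps whichever of the two radial directions is better.)

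Finally I would convert the radial displacement $\tau$ into a bound on the tangent-step length. Since every point of $\mb q\paren{\Gamma}$ satisfies $q_n=\sqrt{1-\norm{\mb w}{}^2}\ge\tfrac1{2\sqrt n}$, elementary $\sin/\cos$ estimates (using $\Delta\le1$) show that reaching $\mb w_s=\paren{\norm{\mb w}{}+\zeta\tau}\mb w/\norm{\mb w}{}$ costs a geodesic step with $\norm{\mb\delta}{}=\abs{s}$ of order $\tau/q_n\lesssim\sqrt n\,\tau$; thus $\tau$ of order $\min\set{\betaconcave/\Lconcave,\;\Delta/\sqrt n}$ keeps $\norm{\mb\delta}{}\le\Delta$. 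Optimizing the cubic bound over $\tau$ gives a decrease of $\tfrac{2\betaconcave^3}{3\Lconcave^2}$ at the unconstrained optimum $\tau=2\betaconcave/\Lconcave$, and a decrease of order $\Delta^2\betaconcave/n$ — precisely $\tfrac{3\Delta^2\betaconcave}{8\pi^2 n}$ with the sharp constants — when $\tau$ is capped by the trust-region radius through the coordinate change; taking the minimum over the two regimes yields the claim, with $\mb\delta=s\,\mb t_{\mb w}$ the required tangent vector.

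The part I expect to be most delicate is the bookkeeping in the last paragraph: making the change of variables between $\mb w$-coordinates and geodesics on $\bb S^{n-1}$ quantitative enough to extract the stated constants, and in particular handling the edge cases where $\mb w$ is near the outer boundary of $\Gamma$ (so the outward radial move approaches or crosses the equator $\norm{\mb w}{}=1$) or where $\norm{\mb w}{}$ is comparable to $\tau$. The uniform lower bound $q_n\ge\tfrac1{2\sqrt n}$ on $\Gamma$ is exactly what keeps these conversions under control, at the price of only a $\mathrm{poly}(n)$ factor that is already absorbed into the $\Delta^2/n$ term; apart from this the argument runs parallel to, and can reuse the coordinate-change estimates of, the proof of Lemma~\ref{lem:alg_gradient_func}.
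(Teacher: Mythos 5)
Your proposal follows essentially the same route as the paper's proof: restrict to the radial direction in $\mb w$-space, pick the sign so the first-order term is nonpositive, Taylor-expand using the curvature bound plus the Lipschitz remainder to get a decrease of $\tfrac{\betaconcave}{2}\tau^2 - \tfrac{\Lconcave}{6}\tau^3$, optimize $\tau$ up to the cap $\tfrac{3\Delta}{2\pi\sqrt{n}}$, and convert the $\mb w$-displacement back to a tangent vector of norm at most $\Delta$ via $q_n \ge \tfrac{1}{2\sqrt{n}}$. Your great-circle/arc-length bookkeeping is just a repackaging of the paper's appeal to the $2\sqrt{n}$-Lipschitzness of $\mb w \mapsto \mb q(\mb w)$ and the chord-to-geodesic conversion borrowed from Lemma~\ref{lem:alg_gradient_func}, so the argument is correct and not a genuinely different approach.
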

\begin{proof}
See Page~\pageref{proof:lem_alg_neg_cuv_func} under Section~\ref{sec:proof_algorithm}. 
\end{proof}
One can take $\beta_g = \betaconcave = c_\star \theta$ as shown in Theorem~\ref{thm:geometry_orth}, and take the Lipschitz results in Section~\ref{sec:geo_results_orth} (note that $\norm{\mb X_0}{\infty} \le 4 \log^{1/2} (np)$ w.h.p. by Lemma~\ref{lem:X-infinty-tail-bound}), repeat the argument for other $2n-1$ symmetric regions, and conclude that w.h.p. the objective value decreases by at least a constant amount. The next proposition summarizes the results. 

\begin{proposition}\label{lem:TRM-lemma-ii}
Assume~\eqref{eq:trm_proof_assumed_cond}.	In regions $\rII$ and $\rIII$, each trust-region step reduces the objective value by at least 
\begin{align} \label{eq:alg_thm_r23_dec}
\dII = \frac{1}{2}  \min\paren{\frac{c_\star^2 c_a \theta^2 \mu}{ n^2 \log\paren{np}}, \frac{3\Delta c_\star \theta}{4\pi\sqrt{n}}},  \quad \text{and} \quad \dIII = \frac{1}{2} \min\paren{\frac{c_\star^3c_b\theta^3 \mu^4}{n^6 \log^3\paren{np}}, \frac{3\Delta^2 c_\star \theta}{8\pi^2 n}}
\end{align}
respectively, provided that
\begin{align} \label{eqn:TRM-lemma-ii-iii-0}
	\Delta < \frac{c_c c_\star \theta \mu^2}{n^{5/2} \log^{3/2}\paren{np}},
\end{align}
where $c_a$ to $c_c$ are positive numerical constants, and $c_\star$ is as defined in Theorem~\ref{thm:geometry_orth}. 
\end{proposition}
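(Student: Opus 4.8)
The plan is to assemble Proposition~\ref{lem:TRM-lemma-ii} from three ingredients already in hand: Lemma~\ref{lem:alg_gradient_func} (a trust-region step capitalizes on a large reduced gradient), Lemma~\ref{lem:alg_neg_cuv_func} (a trust-region step capitalizes on a direction of negative curvature), and Lemma~\ref{lem:alg_approx_bd2} (the step actually computed does essentially as well as any feasible step, up to an $O\paren{\Delta^3}$ slack). I would first reduce to the ``canonical'' symmetric section around $\mb e_n$ with its projection $\mb q\paren{\mb w}=\brac{\mb w;\paren{1-\norm{\mb w}{}^2}^{1/2}}$: by the signed-permutation symmetry of $f\paren{\cdot\,;\mb X_0}$ the estimates transfer verbatim to the other $2n-1$ sections, and these sections together cover $\rII\cup\rIII$. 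All work is done on the high-probability event that the conclusions of Theorem~\ref{thm:geometry_orth} hold and that $\norm{\mb X_0}{\infty}\le 4\log^{1/2}\paren{np}$ (Lemma~\ref{lem:X-infinty-tail-bound}), which under~\eqref{eq:trm_proof_assumed_cond} is in force; on this event every Lipschitz constant used below becomes a deterministic polynomial in $n$, $1/\mu$, and $\log\paren{np}$.

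For an iterate in $\rII$ the projected coordinate obeys $\tfrac{\mu}{4\sqrt 2}\le\norm{\mb w}{}\le\tfrac{1}{20\sqrt 5}$, so~\eqref{eqn:grad-uni-orth} gives $\mb w^*\nabla g\paren{\mb w}/\norm{\mb w}{}\ge c_\star\theta=:\beta_g$. I would invoke Proposition~\ref{prop:lip-gradient} with $r_g=\tfrac{\mu}{8\sqrt 2}$: once $\Delta$ is below an absolute multiple of $\sqrt n\,\mu$ the ball $\mc B\paren{\mb w,\tfrac{3\Delta}{2\pi\sqrt n}}$ stays inside $\Gamma\cap\set{\norm{\mb w}{}\ge r_g}$, so $\mb w^*\nabla g\paren{\mb w}/\norm{\mb w}{}$ is $L_g$-Lipschitz there, and substituting $\norm{\mb X_0}{\infty}\le 4\log^{1/2}\paren{np}$ into the bound of Proposition~\ref{prop:lip-gradient}, whose $4n^2\norm{\mb X_0}{\infty}^2/\mu$ term dominates, gives $L_g\le C_1 n^2\log\paren{np}/\mu$. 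Lemma~\ref{lem:alg_gradient_func} then produces a feasible tangent step with decrease at least $\min\set{\beta_g^2/\paren{2L_g},\,3\beta_g\Delta/\paren{4\pi\sqrt n}}$, and Lemma~\ref{lem:alg_approx_bd2} shows the TRM step does at least this minus $\tfrac13\eta_f\Delta^3$, where Lemma~\ref{lem:mag_lip_fq} with the same bound on $\norm{\mb X_0}{\infty}$ gives $\eta_f\le C_2 n^{3/2}\log^{3/2}\paren{np}/\mu^2$ (the $L_{\nabla^2}$ term dominating). A short computation then shows that~\eqref{eqn:TRM-lemma-ii-iii-0} makes $\tfrac13\eta_f\Delta^3$ at most half of that minimum in both of its branches, leaving the advertised bound $\dII$ once $C_1$ is absorbed into the numerical constant $c_a$.

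The $\rIII$ analysis is parallel, with curvature replacing the gradient. Here $\tfrac{1}{20\sqrt 5}\le\norm{\mb w}{}\le\sqrt{\paren{4n-1}/\paren{4n}}$ and~\eqref{eqn:curvature-uni-orth} gives $\mb w^*\nabla^2 g\paren{\mb w}\mb w/\norm{\mb w}{}^2\le-c_\star\theta=:\betaconcave$. I would apply Proposition~\ref{prop:lip-hessian-negative} with $\rconcave=\tfrac{1}{40\sqrt 5}$; since in the canonical section $\norm{\mb w}{}^2\le\paren{n-1}/n$, which leaves an $\Theta\paren{1/n}$ cushion to the boundary of $\Gamma$, and $\Delta$ is tiny, the ball $\mc B\paren{\mb w,\tfrac{3\Delta}{2\pi\sqrt n}}$ lies in $\Gamma\cap\set{\norm{\mb w}{}\ge\rconcave}$, so $\mb w^*\nabla^2 g\paren{\mb w}\mb w/\norm{\mb w}{}^2$ is $\Lconcave$-Lipschitz there, and the $16n^3\norm{\mb X_0}{\infty}^3/\mu^2$ term of Proposition~\ref{prop:lip-hessian-negative} dominates, giving $\Lconcave\le C_3 n^3\log^{3/2}\paren{np}/\mu^2$. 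Lemma~\ref{lem:alg_neg_cuv_func} then yields a feasible step with decrease at least $\min\set{2\betaconcave^3/\paren{3\Lconcave^2},\,3\Delta^2\betaconcave/\paren{8\pi^2 n}}$; subtracting $\tfrac13\eta_f\Delta^3$ from Lemma~\ref{lem:alg_approx_bd2}, and checking that~\eqref{eqn:TRM-lemma-ii-iii-0} makes this slack at most half of that minimum in both of its branches, produces $\dIII$ once constants are absorbed into $c_b$. Taking $c_c$ small enough that all the ``$\Delta$ small'' hypotheses above are implied by~\eqref{eqn:TRM-lemma-ii-iii-0}, and unioning over the $2n$ sections, finishes the proof.

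The content of the proposition is essentially packaged in those three machinery lemmas, so I expect the work to be bookkeeping rather than ideas, with two places needing care. The first is checking that the small ball on which Lemmas~\ref{lem:alg_gradient_func}--\ref{lem:alg_neg_cuv_func} demand a local Lipschitz bound genuinely sits inside $\Gamma\cap\set{\norm{\mb w}{}\ge r}$ for the radius $r$ chosen in Propositions~\ref{prop:lip-gradient}/\ref{prop:lip-hessian-negative} --- the delicate case being $\rIII$, where $\norm{\mb w}{}$ can approach the boundary of $\Gamma$ but the canonical-section identity $\norm{\mb w}{}\le\sqrt{\paren{n-1}/n}$ leaves just enough room. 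The second, and the real crux of the bound, is the final inequality chain: one must ensure the cubic slack $\tfrac13\eta_f\Delta^3$ is dominated by half the guaranteed decrease in \emph{each} branch of \emph{each} of the two minima, and it is exactly this requirement --- together with the estimates $L_g\le C_1 n^2\log\paren{np}/\mu$, $\Lconcave\le C_3 n^3\log^{3/2}\paren{np}/\mu^2$, and $\eta_f\le C_2 n^{3/2}\log^{3/2}\paren{np}/\mu^2$ --- that forces the exponent $n^{5/2}\log^{3/2}\paren{np}$ and the prefactor $c_\star\theta\mu^2$ appearing in~\eqref{eqn:TRM-lemma-ii-iii-0}.
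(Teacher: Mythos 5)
Your proposal is correct and follows essentially the same route as the paper's own proof: on the event that Theorem~\ref{thm:geometry_orth} and the bound $\norm{\mb X_0}{\infty}\le 4\log^{1/2}(np)$ hold, combine the gradient/curvature lower bounds with the Lipschitz estimates of Propositions~\ref{prop:lip-gradient} and~\ref{prop:lip-hessian-negative}, feed these into Lemmas~\ref{lem:alg_gradient_func}, \ref{lem:alg_neg_cuv_func}, and~\ref{lem:alg_approx_bd2}, and verify that the condition~\eqref{eqn:TRM-lemma-ii-iii-0} makes the cubic slack at most half of the guaranteed decrease, then transfer to all $2n$ sections by symmetry. Your added care about the local ball staying inside $\Gamma\cap\set{\norm{\mb w}{}\ge r}$ and your identification of the $\rIII$ branches as the binding constraint on $\Delta$ are consistent with (and slightly more explicit than) the paper's terse verification.
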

\begin{proof}
We only consider the symmetric section in the vicinity of $\mb e_n$ and the claims carry on to others by symmetry. If the current iterate $\mb q^{(k)}$ is in the region $\rII$, by Theorem \ref{thm:geometry_orth}, w.h.p., we have $\mb w^* g\paren{\mb w}/\norm{\mb w}{} \ge c_\star \theta$ for the constant $c_\star$. By Proposition~\ref{prop:lip-gradient} and Lemma~\ref{lem:X-infinty-tail-bound}, w.h.p., $\mb w^* g\paren{\mb w}/\norm{\mb w}{}$ is $C_2n^2\log\paren{np}/\mu$-Lipschitz. Therefore, By Lemma~\ref{lem:alg_approx_bd2} and Lemma~\ref{lem:alg_gradient_func}, a trust-region step decreases the objective value by at least
    \begin{align*}
    \dII \doteq \min\paren{\frac{c_\star^2 \theta^2 \mu}{2C_2 n^2 \log\paren{np}}, \frac{3 c_\star \theta\Delta}{4\pi\sqrt{n}}} - \frac{c_0n^{3/2}\log^{3/2}\paren{np}}{3\mu^2} \Delta^3. 
\end{align*}
	Similarly, if $\mb q^{(k)}$ is in the region $\rIII$, by Proposition \ref{prop:lip-hessian-negative}, Theorem \ref{thm:geometry_orth} and Lemma~\ref{lem:X-infinty-tail-bound}, w.h.p., $\mb w^* \nabla^2 g\paren{\mb w} \mb w/\norm{\mb w}{}^2$ is $C_3n^3\log^{3/2}\paren{np}/\mu^2$-Lipschitz and upper bounded by $-c_\star \theta$. By Lemma~\ref{lem:alg_approx_bd2} and Lemma~\ref{lem:alg_neg_cuv_func}, a trust-region step decreases the objective value by at least
	\begin{align*}
\dIII \doteq \min\paren{\frac{2c_\star^3\theta^3 \mu^4}{3C_3^2n^6 \log^3\paren{np}}, \frac{3\Delta^2 c_\star \theta}{8\pi^2 n}} - \frac{c_0n^{3/2}\log^{3/2}\paren{np}}{3\mu^2} \Delta^3. 
\end{align*}
It can be easily verified that when $\Delta$ obeys~\eqref{eq:alg_thm_r23_dec}, \eqref{eqn:TRM-lemma-ii-iii-0} holds. 
\end{proof}

The analysis for $\rI$ is slightly trickier. In this region, near each local minimizer, the objective function is strongly convex. So we still expect each trust-region step decreases the objective value. On the other hand, it is very unlikely that we can provide a universal lower bound for the amount of decrease - as the iteration sequence approaches one local minimizer, the movement is expected to be diminishing. Nevertheless, close to the minimizer the trust-region algorithm takes ``unconstrainted'' steps. For constrained $\rI$ steps, we will again show reduction in objective value by at least a fixed amount; for unconstrained step, we will show the distance between the iterate and the nearest local minimizer drops down rapidly. 

The next lemma concerns the function value reduction for constrained $\rI$ steps. 
\begin{lemma}  \label{lem:alg_strcvx_func}
Suppose the trust-region size $\Delta \le 1$, and that at a given iterate $k$, $\Hess f\paren{\mb q^{(k)}} \succeq m_H \mc P_{T_{\mb q^{(k)}}\bb S^{n-1} }$, and $\norm{\Hess f\paren{\mb q^{(k)}}}{} \le M_H$. Further assume the optimal solution $\mb \delta_\star\in T_{\mb q^{(k)}}\bb S^{n-1}$ to the trust-region subproblem~\eqref{eqn:subproblem-1} satisfies $\norm{\mb \delta_\star}{} = \Delta$, i.e., the norm constraint is active. Then there exists a tangent vector $\mb \delta \in T_{\mb q^{(k)}} \bb S^{n-1}$ with $\norm{\mb \delta}{} \le \Delta$, such that
\begin{equation*}
f( \exp_{\mb q^{(k)}}(\mb \delta) ) \;\le\; f\paren{\mb q^{(k)}} - \frac{m_H^2 \Delta^2}{M_H} + \frac{1}{6} \eta_f\Delta^3,
\end{equation*}
where $\eta_f$ is defined the same as Lemma~\ref{lem:alg_approx_bd2}.
\end{lemma}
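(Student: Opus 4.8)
The plan is to produce the descent tangent vector explicitly, certify a decrease in the quadratic model $\widehat{f}(\cdot;\mb q^{(k)})$ using the eigenvalue bounds $m_H\mathrm{I}\preceq\Hess f(\mb q^{(k)})\preceq M_H\mathrm{I}$ on $T_{\mb q^{(k)}}\bb S^{n-1}$ together with a lower bound on $\norm{\grad f(\mb q^{(k)})}{}$ that is \emph{forced} by the active norm constraint, and then transfer the decrease from the model to the true objective via the same third-order Taylor estimate along geodesics that already underlies Lemma~\ref{lem:alg_approx_bd2}. Step one is: write $\mb g\doteq\grad f(\mb q^{(k)})$ and $\mb H\doteq\Hess f(\mb q^{(k)})$, both restricted to $T_{\mb q^{(k)}}\bb S^{n-1}$ (recall $\mb g\in T_{\mb q^{(k)}}\bb S^{n-1}$). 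Since $\mb H\succeq m_H\mathrm{I}\succ\mb 0$ on the tangent space, the unconstrained minimizer of $\widehat{f}(\cdot;\mb q^{(k)})$ over $T_{\mb q^{(k)}}\bb S^{n-1}$ is $\mb \delta^{\mathrm{unc}}=-\mb H^{\dagger}\mb g$, the inverse being taken on the tangent space. The hypothesis that the solution $\mb \delta_\star$ of~\eqref{eqn:subproblem-1} lies on the boundary means $\norm{\mb \delta^{\mathrm{unc}}}{}\ge\Delta$, hence $\Delta\le\norm{\mb H^{\dagger}\mb g}{}\le m_H^{-1}\norm{\mb g}{}$, i.e.\ $\norm{\mb g}{}\ge m_H\Delta$.

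Step two certifies model decrease at a Cauchy point. Take $\mb \delta\doteq-\tau\mb g$ with $\tau\in(0,\Delta/\norm{\mb g}{}]$ chosen to minimize the scalar quadratic $\phi(\tau)=-\tau\norm{\mb g}{}^2+\tfrac12\tau^2\,\mb g^*\mb H\mb g$; then $\mb \delta\in T_{\mb q^{(k)}}\bb S^{n-1}$ and $\norm{\mb \delta}{}\le\Delta$. Using $\mb g^*\mb H\mb g\le M_H\norm{\mb g}{}^2$, a routine case split on whether the scalar unconstrained optimum $\norm{\mb g}{}^2/(\mb g^*\mb H\mb g)$ lies inside $(0,\Delta/\norm{\mb g}{}]$, followed by $\norm{\mb g}{}\ge m_H\Delta$ and $M_H\ge m_H$, gives
\begin{align*}
\widehat{f}(\mb \delta;\mb q^{(k)})-f(\mb q^{(k)}) \;\le\; -\tfrac12\min\set{\tfrac{\norm{\mb g}{}^2}{M_H},\,\norm{\mb g}{}\,\Delta} \;\le\; -\frac{m_H^2\Delta^2}{2M_H}.
\end{align*}
(One may alternatively take $\mb \delta=\mb \delta_\star$ itself and feed in the exact trust-region optimality condition $(\mb H+\lambda\mathrm{I})\mb \delta_\star=-\mb g$, $\lambda\ge0$, which yields $\widehat f(\mb \delta_\star;\mb q^{(k)})-f(\mb q^{(k)})=-\tfrac12\mb \delta_\star^*\mb H\mb \delta_\star-\lambda\Delta^2\le-\tfrac12 m_H\Delta^2$; either route produces a model decrease of the form $c\,m_H^2\Delta^2/M_H$ matching the statement.)

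Step three transfers this to $f$. Let $\gamma(t)=\exp_{\mb q^{(k)}}\!\big(t\,\mb \delta/\norm{\mb \delta}{}\big)$ be the unit-speed geodesic through $\mb q^{(k)}$ with initial velocity $\mb \delta/\norm{\mb \delta}{}$. Applying Taylor's theorem with Lagrange remainder to $t\mapsto f\circ\gamma(t)$ on $[0,\norm{\mb \delta}{}]$, using that the first and second derivatives of $f\circ\gamma$ at $0$ reproduce $\innerprod{\mb g}{\mb \delta}$ and $\mb \delta^*\mb H\mb \delta$ (the identity recorded in Section~\ref{sec:alg_orth} via~\eqref{eq:alg_tsp_op}), and bounding the third derivative uniformly through $M_\nabla,M_{\nabla^2},L_\nabla,L_{\nabla^2}$ of Lemma~\ref{lem:mag_lip_fq}, we obtain $\abs{f(\exp_{\mb q^{(k)}}(\mb \delta))-\widehat f(\mb \delta;\mb q^{(k)})}\le\tfrac16\eta_f\norm{\mb \delta}{}^3\le\tfrac16\eta_f\Delta^3$ with $\eta_f=M_\nabla+2M_{\nabla^2}+L_\nabla+L_{\nabla^2}$ — exactly the estimate already produced inside the proof of Lemma~\ref{lem:alg_approx_bd2}. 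Combining with Step two yields $f(\exp_{\mb q^{(k)}}(\mb \delta))\le f(\mb q^{(k)})-\frac{m_H^2\Delta^2}{M_H}+\tfrac16\eta_f\Delta^3$, as claimed.

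The main obstacle is the third-order control in Step three: one must differentiate $f$ through the exponential map (equivalently, account for the geodesic's acceleration $\ddot\gamma=-\gamma$ and for the parallel-transport correction in~\eqref{eq:alg_tsp_op}) so that the cubic remainder is genuinely dominated by the \emph{Euclidean} quantities of Lemma~\ref{lem:mag_lip_fq} rather than by harder-to-handle intrinsic curvature terms; this computation is delicate but is done once and for all in Lemma~\ref{lem:alg_approx_bd2}, so here it is reused. The only remaining care is bookkeeping the tangent-space restriction — making sure the $-\innerprod{\nabla f(\mb q^{(k)})}{\mb q^{(k)}}\mathrm{I}$ correction appearing in $\Hess f(\mb q^{(k)})$ is carried consistently, so that the eigenvalue hypotheses $m_H\mathrm{I}\preceq\Hess f(\mb q^{(k)})\preceq M_H\mathrm{I}$ are applied to precisely the operator entering $\widehat f$ and the Cauchy-point computation — together with tracking whether the cleanest constant is $m_H^2/M_H$ or $m_H^2/(2M_H)$, which is immaterial for the downstream convergence counts.
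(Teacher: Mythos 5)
Your proposal is correct and follows essentially the same route as the paper's own proof: a Cauchy-point decrease along $-\grad f(\mb q^{(k)})$ in the quadratic model, the bound $\norm{\grad f(\mb q^{(k)})}{} \ge m_H\Delta$ forced by the active norm constraint via the optimality condition, and the cubic comparison $\abs{f(\exp_{\mb q^{(k)}}(\mb \delta)) - \widehat f(\mb \delta;\mb q^{(k)})} \le \tfrac{1}{6}\eta_f\Delta^3$ reused from the proof of Lemma~\ref{lem:alg_approx_bd2}. Even the factor-of-two bookkeeping you flag (model decrease $m_H^2\Delta^2/(2M_H)$ versus the stated $m_H^2\Delta^2/M_H$) mirrors the paper's own computation, so nothing essential is missing.
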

\begin{proof}
See Page~\pageref{proof:lem_alg_strcvx_func} under Section~\ref{sec:proof_algorithm}. 
\end{proof}
The next lemma provides an estimate of $m_H$. 
Again we will only state the result for the ``canonical'' section with the ``canonical'' $\mb q(\mb w)$ mapping.  
\begin{lemma} \label{lem:alg_strcvx_lb}
There exist positive constants $C$ and $c_\sharp$, such that for all $\theta \in \paren{0, 1/2}$ and $\mu < \theta/10$, whenever $p \ge Cn^3 \log \frac{n}{\theta \mu}/(\mu \theta^2)$, it holds with probability at least $1 -  \theta \paren{np}^{-7} - \exp\paren{-0.3\theta np} - p^{-10}$ that for all $\mb q$ with $\norm{\mb w\paren{\mb q}}{} \le \frac{\mu}{4\sqrt{2}}$, 
\begin{align*}
\Hess f\paren{\mb q} \succeq c_\sharp \frac{\theta}{\mu} \mc P_{T_{\mb q} \bb S^{n-1}}.  
\end{align*}
\end{lemma}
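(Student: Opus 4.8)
The plan is to lower-bound the Riemannian Hessian $\Hess f(\mb q) = \mc P_{T_{\mb q}\bb S^{n-1}}(\nabla^2 f(\mb q) - \innerprod{\nabla f(\mb q)}{\mb q}\mb I)\mc P_{T_{\mb q}\bb S^{n-1}}$ uniformly over the small spherical cap $\norm{\mb w(\mb q)}{} \le \mu/(4\sqrt{2})$ by transferring the corresponding statement about $\nabla^2 g(\mb w; \mb X_0)$ in the $\mb w$-coordinates. First I would recall the relationship between the Euclidean Hessian $\nabla^2_{\mb w} g$ of the reparametrized function and the Riemannian Hessian of $f$: differentiating $g(\mb w) = f(\mb q(\mb w))$ twice and using that $\mb q(\mb w) = (\mb w, \sqrt{1-\norm{\mb w}{}^2})$ is a local parametrization of the sphere, one gets $\nabla^2_{\mb w} g(\mb w) = \mb J^* \Hess f(\mb q(\mb w)) \mb J + (\text{terms involving } \grad f \text{ and the second fundamental form})$, where $\mb J = \partial \mb q/\partial \mb w$ has smallest singular value $\ge 1$ on $\Gamma$ (indeed $\mb J^*\mb J = \mb I + \mb w\mb w^*/(1-\norm{\mb w}{}^2) \succeq \mb I$). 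In the cap $\norm{\mb w}{} \le \mu/(4\sqrt{2})$ we have $\norm{\mb w}{} = O(\mu)$, so the correction terms are controlled by $\norm{\mb w}{}\cdot\norm{\grad f}{}$ and by the curvature of $g$; since $\norm{\nabla f(\mb q)}{} \le \sqrt{n}\norm{\mb X_0}{\infty}$ by Lemma \ref{lem:mag_lip_fq}, these corrections are $O(\mu\sqrt{n}\norm{\mb X_0}{\infty}) = O(\mu\sqrt{n}\log^{1/2}(np))$ after invoking $\norm{\mb X_0}{\infty} \le 4\log^{1/2}(np)$ w.h.p.\ (Lemma \ref{lem:X-infinty-tail-bound}). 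Under the hypothesis $\mu < \theta/10$ this is far smaller than the target scale $\theta/\mu$, so it can be absorbed.

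The second ingredient is the positive-definiteness of $\nabla^2_{\mb w} g(\mb w; \mb X_0)$ on the same cap. This is exactly the content of the finite-sample geometric result behind Theorem \ref{thm:geometry_orth}, equation \eqref{eqn:hess-zero-uni-orth}, which gives $\nabla^2 g(\mb w; \mb X_0) \succeq (c_\star\theta/\mu)\mb I$ for $\norm{\mb w}{} \le \mu/(4\sqrt{2})$ w.h.p.; this in turn follows from Proposition \ref{prop:geometry_asymp_strong_convexity} (the population bound $\expect{\nabla^2_{\mb w}h_\mu} \succeq \frac{\theta}{25\sqrt{2\pi}\mu}\mb I$), the pointwise concentration in Proposition \ref{prop:concentration-hessian-zero}, and the Lipschitz estimate in Proposition \ref{prop:lip-hessian-zero} combined with a standard $\eps$-net argument over the cap. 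So I would state the chain: population lower bound $\Rightarrow$ uniform finite-sample lower bound on $\nabla^2_{\mb w} g$ $\Rightarrow$ (via the coordinate-change identity above, discarding the lower-order correction) uniform lower bound on $\mb J^*\Hess f(\mb q)\mb J$ $\Rightarrow$ since $\mb J^*\mb J \succeq \mb I$ and $\mb J$ maps onto $T_{\mb q}\bb S^{n-1}$, a lower bound $\Hess f(\mb q) \succeq c_\sharp (\theta/\mu)\mc P_{T_{\mb q}\bb S^{n-1}}$ for an appropriate numerical constant $c_\sharp$ (which, as the statement notes, can be taken equal to $c_\star$).

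The sample-complexity bookkeeping is the last step: I need $p \ge Cn^3\log(n/(\theta\mu))/(\mu\theta^2)$ to simultaneously make the net argument go through in Proposition \ref{prop:concentration-hessian-zero} (the net has size $(\Lconvex/\text{gap})^{n}$, and $\Lconvex = O(n^2\log^{3/2}(np)/\mu^2)$ by Proposition \ref{prop:lip-hessian-zero}, which is where the $n^3$ and the $\mu$-powers come from) and to control the failure probability to the stated $\theta(np)^{-7} + \exp(-0.3\theta np) + p^{-10}$ — the three terms coming respectively from the $\norm{\mb X_0}{\infty}$ bound (Lemma \ref{lem:X-infinty-tail-bound}), a column-sparsity/tail event needed for the concentration propositions, and the union bound over the net.

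The main obstacle I anticipate is the coordinate-change identity and the careful accounting of the correction terms: one must verify that all the terms distinguishing $\nabla^2_{\mb w}g$ from $\mb J^*\Hess f\, \mb J$ — which involve $\grad f(\mb q)$ contracted against the Hessian of the map $\mb w \mapsto \mb q(\mb w)$, i.e.\ curvature terms of the form $-\innerprod{\nabla f(\mb q)}{\mb q}$ times metric corrections — are genuinely $O(\mu)$-smaller than the $\theta/\mu$ signal on the cap $\norm{\mb w}{} \le \mu/(4\sqrt 2)$, and in particular that the $\innerprod{\nabla f(\mb q)}{\mb q}\mb I$ subtraction built into $\Hess f$ does not accidentally cancel part of the positive-definite bulk. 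This is a deterministic but delicate computation; once it is done, the probabilistic content is entirely inherited from results already proved for Theorem \ref{thm:geometry_orth}.
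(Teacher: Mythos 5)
Your route — transferring the uniform lower bound on $\nabla^2_{\mb w} g(\mb w;\mb X_0)$ from Theorem~\ref{thm:geometry_orth} to $\Hess f(\mb q)$ via the chart $\mb q(\mb w)$ — is precisely the alternative the paper acknowledges and deliberately avoids: its proof of Lemma~\ref{lem:alg_strcvx_lb} opens by noting one \emph{could} compare $\mb U^*\Hess f(\mb q)\mb U$ with $\nabla^2_{\mb w}g(\mb w)$, but instead works directly in the $\mb q$-space, splitting $\mb U^*\Hess \expect{f(\mb q)}\mb U$ into the PSD term $\frac{1}{\mu}\mb U^*\expect{(1-\tanh^2)\mb x\mb x^*}\mb U$ (lower bounded by exploiting that any unit tangent vector $\mb z$ at $\mb q$ has $\abs{z_n}\le \norm{\mb w}{}/q_n$ tiny on the cap, plus the $\ol{\mb x}$-block estimate from the proof of Proposition~\ref{prop:geometry_asymp_strong_convexity}) and the scalar term $\expect{\tanh(\mb q^*\mb x/\mu)\,\mb q^*\mb x}\le 4/\sqrt{2\pi}$, then concentrating $\mb U^*\nabla^2 f\,\mb U$ and $\innerprod{\nabla f}{\mb q}$ by Bernstein and covering the cap with a net using the parallel-transport Lipschitz bound of Lemma~\ref{lem:rie_hess_lip}. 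The payoff of the direct argument is exactly the "tighter result": it needs only $\mu<\theta/10$ and $p\gtrsim n^3\log(n/(\theta\mu))/(\mu\theta^2)$, whereas your route inherits the hypotheses of Theorem~\ref{thm:geometry_orth} (in particular $\mu\lesssim\min\{\theta n^{-1},n^{-5/4}\}$ and $p\gtrsim n^3\log(n/(\mu\theta))/(\mu^2\theta^2)$). Since the convergence theorems assume those stronger conditions anyway, your version would suffice downstream, but it does not prove the lemma as stated.

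Two concrete issues in your write-up. First, the claim that the correction terms are absorbed "under the hypothesis $\mu<\theta/10$" is false as stated: the discrepancy between $\nabla^2_{\mb w}g$ and $\mb J^*\Hess f\,\mb J$ is of size $O\paren{\mu\sqrt{n}\norm{\mb X_0}{\infty}}=O\paren{\mu\sqrt{n\log(np)}}$ (driven by $\mb w^*\ol{\nabla f}$), and comparing with the signal $\theta/\mu$ requires $\mu^2\sqrt{n\log(np)}\lesssim\theta$ — which does not follow from $\mu<\theta/10$ alone (take $\theta,\mu$ constant and $n$ large), only from the stronger $\mu\lesssim\theta n^{-1}$ or $\mu\lesssim n^{-5/4}$ scaling. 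You must either import that condition explicitly or switch to the paper's direct argument. Second, in the final transfer from $\mb J^*\Hess f\,\mb J\succeq c\frac{\theta}{\mu}\mb I$ to $\Hess f\succeq c'\frac{\theta}{\mu}\mc P_{T_{\mb q}\bb S^{n-1}}$, the inequality you invoke, $\mb J^*\mb J\succeq\mb I$, points the wrong way: writing a unit tangent vector as $\mb v=\mb J\mb a$, you need $\norm{\mb a}{}$ not much smaller than $\norm{\mb v}{}$, i.e., an \emph{upper} bound $\mb J^*\mb J\preceq\paren{1+\norm{\mb w}{}^2/q_n^2}\mb I = \paren{1+O(\mu^2)}\mb I$ on the cap. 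That bound is available, so this is a repairable slip, but as written the justification is backwards.
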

\begin{proof}
See Page~\pageref{proof:lem_alg_strcvx_lb} under Section~\ref{sec:proof_algorithm}. 
\end{proof}
We know that $\norm{\mb X_0}{\infty} \le 4\log^{1/2} (np)$ w.h.p., and hence by the definition of Riemannian Hessian and Lemma~\ref{lem:mag_lip_fq}, 
\begin{align*}
M_H \doteq \norm{\Hess f(\mb q)}{} & \le \norm{\nabla^2 f(\mb q)}{} + \norm{\nabla f(\mb q)}{} \le M_{\nabla^2} + M_{\nabla} \le \frac{2n}{\mu} \norm{\mb X_0}{\infty} ^2 \le \frac{16n}{\mu} \log (np), 
\end{align*}
Combining this estimate and Lemma~\ref{lem:alg_strcvx_lb}, and Lemma~\ref{lem:alg_approx_bd2}, we obtain a concrete lower bound for the reduction of objective value for each constrained $\rI$ step. 
\begin{proposition}\label{lem:TRM-lemma-iii}
Assume~\eqref{eq:trm_proof_assumed_cond}. Each constrained $\rI$ trust-region step (i.e., $\norm{\mb \delta}{} = \Delta$) reduces the objective value by at least 
\begin{align}\label{eqn:TRM-decrease-d-i}
	\dI = \frac{c c_\star^2\theta^2 }{\mu n\log (np) }\Delta^2,
\end{align}
provided 
\begin{align}\label{eqn:TRM-lemma-ii-0}
 \Delta \leq \frac{c'c_\sharp^2 \theta^2 \mu  }{n^{5/2} \log^{5/2} (np) }. 
\end{align}
The constant $c_\sharp$ is as defined in Lemma~\ref{lem:alg_strcvx_lb} and $c, c'$ are a positive numerical constants.
\end{proposition}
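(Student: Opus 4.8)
The plan is to chain three results already established in Section~\ref{sec:alg_orth}: the lower bound on the Riemannian Hessian inside $\rI$ (Lemma~\ref{lem:alg_strcvx_lb}), the generic decrease estimate for a \emph{constrained} trust-region step in a locally strongly convex region (Lemma~\ref{lem:alg_strcvx_func}), and the near-optimality of the actual subproblem solution (Lemma~\ref{lem:alg_approx_bd2}); then balance the resulting quadratic-in-$\Delta$ gain against the cubic-in-$\Delta$ Taylor error. By the rotational symmetry of the $2n$ symmetric sections used throughout this section, it suffices to argue for the canonical section near $\mb e_n$. The standing assumptions~\eqref{eq:trm_proof_assumed_cond} already subsume the hypotheses of the lemmas invoked, and the stated failure probability is simply the union of those in Lemma~\ref{lem:alg_strcvx_lb} and Lemma~\ref{lem:X-infinty-tail-bound}.

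First I would instantiate the relevant constants on the high-probability event. Since $\rI$ is exactly $\set{\mb q : \norm{\mb w\paren{\mb q}}{} \le \mu/(4\sqrt 2)}$, Lemma~\ref{lem:alg_strcvx_lb} gives $\Hess f\paren{\mb q} \succeq m_H\, \mc P_{T_{\mb q}\bb S^{n-1}}$ with $m_H = c_\sharp \theta/\mu$ throughout $\rI$. Conditioning on $\norm{\mb X_0}{\infty} \le 4\log^{1/2}\paren{np}$ (Lemma~\ref{lem:X-infinty-tail-bound}) and using Lemma~\ref{lem:mag_lip_fq} together with the definition of the Riemannian Hessian yields $M_H \doteq \norm{\Hess f\paren{\mb q}}{} \le 16\, n\log\paren{np}/\mu$, and $\eta_f = M_\nabla + 2M_{\nabla^2} + L_\nabla + L_{\nabla^2} \le C_0\, n^{3/2}\log^{3/2}\paren{np}/\mu^2$, where the term $L_{\nabla^2}$ dominates precisely because $\mu < c_b n^{-5/4}$ is small.

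Next, for a constrained $\rI$ step (the norm constraint active, $\norm{\mb \delta_\star}{} = \Delta$), Lemma~\ref{lem:alg_strcvx_func} supplies a feasible tangent vector $\mb \delta$ with $f\paren{\exp_{\mb q}\paren{\mb \delta}} \le f\paren{\mb q} - s$ for $s = m_H^2\Delta^2/M_H - \tfrac16\eta_f\Delta^3$. Feeding this witness into Lemma~\ref{lem:alg_approx_bd2} (with $s$ positive, guaranteed by the $\Delta$-bound below) shows the actual subproblem solution reduces the objective by at least $s - \tfrac13\eta_f\Delta^3 = m_H^2\Delta^2/M_H - \tfrac12\eta_f\Delta^3$. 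Substituting the bounds from the previous paragraph, this is at least $\tfrac{c_\sharp^2\theta^2}{16\mu n\log\paren{np}}\Delta^2 - \tfrac12 C_0\tfrac{n^{3/2}\log^{3/2}\paren{np}}{\mu^2}\Delta^3$; imposing $\Delta \le c'\, c_\sharp^2\theta^2\mu/\paren{n^{5/2}\log^{5/2}\paren{np}}$ with $c' \doteq 1/(16C_0)$ makes the cubic term at most half the quadratic term, leaving net decrease at least $\dI = c\, c_\sharp^2\theta^2\Delta^2/\paren{\mu n\log\paren{np}}$ with $c = 1/32$. Identifying $c_\star$ and $c_\sharp$ as in the remark following Theorem~\ref{thm:geometry_orth}, this is~\eqref{eqn:TRM-decrease-d-i}, and the condition just imposed is~\eqref{eqn:TRM-lemma-ii-0}.

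I do not anticipate a real obstacle. The only points needing care are (i) that the witness produced by Lemma~\ref{lem:alg_strcvx_func} is admissible for Lemma~\ref{lem:alg_approx_bd2} (it is, since $\norm{\mb \delta}{} \le \Delta$) and that the corresponding $s$ is positive under the stated threshold on $\Delta$; and (ii) correctly identifying $L_{\nabla^2}$ as the dominant term of $\eta_f$, so that the threshold comes out as $n^{5/2}\log^{5/2}\paren{np}$ rather than a larger power — this is exactly where the smallness of $\mu$ in~\eqref{eq:trm_proof_assumed_cond} enters. The rest is bookkeeping of numerical constants.
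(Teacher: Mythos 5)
Your proposal follows essentially the same route as the paper: plug $m_H = c_\sharp\theta/\mu$ from Lemma~\ref{lem:alg_strcvx_lb} and $M_H \le 16n\log(np)/\mu$ into Lemma~\ref{lem:alg_strcvx_func}, transfer the witness to the actual trust-region step via Lemma~\ref{lem:alg_approx_bd2} (with $\eta_f \lesssim n^{3/2}\log^{3/2}(np)/\mu^2$ dominated by $L_{\nabla^2}$), and then choose $\Delta \lesssim c_\sharp^2\theta^2\mu/\bigl(n^{5/2}\log^{5/2}(np)\bigr)$ so the cubic Taylor error is at most half the quadratic gain. The bookkeeping (including the positivity of $s$ and the identification $c_\star = c_\sharp$) is handled correctly, so the argument matches the paper's proof up to numerical constants.
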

\begin{proof}
We only consider the symmetric section in the vicinity of $\mb e_n$ and the claims carry on to others by symmetry. We have that w.h.p. 
\begin{align*}
\norm{\Hess f(\mb q)}{} \le \frac{16n}{\mu} \log(np), \quad \text{and} \quad  \Hess f(\mb q) \succeq c_\sharp \frac{\theta}{\mu} \mc P_{T_{\mb q} \bb S^{n-1}}, 
\end{align*}
where $c_\sharp$ is as defined in Lemma~\ref{lem:alg_strcvx_lb}. Combining these estimates with Lemma~\ref{lem:alg_approx_bd2} and Lemma~\ref{lem:alg_strcvx_func}, one trust-region step will find next iterate $\mb q^{(k+1)}$ that decreases the objective value by at least 
\begin{align*}
\dI \doteq \frac{c_\sharp^2 \theta^2/\mu^2 }{2n\log\paren{np}/\mu} \Delta^2 - \frac{c_0n^{3/2} \log^{3/2}\paren{np}}{\mu^2} \Delta^3.
\end{align*}
Finally, by the condition on $\Delta$ in \eqref{eqn:TRM-lemma-ii-0} and the assumed conditions~\eqref{eq:trm_proof_assumed_cond}, we obtain 
\begin{align*}
	\dI \geq \frac{c_\sharp^2\theta^2 }{2\mu n\log (np) }\Delta^2 - \frac{c_0n^{3/2} \log^{3/2}\paren{np}}{\mu^2} \Delta^3 \geq \frac{c_\sharp^2\theta^2 }{4\mu n\log (np) }\Delta^2, 
\end{align*}
as desired.
\end{proof}

By the proof strategy for $\rI$ we sketched before Lemma~\ref{lem:alg_strcvx_func}, we expect the iteration sequence ultimately always takes unconstrained steps when it moves very near to a local minimizer. We will show that the following is true: when $\Delta$ is small enough, once the iteration sequence starts to take unconstrained $\rI$ step, it will take consecutive unconstrained $\rI$ steps afterwards. It takes two steps to show this: (1) upon an unconstrained $\rI$ step, the next iterate will stay in $\rI$. It is obvious we can make $\Delta \in O(1)$ to ensure the next iterate stays in $\rI \cup \rII$. To strengthen the result, we use the gradient information. From Theorem~\ref{thm:geometry_orth}, we expect the magnitudes of the gradients in $\rII$ to be lower bounded; on the other hand, in $\rI$ where points are near local minimizers, continuity argument implies that the magnitudes of gradients should be upper bounded. We will show that when $\Delta$ is small enough, there is a gap between these two bounds, implying the next iterate stays in $\rI$; (2) when $\Delta$ is small enough, the step is in fact unconstrained. Again we will only state the result for the ``canonical'' section with the ``canonical'' $\mb q(\mb w)$ mapping. The next lemma exhibits an absolute lower bound for magnitudes of gradients in $\rII$. 
\begin{lemma} \label{lem:alg_gradient_lb}
For all $\mb q$ satisfying $\frac{\mu}{4\sqrt{2}} \le \norm{\mb w\paren{\mb q}}{} \le \frac{1}{20\sqrt{5}}$, it holds that 
\begin{align*}
\norm{\grad f\paren{\mb q}}{} \ge \frac{9}{10} \frac{\mb w^* \nabla g\paren{\mb w}}{\norm{\mb w}{}}. 
\end{align*}
\end{lemma}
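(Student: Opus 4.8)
The plan is to translate between the Riemannian gradient $\grad f\paren{\mb q}$ on the sphere and the Euclidean gradient $\nabla g\paren{\mb w}$ of the reparametrized objective via the chain rule, and then to apply Cauchy--Schwarz. Fix $\mb q$ in the stated region, write $\mb q = \mb q\paren{\mb w}$ with $\mb w = \mb w\paren{\mb q}$, and let $\mb J\paren{\mb w} \in \R^{n \times (n-1)}$ denote the Jacobian of the reparametrization $\mb w \mapsto \mb q\paren{\mb w} = \paren{\mb w, \sqrt{1-\norm{\mb w}{}^2}}$. Since only the last coordinate of $\mb q\paren{\mb w}$ depends nonlinearly on $\mb w$, this Jacobian has the explicit block form
\begin{align*}
\mb J\paren{\mb w} = \begin{pmatrix} \mb I_{n-1} \\ -\mb w^*/\sqrt{1-\norm{\mb w}{}^2} \end{pmatrix}.
\end{align*}
I would first record two elementary facts: (i) by the chain rule $\nabla g\paren{\mb w} = \mb J\paren{\mb w}^* \nabla f\paren{\mb q}$; and (ii) the columns of $\mb J\paren{\mb w}$ form a basis of the tangent space $T_{\mb q}\bb S^{n-1}$ --- the inclusion $\mathrm{range}\,\mb J\paren{\mb w} \subseteq T_{\mb q}\bb S^{n-1}$ follows by differentiating $\norm{\mb q\paren{\mb w}}{}^2 \equiv 1$, and equality holds because $\mb J\paren{\mb w}$ has full column rank $n-1$.

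The crux is the identity, valid for any $\mb h \in \R^{n-1}$,
\begin{align*}
\innerprod{\nabla g\paren{\mb w}}{\mb h} = \innerprod{\nabla f\paren{\mb q}}{\mb J\paren{\mb w}\mb h} = \innerprod{\grad f\paren{\mb q}}{\mb J\paren{\mb w}\mb h},
\end{align*}
where the second equality uses fact (ii) together with $\grad f\paren{\mb q} = \mc P_{T_{\mb q}\bb S^{n-1}}\nabla f\paren{\mb q}$ and self-adjointness of the orthoprojector $\mc P_{T_{\mb q}\bb S^{n-1}}$. Specializing to $\mb h = \mb w/\norm{\mb w}{}$ (well defined since $\norm{\mb w}{} \ge \mu/\paren{4\sqrt{2}} > 0$ in the region of interest) and applying Cauchy--Schwarz gives
\begin{align*}
\frac{\mb w^* \nabla g\paren{\mb w}}{\norm{\mb w}{}} \;\le\; \norm{\grad f\paren{\mb q}}{}\cdot\norm{\mb J\paren{\mb w}\tfrac{\mb w}{\norm{\mb w}{}}}{}.
\end{align*}
A one-line computation from the block form gives $\norm{\mb J\paren{\mb w}\mb w/\norm{\mb w}{}}{}^2 = 1 + \norm{\mb w}{}^2/\paren{1-\norm{\mb w}{}^2} = \paren{1-\norm{\mb w}{}^2}^{-1}$, so that $\norm{\grad f\paren{\mb q}}{} \ge \sqrt{1-\norm{\mb w}{}^2}\;\mb w^*\nabla g\paren{\mb w}/\norm{\mb w}{}$ (which is trivially true when the right-hand side is nonpositive).

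To finish, I would observe that on the region $\norm{\mb w}{} \le \tfrac{1}{20\sqrt{5}}$ we have $\norm{\mb w}{}^2 \le \tfrac{1}{2000} \le \tfrac{19}{100}$, whence $\sqrt{1-\norm{\mb w}{}^2} \ge \tfrac{9}{10}$; plugging this in yields the claimed bound. There is no genuine obstacle here --- the lemma is essentially a bookkeeping step relating the $\mb w$-chart used for the geometric analysis in Section~\ref{sec:geometry} to the Riemannian quantities used by the trust-region algorithm. The only points that require care are writing down the Jacobian $\mb J\paren{\mb w}$ and its norm correctly, keeping the direction of the Cauchy--Schwarz inequality straight, and remembering that because $\mb J\paren{\mb w}$ already lands in the tangent space it is the Riemannian gradient $\grad f\paren{\mb q}$, not the ambient gradient $\nabla f\paren{\mb q}$, that appears in the final bound.
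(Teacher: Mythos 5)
Your proof is correct and takes essentially the same route as the paper: the paper pairs $\grad f\paren{\mb q}$ with the tangent vector $\mb r = \mb q - \mb e_n/q_n$, which is exactly your $\mb J\paren{\mb w}\mb w$, then applies Cauchy--Schwarz together with the norm computation $\norm{\mb r}{}/\norm{\mb w}{} = 1/q_n = \paren{1-\norm{\mb w}{}^2}^{-1/2}$ and the same numerical bound on the region $\norm{\mb w}{} \le \tfrac{1}{20\sqrt{5}}$. The only difference is cosmetic: you obtain the pairing identity $\innerprod{\grad f\paren{\mb q}}{\mb J\paren{\mb w}\mb w} = \mb w^*\nabla g\paren{\mb w}$ abstractly via the chain rule and self-adjointness of the tangent projector, whereas the paper verifies it by expanding the explicit $\tanh$ expressions for $\nabla f$ and $\nabla g$.
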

\begin{proof}
See Page~\pageref{proof:lem_alg_gradient_lb} under Section~\ref{sec:proof_algorithm}. 
\end{proof}
Assuming~\eqref{eq:trm_proof_assumed_cond}, Theorem~\ref{thm:geometry_orth} gives that w.h.p. $\mb w^* \nabla g(\mb w)/\norm{\mb w}{} \ge c_\star \theta$. Thus, w.h.p, $\norm{\grad f(\mb q)}{} \ge 9c_\star\theta/10$ for all $\mb q \in \rII$. The next lemma compares the magnitudes of gradients before and after taking one unconstrained $\rI$ step. This is crucial to providing upper bound for magnitude of gradient for the next iterate, and also to establishing the ultimate (quadratic) sequence convergence.
\begin{lemma}\label{lem:TR-step} 
Suppose the trust-region size $\Delta \le 1$, and at a given iterate $k$, $\Hess f \paren{\mb q^{(k)}} \succeq m_H \mc P_{ T_{\mb q^{(k)}}\bb S^{n-1}}$, and that the unique minimizer $\mb \delta_\star\in T_{\mb q^{(k)}}\bb S^{n-1} $ to the trust region subproblem \eqref{eqn:subproblem-1} satisfies $\norm{\mb \delta_\star}{} < \Delta$ (i.e., the constraint is inactive). Then, for $\mb q^{(k+1)} = \exp_{\mb q^{(k)}}\paren{\mb \delta_\star} $, we have 
\begin{equation*}
\|\grad f (\mb q^{(k+1)})\| \;\le\; \frac{L_H}{2m_H^2} \|\grad f(\mb q^{(k)}) \|^2, 
\end{equation*}
where $L_H \doteq \frac{5}{2\mu^2}n^{3/2} \norm{\mb X_0}{\infty}^3 + \frac{9}{\mu}n \norm{\mb X_0}{\infty}^2 + 9\sqrt{n} \norm{\mb X_0}{\infty}$.
\end{lemma}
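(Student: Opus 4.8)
The plan is to recognize an \emph{unconstrained} trust-region step as a Riemannian Newton step and then run the classical quadratic-convergence argument, transplanted to the sphere. Since $\Hess f(\mb q^{(k)}) \succeq m_H \mc P_{T_{\mb q^{(k)}}\bb S^{n-1}}$ with $m_H > 0$, the Riemannian Hessian is nondegenerate on $T_{\mb q^{(k)}}\bb S^{n-1}$, so the (unconstrained) minimizer obeys the first-order condition~\eqref{eqn:ts-optimal-solution-1}, namely $\grad f(\mb q^{(k)}) = -\Hess f(\mb q^{(k)})\,\mb\delta_\star$; in particular $\norm{\mb\delta_\star}{} \le \norm{\grad f(\mb q^{(k)})}{}/m_H$.

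Next I would parametrize the geodesic $\gamma(t) \doteq \exp_{\mb q^{(k)}}(t\mb\delta_\star)$, $t\in[0,1]$, so that $\gamma(0) = \mb q^{(k)}$, $\gamma(1) = \mb q^{(k+1)}$, and $\dot\gamma(t) = \mc P_{\gamma}^{t\leftarrow 0}\mb\delta_\star$ (the velocity field of a geodesic is parallel). Using the defining property $\tfrac{D}{dt}\grad f(\gamma(t)) = \Hess f(\gamma(t))[\dot\gamma(t)]$ of the Riemannian Hessian~\cite{absil2009} and transporting back to $T_{\mb q^{(k)}}\bb S^{n-1}$, the fundamental theorem of calculus yields
\[
\mc P_{\gamma}^{0\leftarrow 1}\grad f(\mb q^{(k+1)}) \;=\; \grad f(\mb q^{(k)}) + \int_0^1 \mc P_{\gamma}^{0\leftarrow t}\,\Hess f(\gamma(t))\,\mc P_{\gamma}^{t\leftarrow 0}\,\mb\delta_\star \; dt .
\]
Substituting $\grad f(\mb q^{(k)}) = -\Hess f(\mb q^{(k)})\mb\delta_\star$ and using that $\mc P_\gamma^{0\leftarrow 1}$ is an isometry on tangent spaces,
\[
\norm{\grad f(\mb q^{(k+1)})}{} \;\le\; \norm{\mb\delta_\star}{} \int_0^1 \norm{ \mc P_{\gamma}^{0\leftarrow t}\Hess f(\gamma(t))\mc P_{\gamma}^{t\leftarrow 0} - \Hess f(\mb q^{(k)}) }{} \; dt .
\]
Hence it suffices to show that the integrand is at most $L_H\, t\norm{\mb\delta_\star}{}$ — i.e. $L_H$ times the geodesic distance from $\mb q^{(k)}$ to $\gamma(t)$ — for then the integral is $\tfrac12 L_H\norm{\mb\delta_\star}{}$ and $\norm{\grad f(\mb q^{(k+1)})}{} \le \tfrac12 L_H\norm{\mb\delta_\star}{}^2 \le \tfrac{L_H}{2m_H^2}\norm{\grad f(\mb q^{(k)})}{}^2$, which is the claim.

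The remaining — and main — task is therefore this geodesic-Lipschitz estimate on the Riemannian Hessian, which I would obtain by the triangle inequality from (i) the \emph{transport defect} $\norm{\mc P_{\gamma}^{0\leftarrow t}\Hess f(\gamma(t))\mc P_{\gamma}^{t\leftarrow 0} - \Hess f(\gamma(t))}{}$ and (ii) the \emph{base-point variation} $\norm{\Hess f(\gamma(t)) - \Hess f(\mb q^{(k)})}{}$. For (i), the closed form~\eqref{eq:alg_tsp_op} gives $\norm{\mc P_{\gamma}^{0\leftarrow t} - \mb I}{}$ and $\norm{\mc P_{\gamma}^{t\leftarrow 0} - \mb I}{}$ both $O(t\norm{\mb\delta_\star}{})$ — using $t\norm{\mb\delta_\star}{} < \Delta \le 1$ to absorb the quadratic remainder — combined with $\norm{\Hess f(\gamma(t))}{} \le M_{\nabla^2} + M_{\nabla}$. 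For (ii), expanding $\Hess f(\mb q) = (\mb I - \mb q\mb q^*)\paren{\nabla^2 f(\mb q) - \innerprod{\nabla f(\mb q)}{\mb q}\mb I}(\mb I - \mb q\mb q^*)$ and telescoping reduces it to: the change of the orthoprojector ($\norm{\mb q\mb q^* - \mb q'(\mb q')^*}{} \le 2\norm{\mb q - \mb q'}{}$), the change of $\nabla^2 f$ ($\le L_{\nabla^2}\norm{\mb q - \mb q'}{}$), and the change of the scalar $\innerprod{\nabla f(\mb q)}{\mb q}$ ($\le (L_{\nabla} + M_{\nabla})\norm{\mb q - \mb q'}{}$), all evaluated with $\norm{\mb q^{(k)} - \gamma(t)}{} \le t\norm{\mb\delta_\star}{}$ since the chord is no longer than the arc. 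Inserting the four estimates $M_{\nabla} \le \sqrt n\norm{\mb X_0}{\infty}$, $M_{\nabla^2} \le \tfrac{n}{\mu}\norm{\mb X_0}{\infty}^2$, $L_{\nabla} \le \tfrac{n}{\mu}\norm{\mb X_0}{\infty}^2$, $L_{\nabla^2} \le \tfrac{2}{\mu^2}n^{3/2}\norm{\mb X_0}{\infty}^3$ from Lemma~\ref{lem:mag_lip_fq} and collecting the $\mu^{-2}$, $\mu^{-1}$, and $\mu^{0}$ contributions gives a bound of the stated form, with $L_H = \tfrac{5}{2\mu^2}n^{3/2}\norm{\mb X_0}{\infty}^3 + \tfrac{9}{\mu}n\norm{\mb X_0}{\infty}^2 + 9\sqrt n\norm{\mb X_0}{\infty}$ after (slightly conservative) accounting of the numerical constants. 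The genuinely delicate points I anticipate are getting the covariant-derivative/parallel-transport identity exactly right — so that the geodesic is a true Newton direction and the fundamental theorem of calculus applies with no leftover terms — and keeping the constants tight enough in the bookkeeping to land at or below the stated coefficients; everything else is routine given the estimates already in hand.
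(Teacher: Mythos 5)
Your proposal is correct and follows essentially the same route as the paper: the unconstrained optimality condition, the parallel-transported Taylor/fundamental-theorem identity along the geodesic, isometry of $\mc P_\gamma^{0\leftarrow t}$, and the bound $\norm{\mb \delta_\star}{}\le \norm{\grad f(\mb q^{(k)})}{}/m_H$. The geodesic-Lipschitz estimate you re-derive inline (transport defect via the closed form of $\mc P_\gamma$ plus base-point variation with the constants of Lemma~\ref{lem:mag_lip_fq}) is exactly the content of the paper's Lemmas~\ref{lem:alg_tsp_op} and~\ref{lem:rie_hess_lip}, which the paper simply invokes.
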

\begin{proof}
See Page~\pageref{proof:lem_TR-step} under Section~\ref{sec:proof_algorithm}. 
\end{proof}
We can now bound the Riemannian gradient of the next iterate as 
\begin{align*}
\|\grad f (\mb q^{(k+1)})\| 
& \le \frac{L_H}{2m_H^2} \|\grad f(\mb q^{(k)}) \|^2 \\
& \le \frac{L_H}{2m_H^2} \|[\mb U^* \Hess f(\mb q^{(k)}) \mb U] [\mb U^* \Hess f(\mb q^{(k)}) \mb U]^{-1} \grad f(\mb q^{(k)}) \|^2 \\
& \le  \frac{L_H}{2m_H^2}  \norm{\Hess f(\mb q^{(k)})}{}^2 \Delta^2 = \frac{L_H M_H^2}{2m_H^2} \Delta^2. 
\end{align*}
Obviously, one can make the upper bound small by tuning down $\Delta$. Combining the above lower bound for $\norm{\grad f(\mb q)}{}$ for $\mb q \in \rII$, one can conclude that when $\Delta$ is small, the next iterate $\mb q^{(k+1)}$ stays in $\rI$. Another application of the optimality condition~\eqref{eqn:ts-optimal-solution-1} gives conditions on $\Delta$ that guarantees the next trust-region step is also unconstrained. Detailed argument can be found in proof of the following proposition. 
\begin{proposition}\label{lem:TRM-lemma-iv}
Assume~\eqref{eq:trm_proof_assumed_cond}. W.h.p, once the trust-region algorithm takes an unconstrained $\rI$ step (i.e., $\norm{\mb \delta}{}<\Delta$), it always takes unconstrained $\rI$ steps, provided that
\begin{align}
 \Delta \le  \frac{c c_\sharp^3 \theta^3 \mu}{n^{7/2} \log^{7/2}\paren{np}}, 
\end{align} 	
Here $c$ is a positive numerical constant, and $c_\sharp$ is as defined in Lemma~\ref{lem:alg_strcvx_lb}.
\end{proposition}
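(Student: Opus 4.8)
The plan is to show that once an unconstrained $\rI$ step occurs, the next iterate remains in $\rI$ \emph{and} the subsequent step is again unconstrained; the claim then follows by induction. First I would use the optimality condition~\eqref{eqn:ts-optimal-solution-1}, which holds for unconstrained steps, to control the size of the step: since $\Hess f(\mb q^{(k)}) \succeq m_H \mc P_{T_{\mb q^{(k)}} \bb S^{n-1}}$ with $m_H = c_\sharp \theta/\mu$ (Lemma~\ref{lem:alg_strcvx_lb}, valid w.h.p. under~\eqref{eq:trm_proof_assumed_cond}), the minimizer satisfies $\norm{\mb \delta_\star}{} \le \norm{\grad f(\mb q^{(k)})}{}/m_H$. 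Then I invoke Lemma~\ref{lem:TR-step} to get $\norm{\grad f(\mb q^{(k+1)})}{} \le \tfrac{L_H M_H^2}{2 m_H^2}\Delta^2$, using the bounds $M_H \le 16 n\log(np)/\mu$ and $L_H = O(n^{3/2}\log^{3/2}(np)/\mu^2)$ (with $\norm{\mb X_0}{\infty} \le 4\log^{1/2}(np)$ w.h.p. by Lemma~\ref{lem:X-infinty-tail-bound}). Collecting powers of $n$, $\mu$, $\theta$ and $\log(np)$, this gives $\norm{\grad f(\mb q^{(k+1)})}{} \le C n^{7/2}\log^{7/2}(np)/(c_\sharp^2 \theta^2 \mu)\cdot \Delta^2$ for a numerical constant $C$.

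Next I would argue $\mb q^{(k+1)} \in \rI$. Lemma~\ref{lem:alg_gradient_lb} together with Theorem~\ref{thm:geometry_orth} gives that every point in $\rII$ has Riemannian gradient of magnitude at least $\tfrac{9}{10} c_\star \theta$; since $\Delta \le 1$ is small the exponential map moves $\mb q^{(k)}$ by at most $\Delta$, so $\mb q^{(k+1)} \in \rI \cup \rII$ (one can verify the $\tfrac{\mu}{4\sqrt 2}$-cap expands into the $\tfrac{1}{20\sqrt5}$-cap under a step of size $\le 1$). If the gradient bound from the previous paragraph is strictly smaller than $\tfrac{9}{10}c_\star\theta$, then $\mb q^{(k+1)}$ cannot lie in $\rII$, hence lies in $\rI$. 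This forces $\Delta^2 \le c' c_\sharp^2 c_\star \theta^3 \mu/(n^{7/2}\log^{7/2}(np))$, i.e. a condition of the form $\Delta \le c'' c_\sharp^{3/2}\theta^{3/2}\mu^{1/2}/(n^{7/4}\log^{7/4}(np))$ after absorbing $c_\star$ into $c_\sharp$ (recall the two constants may be taken equal). This is weaker (i.e. less restrictive) than the stated bound $\Delta \le c c_\sharp^3 \theta^3 \mu/(n^{7/2}\log^{7/2}(np))$, so the stated hypothesis suffices.

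Finally I would show the step at $\mb q^{(k+1)}$ is again unconstrained. Since $\mb q^{(k+1)} \in \rI$, Lemma~\ref{lem:alg_strcvx_lb} again gives $\Hess f(\mb q^{(k+1)}) \succeq m_H \mc P_{T_{\mb q^{(k+1)}}\bb S^{n-1}}$, so the unconstrained trust-region solution $\mb \delta_\star^{(k+1)}$ obeys $\norm{\mb \delta_\star^{(k+1)}}{} \le \norm{\grad f(\mb q^{(k+1)})}{}/m_H \le \tfrac{C n^{7/2}\log^{7/2}(np)}{c_\sharp^3 \theta^3}\Delta^2$. For this to be $< \Delta$ — so that the constraint is genuinely inactive and the optimality condition~\eqref{eqn:ts-optimal-solution-1} applies — it suffices that $\Delta < c_\sharp^3\theta^3/(C n^{7/2}\log^{7/2}(np))$, which is exactly the stated condition (with $c = 1/C$). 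Assembling the induction: starting from the first unconstrained $\rI$ step, each successive iterate stays in $\rI$ and each successive step is unconstrained, which is the claim.

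\textbf{Main obstacle.} The delicate point is the interplay of the two requirements on $\Delta$: it must be small enough that the post-step gradient is below the $\rII$ gradient floor (keeping the iterate in $\rI$), \emph{and} small enough that $\norm{\mb \delta_\star}{} < \Delta$ at the new iterate (keeping the step unconstrained) — and both estimates rely on the quadratic gradient-contraction of Lemma~\ref{lem:TR-step}, which in turn needs the strong-convexity lower bound $m_H$ to hold \emph{uniformly} over $\rI$, not just at one point. Tracking all the $n$, $\mu$, $\theta$, $\log(np)$ factors through $M_H$, $L_H$, $m_H$ and checking that the single displayed condition on $\Delta$ dominates both derived conditions is the bookkeeping-heavy core of the argument; the conceptual content is otherwise a routine induction.
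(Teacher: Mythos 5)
Your proposal follows essentially the same route as the paper's proof: use the quadratic gradient contraction of Lemma~\ref{lem:TR-step} together with the $\rII$ gradient floor from Theorem~\ref{thm:geometry_orth} and Lemma~\ref{lem:alg_gradient_lb} to keep $\mb q^{(k+1)}$ in $\rI$, then use the optimality condition and the strong-convexity bound $m_H = c_\sharp\theta/\mu$ to show the next step is unconstrained, and induct. The only blemishes are bookkeeping: you drop a power of $\mu$ when assembling $L_H M_H^2/m_H^2$ and $L_H M_H^2/m_H^3$ (the correct final threshold is exactly $\Delta \lesssim c_\sharp^3\theta^3\mu/(n^{7/2}\log^{7/2}(np))$), and the containment $\mb q^{(k+1)}\in\rI\cup\rII$ needs $\Delta \le \tfrac{1}{20\sqrt 5}-\tfrac{\mu}{4\sqrt 2}$ rather than merely $\Delta\le 1$, both of which are implied by the stated hypothesis on $\Delta$.
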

\begin{proof}
We only consider the symmetric section in the vicinity of $\mb e_n$ and the claims carry on to others by symmetry. Suppose that step $k$ is an unconstrained $\rI$ step. Then 
\begin{align*}
\|\mb w(\mb q^{(k+1)}) - \mb w(\mb q^{(k)})\| 
& \le \|\mb q^{(k+1)} - \mb q^{(k)}\|  = \|\exp_{\mb q^{(k)}(\mb \delta)} - \mb q^{(k)} \| \\ 
& = \sqrt{2-2\cos \norm{\mb \delta}{}} = 2\sin(\norm{\mb \delta}{}/2) \le \norm{\mb \delta}{} < \Delta. 
\end{align*}
Thus, if $\Delta \le \tfrac{1}{20\sqrt{5}} - \tfrac{\mu}{4\sqrt{2}}$, $\mb q^{(k+1)}$ will be in $\rI \cup \rII$. Next, we show that if $\Delta$ is sufficiently small, $\mb q^{(k+1)}$ will be indeed in $\rI$. By Lemma \ref{lem:TR-step}, 
\begin{align}
\norm{\grad f \paren{\mb q^{(k+1)} } }{} 
& \le \frac{L_H}{2m_H^2} \norm{ \grad f \paren{\mb q^{(k)}} }{}^2 \nonumber \\
& \le \frac{L_H M_H^2}{2m_H^2} \norm{ \brac{\mb U^* \Hess f \paren{\mb q^{(k)}} \mb U}^{-1} \mb U^*\grad f \paren{\mb q^{(k)}} }{}^2 
 \le \frac{L_H M_H^2}{2m_H^2} \Delta^2,  \label{eqn:TRM-lemma-iii-3}
 \end{align}
where we have used the fact that 
\begin{align*}
\norm{\mb \delta^{(k)}}{} = \norm{ \brac{\mb U^* \Hess f \paren{\mb q^{(k)}} \mb U}^{-1} \mb U^*\grad f \paren{\mb q^{(k)}} }{} < \Delta,  
\end{align*}
as the step is unconstrained. On the other hand, by Theorem~\ref{thm:geometry_orth} and Lemma~\ref{lem:alg_gradient_lb}, w.h.p. 
\begin{align}
 \norm{\grad f\paren{\mb q}}{} \ge \beta_{\grad} \doteq \frac{9}{10} c_\star \theta,  \quad \forall \; \mb q \in \rII. \label{eqn:TRM-lemma-iii-4}
\end{align}
Hence, provided 
\begin{align}\label{eqn:Delta-bound-1}
	\Delta < \frac{m_H}{M_H}\sqrt{ \frac{2\betagrad}{L_H}},
\end{align}
we have $\mb q^{(k+1)} \in \rI$. 

We next show that when $\Delta$ is small enough, the next step is also unconstrained. Straight forward calculations give 
\begin{align*}
\norm{\mb U \brac{\mb U^* \Hess f \paren{\mb q^{(k+1)}} \mb U}^{-1} \mb U^*\grad f \paren{\mb q^{(k+1)}} }{} \leq \frac{L_H M_H^2}{2m_H^3} \Delta^2. 
\end{align*}
Hence, provided that 
 \begin{equation} \label{eqn:Delta-bound-2}
 \Delta < \frac{2m_H^3}{L_H M_H^2},
 \end{equation}
we will have
\begin{align*}
\norm{\mb U \brac{\mb U^* \Hess f \paren{\mb q^{(k+1)}} \mb U}^{-1} \mb U^*\grad f \paren{\mb q^{(k+1)}} }{} < \Delta; 
\end{align*}
in words, the minimizer to the trust-region subproblem for the next step lies in the relative interior of the trust region - the constraint is inactive. By Lemma~\ref{lem:TR-step} and Lemma \ref{lem:X-infinty-tail-bound}, we have 
\begin{align}
L_H \;=\; C_1 n^{3/2} \log^{3/2}\paren{np}/\mu^2, \label{eqn:lip-value-bound}
\end{align}
w.h.p. for some numerical constant $C_1$. Combining this and our previous estimates of $m_H$, $M_H$, we conclude whenever
\begin{align*}
\Delta 
\leq  \min \set{\frac{1}{20\sqrt{5}} - \frac{\mu}{4\sqrt{2}}, \frac{c_1\mu c_\sharp c_\star^{1/2} \theta^{3/2} }{n^{7/4} \log^{7/4}\paren{np}}, \frac{c_2\mu c_\sharp^3 \theta^3 }{n^{7/2} \log^{7/2}\paren{np}}}. 
\end{align*}
for some positive numerical constants $c_1$ and $c_2$, w.h.p. our next trust-region step is also an unconstrained $\rI$ step. Noting that $c_\star$ and $c_\sharp$ can be made the same by our definition, we make the claimed simplification on $\Delta$. This completes the proof. 
\end{proof}

Finally, we want to show that ultimate unconstrained $\rI$ iterates actually converges to one nearby local minimizer rapidly. Lemma~\ref{lem:TR-step} has established the gradient is diminishing. The next lemma shows the magnitude of gradient serves as a good proxy for distance to the local minimizer. 
\begin{lemma}\label{lem:TR-grad-opt} 
Let $\mb q_\star \in \bb S^{n-1}$ such that $\grad f (\mb q_\star) = \mb 0$, and $\mb \delta \in T_{\mb q_\star}\bb S^{n-1}$. Consider a geodesic $\gamma(t) = \exp_{\mb q_\star}(t \mb \delta)$, and suppose that on $[0, \tau]$, $\Hess f (\gamma(t)) \succeq m_H \mc P_{T_{\gamma(t)} \bb S^{n-1}}$. Then 
\begin{equation*}
\norm{ \grad f (\gamma(\tau)) }{} \;\ge\; m_H \tau \norm{\mb \delta}{}. 
\end{equation*}
\end{lemma}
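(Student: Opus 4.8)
The plan is to reduce this Riemannian statement to a one‑dimensional calculus argument along the geodesic $\gamma$. Introduce the scalar auxiliary function
\begin{align*}
h(t) \;\doteq\; \innerprod{\grad f(\gamma(t))}{\dot{\gamma}(t)}, \qquad t \in [0, \tau],
\end{align*}
which records the component of the Riemannian gradient along the direction of travel. Since $\dot{\gamma}(0) = \mb \delta$ and $\grad f(\mb q_\star) = \mb 0$ by hypothesis, $h(0) = 0$; and since $\gamma$ has constant speed $\norm{\mb \delta}{}$, Cauchy--Schwarz at the endpoint gives $h(\tau) \le \norm{\mb \delta}{}\,\norm{\grad f(\gamma(\tau))}{}$. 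So it suffices to prove $h(\tau) \ge m_H \norm{\mb \delta}{}^2\,\tau$, which will follow from a pointwise lower bound on $h'(t)$ (note $h$ is $C^1$ since $f$ is smooth on $\bb S^{n-1}$ and $\gamma$ stays on the sphere).

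First I would compute $h'(t)$ directly in ambient coordinates, using $\grad f(\mb q) = (\mb I - \mb q \mb q^*)\nabla f(\mb q)$ together with the spherical geodesic equation $\ddot{\gamma}(t) = -\norm{\mb \delta}{}^2\,\gamma(t)$. Two simplifications occur: the term involving $\ddot{\gamma}$ vanishes because $\grad f(\gamma(t)) \perp \gamma(t)$, and the terms produced by differentiating the projector $\mb I - \gamma \gamma^*$ collapse, using $\dot{\gamma}(t)^*\gamma(t) = 0$, to exactly $-\norm{\mb \delta}{}^2\,\innerprod{\nabla f(\gamma(t))}{\gamma(t)}$. Combining this with the $\dot{\gamma}^*\nabla^2 f(\gamma)\dot{\gamma}$ piece, one recognizes
\begin{align*}
h'(t) \;=\; \dot{\gamma}(t)^*\paren{\nabla^2 f(\gamma(t)) - \innerprod{\nabla f(\gamma(t))}{\gamma(t)}\mb I}\dot{\gamma}(t) \;=\; \dot{\gamma}(t)^*\,\Hess f(\gamma(t))\,\dot{\gamma}(t),
\end{align*}
the last equality holding because $\dot{\gamma}(t) \in T_{\gamma(t)}\bb S^{n-1}$, so the two projectors in the definition of $\Hess f$ act as the identity on it. Identifying $h'$ with the Riemannian Hessian quadratic form evaluated at the velocity is the crux of the argument; I expect the only real obstacle to be the bookkeeping that makes the projector‑derivative terms cancel into precisely the $-\innerprod{\nabla f}{\gamma}\mb I$ correction, which I would verify term by term.

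Given that identity, the hypothesis $\Hess f(\gamma(t)) \succeq m_H \mc P_{T_{\gamma(t)}\bb S^{n-1}}$ and $\dot{\gamma}(t) \in T_{\gamma(t)}\bb S^{n-1}$ yield $h'(t) \ge m_H \norm{\dot{\gamma}(t)}{}^2 = m_H \norm{\mb \delta}{}^2$ for every $t \in [0, \tau]$. Integrating from $0$ to $\tau$ with $h(0) = 0$ gives $h(\tau) \ge m_H \norm{\mb \delta}{}^2\,\tau$, and dividing by $\norm{\mb \delta}{}$ after the Cauchy--Schwarz bound above yields $\norm{\grad f(\gamma(\tau))}{} \ge m_H \tau \norm{\mb \delta}{}$, as claimed. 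An alternative, more abstract route transports the gradient back to $\mb q_\star$ via the parallel transport operator $\mc P_\gamma^{0\leftarrow t}$ of \eqref{eq:alg_tsp_op}, invokes the covariant‑derivative identity $\tfrac{D}{dt}\grad f(\gamma(t)) = \Hess f(\gamma(t))[\dot\gamma(t)]$ and the isometry property of parallel transport, and then pairs with $\mb \delta/\norm{\mb \delta}{}$; it reaches the same inequality, but I would prefer the self‑contained ambient computation since it does not require importing that identity.
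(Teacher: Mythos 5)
Your argument is correct and differs from the paper's. The paper uses the manifold Taylor's theorem from Absil et al., writing $\mc P_{\gamma}^{0\leftarrow\tau}\grad f(\gamma(\tau)) = \int_0^\tau \mc P_\gamma^{0\leftarrow t}\Hess f(\gamma(t))[\dot\gamma(t)]\,dt$, pairing with $\mb \delta$, and using the isometry of parallel transport to reduce the integrand to the Hessian quadratic form before applying Cauchy--Schwarz. That is precisely the ``abstract route'' you flag at the end and elect not to pursue. Your preferred argument instead introduces $h(t) = \innerprod{\grad f(\gamma(t))}{\dot\gamma(t)}$ and works entirely in ambient $\R^n$, using the explicit spherical geodesic equation $\ddot\gamma(t) = -\norm{\mb\delta}{}^2\gamma(t)$. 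A small simplification worth noting: since $\dot\gamma(t) \in T_{\gamma(t)}\bb S^{n-1}$, you can replace $\grad f$ by $\nabla f$ immediately, so $h(t) = \dot\gamma(t)^*\nabla f(\gamma(t))$, and then the ``projector-derivative bookkeeping'' you anticipate never arises at all -- differentiation yields $h'(t) = \dot\gamma^*\nabla^2 f(\gamma)\dot\gamma + \nabla f(\gamma)^*\ddot\gamma$, the second term becoming $-\norm{\mb\delta}{}^2\innerprod{\nabla f(\gamma)}{\gamma}$ directly, giving $h'(t) = \dot\gamma^*\Hess f(\gamma)\dot\gamma$ after one more use of $\dot\gamma^*\gamma = 0$. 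The trade-off is what you'd expect: your computation is self-contained and elementary but leans on the sphere's explicit embedding and geodesic formula; the paper's version imports the covariant-derivative identity and parallel-transport isometry and would port verbatim to any Riemannian manifold.
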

\begin{proof}
See Page~\pageref{proof:lem_TR-grad-opt} under Section~\ref{sec:proof_algorithm}. 
\end{proof}
To see this relates the magnitude of gradient to the distance away from the critical point, w.l.o.g., one can assume $\tau = 1$ and consider the point $\mb q = \exp_{\mb q_\star}(\mb \delta)$. Then 
\begin{align*}
\norm{\mb q_\star - \mb q}{} = \norm{\exp_{\mb q_\star}(\mb \delta) - \mb q}{} = \sqrt{2 - 2\cos \norm{\mb \delta}{}} = 2\sin (\norm{\mb \delta}{}/2) \le \norm{\mb \delta}{} \le \norm{\grad f(\mb q)}{}/m_H, 
\end{align*}
where at the last inequality above we have used Lemma~\ref{lem:TR-grad-opt}. Hence, combining this observation with Lemma~\ref{lem:TR-step}, we can derive the asymptotic sequence convergence result as follows. 
\begin{proposition}\label{lem:TRM-lemma-v}
Assume~\eqref{eq:trm_proof_assumed_cond} and the conditions in Lemma \ref{lem:TRM-lemma-iv}. Let $\mb q^{(k_0)} \in \rI$ and the $k_0$-th step the first unconstrained $\rI$ step and $\mb q_\star$ be the unique local minimizer of $f$ over one connected component of $\rI$ that contains $\mb q^{(k_0)}$. Then w.h.p., for any positive integer $k'\geq 1$, 
\begin{align}
	\norm{\mb q^{(k_0 + k')} - \mb q_\star }{} \; \leq\;  \frac{cc_\sharp \theta \mu}{n^{3/2} \log^{3/2} \paren{np}} 2^{- 2^{k'}},
\end{align}
provided that 
\begin{align}
	\Delta \leq \frac{c'c_\sharp^2 \theta^2 \mu}{n^{5/2} \log^{5/2} (np) }. 
\end{align}
Here $c_\sharp$ is as defined in Lemma~\ref{lem:alg_strcvx_lb} that can be made equal to $c_s\star$ as defined in Theorem~\ref{thm:geometry_orth}, and $c$, $c'$ are positive numerical constants. 
\end{proposition}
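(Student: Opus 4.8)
The plan is to track the Riemannian gradient norm along the tail of the iterate sequence, convert the quadratic shrinkage from Lemma~\ref{lem:TR-step} into quadratic convergence, and then pass from gradient decay to distance decay via Lemma~\ref{lem:TR-grad-opt}. First I would fix the geometric setting. By Proposition~\ref{lem:TRM-lemma-iv}, once the $k_0$-th step is an unconstrained $\rI$ step, every subsequent step is an unconstrained $\rI$ step; since $\rI$ is a disjoint union of $2n$ caps of radius $\tfrac{\mu}{4\sqrt 2}$ and each step moves the iterate by at most $\norm{\mb \delta}{} < \Delta \ll \tfrac{\mu}{4\sqrt 2}$, the whole tail $\set{\mb q^{(k)}}_{k \ge k_0 - 1}$ stays inside the single cap $\mc C$ (w.l.o.g.\ the one around $\mb e_n$) forming the connected component of $\rI$ containing $\mb q^{(k_0)}$. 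By Lemma~\ref{lem:alg_strcvx_lb} (applied in the symmetric section of $\mb e_n$), $\Hess f(\mb q) \succeq m_H \mc P_{T_{\mb q}\bb S^{n-1}}$ for every $\mb q \in \mc C$ with $m_H = c_\sharp \theta/\mu$; by Lemma~\ref{lem:mag_lip_fq} together with $\norm{\mb X_0}{\infty} \le 4\log^{1/2}(np)$ w.h.p.\ one has $\norm{\Hess f(\mb q)}{} \le M_H \doteq 16 n \log(np)/\mu$ globally and (as in Proposition~\ref{lem:TRM-lemma-iv}) $L_H \le C_1 n^{3/2}\log^{3/2}(np)/\mu^2$. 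Finally, $\mb w_\star$ is an interior local minimizer of $g$ over $\Gamma$ with $\norm{\mb w_\star}{} \le \mu/16 < \tfrac{\mu}{4\sqrt 2}$, so $\mb q_\star \doteq \mb q(\mb w_\star) \in \mc C$ and $\grad f(\mb q_\star) = \mb 0$.

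Next I would run the self-improving recursion. Writing $e_k \doteq \norm{\grad f(\mb q^{(k)})}{}$ and $\kappa_0 \doteq L_H/(2m_H^2)$, Lemma~\ref{lem:TR-step} applied at each $k \ge k_0 - 1$ (legitimate, since $\mb q^{(k)} \in \mc C$ and the step is unconstrained) gives $e_{k+1} \le \kappa_0 e_k^2$, i.e.\ $\kappa_0 e_{k+1} \le (\kappa_0 e_k)^2$, hence $\kappa_0 e_{k_0 + k'} \le (\kappa_0 e_{k_0})^{2^{k'}}$ for all $k' \ge 0$. To ignite the contraction I bound the base term: since step $k_0$ is unconstrained, the optimality condition~\eqref{eqn:ts-optimal-solution-1} gives $\grad f(\mb q^{(k_0-1)}) = -\Hess f(\mb q^{(k_0-1)})\mb \delta^{(k_0)}$ with $\norm{\mb \delta^{(k_0)}}{} < \Delta$, so $e_{k_0-1} \le M_H \Delta$, and then Lemma~\ref{lem:TR-step} yields $e_{k_0} \le \kappa_0 M_H^2 \Delta^2$. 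Therefore $\kappa_0 e_{k_0} \le (\kappa_0 M_H \Delta)^2 \le 1/2$ as soon as $\Delta \le \sqrt 2\, m_H^2/(L_H M_H)$, which after substituting $m_H, L_H, M_H$ is precisely the hypothesis $\Delta \le c' c_\sharp^2 \theta^2 \mu/(n^{5/2}\log^{5/2}(np))$. Consequently $e_{k_0 + k'} \le \kappa_0^{-1}\, 2^{-2^{k'}} = (2m_H^2/L_H)\, 2^{-2^{k'}}$.

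Finally I would convert gradient decay into distance decay. Because $\mc C$ is a geodesically convex spherical cap and both $\mb q_\star$ and $\mb q^{(k_0+k')}$ lie in $\mc C$, the minimizing geodesic $\gamma(t) = \exp_{\mb q_\star}(t\mb \delta)$ ($\norm{\mb \delta}{} = 1$, $t \in [0,\tau]$, $\gamma(\tau) = \mb q^{(k_0+k')}$) stays in $\mc C$, so $\Hess f(\gamma(t)) \succeq m_H \mc P_{T_{\gamma(t)}\bb S^{n-1}}$ throughout. Lemma~\ref{lem:TR-grad-opt} then gives $e_{k_0+k'} = \norm{\grad f(\gamma(\tau))}{} \ge m_H \tau$, while $\norm{\mb q^{(k_0+k')} - \mb q_\star}{} = 2\sin(\tau/2) \le \tau$; combining with the previous bound on $e_{k_0+k'}$, $\norm{\mb q^{(k_0+k')} - \mb q_\star}{} \le e_{k_0+k'}/m_H \le (2m_H/L_H)\, 2^{-2^{k'}}$, and substituting $m_H = c_\sharp\theta/\mu$, $L_H = C_1 n^{3/2}\log^{3/2}(np)/\mu^2$ gives the stated estimate $\norm{\mb q^{(k_0+k')} - \mb q_\star}{} \le c c_\sharp \theta\mu\, 2^{-2^{k'}}/(n^{3/2}\log^{3/2}(np))$.

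The main obstacle I anticipate is the bookkeeping that makes the last step rigorous: one must verify that (i) the iterates genuinely never leave the single strongly convex cap $\mc C$, which rests on the conclusion of Proposition~\ref{lem:TRM-lemma-iv} and on $\Delta$ being far below the cap radius, and (ii) the minimizing geodesic between $\mb q_\star$ and a tail iterate stays inside $\mc C$, so that the uniform curvature lower bound of Lemma~\ref{lem:alg_strcvx_lb} is available along the entire geodesic — this uses geodesic convexity of small spherical caps and the fact that $\mb q_\star$ itself is interior to $\mc C$. The rest is the quadratic recursion, which needs only the single one-time smallness condition $\kappa_0 e_{k_0} \le 1/2$ secured by the stated bound on $\Delta$.
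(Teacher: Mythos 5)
Your proposal follows the paper's own route in all essentials: the quadratic recursion $\norm{\grad f(\mb q^{(k+1)})}{}\le \tfrac{L_H}{2m_H^2}\norm{\grad f(\mb q^{(k)})}{}^2$ from Lemma~\ref{lem:TR-step} iterated along the tail of unconstrained $\rI$ steps guaranteed by Proposition~\ref{lem:TRM-lemma-iv}, the conversion of gradient decay into distance decay via Lemma~\ref{lem:TR-grad-opt} along the geodesic joining $\mb q_\star$ to the iterate, and the same substitutions $m_H=c_\sharp\theta/\mu$, $M_H\lesssim n\log(np)/\mu$, $L_H\lesssim n^{3/2}\log^{3/2}(np)/\mu^2$, which reproduce both the stated rate and the stated restriction on $\Delta$. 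Your explicit verification that the iterates and the connecting geodesic stay inside a single strongly convex cap (your points (i) and (ii)) is handled only implicitly in the paper and is a sound addition.

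The one step that does not hold up as written is the ignition of the recursion. You bound $\norm{\grad f(\mb q^{(k_0-1)})}{}\le M_H\Delta$ by invoking the unconstrained optimality condition~\eqref{eqn:ts-optimal-solution-1} at $\mb q^{(k_0-1)}$, and then apply Lemma~\ref{lem:TR-step} at that point to get $\norm{\grad f(\mb q^{(k_0)})}{}\le \kappa_0 M_H^2\Delta^2$. But in the convention used throughout the paper's convergence proofs (see Lemma~\ref{lem:TR-step} and Proposition~\ref{lem:TRM-lemma-iv}), the ``$k_0$-th step'' is the trust-region subproblem solved \emph{at} $\mb q^{(k_0)}$, and it is that step which the hypothesis declares to be the first unconstrained $\rI$ step; nothing whatsoever is assumed about the step that produced $\mb q^{(k_0)}$. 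That earlier step may have been a constrained $\rI$ step or an $\rII$/$\rIII$ step, in which case \eqref{eqn:ts-optimal-solution-1} need not hold there, $\mb q^{(k_0-1)}$ need not lie in the cap where $\Hess f\succeq m_H\,\mc P_{T_{\mb q^{(k_0-1)}}\bb S^{n-1}}$ (so Lemma~\ref{lem:TR-step} is not applicable at $k_0-1$), and if $k_0=0$ there is no previous step at all. The repair is one line and is exactly what the paper does: apply \eqref{eqn:ts-optimal-solution-1} at $\mb q^{(k_0)}$ itself, whose step is unconstrained by hypothesis, giving $\norm{\grad f(\mb q^{(k_0)})}{}\le M_H\norm{\brac{\mb U^*\Hess f(\mb q^{(k_0)})\mb U}^{-1}\mb U^*\grad f(\mb q^{(k_0)})}{}\le M_H\Delta$, so that $\tfrac{L_H}{2m_H^2}\norm{\grad f(\mb q^{(k_0)})}{}\le\tfrac{L_HM_H}{2m_H^2}\Delta\le\tfrac12$ under $\Delta\lesssim m_H^2/(L_HM_H)$ — the same $\Delta$-condition you derived. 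With that substitution your argument coincides with the paper's proof.
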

\begin{proof}
By the geometric characterization in Theorem~\ref{thm:geometry_orth} and corollary~\ref{cor:geometry_orth}, $f$ has $2n$ separated local minimizers, each located in $\rI$ and within distance $\sqrt{2}\mu/16$ of one of the $2n$ signed basis vectors $\{\pm \mb e_i\}_{i \in [n]}$. Moreover, it is obvious when $\mu \le 1$, $\rI$ consists of $2n$ disjoint connected components. We only consider the symmetric component in the vicinity of $\mb e_n$ and the claims carry on to others by symmetry. 

Suppose that $k_0$ is the index of the first unconstrained iterate in region $\rI$, i.e., $\mb q^{(k_0)} \in \rI$. By Lemma \ref{lem:TR-step}, for any integer $k'\geq 1$, we have
	 \begin{align}\label{eqn:TRM-lemma-v-1}
	 	\norm{\grad f \paren{\mb q^{(k_0 + k')} } }{} \;\leq\;  \frac{2 m_H^2}{L_H}\left( \frac{L_H}{2 m_H^2}  \norm{\grad f \paren{\mb q^{(k_0)}} }{} \right)^{2^{k'}}.
	 \end{align}
	 where $L_H$ is as defined in Lemma~\ref{lem:TR-step}, $m_H$ as the strong convexity parameter for $\rI$ defined above.   
	 
	 Now suppose $\mb q_\star$ is the unique local minimizer of $f$, lies in the same $\rI$ component that $q^{(k_0)}$ is located. Let $\gamma_{k'}(t) = \exp_{\mb q_\star}\paren{t\mb \delta}$ to be the unique geodesic that connects $\mb q_\star$ and $\mb q^{(k_0+k')}$ with $\gamma_{k'}(0) = \mb q_\star$ and $\gamma_{k'}(1) = \mb q^{(k_0+k')}$. We have 
	 \begin{align*}
	 \norm{\mb q^{(k_0 + k')} - \mb q_\star}{} 
	 & \le \norm{\exp_{\mb q_\star}(\mb \delta) - \mb q_\star}{} = \sqrt{2- 2\cos \norm{\mb \delta}{}} = 2 \sin(\norm{\mb \delta}{}/2) \\
	 & \le \norm{\mb \delta}{} \le \frac{ 1 }{ m_H }\norm{ \grad f \paren{\mb q^{(k_0 + k')}} }{}
 \leq  \frac{2m_H}{L_H} \left( \frac{L_H}{2m_H^2} \norm{\grad f \paren{\mb q^{(k_0)}} }{} \right)^{2^{k'}}, 
	 \end{align*}
where at the second line we have repeatedly applied Lemma \ref{lem:TR-grad-opt}. 

By the optimality condition~\eqref{eqn:ts-optimal-solution-1} and the fact that $\norm{\mb \delta^{(k_0)}}{} < \Delta$, we have
\begin{align*}
\frac{L_H}{2 m_H^2} \norm{\grad f \paren{\mb q^{(k_0)} } }{} 
& \le \frac{L_H}{2 m_H^2} M_H \norm{ \brac{\mb U^* \Hess f \paren{\mb q^{(k_0)}} \mb U}^{-1} \mb U^* \grad f \paren{\mb q^{(k_0)} } }{} 
\le \frac{L_H M_H}{2m_H^2} \Delta.
\end{align*}
Thus, provided
\begin{align}\label{eqn:trust-region-size-v}
	\Delta < \frac{m_H^2}{L_H M_H},
\end{align} 
we can combine the above results and obtain 
\begin{align*}
	\norm{\mb q^{(k_0 + k')} - \mb q_\star }{} \;\le\; \frac{2m_H}{L_H} 2^{- 2^{k'}}.
\end{align*}
Based on the previous estimates for $m_H$, $M_H$ and $L_H$, we obtain that w.h.p., 
\begin{align*}
	\norm{\mb q^{(k_0 + k')} - \mb q_\star }{} \; \leq\;  \frac{c_1 c_\sharp \theta \mu}{n^{3/2} \log^{3/2} \paren{np}} 2^{- 2^{k'}}.
\end{align*}
Moreover, by \eqref{eqn:trust-region-size-v}, w.h.p., it is sufficient to have the trust region size
\begin{align*}
	\Delta \leq \frac{c_2 c_\sharp^2 \theta^2 \mu }{n^{5/2} \log^{5/2} (np) }.
\end{align*}
Thus, we complete the proof.
\end{proof}

Now we are ready to piece together the above technical proposition to prove Theorem~\ref{thm:trm_orth}. 

\begin{proof}[of Theorem \ref{thm:trm_orth}]
Assuming~\eqref{eq:trm_proof_assumed_cond} and in addition that 
	\begin{align*}
	 \Delta < \min\set{\frac{c_1c_\star  \theta \mu^2}{n^{5/2} \log^{3/2}\paren{np}}, \frac{c_2 c_\sharp^3 \theta^3 \mu}{n^{7/2}\log^{7/2}\paren{np}}}
	\end{align*}
for small enough numerical constants $c_1$ and $c_2$ and $c_\star$, $c_\sharp$ as defined in Theorem~\ref{thm:geometry_orth} and Lemma~\ref{lem:alg_strcvx_lb} respectively ($c_\star$ and $c_\sharp$ can be set to the same constant value), it can be verified that the conditions of all the above propositions are satisfied. Since each of the local minimizers is contained in the relative interior of one connected component of $\rI$ (comparing distance of local minimizers to their respective signed basis vector, as stated in Corollary~\ref{cor:geometry_orth}, with size of each connected $\rI$ component yields this ), we can define a threshold value 
\begin{align*}
\zeta \doteq \min \set{\min_{\mb q \;\in\; \overline{\rII \cup \rIII}} f\paren{\mb q}, \max_{\mb q \;\in\; \overline{\rI}} f\paren{\mb q}} 
\end{align*}
where overline $\overline{\cdot}$ here denotes set closure. Obviously $\zeta$ is well-defined as the function $f$ is continuous, and both sets $\overline{\rII \cup \rIII}$ and $\overline{\rI}$ are compact. Also for any of the local minimizers, say $\mb q_\star$, it holds that $\zeta > f(\mb q_\star)$. 

By the four propositions above, a step will either be $\rIII$, $\rII$, or constrained $\rI$ step that decreases the objective value by at least a certain fixed amount (we call this \emph{Type A}), or be an unconstrained $\rI$ step (\emph{Type B}), such that all future steps are unconstrained $\rI$ and the sequence converges to one local minimizer quadratically. Hence, regardless the initialization, the whole iteration sequence consists of consecutive Type A steps, followed by consecutive Type B steps. Depending on the initialization, either the Type A phase or the Type B phase can be absent. In any case, \emph{in a finite number of steps, the function value must drops below $\zeta$ and all future iterates stay in $\rI$}. Indeed, if the function value never drops below $\zeta$, by continuity the whole sequence must be of entirely Type A - whereby either the finite-length sequence converges to one local minimizer, or every iterate of the infinite sequence steadily decreases the objective value by at least a \emph{fixed} amount - in either case, the objective value should ever drop below $\zeta$ in finitely many steps; hence contradiction arises. Once the function value drops below $\zeta$, type A future steps decreases the objective value further down below $\zeta$ - by definition of $\zeta$, these iterates stay within $\rI$, and type B future steps, aka unconstrained $\rI$ steps obviously keep all subsequent iterates in $\rI$.   

There are three possibilities after the objective value drop below $\zeta$ and all future iterates stay in $\rI$. Assume $\mb q_\star$ is the unique local minimizer in the same connected component of $\rI$ as the current iterate: (1) the sequence always take constrained $\rI$ steps and hits $\mb q_\star$ exactly in finitely many steps;  (2) the sequence takes constrained $\rI$ steps until reaching certain point $\mb q' \in \rI$ such that $f(\mb q') < f(\mb q_\star) + \dI$, where $\dI$ is as defined in Proposition~\ref{lem:TRM-lemma-iii}. Since each constrained $\rI$ step must decrease the objective value by at least $\dI$, the next and all future steps must be unconstrained $\rI$ steps and the sequence converges to $\mb q_\star$;  (3) the sequence starts to take unconstrained $\rI$ steps at a certain point $\mb q'' \in \rI$ such that $f(\mb q'') \ge f(\mb q_\star) + \dI$. In any case, the sequence converges to the local minimizer $\mb q_\star$. By Proposition~ \ref{lem:TRM-lemma-ii}, Proposition~\ref{lem:TRM-lemma-iii}, and Proposition~ \ref{lem:TRM-lemma-v}, the number of iterations to obtain an $\eps$-near solution to $\mb q_\star$ can be grossly bounded by
\begin{align*}
\#\text{Iter} &\;\leq\;  \frac{f\paren{\mb q^{(0)}} - f\paren{\mb q_\star}}{\min\Brac{\dI, \dII, \dIII }} \;+\;\log\log \paren{\frac{c_5c_\sharp \theta \mu}{\eps n^{3/2} \log^{3/2}\paren{np}}} \nonumber\\
 &\;\leq\;  \brac{\min\set{\frac{c_3c_{\star}^3 \theta^3 \mu^4}{n^6 \log^3\paren{np}}, \frac{c_4c_\sharp^2 \theta^2 }{n}\Delta^2 }}^{-1}\paren{f\paren{\mb q^{(0)}} - f\paren{\mb q_\star}} \;+\; \log\log \paren{\frac{c_5c_\sharp \theta \mu}{\eps n^{3/2} \log^{3/2}\paren{np}}}, 
\end{align*}
where we have assumed $p \le \exp(n)$ when comparing the various bounds. Finally, the claimed failure probability comes from a simple union bound with careful bookkeeping. 
\end{proof}

\subsection{Extending to Convergence for Complete Dictionaries} \label{sec:alg_comp}
Note that for any complete $\mb A_0$ with condition number $\kappa\paren{\mb A_0}$, from Lemma~\ref{lem:pert_key_mag} we know when $p$ is large enough, w.h.p. one can write the preconditioned $\ol{\mb Y}$ as 
\begin{align*}
\ol{\mb Y} = \mb U \mb V^* \mb X_0 + \mb \Xi \mb X_0
\end{align*}
for a certain $\mb \Xi$ with small magnitude, and $\mb U \mb \Sigma \mb V^* = \mathtt{SVD}\paren{\mb A_0}$. Since $\mb U \mb V^*$ is orthogonal, 
\begin{align*}
f\paren{\mb q; \mb U \mb V^* \mb X_0 + \mb \Xi \mb X_0} = f\paren{\mb V \mb U^* \mb q; \mb X_0 + \mb V \mb U^* \mb \Xi \mb X_0}. 
\end{align*}
In words, the function landscape of $f(\mb q; \mb U \mb V^* \mb X_0 + \mb \Xi \mb X_0)$ is a rotated version of that of $f(\mb q; \mb X_0 + \mb V \mb U^* \mb \Xi \mb X_0)$. Thus, any local minimizer $\mb q_\star$ of $f(\mb q; \mb X_0 + \mb V \mb U^* \mb \Xi \mb X_0)$ is rotated to $\mb U \mb V^* \mb q_\star$, one minimizer of $f(\mb q; \mb U \mb V^* \mb X_0 + \mb \Xi \mb X_0)$. Also if our algorithm generates iteration sequence $\mb q_0, \mb q_1, \mb q_2, \dots$ for $f(\mb q; \mb X_0 + \mb V \mb U^* \mb \Xi \mb X_0)$ upon initialization $\mb q_0$, it will generate the iteration sequence $\mb U \mb V^* \mb q_0$, $\mb U \mb V^* \mb q_1$, $\mb U \mb V^* \mb q_2, \dots$ for $f\paren{\mb q;  \mb U \mb V^* \mb X_0 + \mb \Xi \mb X_0}$. So w.l.o.g. it is adequate that we prove the convergence results for the case $f(\mb q; \mb X_0 + \mb V \mb U^* \mb \Xi \mb X_0)$, corresponding to $\bm A_0 = \mb I$ with perturbation $\wt{\mb \Xi} \doteq \mb V \mb U^* \mb \Xi$. So in this section (Section~\ref{sec:alg_comp}), we write $f(\mb q; \wt{\mb X_0})$ to mean $f(\mb q; \mb X_0 + \wt{\mb \Xi} \mb X_0)$. 

Theorem~\ref{thm:geometry_comp} has shown that when 
\begin{align} \label{eq:trm_proof_assumed_cond_comp}
\theta \in \paren{0, \frac{1}{2}}, \; \mu \le \min\set{\frac{c_a\theta}{n}, \frac{c_b}{n^{5/4}}}, \; p \ge \frac{C}{c_\star^2 \theta} \max\set{\frac{n^4}{\mu^4}, \frac{n^5}{\mu^2}} \kappa^8 \paren{\mb A_0} \log^4\paren{\frac{\kappa\paren{\mb A_0}n}{\mu \theta}}, 
\end{align}
the geometric structure of the landscape is qualitatively unchanged and the $c_\star$ constant can be replaced with $c_\star/2$. Particularly, for this choice of $p$, Lemma~\ref{lem:pert_key_mag} implies 
\begin{align} \label{eq:trm_conv_comp_pert_bound}
\|\wt{\mb \Xi}\| = \| \mb V \mb U^*\mb \Xi\| \le \norm{\widetilde{\mb \Xi}}{} \le c c_\star \theta \paren{\max\set{\frac{n^{3/2}}{\mu^2}, \frac{n^2}{\mu}} \log^{3/2}\paren{np}}^{-1} 
\end{align}
for a constant $c$ that can be made arbitrarily small by setting the constant $C$ in $p$ sufficiently large. The whole proof is quite similar to that of orthogonal case in the last section. We will only sketch the major changes below. To distinguish with the corresponding quantities in the last section, we use $\wt{\cdot}$ to denote the corresponding perturbed quantities here. 
\begin{itemize}
\item Lemma~\ref{lem:mag_lip_fq}: Note that  
\begin{align*}
\|\mb X_0 + \wt{\mb \Xi} \mb X_0\|_{\infty} \le \norm{\mb X_0}{\infty} + \|\wt{\mb \Xi}\mb X_0\|_{\infty} \le \|\mb X_0\|_\infty + \sqrt{n} \|\wt{\mb \Xi}\| \|\mb X_0\|_\infty \le 3\|\mb X_0\|_\infty/2, 
\end{align*}
where by~\eqref{eq:trm_conv_comp_pert_bound} we have used $\|\wt{\mb \Xi}\| \le 1/(2\sqrt{n})$ to simplify the above result. So we obtain  
\begin{align*}
\wt{M}_{\nabla} \le \frac{3}{2} M_{\nabla}, \; \wt{M}_{\nabla^2} \le \frac{9}{4} M_{\nabla^2}, \; \wt{L}_{\nabla} \le \frac{9}{4} L_{\nabla}, \; \wt{L}_{\nabla^2} \le \frac{27}{8} L_{\nabla^2}. 
\end{align*}

\item Lemma~\ref{lem:alg_approx_bd2}: Now we have
\begin{align*}
\wt{\eta}_f \doteq \wt{M}_{\nabla} + 2\wt{M}_{\nabla^2} + \wt{L}_{\nabla} + \wt{L}_{\nabla^2} \le 4 \eta_f. 
\end{align*}

\item Lemma~\ref{lem:alg_gradient_func} and Lemma~\ref{lem:alg_neg_cuv_func} are generic and nothing changes. 

\item Proposition~\ref{lem:TRM-lemma-ii}: We have now $\mb w^* \mb g(\mb w)/\norm{\mb w}{} \ge c_\star \theta/2$ by Theorem~\ref{thm:geometry_comp} and w.h.p. $\mb w^* \nabla g(\mb w)/\norm{\mb w}{}$ is $C_1n^2 \log(np)/\mu$-Lipschitz by Proposition~\ref{prop:lip-gradient} and the fact $\norm{\mb X_0 + \wt{\mb \Xi} \mb X_0}{\infty} \le 3\norm{\mb X_0}{\infty}/2$ shown above. Similarly, $\mb w^* \mb g(\mb w)/\norm{\mb w}{} \le -c_\star \theta/2$ by Theorem~\ref{thm:geometry_comp} and $\mb w^* \nabla^2 g(\mb w) \mb w/\norm{\mb w}{}^2$ is $C_2 n^3 \log^{3/2}(np) /\mu^2$-Lipschitz. Moreover, $\wt{\eta}_f \le 4\eta_f$ as shown above. Since there are only multiplicative constant changes to the various quantities, we conclude 
\begin{align}
\wt{\dII} = c_1 \dII, \quad \wt{\dIII} = c_1 \dIII
\end{align}
provided  
\begin{align}
\Delta < \frac{c_2 c_\star \theta \mu^2}{n^{5/2} \log^{3/2}\paren{np}}. 
\end{align}

\item Lemma~\ref{lem:alg_strcvx_func}: $\eta_f$ is changed to $\wt{\eta}_f$ with $\wt{\eta}_f \le 4\eta_f$ as shown above. 

\item Lemma~\ref{lem:alg_strcvx_lb}: By~\eqref{eq:fq_hess}, we have 
\begin{multline*}
\norm{\nabla^2 f(\mb q; \mb X_0) - \nabla^2 f(\mb q; \wt{\mb X_0}) }{} \le \frac{1}{p} \sum_{k=1}^p \Brac{L_{\ddot{h}} \|\wt{\mb \Xi}\| \norm{\mb x_k}{}^2 + \frac{1}{\mu}\norm{\mb x_k \mb x_k^* - \wt{\mb x}_k \wt{\mb x}_k^*}{}} \\
 \le \|\wt{\mb \Xi}\| \paren{L_{\ddot{h}} + 2/\mu + \|\wt{\mb \Xi}\|/\mu } \sum_{k=1}^p \norm{\mb x_k}{}^2 \le \|\wt{\mb \Xi}\| \paren{L_{\ddot{h}} + 3/\mu }n\norm{\mb X_0}{\infty}^2, 
\end{multline*}
where $L_{\ddot{h}}$ is the Lipschitz constant for the function $\ddot{h}_{\mu}\paren{\cdot}$ and we have used the fact that $\|\wt{\mb \Xi}\| \le 1$. Similarly, by~\ref{eq:fq_grad}, 
\begin{align*}
\norm{\nabla f(\mb q; \mb X_0) - \nabla f(\mb q; \wt{\mb X_0}) }{}
\le \frac{1}{p}\sum_{k=1}^p \Brac{L_{\dot{h}_\mu} \|\wt{\mb \Xi}\| \norm{\mb x_k}{} + \|\wt{\mb \Xi}\| \norm{\mb x_k}{}  } \le \paren{L_{\dot{h}_\mu} +1} \|\wt{\mb \Xi}\| \sqrt{n} \norm{\mb X_0}{\infty}, 
\end{align*}
where $L_{\dot{h}}$ is the Lipschitz constant for the function $\dot{h}_{\mu}\paren{\cdot}$. Since $L_{\ddot{h}} \le 2/\mu^2$ and $L_{\dot{h}} \le 1/\mu$, and $\norm{\mb X_0}{\infty} \le 4\sqrt{\log(np)}$ w.h.p. (Lemma~\ref{lem:X-infinty-tail-bound}). By~\eqref{eq:trm_conv_comp_pert_bound}, w.h.p. we have 
\begin{align*}
\norm{\nabla f(\mb q; \mb X_0) - \nabla f(\mb q; \wt{\mb X_0}) }{} \le \frac{1}{2} c_\sharp \theta, \quad\text{and}\quad  \norm{\nabla^2 f(\mb q; \mb X_0) - \nabla^2 f(\mb q; \wt{\mb X_0}) }{}  \le \frac{1}{2} c_\sharp \theta, 
\end{align*}
provided the constant $C$ in~\eqref{eq:trm_proof_assumed_cond_comp} for $p$ is large enough. Thus, by~\eqref{eq:fq_rie_hess} and the above estimates we have 
\begin{align*}
\norm{\Hess f(\mb q; \mb X_0) - \Hess f(\mb q; \wt{\mb X}_0)}{} & \le \norm{\nabla f(\mb q; \mb X_0) - \nabla f(\mb q; \wt{\mb X_0})}{} + \norm{\nabla^2 f(\mb q; \mb X_0) - \nabla^2 f(\mb q; \wt{\mb X_0})}{} \\
& \le c_\sharp \theta \le \frac{1}{2} c_\sharp \frac{\theta}{\mu}, 
\end{align*}
provided $\mu \le 1/2$. So we conclude 
\begin{align}
\Hess f(\mb q; \wt{\mb X}_0) \succeq \frac{1}{2}c_\sharp \frac{\theta}{\mu} \mc P_{T_{\mb q}} \bb S^{n-1} \Longrightarrow \wt{m_H} \ge \frac{1}{2}c_\sharp \frac{\theta}{\mu}. 
\end{align}

\item Proposition~\ref{lem:TRM-lemma-iii}: From the estimate of $M_H$ above Proposition~\ref{lem:TRM-lemma-iii} and the last point, we have 
\begin{align*}
\norm{\Hess f(\mb q; \wt{\mb X}_0)}{} \le \frac{36}{\mu} \log(np), \quad \text{and} \quad  \Hess f(\mb q; \wt{\mb X}_0) \succeq \frac{1}{2}c_\sharp \frac{\theta}{\mu} \mc P_{T_{\mb q}} \bb S^{n-1}. 
\end{align*}
Also since $\wt{\eta}_f \le 4\eta_f$ in Lemma~\ref{lem:alg_approx_bd2} and Lemma~\ref{lem:alg_strcvx_func}, there are only multiplicative constant change to the various quantities. We conclude that 
\begin{align}
\wt{\dI} = c_3 \dI
\end{align}
provided that 
\begin{align}
\Delta \le \frac{c_4 c_\sharp^2 \theta^2 \mu  }{n^{5/2} \log^{5/2} (np) }. 
\end{align}
\item Lemma~\ref{lem:alg_gradient_lb} is generic and nothing changes. 
\item Lemma~\ref{lem:TR-step}: $\wt{L}_H \le 27L_H/8$. 
\item Proposition~\ref{lem:TRM-lemma-iv}: All the quantities involved in determining $\Delta$, $m_H$, $M_H$, and $L_H$, $\beta_{\grad}$ are modified by at most constant multiplicative factors and changed to their respective tilde version, so we conclude that the RTM algorithm always takes unconstrained $\rI$ step after taking one, provided that
\begin{align}
\Delta \le \frac{c_5 c_\sharp^3 \theta^3 \mu}{n^{7/2} \log^{7/2}\paren{np}}. 
\end{align}
\item Lemma~\ref{lem:TR-grad-opt}:is generic and nothing changes. 

\item Proposition~\ref{lem:TRM-lemma-v}: Again $m_H$, $M_H$, $L_H$ are changed to $\wt{m_H}$, $\wt{M_H}$, and $\wt{L_H}$, respectively, differing by at most constant multiplicative factors. So we conclude for any integer $k' \ge 1$, 
\begin{align}
	\norm{\mb q^{(k_0 + k')} - \mb q_\star }{} \; \leq\;  \frac{c_6 c_\sharp \theta \mu}{n^{3/2} \log^{3/2} \paren{np}} 2^{- 2^{k'}},
\end{align}
provided 
\begin{align}
	\Delta \leq \frac{c_7 c_\sharp^2 \theta^2 \mu}{n^{5/2} \log^{5/2} (np) }. 
\end{align}
\end{itemize}
The final proof to Theorem~\ref{thm:geometry_comp} is almost identical to that of Theorem~\ref{thm:geometry_orth}, except for 
\begin{align}
\Delta & \le \min\set{\frac{c_8 c_\star  \theta \mu^2}{n^{5/2} \log^{3/2}\paren{np}}, \frac{c_9 c_\sharp^3 \theta^3 \mu}{n^{7/2}\log^{7/2}\paren{np}}}, \\
\wt{\zeta} & \doteq \min \set{\min_{\mb q \;\in\; \overline{\rII \cup \rIII}} f\paren{\mb q; \wt{\mb X_0}}, \max_{\mb q \;\in\; \overline{\rI}} f\paren{\mb q; \wt{\mb X}_0}},  
\end{align}
and hence all $\zeta$ is now changed to $\wt{\zeta}$, and also $\dI$, $\dII$, and $\dIII$ are changed to $\wt{\dI}$, $\wt{\dII}$, and $\wt{\dIII}$ as defined above, respectively. The final iteration complexity to each an $\eps$-near solution is hence 
\begin{align*}
\#\text{Iter} 
 & \le \brac{\min\set{\frac{c_{10} c_{\star}^3 \theta^3 \mu^4}{n^6 \log^3\paren{np}}, \frac{c_{11} c_\sharp^2 \theta^2 }{n}\Delta^2 }}^{-1}\paren{f\paren{\mb q^{(0)}} - f\paren{\mb q_\star}} \;+\; \log\log \paren{\frac{c_{12} c_\sharp \theta \mu}{\eps n^{3/2} \log^{3/2}\paren{np}}}. 
\end{align*}
Hence overall the qualitative behavior of the algorithm is not changed, as compared to that for the orthogonal case. Above $c_1$ through $c_{12}$ are all numerical constants.

\section{Complete Algorithm Pipeline and Main Results} \label{sec:main_result} 

For orthogonal dictionaries, from Theorem \ref{thm:geometry_orth} and its corollary, we know that all the minimizers $\wh{\mb q}_\star$ are $O(\mu)$ away from their respective nearest ``target'' $\mb q_\star$, with $\mb q_\star^* \wh{\mb Y} = \alpha \mb e_i^* \mb X_0$ for certain $\alpha \ne 0$ and $i \in [n]$; in Theorem~\ref{thm:trm_orth}, we have shown that w.h.p.\ the Riemannian TRM algorithm produces a solution $\widehat{\mb q}\in \bb S^{n-1}$ that is $\eps$ away to one of the minimizers, say $\wh{\mb q}_\star$. Thus, the $\wh{\mb q}$ returned by the TRM algorithm is $O(\eps + \mu)$ away from $\mb q_\star$. For exact recovery, we use a simple linear programming rounding procedure, which guarantees to exactly produce the optimizer $\mb q_\star$. We then use deflation to sequentially recover other rows of $\mb X_0$. Overall, w.h.p.\ both the dictionary $\mb A_0$ and sparse coefficient $\mb X_0$ are exactly recovered up to sign permutation, when $\theta \in \Omega(1)$, for orthogonal dictionaries. We summarize relevant technical lemmas and main results in Section~\ref{sec:main_orth}. The same procedure can be used to recover complete dictionaries, though the analysis is slightly more complicated; we present the results in Section~\ref{sec:main_comp}. Our overall algorithmic pipeline for recovering orthogonal dictionaries is sketched as follows. 
\begin{leftbar} 
\begin{enumerate}
\item \textbf{Estimating one row of $\mb X_0$ by the Riemannian TRM algorithm.} By Theorem \ref{thm:geometry_orth} (resp. Theorem~\ref{thm:geometry_comp}) and Theorem \ref{thm:trm_orth} (resp. Theorem~\ref{thm:trm_comp}), starting from any, when the relevant parameters are set appropriately (say as $\mu_\star$ and $\Delta_\star$), w.h.p., our Riemannian TRM algorithm finds a local minimizer $\widehat{\mb q}$, with $\mb q_\star$ the nearest target that exactly recovers one row of $\mb X_0$ and $\norm{\wh{\mb q} - \mb q_\star}{} \in O(\mu)$ (by setting the target accuracy of the TRM as, say,  $\eps = \mu$).

\item \textbf{Recovering one row of $\mb X_0$ by rounding.} To obtain the target solution $\mb q_\star$ and hence recover (up to scale) one row of $\mb X_0$, we solve the following linear program:
\begin{align}\label{eqn:LP_rounding}
	\mini_{\mb q} \norm{\mb q^*\wh{\mb Y}}{1},\quad \st \quad \innerprod{\mb r}{\mb q} = 1, 
\end{align}
with $\mb r = \widehat{\mb q}$. We show in Lemma~\ref{lem:alg_rounding_orth} (resp. Lemma~\ref{lem:alg_rounding_comp}) that when $\innerprod{\wh{\mb q}}{\mb q_\star}$ is sufficiently large, implied by $\mu$ being sufficiently small, w.h.p. the minimizer of \eqref{eqn:LP_rounding} is exactly $\mb q_\star$, and hence one row of $\mb X_0$ is recovered by $\mb q_\star^* \wh{\mb Y}$.

\item \textbf{Recovering all rows of $\mb X_0$ by deflation.} Once $\ell$ rows of $\mb X_0$ ($1 \le \ell \le n-2$) have been recovered, say, by unit vectors $\mb q_\star^1, \dots, \mb q_\star^\ell$, one takes an orthonormal basis $\mb U$ for $[\mathrm{span}\paren{\mb q_\star^1, \dots, \mb q_\star^\ell}]^\perp$, and minimizes the new function $h(\mb z) \doteq f(\mb U \mb z; \wh{\mb Y})$ on the sphere $\bb S^{n-\ell-1}$ with the Riemannian TRM algorithm (though conservative, one can again set parameters as $\mu_\star$, $\Delta_\star$, as in Step $1$) to produce a $\wh{\mb z}$. Another row of $\mb X_0$ is then recovered via the LP rounding~\eqref{eqn:LP_rounding} with input $\mb r = \mb U \wh{\mb z}$ (to produce $\mb q_\star^{\ell+1}$). Finally, by repeating the procedure until depletion, one can recover all the rows of $\mb X_0$.

\item \textbf{Reconstructing the dictionary $\mb A_0$.} By solving the linear system $\mb Y = \mb A\mb X_0$, one can obtain the dictionary $\mb A_0 = \mb Y \mb X_0^* \paren{\mb X_0 \mb X_0^*}^{-1}$.  
\end{enumerate}
\end{leftbar}

\subsection{Recovering Orthogonal Dictionaries} \label{sec:main_orth}
\begin{theorem}[Main theorem - recovering orthogonal dictionaries]\label{thm:main_orth}
Assume the dictionary $\mb A_0$ is orthogonal and we take $\wh{\mb Y} = \mb Y$. Suppose $\theta \in \paren{0,1/3}$, $\mu_\star <  \min\Brac{c_a\theta n^{-1},c_b n^{-5/4}}$, and $p \ge Cn^3 \log \frac{n}{\mu_\star \theta} /\paren{\mu_\star^2\theta^2}$. The above algorithmic pipeline with parameter setting
\begin{align}
\Delta_\star \le \min\set{\frac{c_c c_\star \theta \mu_\star^2}{n^{5/2} \log^{5/2}\paren{np}}, \frac{c_d c_\star^3 \theta^3 \mu_\star}{n^{7/2}\log^{7/2}\paren{np}}}, 
\end{align}
recovers the dictionary $\mb A_0$ and $\mb X_0$ in polynomial time, with failure probability bounded by $c_e p^{-6}$. Here $c_\star$ is as defined in Theorem~\ref{thm:geometry_orth}, and $c_a$ through $c_e$, and $C$ are all positive numerical constants. 
\end{theorem}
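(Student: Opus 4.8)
The plan is to chain the landscape results already in hand --- Theorem~\ref{thm:geometry_orth} with Corollary~\ref{cor:geometry_orth}, and the convergence guarantee Theorem~\ref{thm:trm_orth} --- with the linear-programming rounding step~\eqref{eqn:LP_rounding}, then wrap the resulting ``recover one row'' unit inside the deflation recursion and finish with a single linear solve for $\mb A_0$ and a union bound over all rounds. Concretely, in Step~1 run Algorithm~\ref{alg:trm} on $\wh{\mb Y} = \mb Y$ with target accuracy $\eps = \mu_\star$; since $\eps$ enters the iteration bound of Theorem~\ref{thm:trm_orth} only through a $\log\log(1/\eps)$ term and the stated $\Delta_\star$ meets that theorem's step-size hypothesis, this is admissible and still polynomial. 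Theorem~\ref{thm:trm_orth} then gives, w.h.p., an iterate $\wh{\mb q}$ with $\norm{\wh{\mb q}-\mb q_\star}{} \le \mu_\star$ for one of the $2n$ local minimizers, and Corollary~\ref{cor:geometry_orth} identifies $\mb q_\star$ as $O(\mu_\star)$-close to a column $\mb a_i$ of $\mb A_0$ with $\mb q_\star^*\mb Y = \alpha\,\mb e_i^*\mb X_0$, $\alpha\ne 0$; hence $\innerprod{\wh{\mb q}}{\mb q_\star} \ge 1 - O(\mu_\star^2)$.

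In Step~2, feed $\mb r = \wh{\mb q}$ into~\eqref{eqn:LP_rounding}. The rounding lemma (Lemma~\ref{lem:alg_rounding_orth}) guarantees that once $\innerprod{\mb r}{\mb q_\star}$ is close enough to $1$ --- implied by $\mu_\star$ sufficiently small --- and $\theta < 1/3$, the LP has $\mb q_\star$ as its \emph{exact} unique minimizer, so $\mb q_\star^*\mb Y$ is precisely one scaled row of $\mb X_0$. This step is what forces $\theta<1/3$ rather than the $\theta<1/2$ needed for the geometry: the dual certificate for the LP rests on a Bernstein/combinatorial control of the column supports of $\mb X_0$ that fails past this rate.

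Because Step~2 is exact, after $\ell$ rounds the recovered unit vectors $\mb q_\star^1,\dots,\mb q_\star^\ell$ are exactly $\ell$ signed columns of $\mb A_0$; taking an orthonormal $\mb U$ for the orthogonal complement of their span, $\mb U^*\mb Y = \wt{\mb A}_0\wt{\mb X}_0$ with $\wt{\mb A}_0 \in O_{n-\ell}$ (images under $\mb U^*$ of the remaining columns of $\mb A_0$) and $\wt{\mb X}_0$ the submatrix of $\mb X_0$ on the remaining $n-\ell$ rows, still $\mathrm{BG}(\theta)$ in dimension $n-\ell$. Thus $h(\mb z) = f(\mb U\mb z;\wh{\mb Y})$ is the objective of an orthogonal dictionary-recovery instance in dimension $n-\ell \le n$, and Theorems~\ref{thm:geometry_orth},~\ref{thm:trm_orth} and Lemma~\ref{lem:alg_rounding_orth} apply verbatim --- the hypotheses on $\mu_\star$, $\Delta_\star$, $p$ are monotone in the dimension, so the conservative dimension-$n$ settings stay admissible for every $n-\ell$. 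Iterating $n-1$ times recovers $n-1$ rows of $\mb X_0$ exactly and the last is pinned down up to scale as the one-dimensional complement; then $\mb X_0\mb X_0^*$ is invertible w.h.p.\ (standard concentration for a $\mathrm{BG}$ matrix with $p \gg n\log n$), so Step~4's $\mb A_0 = \mb Y\mb X_0^*(\mb X_0\mb X_0^*)^{-1}$ reconstructs the dictionary, recovery holding up to the unavoidable sign, scale and permutation. Each round is a polynomial-time TRM run, an LP, and a linear solve, repeated $O(n)$ times, so the pipeline runs in polynomial time.

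Finally, the failure probability is assembled from $O(n)$ invocations of Theorem~\ref{thm:trm_orth}, $O(n)$ invocations of Lemma~\ref{lem:alg_rounding_orth}, and the $\mb X_0\mb X_0^*$-invertibility event; since $p \ge Cn^3\log\tfrac{n}{\mu_\star\theta}/(\mu_\star^2\theta^2) \gg n^3$, the polynomial-in-$n$ prefactors (e.g.\ the $8n^2p^{-10}$ term) are absorbed and a union bound gives total failure probability $\le c_e p^{-6}$ after adjusting constants. I expect the main obstacle to lie in the Step~2--Step~3 interface: deflation reduces \emph{exactly} to a lower-dimensional copy of the same model only because the LP returns $\mb q_\star$ exactly, so any residual error would break the coordinate-subspace structure and compound over the $n$ rounds --- the crux is therefore the exactness in Lemma~\ref{lem:alg_rounding_orth} under $\theta<1/3$, together with checking that all upstream guarantees and the conservative parameter choices genuinely survive each reduction $n\mapsto n-\ell$.
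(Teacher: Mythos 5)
Your proposal follows essentially the same route as the paper: TRM with accuracy $\eps = \mu_\star$ to land $O(\mu_\star)$-near a target, exact LP rounding via Lemma~\ref{lem:alg_rounding_orth} (which is indeed where $\theta<1/3$ enters), deflation onto the orthogonal complement yielding a lower-dimensional orthogonal BG instance to which the same guarantees apply with the conservative parameters, the linear solve for $\mb A_0$, and a union bound absorbed into $c_e p^{-6}$. The only nit is a slight conflation of the local minimizer with the exact target $\mb q_\star$ (the former is $O(\mu)$-close to the latter, which is what makes $\innerprod{\wh{\mb q}}{\mb q_\star}\ge 249/250$ for small $\mu_\star$), but this does not affect the argument.
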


Towards a proof of the above theorem, it remains to be shown the correctness of the rounding and deflation procedures.

\paragraph{Proof of LP rounding.} The following lemma shows w.h.p.\ the rounding will return the desired $\mb q_\star$, provided the estimated $\wh{\mb q}$ is already near to it. 
\begin{lemma} [LP rounding - orthogonal dictionary]\label{lem:alg_rounding_orth}
There exists a positive constant $C$, such that for all $\theta \in \paren{0,1/3}$, and $p \ge Cn^2\log(n/\theta)/\theta$, with probability at least 
$
	1-2p^{-10} - \theta (n-1)^{-7}p^{-7} - \exp\paren{-0.3\theta (n-1) p},  
$
the rounding procedure~\eqref{eqn:LP_rounding} returns $\mb q_\star$ for any input vector $\mb r$ that satisfies
\begin{align*}
	\innerprod{\mb r}{\mb q_\star} \ge 249/250. 
\end{align*}
\end{lemma}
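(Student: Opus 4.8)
The plan is to show that $\mathbf q_\star$ is the unique minimizer of the linear program \eqref{eqn:LP_rounding} by certifying optimality through a subgradient/dual argument, exploiting the fact that $\mathbf q_\star^* \widehat{\mathbf Y} = \alpha \mathbf e_i^* \mathbf X_0$ is a very sparse row (only a $\theta$-fraction of entries nonzero) while the feasible set is the affine hyperplane $\{\mathbf q : \langle \mathbf r, \mathbf q\rangle = 1\}$. First I would normalize: by the scale/sign equivalence we may assume $i = n$ and that $\mathbf q_\star = \mathbf e_n$ after a change of basis, so that $\mathbf q_\star^* \widehat{\mathbf Y} = (\mathbf x_0^n)^*$ is the last row of $\mathbf X_0$ which is supported on a set $S$ of size $\approx \theta p$, and the remaining rows of $\widehat{\mathbf Y}$ restricted to $S^c$ together with the full matrix on $S$ carry the ``random'' structure. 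Decompose $\mathbf q = \mathbf q_\star + \mathbf v$ with $\mathbf v \perp$ nothing in particular but satisfying $\langle \mathbf r, \mathbf v \rangle = 0$ from feasibility of both $\mathbf q$ and $\mathbf q_\star$ once we check $\langle \mathbf r, \mathbf q_\star\rangle$ can be rescaled to $1$ (this uses $\langle \mathbf r, \mathbf q_\star\rangle \ge 249/250 > 0$).

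The core estimate is a lower bound of the form
\begin{align*}
\norm{(\mathbf q_\star + \mathbf v)^* \widehat{\mathbf Y}}{1} - \norm{\mathbf q_\star^* \widehat{\mathbf Y}}{1} \;\ge\; \sum_{k \in S^c} \abs{\mathbf v^* \widehat{\mathbf y}_k} - \sum_{k \in S} \abs{\mathbf v^* \widehat{\mathbf y}_k} \;>\; 0
\end{align*}
for all feasible $\mathbf v \ne \mathbf 0$. Here the term over $S^c$ is a sum of $|S^c| \approx (1-\theta)p$ absolute values of (conditionally) Gaussian-type variables $\mathbf v^* \widehat{\mathbf y}_k$ and concentrates around a quantity $\gtrsim (1-\theta)\sqrt{2/\pi}\, p \cdot \norm{\mathbf v_{\perp}}{}$ (the component of $\mathbf v$ orthogonal to $\mathbf e_n$), while the term over $S$ is controlled by $|S| \approx \theta p$ times $\norm{\mathbf v}{}$ times $\norm{\mathbf X_0}{\infty} \lesssim \sqrt{\log(np)}$ — but crucially one also uses that $\mathbf v$ cannot be too aligned with $\mathbf e_n$ because $\langle \mathbf r, \mathbf v\rangle = 0$ and $\mathbf r$ is close to $\mathbf e_n$, so the ``bad'' $\mathbf e_n$-direction is nearly killed by feasibility. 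This is exactly where the hypothesis $\langle \mathbf r, \mathbf q_\star\rangle \ge 249/250$ enters: it forces $\abs{v_n} \lesssim \norm{\mathbf v_\perp}{}$, so the whole difference is $\gtrsim \big[(1-\theta)c_1 - \theta c_2\big] p\,\norm{\mathbf v_\perp}{} > 0$ for $\theta < 1/3$ with a comfortable margin. I would make the Gaussian lower bound uniform over the sphere of directions $\mathbf v_\perp$ via a standard $\varepsilon$-net argument together with a tail bound (e.g.\ the $\norm{\mathbf X_0}{\infty}$ bound of Lemma~\ref{lem:X-infinty-tail-bound} and Bernstein/Hoeffding on the $S^c$ sum), which accounts for the stated failure probability and the sample requirement $p \ge Cn^2\log(n/\theta)/\theta$.

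The main obstacle I anticipate is handling the interaction between the random support $\Omega$ (which determines $S$) and the Gaussian values, and making the lower bound on $\sum_{k\in S^c}\abs{\mathbf v^*\widehat{\mathbf y}_k}$ hold \emph{uniformly} in $\mathbf v$ while simultaneously upper-bounding $\sum_{k\in S}\abs{\mathbf v^*\widehat{\mathbf y}_k}$ uniformly — the two bad events must be ruled out on the same net, and the anti-concentration lower bound for a sum of absolute Gaussians is the delicate direction (one cannot simply use $\mathbb E\abs{Z} = \sqrt{2/\pi}$; a small-ball / Paley–Zygmund type argument plus the net is needed). A secondary technical point is that $\widehat{\mathbf y}_k$ for $k \in S$ has a nonzero $\mathbf e_n$-component equal to the Gaussian value, so when bounding $\abs{\mathbf v^*\widehat{\mathbf y}_k}$ one must separate the contribution $v_n \cdot (\mathbf x_0^n)_k$ and argue it is dominated once $\abs{v_n}$ is small; this is routine given the feasibility constraint but needs care with constants to land at the threshold $249/250$. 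The deflation step is then immediate: after projecting out recovered rows one is in the same situation in dimension $n - \ell$, with the BG structure of the remaining rows preserved, so the same lemma applies with $n$ replaced by $n-\ell$, and a union bound over $\ell \in \{0,\dots,n-1\}$ gives the global guarantee claimed in Theorem~\ref{thm:main_orth}.
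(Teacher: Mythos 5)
Your overall architecture --- compare a feasible $\mb q = \mb q_\star/\innerprod{\mb r}{\mb q_\star} + \mb v$ with $\innerprod{\mb r}{\mb v} = 0$, split the columns by the support $S$ of the target row, and use $\innerprod{\mb r}{\mb q_\star}\ge 249/250$ to force $\abs{v_n} \le \norm{\ol{\mb r}}{}\norm{\ol{\mb v}}{}/r_n \lesssim 0.09\,\norm{\ol{\mb v}}{}$ --- is sound, and it is essentially the mechanism behind the paper's proof, which reaches the same comparison through a chain of relaxations ending in a one-dimensional LP in $\abs{q_n}$, with the key probabilistic input being Lemma~\ref{lem:rounding-0}. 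However, your central estimate has a genuine quantitative gap. First, the $S^c$ lower bound is missing a factor of $\theta$: for a $\mathrm{BG}(\theta)$ column, $\bb E\abs{\ol{\mb v}^*\ol{\mb x}_k} = \sqrt{2/\pi}\,\bb E\norm{\ol{\mb v}_{\mc J}}{}$, which lies between $\sqrt{2/\pi}\,\theta\norm{\ol{\mb v}}{}$ and $\sqrt{2/\pi}\,\sqrt{\theta}\norm{\ol{\mb v}}{}$, so the $S^c$ sum concentrates around at most $\sqrt{2/\pi}\sqrt{\theta}(1-\theta)p\norm{\ol{\mb v}}{}$, not $(1-\theta)\sqrt{2/\pi}\,p\norm{\ol{\mb v}}{}$. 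Second, and more seriously, controlling the $S$ sum by $\abs{S}\cdot\norm{\mb v}{}\cdot\norm{\mb X_0}{\infty}$ is a worst-case bound of order $\theta p\sqrt{n\log(np)}\,\norm{\mb v}{}$, which exceeds the corrected $S^c$ gain by a factor on the order of $\sqrt{n\log(np)}/\theta$ (even dropping the $\sqrt n$, the $\sqrt{\log(np)}$ alone kills the margin once the $\theta$ factor is restored). So the claimed margin $\brac{(1-\theta)c_1-\theta c_2}p\norm{\mb v_\perp}{} > 0$ does not survive for constant $\theta$ and growing $n,p$ as written.

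The fix is exactly what the paper's Lemma~\ref{lem:rounding-0} does: since the first $n-1$ rows are independent of the last row, conditioned on $S$ the columns $\ol{\mb x}_k$ for $k\in S$ and $k\in S^c$ are identically distributed, so you should bound the difference $\sum_{k\in S^c}\abs{\ol{\mb v}^*\ol{\mb x}_k} - \sum_{k\in S}\abs{\ol{\mb v}^*\ol{\mb x}_k}$ as a whole: its mean is $\paren{\abs{S^c}-\abs{S}}\,\bb E\abs{\ol{\mb v}^*\ol{\mb x}_1} \ge \paren{1-\tfrac94\theta}p\,\sqrt{2/\pi}\,\theta\,\norm{\ol{\mb v}}{} \ge \tfrac14 p\,\sqrt{2/\pi}\,\theta\,\norm{\ol{\mb v}}{}$ for $\theta<1/3$ and $\abs{S}\le\tfrac98\theta p$, and scalar Bernstein plus an $\eps$-net (the $\norm{\mb X_0}{\infty}$ bound of Lemma~\ref{lem:X-infinty-tail-bound} is only needed to control the discretization error) makes this uniform in $\ol{\mb v}$. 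No Paley--Zygmund/small-ball argument is needed: you only need a lower deviation of a sum of i.i.d.\ terms below its mean by a constant fraction. With that replacement, the remaining pieces of your plan --- the cross term $\abs{v_n}\,\norm{\mb x_n}{1}$ handled by Gaussian concentration of $\norm{\mb x_n}{1}$ together with the feasibility bound on $\abs{v_n}$, and the reduction of general orthogonal $\mb A_0$ to $\mb A_0=\mb I$ by rotation invariance --- match the paper's proof, which differs only in packaging (relaxations to a scalar LP rather than a perturbation inequality). The deflation discussion is not part of this lemma and can be dropped.
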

\begin{proof}
See Page~\pageref{proof:lem_alg_rounding_orth} under Section~\ref{sec:proof_main}. 
\end{proof}
Since $\innerprod{\wh{\mb q}}{\mb q_\star} = 1-\|\wh{\mb q} - \mb q_\star\|^2/2$, and $\norm{\wh{\mb q} - \mb q_\star}{} \in O(\mu)$, it is sufficient when $\mu$ is smaller than some small constant. 

\paragraph{Proof sketch of deflation.} We show the deflation works by induction. To understand the deflation procedure, it is important to keep in mind that the ``target'' solutions $\Brac{\mb q_\star^i}_{i=1}^n$ are orthogonal to each other. W.l.o.g., suppose we have found the first $\ell$ unit vectors $\mb q_\star^1, \dots, \mb q_\star^\ell$ which recover the first $\ell$ rows of $\mb X_0$. Correspondingly, we partition the target dictionary $\mb A_0$  and $\mb X_0$ as
\begin{align}\label{eqn:matrices-partition}
	\mb A_0 = [\mb V, \mb V^\perp],\quad \mb X_0 = \brac{\begin{smallmatrix}
\mb X_0^{[\ell]} \\
\mb X_0^{[n-\ell]}
\end{smallmatrix} },
\end{align}
where $\mb V \in \R^{n \times \ell}$, and $\mb X_0^{[\ell]}\in \bb R^{\ell\times n} $ denotes the submatrix with the first $\ell$ rows of $\mb X_0$. Let us define a function: $f_{n -\ell}^{\downarrow}: \R^{n-\ell} \mapsto \R$ by
\begin{align}\label{eqn:func-(n-l)}
f_{n -\ell}^{\downarrow}(\mb z; \mb W) \doteq \frac{1}{p}\sum_{k=1}^p h_{\mu}(\mb z^* \mb w_k), 
\end{align}
for any matrix $\mb W \in \R^{(n -\ell) \times p}$. Then by \eqref{eq:main_l2}, our objective function is equivalent to  
\begin{align*}
	h(\mb z) = f(\mb U \mb z; \mb A_0 \mb X_0) = f_{n-\ell}^{\downarrow}(\mb z;\mb U^* \mb A_0\mb X_0) = f_{n-\ell}^{\downarrow}(\mb z; \mb U^*\mb V\mb X_0^{[\ell]} + \mb U^*\mb V^\perp \mb X_0^{[n-\ell]}).
\end{align*}
Since the columns of the orthogonal matrix $\mb U\in \bb R^{n\times (n-\ell)}$ forms the orthogonal complement of $\text{span}\paren{\mb q_\star^1,\cdots,\mb q_\star^\ell}$, it is obvious that $\mb U^*\mb V=\mb 0$. Therefore, we obtain
\begin{align*}
h(\mb z) = f_{n-\ell}^{\downarrow}(\mb z; \mb U^*\mb V^\perp \mb X_0^{[n-\ell]}).
\end{align*}
Since $\mb U^* \mb V^\perp$ is orthogonal and $\mb X_0^{[n-\ell]} \sim_{i.i.d.} \mathrm{BG}(\theta)$, this is another instance of orthogonal dictionary learning problem with reduced dimension. If we keep the parameter settings $\mu_\star$ and $\Delta_\star$ as Theorem \ref{thm:main_orth}, the conditions of Theorem~\ref{thm:geometry_orth} and Theorem~\ref{thm:trm_orth} for all cases with reduced dimensions are still valid. So w.h.p., the TRM algorithm returns a $\wh{\mb z}$ such that $\norm{\wh{\mb z} - \mb z_\star}{} \in O(\mu_\star)$ where $\mb z_\star$ is a ``target'' solution that recovers a row of $\mb X_0$: 
\begin{align*}
\mb z_\star^*\mb U^* \mb V^\perp \mb X_0^{[n-\ell]} = \mb z_\star^*\mb U^* \mb A_0\mb X_0 = \alpha \mb e_i^* \mb X_0,\quad \text{for some }i \not \in [\ell].
\end{align*}
So pulling everything back in the original space, the effective target is $\mb q_\star^{\ell+1} \doteq \mb U \mb z_\star$, and $\mb U \wh{\mb z}$ is our estimation obtained from the TRM algorithm. Moreover, 
\begin{align*}
\norm{\mb U \wh{\mb z} - \mb U \mb z_\star}{} = \norm{\wh{\mb z} - \mb z_\star}{} \in O(\mu_\star). 
\end{align*}
Thus, by Lemma~\ref{lem:alg_rounding_orth}, one successfully recovers $\mb U \mb z_\star$ from $\mb U \wh{\mb z}$ w.h.p. when $\mu_\star$ is smaller than a constant. The overall failure probability can be obtained via a simple union bound and simplification of the exponential tails with inverse polynomials in $p$. 

\subsection{Recovering Complete Dictionaries} \label{sec:main_comp}
By working with the preconditioned data samples $\wh{\mb Y} = \overline{\mb Y} \doteq \sqrt{\theta p}\paren{\mb Y\mb Y^*}^{-1/2} \mb Y$,\footnote{In practice, the parameter $\theta$ might not be know beforehand. However, because it only scales the problem, it does not affect the overall qualitative aspect of results.} we can use a similar procedure described above to recover complete dictionaries.

\begin{theorem}[Main theorem - recovering complete dictionaries]\label{thm:main_comp}
Assume the dictionary $\mb A_0$ is complete with condition number $\kappa\paren{\mb A_0}$ and we take $\wh{\mb Y} = \ol{\mb Y}$. Suppose $\theta \in \paren{0,1/3}$, $\mu_\star <  \min\Brac{c_a\theta n^{-1},c_b n^{-5/4}}$, and $p \ge \frac{C}{c_\star^2 \theta} \max\set{\frac{n^4}{\mu^4}, \frac{n^5}{\mu^2}} \kappa^8\paren{\mb A_0} \log^4\paren{\frac{\kappa\paren{\mb A_0} n}{\mu \theta}}$. The algorithmic pipeline with parameter setting
\begin{align}
\Delta_\star \le \min\set{\frac{c_c c_\star \theta \mu_\star^2}{n^{5/2} \log^{5/2}\paren{np}}, \frac{c_d c_\star^3 \theta^3 \mu_\star}{n^{7/2}\log^{7/2}\paren{np}}}, 
\end{align}
recovers the dictionary $\mb A_0$ and $\mb X_0$ in polynomial time, with failure probability bounded by $c_e p^{-6}$. Here $c_\star$ is as defined in Theorem~\ref{thm:geometry_orth}, and $c_a$ through $c_f$, and $C$ are all positive numerical constants. 
\end{theorem}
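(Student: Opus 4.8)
The proof assembles the four stages of the algorithmic pipeline, conditions each stage's success on the previous ones, and closes with a union bound; structurally it runs parallel to the orthogonal case (Theorem~\ref{thm:main_orth}), the only essential difference being that after the preconditioning $\wh{\mb Y} = \ol{\mb Y} \doteq \sqrt{\theta p}\paren{\mb Y\mb Y^*}^{-1/2}\mb Y$ one works with a \emph{near-orthogonal} dictionary: Lemma~\ref{lem:pert_key_mag} writes $\ol{\mb Y} = \mb U\mb V^*\mb X_0 + \mb \Xi\mb X_0$ with $\norm{\mb \Xi}{}$ made as small as desired by enlarging the constant $C$ in $p$, and one tracks how this $\mb\Xi$ enters each step. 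Note first that since $\paren{\mb Y\mb Y^*}^{-1/2}$ is invertible, $\mathrm{row}\paren{\ol{\mb Y}} = \mathrm{row}\paren{\mb X_0}$, so for each $i\in[n]$ there is an \emph{exact} target $\mb q_\star$ with $\mb q_\star^*\ol{\mb Y} = \alpha\,\mb e_i^*\mb X_0$, $\alpha\neq 0$; in the rotated frame $\mb V\mb U^*\ol{\mb Y} = \mb X_0 + \wt{\mb\Xi}\mb X_0$ (with $\wt{\mb\Xi} = \mb V\mb U^*\mb\Xi$) this $\mb q_\star$ lies within $O\paren{\norm{\wt{\mb\Xi}}{}}$ of a signed basis vector.

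\emph{Stages 1--2 (one row via TRM, then exact rounding).} For the chosen $p$, Theorem~\ref{thm:geometry_comp} gives that $f\paren{\cdot;\mb V\mb U^*\ol{\mb Y}}$ has exactly $2n$ local minimizers on $\bb S^{n-1}$, each within $\sqrt{2}\mu_\star/7$ of a signed basis vector, and Theorem~\ref{thm:trm_comp} gives that, with the stated $\Delta_\star$ and from any initialization, the Riemannian TRM returns in polynomially many iterations a point $\wh{\mb q}$ with $\norm{\wh{\mb q} - \wh{\mb q}_\star}{}\le\eps$ for one such minimizer $\wh{\mb q}_\star$. Taking $\eps = \mu_\star$, the triangle inequality yields $\norm{\wh{\mb q} - \mb q_\star}{} = O\paren{\mu_\star}$, hence $\innerprod{\wh{\mb q}}{\mb q_\star} = 1 - \tfrac12\norm{\wh{\mb q}-\mb q_\star}{}^2 = 1 - O\paren{\mu_\star^2}$, which exceeds $249/250$ once $\mu_\star$ is below a fixed constant. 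Then Lemma~\ref{lem:alg_rounding_comp} (the complete-dictionary analogue of Lemma~\ref{lem:alg_rounding_orth}, whose hypothesis $\theta\in\paren{0,1/3}$ is precisely where the sharper sparsity bound in the statement originates) shows that the linear program~\eqref{eqn:LP_rounding} with $\mb r = \wh{\mb q}$ returns $\mb q_\star$ \emph{exactly}, w.h.p., so $\mb q_\star^*\ol{\mb Y}$ is one rescaled row of $\mb X_0$.

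\emph{Stages 3--4 (deflation and dictionary reconstruction).} Given exactly recovered targets $\mb q_\star^1,\dots,\mb q_\star^\ell$, let $\mb U$ be an orthonormal basis of $\brac{\mathrm{span}\paren{\mb q_\star^1,\dots,\mb q_\star^\ell}}^\perp$ and minimize $h\paren{\mb z} = f\paren{\mb U\mb z;\ol{\mb Y}}$ over $\bb S^{n-\ell-1}$. Because each $\mb q_\star^j$ is $O\paren{\norm{\wt{\mb\Xi}}{}}$-close to $\mb e_j$, one checks that $\mb U^*\ol{\mb Y}$ equals a near-orthogonal $\paren{n-\ell}\times\paren{n-\ell}$ matrix times $\brac{\mb x_0^{\ell+1};\dots;\mb x_0^n}\sim_{i.i.d.}\mathrm{BG}\paren{\theta}$, plus a perturbation of operator norm $O\paren{\norm{\wt{\mb\Xi}}{}}$ --- again an instance of the preconditioned complete-dictionary problem, now in dimension $n-\ell$ and with a perturbation no larger than the original. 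Since the hypotheses of Theorems~\ref{thm:geometry_comp}, \ref{thm:trm_comp} and of Lemma~\ref{lem:alg_rounding_comp} only loosen as the dimension drops, the fixed settings $\mu_\star,\Delta_\star$ still qualify, Stages 1--2 apply verbatim, and iterating $n$ times recovers every row of $\mb X_0$ up to sign, scale, and permutation; then $\mb X_0\mb X_0^*$ is invertible w.h.p.\ (standard BG concentration for $p\gg n$), so $\mb A_0 = \mb Y\mb X_0^*\paren{\mb X_0\mb X_0^*}^{-1}$ recovers the dictionary. All steps --- one matrix inverse-square-root, $n$ TRM runs (each with polynomially many iterations, each iteration a polynomial-size trust-region subproblem/SDP), $n$ linear programs, and one linear solve --- are polynomial, giving the claimed running time. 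For the probability, one union-bounds the $n$ invocations of each of Theorems~\ref{thm:geometry_comp}, \ref{thm:trm_comp} and Lemmas~\ref{lem:pert_key_mag}, \ref{lem:pert_key_grad_hess}, \ref{lem:alg_rounding_comp}: the inverse-polynomial terms, even after the $2n$-section bound inside the corollaries and the $n$-round bound here, sum to $O\paren{p^{-6}}$, while, since the chosen $p$ is a fixed polynomial in $n,1/\theta,\kappa\paren{\mb A_0},1/\mu_\star$ of degree at least $4$, the exponential tails $\exp\paren{-0.3\theta np}$ and $\exp\paren{-c p\mu_\star^2\theta^2/n^2}$ are $\ll p^{-6}$; hence the total is at most $c_e p^{-6}$.

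\emph{Main obstacle.} The technical heart is Stage 3: confirming that projecting out the already-recovered, only approximately orthogonal directions leaves a problem of exactly the "near-orthogonal BG" type governed by Theorems~\ref{thm:geometry_comp} and~\ref{thm:trm_comp}, that the $\mb\Xi$-perturbation retains its size across all $n$ deflation rounds rather than compounding, and that $\mu_\star$, $\Delta_\star$, $p$ can be pinned to common values satisfying the (slightly different) hypotheses of all invoked results simultaneously --- the rest is bookkeeping and careful constant-chasing.
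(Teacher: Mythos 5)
Your plan follows essentially the same route as the paper's own argument: Riemannian TRM on the preconditioned data via Theorems~\ref{thm:geometry_comp} and~\ref{thm:trm_comp}, exact recovery of each row by the LP rounding of Lemma~\ref{lem:alg_rounding_comp}, deflation by reducing the projected problem to another perturbed near-orthogonal instance, reconstruction by a linear solve, and a union bound in which the exponential tails are absorbed into $O\paren{p^{-6}}$. The step you defer as ``one checks'' --- that $\mb U^*\ol{\mb Y}$ is again an orthogonal-times-$\mathrm{BG}\paren{\theta}$ matrix plus a perturbation of size $O\paren{\norm{\mb \Xi}{}}$, and that the perturbed minimizer is within $O\paren{\norm{\mb \Xi}{}}$ of the exact target --- is exactly what the paper supplies in Lemma~\ref{lem:deflation-bound} and the accompanying pseudo-inverse bound $\norm{\wh{\mb z}_\star - \mb z_\star}{} \lesssim \norm{\mb \Xi}{}$, so you have correctly isolated the only nontrivial ingredient.
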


Similar to the orthogonal case, we need to show the correctness of the rounding and deflation procedures so that the theorem above holds.

\paragraph{Proof of LP rounding}
The result of the LP rounding is only slightly different from that of the orthogonal case in Lemma \ref{lem:alg_rounding_orth}, so is the proof.
\begin{lemma} [LP rounding - complete dictionary]\label{lem:alg_rounding_comp}
There exists a positive constant $C$, such that for all $\theta \in \paren{0,1/3}$, and $p \ge \frac{C}{c_\star^2 \theta} \max\set{\frac{n^4}{\mu^4}, \frac{n^5}{\mu^2}} \kappa^8\paren{\mb A_0} \log^4\paren{\frac{\kappa\paren{\mb A_0} n}{\mu \theta}}$, with probability at least 
$
	1-3p^{-8} - \theta (n-1)^{-7}p^{-7} - \exp\paren{-0.3\theta (n-1) p},  
$
the rounding procedure~\eqref{eqn:LP_rounding} returns $\mb q_\star$ for any input vector $\mb r$ that satisfies
\begin{align*}
	\innerprod{\mb r}{\mb q_\star} \ge 249/250.  
\end{align*}
\end{lemma}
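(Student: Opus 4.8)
The plan is to reduce to the orthogonal case, Lemma~\ref{lem:alg_rounding_orth}, and then re-run its proof against a tiny deterministic perturbation. First, by Lemma~\ref{lem:pert_key_mag}, for the $p$ in the hypothesis we have, with probability at least $1-p^{-8}$, the decomposition $\ol{\mb Y}=\mb U\mb V^*\mb X_0+\mb \Xi\mb X_0$ with $\norm{\mb \Xi}{}\le 20\kappa^4\paren{\mb A_0}\sqrt{\theta n\log p/p}$, a bound that the chosen $p$ drives below any prescribed inverse polynomial in $n$, $1/\theta$, $1/\mu$. Since $\mb U\mb V^*$ is orthogonal, the orthogonal change of variables $\mb p=\mb V\mb U^*\mb q$ transforms \eqref{eqn:LP_rounding} into $\mini_{\mb p}\,\norm{\mb p^*\wt{\mb X}_0}{1}$ subject to $\innerprod{\mb r'}{\mb p}=1$, where $\wt{\mb X}_0\doteq\paren{\mb I+\wt{\mb \Xi}}\mb X_0$, $\wt{\mb \Xi}\doteq\mb V\mb U^*\mb \Xi$ (so $\norm{\wt{\mb \Xi}}{}=\norm{\mb \Xi}{}$), and $\mb r'\doteq\mb V\mb U^*\mb r$; the transformed target $\mb p_\star\doteq\mb V\mb U^*\mb q_\star$ satisfies $\mb p_\star^*\wt{\mb X}_0=\alpha\mb e_i^*\mb X_0$ and, because $\mb V\mb U^*$ is orthogonal, $\innerprod{\mb r'}{\mb p_\star}=\innerprod{\mb r}{\mb q_\star}\ge 249/250$ is preserved exactly while $\mb r'$ remains a unit vector. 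So it suffices to show $\mb p_\star$ is the unique minimizer of this ``canonical'' problem, which is exactly the orthogonal rounding problem with the column data perturbed by $\wt{\mb \Xi}\mb X_0$; by Lemma~\ref{lem:X-infinty-tail-bound}, $\norm{\wt{\mb \Xi}\mb X_0}{\infty}\le\sqrt{n}\,\norm{\wt{\mb \Xi}}{}\,\norm{\mb X_0}{\infty}$ is also tiny w.h.p., so every per-column magnitude estimate used in the orthogonal proof survives unchanged.

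The core of the argument is the feasible-direction (subgradient) computation of Lemma~\ref{lem:alg_rounding_orth}, now carried out at $\mb p_\star$. Writing $\wt{\mb x}_k\doteq\paren{\mb I+\wt{\mb \Xi}}\paren{\mb x_0}_k$, any feasible competitor is $\mb p_\star/\innerprod{\mb r'}{\mb p_\star}+s\mb w$ with $\innerprod{\mb r'}{\mb w}=0$, and one shows the directional derivative of $\norm{\cdot}{1}$ along $\pm\mb w$ is strictly positive by splitting $[p]$ into the off-support set $\set{k:\Omega_{ik}=0}$, where $\mb p_\star^*\wt{\mb x}_k=0$, and its complement. On the off-support set the contribution is $\sum_{\Omega_{ik}=0}\abs{\mb w^*\wt{\mb x}_k}=\sum_{\Omega_{ik}=0}\abs{\paren{\paren{\mb I+\wt{\mb \Xi}}^*\mb w}^*\paren{\mb x_0}_k}$, bounded below uniformly over $\mb w$ by a covering argument by a quantity of order $(1-\theta)\,c_1(\theta)\,p\,\norm{\mb w}{}$ for an explicit constant $c_1(\theta)\ge\sqrt{2/\pi}\,\theta$, the perturbation rescaling $\norm{\mb w}{}$ only by $1\pm O(\norm{\wt{\mb \Xi}}{})$. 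On the complement the contribution is $\sum_{\Omega_{ik}=1}\sign\paren{\alpha\brac{\mb X_0}_{ik}}\paren{\mb w^*\wt{\mb x}_k}$; peeling off the $i$-th coordinate gives $\big[\paren{\mb I+\wt{\mb \Xi}}^*\mb w\big]_i\sum_{\Omega_{ik}=1}\abs{\brac{\mb X_0}_{ik}}$, whose prefactor is at most $\paren{\sqrt{1-(249/250)^2}+O(\norm{\wt{\mb \Xi}}{})}\norm{\mb w}{}$ because $\mb w\perp\mb r'$ and $\innerprod{\mb r'}{\mb p_\star}\ge 249/250$, while the remaining sum has mean zero (as $\sign\brac{\mb X_0}_{ik}$ is independent of the other coordinates of $\paren{\mb x_0}_k$) hence is $o(\theta p)\,\norm{\mb w}{}$ uniformly once $p$ is polynomially large. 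Combining, the directional derivative is at least $\big((1-\theta)c_1(\theta)-\eta_0\,\theta-o(1)\big)\,p\,\norm{\mb w}{}$ with $\eta_0=\sqrt{1-(249/250)^2}$, the $o(1)$ absorbing both the empirical fluctuations and the $O(\norm{\wt{\mb \Xi}}{})$ perturbation terms; with $\theta<1/3$ (comfortably inside the regime $(1-\theta)c_1(\theta)>\eta_0\theta$) this is a positive constant times $p\,\norm{\mb w}{}$, so every feasible $\mb w\ne\mb 0$ strictly raises the objective and $\mb p_\star$ is the unique minimizer. Undoing the change of variables identifies this minimizer with $\mb q_\star$.

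The main obstacle is purely quantitative bookkeeping: one needs $\norm{\wt{\mb \Xi}}{}=\norm{\mb \Xi}{}$ small enough to preserve the constant margins of the orthogonal argument, and Lemma~\ref{lem:pert_key_mag} only gives $\norm{\mb \Xi}{}\lesssim\kappa^4\paren{\mb A_0}\sqrt{\theta n\log p/p}$; requiring this to be below the rounding tolerance \emph{and} below the (generally tighter) tolerances of the other lemmas in the pipeline that reuse the same $p$ is what forces the stated sample complexity $p\ge\frac{C}{c_\star^2\theta}\max\set{n^4/\mu^4,\,n^5/\mu^2}\kappa^8\paren{\mb A_0}\log^4\paren{\kappa\paren{\mb A_0}n/(\mu\theta)}$. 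Beyond this, everything is routine: the two uniform concentration bounds above are re-derived over a net on the sphere with the perturbed columns and essentially unchanged constants; the LP optimum depends only on the direction of $\mb r$, so $\mb r$ may be taken to be a unit vector; and the failure probability follows from a union bound over the concentration events (bounded here by $2p^{-8}$), the preconditioning event $p^{-8}$ of Lemma~\ref{lem:pert_key_mag}, the $\theta(n-1)^{-7}p^{-7}$ term, and the $\exp\paren{-0.3\theta(n-1)p}$ term, matching the claimed $1-3p^{-8}-\theta(n-1)^{-7}p^{-7}-\exp\paren{-0.3\theta(n-1)p}$.
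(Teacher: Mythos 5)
Your proposal is correct in outline, but it takes a genuinely different -- and considerably heavier -- route than the paper. You rotate only by the orthogonal factor $\mb V \mb U^*$, so the perturbation $\wt{\mb \Xi}\mb X_0$ stays in the data, and you then re-prove the rounding statement from scratch at the (no longer signed-basis) target $\mb p_\star$ via a directional-derivative argument plus re-run concentration; this is workable (your observation that $\mb w^*\paren{\mb I + \wt{\mb \Xi}}\paren{\mb x_0}_k = \paren{\paren{\mb I+\wt{\mb \Xi}}^*\mb w}^*\paren{\mb x_0}_k$ means the unperturbed estimates, e.g.\ Lemma~\ref{lem:rounding-0} applied at $\mb u = \paren{\mb I+\wt{\mb \Xi}}^*\mb w$, can be reused, and the constant bookkeeping with $\theta < 1/3$ and $\norm{\wt{\mb \Xi}}{}$ tiny goes through), but it forces you to redo the probabilistic and covering arguments, since Lemma~\ref{lem:alg_rounding_orth} cannot be invoked as a black box once the columns are perturbed and the target is not $\pm\mb e_i$. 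The paper avoids all of this with a single exact change of variables: writing $\ol{\mb Y} = \paren{\mb U\mb V^* + \mb \Xi}\mb X_0$ and setting $\wt{\mb q} \doteq \paren{\mb U\mb V^* + \mb \Xi}^*\mb q$ (invertible since the sample size forces $\norm{\mb \Xi}{}\le 1/2$), the LP becomes exactly the orthogonal-case LP $\min \norm{\wt{\mb q}^*\mb X_0}{1}$ with constraint vector $\paren{\mb U\mb V^*+\mb \Xi}^{-1}\mb r$, so Lemma~\ref{lem:alg_rounding_orth} applies verbatim to the clean $\mb X_0$, and the needed condition transfers exactly because $\innerprod{\paren{\mb U\mb V^*+\mb \Xi}^{-1}\mb r}{\paren{\mb U\mb V^*+\mb \Xi}^*\mb q_\star} = \innerprod{\mb r}{\mb q_\star}$; no new concentration and no $O\paren{\norm{\mb \Xi}{}}$ error tracking are required. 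What your approach buys is self-containedness in the canonical frame and robustness of the margin analysis; what the paper's buys is a several-line proof whose only use of the large-$p$ assumption is invertibility of $\mb U\mb V^* + \mb \Xi$. If you keep your route, the two claims left as assertions -- the uniform net-based lower bound on the off-support sum and the uniform $o\paren{\theta p}\norm{\mb w}{}$ control of the signed on-support cross term -- must be written out, since they are the entire technical content in that formulation.
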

\begin{proof}
See Page~\pageref{proof:lem_alg_rounding_comp} under Section~\ref{sec:proof_main}. 
\end{proof}

\paragraph{Proof sketch of deflation.} We use a similar induction argument to show the deflation works. Compared to the orthogonal case, the tricky part here is that the target vectors $\Brac{\mb q_\star^i}_{i=1}^n$ are not necessarily orthogonal to each other, but they are almost so. W.l.o.g., let us again assume that $\mb q_\star^1, \dots, \mb q_\star^\ell$ recover the first $\ell$ rows of $\mb X_0$, and similarly partition the matrix $\mb X_0$ as in \eqref{eqn:matrices-partition}.

By Lemma~\ref{lem:pert_key_mag} and~\eqref{eq:pert_upper_bound}, we can write $\ol{\mb Y} = (\mb Q + \mb \Xi) \mb X_0$ for some orthogonal matrix $\mb Q$ and small perturbation $\mb \Xi$ with $\norm{\mb \Xi}{} \le \delta < 1/10$ for some large $p$ as usual. Similar to the orthogonal case, we have
\begin{align*}
	h(\mb z) = f(\mb U \mb z; (\mb Q + \mb \Xi) \mb X_0) = f_{n -\ell}^{\downarrow}(\mb z; \mb U^* (\mb Q + \mb \Xi) \mb X_0),
\end{align*}
where $f_{n -\ell}^{\downarrow}$ is defined the same as in \eqref{eqn:func-(n-l)}. Next, we show that the matrix $\mb U^* (\mb Q + \mb \Xi) \mb X_0$ can be decomposed as $\mb U^*\mb V\mb X_0^{[n-\ell]} + \mb \Delta$, where $\mb V\in \bb R^{(n-\ell)\times n }$ is orthogonal and $\mb \Delta$ is a small perturbation matrix. More specifically, we show that

\begin{lemma}\label{lem:deflation-bound}
Suppose the matrices $\mb U\in \bb R^{n\times (n-\ell)}$, $\mb Q \in \bb R^{n\times n}$ are orthogonal as defined above, $\mb \Xi$ is a perturbation matrix with $\norm{\mb \Xi}{}\leq 1/20$, then
	\begin{align}
		\mb U^*\paren{\mb Q+\mb \Xi}\mb X_0 = \mb U^* \mb V\mb X_0^{[n-\ell]} + \mb \Delta,
	\end{align}
	where $\mb V\in \bb R^{n\times (n-\ell)} $ is a orthogonal matrix spans the same subspace as that of $\mb U$, and the norms of $\mb \Delta$ is bounded by
	\begin{align}
		\norm{\mb \Delta}{\ell^1\rightarrow \ell^2 } \leq 16\sqrt{n} \norm{\mb \Xi}{} \norm{\mb X_0}{\infty}, \quad \norm{\mb \Delta}{} \leq 16 \norm{\mb \Xi}{} \norm{\mb X_0}{},
	\end{align}
	where $\norm{\mb W}{\ell^1\rightarrow \ell^2} = \sup_{\norm{\mb z}{1}=1} \norm{\mb W\mb z}{} = \max_k \norm {\mb w_k}{} $ denotes the max column $\ell^2$-norm of a matrix $\mb W$. 
\end{lemma}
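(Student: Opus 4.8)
The plan is to exploit the explicit form of the rounded target vectors $\mb q_\star^i$ together with a polar‑decomposition trick to build the orthogonal matrix $\mb V$. Since $\norm{\mb \Xi}{}\le 1/20$ the matrix $\mb Q+\mb \Xi$ is invertible; and since $\mb q_\star^{i*}\ol{\mb Y}=\alpha_i\mb e_i^*\mb X_0$ exactly recovers row $i$ of $\mb X_0$ while (w.h.p., for $p\ge n$) the rows of $\mb X_0$ are linearly independent, we must have $\paren{\mb Q+\mb \Xi}^*\mb q_\star^i=\alpha_i\mb e_i$, i.e.\ $\mb q_\star^i\propto\paren{\mb Q+\mb \Xi}^{-*}\mb e_i=\paren{\mb I+\mb Q\mb \Xi^*}^{-1}\mb q_i$, where $\mb q_i$ is the $i$-th column of $\mb Q$. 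First I would combine $\norm{\paren{\mb I+\mb Q\mb \Xi^*}^{-1}-\mb I}{}\le\norm{\mb \Xi}{}/\paren{1-\norm{\mb \Xi}{}}$ with a normalization estimate ($\alpha_i$ is forced to within $\norm{\mb \Xi}{}$ of $\pm1$, so after absorbing signs into the $\mb q_\star^i$) to deduce that $\mb Q^{[\ell]}\doteq[\mb q_1,\dots,\mb q_\ell]$ and $\mb Q_\star^{[\ell]}\doteq[\mb q_\star^1,\dots,\mb q_\star^\ell]$ obey $\norm{\mb Q^{[\ell]}-\mb Q_\star^{[\ell]}}{}\le 3\norm{\mb \Xi}{}$; the point of writing $\mb Q_\star^{[\ell]}=\paren{\mb I+\mb Q\mb \Xi^*}^{-1}\mb Q^{[\ell]}\mb D$ with $\mb D$ a diagonal matrix $\norm{\mb \Xi}{}$-close to $\mb I$ is that it yields the \emph{operator}-norm bound directly — a naive column-by-column estimate would cost a stray $\sqrt{\ell}$ and only deliver the weaker $\sqrt{n}$-dependent conclusion.

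Next I would split $\mb Q\mb X_0=\mb Q^{[\ell]}\mb X_0^{[\ell]}+\mb Q^{[n-\ell]}\mb X_0^{[n-\ell]}$ with $\mb Q^{[n-\ell]}\doteq[\mb q_{\ell+1},\dots,\mb q_n]$, so that $\mb U^*\paren{\mb Q+\mb \Xi}\mb X_0=\mb U^*\mb Q^{[\ell]}\mb X_0^{[\ell]}+\mb U^*\mb Q^{[n-\ell]}\mb X_0^{[n-\ell]}+\mb U^*\mb \Xi\mb X_0$. Because $\mb U^*\mb q_\star^i=\mb 0$ we have $\mb U^*\mb Q^{[\ell]}=\mb U^*\paren{\mb Q^{[\ell]}-\mb Q_\star^{[\ell]}}$, so $\norm{\mb U^*\mb Q^{[\ell]}}{}\le 3\norm{\mb \Xi}{}$. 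For the middle term, set $\mb V\doteq\mb U\mb R$, where $\mb U^*\mb Q^{[n-\ell]}=\mb R\mb P$ is the polar decomposition ($\mb R$ orthogonal, $\mb P\succeq\mb 0$); then $\mb V$ is orthogonal, $\mathrm{span}\paren{\mb V}=\mathrm{span}\paren{\mb U}$, and $\mb U^*\mb V=\mb R$. To bound $\norm{\mb U^*\mb Q^{[n-\ell]}-\mb R}{}=\norm{\mb P-\mb I}{}$, I would use $\mb P^2-\mb I=\mb Q^{[n-\ell]*}\paren{\mb U\mb U^*-\mb I}\mb Q^{[n-\ell]}=-\mb Q^{[n-\ell]*}\mc P_{\mc S_\star}\mb Q^{[n-\ell]}$, where $\mc S_\star\doteq\mathrm{span}(\mb q_\star^1,\dots,\mb q_\star^\ell)$; since $\mb q_{\ell+1},\dots,\mb q_n\perp\mc S\doteq\mathrm{span}(\mb q_1,\dots,\mb q_\ell)$ this gives $\norm{\mc P_{\mc S_\star}\mb Q^{[n-\ell]}}{}\le\norm{\mc P_{\mc S_\star}-\mc P_{\mc S}}{}$, and a standard subspace-perturbation bound (using $\norm{\mb Q^{[\ell]}-\mb Q_\star^{[\ell]}}{}\le 3\norm{\mb \Xi}{}$ and $\sigma_{\min}\paren{\mb Q_\star^{[\ell]}}\ge 1-3\norm{\mb \Xi}{}$) bounds this by $O\paren{\norm{\mb \Xi}{}}$; hence $\norm{\mb P-\mb I}{}\le\norm{\mb P^2-\mb I}{}=O\paren{\norm{\mb \Xi}{}^2}\le\norm{\mb \Xi}{}$ for $\norm{\mb \Xi}{}\le 1/20$.

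Finally I would collect
\begin{equation*}
\mb \Delta=\mb U^*\mb Q^{[\ell]}\mb X_0^{[\ell]}+\paren{\mb U^*\mb Q^{[n-\ell]}-\mb R}\mb X_0^{[n-\ell]}+\mb U^*\mb \Xi\mb X_0
\end{equation*}
and bound the three pieces. For the operator norm, using $\norm{\mb U^*}{}=1$ and $\norm{\mb X_0^{[\ell]}}{},\norm{\mb X_0^{[n-\ell]}}{}\le\norm{\mb X_0}{}$, they are at most $3\norm{\mb \Xi}{}\norm{\mb X_0}{}$, $\norm{\mb \Xi}{}\norm{\mb X_0}{}$, and $\norm{\mb \Xi}{}\norm{\mb X_0}{}$, summing to at most $16\norm{\mb \Xi}{}\norm{\mb X_0}{}$. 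For the $\ell^1\rightarrow\ell^2$ norm, the $k$-th column of each term is $\mb M\paren{\mb X_0^{[\cdot]}}_k$ with $\norm{\mb M}{}\le 3\norm{\mb \Xi}{}$ (resp.\ $\norm{\mb \Xi}{}$) and $\norm{\paren{\mb X_0^{[\cdot]}}_k}{}\le\sqrt{n}\norm{\mb X_0}{\infty}$, so the sum is at most $16\sqrt{n}\norm{\mb \Xi}{}\norm{\mb X_0}{\infty}$. The only delicate point is the $\sqrt{n}$-free operator-norm control of $\mb Q^{[\ell]}-\mb Q_\star^{[\ell]}$, which forces us to use the explicit formula for $\mb q_\star^i$ rather than a columnwise bound; the constant bookkeeping in the subspace-perturbation estimate is routine, and everything else is straightforward.
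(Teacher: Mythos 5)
Your proof is correct, and it reaches the stated bounds with room to spare, but it is organized differently from the paper's argument. The paper factors the perturbation through the inverse: it writes $\mb Q+\mb \Xi=(\mb Q^*+\mb \Xi^*)^{-1}(\mb I+\mb \Delta_1)$ with $\mb \Delta_1=\mb Q^*\mb \Xi+\mb \Xi^*\mb Q+\mb \Xi^*\mb \Xi$, uses that $\mb U^*$ annihilates the first $\ell$ columns of $(\mb Q^*+\mb \Xi^*)^{-1}$ (which span $\mathrm{span}(\mb q_\star^1,\dots,\mb q_\star^\ell)$), and then approximates the remaining columns $\wh{\mb V}$ by an orthonormal $\mb V=\mb U\mb R$ via a Procrustes minimization controlled by a principal-angle/CS-decomposition lemma together with a matrix-inverse perturbation bound, ending with $\mb \Delta=\mb U^*\mb \Delta_3\mb X_0^{[n-\ell]}+\mb \Delta_2\mb X_0$ and $\norm{\mb \Delta_3}\le 12.5\norm{\mb \Xi}$. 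You instead block-split $\mb Q\mb X_0$ directly, kill $\mb U^*\mb Q^{[\ell]}$ by its operator-norm proximity to the target frame $\mb Q_\star^{[\ell]}=[\mb q_\star^1,\dots,\mb q_\star^\ell]$ (your normalization bookkeeping here is actually tidier than the paper's, which writes the $\mb q_\star^i$ as columns of $(\mb Q^*+\mb \Xi^*)^{-1}$ without rescaling — harmless since only the span is used), and build $\mb V=\mb U\mb R$ from the polar factor of $\mb U^*\mb Q^{[n-\ell]}$, bounding $\norm{\mb U^*\mb Q^{[n-\ell]}-\mb R}=\norm{\mb P-\mb I}\le\norm{\mb P^2-\mb I}=\norm{\mc P_{\mc S_\star}\mb Q^{[n-\ell]}}^2$, which is \emph{quadratic} in $\norm{\mb \Xi}$ rather than first-order as in the paper's $\mb \Delta_3$ bound. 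Both routes rest on the same two pillars — the identification of $\mathrm{span}(\mb q_\star^1,\dots,\mb q_\star^\ell)$ with the span of the first $\ell$ columns of $(\mb Q^*+\mb \Xi^*)^{-1}$ (your full-row-rank remark makes explicit what the paper takes as part of the setup), and a subspace-perturbation estimate producing $\mb V=\mb U\mb R$ — but your decomposition avoids perturbing $\wh{\mb V}$ itself and trades the paper's Procrustes/CS-decomposition step for the polar decomposition, at the cost of needing the operator-norm control of $\mb Q^{[\ell]}-\mb Q_\star^{[\ell]}$ (which you correctly obtain without the stray $\sqrt{\ell}$); the numerical constants you quote ($3\norm{\mb \Xi}$, $\sin\theta_1\le 3\norm{\mb \Xi}/(1-3\norm{\mb \Xi})$, $\norm{\mb P-\mb I}\le\norm{\mb \Xi}$ for $\norm{\mb \Xi}\le 1/20$) all check out, so the final bounds $16\norm{\mb \Xi}\norm{\mb X_0}$ and $16\sqrt{n}\norm{\mb \Xi}\norm{\mb X_0}{\infty}$ follow comfortably.
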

\begin{proof}
See Page~\pageref{proof:lem_deflation-bound} under Section~\ref{sec:proof_main}. 
\end{proof}
Since $\mb U\mb V$ is orthogonal and $\mb X_0^{[n-\ell]}\sim_{i.i.d.} \text{BG}(\theta)$, we come into another instance of perturbed dictionary learning problem with reduced dimension
\begin{align*}
	h(\mb z) = f_{n-\ell}^{\downarrow}(\mb z; \mb U^*\mb V \mb X_0^{[n-\ell]} + \mb \Delta).
\end{align*}
Since our perturbation analysis in proving Theorem~\ref{thm:geometry_comp} and Theorem~\ref{thm:trm_comp} solely relies on the fact that $\norm{\mb \Delta}{\ell^1 \rightarrow \ell^2} \leq C \norm{\mb \Xi}{}\sqrt{n}\norm{\mb X_0}{\infty}$, it is enough to make $p$ large enough so that the theorems are still applicable for the reduced version $f_{n-\ell}^{\downarrow}(\mb z; \mb U^*\mb V \mb X_0^{[n-\ell]} + \mb \Delta)$. Thus, by invoking Theorem~\ref{thm:geometry_comp} and Theorem~\ref{thm:trm_comp}, the TRM algorithm provably returns one $\wh{\mb z}$ such that $\wh{\mb z}$ is near to a perturbed optimal $\wh{\mb z}_\star$ with
\begin{align}\label{eqn:comp-solution}
	\wh{\mb z}_\star^* \mb U^*\mb V\mb X_0^{[n-\ell]} = \mb z_\star^* \mb U^*\mb V\mb X_0^{[n-\ell]} +\mb z_\star^*\mb \Delta =\alpha \mb e_i^* \mb X_0,\quad \text{for some } i\not \in [\ell],
\end{align}
where $\mb z_\star$ with $\norm{\mb z_\star}{}=1$ is the exact solution. More specifically, Corollary~\ref{cor:geometry_comp} implies
\begin{align*}
	\norm{\wh{\mb z} - \wh{\mb z}_\star }{} \leq \sqrt{2}\mu_\star /7.
\end{align*}
Next, we show that $\wh{\mb z}$ is also very near to the exact solution $\mb z_\star$. Indeed, the identity \eqref{eqn:comp-solution} suggests
\begin{align}
	&\paren{\wh{\mb z}_\star -\mb z_\star }^*\mb U^*\mb V\mb X_0^{[n-\ell]} = \mb z_\star^* \mb \Delta \nonumber\\
	\Longrightarrow\;& \wh{\mb z}_\star - \mb z_\star = \brac{(\mb X_0^{[n-\ell]})^* \mb V^*\mb U }^\dagger \mb \Delta^*\mb z_\star = \mb U^*\mb V \brac{(\mb X_0^{[n-\ell]})^* }^\dagger \mb \Delta^*\mb z_\star \label{eqn:comp-z-distance}
\end{align}
where $\mb W^\dagger = (\mb W^*\mb W)^{-1}\mb W^*$ denotes the pseudo inverse of a matrix $\mb W$ with full column rank. Hence, by \eqref{eqn:comp-z-distance} we can bound the distance between $\wh{\mb z}_\star$ and $\mb z_\star$ by
\begin{align*}
	\norm{\wh{\mb z}_\star - \mb z_\star }{}  \leq \norm{\brac{(\mb X_0^{[n-\ell]})^* }^\dagger }{} \norm{\mb \Delta}{} \leq \sigma_{\min}^{-1}(\mb X_0^{[n-\ell]} ) \norm{\mb \Delta}{}
\end{align*}
By Lemma~\ref{lem:bg_identity_diff}, when $p \ge \Omega(n^2 \log n)$, w.h.p., 
\begin{align*}
	 \theta p/2\leq \sigma_{\min}(\mb X_0^{[n-\ell]} (\mb X_0^{[n-\ell]})^* ) \leq \norm{\mb X_0^{[n-\ell]}(\mb X_0^{[n-\ell]})^* }{} \leq \norm{\mb X_0\mb X_0^*}{}\leq 3\theta p/2.
\end{align*}
Hence, combined with Lemma \ref{lem:deflation-bound}, we obtain
\begin{align*}
	\sigma_{\min}^{-1}(\mb X_0^{[n-\ell]}) \leq \sqrt{\frac{2}{\theta p}},\quad  \norm{\mb \Delta }{} \leq  28\sqrt{ \theta p} \norm{\mb \Xi}{}/\sqrt{2},
\end{align*}
which implies that $\norm{\wh{\mb z}_\star -\mb z_\star }{}\leq 28 \norm{\mb \Xi}{}$. Thus, combining the results above, we obtain
\begin{align*}
	\norm{\wh{\mb z} - \mb z_\star}{} \le \norm{\wh{\mb z} - \wh{\mb z}_\star}{} + \norm{\wh{\mb z}_\star - \mb z_\star}{} \le \sqrt{2}\mu_\star/7 + 28\norm{\mb \Xi}{}.
\end{align*}
Lemma~\ref{lem:pert_key_mag}, and in particular~\eqref{eq:pert_upper_bound}, for our choice of $p$ as in Theorem~\ref{thm:geometry_comp}, $\norm{\mb \Xi}{} \le c\mu_\star^2 n^{-3/2}$, where $c$ can be made smaller by making the constant in $p$ larger. For $\mu_\star$ sufficiently small, we conclude that 
\begin{align*}
\norm{\mb U \wh{\mb z} - \mb U\mb z_\star}{} = \norm{\wh{\mb z} - \mb z_\star}{} \le 2\mu_\star/7. 
\end{align*} 
In words, the TRM algorithm returns a $\wh{\mb z}$ such that $\mb U \wh{\mb z}$ is very near to one of the unit vectors $\Brac{\mb q_\star^i}_{i=1}^n$, such that $(\mb q_\star^i)^* \ol{\mb Y} = \alpha \mb e_i^*\mb X_0$ for some $\alpha \ne 0$. For $\mu_\star$ smaller than a fixed constant, one will have 
\begin{align*}
\innerprod{\mb U \wh{\mb z}}{\mb q_\star^i} \ge 249/250, 
\end{align*}
and hence by Lemma~\ref{lem:alg_rounding_comp}, the LP rounding exactly returns the optimal solution $\mb q_\star^i$ upon the input $\mb U \wh{\mb z}$. 

The proof sketch above explains why the recursive TRM plus rounding works. The overall failure probability can be obtained via a simple union bound and simplifications of the exponential tails with inverse polynomials in $p$.

\section{Simulations} \label{sec:exp}
To corroborate our theory, we experiment with dictionary recovery on simulated data. For simplicity, we focus on recovering orthogonal dictionaries and we declare success once a single row of the coefficient matrix is recovered. 

Since the problem is invariant to rotations, w.l.o.g.\ we set the dictionary as $\mb A_0 = \mb I \in \R^{n \times n}$. We fix $p = 5n^3$, and each column of the coefficient matrix $\mb X_0 \in \R^{n \times p}$ has exactly $k$ nonzero entries, chosen uniformly random from $\binom{[n]}{k}$. These nonzero entries are i.i.d. standard normals. This is slightly different from the Bernoulli-Gaussian model we assumed for analysis. For $n$ reasonably large, these two models produce similar behavior. For the sparsity surrogate defined in~\eqref{eq:logexp}, we fix the parameter $\mu = 10^{-2}$. We implement Algorithm~\ref{alg:trm} with adaptive step size instead of the fixed step size in our analysis. 

To see how the allowable sparsity level varies with the dimension, which our theory primarily is about, we vary the dictionary dimension $n$ and the sparsity $k$ both between $1$ and $120$; for every pair of $(k,n)$ we repeat the simulations independently for $T=5$ times. Because the optimal solutions are signed coordinate vectors $\Brac{\mb e_i}_{i=1}^n$, for a solution $\wh{\mb q}$ returned by the TRM algorithm, we define the reconstruction error (RE) to be
\begin{align}
\mathtt{RE} =  \min_{1\leq i\leq n} \paren{\norm{\wh{\mb q} - \mb e_i}{},\norm{\wh{\mb q} + \mb e_i}{} }. 
\end{align}
The trial is determined to be a success once $\mathtt{RE} \le \mu$, with the idea that this indicates $\wh{\mb q}$ is already very near the target and the target can likely be recovered via the LP rounding we described (which we do not implement here). 
\begin{figure}
\centerline{\includegraphics[width=0.5\linewidth]{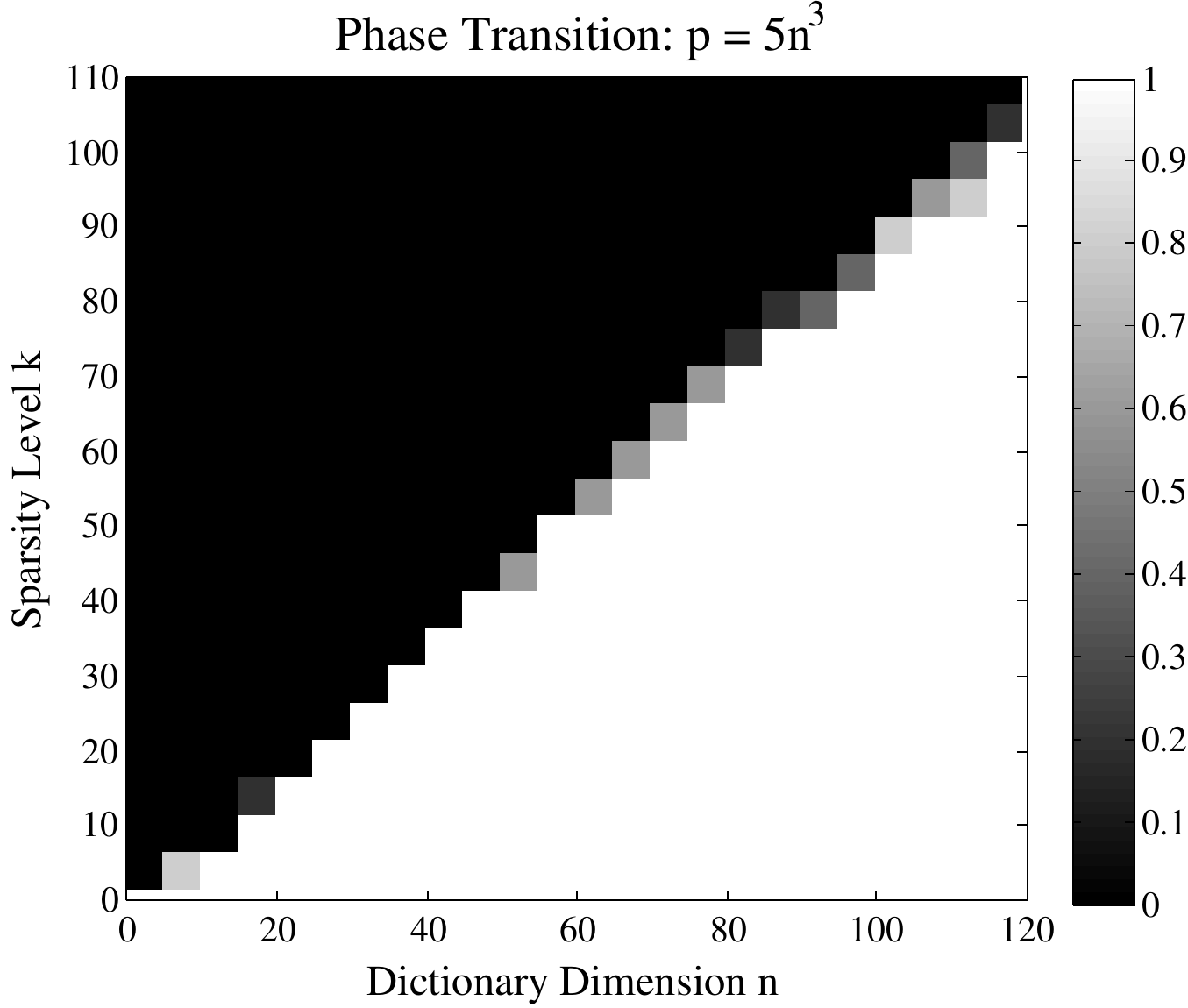}}
\caption{Phase transition for recovering a single sparse vector under the dictionary learning model with the sample complexity $p=5n^3$} \label{fig:exp}
\end{figure}
Figure~\ref{fig:exp} shows the phase transition in the $(n, k)$ plane for the orthogonal case. It is obvious that our TRM algorithm can work well into the linear region whenever $p \in O(n^3)$. Our analysis is tight up to logarithm factors, and also the polynomial dependency on $1/\mu$, which under the theory is polynomial in $n$.

\section{Discussion} \label{sec:discuss}
For recovery of complete dictionaries, the LP program approach in~\cite{spielman2012exact} that works with $\theta \le O(1/\sqrt{n})$ only demands $p \ge \Omega(n^2 \log n^2)$, which is recently improved to $p \ge \Omega(n \log^4 n)$~\cite{luh2015dictionary}, almost matching the lower bound $\Omega(n \log n)$ (i.e., when $\theta \sim 1/n$). The sample complexity stated in Theorem~\ref{thm:main_comp} is obviously much higher. It is interesting to see whether such growth in complexity is intrinsic to working in the linear regime. Though our experiments seemed to suggest the necessity of $p \sim O(n^3)$ even for the orthogonal case, there could be other efficient algorithms that demand much less. Tweaking these three points will likely improve the complexity: (1) The $\ell^1$ proxy. The derivative and Hessians of the $\log\cosh$ function we adopted entail the $\tanh$ function, which is not amenable to effective approximation and affects the sample complexity; (2) Geometric characterization and algorithm analysis. It seems working directly on the sphere (i.e., in the $\mb q$ space) could simplify and possibly improve certain parts of the analysis; (3) treating the complete case directly, rather than using (pessimistic) bounds to treat it as a perturbation of the orthogonal case. Particularly, general linear transforms may change the space significantly, such that preconditioning and comparing to the orthogonal transforms may not be the most efficient way to proceed. 

It is possible to extend the current analysis to other dictionary settings. Our geometric structures and algorithms allow plug-and-play noise analysis. Nevertheless, we believe a more stable way of dealing with noise is to directly extract the whole dictionary, i.e., to consider geometry and optimization (and perturbation) over the orthogonal group. This will require additional nontrivial technical work, but likely feasible thanks to the relatively complete knowledge of the orthogonal group~\cite{edelman1998geometry, absil2009}. A substantial leap forward would be to extend the methodology to recovery of \emph{structured} overcomplete dictionaries, such as tight frames. Though there is no natural elimination of one variable, one can consider the marginalization of the objective function wrt the coefficients and work with hidden functions. \footnote{This recent work~\cite{arora2015simple} on overcomplete DR has used a similar idea. The marginalization taken there is near to the global optimum of one variable, where the function is well-behaved. Studying the global properties of the marginalization may introduce additional challenges.} For the coefficient model, as we alluded to in Section~\ref{sec:lit_review}, our analysis and results likely can be carried through to coefficients with statistical dependence and physical constraints. 

The connection to ICA we discussed in Section~\ref{sec:lit_review} suggests our geometric characterization and algorithms can be modified for the ICA problem. This likely will provide new theoretical insights and computational schemes to ICA. In the surge of theoretical understanding of nonconvex heuristics~\cite{keshavan2010matrix, jain2013low, hardt2014understanding, hardt2014fast, netrapalli2014non, jain2014fast, netrapalli2013phase, candes2015phase, jain2014provable, anandkumar2014guaranteed, yi2013alternating, lee2013near, qu2014finding, lee2013near, agarwal2013learning, agarwal2013exact, arora2013new, arora2015simple, arora2014more}, the initialization plus local refinement strategy mostly differs from practice, whereby random initializations seem to work well, and the analytic techniques developed are mostly fragmented and highly specialized. The analytic and algorithmic we developed here hold promise to provide a coherent account of these problems. It is interesting to see to what extent we can streamline and generalize the framework. 

Our motivating experiment on real images in Section~\ref{sec:intro_exp} remains mysterious. If we were to believe that real image data are ``nice'' and our objective there does not have spurious local minima either, it is surprising ADM would escape all other critical points -- this is not predicted by classic or modern theories. One reasonable place to start is to look at how gradient descent algorithms with generic initializations can escape local maxima and saddle points (at least with high probability). The recent work~\cite{ge2015escaping} has showed that randomly perturbing each iterate can help gradient algorithm to escape saddle points with high probability.  
It would be interesting to know whether similar results can be obtained for gradient descent algorithms with random initialization. The continuous counterpart seems well understood; see, e.g., ~\cite{helmke1994optimization} for discussions of Morse-Bott theorem and gradient flow convergence. 

\section{Proofs of Main Technical Results for High Dimensional Geometry} \label{sec:proof_geometry}
In this section, we provide complete proofs for technical results stated in Section~\ref{sec:geometry}. Before that, let us introduce some notations and common results that will be used later throughout this section. Since we deal with BG random variables and random vectors, it is often convenient to write such vector explicitly as $\mb x = \brac{\Omega_1 v_1, \dots, \Omega_n v_n} = \bm \Omega \odot \mb v$, where $\Omega_1, \dots, \Omega_n$ are i.i.d.\ Bernoulli random variables and $v_1, \dots, v_n$ are i.i.d.\ standard normal. For a particular realization of such random vector, we will denote the support as $\mc I \subset [n]$. Due to the particular coordinate map in use, we will often refer to subset $\mc J \doteq \mc I \setminus\Brac{n}$ and the random vectors $\overline{\mb x} \doteq \brac{\Omega_1 v_1, \dots, \Omega_{n-1} v_{n-1}}$ and $\overline{\mb v} \doteq \brac{v_1, \dots, v_{n-1}}$ in $\R^{n-1}$. By Lemma \ref{lem:derivatives_basic_surrogate}, it is not hard to see that 
\begin{align}
\nabla_{\mb w} h_{\mu}\paren{\mb q^*\paren{\mb w} \mb x} & = \tanh\paren{\frac{\mb q^*\paren{\mb w} \mb x}{\mu}} \paren{\overline{\mb x} - \frac{x_n}{q_n\paren{\mb w}} \mb w},\label{eqn:lse-gradient} \\
\nabla^2_{\mb w} h_{\mu}\paren{\mb q^*\paren{\mb w} \mb x} & = \frac{1}{\mu} \brac{1-\tanh^2\paren{\frac{\mb q^*\paren{\mb w} \mb x}{\mu}}} \paren{\overline{\mb x} - \frac{x_n}{q_n\paren{\mb w}} \mb w} \paren{\overline{\mb x} - \frac{x_n}{q_n\paren{\mb w}} \mb w}^* \nonumber \\
& \qquad - x_n \tanh\paren{\frac{\mb q^*\paren{\mb w} \mb x}{\mu}} \paren{\frac{1}{q_n\paren{\mb w}} \mb I + \frac{1}{q_n^3\paren{\mb w}} \mb w \mb w^*}. \label{eqn:lse-hessian}
\end{align}

\subsection{Proofs for Section~\ref{sec:geo_results_orth}}
\subsubsection{Proof of Proposition~\ref{prop:geometry_asymp_curvature}} \label{sec:proof_geo_asym_curvature}
The proof involves some delicate analysis, particularly polynomial approximation of the function $f\paren{t} = \frac{1}{\paren{1+t}^2}$ over $t \in \left[0, 1\right]$. This is naturally induced by the $1-\tanh^2\paren{\cdot}$ function. The next lemma characterizes one polynomial approximation of $f\paren{t}$. 

\begin{lemma} \label{lem:neg_curvature_norm_bound} 
Consider $f(t) = \frac{1}{(1+t)^2}$ for $t \in \brac{0,1}$. For every $T > 1$, there is a sequence $b_0, b_1, \dots$, with $\norm{\mb b}{\ell^1} = T < \infty$, such that the polynomial $p(t) = \sum_{k=0}^\infty b_k t^k$ satisfies 
\begin{align*}
\norm{f-p}{L^1[0,1]} \;\le\; \frac{1}{2\sqrt{T}}, \quad \norm{f-p}{L^\infty[0,1]} \;\le\; \frac{1}{\sqrt{T}},
\end{align*}
In particular, one can choose $b_k = (-1)^k(k+1)\beta^k$ with $\beta = 1-1/\sqrt{T} < 1$ such that 
\begin{align*}
p\paren{t} = \frac{1}{\paren{1+\beta t}^2} = \sum_{k=0}^\infty (-1)^k(k+1)\beta^k t^k. 
\end{align*}
Moreover, such sequence satisfies $0<\sum_{k=0}^\infty \frac{b_k}{(1+k)^3}<\sum_{k=0}^\infty \frac{\abs{b_k}}{(1+k)^3}<2$. 
\end{lemma}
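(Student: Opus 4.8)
The plan is to show that the explicit choice $b_k = (-1)^k(k+1)\beta^k$ with $\beta = 1 - 1/\sqrt{T} \in (0,1)$ does everything claimed; no genuine approximation theory is needed, so the ``there exists a sequence'' part follows at once from the ``in particular'' part. First I would recall the elementary identity $\tfrac{1}{(1+s)^2} = \sum_{k=0}^\infty (-1)^k(k+1)s^k$ for $\abs{s} < 1$, obtained by differentiating the geometric series $\tfrac{1}{1+s} = \sum_{k=0}^\infty (-1)^k s^k$ term by term. Since $t \in [0,1]$ and $\beta \in (0,1)$ force $\abs{\beta t} \le \beta < 1$, substituting $s = \beta t$ gives $p(t) = \tfrac{1}{(1+\beta t)^2} = \sum_{k=0}^\infty (-1)^k(k+1)\beta^k t^k$, which identifies the coefficients and yields $\norm{\mb b}{\ell^1} = \sum_{k=0}^\infty (k+1)\beta^k = (1-\beta)^{-2} = T$.

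Next I would establish the two error bounds, using throughout that $p(t) \ge f(t)$ on $[0,1]$ because $1 + \beta t \le 1 + t$. The $L^1$ bound is a one-line integration: $\int_0^1 f = \tfrac12$ and $\int_0^1 p = \tfrac{1}{1+\beta}$, so $\norm{f-p}{L^1[0,1]} = \tfrac{1}{1+\beta} - \tfrac12 = \tfrac{1-\beta}{2(1+\beta)} \le \tfrac{1-\beta}{2} = \tfrac{1}{2\sqrt{T}}$. For the $L^\infty$ bound I would factor the numerator of $p(t) - f(t) = \tfrac{(1+t)^2 - (1+\beta t)^2}{(1+t)^2(1+\beta t)^2}$ via $(1+t)^2 - (1+\beta t)^2 = (1-\beta)\,t\,\brac{(1+t)+(1+\beta t)}$, which splits the difference as
\[
p(t) - f(t) = (1-\beta)\left[ \frac{t}{(1+t)(1+\beta t)^2} + \frac{t}{(1+t)^2(1+\beta t)} \right].
\]
Dropping the factors $1 + \beta t \ge 1$ and using that $\tfrac{t}{1+t}$ and $\tfrac{t}{(1+t)^2}$ are increasing on $[0,1]$, with maxima $\tfrac12$ and $\tfrac14$, bounds each bracketed term and gives $p(t) - f(t) \le \tfrac34(1-\beta) = \tfrac{3}{4\sqrt{T}} < \tfrac{1}{\sqrt{T}}$.

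Finally, for the summability inequality, note that $\tfrac{b_k}{(1+k)^3} = \tfrac{(-1)^k\beta^k}{(k+1)^2}$, so $\sum_{k\ge0}\tfrac{b_k}{(1+k)^3}$ is an alternating series whose terms decrease strictly in modulus to $0$; hence its sum lies between the partial sums $1 - \beta/4 > 0$ and $1$, so it is positive, and it is strictly less than $\sum_{k\ge0}\tfrac{\abs{b_k}}{(1+k)^3} = \sum_{k\ge0}\tfrac{\beta^k}{(k+1)^2}$ since, e.g., the $k=1$ term is negative; and this last sum is at most $\sum_{k\ge0}(k+1)^{-2} = \pi^2/6 < 2$. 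There is essentially no obstacle in this lemma; the only point requiring a little care is making the $L^\infty$ constant come out below $1/\sqrt{T}$ rather than the crude $4/\sqrt{T}$, which is exactly what the decomposition $2 + (1+\beta)t = (1+t)+(1+\beta t)$ achieves by retaining one extra denominator factor in each term.
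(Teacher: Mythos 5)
Your proof is correct and follows essentially the same route as the paper: the same explicit choice $p(t)=1/(1+\beta t)^2$, the same $L^1$ computation giving $\tfrac{1-\beta}{2(1+\beta)}$, and the same observation $p\ge f$. The only differences are cosmetic — your two-term split for the $L^\infty$ bound (yielding $\tfrac34(1-\beta)$ versus the paper's direct bound by $1-\beta$), the alternating-series test in place of the paper's even–odd pairing for positivity, and $\pi^2/6$ in place of the telescoping bound for the sum being below $2$.
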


\begin{lemma}\label{lem:neg_curvature_tanh_square_1} 
Let $X\sim \mc N\paren{0,\sigma_X^2}$ and $Y\sim \mc N\paren{0,\sigma_Y^2}$. We have
\begin{multline*}
\bb E\brac{\paren{1-\tanh^2\paren{\frac{X+Y}{\mu}} } X^2 \indicator{X+Y>0} } \le \\
 \frac{1}{\sqrt{2\pi}} \frac{\mu \sigma_X^2 \sigma_Y^2}{\paren{\sigma_X^2 + \sigma_Y^2}^{3/2}} + \frac{\mu^3 \sigma_X^2 \sigma_Y^2}{\paren{\sigma_X^2 + \sigma_Y^2}^{3/2}} + \frac{3}{4\sqrt{2\pi}} \frac{\sigma_X^2 \mu^3}{\paren{\sigma_X^2 + \sigma_Y^2}^{5/2}} \paren{3\mu^2 + 4 \sigma_X^2}.
\end{multline*}
\end{lemma}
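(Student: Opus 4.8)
I will assume, as the statement implicitly requires, that $X$ and $Y$ are independent, so that $Z \doteq X+Y \sim \mc N\paren{0,\sigma^2}$ with $\sigma^2 \doteq \sigma_X^2 + \sigma_Y^2$. The plan is to condition on $Z$ and thereby collapse the two‑dimensional expectation into a one‑dimensional integral of $1-\tanh^2\paren{z/\mu} = \mathrm{sech}^2\paren{z/\mu}$ against a quadratic in $z$ and the density of $Z$. Since $\paren{X,Z}$ is jointly Gaussian with $\mathrm{Var}\paren{Z}=\sigma^2$ and $\mathrm{Cov}\paren{X,Z}=\sigma_X^2$, the usual conditioning formula gives $X\mid\paren{Z=z}\sim\mc N\paren{\tfrac{\sigma_X^2}{\sigma^2}z,\ \tfrac{\sigma_X^2\sigma_Y^2}{\sigma^2}}$, hence $\expect{X^2\mid Z=z}=\tfrac{\sigma_X^2\sigma_Y^2}{\sigma^2}+\tfrac{\sigma_X^4}{\sigma^4}z^2$. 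As every factor is nonnegative, Tonelli's theorem lets me write
\begin{align*}
\expect{\paren{1-\tanh^2\paren{\tfrac{X+Y}{\mu}}}X^2\indicator{X+Y>0}} \;=\; \int_0^\infty \paren{1-\tanh^2\paren{\tfrac{z}{\mu}}}\paren{\frac{\sigma_X^2\sigma_Y^2}{\sigma^2}+\frac{\sigma_X^4}{\sigma^4}z^2}\frac{1}{\sqrt{2\pi}\,\sigma}\,e^{-z^2/(2\sigma^2)}\,dz.
\end{align*}

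The next step is to bound the two resulting pieces using $e^{-z^2/(2\sigma^2)}\le 1$ together with the elementary identities $\integral{0}{\infty}{\paren{1-\tanh^2\paren{z/\mu}}}{dz}=\mu$ (antiderivative $\mu\tanh\paren{z/\mu}$) and $\integral{0}{\infty}{z^2\paren{1-\tanh^2\paren{z/\mu}}}{dz}=\tfrac{\pi^2}{12}\mu^3$ (substitute $u=z/\mu$ and integrate by parts via the antiderivative $u^2\paren{\tanh u-1}$, reducing to the standard integral $\integral{0}{\infty}{u\paren{1-\tanh u}}{du}=\pi^2/24$; alternatively one may just use $1-\tanh^2 u\le 4e^{-2u}$ to get the cruder bound $\le \mu^3$). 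This yields
\begin{align*}
\expect{\paren{1-\tanh^2\paren{\tfrac{X+Y}{\mu}}}X^2\indicator{X+Y>0}} \;\le\; \frac{1}{\sqrt{2\pi}}\frac{\mu\,\sigma_X^2\sigma_Y^2}{\sigma^3}+\frac{\pi^2}{12\sqrt{2\pi}}\frac{\sigma_X^4\,\mu^3}{\sigma^5}.
\end{align*}
The first term already equals the first term in the claimed bound, and since $\pi^2/12<3$ the second is dominated by the summand $\tfrac{3}{4\sqrt{2\pi}}\cdot 4\sigma_X^2\cdot\sigma_X^2\mu^3\sigma^{-5}$ on the claimed right‑hand side; the remaining summands $\mu^3\sigma_X^2\sigma_Y^2\sigma^{-3}$ and $\tfrac{3}{4\sqrt{2\pi}}\cdot 3\mu^2\cdot\sigma_X^2\mu^3\sigma^{-5}$ are nonnegative, so the inequality follows. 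If one prefers to stay closer to the machinery of this subsection, the same two one‑dimensional integrals can instead be estimated by writing $1-\tanh^2\paren{z/\mu}=4tf\paren{t}$ with $t=e^{-2z/\mu}\in\paren{0,1}$ and $f\paren{t}=\paren{1+t}^{-2}$, and substituting the polynomial approximant $p$ of Lemma~\ref{lem:neg_curvature_norm_bound}: the main term becomes a geometric‑type series in $e^{-2\paren{k+1}z/\mu}$ integrated against the Gaussian, while the $L^1$/$L^\infty$ approximation errors, after the change of variables $z\mapsto t$, produce precisely the $\mu^3$‑ and $\mu^5$‑order correction shapes recorded in the claim.

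I expect the argument to be conceptually light, with the real work being careful bookkeeping. The one step that needs a moment of thought is $\integral{0}{\infty}{z^2\mathrm{sech}^2\paren{z/\mu}}{dz}$, where the naive integration by parts ($u^2\tanh u$ against $2u\tanh u$) splits into two individually divergent pieces and must be reorganized using the antiderivative $u^2\paren{\tanh u-1}$ instead. Beyond that, the only subtlety is choosing, for each of the two pieces, whether to retain the Gaussian factor or simply bound it by $\paren{\sqrt{2\pi}\sigma}^{-1}$, so that the powers of $\sigma$ and $\mu$ come out exactly as stated. Because the target bound is deliberately loose, matching it is a matter of collecting constants rather than of producing a sharp estimate.
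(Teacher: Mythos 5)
Your proof is correct, and it takes a genuinely different route from the paper. The paper writes $1-\tanh^2\paren{\tfrac{x+y}{\mu}} = 4zf\paren{z}$ with $z = \exp\paren{-2\paren{x+y}/\mu}$, substitutes the polynomial majorant $p_\beta\paren{z}=\paren{1+\beta z}^{-2}$ from Lemma~\ref{lem:neg_curvature_norm_bound}, exchanges the resulting series with the expectation via dominated convergence, evaluates each term with the closed-form Gaussian integrals of Lemma~\ref{lem:aux_asymp_proof_a}, and then applies Type I upper/lower tail bounds for $\Phi^c$ alternately to even and odd indices before resumming; the particular constants $\tfrac{3}{4\sqrt{2\pi}}\paren{3\mu^2+4\sigma_X^2}$ etc.\ in the statement are artifacts of that bookkeeping. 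You instead condition on $Z=X+Y$ (using the independence that the statement indeed implicitly assumes and that the paper also uses), apply the Gaussian conditional-moment formula $\expect{X^2\mid Z=z}=\tfrac{\sigma_X^2\sigma_Y^2}{\sigma^2}+\tfrac{\sigma_X^4}{\sigma^4}z^2$, bound the density of $Z$ by $\paren{\sqrt{2\pi}\sigma}^{-1}$, and reduce everything to the two elementary integrals $\int_0^\infty \mathrm{sech}^2\paren{z/\mu}\,dz=\mu$ and $\int_0^\infty z^2\mathrm{sech}^2\paren{z/\mu}\,dz=\tfrac{\pi^2}{12}\mu^3$ (your reorganized integration by parts, or the cruder $1-\tanh^2 u\le 4e^{-2u}$, both check out). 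Since $\pi^2/12<3$ your resulting bound is term-wise dominated by the stated right-hand side, whose remaining summands are nonnegative, so the lemma follows — in fact your estimate is slightly sharper. What your shortcut gives up is uniformity with the rest of the section: the paper's polynomial-approximation machinery is reused verbatim for the companion lower-bound estimate (Lemma~\ref{lem:neg_curvature_tanh_square_2}), where replacing the Gaussian density by its maximum would go the wrong way, whereas your argument is tailored to upper bounds. The closing remark about rerunning the argument through Lemma~\ref{lem:neg_curvature_norm_bound} is dispensable hand-waving, but the main argument is complete without it.
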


\begin{proof}
For $x + y > 0$, let $z = \exp\paren{-2\frac{x + y}{\mu} } \in [0,1]$, then $1-\tanh^2\paren{\frac{x + y}{\mu}} = \frac{4z}{\paren{1+z}^2}$. Fix any $T > 1$ to be determined later, by Lemma~\ref{lem:neg_curvature_norm_bound}, we choose the polynomial $p_{\beta}\paren{z} = \frac{1}{\paren{1+\beta z}^2}$ with $\beta = 1 - 1/\sqrt{T}$ to upper bound $f\paren{z} = \frac{1}{\paren{1+z}^2}$. So we have 
\begin{align*}
\bb E\brac{\paren{1-\tanh^2\paren{\frac{X+Y}{\mu}} } X^2 \indicator{X+Y>0} } 
& = 4\bb E\brac{Zf\paren{Z} X^2 \indicator{X+Y>0} } \nonumber \\
& \le 4\bb E\brac{Zp_{\beta}\paren{Z} X^2 \indicator{X+Y>0} } \nonumber \\
& = 4\sum_{k=0}^\infty \Brac{b_k\bb E\brac{Z^{k+1}X^2\indicator{X+Y>0} }  }, 
\end{align*}
where $b_k = (-1)^k(k+1)\beta^k$, and the exchange of infinite summation and expectation above is justified in view that
\begin{align*}
\sum_{k=0}^\infty \abs{b_k}\bb E\brac{Z^{k+1}X^2\indicator{X+Y>0} } \le \sum_{k=0}^\infty \abs{b_k}\bb E\brac{X^2\indicator{X+Y>0} } \le \sigma_X^2  \sum_{k=0}^\infty \abs{b_k} < \infty
\end{align*} 
and the dominated convergence theorem (see, e.g., theorem 2.24 and 2.25 of~\cite{folland1999real}). By Lemma~\ref{lem:aux_asymp_proof_a}, we have 
\begin{align*}
& \sum_{k=0}^\infty \Brac{b_k\bb E\brac{Z^{k+1}X^2\indicator{X+Y>0} }  } \\
=\; & \sum_{k=0}^\infty \paren{-\beta}^k \paren{k+1} \left[\paren{\sigma_X^2+\frac{4\paren{k+1}^2}{\mu^2}\sigma_X^4}\exp\paren{\frac{2\paren{k+1}^2}{\mu^2}\paren{\sigma_X^2 + \sigma_Y^2}}\Phi^c\paren{\frac{2\paren{k+1}}{\mu}\sqrt{\sigma_X^2 + \sigma_Y^2}} \right. \\
& \qquad \left. - \frac{2\paren{k+1}}{\mu} \frac{\sigma_X^4}{\sqrt{2\pi}\sqrt{\sigma_X^2 + \sigma_Y^2}}\right] \\
\le \; & \frac{1}{\sqrt{2\pi}}\sum_{k=0}^\infty \paren{-\beta}^k \paren{k+1}\brac{\frac{\sigma_X^2 \mu}{2\paren{k+1}\sqrt{\sigma_X^2 + \sigma_Y^2}} - \frac{\sigma_X^2 \mu^3}{8\paren{k+1}^3 \paren{\sigma_X^2 + \sigma_Y^2}^{3/2}} - \frac{\mu \sigma_X^4}{2\paren{k+1}\paren{\sigma_X^2 + \sigma_Y^2}^{3/2}}} \nonumber \\
& \qquad + \frac{3}{\sqrt{2\pi}}\sum_{k=0}^\infty \beta^k \paren{k+1} \paren{\sigma_X^2+\frac{4\paren{k+1}^2}{\mu^2}\sigma_X^4} \frac{\mu^5}{32\paren{k+1}^5 \paren{\sigma_X^2 + \sigma_Y^2}^{5/2}}, 
\end{align*}
where we have applied Type I upper and lower bounds for $\Phi^c\paren{\cdot}$ to even $k$ and odd $k$ respectively and rearrange the terms to obtain the last line. Using the following estimates (see Lemma~\ref{lem:neg_curvature_norm_bound})
\begin{align*}
\sum_{k=0}^\infty \paren{-\beta}^k = \frac{1}{1+\beta}, \quad \sum_{k=0}^\infty \frac{b_k}{\paren{k+1}^3} \ge 0, \quad  \sum_{k=0}^\infty \frac{\abs{b_k}}{\paren{k+1}^5} \le \sum_{k=0}^\infty \frac{\abs{b_k}}{\paren{k+1}^3} \le 2, 
\end{align*}
we obtain 
\begin{multline*}
\sum_{k=0}^\infty \Brac{b_k\bb E\brac{Z^{k+1}X^2\indicator{X+Y>0} } } 
 \le \\
 \frac{1}{2\sqrt{2\pi}} \frac{\mu \sigma_X^2 \sigma_Y^2}{\paren{\sigma_X^2 + \sigma_Y^2}^{3/2}} \frac{1}{1+\beta} + \frac{3}{16\sqrt{2\pi}} \frac{\sigma_X^2 \mu^3}{\paren{\sigma_X^2 + \sigma_Y^2}^{5/2}} \paren{3\mu^2 + 4 \sigma_X^2}. 
\end{multline*}
Noticing $\frac{1}{1+\beta} < \frac{1}{2} + \frac{1}{2\sqrt{T}}$ and choosing $T = \mu^{-4}$, we obtain the desired result. 
\end{proof}

\begin{lemma} \label{lem:neg_curvature_tanh_square_2}
Let $X\sim \mc N\paren{0,\sigma_X^2}$ and $Y\sim \mc N\paren{0,\sigma_Y^2}$. We have 
\begin{multline*}
\bb E\left[ \tanh\left( \frac{X + Y }{\mu} \right) X \right] \ge \\
\frac{2\sigma_X^2}{\sqrt{2\pi}\sqrt{\sigma_X^2 + \sigma_Y^2}} - \frac{4\mu^2\sigma_X^2}{\sqrt{2\pi}\sqrt{\sigma_X^2 + \sigma_Y^2}} - \frac{2\sigma_X^2\mu^2}{\sqrt{2\pi} \paren{\sigma_X^2 + \sigma_Y^2}^{3/2}} - \frac{3\sigma_X^2\mu^4}{2\sqrt{2\pi}\paren{\sigma_X^2 + \sigma_Y^2}^{5/2}}. 
\end{multline*}
\end{lemma}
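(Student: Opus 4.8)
\textbf{Proof strategy for Lemma~\ref{lem:neg_curvature_tanh_square_2}.}
The plan is to reduce the bivariate expectation to a one-dimensional integral by conditioning, then expand the resulting integrand using a polynomial surrogate for a function of the $\tanh$ type, exactly paralleling the argument already used for Lemma~\ref{lem:neg_curvature_tanh_square_1}. First I would note that, since $X$ and $Y$ are independent centered Gaussians, the pair $(X, X+Y)$ is jointly Gaussian; writing $S \doteq X + Y \sim \mc N(0, \sigma_X^2 + \sigma_Y^2)$, one has $\bb E[X \mid S] = \tfrac{\sigma_X^2}{\sigma_X^2 + \sigma_Y^2} S$. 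Hence $\bb E[\tanh(S/\mu) X] = \tfrac{\sigma_X^2}{\sigma_X^2 + \sigma_Y^2}\, \bb E[S \tanh(S/\mu)]$, and the problem collapses to lower bounding the univariate quantity $\bb E[S\tanh(S/\mu)]$ with $S$ a single Gaussian of known variance. This is the key simplification; everything after is a scalar Gaussian integral.

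Next I would handle $\bb E[S \tanh(S/\mu)]$. By symmetry of $\tanh$ this equals $2\,\bb E[S\tanh(S/\mu)\indicator{S>0}]$. On $\{S > 0\}$ I would substitute $z = \exp(-2S/\mu) \in [0,1]$ so that $\tanh(S/\mu) = \tfrac{1-z}{1+z}$, and then invoke the polynomial machinery of Lemma~\ref{lem:neg_curvature_norm_bound}: choose $p_\beta(z) = (1+\beta z)^{-2}$ with $\beta = 1 - 1/\sqrt{T}$ as a \emph{lower} bound for $(1+z)^{-1}$-type expressions on $[0,1]$ (here one must track signs carefully so that the approximation goes the right way for a lower bound), expand in a power series in $z$, exchange summation and expectation via dominated convergence (justified by $\sum_k |b_k| < \infty$ and $S\indicator{S>0}$ having finite moments), and evaluate each $\bb E[S^{} Z^{k}\indicator{S>0}]$ term against the auxiliary Gaussian-moment identities (the analogue of Lemma~\ref{lem:aux_asymp_proof_a}). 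Applying Type I bounds on $\Phi^c(\cdot)$ to even and odd $k$ separately, then summing the geometric-type series with the estimates $\sum_k (-\beta)^k = (1+\beta)^{-1}$ and $\sum_k |b_k|/(1+k)^3 < 2$, yields the leading term $\tfrac{2\sigma_X^2}{\sqrt{2\pi}\sqrt{\sigma_X^2+\sigma_Y^2}}$ plus lower-order corrections in $\mu$. Finally I would choose $T = \mu^{-4}$ (or whatever power makes the $1/\sqrt{T}$ error match the target) and collect the error terms into the stated form.

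The main obstacle I anticipate is bookkeeping of signs and of the approximation direction: Lemma~\ref{lem:neg_curvature_norm_bound} gives a two-sided $L^1$/$L^\infty$ control, but to obtain a clean \emph{lower} bound on $\bb E[S\tanh(S/\mu)]$ I need the polynomial surrogate and the truncation error to combine so that every discarded piece is pushed in the favorable direction — unlike Lemma~\ref{lem:neg_curvature_tanh_square_1}, which is an upper bound. This requires being slightly more careful about whether $p_\beta$ over- or under-estimates the relevant integrand, and about the signs of the $\Phi^c$ tail corrections for even versus odd $k$; the rest is routine Gaussian moment computation using the auxiliary lemmas already invoked in the proof of Lemma~\ref{lem:neg_curvature_tanh_square_1}. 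A secondary (purely mechanical) point is verifying that the lower-order $\mu^2$, $\mu^4$ corrections can indeed be bounded by the three explicit terms claimed; this follows from the same $\sum_k |b_k|/(1+k)^3 < 2$ estimate once the series is arranged.
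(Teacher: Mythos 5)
Your proposal is sound and lands on the same analytic engine as the paper (polynomial surrogate from Lemma~\ref{lem:neg_curvature_norm_bound}, term-by-term Gaussian integrals, Type I tail bounds, and the choice $T = \mu^{-4}$), but the opening reduction is genuinely different. The paper's first move is the Gaussian integration-by-parts identity of Lemma~\ref{lem:aux_asymp_proof_a}, $\bb E[\tanh((X+Y)/\mu)\,X] = \frac{\sigma_X^2}{\mu}\,\bb E[1-\tanh^2((X+Y)/\mu)]$, which converts the problem into exactly the form $4zf(z)$ with $f(z)=(1+z)^{-2}$ that Lemma~\ref{lem:neg_curvature_norm_bound} was built for; the overestimate $p_\beta \ge f$ is then corrected once by subtracting $\bb E[(p_\beta - f)Z\indicator{\cdot}]$, bounded via the $L^\infty$ estimate, so the lower-bound direction never becomes delicate. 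Your conditioning step $\bb E[X\mid S] = \tfrac{\sigma_X^2}{\sigma_X^2+\sigma_Y^2}S$ is correct and, combined with Stein's identity applied to $S$, is equivalent to the paper's reduction (indeed $\tfrac{\sigma_X^2}{\sigma_S^2}\bb E[S\tanh(S/\mu)] = \tfrac{\sigma_X^2}{\mu}\bb E[1-\tanh^2(S/\mu)]$, which would let you reuse the paper's expansion verbatim and recover identical error constants). Where your sketch is soft is precisely the point you flag: expanding $\tanh$ itself rather than $1-\tanh^2$ means $p_\beta(z)=(1+\beta z)^{-2}$ is not a lower bound for the relevant factor — it overestimates $(1+z)^{-2}$ — so you would need to write $\tanh(S/\mu) = (1-z^2)f(z)$ with $z=e^{-2S/\mu}$ and use $f \ge p_\beta - \norm{f-p_\beta}{L^\infty[0,1]}$, i.e., essentially the same correction-term trick the paper uses, and then check that the resulting $\mu^2$ and $\mu^4$ corrections fit under the three stated error terms. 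That is doable (the moment identities you need, $\bb E[e^{-aS}S\indicator{S>0}]$ and $\bb E[e^{-aS}\indicator{S>0}]$, are already in Lemma~\ref{lem:aux_asymp_proof_a}), but passing through $1-\tanh^2$ first, as the paper does, sidesteps the sign bookkeeping entirely and is the cleaner route.
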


\begin{proof}
By Lemma \ref{lem:aux_asymp_proof_a}, we know 
\begin{align*}
\bb E \brac{\tanh\paren{\frac{X+Y}{\mu}} X  } \;=\; \frac{\sigma_X^2}{\mu } \bb E\brac{1- \tanh^2\paren{\frac{X+Y}{\mu}} }
\end{align*}
Similar to the proof of the above lemma, for $x + y > 0$, let $z  = \exp\paren{-2\frac{x + y}{\mu}}$ and $f\paren{z} = \frac{1}{\paren{1+z}^2}$. Fixing any $T > 1$, we will use $4zp_{\beta}\paren{z} = \frac{4z}{\paren{1+\beta z}^2}$ to approximate the $1-\tanh^2\paren{\frac{x+y}{\mu}} = 4zf\paren{z}$ function from above, where again $\beta = 1 - 1/\sqrt{T}$. So we obtain 
\begin{align*}
\bb E\brac{1- \tanh^2\paren{\frac{X+Y}{\mu}} } 
& = 8\expect{f\paren{Z} Z \indicator{X + Y > 0}} \nonumber \\
& = 8\expect{p_{\beta}\paren{Z} Z \indicator{X + Y > 0}} - 8\expect{\paren{p_{\beta}\paren{Z} - f\paren{Z} }Z \indicator{X + Y > 0}}. 
\end{align*}
Now for the first term, we have 
\begin{align*}
\expect{p_{\beta}\paren{Z} Z \indicator{X + Y > 0}} = \sum_{k=0}^\infty b_k \bb E\brac{ Z^{k+1} \indicator{X+Y>0} }, 
\end{align*}
justified as $\sum_{k=0}^\infty \abs{b_k} \bb E\brac{ Z^{k+1} \indicator{X+Y>0} } \le \sum_{k=0}^\infty \abs{b_k} < \infty$ making the dominated convergence theorem (see, e.g., theorem 2.24 and 2.25 of~\cite{folland1999real}) applicable. To proceed, from Lemma~\ref{lem:aux_asymp_proof_a}, we obtain
\begin{align*}
& \sum_{k=0}^\infty  b_k \bb E\brac{ Z^{k+1} \indicator{X+Y>0} } \nonumber \\
=\; & \sum_{k=0}^\infty \paren{-\beta}^k \paren{k+1} \exp\paren{\frac{2}{\mu^2}\paren{k+1}^2 \paren{\sigma_X^2 + \sigma_Y^2}} \Phi^c\paren{\frac{2}{\mu}\paren{k+1} \sqrt{\sigma_X^2 + \sigma_Y^2}} \\
\ge\; & \frac{1}{\sqrt{2\pi}} \sum_{k = 0}^\infty \paren{-\beta}^k \paren{k+1}\paren{\frac{\mu}{2\paren{k+1}\sqrt{\sigma_X^2 + \sigma_Y^2}} - \frac{\mu^3}{8\paren{k+1}^3\paren{\sigma_X^2 + \sigma_Y^2}^{3/2}}} \\
& \qquad - \frac{3}{\sqrt{2\pi}} \sum_{k=0}^\infty \beta^k \paren{k+1} \frac{\mu^5}{32\paren{k+1}^5\paren{\sigma_X^2 + \sigma_Y^2}^{5/2}}, 
\end{align*}
where we have applied Type I upper and lower bounds for $\Phi^c\paren{\cdot}$ to odd $k$ and even $k$ respectively and rearrange the terms to obtain the last line. Using the following estimates (see Lemma~\ref{lem:neg_curvature_norm_bound})
\begin{align*}
\sum_{k=0}^\infty \paren{-\beta}^k = \frac{1}{1+\beta}, \quad  0 \le \sum_{k=0}^\infty \frac{b_k}{\paren{k+1}^3} \le \sum_{k=0}^\infty \frac{\abs{b_k}}{\paren{k+1}^5} \le \sum_{k=0}^\infty \frac{\abs{b_k}}{\paren{k+1}^3} \le 2, 
\end{align*}
we obtain 
\begin{multline*}
\sum_{k=0}^\infty  b_k \bb E\brac{ Z^{k+1} \indicator{X+Y>0} } \ge \\
\frac{\mu}{2\sqrt{2\pi} \sqrt{\sigma_X^2 + \sigma_Y^2}} \frac{1}{1+\beta} - \frac{\mu^3}{4\sqrt{2\pi}\paren{\sigma_X^2 + \sigma_Y^2}^{3/2}} - \frac{3\mu^5}{16\sqrt{2\pi}\paren{\sigma_X^2 + \sigma_Y^2}^{5/2}}. 
\end{multline*}
To proceed, by Lemma \ref{lem:aux_asymp_proof_a} and Lemma \ref{lem:neg_curvature_norm_bound}, we have
\begin{align*}
\bb E\brac{\paren{p_{\beta}(Z)-f(Z)} Z\indicator{X+Y>0}} \le \norm{p-f}{L^\infty[0,1]} \bb E\brac{Z\indicator{X+Y>0}}\le \frac{\mu}{2\sqrt{2\pi T} \sqrt{\sigma_X^2+\sigma_Y^2}}, 
\end{align*}
where we have also used Type I upper bound for $\Phi^c\paren{\cdot}$. Combining the above estimates, we get
\begin{multline*}
\bb E\brac{\tanh\paren{\frac{X+Y}{\mu}} X} \ge \\
\frac{4\sigma_X^2}{\sqrt{2\pi}\sqrt{\sigma_X^2 + \sigma_Y^2}} \paren{\frac{1}{1+\beta} - \frac{1}{\sqrt{T}}} - \frac{2\sigma_X^2\mu^2}{\sqrt{2\pi} \paren{\sigma_X^2 + \sigma_Y^2}^{3/2}} - \frac{3\sigma_X^2\mu^4}{2\sqrt{2\pi}\paren{\sigma_X^2 + \sigma_Y^2}^{5/2}}. 
\end{multline*}
Noticing $\frac{1}{1+\beta} > \frac{1}{2}$ and taking $T = \mu^{-4}$, we obtain the claimed result. 
\end{proof}

\begin{proof}[of Proposition~\ref{prop:geometry_asymp_curvature}]
For any $i \in [n-1]$, we have 
\begin{align*}
\int_0^1 \int_{\mb x} \abs{\frac{\partial}{\partial w_i} h_{\mu}\paren{\mb q^*\paren{\mb w} \mb x}} \mu\paren{d\mb x}\; d w_i \le \int_0^1 \int_{\mb x} \paren{\abs{x_i} + \abs{x_n} \frac{1}{q_n\paren{\mb w}}} \mu\paren{d\mb x}\; d w_i < \infty. 
\end{align*}
Hence by Lemma~\ref{lemma:exchange_diff_int} we obtain $\frac{\partial }{\partial w_i}\expect{h_{\mu}\paren{\mb q^*\paren{\mb w} \mb x}} = \expect{\frac{\partial }{\partial w_i} h_{\mu}\paren{\mb q^*\paren{\mb w} \mb x}}$. Moreover for any $j \in [n-1]$, 
\begin{multline*}
\int_0^1 \int_{\mb x} \abs{\frac{\partial^2}{\partial w_j \partial w_i} h_{\mu}\paren{\mb q^*\paren{\mb w} \mb x}} \mu\paren{d\mb x}\; d w_j \le \\
\int_0^1 \int_{\mb x} \brac{\frac{1}{\mu}\paren{\abs{x_i} + \frac{\abs{x_n}}{q_n\paren{\mb w}}}\paren{\abs{x_j} + \frac{\abs{x_n}}{q_n\paren{\mb w}}} + \abs{x_n}\paren{\frac{1}{q_n\paren{\mb w}} + \frac{1}{q_n^3\paren{\mb w}}}} \mu\paren{d\mb x}\; d w_i < \infty. 
\end{multline*}
Invoking Lemma~\ref{lemma:exchange_diff_int} again we obtain 
\begin{align*}
\frac{\partial^2}{\partial w_j \partial w_i} \expect{h_{\mu}\paren{\mb q^*\paren{\mb w} \mb x}} = \frac{\partial}{\partial w_j}\expect{\frac{\partial }{\partial w_i} h_{\mu}\paren{\mb q^*\paren{\mb w} \mb x}} = \expect{\frac{\partial^2}{\partial w_j \partial w_i} h_{\mu}\paren{\mb q^*\paren{\mb w} \mb x}}. 
\end{align*}
The above holds for any pair of $i, j \in [n-1]$, so it follows that 
\begin{align*}
\nabla^2_{\mb w}\expect{h_{\mu}\paren{\mb q^*\paren{\mb w} \mb x}} = \expect{\nabla^2_{\mb w} h_{\mu}\paren{\mb q^*\paren{\mb w} \mb x}}. 
\end{align*}
Hence it is easy to see that 
\begin{align*}
& \mb w^* \nabla^2_{\mb w} \expect{h_{\mu}\left(\mb q^*\paren{\mb w} \mb x\right)} \mb w  \nonumber \\
=\; &  \frac{1}{\mu}\expect{\left(1-\tanh^2\left(\frac{\mb q^*\paren{\mb w} \mb x}{\mu}\right)\right) \left(\mb w^* \overline{\mb x} - \frac{x_n}{q_n\paren{\mb w}}\norm{\mb w}{}^2\right)^2} - \expect{\tanh\left(\frac{\mb q^*\paren{\mb w} \mb x}{\mu}\right) \frac{x_n}{q_n^3\paren{\mb w}}\norm{\mb w}{}^2}. 
\end{align*}
Now the first term is 
\begin{align*}
& \frac{1}{\mu}\expect{\left(1-\tanh^2\left(\frac{\mb q^*\paren{\mb w} \mb x}{\mu}\right)\right) \left(\mb w^* \overline{\mb x} - \frac{x_n}{q_n\paren{\mb w}}\norm{\mb w}{}^2\right)^2} \\
=\; & \frac{2\paren{1-\theta}}{\mu} \expect{\left(1-\tanh^2\left(\frac{\mb w^*\overline{\mb x}}{\mu}\right)\right) \paren{\mb w^* \overline{\mb x}}^2 \indicator{\mb w^* \overline{\mb x} > 0}} \\
& \qquad - \frac{4\theta}{\mu} \frac{\norm{\mb w}{}^2}{q_n^2\paren{\mb w}} \expect{\left(1-\tanh^2\left(\frac{\mb w^*\overline{\mb x} + q_n\paren{\mb w} x_n}{\mu}\right)\right) \paren{\mb w^* \overline{\mb x}}\paren{q_n\paren{\mb w} x_n} \indicator{\mb w^*\overline{\mb x} + q_n\paren{\mb w} x_n > 0}} \\
& \qquad + \frac{2\theta}{\mu} \bb E_{\mc J}\bb E_{\mb v}\brac{\left(1-\tanh^2\left(\frac{\mb w^*_{\mc J}\overline{\mb v} + q_n\paren{\mb w} v_n}{\mu}\right)\right) \left(\mb w^*_{\mc J} \overline{\mb v}\right)^2 \indicator{\mb w^*_{\mc J}\overline{\mb v} + q_n\paren{\mb w} v_n > 0}} \\
& \qquad + \frac{2\theta}{\mu} \frac{\norm{\mb w}{}^4}{q_n^4\paren{\mb w}} \bb E_{\mc J} \bb E_{\mb v}\brac{\left(1-\tanh^2\left(\frac{\mb w^*_{\mc J}\overline{\mb v} + q_n\paren{\mb w} v_n}{\mu}\right)\right) \left(q_n\paren{\mb w} v_n\right)^2 \indicator{\mb w^*_{\mc J}\overline{\mb v} + q_n\paren{\mb w} v_n > 0}} \\
\le \; & \frac{8\paren{1-\theta}}{\mu} \expect{\exp\paren{-2\frac{\mb w^* \overline{\mb x}}{\mu}} \paren{\mb w^* \overline{\mb x}}^2 \indicator{\mb w^* \overline{\mb x} > 0}} \\
& \qquad + \frac{8\theta}{\mu} \frac{\norm{\mb w}{}^2}{q_n^2\paren{\mb w}} \expect{ \exp\paren{-\frac{2}{\mu} \paren{\mb w^*\overline{\mb x} + q_n\paren{\mb w} x_n}}\paren{\mb w^* \overline{\mb x} + q_n\paren{\mb w} x_n }^2 \indicator{\mb w^*\overline{\mb x} + q_n\paren{\mb w} x_n > 0} } \\
& \qquad + \frac{2\theta}{\mu} \bb E_{\mc J}\bb E_{X, Y}\brac{\left(1-\tanh^2\left(\frac{X + Y}{\mu}\right)\right) Y^2 \indicator{X + Y > 0}} \\
& \qquad + \frac{2\theta}{\mu} \frac{\norm{\mb w}{}^4}{q_n^4\paren{\mb w}} \bb E_{\mc J} \bb E_{X, Y}\brac{\left(1-\tanh^2\left(\frac{X + Y}{\mu}\right)\right) X^2 \indicator{X + Y > 0}}, 
\end{align*}
where conditioned on each support set $\mc J$, we let $X \doteq q_n\paren{\mb w} v_n \sim \mc N\paren{0, q_n^2\paren{\mb w}}$ and $Y \doteq \mb w_{\mc J}^* \overline{\mb v} \sim \mc N\paren{0, \norm{\mb w_{\mc J}}{}^2}$. Noticing the fact $t \mapsto \exp\paren{-2t/\mu}t^2$ for $t > 0$ is maximized at $t = \mu$ with maximum value $\exp\paren{-2}\mu^2$, and in view of the estimate in Lemma~\ref{lem:neg_curvature_tanh_square_1}, we obtain 
\begin{align*}
& \frac{1}{\mu}\expect{\left(1-\tanh^2\left(\frac{\mb q^*\paren{\mb w} \mb x}{\mu}\right)\right) \left(\mb w^* \overline{\mb x} - \frac{x_n}{q_n\paren{\mb w}}\norm{\mb w}{}^2\right)^2} \nonumber \\
\le\; & 8\exp\paren{-2}\paren{1-\theta + \frac{\norm{\mb w}{}^2}{q_n^2\paren{\mb w}}\theta} \mu \nonumber \\
& \quad + \frac{2\theta}{\mu} \bb E_{\mc J}\brac{\frac{1}{\sqrt{2\pi}} \frac{\mu \norm{\mb w_{\mc J}}{}^2 q_n^2\paren{\mb w}}{\norm{\mb q_{\mc I}}{}^3} + \frac{\mu^3 \norm{\mb w_{\mc J}}{}^2 q_n^2\paren{\mb w}}{\norm{\mb q_{\mc I}}{}^3}+ \frac{3}{4\sqrt{2\pi}} \frac{\norm{\mb w_{\mc J}}{}^2 \mu^3}{\norm{\mb q_{\mc I}}{}^5} \paren{3\mu^2 + 4 \norm{\mb w_{\mc J}}{}^2}} \nonumber \\
& \quad + \frac{2\theta}{\mu} \frac{\norm{\mb w}{}^4}{q_n^4\paren{\mb w}} \bb E_{\mc J} \brac{\frac{1}{\sqrt{2\pi}} \frac{\mu \norm{\mb w_{\mc J}}{}^2 q_n^2\paren{\mb w}}{\norm{\mb q_{\mc I}}{}^3} + \frac{\mu^3 \norm{\mb w_{\mc J}}{}^2 q_n^2\paren{\mb w}}{\norm{\mb q_{\mc I}}{}^3} + \frac{3}{4\sqrt{2\pi}} \frac{q_n^2\paren{\mb w} \mu^3}{\norm{\mb q_{\mc I}}{}^5} \paren{3\mu^2 + 4 q_n^2\paren{\mb w}} } \nonumber \\
\le\; & \frac{2\theta}{\sqrt{2\pi} q_n^2\paren{\mb w}} \bb E_{\mc J}\brac{\frac{\norm{\mb w_{\mc J}}{}^2}{\norm{\mb q_{\mc I}}{}^3}} + \frac{11}{20}\mu\paren{2+\frac{1}{q_n^2\paren{\mb w}}} + 2\theta \mu^2\paren{1 + \frac{3}{\sqrt{2\pi}q_n\paren{\mb w}} + \frac{1}{q_n^3\paren{\mb w}} + \frac{3}{\sqrt{2\pi} q_n^5\paren{\mb w}}}, 
\end{align*}
where we have used $\mu < q_n\paren{\mb w} \le \norm{\mb q_{\mc I}}{}$ and $\norm{\mb w_{\mc J}}{} \le \norm{\mb q_{\mc I}}{}$ and $\norm{\mb w}{} \le 1$ and $\theta \in \paren{0, 1/2}$ to simplify the intermediate quantities to obtain the last line. Similarly for the second term, we obtain 
\begin{align*}
& \expect{\tanh\left(\frac{\mb q^*\paren{\mb w} \mb x}{\mu}\right) \frac{x_n}{q_n^3\paren{\mb w}}\norm{\mb w}{}^2} \nonumber \\
=\; & \frac{\norm{\mb w}{}^2\theta}{q_n^4\paren{\mb w}} \bb E_{\mc J} \bb E_{\mb v}\brac{\tanh\paren{\frac{\mb w^*_{\mc J} \overline{\mb v} + q_n\paren{\mb w} v_n}{\mu}} x_n q_n\paren{\mb w}} \nonumber \\
\ge\; & \frac{\norm{\mb w}{}^2\theta}{q_n^4\paren{\mb w}} \bb E_{\mc J} \brac{\frac{2q_n^2\paren{\mb w}}{\sqrt{2\pi}\norm{\mb q_{\mc I}}{}} - \frac{4\mu^2 q_n^2\paren{\mb w}}{\sqrt{2\pi}\norm{\mb q_{\mc I}}{}} - \frac{2q_n^2\paren{\mb w}\mu^2}{\sqrt{2\pi} \norm{\mb q_{\mc I}}{}^3} - \frac{3q_n^2\paren{\mb w}\mu^4}{2\sqrt{2\pi}\norm{\mb q_{\mc I}}{}^5}} \nonumber \\
\ge\; & \sqrt{\frac{2}{\pi}} \frac{\theta}{q_n^2\paren{\mb w}} \bb E_{\mc J}\brac{\frac{\norm{\mb w}{}^2}{\norm{\mb q_{\mc I}}{}}} - \frac{4\theta \mu^2}{\sqrt{2\pi}}\paren{\frac{1}{q_n^3\paren{\mb w}} + \frac{1}{q_n^5\paren{\mb w}}}. 
\end{align*}
Collecting the above estimates, we obtain 
\begin{align}
& \mb w^* \nabla^2_{\mb w}\expect{h_{\mu}\paren{\mb q^*\paren{\mb w} \mb x}} \mb w \nonumber \\
\le \; & \sqrt{\frac{2}{\pi}} \frac{\theta}{q_n^2\paren{\mb w}} \bb E_{\mc J}\brac{\frac{\norm{\mb w_{\mc J}}{}^2}{\norm{\mb q_{\mc I}}{}^3} - \frac{\norm{\mb w}{}^2\paren{\norm{\mb w_{\mc J}}{}^2 + q_n^2\paren{\mb w}}}{\norm{\mb q_{\mc I}}{}^3}} \nonumber \\
& \qquad + \frac{11}{20}\mu\paren{2+\frac{1}{q_n^2\paren{\mb w}}} + 2\theta \mu^2\paren{1 + \frac{3}{\sqrt{2\pi}q_n\paren{\mb w}} + \frac{2}{q_n^3\paren{\mb w}} + \frac{5}{\sqrt{2\pi} q_n^5\paren{\mb w}}} \nonumber \\
\le\; &  -\sqrt{\frac{2}{\pi}}\theta \expect{\frac{\norm{\mb w_{\mc J^c}}{}^2}{\norm{\mb q_{\mc I}}{}^3}} + \frac{11}{10}\mu + \frac{11}{20}\frac{\mu}{q_n\paren{\mb w}} + 2\theta \mu^2\paren{1 + \frac{6}{q_n^5\paren{\mb w}}} \nonumber \\
\le \; & -\sqrt{\frac{2}{\pi}}\theta \paren{1-\theta} \norm{\mb w}{}^2 \expect{\frac{1}{\norm{\mb q_{\mc I}}{}^3}} + \frac{11}{10}\mu + \frac{11}{20}\frac{\mu}{q_n^2\paren{\mb w}} + 2\theta \mu^2\paren{1 + \frac{6}{q_n^5\paren{\mb w}}}, \label{eq:neg_curvature_final_form}
\end{align}
where to obtain the last line we have invoked the association inequality in Lemma~\ref{lemma:harris_ineq}, as both $\norm{\mb w_{\mc J^c}}{}^2$ and $1/\norm{\mb q_{\mc I}}{}^3$ both coordinatewise nonincreasing w.r.t. the index set. Substituting the upper bound for $\mu$ into~\eqref{eq:neg_curvature_final_form} and noting $R_h \le \norm{\mb w}{}$ and also noting the fact $q_n\paren{\mb w} \ge \frac{1}{2\sqrt{n}}$ (implied by the assumption $\norm{\mb w}{} \le \sqrt{\frac{4n-1}{4n}}$), we obtain the claimed result. 
\end{proof}

\subsubsection{Proof of Proposition~\ref{prop:geometry_asymp_gradient}} \label{sec:proof_geo_asym_gradient}
\begin{proof}
By similar consideration as proof of the above proposition, the following is justified:  
\begin{align*}
\nabla_{\mb w}\expect{h_{\mu}\paren{\mb q^*\paren{\mb w} \mb x}} = \expect{\nabla_{\mb w} h_{\mu}\paren{\mb q^*\paren{\mb w} \mb x}}. 
\end{align*}
Now consider 
\begin{align}
\mb w^* \nabla \expect{h_\mu (\mb q^*\paren{\mb w} \mb x)} 
& = \nabla \expect{\mb w^* h_\mu (\mb q^*\paren{\mb w} \mb x)} \nonumber \\
& = \expect{\tanh\paren{\frac{\mb q^*\paren{\mb w} \mb x}{\mu}} \paren{\mb w^* \bar{\mb x} }} - \frac{\norm{\mb w}{}^2}{q_n}\expect{\tanh\paren{\frac{\mb q^*\paren{\mb w} \mb x}{\mu}} x_n}. \label{eqn:proof_grad_1}
\end{align}
For~\eqref{eqn:proof_grad_1}, we next provide a lower bound for the first expectation and an upper bound for the second expectation. For the first, we have 
\begin{align*}
& \expect{\tanh\paren{\frac{\mb q^*\paren{\mb w} \mb x}{\mu}} \paren{\mb w^* \overline{\mb x} }} \\
=\; & \theta \bb E_{\mc J} \brac{ \bb E_{\mb v} \brac{ \tanh\paren{\frac{\mb w_{\mc J}^* \overline{\mb v} + q_n\paren{\mb w} v_n}{\mu} } \paren{\mb w_{\mc J}^* \overline{\mb v} }}}  + (1-\theta) \bb E_{\mc J} \brac{ \bb E_{\mb v} \brac{ \tanh\paren{\frac{\mb w_{\mc J}^* \overline{\mb v}}{\mu}} \paren{\mb w_{\mc J}^* \overline{\mb v} }}} \\
=\; &\theta \bb E_{\mc J}\brac{\bb E_{X, Y}\brac{ \tanh\paren{\frac{X+Y}{\mu}}Y } }\;+\; (1-\theta) \bb E_{\mc J}\brac{\bb E_{Y}\brac{\tanh\paren{\frac{Y}{\mu}}Y} }, 
\end{align*}
where $X \doteq q_n\paren{\mb w} v_n \sim \mc N\paren{0, q_n^2\paren{\mb w}}$ and $Y \doteq \mb w^*_{\mc J} \overline{\mb v} \sim \mc N\paren{0, \norm{\mb w_{\mc J}}{}^2}$. Now by Lemma \ref{lemma:harris_ineq} we obtain 
\begin{align*}
\bb E \brac{ \tanh\paren{\frac{X+Y}{\mu} } Y} \ge \bb E\brac{\tanh\paren{\frac{X+Y}{\mu}}} \bb E\brac{Y} = 0, 
\end{align*} 
as $\tanh\paren{\frac{X+Y}{\mu} } $ and $X$ are both coordinatewise nondecreasing function of $X$ and $Y$. Using the $\tanh\paren{z} \geq \paren{1-\exp\paren{-2z}}/2$ lower bound for $z > 0$ and integral results in Lemma~\ref{lem:aux_asymp_proof_a}, we obtain
\begin{align*}
\bb E\brac{\tanh\paren{\frac{Y}{\mu}}Y} 
& = 2\bb E\brac{\tanh\paren{\frac{Y}{\mu}}Y \indicator{Y > 0}} \\
& \ge \bb E\brac{\paren{1-\exp\paren{-\frac{2Y}{\mu}} }Y\indicator{Y>0} } \\
& = \frac{2\sigma_Y^2}{\mu}\exp\paren{\frac{2\sigma_Y^2}{\mu^2}} \Phi^c\paren{\frac{2\sigma_Y}{\mu}}  \\
& \ge \frac{2\sigma_Y^2}{\mu \sqrt{2\pi}} \paren{\sqrt{1+\frac{\sigma_Y^2}{\mu^2}} - \frac{\sigma_Y}{\mu}} \\
& \ge \frac{2\sigma_Y^2}{\mu\sqrt{2\pi}} \paren{ \sqrt{1+\frac{\norm{\mb w}{}^2}{\mu^2}} - \frac{\norm{\mb w}{}}{\mu}}, 
\end{align*}
where at the second last inequality we have used Type III lower bound for Gaussian upper tail $\Phi^c\paren{\cdot}$ (Lemma~\ref{lem:gaussian_tail_est}), and at the last we have used the fact that $t \mapsto \sqrt{1+t^2}-t$ is a monotonic decreasing function over $t > 0$ and that $\sigma_Y = \norm{\mb w_{\mc J}}{}\le \norm{\mb w}{}$. Collecting the above estimates, we have 
\begin{align}
\expect{\tanh\paren{\frac{\mb q^*\paren{\mb w} \mb x}{\mu}} \paren{\mb w^* \overline{\mb x} }} 
& \ge \paren{1- \theta} \bb E_{\mc J}\brac{\frac{2\norm{\mb w_{\mc J}}{}^2}{\mu\sqrt{2\pi}} \paren{ \sqrt{1+\frac{\norm{\mb w}{2}^2}{\mu^2}} - \frac{\norm{\mb w}{}}{\mu}}} \nonumber \\
& \ge  \paren{1- \theta} \bb E_{\mc J}\brac{\frac{2\norm{\mb w_{\mc J}}{}^2}{\mu\sqrt{2\pi}} \frac{\mu}{10\norm{\mb w}{}}} \nonumber \\
& \ge \frac{\theta\paren{1-\theta}\norm{\mb w}{}}{5\sqrt{2\pi}}, \label{eqn:proof_grad_2}
\end{align}
where at the second line we have used the assumption that $\norm{\mb w}{} \ge \frac{\mu}{6\sqrt{2}}$ and also the fact that $\sqrt{1+x^2} \ge x + \frac{1}{10x}$ for $x \ge \frac{1}{6\sqrt{2}}$. 

For the second expectation of~\eqref{eqn:proof_grad_1}, we have
\begin{align}
\bb E\brac{\tanh\paren{\frac{\mb q^*\paren{\mb w} \mb x}{\mu}} x_n} 
\le \theta \bb E\brac{\abs{\tanh\paren{\frac{\mb q^*\paren{\mb w} \mb x}{\mu}}} \abs{v_n}} 
\le \theta \sqrt{\frac{2}{\pi}}, 
\label{eqn:proof_grad_3}
\end{align}
as $\tanh\paren{\cdot}$ is bounded by one in magnitude. Plugging the results of~\eqref{eqn:proof_grad_2} and~\eqref{eqn:proof_grad_3} into~\eqref{eqn:proof_grad_1} and noticing that $q_n\paren{\mb w}^2 + \norm{\mb w}{}^2 = 1$ we obtain 
\begin{align*}
\mb w^* \nabla \expect{h_\mu (\mb q^*\paren{\mb w} \mb x)} \;\ge\; 
\frac{\theta\norm{\mb w}{}}{\sqrt{2\pi}}\brac{ \frac{1-\theta}{5} -\frac{2\norm{\mb w}{}}{\sqrt{1-\norm{\mb w}{}^2}}} \ge \frac{\theta\paren{1-\theta}\norm{\mb w}{}}{10\sqrt{2\pi}}, 
\end{align*}
where we have invoked the assumption that $\norm{\mb w}{} \le \frac{1}{10\sqrt{5}}\paren{1-\theta}$ to provide the upper bound $\frac{2\norm{\mb w}{}}{\sqrt{1-\norm{\mb w}{}^2}} \le \frac{1}{10}\paren{1-\theta}$. We then choose the particular ranges as stated for $\mu$ and $\theta$ to ensure $r_g < R_g$, completing the proof. 
\end{proof}

\subsubsection{Proof of Proposition~\ref{prop:geometry_asymp_strong_convexity}} \label{sec:proof_geo_asym_strcvx}

\begin{proof}
By consideration similar to proof of Proposition~\ref{prop:geometry_asymp_curvature}, we can exchange the hessian and expectation, i.e., 
\begin{align*}
\nabla^2_{\mb w} \expect{h_{\mu}\paren{\mb q^*\paren{\mb w} \mb x}} = \expect{\nabla^2_{\mb w} h_{\mu}\paren{\mb q^*\paren{\mb w} \mb x}}. 
\end{align*}
We are interested in the expected Hessian matrix
\begin{align*}
\nabla^2_{\mb w} \bb E\brac{h_{\mu}\paren{\mb q^*\paren{\mb w}\mb x}} 
&=\frac{1}{\mu}\bb E\brac{\paren{1-\tanh^2\paren{\frac{\mb q^*\paren{\mb w} x}{\mu}} } \paren{\overline{\mb x} - \frac{x_n}{q_n\paren{\mb w}}\mb w }\paren{\overline{\mb x} - \frac{x_n}{q_n\paren{\mb w}}\mb w }^*  } \nonumber \\
&- \bb E\brac{\tanh\paren{\frac{\mb q^*\paren{\mb w}\mb x}{\mu}} \paren{\frac{x_n}{q_n\paren{\mb w}} \mb I +\frac{x_n}{q_n^3\paren{\mb w}} \mb w\mb w^*}}
\end{align*}
in the region that $0 \le \norm{\mb w}{}\le \frac{\mu}{4\sqrt{2}} $. 

When $\mb w=\mb 0$, by Lemma \ref{lem:aux_asymp_proof_a}, we have 
\begin{align*}
& \left. \bb E\brac{\nabla^2_{\mb w} h_{\mu}\paren{\mb q^*\paren{\mb w}\mb x} }\right|_{\mb w=0}\\ 
=\; & \frac{1}{\mu}\bb E\brac{\paren{1-\tanh^2\paren{\frac{x_n}{\mu}}}\overline{\mb x}\; \overline{\mb x}^* } - \bb E\brac{\tanh\paren{\frac{x_n}{\mu}}x_n}\mb I \\
=\; & \frac{\theta(1-\theta)}{\mu}\mb I + \frac{\theta^2}{\mu} \bb E_{v_n}\brac{1- \tanh^2\paren{\frac{v_n}{\mu}}}\mb I - \frac{\theta}{\mu} \bb E_{v_n}\brac{1-\tanh^2\paren{\frac{v_n}{\mu}}}\mb I  \\
=\; & \frac{\theta(1-\theta)}{\mu} \bb E_{v_n}\brac{\tanh^2\paren{\frac{q_n\paren{\mb w}v_n}{\mu}}} \mb I. 
\end{align*}
Simple calculation based on Lemma~\ref{lem:aux_asymp_proof_a} shows 
\begin{align*}
\bb E_{v_n}\brac{\tanh^2\paren{\frac{v_n}{\mu}}} \ge 2\paren{1-4\exp\paren{\frac{2}{\mu^2}} \Phi^c\paren{\frac{2}{\mu}}} \ge 2\paren{1-\frac{2}{\sqrt{2\pi}} \mu}. 
\end{align*}
Invoking the assumptions $\mu \le \frac{1}{20 \sqrt{n}} \le 1/20$ and $\theta < 1/2$, we obtain 
\begin{align*}
\left. \bb E\brac{\nabla^2_{\mb w} h_{\mu}\paren{\mb q^*\paren{\mb w}\mb x} }\right|_{\mb w=0} \succeq \frac{\theta\paren{1-\theta}}{\mu} \paren{2 - \frac{4}{\sqrt{2\pi}}\mu} \mb I 
\succeq \frac{\theta}{\mu} \paren{1-\frac{1}{10\sqrt{2\pi}}} \mb I.  
\end{align*}

When $0 <\norm{\mb w}{}\le \frac{\mu}{4\sqrt{2}}$, we aim to derive a semidefinite lower bound for 
\begin{align}
& \bb E\brac{\nabla^2_{\mb w} h_{\mu}\paren{\mb q^*\paren{\mb w}\mb x}} \nonumber \\
=\; &  \frac{1}{\mu}\bb E\brac{\paren{1-\tanh^2\paren{\frac{\mb q^*\paren{\mb w}\mb x}{\mu}}}\overline{\mb x}\; \overline{\mb x}^*} - \frac{1}{q_n^2\paren{\mb w}}\bb E\brac{\tanh\paren{\frac{\mb q^*\paren{\mb w}\mb x}{\mu}}q_n\paren{\mb w}x_n}\mb I\nonumber \\
&-\frac{1}{\mu q_n^2\paren{\mb w}}\bb E\brac{\paren{1-\tanh^2\paren{\frac{\mb q^*\paren{\mb w}\mb x}{\mu}}}q_n\paren{\mb w}x_n\paren{\mb w \overline{\mb x}^*+\overline{\mb x}\mb w^*}} \nonumber \\
& + \frac{1}{q_n^4\paren{\mb w}}\Brac{ \frac{1}{\mu}\bb E\brac{\paren{1-\tanh^2\paren{\frac{\mb q^*\paren{\mb w}\mb x}{\mu}}}(q_n\paren{\mb w}x_n)^2 } -\bb E\brac{\tanh\paren{\frac{\mb q^*\paren{\mb w}\mb x}{\mu}}q_n\paren{\mb w}x_n}  }\mb w\mb w^*  \label{eqn:thm-str-cvx-1}. 
\end{align}
We will first provide bounds for the last two lines and then tackle the first which is slightly more tricky. For the second line, we have 
\begin{align*}
& \frac{1}{\mu q_n^2\paren{\mb w}}\norm{ \bb E\brac{\paren{1-\tanh^2\paren{\frac{\mb q^*\paren{\mb w}\mb x}{\mu}}}q_n\paren{\mb w}x_n\paren{\mb w \overline{\mb x}^*+\overline{\mb x}\mb w^*}} }{} \nonumber \\
\le\; &  \frac{2}{\mu q_n^2\paren{\mb w}}\norm{ \bb E\brac{\paren{1-\tanh^2\paren{\frac{\mb q^*\paren{\mb w}\mb x}{\mu}}}q_n\paren{\mb w}x_n \bar{\mb x} } \mb w^*}{} \nonumber \\
\le\; &  \frac{2}{\mu q_n^2\paren{\mb w}}\norm{ \bb E\brac{\paren{1-\tanh^2\paren{\frac{\mb q^*\paren{\mb w}\mb x}{\mu}}}q_n\paren{\mb w}x_n \overline{\mb x} } }{} \norm{\mb w}{} \nonumber \\
\le\; &  \frac{2}{\mu q_n\paren{\mb w}}\theta^2 \bb E\brac{\abs{v_n}} \expect{\norm{\overline{\mb v}}{}} \norm{\mb w}{}  \nonumber \\
\le\; &  \frac{4\theta^2}{\pi\mu q_n\paren{\mb w}} \sqrt{n}\norm{\mb w}{} \le \frac{\theta}{\mu} \frac{4\theta \sqrt{n}\norm{\mb w}{}}{\pi \sqrt{1-\norm{\mb w}{}^2}} \le \frac{\theta}{\mu}\frac{1}{40\pi}, 
\end{align*}
where from the third to the fourth line we have used $\norm{1-\tanh^2\paren{\frac{\mb q^*\paren{\mb w} \mb x}{\mu}}}{} \le 1$, Jensen's inequality for the $\norm{\cdot}{}$ function, and independence of $x_n$ and $\overline{\mb x}$, and to obtain the last bound we have invoked the $\norm{\mb w}{} \le \frac{\mu}{4\sqrt{2}}$, $\mu \le \frac{1}{20\sqrt{n}}$, and $\theta < \frac{1}{2}$ assumptions. For the third line in \eqref{eqn:thm-str-cvx-1}, by Lemma \ref{lem:derivatives_basic_surrogate} and Lemma \ref{lem:aux_asymp_proof_a},
\begin{align*}
&\abs{\frac{1}{\mu}\bb E\brac{\paren{1-\tanh^2\paren{\frac{\mb q^*\paren{\mb w}\mb x}{\mu}}}(q_n\paren{\mb w}x_n)^2 } -\bb E\brac{ \tanh\paren{\frac{\mb q^*\paren{\mb w}\mb x}{\mu}}q_nx_n}} \nonumber \\
=\; & \left|\frac{\theta}{\mu} \bb E_{\mc J}\bb E_{\mb v}\brac{\paren{1-\tanh^2\paren{\frac{\mb w_{\mc J}^*\overline{\mb v}+q_n\paren{\mb w}v_n}{\mu}} \paren{q_n\paren{\mb w}v_n}^2} } \right. \nonumber \\
& \qquad \left. -  \theta \bb E_{\mc J}\bb E_{\mb v}\brac{\tanh\paren{\frac{\mb w_{\mc J}^*\overline{\mb v}+q_n\paren{\mb w}v_n}{\mu}}q_n\paren{\mb w}v_n} \right| \nonumber \\
=\; & \frac{\theta}{\mu}\bb E_{\mc J}\bb E_{\mb v}\brac{\paren{1-\tanh^2\paren{\frac{\mb w_{\mc J}^*\overline{\mb v}+q_n\paren{\mb ws}v_n}{\mu}}}\paren{(q_n\paren{\mb w}v_n)^2 + q_n^2\paren{\mb w}}}  \nonumber \\
\le \;& \frac{8\theta}{\mu}\bb E_{\mc J}\bb E_{\mb v}\brac{\exp\paren{-\frac{2}{\mu}\paren{\mb w_{\mc J}^*\overline{\mb v}+q_n\paren{\mb w}v_n}}\paren{(q_n\paren{\mb w}v_n)^2 + q_n^2\paren{\mb w}} \indicator{\mb w_{\mc J}^*\overline{\mb v}+q_n\paren{\mb w}v_n>0}} \nonumber \\
\le \;& \frac{8\theta}{\sqrt{2\pi}}\bb E_{\mc J}\brac{ \frac{q_n^2\paren{\mb w}}{\sqrt{q_n^2\paren{\mb w}+\norm{\mb w_{\mc J} }{}^2}}} \; \le \; \frac{8\theta q_n\paren{\mb w}}{\sqrt{2\pi}}. 
\end{align*}
Thus, we have
\begin{align*}
& \frac{1}{q_n^4\paren{\mb w}}\Brac{\frac{1}{\mu}\bb E\brac{\paren{1-\tanh^2\paren{\frac{\mb q^*\mb x}{\mu}}}(q_nx_n)^2 } -\bb E\brac{ \tanh\paren{\frac{\mb q^*\mb x}{\mu}}q_nx_n}}  \mb w\mb w^* \nonumber \\
\succeq\; & - \frac{8\theta }{q_n^3\paren{\mb w}\sqrt{2\pi}} \norm{\mb w}{}^2 \mb I \succeq -\frac{\theta}{\mu}\paren{\frac{64 n^{3/2}\mu \norm{\mb w}{}^2}{q_n^3\paren{\mb w} \sqrt{2\pi}}} \mb I \succeq -\frac{\theta}{\mu}\frac{1}{4000\sqrt{2\pi}} \mb I,  \label{eqn:thm-str-cvx-6}
\end{align*}
where we have again used $\norm{\mb w}{} \le \frac{\mu}{4\sqrt{2}}$, $\mu \le \frac{1}{20\sqrt{n}}$, and $q_n\paren{\mb w} \ge \frac{1}{2\sqrt{n}}$ assumptions to simplify the final bound.

To derive a lower bound for the first line of~\eqref{eqn:thm-str-cvx-1}, we lower bound the first term and upper bound the second. The latter is easy: using Lemma~\ref{lem:derivatives_basic_surrogate} and Lemma~\ref{lem:aux_asymp_proof_a}, 
\begin{align*}
& \frac{1}{q_n^2\paren{\mb w}}\bb E\brac{\tanh\paren{\frac{\mb q^*\paren{\mb w}\mb x}{\mu}}q_n\paren{\mb w}x_n} \nonumber \\
=\; & \frac{\theta}{\mu} \bb E_{\mc J} \bb E_{\mb v}\brac{1-\tanh^2\brac{\frac{\mb w^*_{\mc J} \overline{\mb v} + q_n\paren{\mb w} v_n}{\mu}}} \nonumber \\
\le\; & \frac{8\theta}{\mu} \bb E_{\mc J} \bb E_{\mb v} \brac{\exp\paren{-2\frac{\mb w^*_{\mc J} \overline{\mb v} + q_n\paren{\mb w} v_n}{\mu}} \indicator{\mb w^*_{\mc J} \overline{\mb v} + q_n\paren{\mb w} v_n > 0}} \nonumber \\
\le\; & \frac{4\theta}{\sqrt{2\pi} q_n\paren{\mb w}} \le \frac{\theta}{\mu} \frac{8\sqrt{n}\mu}{\sqrt{2\pi}} \le \frac{\theta}{\mu}\frac{2}{5\sqrt{2\pi}}, 
\end{align*}
where we have again used assumptions that $q_n\paren{\mb w} \ge \frac{1}{2\sqrt{n}}$ and $\mu \le \frac{1}{20\sqrt{n}}$ to simplify the last bound. To lower bound the first term, first note that 
\begin{align*}
\frac{1}{\mu}\bb E\brac{\paren{1-\tanh^2\paren{\frac{\mb q^*\paren{\mb w}\mb x}{\mu}}}\overline{\mb x}\; \overline{\mb x}^*} \succeq \frac{1-\theta}{\mu} \bb E_{\overline{\mb x}}\brac{\paren{1-\tanh^2\paren{\frac{\mb w^* \overline{\mb x}}{\mu}}}\overline{\mb x}\; \overline{\mb x}^*}. 
\end{align*}
We set out to lower bound the expectation as 
\begin{align*}
\bb E_{\overline{\mb x}}\brac{\paren{1-\tanh^2\paren{\frac{\mb w^* \overline{\mb x}}{\mu}}}\overline{\mb x}\; \overline{\mb x}^*} \succeq \theta \beta \mb I
\end{align*}
for some scalar $\beta > 0$. Suppose $\mb w$ has $k \in [n-1]$ nonzeros, w.l.o.g., further assume the first $k$ elements of $\mb w$ are these nonzeros. It is easy to see the expectation above has a block diagonal structure $\diag\paren{\bm \Sigma; \alpha  \theta \mb I_{n-1-k}}$, 
where 
\begin{align*}
\alpha \doteq \bb E_{\overline{\mb x}}\brac{\paren{1-\tanh^2\paren{\frac{\mb w^* \overline{\mb x}}{\mu}}}}. 
\end{align*}
So in order to derive the $\theta \beta \mb I$ lower bound as desired, it is sufficient to show $\bm \Sigma \succeq \theta \beta \mb I$ for some $0 < \beta < 1$, i.e., letting $\widetilde{\mb w} \in \R^k$ be the subvector of nonzero elements, 
\begin{align*}
\bb E_{\widetilde{\mb x} \sim_{i.i.d.} \mathrm{BG}\paren{\theta}}\brac{\paren{1-\tanh^2\paren{\frac{\widetilde{\mb w}^* \widetilde{\mb x}}{\mu}}}\widetilde{\mb x}\; \widetilde{\mb x}^*} \succeq \theta \beta \mb I, 
\end{align*} 
which is equivalent to that for all $\mb z \in \R^k$ such that $\norm{\mb z}{} = 1$, 
\begin{align*}
\bb E_{\widetilde{\mb x} \sim_{i.i.d.} \mathrm{BG}\paren{\theta}}\brac{\paren{1-\tanh^2\paren{\frac{\widetilde{\mb w}^* \widetilde{\mb x}}{\mu}}} \paren{\widetilde{\mb x}^* \mb z}^2} \ge \theta \beta. 
\end{align*}
It is then sufficient to show that for any nontrivial support set $\mc S \subset [k]$ and any vector $\mb z \in \R^k$ such that $\supp\paren{\mb z} = \mc S$ with $\norm{\mb z}{} = 1$, 
\begin{align*}
\bb E_{\widetilde{\mb v} \sim_{i.i.d.} \mc N\paren{0, 1}}\brac{\paren{1-\tanh^2\paren{\frac{\widetilde{\mb w}^*_{\mc S} \widetilde{\mb v}}{\mu}}} \paren{\widetilde{\mb v}^* \mb z}^2} \ge \beta. 
\end{align*}
To see the implication, suppose the latter claimed holds, then for any $\mb z$ with unit norm, 
\begin{align*}
& \bb E_{\widetilde{\mb x} \sim_{i.i.d.} \mathrm{BG}\paren{\theta}}\brac{\paren{1-\tanh^2\paren{\frac{\widetilde{\mb w}^* \widetilde{\mb x}}{\mu}}} \paren{\widetilde{\mb x}^* \mb z}^2}  \\
=\; & \sum_{s = 1}^k \theta^s \paren{1-\theta}^{k-s} \sum_{\mc S \in \binom{[k]}{s}} \bb E_{\widetilde{\mb v} \sim_{i.i.d.} \mc N\paren{0, 1}}\brac{\paren{1-\tanh^2\paren{\frac{\widetilde{\mb w}^*_{\mc S} \widetilde{\mb v}}{\mu}}} \paren{\widetilde{\mb v}^* \mb z_{\mc S}}^2} \\
\ge\; & \sum_{s = 1}^k \theta^s \paren{1-\theta}^{k-s} \sum_{\mc S \in \binom{[k]}{s}} \beta \norm{\mb z_{\mc S}}{}^2 = \beta \bb E_{\mc S}\brac{\norm{\mb z_{\mb S}}{}^2} = \theta \beta. 
\end{align*}
Now for any fixed support set $\mc S \subset [k]$, $\mb z = \mc P_{\widetilde{\mb w}_{\mc S}} \mb z + \paren{\mb I - \mc P_{\widetilde{\mb w}_{\mc S}}} \mb z$. So we have
\begin{align*}
& \bb E_{\widetilde{\mb v} \sim_{i.i.d.} \mc N\paren{0, 1}}\brac{\paren{1-\tanh^2\paren{\frac{\widetilde{\mb w}^*_{\mc S} \widetilde{\mb v}}{\mu}}} \paren{\widetilde{\mb v}^* \mb z}^2} \\
=\; & \bb E_{\widetilde{\mb v}}\brac{\paren{1-\tanh^2\paren{\frac{\widetilde{\mb w}^*_{\mc S} \widetilde{\mb v}}{\mu}}} \paren{\widetilde{\mb v}^* \mc P_{\widetilde{\mb w}_{\mc S}} \mb z}^2} + \bb E_{\widetilde{\mb v}}\brac{\paren{1-\tanh^2\paren{\frac{\widetilde{\mb w}^*_{\mc S} \widetilde{\mb v}}{\mu}}} \paren{\widetilde{\mb v}^* \paren{\mb I - \mc P_{\widetilde{\mb w}_{\mc S}}} \mb z}^2} \\
=\; & \frac{\paren{{\widetilde{\mb w}_{\mc S}}^* \mb z}^2}{\norm{\mb w_{\mc S}}{}^4} \bb E_{\widetilde{\mb v}}\brac{\paren{1-\tanh^2\paren{\frac{\widetilde{\mb w}^*_{\mc S} \widetilde{\mb v}}{\mu}}} \paren{\widetilde{\mb v}^* \widetilde{\mb w}_{\mc S}}^2} + \bb E_{\widetilde{\mb v}}\brac{\paren{1-\tanh^2\paren{\frac{\widetilde{\mb w}^*_{\mc S} \widetilde{\mb v}}{\mu}}}} \bb E_{\widetilde{\mb v}}\brac{\paren{\widetilde{\mb v}^* \paren{\mb I - \mc P_{\widetilde{\mb w}_{\mc S}}} \mb z}^2} \\
\ge\; & 2\frac{\paren{{\widetilde{\mb w}_{\mc S}}^* \mb z}^2}{\norm{\mb w_{\mc S}}{}^4} \bb E_{\widetilde{\mb v}}\brac{\exp\paren{-\frac{2\widetilde{\mb w}^*_{\mc S} \widetilde{\mb v}}{\mu}} \paren{\widetilde{\mb v}^* \widetilde{\mb w}_{\mc S}}^2 \indicator{\widetilde{\mb v}^* \widetilde{\mb w}_{\mc S} > 0}} 
+ 2\bb E_{\widetilde{\mb v}}\brac{\exp\paren{-\frac{2\widetilde{\mb w}^*_{\mc S} \widetilde{\mb v}}{\mu}} \indicator{\widetilde{\mb w}^*_{\mc S} \widetilde{\mb v} > 0}} \norm{\paren{\mb I - \mc P_{\widetilde{\mb w}_{\mc S}}} \mb z}{}^2. 
\end{align*}
Using expectation result from Lemma~\ref{lem:aux_asymp_proof_a}, and applying Type III lower bound for Gaussian tails, we obtain 
\begin{align*}
& \bb E_{\widetilde{\mb v} \sim_{i.i.d.} \mc N\paren{0, 1}}\brac{\paren{1-\tanh^2\paren{\frac{\widetilde{\mb w}^*_{\mc S} \widetilde{\mb v}}{\mu}}} \paren{\widetilde{\mb v}^* \mb z}^2} \nonumber \\
\ge\; & \frac{1}{\sqrt{2\pi}}\paren{\sqrt{4+\frac{4\norm{\widetilde{\mb w}_{\mc S}}{}^2}{\mu^2}}- \frac{2\norm{\widetilde{\mb w}_{\mc S}}{}}{\mu}} - \frac{4\paren{{\widetilde{\mb w}_{\mc S}}^* \mb z}^2}{\mu\sqrt{2\pi}\norm{\widetilde{\mb w}_{\mc S}}{}} \nonumber \\
\ge\; & \frac{1}{\sqrt{2\pi}} \paren{2 - \frac{3}{4} \sqrt{2}}, 
\end{align*}
where we have used Cauchy-Schwarz to obtain $\paren{\widetilde{\mb v}^* \mb z}^2 \le \norm{\widetilde{\mb v}^*}{}^2$ and invoked the assumption $\norm{\mb w}{} \le \frac{\mu}{4\sqrt{2}}$ to simplify the last bound. On the other hand, we similarly obtain 
\begin{align*}
\alpha = \bb E_{\mc J} \bb E_{Z \sim \mc N(0, \|\mb w_{\mc J}\|^2)} [1-\tanh^2 (Z/\mu)] \ge \frac{2}{\sqrt{2\pi}} \frac{\sqrt{4 \|\mb w\|^2/\mu^2 + 4} - 2\|\mb w\|/\mu}{2} \ge \frac{1}{\sqrt{2\pi}} \paren{2 - \frac{1}{2} \sqrt{2}}. 
\end{align*}
So we can take $\beta = \frac{1}{\sqrt{2\pi}} \paren{2 - \frac{3}{4} \sqrt{2}} < 1$. 

Putting together the above estimates for the case $\mb w \neq \mb 0$, we obtain 
\begin{align*}
\expect{\nabla^2_{\mb w} h_{\mu}\paren{\mb q^*\paren{\mb w} \mb x}} \succeq \frac{\theta}{\mu\sqrt{2\pi}} \paren{1-\frac{3}{8}\sqrt{2} - \frac{\sqrt{2\pi}}{40\pi} - \frac{1}{4000} - \frac{2}{5}} \mb I \succeq \frac{1}{25\sqrt{2\pi}} \frac{\theta}{\mu} \mb I. 
\end{align*}
Hence for all $\mb w$, we can take the $\frac{1}{25\sqrt{2\pi}} \frac{\theta}{\mu}$ as the lower bound, completing the proof. 
\end{proof}

 \label{sec:proof_geo_exp}
\subsubsection{Proof of Pointwise Concentration Results} \label{proof:cn_point}
To avoid clutter of notations, in this subsection we write $\mb X$ to mean $\mb X_0$; similarly $\mb x_k$ for $\paren{\mb x_0}_k$, the $k$-th column of $\mb X_0$. The function $g\paren{\mb w}$ means $g\paren{\mb w; \mb X_0}$. We first establish a useful comparison lemma between random i.i.d. Bernoulli random vectors random i.i.d. normal random vectors. 

\begin{lemma}\label{lem:U-moments-bound}
Suppose $\mb z, \mb z' \in \R^n$ are independent and obey $\mb z \sim_{i.i.d.} \mathrm{BG}\paren{\theta}$ and $\mb z' \sim_{i.i.d.} \mc N\paren{0, 1}$. Then, for any fixed vector $\mb v \in \R^n$, it holds that 
\begin{align*}
\expect{\abs{\mb v^* \mb z}^m} & \le \expect{\abs{\mb v^* \mb z'}^m } = \bb E_{Z \sim \mc N\paren{0, \norm{\mb v}{}^2}}\brac{\abs{Z}^m}, \\
\expect{\norm{\mb z}{}^m} & \le \expect{\norm{\mb z'}{}^m}, 
\end{align*}
for all integers $m \ge 1$. 
\end{lemma}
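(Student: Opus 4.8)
Both inequalities are elementary consequences of the mixture-of-Gaussians structure of the Bernoulli--Gaussian law, and the plan is to exploit it by conditioning on the random support. Throughout I will use that the two quantities being compared depend only on the marginal laws of $\mb z$ and $\mb z'$, so I am free to replace the independent pair by any convenient coupling when computing expectations.

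First I would handle $\expect{\abs{\mb v^* \mb z}^m} \le \expect{\abs{\mb v^* \mb z'}^m}$. Let $\mc I = \supp\paren{\mb z} \subseteq [n]$ denote the (random) support of $\mb z$. Conditionally on $\mc I$, the scalar $\mb v^* \mb z$ is zero-mean Gaussian with variance $\norm{\mb v_{\mc I}}{2}^2$, where $\mb v_{\mc I}$ is the restriction of $\mb v$ to the coordinates in $\mc I$. The scaling identity $\bb E_{Z \sim \mc N\paren{0,\sigma^2}}\brac{\abs{Z}^m} = \sigma^m\, \bb E_{Z \sim \mc N\paren{0,1}}\brac{\abs{Z}^m}$ holds for every $\sigma \ge 0$ (the case $\sigma = 0$, i.e.\ $\mc I = \emptyset$, being the degenerate point mass at $0$), and $\sigma \mapsto \sigma^m$ is nondecreasing on $[0, \infty)$. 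Since $\norm{\mb v_{\mc I}}{2} \le \norm{\mb v}{2}$ almost surely, taking the expectation over $\mc I$ gives
\[
\expect{\abs{\mb v^* \mb z}^m} = \bb E_{\mc I}\brac{\norm{\mb v_{\mc I}}{2}^m}\, \bb E_{Z \sim \mc N\paren{0,1}}\brac{\abs{Z}^m} \;\le\; \norm{\mb v}{2}^m\, \bb E_{Z \sim \mc N\paren{0,1}}\brac{\abs{Z}^m} = \bb E_{Z \sim \mc N\paren{0,\norm{\mb v}{2}^2}}\brac{\abs{Z}^m},
\]
and the last expression equals $\expect{\abs{\mb v^* \mb z'}^m}$ because $\mb v^* \mb z' \sim \mc N\paren{0, \norm{\mb v}{2}^2}$; this simultaneously records the claimed equality.

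Next I would prove $\expect{\norm{\mb z}{2}^m} \le \expect{\norm{\mb z'}{2}^m}$ by a one-line coupling: put $z_i = \Omega_i V_i$ and $z_i' = V_i$ for the same i.i.d.\ standard normals $V_i$ and independent $\mathrm{Ber}\paren{\theta}$ variables $\Omega_i$. These have the prescribed marginal laws, and pointwise $\norm{\mb z}{2}^2 = \sum_i \Omega_i V_i^2 \le \sum_i V_i^2 = \norm{\mb z'}{2}^2$; since $t \mapsto t^{m/2}$ is nondecreasing on $[0,\infty)$ we get $\norm{\mb z}{2}^m \le \norm{\mb z'}{2}^m$ pointwise, and taking expectations finishes the bound. (Equivalently, conditioning on $\mc I$ reduces the inequality to monotonicity of $k \mapsto \bb E\brac{\paren{\chi_k^2}^{m/2}}$ in the degrees of freedom $k$, since $\norm{\mb z}{2}^2$ given $\mc I$ is distributed as $\chi_{\abs{\mc I}}^2$.)

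There is essentially no hard step here. The only points needing a sentence of care are the degenerate empty-support case in the scaling identity, and the routine remark that computing each side of the inequalities under a coupling is legitimate because each side is a function of a single marginal law. I would present the argument in exactly the order above: first the support-conditioning plus Gaussian scaling for the linear-form moments, then the coupling for the norm moments.
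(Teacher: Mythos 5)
Your proposal is correct and follows essentially the same route as the paper: condition on the Bernoulli support, note that $\mb v^*\mb z$ is conditionally Gaussian with variance $\norm{\mb v_{\mc I}}{}^2 \le \norm{\mb v}{}^2$, and use monotonicity of Gaussian absolute moments in the variance. Your coupling for the norm inequality is just a compact restatement of the paper's support-restriction step ($\norm{\mb z'_{\mc J}}{} \le \norm{\mb z'}{}$), so there is no substantive difference.
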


Now, we are ready to prove Proposition~\ref{prop:concentration-hessian-negative} to Proposition~\ref{prop:concentration-hessian-zero} as follows.

\begin{proof}[of Proposition \ref{prop:concentration-hessian-negative}] \label{proof:pt_cn_curvature}
Let 
\begin{align*}
Y_k = \frac{1}{\norm{\mb w}{}^2}\mb w^*\nabla^2 h_\mu\paren{\mb q(\mb w)^*\mb x_k}\mb w, 
\end{align*}
then $\frac{\mb w^*\nabla^2 g(\mb w)\mb w}{\norm{\mb w}{}^2} = \frac{1}{p} \sum_{k=1}^p Y_k$. For each $Y_k$ ($k \in [p]$), from \eqref{eqn:lse-hessian}, we know that
\begin{align*}
Y_k \;&=\; \frac{1}{\mu}\paren{1-\tanh^2\paren{\frac{\mb q(\mb w)^*\mb x_k}{\mu}}}
\paren{\frac{\mb w^*\overline{\mb x}_k}{\norm{\mb w}{}}- \frac{x_k\paren{n}\norm{\mb w}{}}{q_n(\mb w)}}^2 - \tanh\paren{\frac{\mb q(\mb w)^*\mb x_k}{\mu}}\frac{x_k\paren{n}}{q_n^3(\mb w)}. 
\end{align*}
Writing $Y_k = W_k + V_k$, where 
\begin{align*}
W_k & =  \frac{1}{\mu}\paren{1-\tanh^2\paren{\frac{\mb q(\mb w)^*\mb x_k}{\mu}}}
\paren{\frac{\mb w^*\overline{\mb x}_k}{\norm{\mb w}{}}- \frac{x_k\paren{n}\norm{\mb w}{}}{q_n(\mb w)}}^2, \\
V_k & = - \tanh\paren{\frac{\mb q(\mb w)^*\mb x_k}{\mu}}\frac{x_k\paren{n}}{q_n^3(\mb w)}. 
\end{align*}
Then by similar argument as in proof to Proposition~\ref{prop:concentration-gradient}, we have for all integers $m \ge 2$ that 
\begin{align*}
\expect{\abs{W_k}^m} 
& \le \frac{1}{\mu^m} \bb E\brac{\abs{\frac{\mb w^*\overline{\mb x}_k}{\norm{\mb w}{}}- \frac{x_k\paren{n}\norm{\mb w}{}}{q_n(\mb w)}}^{2m}} \le  \frac{1}{\mu^m} \bb E_{Z \sim \mc N\paren{0, 1/q_n^2\paren{\mb w}}}\brac{\abs{Z}^{2m}} \\
& \le \frac{1}{\mu^m}(2m-1)!!(4n)^m \le  \frac{m!}{2} \paren{\frac{4n}{\mu}}^m,  \\
\bb E\brac{\abs{V_k}^m} 
& \le \frac{1}{q_n^{3m}(\mb w)}\bb E\brac{\abs{v_k\paren{n}}^m} \le \paren{2\sqrt{n}}^{3m} (m-1)!! \le \frac{m!}{2} \paren{8n\sqrt{n}}^m, 
\end{align*}
where we have again used the assumption that $q_n\paren{\mb w} \ge \frac{1}{2\sqrt{n}}$ to simplify the result. Taking $\sigma_W^2 = 16n^2/\mu^2 \geq \bb E\brac{W_k^2}$, $R_W = 4n/\mu$ and $\sigma_V^2 = 64n^3\geq \bb E\brac{V_k^2}$, $R_V = 8n\sqrt{n}$, and considering $S_W = \frac{1}{p}\sum_{k=1}^p W_k$ and $S_V = \frac{1}{p}\sum_{k=1}^p V_k$, then by Lemma \ref{lem:mc_bernstein_scalar}, we obtain 
\begin{align*}
\bb P\brac{\abs{S_W - \bb E\brac{S_W}}\geq \frac{t}{2} } \;&\leq\; 2\exp\paren{- \frac{p\mu^2t^2}{128n^2 + 16n\mu t}}, \\
\bb P\brac{\abs{S_V - \bb E\brac{S_V}}\geq \frac{t}{2} } \;&\leq\; 2\exp\paren{- \frac{pt^2}{512n^3+32n\sqrt{n} t}}. 
\end{align*}
Combining the above results, we obtain
\begin{align*}
\bb P\brac{\abs{\frac{1}{p} \sum_{k=1}^p X_k - \expect{X_k}} \geq t} \;&=\; \bb P\brac{\abs{S_W - \bb E\brac{S_W}+ S_V - \bb E\brac{S_V} }\geq t } \nonumber \\
\; &\leq \; \bb P\brac{\abs{S_W - \bb E\brac{S_W}}\geq \frac{t}{2} }+ \bb P\brac{\abs{S_V - \bb E\brac{S_V}}\geq \frac{t}{2} } \nonumber \\
\;&\leq \; 2\exp\paren{- \frac{p\mu^2t^2}{128n^2+16n\mu t}}+2\exp\paren{- \frac{pt^2}{512n^3+32n\sqrt{n} t}} \nonumber \\
\;&\leq\; 4\exp\paren{- \frac{p\mu^2t^2}{512n^2+32n\mu t}}, 
\end{align*}
provided that $\mu \leq \frac{1}{\sqrt{n}}$, as desired. 
\end{proof}

\begin{proof}[of Proposition~\ref{prop:concentration-gradient} ] \label{proof:pt_cn_gradient}
Let 
\begin{align*}
X_k = \frac{\mb w^*}{\norm{\mb w}{2}}\nabla h_\mu\paren{\mb q(\mb w)^*\mb x_k}, 
\end{align*}
then $\frac{\mb w^*\nabla g(\mb w)}{\norm{\mb w}{2}}= \frac{1}{p} \sum_{k=1}^p X_k$. 
For each $X_k, k \in [p]$, from \eqref{eqn:lse-gradient}, we know that 
\begin{align*}
\abs{X_k} = \abs{\tanh\paren{\frac{\mb q(\mb w)^*\mb x_k}{\mu } }\paren{\frac{\mb w^*\overline{\mb x}_k}{\norm{\mb w}{}} - \frac{\norm{\mb w}{2}x_k\paren{n}}{q_n\paren{\mb w}}}} \le \abs{\frac{\mb w^*\overline{\mb x}_k}{\norm{\mb w}{}} - \frac{\norm{\mb w}{2}x_k\paren{n}}{q_n\paren{\mb w}}}, 
\end{align*}
as the magnitude of $\tanh\paren{\cdot}$ is bounded by one. Because $\frac{\mb w^*\overline{\mb x}_k}{\norm{\mb w}{2}} - \frac{\norm{\mb w}{}x_k\paren{n}}{q_n\paren{\mb w}} = \paren{\frac{\mb w}{\norm{\mb w}{}}, - \frac{\norm{\mb w}{}}{q_n\paren{\mb w}}}^* \mb x_k$ and $\mb x_k \sim_{i.i.d.} \mathrm{BG}\paren{\theta}$, invoking Lemma~\ref{lem:U-moments-bound}, we obtain for every integer $m \ge 2$ that
\begin{align*}
\expect{\abs{X_k}^m} \le \bb E_{Z \sim \mc N\paren{0, 1/q_n^2\paren{\mb w}}}\brac{\abs{Z}^m} \le \frac{1}{q_n\paren{\mb w}^m} (m-1)!!\; \le\; \frac{m!}{2} \paren{4n} \paren{2\sqrt{n}}^{m-2}, 
\end{align*}
where the Gaussian moment can be looked up in Lemma~\ref{lem:guassian_moment} and we used the fact that $(m-1)!! \leq m!/2$ and the assumption that $q_n\paren{\mb w} \ge \frac{1}{2\sqrt{n}}$ to get the result. Thus, by taking $\sigma^2 = 4n\geq \bb E\brac{X_k^2}$ and $R = 2\sqrt{n}$, and we obtain the claimed result by invoking Lemma \ref{lem:mc_bernstein_scalar}.
\end{proof} 

\begin{proof}[of Proposition \ref{prop:concentration-hessian-zero}] \label{proof:pt_cn_strcvx}
Let $\mb Z_k =\nabla^2_{\mb w} h_\mu\paren{\mb q(\mb w)^*\mb x_k}$, then $\nabla^2_{\mb w} g\paren{\mb w} = \frac{1}{p} \sum_{k=1}^p \mb Z_k$. From \eqref{eqn:lse-hessian}, we know that
\begin{align*}
\mb Z_k \;=\; \mb W_k + \mb V_k
\end{align*}
where 
\begin{align*}
\mb W_k \;&=\; \frac{1}{\mu} \paren{1 - \tanh^2\paren{\frac{\mb q(\mb w)^*\mb x_k}{\mu }}}\paren{\overline{\mb x}_k - \frac{x_k\paren{n}\mb w}{q_n(\mb w)}}\paren{\overline{\mb x}_k - \frac{x_k\paren{n}\mb w}{q_n(\mb w)}}^* \\
\mb V_k \;&=\; -\tanh\paren{\frac{\mb q(\mb w)^*\mb x_k}{\mu}}\paren{\frac{x_k\paren{n}}{q_n(\mb w)}\mb I + \frac{x_k\paren{n}\mb w\mb w^*}{q_n^3(\mb w)}}. 
\end{align*}
For $\mb W_k$, we have
\begin{align*}
\mb 0 \preceq \bb E\brac{\mb W_k^m} 
&\preceq \frac{1}{\mu^m} \bb E\brac{ \norm{\overline{\mb x}_k - \frac{x_k\paren{n}\mb w}{q_n(\mb w)}}{}^{2m-2} \paren{\overline{\mb x}_k - \frac{x_k\paren{n}\mb w}{q_n(\mb w)}}\paren{\overline{\mb x}_k - \frac{x_k\paren{n}\mb w}{q_n(\mb w)}}^* } \nonumber \\
& \preceq \frac{1}{\mu^m} \bb E\brac{ \norm{\overline{\mb x}_k - \frac{x_k\paren{n}\mb w}{q_n(\mb w)}}{}^{2m}} \mb I \nonumber \\
\;&\preceq\; \frac{2^m}{\mu^m} \bb E\brac{ \paren{\norm{\overline{\mb x}_k}{}^2 + \frac{x^2_k\paren{n}\norm{\mb w}{}^2}{q^2_n(\mb w)}}^{m} } \mb I \nonumber \\
\;&\preceq\;\frac{2^m}{\mu^m} \bb E\brac{ \norm{\mb x_k}{}^{2m} } \mb I
\;\preceq\; \frac{2^m}{\mu^m}\bb E_{Z \sim \chi^2\paren{n}}\brac{Z^m}\mb I,
\end{align*}
where we have used the fact that $\norm{\mb w}{}^2/q_n^2(\mb w) = \norm{\mb w}{}^2/(1-\norm{\mb w}{}^2)\leq 1 $ for $\norm{\mb w}{2}\leq \frac{1}{4}$ and Lemma~\ref{lem:U-moments-bound} to obtain the last line. By Lemma \ref{lem:chi_sq_moment}, we obtain 
\begin{align*}
\mb 0 \preceq \bb E\brac{\mb W_k^m} \;\preceq\; \paren{\frac{2}{\mu}}^m \frac{m!}{2} \paren{2n}^m \mb I \;=\; \frac{m!}{2}\paren{\frac{4n}{\mu}}^{m} \mb I. 
\end{align*}
Taking $R_W = \frac{4n}{\mu}$ and $\mb \sigma_W^2 =\frac{16n^2}{\mu^2} \ge \expect{\mb W_k^2}$, and letting $\mb S_{W} \doteq \frac{1}{p} \sum_{k=1}^p \mb W_k$, by Lemma~\ref{lem:mc_bernstein_matrix}, we obtain 
\begin{align*}
\bb P\brac{\norm{\mb S_W - \bb E\brac{\mb S_W}}{}\geq \frac{t}{2}}\;&\leq\; 2n\exp\paren{-\frac{p\mu^2t^2}{128n^2+16\mu n t}}.
\end{align*}
Similarly, for $\mb V_k$, we have
\begin{align*}
\bb E\brac{\mb V_k^m} \;&\preceq\; \paren{\frac{1}{q_n(\mb w)} + \frac{\norm{\mb w}{}^2}{q_n^3(\mb w)}}^m\bb E\brac{ \abs{x_k\paren{n}}^m }\mb I \nonumber \\
\;&\preceq\;  \paren{8n\sqrt{n}}^m \paren{m-1}!! \mb I \nonumber \\
\;&\preceq\; \frac{m!}{2}\paren{8n\sqrt{n}}^{m} \mb I,  
\end{align*}
where we have used the fact $q_n\paren{\mb w} \ge \frac{1}{2\sqrt{n}}$ to simplify the result. Similar argument also shows $-\bb E\brac{\mb V_k^m} \preceq m!\paren{8n\sqrt{n}}^m \mb I /2$. Taking $R_V = 8 n\sqrt{n}$ and $\mb \sigma_V^2 = 64n^3$, and letting $\mb S_V \doteq \frac{1}{p} \sum_{k=1}^p \mb V_k$, again by Lemma \ref{lem:mc_bernstein_matrix}, we obtain 
\begin{align*}
\bb P\brac{\norm{\mb S_V - \bb E\brac{\mb S_V}}{}\geq \frac{t}{2}}\;&\leq\; 2n\exp\paren{- \frac{pt^2}{512n^3+32n\sqrt{n}t}}.
\end{align*}
Combining the above results, we obtain 
\begin{align*}
\bb P\brac{\norm{\frac{1}{p}\sum_{k=1}^p \mb Z_k - \expect{\mb Z_k} }{}\geq t} \;&=\; \bb P\brac{\norm{\mb S_W -\bb E\brac{\mb S_W}+\mb S_V -\bb E\brac{\mb S_V} }{}\geq t} \nonumber \\
\;&\leq \; \bb P\brac{\norm{\mb S_W -\bb E\brac{\mb S_W} }{}\geq \frac{t}{2}} + \bb P\brac{\norm{\mb S_V -\bb E\brac{\mb S_V} }{}\geq \frac{t}{2}} \nonumber \\
\;&\leq \; 2n\exp\paren{-\frac{p\mu^2t^2}{128n^2+16\mu n t}}+ 2n\exp\paren{- \frac{pt^2}{512n^3+32n\sqrt{n}t}} \nonumber \\
\;&\leq \; 4n\exp\paren{-\frac{p\mu^2t^2}{512n^2+32\mu n t}},
\end{align*}
where we have simplified the final result based on the fact that $\mu \leq \frac{1}{\sqrt{n}}$.
\end{proof}

\subsubsection{Proof of Lipschitz Results} \label{proof:cn_lips}
To avoid clutter of notations, in this subsection we write $\mb X$ to mean $\mb X_0$; similarly $\mb x_k$ for $\paren{\mb x_0}_k$, the $k$-th column of $\mb X_0$. The function $g\paren{\mb w}$ means $g\paren{\mb w; \mb X_0}$. We need the following lemmas to prove the Lipschitz results.

\begin{lemma}\label{lem:composition} Suppose that $\varphi_1 : U \to V$ is an $L$-Lipschitz map from a normed space $U$ to a normed space $V$, and that $\varphi_2 : V \to W$ is an $L'$-Lipschitz map from $V$ to a normed space $W$. Then the composition $\varphi_2 \circ \varphi_1 : U \to W$ is $LL'$-Lipschitz. 
\end{lemma}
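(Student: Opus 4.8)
The plan is to unwind the definition of the Lipschitz property directly; no machinery beyond chaining the two defining inequalities is needed. First I would fix two arbitrary points $u_1, u_2 \in U$ and pass to their images $\varphi_1(u_1), \varphi_1(u_2) \in V$ under the first map.

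Next I would apply the $L'$-Lipschitz bound for $\varphi_2$ to the pair $\varphi_1(u_1), \varphi_1(u_2)$, and then the $L$-Lipschitz bound for $\varphi_1$ to control the resulting $V$-norm term:
\begin{align*}
\norm{\varphi_2(\varphi_1(u_1)) - \varphi_2(\varphi_1(u_2))}{W} \;\le\; L' \norm{\varphi_1(u_1) - \varphi_1(u_2)}{V} \;\le\; L' L \norm{u_1 - u_2}{U}.
\end{align*}
Since $u_1, u_2 \in U$ were arbitrary, this shows $\varphi_2 \circ \varphi_1$ is $LL'$-Lipschitz, which is the claim.

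There is essentially no obstacle here: the statement is a textbook consequence of composing the two bounds, and the only point requiring a moment of care is that the intermediate quantity is measured in the norm of $V$, which is exactly what the $L$-Lipschitz estimate for $\varphi_1$ supplies, so the two inequalities compose cleanly. This lemma will be used purely as a bookkeeping device downstream, for assembling Lipschitz constants of composite maps of the form $\mb w \mapsto \mb q(\mb w) \mapsto (\text{gradient/Hessian expressions})$ that enter Propositions~\ref{prop:lip-hessian-negative}--\ref{prop:lip-hessian-zero}.
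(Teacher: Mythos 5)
Your proof is correct and is exactly the standard chaining argument; the paper states this lemma without proof precisely because the two Lipschitz bounds compose in this immediate way. Nothing further is needed.
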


\begin{lemma}\label{lem:Lip-combined} Fix any $\mc D \subseteq \reals^{n-1}$. Let $g_1, g_2 : \mc D \to \reals$, and assume that $g_1$ is $L_1$-Lipschitz, and $g_2$ is $L_2$-Lipschitz, and that $g_1$ and $g_2$ are bounded over $\mc D$, i.e., $\abs{g_1(\mb x)}\le M_1$ and $\abs{g_2(\mb x)} \le M_2$ for all $x\in \mc D$ with some constants $M_1>0$ and $M_2>0$. Then the function $h(\mb x) = g_1(\mb x) g_2(\mb x)$ is $L$-Lipschitz, with 
\begin{align*}
L \;=\; M_1 L_2 + M_2 L_1. 
\end{align*} 
\end{lemma}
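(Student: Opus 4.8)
The plan is to prove the Lipschitz bound by the standard add-and-subtract trick applied to the product $h = g_1 g_2$, using the pointwise bounds $\abs{g_i}\le M_i$ to control one factor while the Lipschitz estimates control the other. Concretely, for arbitrary $\mb x, \mb y \in \mc D$ I would write
\begin{align*}
h(\mb x) - h(\mb y) = g_1(\mb x) g_2(\mb x) - g_1(\mb y) g_2(\mb y) = g_1(\mb x)\bigl(g_2(\mb x) - g_2(\mb y)\bigr) + g_2(\mb y)\bigl(g_1(\mb x) - g_1(\mb y)\bigr),
\end{align*}
and then take absolute values and apply the triangle inequality.

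Next I would bound the two summands. Using $\abs{g_1(\mb x)} \le M_1$ together with $\abs{g_2(\mb x) - g_2(\mb y)} \le L_2 \norm{\mb x - \mb y}{}$ gives $\abs{g_1(\mb x)\bigl(g_2(\mb x) - g_2(\mb y)\bigr)} \le M_1 L_2 \norm{\mb x - \mb y}{}$; symmetrically, $\abs{g_2(\mb y)\bigl(g_1(\mb x) - g_1(\mb y)\bigr)} \le M_2 L_1 \norm{\mb x - \mb y}{}$. Adding these yields $\abs{h(\mb x) - h(\mb y)} \le (M_1 L_2 + M_2 L_1)\norm{\mb x - \mb y}{}$, which is exactly the claim with $L = M_1 L_2 + M_2 L_1$. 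Since $\mb x$ and $\mb y$ were arbitrary, $h$ is $L$-Lipschitz on $\mc D$.

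There is essentially no obstacle: this is a routine fact about products of bounded Lipschitz functions, and the only point needing a moment's care is the choice of which factor to freeze in each summand of the decomposition (keeping $g_1(\mb x)$ in the first term and $g_2(\mb y)$ in the second) so that the sup-bounds $M_1, M_2$ and the Lipschitz constants $L_1, L_2$ pair up correctly; any convex-combination variant of the decomposition gives the same constant, so I would present only the one above for brevity. This lemma will then be used, together with Lemma~\ref{lem:composition}, throughout the proofs of Propositions~\ref{prop:lip-hessian-negative}--\ref{prop:lip-hessian-zero}, where the quantities of interest (e.g.\ $\mb w^* \nabla^2 g(\mb w)\mb w/\norm{\mb w}{}^2$) are assembled from products and compositions of bounded Lipschitz pieces such as $\tanh(\cdot/\mu)$, the map $\mb w \mapsto q_n(\mb w)$, and linear functionals of the columns of $\mb X_0$.
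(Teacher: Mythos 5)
Your proof is correct and is exactly the standard add-and-subtract decomposition that this lemma rests on; the paper itself states the lemma without proof, treating this argument as routine, so there is nothing to add or compare.
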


\begin{lemma}\label{lem:lip-h-mu}
For every $\mb w, \mb w^\prime \in \Gamma$, and every fixed $\mb x$, we have
\begin{align*}
\abs{\dot{h}_\mu\paren{\mb q(\mb w)^*\mb x} -\dot{h}_\mu\paren{\mb q(\mb w^\prime)^*\mb x} }\;&\le\; \frac{2\sqrt{n}}{\mu}\norm{\mb x}{} \norm{\mb w-\mb w^\prime}{}, \\
\abs{\ddot{h}_\mu\paren{\mb q(\mb w)^*\mb x} -\ddot{h}_\mu\paren{\mb q(\mb w^\prime)^*\mb x} }\;&\le\; \frac{4\sqrt{n}}{\mu^2}\norm{\mb x}{} \norm{\mb w-\mb w^\prime}{}.
\end{align*}
\end{lemma}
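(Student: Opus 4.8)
The plan is to realize each of the two maps as a composition of a one-dimensional Lipschitz function with the affine map $\mb w \mapsto \mb q(\mb w)^*\mb x$, to bound the Lipschitz constant of each factor separately, and then to invoke Lemma~\ref{lem:composition}. Recall from Lemma~\ref{lem:derivatives_basic_surrogate} that $\dot h_\mu(z) = \tanh(z/\mu)$ and $\ddot h_\mu(z) = \tfrac1\mu\paren{1-\tanh^2(z/\mu)}$.

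\textbf{Step 1 (the scalar factors).} First I would record that $\dot h_\mu$ is $\tfrac1\mu$-Lipschitz on $\reals$, since $\abs{\tfrac{d}{dz}\dot h_\mu(z)} = \tfrac1\mu\paren{1-\tanh^2(z/\mu)} \le \tfrac1\mu$, and that $\ddot h_\mu$ is $\tfrac{2}{\mu^2}$-Lipschitz on $\reals$, since $\abs{\tfrac{d}{dz}\ddot h_\mu(z)} = \tfrac{2}{\mu^2}\abs{\tanh(z/\mu)}\paren{1-\tanh^2(z/\mu)} \le \tfrac{2}{\mu^2}$; both estimates use only $\abs{\tanh}\le 1$.

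\textbf{Step 2 (the affine factor).} Next I would bound the Lipschitz constant of $\mb w \mapsto \mb q(\mb w)^*\mb x$ over $\Gamma$. Writing $\mb q(\mb w) = \paren{\mb w, q_n(\mb w)}$ with $q_n(\mb w) = \sqrt{1-\norm{\mb w}{}^2}$, its Jacobian is $J(\mb w) = \brac{\begin{smallmatrix}\mb I_{n-1}\\ -\mb w^*/q_n(\mb w)\end{smallmatrix}}$, so $J(\mb w)^*J(\mb w) = \mb I_{n-1} + \mb w\mb w^*/q_n^2(\mb w)$ has operator norm $1+\norm{\mb w}{}^2/q_n^2(\mb w) = 1/q_n^2(\mb w)$, whence $\norm{J(\mb w)}{} = 1/q_n(\mb w) \le 2\sqrt n$ on $\Gamma$ by the bound $q_n(\mb w)\ge \tfrac1{2\sqrt n}$ noted just below the definition of $\Gamma$. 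Since $\Gamma$ is an open ball, hence convex, integrating $J$ along the segment $[\mb w,\mb w']\subseteq \Gamma$ gives $\norm{\mb q(\mb w)-\mb q(\mb w')}{}\le 2\sqrt n\,\norm{\mb w-\mb w'}{}$ (this is also the ``$2\sqrt n$-Lipschitz'' property of $\mb q(\cdot)$ invoked in Corollary~\ref{cor:geometry_orth}), and Cauchy--Schwarz then yields $\abs{\mb q(\mb w)^*\mb x - \mb q(\mb w')^*\mb x}\le 2\sqrt n\,\norm{\mb x}{}\,\norm{\mb w-\mb w'}{}$.

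\textbf{Step 3 (compose).} Finally, applying Lemma~\ref{lem:composition} with Lipschitz constants $\tfrac1\mu$ and $2\sqrt n\,\norm{\mb x}{}$ produces the first inequality, and with constants $\tfrac{2}{\mu^2}$ and $2\sqrt n\,\norm{\mb x}{}$ produces the second. I do not expect a genuine obstacle in this argument; the only point needing attention is that the segment from $\mb w$ to $\mb w'$ stays in $\Gamma$ so that the Jacobian estimate can be integrated along it, which is immediate from convexity of the ball $\Gamma$. Everything else reduces to the elementary derivative bounds in Step 1.
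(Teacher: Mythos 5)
Your proposal is correct and takes essentially the same route as the paper: both arguments establish that $\mb w \mapsto \mb q(\mb w)$ is $2\sqrt{n}$-Lipschitz on $\Gamma$ (using $q_n(\mb w) \ge \tfrac{1}{2\sqrt{n}}$), then compose with the $\norm{\mb x}{}$-Lipschitz map $\mb q \mapsto \mb q^*\mb x$ and the $\tfrac{1}{\mu}$- and $\tfrac{2}{\mu^2}$-Lipschitz bounds on $\dot h_\mu$ and $\ddot h_\mu$ from Lemma~\ref{lem:derivatives_basic_surrogate}, via the composition rule of Lemma~\ref{lem:composition}. The only cosmetic difference is that you obtain the $2\sqrt{n}$ constant by bounding the Jacobian norm $1/q_n(\mb w)$ and integrating along the segment (valid by convexity of $\Gamma$), whereas the paper bounds $\abs{q_n(\mb w)-q_n(\mb w')}$ by a direct algebraic estimate; the constants and conclusions are identical.
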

\begin{proof}
We have 
\begin{align*}
\abs{q_n\paren{\mb w} - q_n\paren{\mb w'}} 
& =  \abs{\sqrt{1-\norm{\mb w}{}^2} - \sqrt{1-\norm{\mb w'}{}^2}}
  = \frac{\norm{\mb w + \mb w'}{} \norm{\mb w - \mb w'}{}}{\sqrt{1-\norm{\mb w}{}^2} + \sqrt{1-\norm{\mb w'}{}^2}} \\
& \le \frac{\max\paren{\norm{\mb w}{}, \norm{\mb w'}{}}}{\min\paren{q_n\paren{\mb w}, q_n\paren{\mb w'}}} \norm{\mb w - \mb w'}{}. 
\end{align*}
Hence it holds that 
\begin{align*}
\norm{\mb q\paren{\mb w} - \mb q\paren{\mb w'}}{}^2 
& = \norm{\mb w - \mb w'}{}^2 + \abs{q_n\paren{\mb w} - q_n\paren{\mb w'}}^2 \le \paren{1+\frac{\max\paren{\norm{\mb w}{}^2, \norm{\mb w'}{}^2}}{\min\paren{q_n^2\paren{\mb w}, q_n^2\paren{\mb w'}}}} \norm{\mb w - \mb w'}{}^2 \\
& = \frac{1}{\min\paren{q_n^2\paren{\mb w}, q_n^2\paren{\mb w'}}} \norm{\mb w - \mb w'}{}^2 \le 4n \norm{\mb w - \mb w'}{}^2, 
\end{align*}
where we have used the fact $q_n\paren{\mb w} \ge \frac{1}{2\sqrt{n}}$ to get the final result. Hence the mapping $\mb w \mapsto \mb q(\mb w)$ is $2\sqrt{n}$-Lipschitz over $\Gamma$. Moreover it is easy to see $\mb q \mapsto \mb q^*\mb x$ is $\norm{\mb x}{2}$-Lipschitz. By Lemma \ref{lem:derivatives_basic_surrogate} and the composition rule in Lemma \ref{lem:composition}, we obtain the desired claims.
\end{proof}

\begin{lemma}\label{lem:lip-g} 
For any fixed $\mb x$, consider the function
\begin{align*}
t_{\mb x}(\mb w) \doteq \frac{\mb w^*\overline{\mb x}}{\norm{\mb w}{}} - \frac{x_n}{q_n(\mb w)}\norm{\mb w}{}
\end{align*}
defined over $\mb w \in \Gamma$. Then, for all $\mb w, \mb w'$ in $\Gamma$ such that $\norm{\mb w}{} \ge r$ and $\norm{\mb w'}{} \ge r$ for some constant $r \in \paren{0, 1}$, it holds that 
\begin{align*}
\abs{ t_{\mb x}(\mb w)- t_{\mb x}(\mb w^\prime)}\;&\le \; 2\paren{\frac{\norm{\mb x}{}}{r}+ 4n^{3/2} \norm{\mb x}{\infty}} \norm{\mb w-\mb w^\prime}{} , \\
 \abs{ t_{\mb x}(\mb w) }\; &\le \; 2\sqrt{n}\norm{\mb x}{}, \\
\abs{ t^2_{\mb x}(\mb w)- t^2_{\mb x}(\mb w^\prime)}\;&\le \;8\sqrt{n} \norm{\mb x}{} \paren{\frac{\norm{\mb x}{}}{r}+ 4n^{3/2} \norm{\mb x}{\infty}} \norm{\mb w-\mb w^\prime}{},  \\
 \abs{ t^2_{\mb x}(\mb w) }\; &\le \;  4n\norm{\mb x}{}^2. 
\end{align*}
\end{lemma}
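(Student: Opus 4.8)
The plan is to rewrite $t_{\mb x}$ as the pairing of a unit vector with the gradient-type vector already appearing in~\eqref{eqn:lse-gradient}, and then to extract all four estimates from elementary bounds on the two factors, tracking every $1/q_n(\mb w)$ through the uniform bound $q_n(\mb w)\ge 1/(2\sqrt n)$ on $\Gamma$. Concretely, first I would record the algebraic identity
\[
t_{\mb x}(\mb w) \;=\; \innerprod{\hat{\mb w}}{\mb g(\mb w)},
\qquad \hat{\mb w}\doteq\frac{\mb w}{\norm{\mb w}{}},\quad \mb g(\mb w)\doteq\overline{\mb x}-\frac{x_n}{q_n(\mb w)}\,\mb w,
\]
valid since $\mb w^*\mb w/\norm{\mb w}{}=\norm{\mb w}{}$. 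Noting $\mb g(\mb w)=J(\mb w)^*\mb x$ for the Jacobian $J(\mb w)=[\,\mb I_{n-1};\ -\mb w^*/q_n(\mb w)\,]$ of the reparametrization $\mb w\mapsto\mb q(\mb w)$, a one-line computation gives $J(\mb w)^*J(\mb w)=\mb I+\mb w\mb w^*/q_n^2(\mb w)$, whose largest eigenvalue is $1/q_n^2(\mb w)$, so $\norm{J(\mb w)}{}=1/q_n(\mb w)\le 2\sqrt n$ on $\Gamma$ and hence $\norm{\mb g(\mb w)}{}\le 2\sqrt n\,\norm{\mb x}{}$. Cauchy--Schwarz then yields $\abs{t_{\mb x}(\mb w)}\le\norm{\mb g(\mb w)}{}\le 2\sqrt n\,\norm{\mb x}{}$, and squaring gives $\abs{t_{\mb x}^2(\mb w)}\le 4n\,\norm{\mb x}{}^2$; both hold on all of $\Gamma$, without the lower bound on $\norm{\mb w}{}$.

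For the Lipschitz estimate on $\Gamma\cap\set{\norm{\mb w}{}\ge r}$ I would bound the two factors separately and combine. The map $\mb w\mapsto\hat{\mb w}$ has derivative $\tfrac{1}{\norm{\mb w}{}}(\mb I-\hat{\mb w}\hat{\mb w}^*)$, of operator norm $\le 1/r$ on $\set{\norm{\mb w}{}\ge r}$, so $\hat{\mb w}$ is $(1/r)$-Lipschitz with $\norm{\hat{\mb w}}{}=1$. For $\mb g(\mb w)=\overline{\mb x}-x_n\,\mb w/q_n(\mb w)$, the derivative of $\mb w\mapsto\mb w/q_n(\mb w)$ is $\tfrac{1}{q_n(\mb w)}\mb I+\tfrac{\mb w\mb w^*}{q_n^3(\mb w)}$, a PSD matrix of norm $q_n^{-1}(\mb w)\paren{1+\norm{\mb w}{}^2/q_n^2(\mb w)}=q_n^{-3}(\mb w)\le 8n^{3/2}$ on $\Gamma$, so $\mb g$ is $8n^{3/2}\abs{x_n}\le 8n^{3/2}\norm{\mb x}{\infty}$-Lipschitz. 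Splitting $t_{\mb x}(\mb w)-t_{\mb x}(\mb w')=\innerprod{\hat{\mb w}(\mb w)-\hat{\mb w}(\mb w')}{\mb g(\mb w)}+\innerprod{\hat{\mb w}(\mb w')}{\mb g(\mb w)-\mb g(\mb w')}$ and inserting the magnitude bounds (the bilinear analogue of the product rule in Lemma~\ref{lem:Lip-combined}, with ``magnitudes'' $1$ and $2\sqrt n\norm{\mb x}{}$ and Lipschitz constants $1/r$ and $8n^{3/2}\norm{\mb x}{\infty}$) gives a Lipschitz constant $\le 2\sqrt n\norm{\mb x}{}/r+8n^{3/2}\norm{\mb x}{\infty}$ for $t_{\mb x}$, which is absorbed into the stated $2\paren{\norm{\mb x}{}/r+4n^{3/2}\norm{\mb x}{\infty}}$. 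Finally, writing $t_{\mb x}^2=t_{\mb x}\cdot t_{\mb x}$ and applying Lemma~\ref{lem:Lip-combined} with $M_1=M_2=2\sqrt n\norm{\mb x}{}$ and $L_1=L_2$ the Lipschitz constant just obtained produces the last bound, $8\sqrt n\norm{\mb x}{}\paren{\norm{\mb x}{}/r+4n^{3/2}\norm{\mb x}{\infty}}$.

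I do not anticipate a genuine obstacle: everything reduces to Cauchy--Schwarz, two operator-norm computations for matrices of the form (scalar)$\cdot\mb I$ plus a rank-one term, and constant-tracking through Lemmas~\ref{lem:composition}--\ref{lem:Lip-combined}. The only points requiring care are (i) setting up the identity $t_{\mb x}(\mb w)=\innerprod{\hat{\mb w}}{\mb g(\mb w)}$ so that the $1/\norm{\mb w}{}$ singularity is visibly confined to the unit-vector factor $\hat{\mb w}$ --- this is why the Lipschitz claims require $\norm{\mb w}{}\ge r$ while the magnitude claims do not --- and (ii) remembering that each residual $1/q_n(\mb w)$ is uniformly controlled on $\Gamma$ by $q_n(\mb w)\ge 1/(2\sqrt n)$, which is what turns the geometric factors into explicit powers of $n$.
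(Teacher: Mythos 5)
Your magnitude bounds are fine and essentially reproduce the paper's computation ($\norm{J(\mb w)}{}=1/q_n(\mb w)\le 2\sqrt n$, then Cauchy--Schwarz), but the Lipschitz step has a genuine quantitative gap. In your splitting $t_{\mb x}(\mb w)-t_{\mb x}(\mb w')=\innerprod{\hat{\mb w}-\hat{\mb w}'}{\mb g(\mb w)}+\innerprod{\hat{\mb w}'}{\mb g(\mb w)-\mb g(\mb w')}$, the unit-vector increment is paired with the full vector $\mb g(\mb w)$, whose norm you can only control by $2\sqrt n\,\norm{\mb x}{}$; this makes the first term $\tfrac{2\sqrt n\,\norm{\mb x}{}}{r}\norm{\mb w-\mb w'}{}$, and the resulting constant $\tfrac{2\sqrt n\,\norm{\mb x}{}}{r}+8n^{3/2}\norm{\mb x}{\infty}$ is \emph{not} absorbed into the stated $2\paren{\norm{\mb x}{}/r+4n^{3/2}\norm{\mb x}{\infty}}=\tfrac{2\norm{\mb x}{}}{r}+8n^{3/2}\norm{\mb x}{\infty}$: it exceeds it by a factor $\sqrt n$ in the $1/r$ term, and the same inflation propagates to your bound on $t_{\mb x}^2$. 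The paper avoids this by splitting along the two terms of $t_{\mb x}$ itself, so that $\hat{\mb w}-\hat{\mb w}'$ only ever multiplies $\overline{\mb x}$ (norm at most $\norm{\mb x}{}$), giving $\tfrac{2\norm{\mb x}{}}{r}$, while the remaining scalar piece $x_n\paren{\norm{\mb w}{}/q_n(\mb w)-\norm{\mb w'}{}/q_n(\mb w')}$ is bounded via the mean value theorem for $t\mapsto t/\sqrt{1-t^2}$ (sup of the derivative $=q_n^{-3}\le 8n^{3/2}$), giving $8n^{3/2}\norm{\mb x}{\infty}$. Your argument is easily repaired by adopting that splitting; as written, however, it establishes a strictly weaker inequality than the lemma's statement (the weaker constant would in fact still be swallowed by the dominant $\tfrac{4n^2}{\mu}\norm{\mb X_0}{\infty}^2$ term downstream in Proposition~\ref{prop:lip-gradient}, but that is not what the lemma claims).

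A secondary, fixable point: you justify that $\mb w\mapsto\mb w/\norm{\mb w}{}$ is $(1/r)$-Lipschitz on $\Gamma\cap\set{\norm{\mb w}{}\ge r}$ from the derivative bound $\norm{\tfrac{1}{\norm{\mb w}{}}\paren{\mb I-\hat{\mb w}\hat{\mb w}^*}}{}\le 1/r$, but this set is an annulus, not convex, and the segment joining $\mb w$ to $\mb w'$ may pass through $\set{\norm{\cdot}{}<r}$, so the pointwise derivative bound does not directly transfer to a Lipschitz constant on the set. The conclusion is nevertheless true, e.g.\ via the direct inequality $\norm{\mb a/\norm{\mb a}{}-\mb b/\norm{\mb b}{}}{}\le 2\norm{\mb a-\mb b}{}/\paren{\norm{\mb a}{}+\norm{\mb b}{}}$, or via the cruder elementwise estimate the paper uses (which yields $2/r$ and is already reflected in the stated constant); it should be argued by such a direct bound rather than by integrating the derivative. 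The analogous derivative argument for $\mb w\mapsto\mb w/q_n(\mb w)$ is fine, since there the relevant domain is the convex ball $\Gamma$.
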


\begin{proof}
First of all, we have 
\begin{align*}
\abs{ t_{\mb x}(\mb w) }\; =\; \brac{\frac{\mb w^*}{\norm{\mb w}{}},-\frac{\norm{\mb w}{}}{q_n(\mb w)}}\mb x \;\le\; \norm{\mb x}{}\paren{1+ \frac{\norm{\mb w}{}^2}{q_n^2(\mb w)}}^{1/2} =\frac{\norm{\mb x}{}}{\abs{q_n(\mb w)}}\le 2\sqrt{n}\norm{\mb x}{}, 
\end{align*}
where we have used the assumption that $q_n\paren{\mb w} \ge \frac{1}{2\sqrt{n}}$ to simplify the final result. The claim about $\abs{t_{\mb x}^2\paren{\mb w}}$ follows immediately. Now 
\begin{align*}
\abs{ t_{\mb x}(\mb w)- t_{\mb x}(\mb w^\prime)} \le \abs{\paren{\frac{\mb w}{\norm{\mb w}{}} -\frac{\mb w^\prime}{\norm{\mb w^\prime}{}}}^*\overline{\mb x}  } + \abs{x_n}\abs{ \frac{\norm{\mb w}{}}{q_n(\mb w)}- \frac{\norm{\mb w^\prime}{}}{q_n(\mb w^\prime)} }. 
\end{align*}
Moreover we have 
\begin{align*}
\abs{\paren{\frac{\mb w}{\norm{\mb w}{}} -\frac{\mb w^\prime}{\norm{\mb w^\prime}{}}}^*\overline{\mb x}  } 
& \le \norm{\overline{\mb x}}{}\norm{\frac{\mb w}{\norm{\mb w}{}} -\frac{\mb w^\prime}{\norm{\mb w^\prime}{}}}{} \le \norm{\mb x}{} \frac{\norm{\mb w - \mb w'}{} \norm{\mb w'}{} + \norm{\mb w'}{} \abs{\norm{\mb w}{} - \norm{\mb w'}{}}  }{\norm{\mb w}{} \norm{\mb w'}{}} \\
& \le \frac{2\norm{\mb x}{}}{r} \norm{\mb w - \mb w'}{},  
\end{align*}
where we have used the assumption that $\norm{\mb w}{} \ge r$ to simplify the result. Noticing that $t \mapsto t/\sqrt{1-t^2}$ is continuous over $\brac{a, b}$ and differentiable over $\paren{a, b}$ for any $0 < a < b < 1$, by mean value theorem, 
\begin{align*}
\abs{ \frac{\norm{\mb w}{}}{q_n(\mb w)}- \frac{\norm{\mb w^\prime}{}}{q_n(\mb w^\prime)} } \;\le\; \sup_{\mb w\; \in\; \Gamma } \frac{1}{\paren{1-\norm{\mb w}{}^2}^{3/2}} \norm{\mb w-\mb w^\prime}{} \; \le \; 8n^{3/2} \norm{\mb w-\mb w^\prime}{}, 
\end{align*}
where we have again used the assumption that $q_n\paren{\mb w} \ge \frac{1}{2\sqrt{n}}$ to simplify the last result. Collecting the above estimates, we obtain 
\begin{align*}
\abs{ t_{\mb x}(\mb w)- t_{\mb x}(\mb w^\prime)} \le \paren{2\frac{\norm{\mb x}{}}{r}+ 8n^{3/2} \norm{\mb x}{\infty}} \norm{\mb w-\mb w^\prime}{}, 
\end{align*}
as desired. For the last one, we have 
\begin{align*}
\abs{ t^2_{\mb x}(\mb w)- t^2_{\mb x}(\mb w^\prime)} \;&=\; \abs{ t_{\mb x}(\mb w)- t_{\mb x}(\mb w^\prime)} \abs{ t_{\mb x}(\mb w)+t_{\mb x}(\mb w^\prime)} \nonumber \\
\;&\le \; 2\sup_{\mb s\; \in\; \Gamma} \abs{t_{\mb x}(\mb s)}\abs{ t_{\mb x}(\mb w)- t_{\mb x}(\mb w^\prime)}, 
\end{align*}
leading to the claimed result once we substitute estimates of the involved quantities. 
\end{proof}

\begin{lemma}\label{lem:lp-Phi}
For any fixed $\mb x$, consider the function 
\begin{align*}
\mb \Phi_{\mb x}(\mb w) = \frac{x_n}{q_n(\mb w)}\mb I + \frac{x_n}{q_n^3(\mb w)}\mb w\mb w^*
\end{align*}
defined over $\mb w \in \Gamma$. Then, for all $\mb w, \mb w' \in \Gamma$ such that $\norm{\mb w}{} < r$ and $\norm{\mb w'}{} < r$ with some constant $r \in \paren{0, \frac{1}{2}}$, it holds that 
\begin{align*}
 \norm{\mb \Phi_{\mb x}(\mb w) }{} \;&\le\;2\norm{\mb x}{\infty},  \\
\norm{\mb \Phi_{\mb x}(\mb w) - \mb \Phi_{\mb x}(\mb w^\prime)}{}\;&\le\;  4\norm{\mb x}{\infty}\norm{\mb w - \mb w^\prime}{}.  
\end{align*}
\end{lemma}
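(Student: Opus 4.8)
The plan is to exploit the fact that $\mb \Phi_{\mb x}(\mb w)$ factors as $x_n \, \mb \Psi(\mb w)$, where $\mb \Psi(\mb w) \doteq q_n^{-1}(\mb w) \mb I + q_n^{-3}(\mb w) \mb w \mb w^*$ is a positive semidefinite matrix depending only on $\mb w$; this reduces both claims to statements about $\mb \Psi$, after which the scalar factor $\abs{x_n} \le \norm{\mb x}{\infty}$ is pulled out at the very end. Throughout I would use the single elementary consequence of $\norm{\mb w}{} < r < \tfrac12$, namely $q_n^2(\mb w) = 1 - \norm{\mb w}{}^2 > \tfrac34$, so that $q_n^{-3}(\mb w) < (3/4)^{-3/2} = \tfrac{8\sqrt 3}{9} < 2$.

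For the magnitude bound, I would note that $\mb \Psi(\mb w) \succeq \mb 0$ and its largest eigenvalue is attained along $\mb w$ and equals $q_n^{-1}(\mb w) + \norm{\mb w}{}^2 q_n^{-3}(\mb w) = \paren{q_n^2(\mb w) + \norm{\mb w}{}^2}/q_n^3(\mb w) = q_n^{-3}(\mb w)$, using $q_n^2(\mb w) + \norm{\mb w}{}^2 = 1$. Hence $\norm{\mb \Phi_{\mb x}(\mb w)}{} = \abs{x_n}\, q_n^{-3}(\mb w) < \tfrac{8\sqrt 3}{9}\abs{x_n} < 2\norm{\mb x}{\infty}$, which is the first claim. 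For the Lipschitz estimate I would bound $\norm{\mb \Psi(\mb w) - \mb \Psi(\mb w')}{}$ by splitting it, via the triangle inequality, into three pieces: the identity part $\paren{q_n^{-1}(\mb w) - q_n^{-1}(\mb w')}\mb I$, the scalar-change part $\paren{q_n^{-3}(\mb w) - q_n^{-3}(\mb w')}\mb w \mb w^*$, and the rank-one change part $q_n^{-3}(\mb w')\paren{\mb w \mb w^* - \mb w' \mb w'^*}$. For the first two, apply the mean value theorem to the univariate maps $\phi_1(s) = (1-s^2)^{-1/2}$ and $\phi_3(s) = (1-s^2)^{-3/2}$ on $[0,r) \subseteq [0,\tfrac12)$, together with $\abs{\,\norm{\mb w}{} - \norm{\mb w'}{}\,} \le \norm{\mb w - \mb w'}{}$, the monotone derivative bounds $\sup_{[0,1/2]}\abs{\phi_1'} = \tfrac{4\sqrt 3}{9}$ and $\sup_{[0,1/2]}\abs{\phi_3'} = \tfrac{16\sqrt 3}{9}$, and $\norm{\mb w}{}^2 < \tfrac14$ to absorb the extra factor in the second piece. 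For the third piece, use $\norm{\mb w \mb w^* - \mb w' \mb w'^*}{} \le \paren{\norm{\mb w}{} + \norm{\mb w'}{}}\norm{\mb w - \mb w'}{} < \norm{\mb w - \mb w'}{}$ and $q_n^{-3}(\mb w') < \tfrac{8\sqrt 3}{9}$. Summing the three constants gives $\tfrac{4\sqrt 3}{9} + \tfrac{4\sqrt 3}{9} + \tfrac{8\sqrt 3}{9} = \tfrac{16\sqrt 3}{9} < 4$, so $\norm{\mb \Psi(\mb w) - \mb \Psi(\mb w')}{} \le 4\norm{\mb w - \mb w'}{}$, and multiplying by $\abs{x_n} \le \norm{\mb x}{\infty}$ yields the second claim.

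There is no real obstacle here; this is an elementary computation. The only points requiring a little care are the bookkeeping of the numerical constants — one must keep the sum of the three Lipschitz contributions below $4$ — and the justification that the mean value theorem is applied to the one-variable functions $\phi_1,\phi_3$ evaluated at $\norm{\mb w}{}$ and $\norm{\mb w'}{}$ rather than to the multivariate map directly; for this, convexity of the ball $\set{\norm{\mb w}{} < r}$ and smoothness of $\phi_1,\phi_3$ on $[0,\tfrac12)$ are all that is needed. (One could alternatively bound the operator norm of the differential of $\mb w \mapsto \mb \Psi(\mb w)$ over the ball and integrate, but the term-by-term split above is cleaner.)
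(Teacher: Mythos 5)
Your proof is correct, and its overall shape matches the paper's: pull out the scalar $\abs{x_n} \le \norm{\mb x}{\infty}$, use $q_n^2(\mb w) + \norm{\mb w}{}^2 = 1$ to collapse the magnitude bound to $q_n^{-3}(\mb w) \le (1-r^2)^{-3/2} \le 2$, and control the matrix difference by the triangle inequality plus mean-value bounds on scalar functions of $\norm{\mb w}{}$. The one genuine difference is how the rank-one term is handled. The paper splits the difference into only two scalar pieces, $\abs{q_n^{-1}(\mb w)-q_n^{-1}(\mb w')}$ and $\abs{\norm{\mb w}{}^2 q_n^{-3}(\mb w) - \norm{\mb w'}{}^2 q_n^{-3}(\mb w')}$, the latter treated by the mean value theorem applied to $t \mapsto t^2/(1-t^2)^{3/2}$; as written, bounding $\norm{q_n^{-3}(\mb w)\,\mb w\mb w^* - q_n^{-3}(\mb w')\,\mb w'(\mb w')^*}{}$ by that scalar difference alone is not justified (take $\norm{\mb w}{} = \norm{\mb w'}{}$ with $\mb w \ne \mb w'$: the scalar difference vanishes while the matrix difference does not). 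Your three-piece decomposition, which isolates $q_n^{-3}(\mb w')\paren{\mb w\mb w^* - \mb w'(\mb w')^*}$ and bounds it by $\paren{\norm{\mb w}{}+\norm{\mb w'}{}}\norm{\mb w - \mb w'}{} \le \norm{\mb w-\mb w'}{}$, supplies exactly the missing directional contribution, and your constant bookkeeping ($4\sqrt 3/9 + 4\sqrt 3/9 + 8\sqrt 3/9 = 16\sqrt 3/9 < 4$) closes the bound with room to spare. So your route is slightly different from, and at this one step more careful than, the paper's; your eigenvalue observation for the magnitude bound is an equivalent, equally clean alternative to the paper's direct triangle-inequality computation.
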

\begin{proof}
Simple calculation shows 
\begin{align*}
\norm{\mb \Phi_{\mb x}(\mb w) }{} \le \norm{\mb x}{\infty} \paren{ \frac{1}{q_n(\mb w)}+ \frac{\norm{\mb w}{}^2}{q_n^3(\mb w)} } = \frac{\norm{\mb x}{\infty}}{q_n^3(\mb w)}\le \frac{\norm{\mb x}{\infty}}{(1-r^2)^{3/2}} \le 2\norm{\mb x}{\infty}. 
\end{align*}
For the second one, we have 
\begin{align*}
\norm{\mb \Phi_{\mb x}(\mb w)-\mb \Phi_{\mb x}(\mb w^\prime)}{}
&\le \norm{\mb x}{\infty} \norm{ \frac{1}{q_n(\mb w)}\mb I + \frac{1}{q_n^3(\mb w)}\mb w\mb w^* -  \frac{1}{q_n(\mb w^\prime)}\mb I - \frac{1}{q_n^3(\mb w^\prime)}\mb w^\prime(\mb w^\prime)^*}{} \\
&\le \norm{\mb x}{\infty} \paren{\abs{\frac{1}{q_n\paren{\mb w}} - \frac{1}{q_n\paren{\mb w'}}} + \abs{\frac{\norm{\mb w}{}^2}{q_n^3\paren{\mb w}} - \frac{\norm{\mb w'}{}^2}{q_n^3\paren{\mb w'}}}}. 
\end{align*}
Now 
\begin{align*}
\abs{\frac{1}{q_n\paren{\mb w}} - \frac{1}{q_n\paren{\mb w'}}} = \frac{\abs{q_n\paren{\mb w} - q_n\paren{\mb w'}}}{q_n\paren{\mb w} q_n\paren{\mb w'}} \le \frac{\max\paren{\norm{\mb w}{}, \norm{\mb w'}{}}}{\min\paren{q_n^3\paren{\mb w}, q_n^3\paren{\mb w'}}} \norm{\mb w - \mb w'}{} \le \frac{4}{3\sqrt{3}} \norm{\mb w - \mb w'}{}, 
\end{align*}
where we have applied the estimate for $\abs{q_n\paren{\mb w} - q_n\paren{\mb w'}}$ as established in Lemma~\ref{lem:lip-h-mu} and also used $\norm{\mb w}{} \le 1/2$ and $\norm{\mb w'}{} \le 1/2$ to simplify the above result. Further noticing $t \mapsto t^2/\paren{1-t^2}^{3/2}$ is differentiable over $t \in \paren{0, 1}$, we apply the mean value theorem and obtain 
\begin{align*}
\abs{\frac{\norm{\mb w}{}^2}{q_n^3\paren{\mb w}} - \frac{\norm{\mb w'}{}^2}{q_n^3\paren{\mb w'}}} \le \sup_{\mb s \in \Gamma, \norm{\mb s}{} \le r < \frac{1}{2} } \frac{\norm{\mb s}{}^3 + 2\norm{\mb s}{}}{\paren{1-\norm{\mb s}{}^2}^{5/2}} \norm{\mb w - \mb w'}{} \le \frac{4}{\sqrt{3}} \norm{\mb w - \mb w'}{}. 
\end{align*}
Combining the above estimates gives the claimed result. 
\end{proof}

\begin{lemma}\label{lem:lp-zeta}
For any fixed $\mb x$, consider the function
\begin{align*}
\mb \zeta_{\mb x}(\mb w) = \overline{\mb x} - \frac{x_n}{q_n(\mb w)}\mb w
\end{align*}
defined over $\mb w \in \Gamma$. Then, for all $\mb w, \mb w' \in \Gamma$ such that $\norm{\mb w}{} \le r$ and $\norm{\mb w'}{} \le r$ for some constant $r \in \paren{0, \frac{1}{2}}$, it holds that 
\begin{align*}
 \norm{\mb \zeta_{\mb x}(\mb w) \mb \zeta_{\mb x}(\mb w)^*  }{} \;&\le\;2n \norm{\mb x}{\infty}^2,  \\
\norm{\mb \zeta_{\mb x}(\mb w) \mb \zeta_{\mb x}(\mb w)^* - \mb \zeta_{\mb x}(\mb w^\prime) \mb \zeta_{\mb x}(\mb w^\prime)^* }{}\;&\le\; 8\sqrt{2}\sqrt{n}  \norm{\mb x}{\infty}^2 \norm{\mb w - \mb w^\prime}{}.
\end{align*}
\end{lemma}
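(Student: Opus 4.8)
The plan is to follow exactly the template of the two preceding lemmas (Lemma~\ref{lem:lp-Phi} and the squaring step in Lemma~\ref{lem:lip-g}), and in fact to reuse Lemma~\ref{lem:lp-Phi} directly for the hard part. First I would establish the pointwise norm bound. Since $\mb \zeta_{\mb x}(\mb w) = \overline{\mb x} - \tfrac{x_n}{q_n(\mb w)}\mb w$, the triangle inequality gives $\norm{\mb \zeta_{\mb x}(\mb w)}{} \le \norm{\overline{\mb x}}{} + \abs{x_n}\tfrac{\norm{\mb w}{}}{q_n(\mb w)}$. For $\norm{\mb w}{} < r < \tfrac12$ one has $q_n(\mb w) = \sqrt{1-\norm{\mb w}{}^2} \ge \tfrac{\sqrt 3}{2}$ and hence $\tfrac{\norm{\mb w}{}}{q_n(\mb w)} \le \tfrac{1/2}{\sqrt 3/2} < 1$, so $\norm{\mb \zeta_{\mb x}(\mb w)}{} \le \norm{\overline{\mb x}}{} + \abs{x_n} \le (\sqrt{n-1}+1)\norm{\mb x}{\infty}$. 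Finally $(\sqrt{n-1}+1)^2 = n + 2\sqrt{n-1} \le 2n$ because $2\sqrt{n-1} \le n$ is equivalent to $(n-2)^2 \ge 0$. This yields $\norm{\mb \zeta_{\mb x}(\mb w)\mb \zeta_{\mb x}(\mb w)^*}{} = \norm{\mb \zeta_{\mb x}(\mb w)}{}^2 \le 2n\norm{\mb x}{\infty}^2$, the first claim.

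For the Lipschitz estimate I would use the standard identity $\mb a\mb a^* - \mb b\mb b^* = \mb a(\mb a-\mb b)^* + (\mb a-\mb b)\mb b^*$, which gives $\norm{\mb \zeta_{\mb x}(\mb w)\mb \zeta_{\mb x}(\mb w)^* - \mb \zeta_{\mb x}(\mb w')\mb \zeta_{\mb x}(\mb w')^*}{} \le \big(\norm{\mb \zeta_{\mb x}(\mb w)}{} + \norm{\mb \zeta_{\mb x}(\mb w')}{}\big)\norm{\mb \zeta_{\mb x}(\mb w) - \mb \zeta_{\mb x}(\mb w')}{} \le 2\sqrt{2n}\,\norm{\mb x}{\infty}\,\norm{\mb \zeta_{\mb x}(\mb w) - \mb \zeta_{\mb x}(\mb w')}{}$, using the pointwise bound just proved. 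It then remains to control $\norm{\mb \zeta_{\mb x}(\mb w) - \mb \zeta_{\mb x}(\mb w')}{}$. Here I observe that $\mb \zeta_{\mb x}(\mb w) - \mb \zeta_{\mb x}(\mb w') = -\big(\tfrac{x_n\mb w}{q_n(\mb w)} - \tfrac{x_n\mb w'}{q_n(\mb w')}\big)$, and a direct differentiation shows that the Jacobian of the map $\mb w \mapsto \tfrac{x_n\mb w}{q_n(\mb w)}$ is precisely $\mb \Phi_{\mb x}(\mb w) = \tfrac{x_n}{q_n(\mb w)}\mb I + \tfrac{x_n}{q_n^3(\mb w)}\mb w\mb w^*$. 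Since the segment $[\mb w,\mb w']$ is convex it stays inside $\{\norm{\cdot}{} \le r\} \subseteq \Gamma$ (as $r < \tfrac12 < \sqrt{\tfrac{4n-1}{4n}}$), so the mean value inequality together with the first bound of Lemma~\ref{lem:lp-Phi} gives $\norm{\mb \zeta_{\mb x}(\mb w) - \mb \zeta_{\mb x}(\mb w')}{} \le \sup_{\norm{\mb s}{}\le r}\norm{\mb \Phi_{\mb x}(\mb s)}{}\,\norm{\mb w - \mb w'}{} \le 2\norm{\mb x}{\infty}\norm{\mb w - \mb w'}{}$.

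Combining the two estimates yields $\norm{\mb \zeta_{\mb x}(\mb w)\mb \zeta_{\mb x}(\mb w)^* - \mb \zeta_{\mb x}(\mb w')\mb \zeta_{\mb x}(\mb w')^*}{} \le 2\sqrt{2n}\,\norm{\mb x}{\infty}\cdot 2\norm{\mb x}{\infty}\,\norm{\mb w - \mb w'}{} = 4\sqrt{2}\sqrt{n}\,\norm{\mb x}{\infty}^2\,\norm{\mb w - \mb w'}{}$, which is at most the claimed $8\sqrt{2}\sqrt{n}\,\norm{\mb x}{\infty}^2\,\norm{\mb w - \mb w'}{}$. I do not expect any genuine obstacle here: the only points needing mild care are getting the constant $\sqrt{2n}$ in the pointwise bound (handled by the $(n-2)^2\ge 0$ trick) and checking that the mean value inequality for the vector-valued map $\mb w \mapsto x_n\mb w/q_n(\mb w)$ is applicable, i.e.\ that its Jacobian coincides with $\mb \Phi_{\mb x}$ and is uniformly bounded on the relevant ball — both of which are immediate from Lemma~\ref{lem:lp-Phi}.
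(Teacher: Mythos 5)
Your proposal is correct, and its skeleton matches the paper's proof: bound $\norm{\mb \zeta_{\mb x}(\mb w)}{}$ pointwise, write the difference of rank-one matrices as $\mb a\mb a^* - \mb b\mb b^* = \mb a(\mb a - \mb b)^* + (\mb a - \mb b)\mb b^*$, and then control $\norm{\mb \zeta_{\mb x}(\mb w) - \mb \zeta_{\mb x}(\mb w')}{}$. The one genuine difference is in that last step: the paper estimates $\norm{\mb w/q_n(\mb w) - \mb w'/q_n(\mb w')}{}$ directly, splitting by the triangle inequality and using the bound on $\abs{q_n(\mb w) - q_n(\mb w')}$ already derived in Lemma~\ref{lem:lip-h-mu}, which yields the Lipschitz constant $4\norm{\mb x}{\infty}$ for $\mb \zeta_{\mb x}$; you instead observe that the Jacobian of $\mb w \mapsto x_n \mb w / q_n(\mb w)$ is exactly $\mb \Phi_{\mb x}(\mb w)$ and invoke the mean value inequality together with the operator-norm bound of Lemma~\ref{lem:lp-Phi}, which is a cleaner reuse of existing machinery and even gives the sharper constant $2\norm{\mb x}{\infty}$, hence $4\sqrt{2}\sqrt{n}$ in place of the claimed $8\sqrt{2}\sqrt{n}$. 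Two points deserve a sentence of care but are not gaps: the mean value inequality applies because the segment $[\mb w, \mb w']$ stays in the ball $\set{\norm{\cdot}{} \le r}$, where the map is $C^1$; and although Lemma~\ref{lem:lp-Phi} is stated for $\norm{\mb w}{} < r$, its proof only uses $\paren{1-r^2}^{-3/2} \le 2$, so the same bound holds on the closed ball, as you need.
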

\begin{proof}
We have $\norm{\mb w}{}^2/q_n^2\paren{\mb w} \le 1/3$ when $\norm{\mb w}{} \le r < 1/2$, hence it holds that 
\begin{align*}
\norm{\mb \zeta_{\mb x}(\mb w) \mb \zeta_{\mb x}(\mb w)^*  }{} \le \norm{\mb \zeta_{\mb x}(\mb w)}{}^2 \le 2\norm{\overline{\mb x}}{}^2 + 2x_n^2\frac{\norm{\mb w}{}}{q_n^2\paren{\mb w}} \le 2n \norm{\mb x}{\infty}^2. 
\end{align*}
For the second, we first estimate  
\begin{align*}
\norm{\mb \zeta(\mb w) - \mb \zeta(\mb w^\prime)}{} 
& = \norm{x_n\paren{\frac{\mb w}{q_n\paren{\mb w}}  - \frac{\mb w'}{q_n\paren{\mb w'}}}}{} \le \norm{\mb x}{\infty} \norm{\frac{\mb w}{q_n\paren{\mb w}}  - \frac{\mb w'}{q_n\paren{\mb w'}}}{} \\
&\le  \norm{\mb x}{\infty} \paren{ \frac{1}{q_n(\mb w)}\norm{\mb w - \mb w^\prime}{}+  \norm{\mb w^\prime}{} \abs{\frac{1}{q_n(\mb w)} - \frac{1}{q_n(\mb w^\prime)}} } \\
&\le  \norm{\mb x}{\infty} \paren{ \frac{1}{q_n(\mb w)}+ \frac{ \norm{\mb w^\prime}{}}{\min\Brac{q_n^3(\mb w),q_n^3(\mb w^\prime)}} }\norm{\mb w - \mb w^\prime}{}  \\
&\le  \norm{\mb x}{\infty} \paren{\frac{2}{\sqrt{3}}+ \frac{4}{3\sqrt{3}}}\norm{\mb w - \mb w^\prime}{} \le 4\norm{\mb x}{\infty}\norm{\mb w - \mb w^\prime}{}. 
\end{align*}
Thus, we have 
\begin{align*}
\norm{\mb \zeta_{\mb x}(\mb w) \mb \zeta_{\mb x}(\mb w)^* - \mb \zeta_{\mb x}(\mb w^\prime) \mb \zeta_{\mb x}(\mb w^\prime)^* }{}\;&\le\; \norm{\mb \zeta(\mb w)}{} \norm{\mb \zeta(\mb w) - \zeta(\mb w^\prime)}{}+ \norm{\mb \zeta(\mb w) - \zeta(\mb w^\prime)}{}\norm{\mb \zeta(\mb w^\prime)}{} \nonumber \\
\;&\le \; 8\sqrt{2}\sqrt{n}  \norm{\mb x}{\infty}^2 \norm{\mb w - \mb w^\prime}{}, 
\end{align*}
as desired.
\end{proof}

Now, we are ready to prove all the Lipschitz propositions.

\begin{proof}[of Proposition \ref{prop:lip-hessian-negative}] \label{proof:cn_lips_curvature}
Let 
\begin{align*}
F_k(\mb w)=\ddot{h}_{\mu}\paren{\mb q(\mb w)^* \mb x_k} t^2_{\mb x_k}(\mb w) + \dot{h}_{\mu}\paren{\mb q(\mb w)^* \mb x_k} \frac{x_k\paren{n}}{q_n^3(\mb w)}. 
\end{align*}
Then, $\frac{1}{\norm{\mb w}{}^2}\mb w^*\nabla^2 g(\mb w) \mb w= \frac{1}{p} \sum_{k=1}^p F_k(\mb w)$. Noticing that $\ddot{h}_{\mu}\paren{\mb q(\mb w)^* \mb x_k}$ is bounded by $1/\mu$ and $\dot{h}_{\mu}\paren{\mb q(\mb w)^* \mb x_k}$ is bounded by $1$, both in magnitude. Applying Lemma \ref{lem:Lip-combined}, Lemma \ref{lem:lip-h-mu} and Lemma \ref{lem:lip-g}, we can see $F_k(\mb w)$ is $\Lconcave^k$-Lipschitz with
\begin{align*}
\Lconcave^k &= 4n \norm{\mb x_k}{}^2 \frac{4\sqrt{n}}{\mu^2} \norm{\mb x_k}{}+ \frac{1}{\mu}8\sqrt{n} \norm{\mb x_k}{} \paren{\frac{\norm{\mb x_k}{}}{\rconcave} + 4n^{3/2} \norm{\mb x_k}{\infty}} \nonumber \\
&+(2\sqrt{n})^3\norm{\mb x_k}{\infty} \frac{2\sqrt{n}}{\mu} \norm{\mb x_k}{}+ \sup_{\rconcave < a < \sqrt{\frac{2n-1}{2n}}} \frac{3}{\paren{1-a^2}^{5/2}} \norm{\mb x_k}{\infty} \nonumber \\
&=\frac{16n^{3/2}}{\mu^2} \norm{\mb x_k}{}^3 + \frac{8\sqrt{n}}{\mu \rconcave} \norm{\mb x_k}{}^2 + \frac{48 n^2}{\mu}\norm{\mb x_k}{} \norm{\mb x_k}{\infty} + 96 n^{5/2} \norm{\mb x_k}{\infty}. 
\end{align*}
Thus, $\frac{1}{\norm{\mb w}{2}}\mb w^*\nabla^2 g(\mb w) \mb w$ is $\Lconcave$-Lipschitz with
\begin{align*}
\Lconcave \le \frac{1}{p}\sum_{k=1}^p \Lconcave^k \le \frac{16n^3}{\mu^2} \norm{\mb X}{\infty}^3 + \frac{8n^{3/2}}{\mu \rconcave} \norm{\mb X}{\infty}^2 + \frac{48 n^{5/2} }{\mu} \norm{\mb X}{\infty}^2 + 96 n^{5/2} \norm{\mb X}{\infty}, 
\end{align*}
as desired. 
\end{proof}

\begin{proof} [of Proposition \ref{prop:lip-gradient} ] \label{proof:cn_lips_gradient}
We have 
\begin{align*}
\norm{\frac{\mb w^*}{\norm{\mb w}{}}\nabla g(\mb w) - \frac{\mb w'^*}{\norm{\mb w'}{}}\nabla g(\mb w^\prime) }{} \le \frac{1}{p}\sum_{k=1}^p \norm{\dot{h}_{\mu}\paren{\mb q(\mb w)^* \mb x_k} t_{\mb x_k}\paren{\mb w}  - \dot{h}_{\mu}\paren{\mb q(\mb w^\prime)^* \mb x_k} t_{\mb x_k}\paren{\mb w'}}{}
\end{align*}
where $\dot{h}_\mu (t) = \tanh(t/\mu)$ is bounded by one in magnitude, and $t_{\mb x_k}(\mb w)$ and $t_{\mb x_k^\prime}(\mb w)$ is defined as in Lemma \ref{lem:lip-g}. By Lemma \ref{lem:Lip-combined}, Lemma \ref{lem:lip-h-mu} and Lemma \ref{lem:lip-g}, we know that $\dot{h}_{\mu}\paren{\mb q(\mb w)^* \mb x_k} t_{\mb x_k}\paren{\mb w}$ is $L_k$-Lipschitz with constant
\begin{align*}
L_k = \frac{2\norm{\mb x_k}{}}{r_g}+ 8n^{3/2} \norm{\mb x_k}{\infty} + \frac{4n}{\mu} \norm{\mb x_k}{}^2. 
\end{align*}
Therefore, we have
\begin{align*}
\norm{\frac{\mb w^*}{\norm{\mb w}{}}\nabla g(\mb w) - \frac{\mb w^*}{\norm{\mb w}{}}\nabla g(\mb w^\prime) }{} 
& \le \frac{1}{p}\sum_{k=1}^p\paren{ \frac{2\norm{\mb x_k}{}}{r_g}+ 8n^{3/2} \norm{\mb x_k}{\infty} + \frac{4n}{\mu} \norm{\mb x_k}{}^2} \norm{\mb w-\mb w^\prime}{} \nonumber \\
&\le \paren{\frac{2\sqrt{n}}{r_g}\norm{\mb X}{\infty}+ 8n^{3/2} \norm{\mb X}{\infty} + \frac{4n^2}{\mu}\norm{\mb X}{\infty}^2  }\norm{\mb w-\mb w^\prime}{}, 
\end{align*}
as desired. 
\end{proof}

\begin{proof}[of Proposition \ref{prop:lip-hessian-zero}] \label{proof:cn_lips_strcvx}
Let 
\begin{align*}
\mb F_k(\mb w) &= \ddot{h}_\mu (\mb q(\mb w)^*\mb x_k)\mb \zeta_k(\mb w) \mb \zeta_k(\mb w)^*- \dot{h}_\mu\paren{\mb q(\mb w)^* \mb x_k}\bm \Phi_k(\mb w)
\end{align*}
with $\mb \zeta_k(\mb w) = \overline{\mb x}_k - \frac{x_k\paren{n}}{q_n(\mb w)}\mb w$ and $\mb \Phi_k(\mb w) = \frac{x_k\paren{n}}{q_n(\mb w)} \mb I + \frac{x_{n,k}}{q_n(\mb w)}\mb w\mb w^*$. Then, $\nabla^2 g(\mb w) =  \frac{1}{p}\sum_{k=1}^p \mb F_k(\mb w)$. Using Lemma \ref{lem:Lip-combined}, Lemma \ref{lem:lip-h-mu}, Lemma \ref{lem:lp-Phi} and Lemma \ref{lem:lp-zeta}, and the facts that $\ddot{h}_\mu(t)$ is bounded by $1/\mu$ and that $\ddot{h}_\mu(t)$ is bounded by $1$ in magnitude, we can see $\mb F_k(\mb w)$ is $\Lconvex^k$-Lipschitz continuous with 
\begin{align*}
\Lconvex^k 
& = \frac{1}{\mu}\times 8\sqrt{2}\sqrt{n}\norm{\mb x_k}{\infty}^2+ \frac{2\sqrt{n}}{\mu^2}\norm{\mb x_k}{} \times 2n \norm{\mb x_k}{\infty}^2 + 4\norm{\mb x_k}{\infty}+ \frac{2\sqrt{n}}{\mu}\norm{\mb x_k}{} \times 2\norm{\mb x_k}{\infty} \nonumber \\
& \le \frac{4n^{3/2}}{\mu^2} \norm{\mb x_k}{} \norm{\mb x_k}{\infty}^2+\frac{4\sqrt{n}}{\mu}\norm{\mb x_k}{}\norm{\mb x_k}{\infty} +  \frac{8\sqrt{2}\sqrt{n}}{\mu}\norm{\mb x_k}{\infty}^2 + 4\norm{\mb x_k}{\infty}. 
\end{align*}
Thus, we have
\begin{align*}
\Lconvex \le \frac{1}{p}\sum_{k=1}^p \Lconvex^k \le \frac{4n^2}{\mu^2} \norm{\mb X}{\infty}^3+\frac{4n}{\mu}\norm{\mb X}{\infty}^2 +  \frac{8\sqrt{2}\sqrt{n}}{\mu}\norm{\mb X}{\infty}^2 + 8\norm{\mb X}{\infty}, 
\end{align*}
as desired. 
\end{proof}

\subsection{Proofs of Theorem~\ref{thm:geometry_orth}} \label{sec:proof_geometry_orth}
To avoid clutter of notations, in this subsection we write $\mb X$ to mean $\mb X_0$; similarly $\mb x_k$ for $\paren{\mb x_0}_k$, the $k$-th column of $\mb X_0$. The function $g\paren{\mb w}$ means $g\paren{\mb w; \mb X_0}$.
Before proving Theorem \ref{thm:geometry_orth}, we record one useful lemma. 

\begin{lemma}\label{lem:X-infinty-tail-bound}
For any $\theta \in \paren{0, 1}$, consider the random matrix $\mb X \in \R^{n_1 \times n_2}$ with $\mb X \sim_{i.i.d.} \mathrm{BG}\paren{\theta}$. Define the event $\event_\infty \doteq \Brac{ 1 \le \norm{\mb X}{\infty}\leq 4\sqrt{\log\paren{n p}}}$. It holds that 
\begin{align*}
\prob{\event_\infty^c} \leq \theta \paren{np}^{-7} + \exp\paren{-0.3\theta np}. 
\end{align*}
\end{lemma}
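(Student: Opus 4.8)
<br>

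The plan is to control $\norm{\mb X}{\infty} = \max_{i,k} \abs{X_{ik}}$ where each entry $X_{ik} = \Omega_{ik} V_{ik}$ is a product of an independent Bernoulli$(\theta)$ and a standard normal. First I would handle the upper tail. For a single entry, $\prob{\abs{X_{ik}} > t} = \theta \prob{\abs{V} > t} \le \theta \cdot 2\exp(-t^2/2)$ for $t > 0$, using the standard Gaussian tail bound $\prob{\abs{V} > t} \le 2\exp(-t^2/2)$ (or the sharper $\sqrt{2/\pi}\,t^{-1}\exp(-t^2/2)$ form available in the paper's Gaussian tail estimates). Then a union bound over the $np$ entries gives $\prob{\norm{\mb X}{\infty} > t} \le 2\theta np \exp(-t^2/2)$. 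Setting $t = 4\sqrt{\log(np)}$ yields $t^2/2 = 8\log(np)$, so this probability is at most $2\theta np \cdot (np)^{-8} = 2\theta (np)^{-7}$, which is comfortably below the claimed $\theta(np)^{-7}$ bound once one is slightly more careful with constants (e.g.\ using the sharper one-sided Gaussian tail, or noting $2(np)^{-1} \le 1$ is too crude — instead use $\prob{\abs V > t}\le \exp(-t^2/2)$ for $t\ge \sqrt{2\pi}$, valid here, to kill the factor $2$).

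Next I would handle the lower bound $\norm{\mb X}{\infty} \ge 1$, i.e.\ the event that at least one entry has magnitude $\ge 1$. The complementary event is that every entry satisfies $\abs{X_{ik}} < 1$. For a fixed entry, $\prob{\abs{X_{ik}} < 1} = 1 - \theta\prob{\abs V \ge 1} = 1 - \theta q$ where $q = \prob{\abs V \ge 1} = 2\Phi^c(1) \approx 0.3173 > 0.3$. By independence across the $np$ entries, $\prob{\norm{\mb X}{\infty} < 1} = (1-\theta q)^{np} \le \exp(-\theta q\, np) \le \exp(-0.3\,\theta np)$, using $1 - x \le e^{-x}$ and $q > 0.3$.

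Finally I would combine the two pieces by a union bound: $\prob{\event_\infty^c} \le \prob{\norm{\mb X}{\infty} > 4\sqrt{\log(np)}} + \prob{\norm{\mb X}{\infty} < 1} \le \theta(np)^{-7} + \exp(-0.3\,\theta np)$, which is exactly the claimed estimate. I do not anticipate a genuine obstacle here; the only mild care needed is in pinning down the numerical constant in the Gaussian upper-tail step so that the union bound over $np$ entries collapses cleanly to the stated $\theta(np)^{-7}$ without a stray factor of $2$ — this is where I would invoke the paper's Type~I Gaussian tail bound (the $\sqrt{2/\pi}$-type estimate for $\Phi^c$) rather than the crude $2e^{-t^2/2}$, and also use that $np$ is large enough (so $\sqrt{\log(np)} \ge 1$, say) for the threshold $4\sqrt{\log(np)}$ to exceed the range where the sharp bound applies.
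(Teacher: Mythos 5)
Your proposal is correct and follows essentially the same route as the paper: a per-entry bound $\prob{\abs{X_{ik}} > 4\sqrt{\log(np)}} \le \theta \exp(-8\log(np)) = \theta(np)^{-8}$ (the paper likewise uses the sharp one-sided Gaussian tail rather than the crude $2e^{-t^2/2}$, which is exactly the fix you identify), a union bound over the $np$ entries, and for the lower event the observation $\prob{\abs{X_{ik}} < 1} \le 1 - 0.3\theta$ combined with independence and $1-x \le e^{-x}$. No gaps.
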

For convenience, we define three regions for the range of $\mb w$:
\begin{align*}
R_1 & \doteq \condset{ \mb w}{\norm{\mb w}{} \le \frac{\mu}{4\sqrt{2}}}, \qquad 
R_2 \doteq \condset{ \mb w}{\frac{\mu}{4\sqrt{2}} \le \norm{\mb w}{} \le \frac{1}{20\sqrt{5}}}, \\
R_3 & \doteq \condset{ \mb w}{\frac{1}{20\sqrt{5}} \le \norm{\mb w}{} \le \sqrt{\frac{4n-1}{4n}}}. 
\end{align*}

\begin{proof}[of Theorem \ref{thm:geometry_orth}] 
We will focus on deriving the qualitative result and hence be sloppy about constants. All indexed capital $C$ or small $c$ are numerical constants. 
\paragraph{Strong convexity in region $R_1$.} Proposition~\ref{prop:geometry_asymp_strong_convexity} shows that for any $\mb w \in R_1$, $\bb E \left[ \nabla^2 g(\mb w) \right] \succeq \frac{c_1\theta }{\mu} \mb I$. For any $\eps \in (0, \mu/\paren{4\sqrt{2}} )$, $R_1$ has an $\eps$-net $N_1$ of size at most $(3 \mu / \paren{4\sqrt{2}\eps})^n$. On $\event_\infty$, $\nabla^2 g$ is 
\begin{equation*}
L_1 \doteq \frac{C_2 n^2}{\mu^2} \log^{3/2}(np)
\end{equation*}
Lipschitz by Proposition~\ref{prop:lip-hessian-zero}. Set $\eps = \frac{c_1 \theta}{3 \mu L_1}$, so 
\begin{equation*}
\# N_1 \le \exp\left(2 n \log \left( \frac{ C_3 n \log( n p ) }{\theta} \right) \right). 
\end{equation*}
Let $\event_1$ denote the event
\begin{equation*}
\event_1 = \set{ \max_{\mb w \in N_1} \norm{\nabla^2 g(\mb w) - \bb E\left[ \nabla^2 g(\mb w) \right] }{} \le \frac{c_1 \theta }{3 \mu} }.
\end{equation*}
On $\event_1 \cap \event_\infty$, 
\begin{equation*}
\sup_{\norm{\mb w}{} \le \mu/\paren{4\sqrt{2}}} \norm{\nabla^2 g(\mb w) - \bb E \left[ \nabla^2 g(\mb w) \right] }{} \;\le\; \frac{2c_1 \theta}{3 \mu},
\end{equation*}
and so on $\event_1 \cap \event_\infty$, \eqref{eqn:hess-zero-uni-orth} holds for any constant $c_\star \le c_1 / 3$. Setting $t = c_1 \theta / 3 \mu$ in Proposition \ref{prop:concentration-hessian-zero}, we obtain that for any fixed $\mb w$,
\begin{equation*}
\prob{ \norm{ \nabla^2 g(\mb w) - \bb E\left[ \nabla^2 g(\mb w) \right] }{} \ge \frac{c_1 \theta}{3 \mu} } \le 4 n \exp\left( - \frac{c_4 p \theta^2}{n^2} \right).
\end{equation*}
Taking a union bound, we obtain that 
\begin{equation*}
\prob{ \event_1^c } \;\le\; 4 n \exp\left( - \frac{c_4 p \theta^2}{n^2} + C_5 n \log(n) + C_5 n \log \log(p) \right). 
\end{equation*}

\paragraph{Large gradient in region $R_2$.} Similarly, for the gradient quantity, for $\mb w \in R_2$, Proposition~\ref{prop:geometry_asymp_gradient} shows that 
\begin{equation*}
\bb E \left [\frac{\mb w^* \nabla g(\mb w) }{\norm{\mb w}{}} \right] \;\ge\; c_6 \theta. 
\end{equation*}
Moreover, on $\event_\infty$, $\frac{\mb w^* \nabla g(\mb w) }{\norm{\mb w}{}}$
is 
\begin{equation*}
L_2 \doteq \frac{C_7 n^2}{\mu} \log(np)
\end{equation*}
Lipschitz by Proposition~\ref{prop:lip-gradient}. For any $\eps < \frac{1}{20 \sqrt{5}}$, the set $R_2$ has an $\eps$-net $N_2$ of size at most $\left(\frac{3}{20 \eps \sqrt{5}}\right)^n$. Set $\eps = \frac{c_6 \theta}{3 L_2}$, so 
\begin{equation*}
\# N_2 \;\le\; \exp\left( n \log \left( \frac{C_8 n^2 \log(np)}{\theta \mu} \right) \right). 
\end{equation*}
Let $\event_2$ denote the event 
\begin{equation*}
\event_2 = \set{ \max_{\mb w \in N_2} \magnitude{ \frac{\mb w^* \nabla g(\mb w) }{\norm{\mb w}{}} - \bb E \left [\frac{\mb w^* \nabla g(\mb w) }{\norm{\mb w}{}} \right]  } \;\le\; \frac{c_6 \theta}{3} }.
\end{equation*}
On $\event_2 \cap \event_\infty$, 
\begin{equation}
\sup_{\mb w \in R_2} \magnitude{ \frac{\mb w^* \nabla g(\mb w) }{\norm{\mb w}{}} - \bb E \left [\frac{\mb w^* \nabla g(\mb w) }{\norm{\mb w}{}} \right]  } \;\le\; \frac{2c_6 \theta}{3},
\end{equation}
and so on $\event_2 \cap \event_\infty$, \eqref{eqn:grad-uni-orth} holds for any constant $c_\star \le c_6 / 3$. Setting $t = c_6 \theta / 3$ in Proposition \ref{prop:concentration-gradient}, we obtain that for any fixed $\mb w \in R_2$, 
\begin{equation*}
\bb P \left[ \magnitude{ \frac{\mb w^* \nabla g(\mb w) }{\norm{\mb w}{}} - \bb E \left [\frac{\mb w^* \nabla g(\mb w) }{\norm{\mb w}{}} } \right] \right]
 \;\le\; 2 \exp\left( - \frac{c_9 p \theta^2 }{n} \right), 
\end{equation*}
and so 
\begin{equation}
\bb P\left[ \event_2^c \right] \;\le\; 2 \exp\left( - \frac{c_9 p \theta^2}{n} + n \log\left( \frac{C_8 n^2 \log(np)}{\theta \mu} \right) \right). 
\end{equation}

\paragraph{Existence of negative curvature direction in $R_3$.} Finally, for any $\mb w \in R_3$, Proposition~\ref{prop:geometry_asymp_curvature} shows that 
\begin{equation*}
\bb E\left[ \frac{ \mb w^* \nabla^2 g(\mb w) \mb w }{\norm{\mb w}{}^2} \right] \;\le\; -c_9 \theta. 
\end{equation*}
On $\event_\infty$, $\frac{ \mb w^* \nabla^2 g(\mb w) \mb w }{\norm{\mb w}{}^2}$ is 
\begin{equation*}
L_3 = \frac{C_{10} n^3}{\mu^2} \log^{3/2}(np) 
\end{equation*}
Lipschitz by Proposition~\ref{prop:lip-hessian-negative}. As above, for any $\eps \le \sqrt{ \frac{4n - 1}{4n} }$, $R_3$ has an $\eps$-net $N_3$ of size at most $(3/\eps)^n$. Set $\eps = c_9 \theta / 3 L_3$. Then 
\begin{equation*}
\# N_3 \;\le\; \exp\left( n \log \left( \frac{C_{11} n^3 \log^{3/2}(np)}{\theta \mu^2 } \right) \right).
\end{equation*}
Let $\event_3$ denote the event 
\begin{equation*}
\event_3 = \set{ \max_{\mb w \in N_3} \magnitude{ \frac{ \mb w^* \nabla^2 g(\mb w) \mb w }{\norm{\mb w}{}^2} - \bb E\left[ \frac{ \mb w^* \nabla^2 g(\mb w) \mb w }{\norm{\mb w}{}^2} \right]  } \le \frac{c_9 \theta}{3} }
\end{equation*}
On $\event_3 \cap \event_\infty$, 
\begin{equation*}
\sup_{\mb w \in R_3} \magnitude{ \frac{ \mb w^* \nabla^2 g(\mb w) \mb w }{\norm{\mb w}{}^2} - \bb E\left[ \frac{ \mb w^* \nabla^2 g(\mb w) \mb w }{\norm{\mb w}{}^2} \right]  } \;\le\; \frac{ 2c_9 \theta}{3},
\end{equation*}
and \eqref{eqn:curvature-uni-orth} holds with any constant $c_\star < c_9 / 3$. Setting $t = c_9 \theta / 3$ in Proposition \ref{prop:concentration-hessian-negative} and taking a union bound, we obtain 
\begin{equation*}
\prob{\event_3^c} \;\le\; 4 \exp\left( - \frac{c_{12} p \mu^2 \theta^2}{n^2} + n \log\left( \frac{C_{11} n^3 \log^{3/2}(np)}{\theta \mu^2} \right) \right). 
\end{equation*}


\paragraph{The unique local minimizer located near $\mb 0$. } Let $\event_g$ be the event that the bounds \eqref{eqn:hess-zero-uni-orth}-\eqref{eqn:curvature-uni-orth} hold. On $\event_g$, the function $g$ is $\frac{c_\star\theta}{\mu}$-strongly convex over $R_1 = \condset{\mb w}{\norm{\mb w}{} \le \mu/\paren{4\sqrt{2}}}$. This implies that $f$ has at most one local minimum on $R_1$. It also implies that for any $\mb w \in R_1$,
\begin{eqnarray*}
g(\mb w) \ge g(\mb 0) + \innerprod{\nabla g(\mb 0) }{\mb w} + \frac{c\theta}{2 \mu} \norm{\mb w}{}^2 \ge g(\mb 0) - \norm{\mb w}{} \norm{\nabla g(\mb 0) }{} + \frac{c_\star\theta}{2 \mu} \norm{\mb w}{}^2. 
\end{eqnarray*}
So, if $g(\mb w) \le g(\mb 0)$, we necessarily have 
\begin{equation*}
\norm{\mb w}{} \;\le\; \frac{2 \mu}{c_\star \theta} \norm{\nabla g(\mb 0)}{}.
\end{equation*}
Suppose that 
\begin{equation}
\norm{\nabla g(\mb 0)}{} \le \frac{c_\star \theta}{32}. \label{eqn:grad-zero-bound}
\end{equation}
Then $g(\mb w) \le g(\mb 0)$ implies that $\norm{\mb w}{} \le \mu / 16$. By Wierstrass's theorem, $g(\mb w)$ has at least one minimizer $\mb w_\star$ over the compact set $S = \condset{\mb w}{\norm{\mb w}{} \le \mu / 10}$. By the above reasoning, $\norm{\mb w_\star}{} \le \mu / 16$, and hence $\mb w_\star$ does not lie on the boundary of $S$. This implies that $\mb w_\star$ is a local minimizer of $g$. Moreover, as above, 
\begin{equation*}
\norm{\mb w_\star}{} \;\le\; \frac{2 \mu}{c_\star \theta} \norm{\nabla g(\mb 0)}{}.
\end{equation*}

We now use the vector Bernstein inequality to show that with our choice of $p$, \eqref{eqn:grad-zero-bound} is satisifed with high probability. Notice that 
\begin{equation*}
\nabla g(\mb 0) = \frac{1}{p} \sum_{i = 1}^p \dot{h}_\mu( x_i(n) ) \overline{\mb x}_i, 
\end{equation*}
and $\dot{h}_\mu$ is bounded by one in magnitude, so for any integer $m \ge 2$,
\begin{eqnarray*}
\expect{ \norm{ \dot{h}_\mu( x_i(n) ) \overline{\mb x}_i}{}^m }
\le \expect{ \norm{\mb x_i}{}^m } 
\le \bb E_{Z \sim \chi(n)} \left[ Z^m \right] \le m!n^{m/2}, 
\end{eqnarray*}
where we have applied the moment estimate for the $\chi\paren{n}$ distribution shown in Lemma~\ref{lem:chi_moment}. Applying the vector Bernstein inequality in Corollary \ref{cor:vector-bernstein} with $R = \sqrt{n}$ and $\sigma^2 = 2n$, we obtain 
\begin{equation*}
\prob{ \norm{ \nabla g(\mb 0) }{} \;\ge\; t } \;\le\; 2 (n+1) \exp\left( - \frac{pt^2}{ 4 n + 2 \sqrt{n} t } \right) 
\end{equation*}
for all $t > 0$. Using this inequality, it is not difficult to show that there exist constants $C_{13}, C_{14} > 0$ such that when $p \ge C_{13} n \log n$, with probability at least $1- 4n p^{-10}$, 
\begin{equation}
\norm{\nabla g(\mb 0)}{} \;\le\; C_3 \sqrt{\frac{n \log p}{p}}. \label{eqn:grad-zero}
\end{equation}
When $\frac{p}{\log p} \ge \frac{C_{14} n}{\theta^2}$, for appropriately large $C_{14}$, \eqref{eqn:grad-zero} implies \eqref{eqn:grad-zero-bound}. Summing up failure probabilities completes the proof. 
\end{proof}

\subsection{Proofs for Section~\ref{sec:geo_results_comp} and Theorem~\ref{thm:geometry_comp}} \label{sec:proof_geometry_comp}
\begin{proof}[of Lemma~\ref{lem:pert_key_mag}] \label{proof:comp_pert_bound}
By the generative model, 
\begin{align*}
\overline{\mb Y} = \paren{\frac{1}{p\theta}\mb Y \mb Y^*}^{-1/2} \mb Y = \paren{\frac{1}{p\theta}\mb A_0 \mb X_0 \mb X_0^* \mb A_0^*}^{-1/2} \mb A_0 \mb X_0.
\end{align*} 
Since $\expect{\mb X_0 \mb X_0^*/\paren{p \theta}} = \mb I$, we will compare $\paren{\frac{1}{p\theta}\mb A_0 \mb X_0 \mb X_0^* \mb A_0^*}^{-1/2} \mb A_0$ with $\paren{\mb A_0 \mb A_0^*}^{-1/2} \mb A_0 = \mb U \mb V^*$. By Lemma~\ref{lem:half_inverse_pert}, we have
\begin{align*}
& \norm{\paren{\frac{1}{p\theta} \mb A_0 \mb X_0 \mb X_0^* \mb A_0^*}^{-1/2} \mb A_0 - \paren{\mb A_0 \mb A_0^*}^{-1/2} \mb A_0}{} \nonumber \\
\le\; & \norm{\mb A_0}{} \norm{\paren{\frac{1}{p\theta}\mb A_0 \mb X_0 \mb X_0^* \mb A_0^*}^{-1/2} - \paren{\mb A_0 \mb A_0^*}^{-1/2}}{} \nonumber \\
\le\; & \norm{\mb A_0}{} \frac{2\norm{\mb A_0}{}^3}{\sigma_{\min}^4\paren{\mb A_0}} \norm{\frac{1}{p\theta} \mb X_0 \mb X_0^* - \mb I}{}  = 2\kappa^4\paren{\mb A_0} \norm{\frac{1}{p\theta} \mb X_0 \mb X_0^* - \mb I}{}
\end{align*}
provided 
\begin{align*}
\norm{\mb A_0}{}^2\norm{\frac{1}{p\theta} \mb X_0 \mb X_0^* - \mb I}{} \le \frac{\sigma_{\min}^2\paren{\mb A_0}}{2} \Longleftrightarrow \norm{\frac{1}{p\theta} \mb X_0 \mb X_0^* - \mb I}{} \le \frac{1}{2\kappa^2\paren{\mb A_0}}.  
\end{align*}
On the other hand, by Lemma~\ref{lem:bg_identity_diff}, when $p \ge C_1 n^2 \log n$ for some large constant $C_1$,  $\norm{\tfrac{1}{p\theta} \mb X_0 \mb X_0^* - \mb I}{} \le 10 \sqrt{\tfrac{\theta n \log p}{p}}$ with probability at least $1-p^{-8}$. Thus, when $p \ge C_2 \kappa^4\paren{\mb A_0} \theta n^2 \log (n \theta \kappa\paren{\mb A_0})$, 
\begin{align*}
\norm{\paren{\frac{1}{p\theta} \mb A_0 \mb X_0 \mb X_0^* \mb A_0^*}^{-1/2} \mb A_0 - \paren{\mb A_0 \mb A_0^*}^{-1/2} \mb A_0}{}  \le 20 \kappa^4\paren{\mb A_0} \sqrt{\frac{\theta n \log p}{p}},  
\end{align*}
as desired. 
\end{proof}

\begin{proof}[of Lemma~\ref{lem:pert_key_grad_hess}] \label{proof:comp_pert_bound2}
To avoid clutter in notation, we write $\mb X$ to mean $\mb X_0$, and $\mb x_k$ to mean $\paren{\mb x_0}_k$ in this proof. We also let $\widetilde{\mb Y} \doteq \mb X_0 + \widetilde{\mb \Xi} \mb X_0$. Note the Jacobian matrix for the mapping $\mb q\paren{\mb w}$ is $\nabla_{\mb w} \mb q\paren{\mb w} = \brac{\mb I, -\mb w/\sqrt{1-\norm{\mb w}{}^2}}$. Hence for any vector $\mb z \in \R^{n}$ and all $\mb w \in \Gamma$,  
\begin{align*}
\norm{\nabla_{\mb w}\mb q\paren{\mb w} \mb z}{} \le \sqrt{n-1}\norm{\mb z}{\infty} + \frac{\norm{\mb w}{}}{\sqrt{1-\norm{\mb w}{}^2}} \norm{\mb z}{\infty} \le 3\sqrt{n} \norm{\mb z}{\infty}. 
\end{align*}
Now we have 
\begin{align*}
& \norm{\nabla_{\mb w} g\paren{\mb w; \widetilde{\mb Y}} - \nabla_{\mb w{}} g\paren{\mb w; \mb X} }{}\\
=\; & \norm{\frac{1}{p}\sum_{k=1}^p \dot{h}_{\mu}\paren{\mb q^*\paren{\mb w} \mb x_k + \mb q^*\paren{\mb w} \widetilde{\mb \Xi} \mb x_k}\nabla_{\mb w}\mb q\paren{\mb w} \paren{\mb x_k + \widetilde{\mb \Xi} \mb x_k} - \frac{1}{p}\sum_{k=1}^p \dot{h}_{\mu}\paren{\mb q^*\paren{\mb w} \mb x_k}\nabla_{\mb w}\mb q\paren{\mb w} \mb x_k}{} \\
\le\; & \norm{\frac{1}{p}\sum_{k=1}^p \dot{h}_{\mu}\paren{\mb q^*\paren{\mb w} \mb x_k + \mb q^*\paren{\mb w} \widetilde{\mb \Xi} \mb x_k}\nabla_{\mb w}\mb q\paren{\mb w} \paren{\mb x_k + \widetilde{\mb \Xi} \mb x_k - \mb x_k}}{} \\
& \qquad + \norm{\frac{1}{p}\sum_{k=1}^p \brac{\dot{h}_{\mu}\paren{\mb q^*\paren{\mb w} \mb x_k + \mb q^*\paren{\mb w} \widetilde{\mb \Xi} \mb x_k} - \dot{h}_{\mu}\paren{\mb q^*\paren{\mb w} \mb x_k}}\nabla_{\mb w}\mb q\paren{\mb w} \mb x_k}{}\\
\le\;& \norm{\widetilde{\mb \Xi}}{}\paren{\max_{t} \dot{h}_{\mu}\paren{t} 3n \norm{\mb X}{\infty} + L_{\dot{h}_{\mu}} 3n\norm{\mb X}{\infty}^2},  
\end{align*}
where $L_{\dot{h}_{\mu}}$ denotes the Lipschitz constant for $\dot{h}_{\mu}\paren{\cdot}$. Similarly, suppose $\norm{\widetilde{\mb \Xi}}{} \le \tfrac{1}{2n}$, and also notice that 
\begin{align*}
\norm{\frac{\mb I}{q_n\paren{\mb w}} + \frac{\mb w \mb w^*}{q_n^3\paren{\mb w}}}{} \le \frac{1}{q_n\paren{\mb w}} + \frac{\norm{\mb w}{}^2}{q_n^3\paren{\mb w}} = \frac{1}{q_n^3\paren{\mb w}} \le 2\sqrt{2} n^{3/2},
\end{align*} 
we obtain that  
\begin{align*}
& \norm{\nabla_{\mb w}^2 g\paren{\mb w; \widetilde{\mb Y}} - \nabla_{\mb w}^2 g\paren{\mb w; \mb X}}{} \\
\le\; & \norm{\frac{1}{p}\sum_{k=1}^p \brac{\ddot{h}\paren{\mb q^*\paren{\mb w} \widetilde{\mb y}_k} \nabla_{\mb w}\mb q\paren{\mb w} \widetilde{\mb y}_k \widetilde{\mb y}_k^* \paren{\nabla_{\mb w} \mb q\paren{\mb w}}^* - \ddot{h}\paren{\mb q^*\paren{\mb w} \mb x_k} \nabla_{\mb w}\mb q\paren{\mb w} \mb x_k \mb x_k^* \paren{\nabla_{\mb w} \mb q\paren{\mb w}}^*}}{} \\
& \qquad + \norm{\frac{1}{p}\sum_{k=1}^p \brac{\dot{h}\paren{\mb q^*\paren{\mb w} \widetilde{\mb y}_k}\paren{\frac{\mb I}{q_n\paren{\mb w}} + \frac{\mb w \mb w^*}{q_n^3}} \widetilde{\mb y}_k\paren{n} - \dot{h}\paren{\mb q^*\paren{\mb w} \mb x_k}\paren{\frac{\mb I}{q_n\paren{\mb w}} + \frac{\mb w \mb w^*}{q_n^3}} \mb x_k\paren{n}}}{}\\
\le\; & \tfrac{45}{2}L_{\ddot{h}_{\mu}} n^{3/2} \norm{\mb X}{\infty}^3 \norm{\widetilde{\mb \Xi}}{} + \max_t \ddot{h}_{\mu}\paren{t} \paren{18n^{3/2} \norm{\mb X}{\infty}^2 \norm{\widetilde{\mb \Xi}}{}  + 10 n^2 \norm{\mb X}{\infty}^2\norm{\widetilde{\mb \Xi}}{}^2} \\
& \qquad + 3\sqrt{2} L_{\dot{h}_{\mu}} n^2 \norm{\widetilde{\mb \Xi}}{} \norm{\mb X}{\infty}^2 + \max_t \dot{h}\paren{t} 2\sqrt{2} n^2 \norm{\widetilde{\mb \Xi}}{} \norm{\mb X}{\infty}, 
\end{align*}
where $L_{\ddot{h}_{\mu}}$ denotes the Lipschitz constant for $\ddot{h}_{\mu}\paren{\cdot}$. Since  
\begin{align*}
\max_{t} \dot{h}_{\mu}\paren{t}  \le 1, & \quad \max_{t} \ddot{h}_{\mu}\paren{t} \le \frac{1}{\mu}, \quad L_{h_{\mu}} \le 1, \quad L_{\dot{h}_{\mu}} \le \frac{1}{\mu}, \quad L_{\ddot{h}_{\mu}} \le \frac{2}{\mu^2},
\end{align*}
and by Lemma~\ref{lem:X-infinty-tail-bound}, $\norm{\mb X}{\infty} \le 4\sqrt{\log\paren{np}}$ with probability at least $1-\theta \paren{np}^{-7} -\exp\paren{-0.3\theta np}$, we obtain 
\begin{align*}
\norm{\nabla_{\mb w} g\paren{\mb w; \widetilde{\mb Y}} - \nabla_{\mb w{}} g\paren{\mb w; \mb X} }{} & \le C_1\frac{n}{\mu} \log\paren{np} \norm{\widetilde{\mb \Xi}}{}, \\
\norm{\nabla_{\mb w}^2 g\paren{\mb w; \widetilde{\mb Y}} - \nabla_{\mb w}^2 g\paren{\mb w; \mb X}}{} & \le C_2 \max\set{\frac{n^{3/2}}{\mu^2}, \frac{n^2}{\mu}} \log^{3/2}\paren{np} \norm{\widetilde{\mb \Xi}}{}
\end{align*}
for numerical constants $C_1, C_2 > 0$.
\end{proof}

\begin{proof}[of Theorem~\ref{thm:geometry_comp}]
Assume the constant $c_\star$ as defined in Theorem~\ref{thm:geometry_orth}. By Lemma~\ref{lem:pert_key_mag}, when 
\begin{align*} 
p \ge \frac{C_1}{c_\star^2 \theta} \max\set{\frac{n^4}{\mu^4}, \frac{n^5}{\mu^2}} \kappa^8\paren{\mb A_0} \log^4 \paren{\frac{\kappa\paren{\mb A_0}n}{\mu \theta}},
\end{align*}
the magnitude of the perturbation is bounded as 
\begin{align*}
\norm{\widetilde{\mb \Xi}}{} \le C_2 c_\star \theta \paren{\max\set{\frac{n^{3/2}}{\mu^2}, \frac{n^2}{\mu}} \log^{3/2}\paren{np}}^{-1}, 
\end{align*}
where $C_2$ can be made arbitrarily small by making $C_1$ large. 
Combining this result with Lemma~\ref{lem:pert_key_grad_hess}, we obtain that for all $\mb w \in \Gamma$, 
\begin{align*}
\norm{\nabla_{\mb w} g\paren{\mb w; \mb X_0 + \widetilde{\mb \Xi} \mb X_0} - \nabla_{\mb w{}} g\paren{\mb w; \mb X} }{} & \le \frac{c_\star \theta}{2} \nonumber \\
\norm{\nabla_{\mb w}^2 g\paren{\mb w; \mb X_0 + \widetilde{\mb \Xi} \mb X_0} - \nabla_{\mb w}^2 g\paren{\mb w; \mb X}}{} & \le \frac{c_\star \theta}{2}, 
\end{align*}
with probability at least $1-p^{-8} - \theta\paren{np}^{-7} - \exp\paren{-0.3\theta n p}$. In view of~\eqref{eqn:curvature-uni-comp} in Theorem~\ref{thm:geometry_orth}, we have 
\begin{align*}
\frac{\mb w^* g\paren{\mb w; \mb X_0 + \widetilde{\mb \Xi} \mb X_0} \mb w}{\norm{\mb w}{}^2} 
& = \frac{\mb w^* g\paren{\mb w; \mb X_0} \mb w}{\norm{\mb w}{}^2} + \frac{\mb w^* g\paren{\mb w; \mb X_0 + \widetilde{\mb \Xi} \mb X_0} \mb w}{\norm{\mb w}{}^2} - \frac{\mb w^* g\paren{\mb w; \mb X_0} \mb w}{\norm{\mb w}{}^2} \\
& \le - c_\star \theta + \norm{\nabla_{\mb w}^2 g\paren{\mb w; \mb X_0 + \widetilde{\mb \Xi} \mb X_0} - \nabla_{\mb w}^2 g\paren{\mb w; \mb X}}{} \le -\frac{1}{2} c_\star \theta. 
\end{align*}
By similar arguments, we obtain~\eqref{eqn:hess-zero-uni-comp} through~\eqref{eqn:curvature-uni-comp} in Theorem~\ref{thm:geometry_comp}. 

To show the unique local minimizer over $\Gamma$ is near $\mb 0$, we note that (recall the last part of proof of Theorem~\ref{thm:geometry_orth} in Section~\ref{sec:proof_geometry_orth}) $g\paren{\mb w; \mb X_0 + \widetilde{\mb \Xi} \mb X_0}$ being $\frac{c_\star \theta}{2\mu}$ strongly convex near $\mb 0$ implies that 
\begin{align*}
\norm{\mb w_\star}{} \le \frac{4\mu}{c_\star \theta} \norm{\nabla g\paren{\mb 0; \mb X_0 + \widetilde{\mb \Xi} \mb X_0}}{}. 
\end{align*}
The above perturbation analysis implies there exists $C_3 > 0$ such that when 
\begin{align*}
p \ge \frac{C_3}{c_\star^2 \theta} \max\set{\frac{n^4}{\mu^4}, \frac{n^5}{\mu^2}} \kappa^8\paren{\mb A_0} \log^4 \paren{\frac{\kappa\paren{\mb A_0}n}{\mu \theta}}, 
\end{align*}
it holds that 
\begin{align*}
\norm{\nabla_{\mb w} g\paren{\mb 0; \mb X_0 + \widetilde{\mb \Xi} \mb X_0} - \nabla_{\mb w{}} g\paren{\mb 0; \mb X} }{} & \le \frac{c_\star \theta}{400}, 
\end{align*}
which in turn implies 
\begin{align*}
\norm{\mb w_\star}{} \le \frac{4\mu}{c_\star \theta} \norm{\nabla g\paren{\mb 0; \mb X_0}}{} + \frac{4\mu}{c_\star \theta} \frac{c_\star \theta}{400} \le \frac{\mu}{8} + \frac{\mu}{100} < \frac{\mu}{7}, 
\end{align*}
where we have recall the result that $\frac{2\mu}{c_\star \theta} \norm{\nabla g\paren{\mb 0; \mb X_0}}{} \le \mu /16$ from proof of Theorem~\ref{thm:geometry_orth}. A simple union bound with careful bookkeeping gives the success probability. 
\end{proof}

\section{Proof of Convergence for the Trust-Region Algorithm}\label{sec:proof_algorithm}

\begin{proof}[of Lemma~\ref{lem:mag_lip_fq}] \label{proof:lem_mag_lip_fq}
Using the fact $\tanh\paren{\cdot}$ and $1-\tanh^2\paren{\cdot}$ are bounded by one in magnitude, by \eqref{eq:fq_grad} and \eqref{eq:fq_hess} we have 
\begin{align*}
\norm{\nabla f\paren{\mb q}}{} & \le \frac{1}{p}\sum_{k=1}^p \norm{\mb x_k}{} \le \sqrt{n} \norm{\mb X}{\infty}, \\
\norm{\nabla^2 f\paren{\mb q}}{} & \le \frac{1}{p}\sum_{k=1}^p \frac{1}{\mu} \norm{\mb x_k}{}^2 \le \frac{n}{\mu} \norm{\mb X}{\infty}^2,
\end{align*}
for any $\mb q \in \bb S^{n-1}$. Moreover, 
\begin{align*}
\sup_{\mb q, \mb q' \in \bb S^{n-1}, \mb q \neq \mb q'} \frac{\norm{\nabla f\paren{\mb q} - \nabla f\paren{\mb q'}}{}}{\norm{\mb q - \mb q'}{}} 
& \le \frac{1}{p} \sum_{k=1}^p \norm{\mb x_k}{} \sup_{\mb q, \mb q' \in \bb S^{n-1}, \mb q \neq \mb q'}\frac{\abs{\tanh\paren{\frac{\mb q^* \mb x_k}{\mu}} - \tanh\paren{\frac{\mb q'^* \mb x_k}{\mu}}}}{\norm{\mb q - \mb q'}{}} \\
& \le \frac{1}{p} \sum_{k=1}^p \norm{\mb x_k}{} \frac{\norm{\mb x_k}{}}{\mu} \le \frac{n}{\mu} \norm{\mb X}{\infty}^2, 
\end{align*} 
where at the last line we have used the fact the mapping $\mb q \mapsto \mb q^* \mb x_k/\mu$ is $\norm{\mb x_k}{}/\mu$ Lipschitz, and $\mb x \mapsto \tanh\paren{x}$ is $1$-Lipschitz, and the composition rule in Lemma~\ref{lem:composition}. Similar argument yields the final bound. 
\end{proof}

\begin{proof}[of Lemma~\ref{lem:alg_approx_bd2}] \label{proof:lem_alg_approx_bd2}
Suppose we can establish 
\begin{equation*}
\left| f\paren{ \exp_{\mb q}(\mb \delta) } - \widehat{f}\paren{\mb q, \mb \delta} \right| \;\le\; \frac{1}{6}\eta_f \norm{\mb \delta}{}^3. 
\end{equation*}
Applying this twice we obtain 
\begin{align*}
f( \exp_{\mb q}(\mb \delta_\star) ) 
& \le \widehat{f}(\mb q,\mb \delta_\star) + \frac{1}{6}\eta_f \Delta^3 
\le \widehat{f}(\mb q, \mb \delta) + \frac{1}{6}\eta_f \Delta^3
\le f(\exp_{\mb q}(\mb \delta)) + \frac{1}{3}\eta_f\Delta^3 
\le f(\mb q) - s + \frac{1}{3}\eta_f \Delta^3,
\end{align*}
as claimed. Next we establish the first result. Let $\mb \delta_0 = \frac{\mb \delta}{\norm{\mb \delta}{}}$, and $t = \norm{\mb \delta}{}$. Consider the composite function
\begin{align*}
	\zeta(t) \doteq f( \exp_{\mb q}(t \mb \delta_0)) = f( \mb q \cos(t) + \mb \delta_0 \sin(t) ),
\end{align*} 
and also 
\begin{align*}
\dot{\zeta}(t) &= \innerprod{ \nabla f\left( \mb q \cos(t) + \mb \delta_0 \sin(t) \right) }{ - \mb q \sin(t) + \mb \delta_0 \cos(t) } \\
\ddot{\zeta}(t) &= \innerprod{ \nabla^2 f\left( \mb q \cos(t) + \mb \delta_0 \sin(t) \right) ( - \mb q \sin(t) + \mb \delta_0 \cos(t) ) }{ - \mb q \sin(t) + \mb \delta_0 \cos(t) } \nonumber \\
& + \quad \innerprod{ \nabla f\left( \mb q \cos(t) + \mb \delta_0 \sin(t) \right) }{ - \mb q \cos(t) - \mb \delta_0 \sin(t) }. 
\end{align*}
In particular, this gives that 
\begin{align*}
	\zeta(0) &= f( \mb q) \\
\dot{\zeta}(0) &= \innerprod{ \mb \delta_0 }{\nabla f(\mb q) } \\
\ddot{\zeta}(0) &= \mb \delta_0^* \left( \nabla^2 f(\mb q) - \innerprod{ \nabla f(\mb q) }{\mb q} \mb I \right) \mb \delta_0.
\end{align*}

We next develop a bound on $\magnitude{ \ddot{\zeta}(t) - \ddot{\zeta}(0) }$. Using the triangle inequality, we can casually bound this difference as 
\begin{align*}
&\magnitude{ \ddot{\zeta}(t) - \ddot{\zeta}(0) } \\
\le\; & \magnitude{  \innerprod{ \nabla^2 f\left( \mb q \cos(t) + \mb \delta_0 \sin(t) \right) ( - \mb q \sin(t) + \mb \delta_0 \cos(t) ) }{ - \mb q \sin(t) + \mb \delta_0 \cos(t)   }  - \mb \delta_0^* \nabla^2 f(\mb q) \mb \delta_0} \nonumber \\ 
& \qquad + \qquad \magnitude{ \innerprod{ \nabla f\left( \mb q \cos(t) + \mb \delta_0 \sin(t) \right) }{ - \mb q \cos(t) - \mb \delta_0 \sin(t) } + \innerprod{ \nabla f(\mb q) }{\mb q} } \nonumber \\
\le\; & \magnitude{ \innerprod{ \left[ \nabla^2 f( \mb q \cos(t) + \mb \delta_0 \sin(t) ) - \nabla^2 f(\mb q) \right] \left( - \mb q \sin(t) + \mb \delta_0 \cos(t) \right) }{ - \mb q \sin(t) + \mb \delta_0 \cos(t) } }  \nonumber \\
& \qquad + \qquad \magnitude{ \innerprod{ \nabla^2 f(\mb q) \left( - \mb q \sin(t) + \mb \delta_0 \cos(t) - \mb \delta_0 \right) }{ - \mb q \sin(t) + \mb \delta_0 \cos(t) } } \nonumber \\
& \qquad + \qquad \magnitude{ \innerprod{ \nabla^2 f(\mb q) \mb \delta_0 }{ - \mb q \sin(t) + \mb \delta_0 \cos(t) - \mb \delta_0 } } \nonumber \\ 
& \qquad + \qquad \magnitude{ \innerprod{ \nabla f( \mb q \cos(t) + \mb \delta_0 \sin(t) ) }{ - \mb q \cos(t) - \mb \delta_0 \sin(t) } + \innerprod{ \nabla f(\mb q \cos(t) + \mb \delta_0 \sin(t) ) }{ \mb q } } \nonumber \\
& \qquad + \qquad \magnitude{ \innerprod{ \nabla f(\mb q \cos(t) + \mb \delta_0 \sin(t)) }{\mb q } - \innerprod{ \nabla f(\mb q) }{\mb q} } \\
\le\; & L_{\nabla^2} \norm{ \mb q \cos(t) + \mb \delta_0 \sin(t) - \mb q }{} \nonumber  \\
& \qquad + M_{\nabla^2} \norm{ - \mb q \sin(t) + \mb \delta_0 \cos(t) - \mb \delta_0 }{} \nonumber \\
& \qquad + M_{\nabla^2} \norm{ - \mb q \sin(t) + \mb \delta_0 \cos(t) - \mb \delta_0 }{} \nonumber \\
& \qquad + M_\nabla \norm{ - \mb q \cos(t) - \mb \delta_0 \sin(t) + \mb q }{} \nonumber \\
& \qquad + L_\nabla \norm{ \mb q \cos(t) + \mb \delta_0 \sin(t) - \mb q }{} \\
=\; & \left( L_{\nabla^2} + 2 M_{\nabla^2} + M_{\nabla} + L_{\nabla} \right) \sqrt{ (1 - \cos(t))^2 + \sin^2(t) } \\
=\; & \eta_f \sqrt{ 2 - 2\cos t} \le \eta_f \sqrt{ 4 \sin^2\paren{t/2}} \le \eta_f t, 
\end{align*}
where in the final line we have used the fact $1-\cos x = 2\sin^2\paren{x/2}$ and that $\sin x \le x$ for $x\in \brac{0, 1}$, and $M_{\nabla}$, $M_{\nabla^2}$, $L_{\nabla}$ and $L_{\nabla^2}$ are the quantities defined in Lemma~\ref{lem:mag_lip_fq}. By the integral form of Taylor's theorem in Lemma \ref{lem:Taylor-integral-form} and the result above, we have 
\begin{align*}
\abs{f\paren{\exp_{\mb q}(\mb \delta)} - \widehat{f}\paren{\mb q,\mb \delta} } &=
\magnitude{ \zeta(t) - \left( \zeta(0) + t \dot{\zeta}(0) + \tfrac{t^2}{2} \ddot{\zeta}(0) \right) } \\
& = \magnitude{ t^2\int_0^1 \paren{1-s} \ddot{\zeta}\paren{st}\; ds - \tfrac{t^2}{2} \ddot{\zeta}(0) }  \\
& =  t^2\magnitude{ \int_0^1 \paren{1-s} \brac{\ddot{\zeta}\paren{st} - \ddot{\zeta}\paren{0} } \; ds }   \\
& \le t^2 \int_0^1 \paren{1-s}st \eta_f\; ds = \frac{\eta_f t^3}{6},
\end{align*}
with $t = \norm{\mb \delta}{}$ we obtain the desired result. 
\end{proof}

\begin{proof}[of Lemma~\ref{lem:alg_gradient_func}] \label{proof:lem_alg_gradient_func}
By the integral form of Taylor's theorem in Lemma \ref{lem:Taylor-integral-form}, for any $t \in \brac{0, \frac{3\Delta}{2\pi\sqrt{n}}}$, we have 
\begin{align*}
& g\left( \mb w - t \frac{\mb w}{\norm{\mb w}{}} \right) \\
=\; & g(\mb w) - t\int_0^1 \innerprod{\nabla g\paren{\mb w -st\frac{\mb w}{\norm{\mb w}{}}}}{\frac{\mb w}{\norm{\mb w}{}}}\; ds \\
=\; & g\paren{\mb w} - t\frac{\mb w^* \nabla g\paren{\mb w}}{\norm{\mb w}{}} + t\int_0^1 \innerprod{\nabla g\paren{\mb w} - \nabla g\paren{\mb w -st\frac{\mb w}{\norm{\mb w}{}}}}{\frac{\mb w}{\norm{\mb w}{}}}\; ds \\
=\; & g\paren{\mb w} - t\frac{\mb w^* \nabla g\paren{\mb w}}{\norm{\mb w}{}} + t\int_0^1 \paren{\innerprod{\nabla g\paren{\mb w}}{\frac{\mb w}{\norm{\mb w}{}}} - \innerprod{\nabla g\paren{\mb w - st \frac{\mb w}{\norm{\mb w}{}}}}{\frac{\mb w - st\mb w/\norm{\mb w}{}}{\norm{\mb w - st \mb w/\norm{\mb w}{}}{}}}}\; ds \\
\le\; & g\paren{\mb w} - t\frac{\mb w^* \nabla g\paren{\mb w}}{\norm{\mb w}{}} + \frac{L_g}{2} t^2 \le g\paren{\mb w} - t\beta_g + \frac{L_g}{2}t^2. 
\end{align*}
Minimizing this function over $t \in \left[0, \frac{3\Delta }{ 2\pi \sqrt{n}} \right]$, we obtain that there exists a $\mb w' \in \mc B\left(\mb w,\frac{3\Delta }{ 2\pi \sqrt{n} } \right)$ such that 
\begin{align*}
g(\mb w') \;\le\; g(\mb w) - \min \set{ \frac{\beta_g^2}{2 L_g}, \frac{3\beta_g \Delta}{4\pi\sqrt{n}} }.
\end{align*}
Given such a $\mb w'\in \mc B\left(\mb w,\frac{3\Delta }{ 2\pi \sqrt{n} } \right)$, there must exist some $\mb \delta\in T_{\mb q}\bb S^{n-1}$ such that $\mb q(\mb w') = \exp_{\mb q}(\mb \delta)$. 
It remains to show that $\norm{\mb \delta}{}\leq \Delta$. By Lemma~\ref{lem:lip-h-mu}, we know that $\norm{\mb q(\mb w') - \mb q\paren{\mb w}}{} \le 2 \sqrt{n} \norm{\mb w' - \mb w}{} \le 3\Delta/\pi$. Hence,  
\begin{align*}
\norm{\exp_{\mb q}\paren{\mb \delta} - \mb q}{}^2 = \norm{\mb  q\paren{1-\cos\norm{\mb \delta}{}} + \frac{\mb \delta}{\norm{\mb \delta}{}} \sin\norm{\mb \delta}{}}{}^2 = 2 - 2\cos \norm{\mb \delta}{} = 4\sin^2 \frac{\norm{\mb \delta}{}}{2} \le \frac{9\Delta^2}{\pi^2}, 
\end{align*}
which means that $\sin\paren{\norm{\mb \delta}{}/2} \le 3\Delta/\paren{2\pi}$. Because $\sin x \ge \tfrac{3}{\pi} x$ over $x \in \brac{0, \pi/6}$, it implies that $\norm{\mb \delta}{} \le \Delta$. Since $g(\mb w) = f(\mb q(\mb w))$, by summarizing all the results, we conclude that there exists a $\mb \delta$ with $\norm{\mb \delta}{}\leq \Delta$, such that
\begin{align*} 
f(\exp_{\mb q}(\mb \delta)) \le f(\mb q ) - \min \set{ \frac{\beta_g^2}{2 L_g}, \frac{3\beta_g \Delta}{4\pi\sqrt{n}} },
\end{align*}
as claimed. 
\end{proof}

\begin{proof}[of Lemma~\ref{lem:alg_neg_cuv_func}] \label{proof:lem_alg_neg_cuv_func}
Let $\sigma = \mathrm{sign}\paren{ \mb w^* \nabla g(\mb w) }$. For any $t \in \brac{0, \frac{\Delta}{2\sqrt{n}}}$, by integral form of Taylor's theorem in Lemma \ref{lem:Taylor-integral-form}, we have  
\begin{align*}
& g\left( \mb w - t \sigma \frac{\mb w}{\norm{\mb w}{}} \right) \\
=\; & g(\mb w) - t \sigma \frac{\mb w^* \nabla g(\mb w) }{\norm{\mb w}{}} + t^2 \int_0^1 \paren{1-s} \frac{\mb w^* \nabla^2 g\paren{\mb w - st\sigma \frac{\mb w}{\norm{\mb w}{}}}\mb w}{\norm{\mb w}{}^2}\;ds \\
\le\; & g(\mb w) + \frac{t^2}{2} \frac{\mb w^* \nabla^2 g(\mb w) \mb w }{\norm{\mb w}{}^2} + t^2 \int_0^1 \brac{\paren{1-s} \frac{\mb w^* \nabla^2 g\paren{\mb w - st\sigma \frac{\mb w}{\norm{\mb w}{}}}\mb w}{\norm{\mb w}{}^2} -  \paren{1-s} \frac{\mb w^* \nabla^2 g(\mb w) \mb w }{\norm{\mb w}{}^2}} \; ds \\
=\; & g(\mb w) + \frac{t^2}{2} \frac{\mb w^* \nabla^2 g(\mb w) \mb w }{\norm{\mb w}{}^2} \\
& +  t^2\int_0^1 \paren{1-s}\brac{\frac{\paren{\mb w - st\sigma \frac{\mb w}{\norm{\mb w}{}}}^* \nabla^2 g\paren{\mb w - st\sigma \frac{\mb w}{\norm{\mb w}{}}}\paren{\mb w - st\sigma \frac{\mb w}{\norm{\mb w}{}}}}{\norm{\mb w - st\sigma \frac{\mb w}{\norm{\mb w}{}}}{}^2} -\frac{\mb w^* \nabla^2 g(\mb w) \mb w }{\norm{\mb w}{}^2} }\; ds\\
\le\; & g(\mb w) - \frac{t^2}{2} \betaconcave + t^2 \int_0^1\paren{1-s}s\Lconcave t\; ds \;\leq\; g(\mb w) - \frac{t^2}{2} \betaconcave + \frac{ t^3}{6} \Lconcave.
\end{align*}
Minimizing this function over $t \in \left[ 0, \frac{3\Delta}{2\pi \sqrt{n}} \right]$, we obtain 
\begin{align*}
t_\star = \min \set{ \frac{2\betaconcave}{\Lconcave}, \frac{3\Delta}{2\pi \sqrt{n}} },
\end{align*}
and there exists a $\mb w' = \mb w - t_\star \sigma \frac{\mb w}{\norm{\mb w}{}}$ such that
\begin{align*}
g\left( \mb w - t_\star \sigma \frac{\mb w}{\norm{\mb w}{}} \right) \;\le\; g(\mb w) - \min \set{ \frac{2 \betaconcave^3}{3 \Lconcave^2}, \frac{3\Delta^2 \betaconcave}{8\pi^2 n} }. 
\end{align*}
By arguments identical to those used in Lemma \ref{lem:alg_gradient_func}, there exists a tangent vector $\mb \delta\in T_{\mb q}\bb S^{n-1}$ such that $\mb q(\mb w') = \exp_{\mb q}(\mb \delta)$ and $\norm{\mb \delta}{}\leq \Delta$. This completes the proof. 
\end{proof}

\begin{proof}[of Lemma~\ref{lem:alg_strcvx_func}] \label{proof:lem_alg_strcvx_func}
For any $t \in \brac{0, \frac{\Delta}{\norm{\grad f\paren{\mb q^{(k)}}}{}}}$, it holds that $\norm{t\; \grad f\paren{\mb q^{(k)}}}{} \le \Delta$, and the quadratic approximation 
\begin{align*}
\widehat{f}\paren{\mb q^{(k)}, -t\; \grad f\paren{\mb q^{(k)}}}
& \le f\paren{\mb q^{(k)}} - t\norm{\grad f\paren{\mb q^{(k)}}}{}^2 + \frac{M_H}{2}t^2 \norm{\grad f\paren{\mb q^{(k)}}}{}^2 \\
& = f\paren{\mb q^{(k)}} - t\paren{1-\frac{1}{2}M_H t}\norm{\grad f\paren{\mb q^{(k)}}}{}^2. 
\end{align*}
Taking $t_0 = \min\set{\frac{\Delta}{\norm{\grad f\paren{\mb q^{(k)}}}{}}, \frac{1}{M_H}}$, we obtain 
\begin{align} \label{eq:alg_strcvx_key1}
\widehat{f}\paren{\mb q^{(k)}, -t_0\; \grad f\paren{\mb q^{(k)}}} \le f\paren{\mb q^{(k)}} - \frac{1}{2}\min\set{\frac{\Delta}{\norm{\grad f\paren{\mb q^{(k)}}}{}}, \frac{1}{M_H}} \norm{\grad f\paren{\mb q^{(k)}}}{}^2. 
\end{align}
Now let $\mb U$ be an arbitrary orthonormal basis for $T_{\mb q^{(k)}} \bb S^{n-1}$. Since the norm constraint is active, by the optimality condition in \eqref{eqn:ts-optimal-solution-1}, we have 
\begin{align*}
\Delta & \le \norm{\brac{\mb U^* \Hess f\paren{\mb q^{(k)}} \mb U}^{-1} \mb U^* \grad f\paren{\mb q^{(k)}}}{} \\
& \le \norm{\brac{\mb U^* \Hess f\paren{\mb q^{(k)}} \mb U}^{-1}}{} \norm{\mb U^* \grad f\paren{\mb q^{(k)}}}{} \le \frac{\norm{\grad f \paren{\mb q^{(k)}}}{}}{m_H},  
\end{align*} 
which means that $\norm{\grad f \paren{\mb q^{(k)}}}{} \ge m_H\Delta$. Substituting this into~\eqref{eq:alg_strcvx_key1}, we obtain 
\begin{align*}
\widehat{f}\paren{\mb q^{(k)}, -t_0\; \grad f\paren{\mb q^{(k)}}} \le f\paren{\mb q^{(k)}} - \frac{1}{2} \min \Brac{m_H \Delta^2, \frac{m_H^2}{M_H} \Delta^2  }\leq f\paren{\mb q^{(k)}}- \frac{m_H^2 \Delta^2 }{2M_H}.
\end{align*}
By the key comparison result established in proof of Lemma~\ref{lem:alg_approx_bd2}, we have 
\begin{align*}
f\paren{\exp_{\mb q^{(k)}}\paren{-t_0\grad f\paren{\mb q^{(k)}}}} & \le \widehat{f}\paren{\mb q^{(k)}, -t_0\; \grad f\paren{\mb q^{(k)}}} + \frac{1}{6} \eta_f\Delta^3 \\
& \le  f\paren{\mb q^{(k)}} - \frac{m_H^2 \Delta^2}{M_H} + \frac{1}{6} \eta_f\Delta^3. 
\end{align*}
This completes the proof. 
\end{proof}

It takes certain delicate work to prove Lemma~\ref{lem:alg_strcvx_lb}. Basically to use discretization argument, the degree of continuity of the Hessian is needed. The tricky part is that for continuity, we need to compare the Hessian operators at different points, while these Hessian operators are only defined on the respective tangent planes. This is the place where parallel translation comes into play. The next two lemmas compute spectral bounds for the forward and inverse parallel translation operators. 
\begin{lemma} \label{lem:alg_tsp_op}
For $\tau \in [0,1]$ and $\norm{\mb \delta}{} \le 1/ 2$, we have
\begin{eqnarray}
\norm{ \mc P_{\gamma}^{\tau \leftarrow 0} - \mb I }{} &\le& \frac{5}{4}\tau \norm{\mb \delta}{}, \\
\norm{ \mc P_{\gamma}^{0 \leftarrow \tau} - \mb I }{} &\le& \frac{3}{2}\tau \norm{\mb \delta}{}.
\end{eqnarray}
\end{lemma}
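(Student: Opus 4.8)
The plan is to work directly from the closed-form expression \eqref{eq:alg_tsp_op} for the parallel translation operator $\mc P_{\gamma}^{\tau \leftarrow 0}$, which is already available in the excerpt:
\begin{align*}
\mc P_{\gamma}^{\tau \leftarrow 0} = \mb I + \left( \cos( \tau \norm{\mb \delta}{} ) - 1 \right) \frac{\mb \delta \mb \delta^*}{\norm{\mb \delta }{}^2} - \sin\left( \tau \norm{\mb \delta }{} \right) \frac{\mb q \mb \delta^*}{\norm{\mb \delta}{}}.
\end{align*}
Subtracting $\mb I$ and applying the triangle inequality gives
\begin{align*}
\norm{ \mc P_{\gamma}^{\tau \leftarrow 0} - \mb I }{} \le \abs{\cos(\tau \norm{\mb \delta}{}) - 1} \cdot \norm{\tfrac{\mb \delta \mb \delta^*}{\norm{\mb \delta}{}^2}}{} + \abs{\sin(\tau \norm{\mb \delta}{})} \cdot \norm{\tfrac{\mb q \mb \delta^*}{\norm{\mb \delta}{}}}{}.
\end{align*}
Both rank-one operators $\mb \delta \mb \delta^* / \norm{\mb \delta}{}^2$ and $\mb q \mb \delta^* / \norm{\mb \delta}{}$ have unit operator norm (since $\mb q$ and $\mb \delta/\norm{\mb \delta}{}$ are unit vectors), so the bound reduces to a purely scalar estimate: $\norm{ \mc P_{\gamma}^{\tau \leftarrow 0} - \mb I }{} \le (1 - \cos(\tau\norm{\mb \delta}{})) + \sin(\tau\norm{\mb \delta}{})$. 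Using $1 - \cos x \le x^2/2$ and $\sin x \le x$, together with $\tau \in [0,1]$ and $\norm{\mb \delta}{} \le 1/2$ so that $\tau\norm{\mb \delta}{} \le 1/2$, one gets $1 - \cos(\tau\norm{\mb \delta}{}) \le \tfrac12 (\tau\norm{\mb \delta}{})^2 \le \tfrac14 \tau\norm{\mb \delta}{}$, hence the sum is at most $\tfrac14\tau\norm{\mb \delta}{} + \tau\norm{\mb \delta}{} = \tfrac54 \tau\norm{\mb \delta}{}$, which is the first claim.

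For the second claim, the cleanest route is to observe that $\mc P_{\gamma}^{0 \leftarrow \tau}$ is, when restricted to the appropriate tangent space, the inverse of $\mc P_{\gamma}^{\tau \leftarrow 0}$, and in fact the matrix realization is obtained by replacing $\tau$ with $-\tau$ in the formula (parallel transport backward along the same geodesic), or equivalently by transposing since parallel translation along a geodesic is an orthogonal map. Either way one obtains a closed form of exactly the same shape, and the same rank-one operator-norm bookkeeping gives $\norm{ \mc P_{\gamma}^{0 \leftarrow \tau} - \mb I }{} \le (1 - \cos(\tau\norm{\mb \delta}{})) + \sin(\tau\norm{\mb \delta}{}) \le \tfrac54 \tau \norm{\mb \delta}{}$, which is even stronger than the stated $\tfrac32 \tau\norm{\mb \delta}{}$; if one instead prefers to argue abstractly via a Neumann series, $\norm{(\mb I - E)^{-1} - \mb I}{} \le \norm{E}{}/(1-\norm{E}{})$ with $E = \mb I - \mc P_{\gamma}^{\tau\leftarrow 0}$ and $\norm{E}{}\le 5/8$ yields $\norm{E}{}/(1-\norm{E}{}) \le \tfrac53 \norm{E}{} \le \tfrac53 \cdot \tfrac54 \tau\norm{\mb \delta}{}$, which is slightly worse; the constant $\tfrac32$ in the statement is comfortably accommodated by the direct computation.

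The only mild subtlety — and the one place to be careful — is to make sure the operator norms are taken consistently: $\mc P_{\gamma}^{\tau\leftarrow 0}$ as written in \eqref{eq:alg_tsp_op} is an $n\times n$ matrix extending the genuine parallel translation operator on $T_{\mb q}\bb S^{n-1}$, so when subtracting $\mb I$ we are comparing two $n \times n$ matrices and the rank-one norm estimates above are exact. No concentration or probabilistic input is needed; this is a deterministic geometric fact. The expected main obstacle is essentially bookkeeping: confirming that the backward operator $\mc P_{\gamma}^{0\leftarrow\tau}$ has the claimed closed form (or reducing to the Neumann-series bound) and then verifying that $\tau\norm{\mb \delta}{}\le 1/2$ suffices to push the quadratic term $1-\cos(\cdot)$ below the linear budget. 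I would carry out the forward bound first, then either quote orthogonality of parallel transport to transpose, or invoke the Neumann series, to get the inverse bound.
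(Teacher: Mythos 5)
Your bound on $\norm{\mc P_{\gamma}^{\tau\leftarrow 0} - \mb I}{}$ is correct and is exactly the paper's argument (triangle inequality on the rank-one corrections, $1-\cos x\le x^2/2\le x/4$ and $\sin x\le x$ for $x=\tau\norm{\mb\delta}{}\le 1/2$). The gap is in the second inequality. The paper defines $\mc P_{\gamma}^{0\leftarrow\tau}$ as the \emph{matrix inverse} of the $n\times n$ matrix in~\eqref{eq:alg_tsp_op}, and none of your shortcuts produces that inverse. The matrix $\mc P_{\gamma}^{\tau\leftarrow 0}$ is \emph{not} orthogonal as an $n\times n$ matrix: it fixes $\mb q$ (since $\mb\delta^*\mb q=0$) and sends $\mb\delta/\norm{\mb\delta}{}$ to $\cos(\tau\norm{\mb\delta}{})\,\mb\delta/\norm{\mb\delta}{}-\sin(\tau\norm{\mb\delta}{})\,\mb q$, and these two images are not orthogonal; only its restriction $T_{\mb q}\bb S^{n-1}\to T_{\gamma(\tau)}\bb S^{n-1}$ is an isometry. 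Hence the transpose is not the inverse. Likewise, replacing $\tau$ by $-\tau$ in~\eqref{eq:alg_tsp_op} does not give the inverse (multiply the two rank-one updates: the cross term does not cancel), and the ``same-shape'' matrix for backward transport along the reversed geodesic, based at $\gamma(\tau)$ with direction $-\dot\gamma(\tau)$, agrees with the inverse only on $T_{\gamma(\tau)}\bb S^{n-1}$, not as an $n\times n$ matrix, so the $\tfrac54\tau\norm{\mb\delta}{}$ bound you derive is a bound on a different matrix than the one in the lemma.

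Your Neumann-series fallback is the right kind of idea but does not close the gap either: with $E=\mb I-\mc P_{\gamma}^{\tau\leftarrow 0}$ and $\norm{E}{}\le 5/8$ one has $1/(1-\norm{E}{})\le 8/3$ (not the $5/3$ you wrote), so the bound is $\norm{E}{}/(1-\norm{E}{})\le \tfrac{10}{3}\tau\norm{\mb\delta}{}$, which exceeds the claimed $\tfrac32\tau\norm{\mb\delta}{}$ whenever $\tau\norm{\mb\delta}{}$ is not small (e.g.\ at $\tau\norm{\mb\delta}{}=1/2$ it gives $5/3$ versus the required $3/4$). The paper instead inverts exactly: the correction in~\eqref{eq:alg_tsp_op} is $\mb u\mb v^*$ with $\mb v=\mb\delta/\norm{\mb\delta}{}$ and $\mb v^*\mb u=\cos(\tau\norm{\mb\delta}{})-1$ (again using $\mb q^*\mb\delta=0$), so Sherman--Morrison gives $\mc P_{\gamma}^{0\leftarrow\tau}-\mb I=-\mb u\mb v^*/\cos(\tau\norm{\mb\delta}{})$, whence $\norm{\mc P_{\gamma}^{0\leftarrow\tau}-\mb I}{}\le \tfrac54\tau\norm{\mb\delta}{}/\cos(1/2)\le\tfrac32\tau\norm{\mb\delta}{}$. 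You would need this exact-inverse computation (or an equivalent) to obtain the stated constant.
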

\begin{proof}
By~\eqref{eq:alg_tsp_op}, we have 
\begin{align*}
\norm{ \mc P_{\gamma}^{\tau \leftarrow 0} - \mb I }{}
& = \norm{\left( \cos( \tau \norm{\mb \delta}{} ) - 1 \right) \frac{\mb \delta \mb \delta^*}{\norm{\mb \delta }{}^2} - \sin\left( \tau \norm{\mb \delta }{} \right) \frac{\mb q \mb \delta^*}{\norm{\mb \delta}{}}}{} \\
& \le 1-\cos\paren{\tau \norm{\mb \delta}{}} + \sin\paren{\tau \norm{\mb \delta}{}} \\
& \le 2\sin^2\paren{\frac{\tau\norm{\mb \delta}{}}{2}} + \sin\paren{\tau \norm{\mb \delta}{}} \le \frac{1}{4}\tau \norm{\mb \delta}{} +  \tau \norm{\mb \delta}{} \le \frac{5}{4} \tau \norm{\mb \delta}{}, 
\end{align*}
where we have used the fact $\sin\paren{t} \le t$ and $1-\cos x = 2\sin^2\paren{x/2}$. Moreover, $\mc P_{\gamma}^{0 \leftarrow \tau}$ is in the form of $\paren{\mb I + \mb u \mb v^*}^{-1}$ for some vectors $\mb u$ and $\mb v$. By the Sherman-Morrison matrix inverse formula, i.e., $\paren{\mb I + \mb u \mb v^*}^{-1} = \mb I - \mb u \mb v^*/\paren{1+ \mb v^* \mb u}$ (justified as $\norm{\left( \cos( \tau \norm{\mb \delta}{} ) - 1 \right) \frac{\mb \delta \mb \delta^*}{\norm{\mb \delta }{}^2} - \mb q \sin\left( \tau \norm{\mb \delta }{} \right) \frac{\mb \delta^*}{\norm{\mb \delta}{}}}{} \le 5\tau\norm{\mb \delta}{}/4 \le 5/8 < 1$ as shown above), we have 
\begin{align*}
& \norm{ \mc P_{\gamma}^{0 \leftarrow \tau} - \mb I }{} \\
=\; & \norm{\left( \cos( \tau \norm{\mb \delta}{} ) - 1 \right) \frac{\mb \delta \mb \delta^*}{\norm{\mb \delta }{}^2} - \mb q \sin\left( \tau \norm{\mb \delta }{} \right) \frac{\mb \delta^*}{\norm{\mb \delta}{}}}{} \frac{1}{1+\paren{\cos\paren{\tau \norm{\mb \delta}{}}-1}} \quad (\text{as}\; \mb q^* \mb \delta = 0)\\
\le\; & \frac{5}{4}\tau \norm{\mb \delta}{} \frac{1}{\cos\paren{\tau \norm{\mb \delta}{}} } \le \frac{5}{4}\tau \norm{\mb \delta}{} \frac{1}{\cos\paren{1/2}} \le \frac{3}{2} \tau \norm{\mb \delta}{},  
\end{align*}
completing the proof. 
\end{proof}
The next lemma establish the ``local-Lipschitz" property of the Riemannian Hessian. 
\begin{lemma} \label{lem:rie_hess_lip}
Let $\gamma(t) = \exp_{\mb q}\paren{t\mb \delta}$ denotes a geodesic curve on $\bb S^{n-1}$. Whenever $\norm{\mb \delta}{} \le 1/2$ and $\tau \in [0,1]$,
\begin{eqnarray}
\norm{ \mc P_{\gamma}^{0 \leftarrow \tau} \Hess f (\gamma(\tau)) \mc P_{\gamma}^{\tau \leftarrow 0} - \Hess f (\mb q) }{} \le L_H\cdot\tau\norm{\mb \delta}{}, 
\end{eqnarray}
where $L_H = \frac{5}{2\mu^2}n^{3/2} \norm{\mb X}{\infty}^3 + \frac{9}{\mu}n \norm{\mb X}{\infty}^2 + 9\sqrt{n} \norm{\mb X}{\infty}$. 
\end{lemma}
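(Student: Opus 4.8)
The plan is to reduce the ``local-Lipschitz'' bound for the Riemannian Hessian to three ingredients: the spectral bounds on the parallel-translation operators $\mc P_{\gamma}^{\tau \leftarrow 0}$ and $\mc P_{\gamma}^{0 \leftarrow \tau}$ from Lemma~\ref{lem:alg_tsp_op}, the magnitude/Lipschitz bounds on $\nabla f$ and $\nabla^2 f$ from Lemma~\ref{lem:mag_lip_fq}, and the explicit formulas~\eqref{eq:fq_rie_grad}--\eqref{eq:fq_rie_hess} for the Riemannian objects in terms of the Euclidean ones. First I would write $\Hess f(\gamma(\tau)) = \mc P_{T_{\gamma(\tau)}} \paren{\nabla^2 f(\gamma(\tau)) - \innerprod{\nabla f(\gamma(\tau))}{\gamma(\tau)}\mb I}\mc P_{T_{\gamma(\tau)}}$, and observe that since $\mc P_{\gamma}^{\tau\leftarrow 0}$ maps $T_{\mb q}\bb S^{n-1}$ isometrically onto $T_{\gamma(\tau)}\bb S^{n-1}$ (and $\mc P_{\gamma}^{0\leftarrow\tau}$ is its inverse), the conjugated operator $\mc P_{\gamma}^{0\leftarrow\tau} \Hess f(\gamma(\tau)) \mc P_{\gamma}^{\tau\leftarrow 0}$ is really the same intrinsic operator transported back to $T_{\mb q}\bb S^{n-1}$. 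Up to the projector bookkeeping, comparing it to $\Hess f(\mb q)$ amounts to comparing $\nabla^2 f(\gamma(\tau))$ with $\nabla^2 f(\mb q)$ and the scalar $\innerprod{\nabla f(\gamma(\tau))}{\gamma(\tau)}$ with $\innerprod{\nabla f(\mb q)}{\mb q}$, after absorbing the discrepancy between the two projectors into translation-operator error terms.

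Concretely, I would telescope the difference as a sum of several terms, each of which is a product of bounded factors with one ``small'' factor. Writing $\mb q' = \gamma(\tau)$ and using $\norm{\mb q - \mb q'}{} = 2\sin(\tau\norm{\mb \delta}{}/2) \le \tau\norm{\mb \delta}{}$: (i) terms where $\nabla^2 f$ is evaluated at $\mb q'$ versus $\mb q$, controlled by $L_{\nabla^2}\tau\norm{\mb \delta}{}$; (ii) terms where the scalar curvature correction $\innerprod{\nabla f(\cdot)}{\cdot}$ changes, bounded using $\abs{\innerprod{\nabla f(\mb q')}{\mb q'}} - \innerprod{\nabla f(\mb q)}{\mb q}} \le L_\nabla\norm{\mb q-\mb q'}{} + M_\nabla\norm{\mb q - \mb q'}{} \lesssim (L_\nabla + M_\nabla)\tau\norm{\mb \delta}{}$; and (iii) the ``mismatch'' terms coming from replacing $\mc P_{T_{\mb q'}}$ by the translated projector $\mc P_{\gamma}^{\tau\leftarrow 0}\mc P_{T_{\mb q}}\mc P_{\gamma}^{0\leftarrow\tau}$, each of which carries a factor $\norm{\mc P_{\gamma}^{\tau\leftarrow 0} - \mb I}{}$ or $\norm{\mc P_{\gamma}^{0\leftarrow\tau} - \mb I}{}$, which Lemma~\ref{lem:alg_tsp_op} bounds by $\tfrac{5}{4}\tau\norm{\mb \delta}{}$ and $\tfrac{3}{2}\tau\norm{\mb \delta}{}$ respectively. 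Each such factor multiplies an operator bounded by $M_{\nabla^2} + M_\nabla$ (the operator norm of $\Hess f$, already estimated above Proposition~\ref{lem:TRM-lemma-iii}). Collecting the coefficients, substituting the bounds $M_\nabla \le \sqrt{n}\norm{\mb X}{\infty}$, $M_{\nabla^2} \le \tfrac{n}{\mu}\norm{\mb X}{\infty}^2$, $L_\nabla \le \tfrac{n}{\mu}\norm{\mb X}{\infty}^2$, $L_{\nabla^2} \le \tfrac{2}{\mu^2}n^{3/2}\norm{\mb X}{\infty}^3$ from Lemma~\ref{lem:mag_lip_fq}, and keeping only the dominant powers, yields $L_H = \tfrac{5}{2\mu^2}n^{3/2}\norm{\mb X}{\infty}^3 + \tfrac{9}{\mu}n\norm{\mb X}{\infty}^2 + 9\sqrt{n}\norm{\mb X}{\infty}$ as claimed, with the numerical constants coming from the $5/4$ and $3/2$ in Lemma~\ref{lem:alg_tsp_op} and the crude triangle-inequality splitting.

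The main obstacle I anticipate is purely organizational rather than conceptual: setting up the telescoping so that every intermediate object stays a genuine self-adjoint operator on the correct tangent space, and making sure the projector-mismatch terms are grouped so that each one really does pick up a translation-error factor rather than an $O(1)$ quantity. Once the right intermediate quantity is inserted, i.e., $\mc P_{\gamma}^{0\leftarrow\tau}\,\mc P_{T_{\mb q'}}\paren{\nabla^2 f(\mb q') - \innerprod{\nabla f(\mb q')}{\mb q'}\mb I}\mc P_{T_{\mb q'}}\,\mc P_{\gamma}^{\tau\leftarrow 0}$ as a bridge between the two sides, the remaining estimates are all routine applications of $\norm{\mb AB}{}\le\norm{\mb A}{}\norm{\mb B}{}$ and the triangle inequality. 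A secondary point to be careful about: the hypothesis $\norm{\mb \delta}{}\le 1/2$ is exactly what makes Lemma~\ref{lem:alg_tsp_op} applicable and keeps $\cos(\tau\norm{\mb \delta}{})$ bounded away from zero, so I would flag at the outset that the whole computation is carried out under this assumption, and I would not attempt to optimize the constant $L_H$ beyond what the stated form requires.
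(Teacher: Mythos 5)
Your proposal follows essentially the same route as the paper: insert a bridge term, bound the conjugation error by $\norm{\mc P_{\gamma}^{0\leftarrow\tau}-\mb I}{}$ and $\norm{\mc P_{\gamma}^{\tau\leftarrow 0}-\mb I}{}$ from Lemma~\ref{lem:alg_tsp_op} times the operator norm of the Hessian term ($M_{\nabla^2}+M_\nabla$), and control the intrinsic difference $\Hess f(\gamma(\tau))-\Hess f(\mb q)$ by the Lipschitz/magnitude estimates of Lemma~\ref{lem:mag_lip_fq} together with $\norm{\gamma(\tau)-\mb q}{}\lesssim\tau\norm{\mb \delta}{}$ and the projector-change bound. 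This matches the paper's proof (which additionally bounds $\norm{\gamma(\tau)\gamma^*(\tau)-\mb q\mb q^*}{}\le\tfrac{5}{2}\tau\norm{\mb \delta}{}$ for the projector mismatch), so the argument is correct and not materially different.
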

\begin{proof}
First of all, by \eqref{eq:fq_rie_hess} and using the fact that the operator norm of a projection operator is unitary bounded, we have
\begin{align*}
& \norm{\Hess f (\gamma(\tau)) - \Hess f (\mb q)}{} \\
\le\; & \norm{\mc P_{T_{\gamma\paren{\tau}}\bb S^{n-1}} \brac{\nabla^2 f\paren{\gamma\paren{\tau}} - \nabla^2 f\paren{\mb q} -\paren{ \innerprod{\nabla f\paren{\gamma\paren{\tau}}}{\gamma\paren{\tau}} - \innerprod{\nabla f\paren{\mb q}}{\mb q} }\mb I} \mc P_{T_{\gamma\paren{\tau}}\bb S^{n-1} } }{} \\
& \qquad + \left\|\mc P_{T_{\gamma\paren{\tau}}\bb S^{n-1}}  \paren{\nabla^2 f\paren{\mb q} - \innerprod{\nabla f\paren{\mb q}}{\mb q} \mb I} \mc P_{T_{\gamma\paren{\tau}}\bb S^{n-1}}\right. \\
& \qquad \left.  - \mc P_{T_{\mb q}\bb S^{n-1} } \paren{\nabla^2 f\paren{\mb q} - \innerprod{\nabla f\paren{\mb q}}{\mb q} \mb I}  \mc P_{T_{\mb q}\bb S^{n-1}}\right\| \\
\le\; & \norm{\nabla^2 f\paren{\gamma\paren{\tau}} - \nabla^2 f\paren{\mb q}}{} + \abs{\innerprod{\nabla f\paren{\gamma\paren{\tau}} - \nabla f\paren{\mb q}}{\gamma\paren{\tau}}} + \abs{\innerprod{\nabla f\paren{\mb q}}{\gamma\paren{\tau} - \mb q}} \\
& \qquad + \norm{\mc P_{T_{\gamma\paren{\tau}}\bb S^{n-1} } - \mc P_{T_{\mb q}\bb S^{n-1} }}{} \norm{\mc P_{T_{\gamma\paren{\tau}}\bb S^{n-1}} + \mc P_{T_{\mb q}\bb S^{n-1} }}{}\norm{\nabla^2 f\paren{\mb q} - \innerprod{\nabla f\paren{\mb q}}{\mb q} \mb I}{}. 
\end{align*}
By the estimates in Lemma~\ref{lem:mag_lip_fq}, we obtain
\begin{align}
& \norm{\Hess f (\gamma(\tau)) - \Hess f (\mb q)}{} \nonumber  \\
\le\; &  \frac{2}{\mu^2}n^{3/2} \norm{\mb X}{\infty}^3 \norm{\gamma\paren{\tau} - \mb q}{} + \frac{n}{\mu} \norm{\mb X}{\infty}^2  \norm{\gamma\paren{\tau} - \mb q}{} + \sqrt{n} \norm{\mb X}{\infty} \norm{\gamma\paren{\tau} - \mb q}{} \nonumber \\
& \qquad + 2\norm{\gamma\paren{\tau} \gamma^* \paren{\tau} - \mb q \mb q^*}{} \paren{\frac{n}{\mu}\norm{\mb X}{\infty}^2 + \sqrt{n} \norm{\mb X}{\infty}} \nonumber \\
\le & \paren{\frac{5}{2\mu^2} n^{3/2} \norm{\mb X}{\infty}^3 + \frac{25n}{4\mu} \norm{\mb X}{\infty}^2 + \frac{25}{4} \sqrt{n} \norm{\mb X}{\infty}} \tau \norm{\mb \delta}{}, \label{eq:alg_lip_key1}
\end{align}
where at the last line we have used the following estimates: 
\begin{align*}
\norm{\gamma\paren{\tau} - \mb q}{} & = \norm{\mb q \paren{\cos\paren{\tau \norm{\mb \delta}{}} -1} + \frac{\mb \delta}{\norm{\mb \delta}{}} \sin\paren{\tau \norm{\mb \delta}{}}}{} \le \frac{5}{4} \tau \norm{\mb \delta}{}, \; (\text{Proof of Lemma~\ref{lem:alg_tsp_op}}) \\
\norm{\gamma\paren{\tau} \gamma^* \paren{\tau} - \mb q \mb q^*}{} & \le \norm{\paren{\frac{\mb \delta \mb \delta^*}{\norm{\mb \delta}{}^2} - \mb q \mb q^*}\sin^2\paren{\tau \norm{\mb \delta}{}}}{} + 2\sin\paren{\tau \norm{\mb \delta}{}}\cos\paren{\tau \norm{\mb \delta}{}} \\
& \le \sin^2\paren{ \tau \norm{\mb \delta}{}} + \sin\paren{2\tau \norm{\mb \delta}{}} \le \frac{5}{2}\tau \norm{\mb \delta}{}. 
\end{align*}
Therefore, by Lemma \ref{lem:alg_tsp_op}, we obtain 
\begin{align*}
& \norm{ \mc P_{\gamma}^{0 \leftarrow \tau} \Hess f (\gamma(\tau)) \mc P_{\gamma}^{\tau \leftarrow 0} - \Hess f (\mb q) }{}  \\
\le\; & \norm{ \mc P_{\gamma}^{0 \leftarrow \tau} \Hess f (\gamma(\tau)) \mc P_{\gamma}^{\tau \leftarrow 0} -  \Hess f (\gamma(\tau)) \mc P_{\gamma}^{\tau \leftarrow 0}  }{} + \norm{ \Hess f (\gamma(\tau)) \mc P_{\gamma}^{\tau \leftarrow 0}  -  \Hess f (\gamma(\tau)) }{}  \\
& \qquad + \norm{\Hess f (\gamma(\tau)) - \Hess f (\mb q)}{} \nonumber \\
\le\; & \norm{\mc P_{\gamma}^{0 \leftarrow \tau} - \mb I}{} \norm{\Hess f (\gamma(\tau))}{} + \norm{\mc P_{\gamma}^{\tau \leftarrow 0} - \mb I}{} \norm{\Hess f (\gamma(t))}{} + \norm{\Hess f (\gamma(t)) - \Hess f (\mb q)}{} \\
\le\; & \frac{11}{4} \tau \norm{\mb \delta}{} \norm{\nabla^2 f\paren{\gamma\paren{\tau}} - \innerprod{\nabla f\paren{\gamma\paren{\tau}}}{\gamma\paren{t}} \mb I}{} + \norm{\Hess f (\gamma(\tau)) - \Hess f (\mb q)}{}. 
\end{align*}
By Lemma~\ref{lem:mag_lip_fq} and substituting the estimate in~\eqref{eq:alg_lip_key1}, we obtain the claimed result. 
\end{proof}

\begin{proof}[of Lemma~\ref{lem:alg_strcvx_lb}] \label{proof:lem_alg_strcvx_lb}
For any given $\mb q$ with $\norm{\mb w(\mb q)}{} \le \mu/(4\sqrt{2})$, assume $\mb U$ is an orthonormal basis for its tangent space $T_{\mb q}\bb S^{n-1}$. We could compare $\mb U^* \Hess f(\mb q) \mb U$ with $\nabla_{\mb w}^2 g(\mb w)$, and build on the known results for the latter. Instead, we present a direct proof here that yields tighter results as stated in the lemma. Again we first work with the ``canonical'' section in the vicinity of $\mb e_n$ with the ``canonical" reparametrization $\mb q(\mb w) = [\mb w; \sqrt{1-\|\mb w\|^2}]$. 

By definition of the Riemannian Hessian in~\eqref{eq:fq_rie_hess}, expressions of $\nabla^2 f$ and $\nabla f$ in~\eqref{eq:fq_grad} and~\eqref{eq:fq_hess}, and exchange of differential and expectation opeators (justified similarly as in Section~\ref{sec:proof_geo_exp}), we obtain 
\begin{align*}
\mb U^* \Hess \expect{f(\mb q)} \mb U 
& = \expect{\mb U^* \Hess f(\mb q) \mb U} \\
& = \expect{\mb U^* \nabla^2 f(\mb q) \mb U - \innerprod{\mb q}{\nabla f(\mb q)} \mb I_{n-1}} \\
& = \mb U^* \expect{\frac{1}{\mu} \Brac{1-\tanh^2\paren{\frac{\mb q^* \mb x}{\mu}}} \mb x \mb x^*} \mb U - \expect{\tanh\paren{\frac{\mb q^* \mb x}{\mu}} \mb q^* \mb x} \mb I_{n-1}. 
\end{align*} 
We have 
\begin{align*}
\mb U^* \expect{\frac{1}{\mu} \Brac{1-\tanh^2\paren{\frac{\mb q^* \mb x}{\mu}}} \mb x \mb x^*} \mb U \succeq \frac{1-\theta}{\mu} \mb U^* \expect{\Brac{1-\tanh^2\paren{\frac{\mb w^* \ol{\mb x}}{\mu}}}
\begin{bmatrix}
\ol{\mb x}\; \ol{\mb x}^* & \mb 0 \\
\mb 0^*   &  0
\end{bmatrix}} \mb U.
\end{align*}
Now consider any vector $\mb z \in T_{\mb q} \bb S^{n-1}$ such that $\mb z = \mb U \mb v$ for some $\mb v \in \R^{n-1}$ and $\|\mb z\| = 1$. Then 
\begin{align*}
\mb z^* \expect{\Brac{1-\tanh^2\paren{\frac{\mb w^* \ol{\mb x}}{\mu}}}
\begin{bmatrix}
\ol{\mb x} \;\ol{\mb x}^* & \mb 0 \\
\mb 0^*   &  0
\end{bmatrix}} \mb z \ge \frac{\theta}{\sqrt{2\pi}} (2 - 3\sqrt{2}/4) \|\ol{\mb z}\|^2 
\end{align*}
by proof of Proposition~\ref{prop:geometry_asymp_strong_convexity}, where $\ol{\mb z} \in \R^{n-1}$ as above is the first $n-1$ coordinates of $\mb z$. Now we know that $\innerprod{\mb q}{\mb z} = 0$, or 
\begin{align*}
\mb w^* \ol{\mb z} + q_n z_n = 0 \Longrightarrow \frac{\|\ol{\mb z}\|}{\abs{z_n}} = \frac{q_n}{\|\mb w\|} = \frac{\sqrt{1-\|\mb w\|^2}}{\|\mb w\|} \ge 50, 
\end{align*}
where we have used $\|\mb w\| \le \mu/(4\sqrt{2})$ and $\mu \le 1/10$ to obtain the last lower bound. Combining the above with the fact that $\|\mb z\| = 1$, we obtain 
\begin{align}
\mb U^* \expect{\frac{1}{\mu} \Brac{1-\tanh^2\paren{\frac{\mb q^* \mb x}{\mu}}} \mb x \mb x^*} \mb U 
& \succeq \frac{99}{100} \frac{1-\theta}{\mu} \frac{\theta}{\sqrt{2\pi}} (2 - 3\sqrt{2}/4) \mb I_{n-1} \\
&  \succeq \frac{99}{200\sqrt{2\pi}} (2 - 3\sqrt{2}/4) \frac{\theta}{\mu} \mb I_{n-1}, 
\end{align}
where we have simplified the expression using $\theta \le 1/2$. To bound the second term, 
\begin{align*}
& \expect{\tanh\paren{\frac{\mb q^* \mb x_k}{\mu}} \mb q^* \mb x_k} \\
=\; & \bb E_{\mc I} \brac{\bb E_{Z \sim \mc N\paren{0, \norm{\mb q_{\mc I}}{}^2}}\brac{\tanh(Z/\mu) Z}} \\
=\; & \frac{1}{\mu}\bb E_{\mc I} \brac{\|\mb q_{\mc I}\|^2\bb E_{Z \sim \mc N\paren{0, \norm{\mb q_{\mc I}}{}^2}}\brac{1-\tanh^2(Z/\mu)}}   \quad \text{(by Lemma B.1 in~\cite{sun2015complete_a})}  \\
\le\; & \frac{1}{\mu}\bb E_{\mc I} \brac{\bb E_{Z \sim \mc N\paren{0, \norm{\mb q_{\mc I}}{}^2}}\brac{1-\tanh^2(Z/\mu)}}. 
\end{align*}
Now we have the following estimate: 
\begin{align*}
& \bb E_{Z \sim \mc N\paren{0, \norm{\mb w_{\mc J}}{}^2 + q_n^2}}\brac{1-\tanh^2(Z/\mu)} \\
=\; & 2 \bb E_{Z \sim \mc N\paren{0, \norm{\mb w_{\mc J}}{}^2 + q_n^2}}\brac{\paren{1-\tanh^2(Z/\mu)} \indicator{Z > 0}} \\
\le\; & 8 \bb E_{Z \sim \mc N\paren{0, \norm{\mb w_{\mc J}}{}^2 +  q_n^2}}\brac{\exp(-2Z/\mu) \indicator{Z > 0}} \\
=\; & 8 \exp\paren{\frac{2\norm{\mb w_{\mc J}}{}^2 + 2 q_n^2}{\mu^2}} \Phi^c\paren{\frac{2\sqrt{\norm{\mb w_{\mc J}}{}^2 +  q_n^2}}{\mu}}  \quad \text{(by Lemma B.1 in~\cite{sun2015complete_a})} \\
\le\; & \frac{4}{\sqrt{2\pi}} \frac{\mu}{\sqrt{\norm{\mb w_{\mc J}}{}^2 +  q_n^2}}, 
\end{align*}
where at the last inequality we have applied Gaussian tail upper bound of Type II in Lemma~\ref{lem:gaussian_tail_est}.  Since $\norm{\mb w_{\mc J}}{}^2 +  q_n^2 \ge q_n^2 = 1-\norm{\mb w}{}^2 \ge 1 - \mu^2/32 \ge 31/32$ for $\norm{\mb w}{} \le \mu/(4\sqrt{2})$ and $\mu \le 1$, we obtain 
\begin{align} \label{eq:strcvx_qsp_key2}
\bb E_{Z \sim \mc N\paren{0, \norm{\mb w_{\mc J}}{}^2 + q_n^2}}\brac{1-\tanh^2(Z/\mu)} \le \frac{4}{\sqrt{2\pi}} \frac{\mu}{\sqrt{31/32}} \le \frac{4}{\sqrt{2\pi}} \mu. 
\end{align}
Collecting the above estimates, we obtain 
\begin{align} \label{eq:strcvx_qsp_key3}
\mb U^* \Hess \expect{f(\mb q)} \mb U \succeq \frac{99}{200\sqrt{2\pi}} (2 - 3\sqrt{2}/4) \frac{\theta}{\mu} \mb I_{n-1} - \frac{1}{\mu}\frac{4}{\sqrt{2\pi}} \mu \mb I_{n-1} \succeq \frac{1}{4\sqrt{2\pi}} \frac{\theta}{\mu} \mb I_{n-1}, 
\end{align}
where we have used the fact $\mu \le \theta/10$ to obtain the final lower bound.

Next we perform concentration analysis. For any $\mb q$, we can write 
\begin{align*}
\mb U^* \nabla^2 f(\mb q) \mb U = \frac{1}{p} \sum_{k=1}^p \mb W_k, \quad \text{with}\; \mb W_k \doteq \frac{1}{\mu} \brac{1-\tanh^2\paren{\frac{\mb q^* \mb x_k}{\mu}}} \mb U^* \mb x_k \mb x_k^* \mb U. 
\end{align*}
For any integer $m \ge 2$, we have 
\begin{align*}
\mb  0 \preceq \expect{\mb W_k^m} \preceq \frac{1}{\mu^m} \expect{\paren{\mb U^* \mb x_k \mb x_k^* \mb U}^m} \preceq \frac{1}{\mu^m} \expect{\norm{\mb x_k \mb x_k^*}{}^m}\mb I = \frac{1}{\mu^m} \expect{\norm{\mb x_k}{}^{2m}} \mb I \preceq \frac{1}{\mu^m} \bb E_{Z \sim \xi^2\paren{n}} \brac{Z^m} \mb I,  
\end{align*}
where we have used Lemma~\ref{lem:U-moments-bound} to obtain the last inequality. By Lemma~\ref{lem:chi_sq_moment}, we obtain 
\begin{align*}
\mb 0 \preceq \expect{\mb W_k^m} \preceq \frac{1}{\mu^m} \frac{m!}{2} \paren{2n}^m \mb I \preceq \frac{m!}{2} \paren{\frac{2n}{\mu}}^m \mb I. 
\end{align*}
Taking $R_{\mb W} = 2n/\mu$, and $\sigma^2_{\mb W} = 4n^2/\mu^2 \ge \expect{\mb W_k^2}$, by Lemma~\ref{lem:mc_bernstein_matrix}, we obtain 
\begin{align} \label{eq:strcvx_qsp_key4}
\prob{\norm{\frac{1}{p} \sum_{k=1}^p \mb W_k - \frac{1}{p} \sum_{k=1}^p \expect{\mb W_k}}{} > t/2} \le 2n\exp\paren{-\frac{p\mu^2t^2}{32n^2 + 8nt}}
\end{align}
for any $t > 0$. Similarly, we write
\begin{align*}
\innerprod{\nabla f(\mb q)}{\mb q} = \frac{1}{p} \sum_{k=1}^p Z_k, \quad \text{with}\; Z_k \doteq \tanh\paren{\frac{\mb q^* \mb x_k}{\mu}} \mb q^* \mb x_k. 
\end{align*}
For any integer $m \ge 2$, we have 
\begin{align*}
\expect{\abs{Z_k}^m} \le \expect{\abs{\mb q^* \mb x_k}^m} \le \bb E_{Z \sim \mc N\paren{0, 1}}\brac{\abs{Z}^m} \le \frac{m!}{2}, 
\end{align*}
where at the first inequality we used the fact $\abs{\tanh(\cdot)} \le 1$, at the second we invoked Lemma~\ref{lem:U-moments-bound}, and at the third we invoked Lemma~\ref{lem:guassian_moment}. Taking $R_Z = \sigma^2_{Z} = 1$, by Lemma~\ref{lem:mc_bernstein_scalar}, we obtain 
\begin{align} \label{eq:strcvx_qsp_key5}
\prob{\abs{\frac{1}{p}\sum_{k=1}^p Z_k - \frac{1}{p}\sum_{k=1}^p \expect{Z_k}} > t/2} \le 2\exp\paren{-pt^2/16}
\end{align}
for any $t > 0$. Gathering~\eqref{eq:strcvx_qsp_key4} and~\eqref{eq:strcvx_qsp_key5}, we obtain that for any $t > 0$, 
\begin{align} \label{eq:strcvx_qsp_key6}
& \prob{\norm{\mb U^* \Hess \expect{f(\mb q)} \mb U - \mb U^* \Hess f(\mb q) \mb U}{} > t} \nonumber \\
\le\; &  \prob{\norm{\mb U^* \nabla^2 f(\mb q) \mb U - \nabla^2 \expect{f(\mb q)}}{} > t/2} + \prob{\abs{\innerprod{\nabla f(\mb q)}{\mb q} - \innerprod{\nabla \expect{f(\mb q)}}{\mb q} }> t/2} \nonumber \\
\le\; & 2n\exp\paren{-\frac{p\mu^2 t^2}{32n^2 + 8nt}} + 2\exp\paren{-\frac{pt^2}{16}} \le 4n \exp\paren{-\frac{p\mu^2 t^2}{32n^2 + 8nt}}. 
\end{align}

Now we are ready to pull above results together for a discretization argument. For any $\eps \in (0, \mu/(4\sqrt{2}))$, there is an $\eps$-net $N_{\eps}$ of size at most $(3\mu/(4\sqrt{2} \eps))^n$ that covers the region $\set{\mb q: \norm{\mb w(\mb q)}{} \le \mu/(4\sqrt{2})}$. By Lemma~\ref{lem:rie_hess_lip}, the function $\Hess f(\mb q)$ is locally Lipschitz within each normal ball of radius 
\begin{align*}
\norm{\mb q - \exp_{\mb q} (1/2)}{} = \sqrt{2-2\cos(1/2)} \ge 1/\sqrt{5}
\end{align*}
with Lipschitz constant $L_H$ (as defined in Lemma~\ref{lem:rie_hess_lip}). Note that $\eps < \mu/(4\sqrt{2}) < 1/(4\sqrt{2}) < 1/\sqrt{5}$ for $\mu < 1$, so any choice of $\eps \in (0, \mu/(4\sqrt{2}))$ makes the Lipschitz constant $L_H$ valid within each $\eps$-ball centered around one element of the $\eps$-net. Let 
\begin{align*}
\event_{\infty} \doteq \set{1 \le \norm{\mb X_0}{\infty} \le 4\sqrt{\log(np)}}. 
\end{align*}
From Lemma~\ref{lem:X-infinty-tail-bound},  $\prob{\event_{\infty}^c} \le \theta \paren{np}^{-7} + \exp\paren{-0.3\theta np}$. By Lemma~\ref{lem:rie_hess_lip}, with at least the same probability, 
\begin{align*}
L_H \le C_1 \frac{n^{3/2}}{\mu^2} \log^{3/2} (np). 
\end{align*}
Set $\eps = \frac{\theta}{12\sqrt{2\pi}\mu L_H} <\mu/(4\sqrt{2})$, so 
\begin{align*}
\# N_\eps \le \exp\paren{n\log\frac{C_2 n^{3/2} \log^{3/2}(np)}{\theta}}. 
\end{align*}
Let $\event_H$ denote the event that 
\begin{align*}
\event_H \doteq \set{\max_{\mb q \in N_{\eps}} \norm{\mb U^* \Hess \expect{f(\mb q)} \mb U - \mb U^* \Hess f(\mb q) \mb U}{} \le \frac{\theta}{12\sqrt{2\pi}\mu} }. 
\end{align*}
On $\event_{\infty} \cap \event_H$, 
\begin{align*}
\sup_{\mb q: \norm{\mb w(\mb q)}{} \le \mu/(4\sqrt{2})} \norm{\mb U^* \Hess \expect{f(\mb q)} \mb U - \mb U^* \Hess f(\mb q) \mb U}{} \le \frac{\theta}{6\sqrt{2\pi}\mu}. 
\end{align*}
So on $\event_{\infty} \cap \event_H$, we have 
\begin{align}
\mb U^* \Hess f(\mb q) \mb U \succeq c_\sharp \frac{\theta}{\mu}
\end{align}
for any $c_\sharp \le 1/(12\sqrt{2\pi})$. Setting $t = \frac{\theta}{12\sqrt{2\pi}\mu}$ in~\eqref{eq:strcvx_qsp_key6}, we obtain that for any fixed $\mb q$ in this region, 
\begin{align*}
\prob{\norm{\mb U^* \Hess \expect{f(\mb q)} \mb U - \mb U^* \Hess f(\mb q) \mb U}{} > t} \le 4n \exp\paren{-\frac{p\theta^2}{c_3n^2 + c_4n\theta/\mu }}. 
\end{align*}
Taking a union bound, we obtain that 
\begin{align*}
\prob{\event_H^c} \le 4n\exp\paren{-\frac{p\theta^2}{c_3n^2 + c_4n\theta/\mu } + C_5 n\log n + C_6 n\log\log p}. 
\end{align*}
It is enough to make $p \ge C_7 n^3\log (n/(\mu \theta))/(\mu \theta^2)$ to make the failure probability small, completing the proof. 
\end{proof}

\begin{proof}[of Lemma~\ref{lem:alg_gradient_lb}] \label{proof:lem_alg_gradient_lb}
For a given $\mb q$, consider the vector $\mb r \doteq \mb q - \mb e_n/q_n$. It is easy to verify that $\innerprod{\mb q}{\mb r} = 0$, and hence $\mb r \in T_{\mb q} \bb S^{n-1}$. Now, by \eqref{eq:fq_grad} and \eqref{eq:fq_rie_grad}, we have 
\begin{align*}
\innerprod{\grad f\paren{\mb q}}{\mb r} 
& = \innerprod{\paren{\mb I - \mb q \mb q^*} \nabla f\paren{\mb q}}{\mb q - \mb e_n/q_n} \\
& = \innerprod{\paren{\mb I - \mb q \mb q^*} \nabla f\paren{\mb q}}{- \mb e_n/q_n} \\
& = \frac{1}{p}\sum_{k=1}^p \innerprod{\paren{\mb I - \mb q \mb q^*} \tanh\paren{\frac{\mb q^* \mb x_k}{\mu}} \mb x_k}{-\mb e_n/q_n}   \\
& = \frac{1}{p} \sum_{k=1}^p \tanh\paren{\frac{\mb q^* \mb x_k}{\mu}} \paren{-\frac{x_k\paren{n}}{q_n} + \mb q^* \mb x_k} \\
& = \frac{1}{p} \sum_{k=1}^p \tanh\paren{\frac{\mb q^* \mb x_k}{\mu}} \paren{\mb w^*\paren{\mb q} \overline{\mb x}_k - \frac{x_k\paren{n}}{q_n}\norm{\mb w\paren{\mb q}}{}^2} \\
& = \mb w^*\paren{\mb q} \nabla g\paren{\mb w}, 
\end{align*}
where to get the last line we have used~\eqref{eqn:lse-gradient}. Thus, 
\begin{align*}
\frac{\mb w^* \nabla g\paren{\mb w}}{\norm{\mb w}{}} = \frac{\innerprod{\grad f \paren{\mb q}}{\mb r}}{\norm{\mb w}{}} \le \norm{\grad f\paren{\mb q}}{} \frac{\norm{\mb r}{}}{\norm{\mb w}{}}, 
\end{align*}
where 
\begin{align*}
\frac{\norm{\mb r}{}^2}{\norm{\mb w}{}^2} = \frac{\norm{\mb w}{}^2 + \paren{q_n - \frac{1}{q_n}}^2}{\norm{\mb w}{}^2} = \frac{\norm{\mb w}{}^2 + \norm{\mb w}{}^4/q_n^2}{\norm{\mb w}{}^2} = \frac{1}{q_n^2} = \frac{1}{1-\norm{\mb w}{}^2} \le \frac{1}{1-\tfrac{1}{2000}} = \frac{2000}{1999}, 
\end{align*}
where we have invoked our assumption that $\norm{\mb w}{} \le \tfrac{1}{20\sqrt{5}}$. Therefore we obtain 
\begin{align*}
\norm{\grad f\paren{\mb q}}{} \ge \frac{\norm{\mb w}{}}{\norm{\mb r}{}} \frac{\mb w^* \nabla g\paren{\mb w}}{\norm{\mb w}{}} \ge \sqrt{\frac{1999}{2000}} \frac{\mb w^* \nabla g\paren{\mb w}}{\norm{\mb w}{}} \ge \frac{9}{10} \frac{\mb w^* \nabla g\paren{\mb w}}{\norm{\mb w}{}}, 
\end{align*}
completing the proof. 
\end{proof}

Proof of Lemma~\ref{lem:TR-step} combines the local Lipschitz property of $\Hess f(\mb q)$ in Lemma~\ref{lem:rie_hess_lip}, and the Taylor's theorem (manifold version, Lemma 7.4.7 of~\cite{absil2009}).  

\begin{proof}[of Lemma~\ref{lem:TR-step}] \label{proof:lem_TR-step}
Let $\gamma\paren{t}$ be the unique geodesic that satisfies $\gamma\paren{0} = \mb q^{(k)}$, $\gamma\paren{1} = \mb q^{(k+1)}$, and its directional derivative $\dot{\gamma}\paren{0} = \mb \delta_\star$. Since the parallel translation defined by the Riemannian connection is an isometry, then $\norm{\grad f (\mb q^{(k+1)}) }{} = \norm{\mc P_{\gamma}^{0 \leftarrow 1}\grad f(\mb q^{(k+1)}) }{}$. Moreover, since $\norm{\mb \delta_\star}{} \leq \Delta$, the unconstrained optimality condition in \eqref{eqn:ts-optimal-solution-1} implies that $\grad f(\mb q^{(k)}) + \Hess f(\mb q^{(k)}) \mb \delta_\star = \mb 0_{\mb q^{(k)}}$. Thus, by using Taylor's theorem in \cite{absil2009}, we have 
\begin{align*}
\norm{\grad f (\mb q^{(k+1)})}{} 
& = \norm{\mc P_{\gamma}^{0 \leftarrow 1}\grad f \paren{\mb q^{(k+1)}} - \grad f \paren{\mb q^{(k)}} - \Hess f\paren{\mb q^{(k)}} \mb \delta_\star}{} \\
& = \norm{\int_0^1 \brac{\mc P_{\gamma}^{0 \leftarrow t} \Hess f\paren{\gamma\paren{t}}\brac{\dot{\gamma}\paren{t}} -  \Hess f\paren{\mb q^{(k)}} \mb \delta_\star} \; dt}{} \; (\text{Taylor's theorem}) \\
& = \norm{\int_0^1 \paren{\mc P_{\gamma}^{0 \leftarrow t} \Hess f\paren{\gamma\paren{t}}\mc P_{\gamma}^{t \leftarrow 0} \mb \delta_\star -  \Hess f\paren{\mb q^{(k)}} \mb \delta_\star} \; dt}{}\\
& \le \norm{\mb \delta_\star}{} \int_0^1 \norm{\mc P_{\gamma}^{0 \leftarrow t} \Hess f\paren{\gamma\paren{t}}\mc P_{\gamma}^{t \leftarrow 0} -  \Hess f\paren{\mb q^{(k)}}}{} \; dt. 
\end{align*}
From the Lipschitz bound in Lemma~\ref{lem:rie_hess_lip} and the optimality condition in \eqref{eqn:ts-optimal-solution-1}, we obtain 
\begin{align*}
\norm{\grad f \paren{\mb q^{(k+1)}}}{}  
& \le \frac{1}{2}\norm{\mb \delta_\star}{}^2 L_H 
= \frac{L_H}{2m_H^2} \norm{\grad f \paren{\mb q^{(k)}}}{}^2.
\end{align*}
This completes the proof. 
\end{proof}

\begin{proof}[of Lemma~\ref{lem:TR-grad-opt}] \label{proof:lem_TR-grad-opt}
By invoking Taylor's theorem in \cite{absil2009}, we have
\begin{align*}
\mc P_{\gamma}^{0 \leftarrow \tau} \grad f \paren{\gamma\paren{\tau}} = \int_{0}^{\tau} \mc P_{\gamma}^{0 \leftarrow t} \Hess f\paren{\gamma\paren{t}} [\dot{\gamma}\paren{t}]\; dt. 
\end{align*}
Hence, we have 
\begin{align*}
\innerprod{\mc P_{\gamma}^{0 \leftarrow \tau} \grad f \paren{\gamma\paren{\tau}}}{\mb \delta} 
& = \int_{0}^{\tau} \innerprod{ \mc P_{\gamma}^{0 \leftarrow t} \Hess f\paren{\gamma\paren{t}} [\dot{\gamma}\paren{t}]}{\mb \delta}\; dt \\
& = \int_{0}^{\tau} \innerprod{ \mc P_{\gamma}^{0 \leftarrow t} \Hess f\paren{\gamma\paren{t}} [\dot{\gamma}\paren{t}]}{\mc P_{\gamma}^{0 \leftarrow t }\dot{\gamma}\paren{t}}\; dt \\
& = \int_{0}^{\tau} \innerprod{\Hess f\paren{\gamma\paren{t}} [\dot{\gamma}\paren{t}]}{\dot{\gamma}\paren{t}}\; dt\\
& \ge m_H \int_{0}^\tau \norm{\dot{\gamma}\paren{t}}{}^2 \; dt \ge m_H\tau \norm{\mb \delta}{}^2, 
\end{align*}
where we have used the fact that the parallel transport $\mc P_{\gamma}^{0 \leftarrow t}$ defined by the Riemannian connection is an isometry. On the other hand, we have 
\begin{align*}
\innerprod{\mc P_{\gamma}^{0 \leftarrow \tau} \grad f \paren{\gamma\paren{\tau}}}{\mb \delta} & \le \norm{\mc P_{\gamma}^{0 \leftarrow \tau} \grad f \paren{\gamma\paren{\tau}}}{} \norm{\mb \delta}{} = \norm{\grad f \paren{\gamma\paren{\tau}}}{} \norm{\mb \delta}{},
\end{align*}
where again used the isometry property of the operator $\mc P_{\gamma}^{0 \leftarrow \tau}$. Combining the two bounds above, we obtain 
\begin{align*}
\norm{\grad f \paren{\gamma\paren{\tau}}}{} \norm{\mb \delta}{} \ge m_H \tau \norm{\mb \delta}{}^2,  
\end{align*}
which implies the claimed result. 
\end{proof}

\section{Proofs of Technical Results for Section~\ref{sec:main_result}} \label{sec:proof_main}
We need one technical lemma to prove Lemma~\ref{lem:alg_rounding_orth} and the relevant lemma for complete dictionaries. 
\begin{lemma} \label{lem:rounding-0}
There exists a positive constant $C$, such that for all integer $n_1 \in \N$, $\theta \in \paren{0, 1/3}$, and $n_2 \in \N$ with $n_2 \ge Cn_1\log\paren{n_1/\theta}/\theta^2$, any random matrix $\mb M \in \R^{n_1 \times n_2} \sim_{i.i.d.} \mathrm{BG}(\theta)$ obeys the following. For any fixed index set $\mc I \subset [n_2]$ with $\abs{\mc I} \leq \frac{9}{8} \theta n_2$, it holds that 
\begin{align*}
\norm{\mb v^* \mb M_{\mc I^c}}{1} - \norm{\mb v^* \mb M_{\mc I}}{1} \ge \frac{n_2}{6}\sqrt{\frac{2}{\pi}} \theta \norm{\mb v}{}\quad \text{for all}\; \mb v \in \R^{n_1}, 
\end{align*}
with probability at least 
$
1-n_2^{-10} - \theta\paren{n_1n_2}^{-7} - \exp\paren{-0.3\theta n_1 n_2}. 
$
\end{lemma}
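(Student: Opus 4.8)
The plan is to normalize to the sphere, lower-bound the expectation of the objective, establish pointwise concentration by a moment-controlled Bernstein inequality, and finish with a covering argument over $\bb S^{n_1-1}$ whose net cardinality is balanced against the sample complexity. By positive homogeneity of both sides in $\mb v$, it suffices to prove the inequality for all $\mb v \in \bb S^{n_1-1}$. Write $\mb M = \brac{\mb m_1, \dots, \mb m_{n_2}}$ with $\mb m_k \sim_{i.i.d.} \mathrm{BG}\paren{\theta}$, let $S_k \doteq \supp\paren{\mb m_k}$, and set $g\paren{\mb v} \doteq \norm{\mb v^* \mb M_{\mc I^c}}{1} - \norm{\mb v^* \mb M_{\mc I}}{1} = \sum_{k \in \mc I^c}\abs{\mb v^* \mb m_k} - \sum_{k \in \mc I}\abs{\mb v^* \mb m_k}$.

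\textbf{Expectation.} Conditioned on $S_k$ one has $\mb v^* \mb m_k \sim \mc N\paren{0, \norm{\mb v_{S_k}}{}^2}$, so $\expect{\abs{\mb v^* \mb m_k}} = \sqrt{2/\pi}\,\expect{\norm{\mb v_{S_k}}{}}$; since $0 \le \norm{\mb v_{S_k}}{} \le \norm{\mb v}{} = 1$ we have $\norm{\mb v_{S_k}}{} \ge \norm{\mb v_{S_k}}{}^2$, hence $\expect{\norm{\mb v_{S_k}}{}} \ge \expect{\norm{\mb v_{S_k}}{}^2} = \theta$. Because $\abs{\mc I} \le \tfrac{9}{8}\theta n_2$ and $\theta < 1/3$, $\abs{\mc I^c} - \abs{\mc I} = n_2 - 2\abs{\mc I} \ge \paren{1 - \tfrac{9}{4}\theta}n_2 \ge \tfrac14 n_2$, so $\expect{g\paren{\mb v}} \ge \tfrac14 n_2 \sqrt{2/\pi}\,\theta$ for every unit $\mb v$. (The bound $\expect{\norm{\mb v_{S_k}}{}} \ge \theta$ is essentially tight, e.g.\ for $\mb v = \mb e_1$, which is why the target constant $1/6$ is comfortably below $1/4$.)

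\textbf{Pointwise concentration.} Fix $\mb v \in \bb S^{n_1-1}$ and write $g\paren{\mb v} - \expect{g\paren{\mb v}} = \sum_{k=1}^{n_2} X_k$ with $X_k \doteq \eps_k\paren{\abs{\mb v^* \mb m_k} - \expect{\abs{\mb v^* \mb m_k}}}$, $\eps_k = +1$ for $k \in \mc I^c$ and $\eps_k = -1$ for $k \in \mc I$; these are independent and centered. By Lemma~\ref{lem:U-moments-bound}, $\expect{\abs{\mb v^* \mb m_k}^m} \le \bb E_{Z \sim \mc N\paren{0,1}}\brac{\abs{Z}^m}$ for all $m \ge 1$, so the $X_k$ satisfy a moment bound $\expect{\abs{X_k}^m} \le \tfrac{m!}{2}\sigma^2 R^{m-2}$ with absolute constants $\sigma, R$, and Lemma~\ref{lem:mc_bernstein_scalar} gives $\prob{\abs{g\paren{\mb v} - \expect{g\paren{\mb v}}} \ge s n_2} \le 2\exp\paren{-c_1 n_2 s^2}$ for $s$ below an absolute constant; taking $s = \tfrac{1}{24}\sqrt{2/\pi}\,\theta$ yields failure probability at most $2\exp\paren{-c_2 \theta^2 n_2}$.

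\textbf{Covering and bookkeeping (the delicate step).} On $\event_\infty \doteq \set{\norm{\mb M}{\infty} \le 4\sqrt{\log\paren{n_1 n_2}}}$, which by Lemma~\ref{lem:X-infinty-tail-bound} (with $n \leftarrow n_1$, $p \leftarrow n_2$) fails with probability at most $\theta\paren{n_1 n_2}^{-7} + \exp\paren{-0.3\theta n_1 n_2}$, every column obeys $\norm{\mb m_k}{} \le \sqrt{n_1}\norm{\mb M}{\infty} \le 4\sqrt{n_1 \log\paren{n_1 n_2}}$, so $g$ is $L$-Lipschitz on $\bb S^{n_1-1}$ with $L \doteq 4 n_2 \sqrt{n_1 \log\paren{n_1 n_2}}$, using $\abs{g\paren{\mb v} - g\paren{\mb v'}} \le \sum_k \abs{\paren{\mb v - \mb v'}^* \mb m_k} \le \paren{\sum_k \norm{\mb m_k}{}}\norm{\mb v - \mb v'}{}$. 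Take an $\eps$-net $N_\eps$ of $\bb S^{n_1-1}$ with $\abs{N_\eps} \le \paren{3/\eps}^{n_1}$ and $\eps \doteq \tfrac{1}{24}\sqrt{2/\pi}\,\theta/L$; union-bounding the pointwise estimate over $N_\eps$, on $\event_\infty$ and off an event of probability at most $\abs{N_\eps}\cdot 2\exp\paren{-c_2 \theta^2 n_2}$ one gets $g\paren{\mb v'} \ge \paren{\tfrac14 - \tfrac{1}{24}}n_2\sqrt{2/\pi}\,\theta$ for all $\mb v' \in N_\eps$, and the Lipschitz transfer to arbitrary $\mb v$ costs a further $L\eps = \tfrac{1}{24}n_2\sqrt{2/\pi}\,\theta$, leaving $g\paren{\mb v} \ge \paren{\tfrac14 - \tfrac{1}{12}}n_2\sqrt{2/\pi}\,\theta = \tfrac16 n_2\sqrt{2/\pi}\,\theta$. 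It remains to absorb $\abs{N_\eps}$: $\log\abs{N_\eps} \le n_1\log\paren{c_3\sqrt{n_1\log\paren{n_1 n_2}}/\theta}$, and when $n_2 \ge C n_1\log\paren{n_1/\theta}/\theta^2$ one has $\log\paren{n_1 n_2} = O\paren{\log\paren{n_1/\theta}}$, hence $\log\abs{N_\eps} = O\paren{n_1\log\paren{n_1/\theta}}$, dominated by $c_2\theta^2 n_2/2$ for $C$ large; the residual $\exp\paren{-c_2\theta^2 n_2/2}$ is then at most $n_2^{-10}$ under the same hypothesis, and summing the three failure contributions gives the stated probability. The only place requiring genuine care is precisely this balancing of $\eps$ against the concentration slack, the Lipschitz error, and the net penalty while keeping the sample complexity at $n_1\log\paren{n_1/\theta}/\theta^2$ — it is here that the nested logarithm $\log\paren{n_1 n_2}$ and the exact form of the hypothesis on $n_2$ are used.
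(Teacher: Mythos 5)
Your proof is correct and follows essentially the same route as the paper's: a $\sqrt{2/\pi}\,\theta$ lower bound on the expectation via the Bernoulli--Gaussian comparison, moment-controlled scalar Bernstein for pointwise concentration, and an $\eps$-net plus Lipschitz transfer on the event $\norm{\mb M}{\infty} \le 4\sqrt{\log\paren{n_1 n_2}}$, with the same balancing of the net cardinality against $\theta^2 n_2$ under the stated hypothesis on $n_2$. The only blemish is a notational slip in the net radius: as written, $\eps = \frac{1}{24}\sqrt{2/\pi}\,\theta / L$ gives $L\eps = \frac{1}{24}\sqrt{2/\pi}\,\theta$ rather than $\frac{1}{24} n_2 \sqrt{2/\pi}\,\theta$, so you should take $\eps = \frac{1}{24}\sqrt{2/\pi}\,\theta\, n_2 / L$ (which is what your net-size estimate and displayed Lipschitz penalty implicitly assume, and is the paper's choice up to constants).
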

\begin{proof}
By homogeneity, it is sufficient to consider all $\mb v \in \bb S^{n_1}$. For any $i \in [n_2]$, let $\mb m_i\in \bb R^{n_1}$ be a column of $\mb M$. For a fixed $\mb v$ such that $\norm{\mb v}{} = 1$, we have
\begin{align*}
T\paren{\mb v} \doteq \norm{\mb v^* \mb M_{\mc I^c}}{1} - \norm{\mb v^* \mb M_{\mc I}}{1} = \sum_{i \in \mc I^c} \abs{\mb v^* \mb m_i} - \sum_{i \in \mc I} \abs{\mb v^* \mb m_i},
\end{align*}
namely as a sum of independent random variables. Since $\abs{\mc I} \le 9n_2\theta/8$, we have 
\begin{align*}
\expect{T\paren{\mb v}} \ge \paren{n_2 - \frac{9}{8} \theta n_2 - \frac{9}{8} \theta n_2} \expect{\abs{\mb v^* \mb m_1}} = \paren{1-\frac{9}{4} \theta } n_2 \expect{\abs{\mb v^* \mb m_1}} \geq \frac{1}{4}n_2 \expect{\abs{\mb v^* \mb m_1}},
\end{align*}
where the expectation $\expect{\abs{\mb v^* \mb m_1}}$ can be lower bounded as 
\begin{align*}
\expect{\abs{\mb v^* \mb m_1}} 
& \;=\; \sum_{k=0}^{n_1} \theta^k \paren{1-\theta}^{n_1 - k} \sum_{\mc J \in \binom{[n_1]}{k}} \bb E_{\mb g \sim \mc N\paren{\mb 0, \mb I}} \brac{\abs{\mb v^*_{\mc J} \mb g}}\\
& \;=\; \sum_{k=0}^{n_1} \theta^k \paren{1-\theta}^{n_1 - k} \sum_{\mc J \in \binom{[n_1]}{k}} \sqrt{\frac{2}{\pi}} \norm{\mb v_{\mc J}}{} 
\geq \sqrt{\frac{2}{\pi}} \norm{\bb E_{\mc J}\brac{\mb v_{\mc J}}}{} = \sqrt{\frac{2}{\pi}} \theta. 
\end{align*}
Moreover, by Lemma~\ref{lem:U-moments-bound} and Lemma \ref{lem:guassian_moment}, for any $i \in [n_2]$ and any integer $m \ge 2$, 
\begin{align*}
\expect{\abs{\mb v^* \mb m_i}^m} \le \bb E_{\mb Z \sim \mc N\paren{0, 1}} \brac{\abs{Z}^m } \le \paren{m-1}!! \le \frac{m!}{2}. 
\end{align*} 
So invoking the moment-control Bernstein's inequality in Lemma~\ref{lem:mc_bernstein_scalar}, we obtain 
\begin{align*}
\prob{T\paren{\mb v} < \frac{n_2}{4}\sqrt{\frac{2}{\pi}} \theta - t} \le \prob{T\paren{\mb v} < \expect{T\paren{\mb v}} - t} \le \exp\paren{-\frac{t^2}{2n_2 + 2t}}. 
\end{align*}
Taking $t = \tfrac{n_2}{20}\sqrt{\tfrac{2}{\pi}} \theta$ and simplifying, we obtain that 
\begin{align}
\prob{T\paren{\mb v} < \frac{n_2}{5}\sqrt{\frac{2}{\pi}} \theta} \le \exp\paren{-c_1 \theta^2 n_2}
\end{align}
for some positive constant $c_1$. Fix $\eps = \sqrt{\frac{2}{\pi}}\frac{\theta}{120} \brac{ n_1 \log \paren{n_1 n_2}}^{-1/2} < 1$. The unit sphere $\bb S^{n_1}$ has an $\eps$-net $N_\eps$ of cardinality at most $\paren{3/\eps}^{n_1}$. Consider the event
\begin{align*}
\event_{bg} \doteq \set{T\paren{\mb v} \ge \frac{n_2}{5}\sqrt{\frac{2}{\pi}} \theta\;\; \forall\; \mb v \in N_\eps}. 
\end{align*} 
A simple union bound implies 
\begin{align}\label{eqn:rounding-union}
\prob{\event_{bg}^c} 
& \le \exp\paren{-c_1\theta^2 n_2 + n_1\log\paren{\frac{3}{\eps}}}
\le \exp\paren{-c_1\theta^2 n_2 + c_2n_1\log \frac{n_1 \log n_2}{\theta}},  
\end{align}
where $c_2 > 0$ is numerical. Conditioned on $\event_{bg}$, we have that any $\mb z \in \bb S^{n_1-1}$ can be written as $\mb z = \mb v + \mb e$ for some $\mb v \in N_\eps$ and $\norm{\mb e}{} \le \eps$. Moreover, 
\begin{align*}
T\paren{\mb z} 
&\; =\; \norm{\paren{\mb v + \mb e}^* \mb M_{\mc I^c}}{1} - \norm{\paren{\mb v + \mb e}^* \mb M_{\mc I}}{1} \ge T\paren{\mb v} - \norm{\mb e^* \mb M_{\mc I^c}}{1} - \norm{\mb e^* \mb M_{\mc I}}{1} \nonumber \\
& \;=\; \frac{n_2}{5}\sqrt{\frac{2}{\pi}} \theta - \norm{\mb e^* \mb M}{1} = \frac{n_2}{5}\sqrt{\frac{2}{\pi}} \theta - \sum_{k=1}^{n_2} \abs{\mb e^* \mb m_k} \nonumber \\
& \;\geq\;  \frac{n_2}{5}\sqrt{\frac{2}{\pi}} \theta - \eps \sum_{k=1}^{n_2} \norm{\mb m_k}{}. 
\end{align*} 
By Lemma~\ref{lem:X-infinty-tail-bound}, with probability at least $1-\theta\paren{n_1n_2}^{-7} - \exp\paren{-0.3\theta n_1 n_2}$, $\norm{\mb M}{\infty} \le 4\sqrt{\log\paren{n_1n_2}}$. Thus, 
\begin{align}
T\paren{\mb z} \ge \frac{n_2}{5}\sqrt{\frac{2}{\pi}} \theta - \sqrt{\frac{2}{\pi}}\frac{\theta}{120}\frac{n_2\sqrt{n_1} 4\sqrt{\log\paren{n_1n_2}}}{\sqrt{n_1} \sqrt{\log\paren{n_1 n_2}}} = \frac{n_2}{6}\sqrt{\frac{2}{\pi}} \theta. \label{eqn:rounding-lower-bound}
\end{align}
Thus, by \eqref{eqn:rounding-union}, it is enough to take $n_2 > Cn_1\log\paren{n_1/\theta}/\theta^2$ for sufficiently large $C > 0$ to make the overall failure probability small enough so that the lower bound \eqref{eqn:rounding-lower-bound} holds. 
\end{proof}

\begin{proof}[Proof of Lemma~\ref{lem:alg_rounding_orth}] \label{proof:lem_alg_rounding_orth}
The proof is similar to that of \cite{qu2014finding}. First, let us assume the dictionary $\mb A_0 = \mb I$. Wlog, suppose that the Riemannian TRM algorithm returns a solution $\widehat{\mb q}$, to which $\mb e_n$ is the nearest signed basis vector. Thus, the rounding LP~\eqref{eqn:LP_rounding} takes the form: 
\begin{align}\label{eqn:rounding-00}
\mini_{\mb q}\; \norm{\mb q^* \mb X_0}{1}, \quad \st \quad \innerprod{\mb r}{\mb q} = 1. 
\end{align}
where the vector $\mb r = \widehat{\mb q}$. Next, We will show whenever $\widehat{\mb q}$ is close enough to $\mb e_n$, w.h.p., the above linear program returns $\mb e_n$.
Let $\mb X_0 = \brac{\ol{\mb X}; \mb x_n^* }$, where $\ol{\mb X} \in \bb R^{(n-1)\times p}$ and $\mb x_n^*$ is the last row of $\mb X_0$. Set $\mb q= \brac{\overline{\mb q}, q_n}$, where $\overline{\mb q}$ denotes the first $n-1$ coordinates of $\mb q$ and $q_n$ is the last coordinate; similarly for $\mb r$. Let us consider a relaxation of the problem~\eqref{eqn:rounding-00},
\begin{align}
	\mini_{\mb q} \norm{\mb q^* \mb X_0}{1},\quad \st \quad q_nr_n+\innerprod{\overline{\mb q}}{\overline{\mb r}} \geq 1, \label{eqn:rounding-1}
\end{align}
It is obvious that the feasible set of \eqref{eqn:rounding-1} contains that of \eqref{eqn:rounding-00}. So if $\mb e_n$ is the unique optimal solution (UOS) of \eqref{eqn:rounding-1}, it is the UOS of \eqref{eqn:rounding-00}. Suppose $\mc I = \supp (\mb x_n)$ and define an event $\event_0 = \Brac{\abs{\mc I}\leq \frac{9}{8} \theta p }$. By Hoeffding's inequality, we know that
$
\prob{\event_0^c }  \le  \exp\paren{- \theta^2 p/2}. 
$
Now conditioned on $\event_0$ and consider a fixed support $\mc I$. \eqref{eqn:rounding-1} can be further relaxed as 
\begin{align} 
	\mini_{\mb q} \norm{\mb x_n}{1} \abs{q_n} - \norm{ \overline{\mb q}^* \ol{\mb X}_{\mc I} }{1} + \norm{\overline{\mb q}^* \ol{\mb X}_{\mc I^c} }{1},\quad \st \quad q_n r_n + \norm{\overline{\mb q}}{} \norm{\overline{\mb r}}{}\geq 1.\label{eqn:rounding-2}
\end{align}
The objective value of \eqref{eqn:rounding-2} lower bounds that of \eqref{eqn:rounding-1}, and are equal when $\mb q = \mb e_n$. So if $\mb q = \mb e_n$ is UOS of \eqref{eqn:rounding-2}, it is UOS of \eqref{eqn:rounding-00}. By Lemma \ref{lem:rounding-0}, we know that
\begin{align*}
	\norm{\overline{\mb q}^*\ol{\mb X}_{\mc I^c} }{1}- \norm{ \overline{\mb q}^* \ol{\mb X}_{\mc I} }{1}  \;\geq\; \frac{p}{6}\sqrt{\frac{2}{\pi}}\theta  \norm{\overline{\mb q}}{}
\end{align*}
holds w.h.p. when $p \ge C_1 (n-1)\log\paren{(n-1)/\theta}/\theta^2$. Let $\zeta = \frac{p}{6}\sqrt{\frac{2}{\pi}}\theta$, thus we can further lower bound the objective value in \eqref{eqn:rounding-2} by
\begin{align}
	\mini_{\mb q} \norm{\mb x_n}{1}\abs{q_n}+ \zeta \norm{\overline{\mb q} }{},\quad \st \quad q_n r_n + \norm{\overline{\mb q}}{} \norm{\overline{\mb r}}{}\geq 1. \label{eqn:rounding-3}
\end{align}
By similar arguments, if $\mb e_n$ is the UOS of \eqref{eqn:rounding-3}, it is also the UOS of \eqref{eqn:rounding-00}. For the optimal solution of \eqref{eqn:rounding-3}, notice that it is necessary to have $\sign\paren{q_n} = \sign\paren{r_n}$ and $q_n r_n + \norm{\overline{\mb q}}{} \norm{\overline{\mb r}}{}=1$. Therefore, the problem \eqref{eqn:rounding-3} is equivalent to
\begin{align}
	\mini_{q_n} \norm{\mb x_n}{1} \abs{q_n} + \zeta \frac{1-\abs{r_n} \abs{q_n} }{\norm{\overline{\mb r}}{}}, \quad \st \quad \abs{q_n}\leq \frac{1}{\abs{r_n}}. \label{eqn:rounding-4}
\end{align}
Notice that the problem \eqref{eqn:rounding-4} is a linear program in $\abs{q_n}$ with a compact feasible set, which indicates that the optimal solution only occurs at the boundary points $\abs{q_n}=0$ and $\abs{q_n} = 1/\abs{r_n}$. Therefore, $\mb q = \mb e_n$ is the UOS of \eqref{eqn:rounding-4} if and only if
\begin{align}
	\frac{1}{\abs{r_n}} \norm{\mb x_n}{1} < \frac{\zeta }{\norm{\overline{\mb r}}{} }. \label{eqn:rounding-5}
\end{align}
Conditioned on $\event_0$, by using the Gaussian concentration bound, we have
\begin{align*}
	\prob{\norm{\mb x_n}{1} \geq  \frac{9}{8}  \sqrt{\frac{2}{\pi}} \theta p + t } \;\leq\; \prob{\norm{\mb x_n}{1} \geq \expect{\norm{\mb x_n}{1}} + t } \;\leq\; \exp\paren{- \frac{t^2}{2p} },
\end{align*}
which means that
\begin{align}
	\prob{\norm{\mb x_n}{1} \geq \frac{5}{4} \sqrt{\frac{2}{\pi}} \theta p  } \;\leq\;\exp\paren{- \frac{\theta^2 p }{64 \pi } }. \label{eqn:rounding-6}
\end{align}
Therefore, by \eqref{eqn:rounding-5} and \eqref{eqn:rounding-6}, for $\mb q = \mb e_n$ to be the UOS of \eqref{eqn:rounding-00} w.h.p., it is sufficient to have
\begin{align}
	\frac{5}{4 \abs{r_n}} \sqrt{\frac{2}{\pi }} \theta p \;<\; \frac{\theta p}{6\sqrt{1-\abs{r_n}^2}} \sqrt{\frac{2}{\pi }},
\end{align}
which is implied by 
\begin{align*}
	\abs{r_n}\;>\; \frac{249}{250}. 
\end{align*}
The failure probability can be estimated via a simple union bound. Since the above argument holds uniformly for any fixed support set $\mc I$, we obtain the desired result. 

When our dictionary $\mb A_0$ is an arbitrary orthogonal matrix, it only rotates the row subspace of $\mb X_0$. Thus, wlog, suppose the TRM algorithm returns a solution $\widehat{\mb q}$, to which $\mb A_0 \mb q_\star$ is the nearest ``target'' with $\mb q_\star$ a signed basis vector. By a change of variable $\tilde{\mb q} = \mb A_0^* \mb q$, the problem \eqref{eqn:rounding-00} is of the form
\begin{align*}
	\mini_{\tilde{\mb q}} \norm{\widetilde{\mb q}^*\mb X_0 }{1},\quad \st \quad \innerprod{\mb A_0^* \mb r}{\tilde{\mb q} }=1,
\end{align*}
obviously our target solution for $\tilde{\mb q}$ is again the standard basis $\mb q_\star$. By a similar argument above, we only need $\innerprod{\mb A_0^* \mb r}{\mb e_n}>249/250$ to exactly recover the target, which is equivalent to 
$
	\innerprod{\mb r}{\widehat{\mb q}_\star} > 249/250.
$ 
This implies that our rounding \eqref{eqn:LP_rounding} is invariant to change of basis, completing the proof. 
\end{proof}

\begin{proof}[of Lemma~\ref{lem:alg_rounding_comp}] \label{proof:lem_alg_rounding_comp}
Define $\wt{\mb q} \doteq (\mb U \mb V^* + \mb \Xi)^* \mb q$. By Lemma~\ref{lem:pert_key_mag}, and in particular~\eqref{eq:pert_upper_bound}, when 
$
p \ge \frac{C}{c_\star^2 \theta} \max\set{\frac{n^4}{\mu^4}, \frac{n^5}{\mu^2}} \kappa^8\paren{\mb A_0} \log^4\paren{\frac{\kappa\paren{\mb A_0} n}{\mu \theta}}
$, 
$\norm{\mb \Xi}{} \le 1/2$ so that $\mb U \mb V^* + \mb \Xi$ is invertible. 
Then the LP rounding can be written as 
\begin{align*} 
\mini_{\wt{\mb q}} \norm{\wt{\mb q}^*\mb X_0}{1},\quad \st \quad \innerprod{(\mb U \mb V^*+ \mb \Xi)^{-1}\mb r}{\wt{\mb q}} = 1. 
\end{align*}
By Lemma~\ref{lem:alg_rounding_orth}, to obtain $\wt{\mb q} = \mb e_n$ from this LP, it is enough to have 
\begin{align*}
\innerprod{(\mb U \mb V^*+ \mb \Xi)^{-1}\mb r}{\mb e_n} \ge 249/250, 
\end{align*}
and $p \ge Cn^2\log(n/\theta)/\theta$ for some large enough $C$. This implies that to obtain $\mb q_\star$ for the original LP, such that $(\mb U \mb V^* + \mb \Xi)^* \mb q_\star = \mb e_n$, it is enough that 
\begin{align*}
\innerprod{(\mb U \mb V^*+ \mb \Xi)^{-1}\mb r}{(\mb U \mb V^* + \mb \Xi)^* \mb q_\star} = \innerprod{\mb r}{\mb q_\star} \ge 249/250, 
\end{align*}
completing the proof. 
\end{proof}

\begin{proof}[of Lemma~\ref{lem:deflation-bound}] \label{proof:lem_deflation-bound}
Note that $[\mb q_\star^1, \dots, \mb q_\star^\ell] = (\mb Q^* + \mb \Xi^*)^{-1} [\mb e_1, \dots, \mb e_\ell]$, we have
\begin{align*}
\mb U^* (\mb Q + \mb \Xi) \mb X_0 
& = \mb U^* (\mb Q^* + \mb \Xi^*)^{-1} (\mb Q + \mb \Xi)^* (\mb Q + \mb \Xi) \mb X_0 \\
& = \mb U^* \left[\mb q_\star^1, \dots, \mb q_\star^\ell \;\vert\; \wh{\mb V}\right] (\mb I + \mb \Delta_1) \mb X_0, 
\end{align*}
where $\wh{\mb V} \doteq (\mb Q^* + \mb \Xi^*)^{-1} [\mb e_{\ell+1}, \dots, \mb e_n]$, and the matrix $\mb \Delta_1 = \mb Q^*\mb \Xi+ \mb \Xi^*\mb Q+\mb \Xi^*\mb \Xi$ so that $\norm{\mb \Delta_1}{}\leq 3\norm{\mb \Xi}{}$. Since $\mb U^* \left[\mb q_\star^1, \dots, \mb q_\star^\ell \;\vert\; \wh{\mb V}\right] = \brac{\mb 0 \; \vert\; \mb U^* \wh{\mb V}}$, we have
\begin{align} \label{eqn:deflation-comp-1}
\mb U^* (\mb Q + \mb \Xi) \mb X_0  = \brac{\mb 0 \; \vert\; \mb U^* \wh{\mb V}} \mb X_0 + \brac{\mb 0 \; \vert\; \mb U^* \wh{\mb V}} \mb \Delta_1 \mb X_0 = \mb U^* \wh{\mb V} \mb X_0^{[n-\ell]} + \mb \Delta_2 \mb X_0, 
\end{align}
where $\mb \Delta_2 = \brac{\mb 0 \; \vert\; \mb U^* \wh{\mb V}} \mb \Delta_1$. Let $\delta = \norm{\mb \Xi}{}$, so that
\begin{align}\label{eqn:delta-2-bound}
	\norm{\mb \Delta_2}{} \leq \frac{\norm{\mb \Delta_1}{}}{\sigma_{\min} \paren{\mb Q+\mb \Xi} } \leq \frac{3\norm{\mb \Xi}{} }{\sigma_{\min} \paren{\mb Q+\mb \Xi} }\leq \frac{3\delta}{1-\delta}.
\end{align}
Since the matrix $\wh{\mb V}$ is near orthogonal, it can be decomposed as $\wh{\mb V} = \mb V + \mb \Delta_3$, where $\mb V$ is orthogonal, and $\mb \Delta_3$ is a small perturbation. Obviously, $\mb V = \mb U \mb R$ for some orthogonal matrix $\mb R$, so that spans the same subspace as that of $\mb U$. Next, we control the spectral norm of $\mb \Delta_3$ so that it is sufficiently small,
\begin{align}
	\norm{\mb \Delta_3}{} = \min_{\mb R \in O_{\ell}}\norm{\mb U \mb R - \wh{\mb V} }{} \le \min_{\mb R \in O_{\ell}} \norm{\mb U \mb R - \mb Q_{[n-\ell]} }{} + \norm{ \mb Q_{[n-\ell]} - \wh{\mb V} }{},
\end{align} 
where $\mb Q_{[n-\ell]}$ collects the last $n -\ell$ columns of $\mb Q$, i.e., $\mb Q=[\mb Q_{[\ell]}, \mb Q_{[n-\ell]}] $. To bound the second term on the right, we have 
\begin{align*}
\norm{\mb Q_{[n-\ell]} - \wh{\mb V} }{} \le \norm{\mb Q^{-1} - (\mb Q + \mb \Xi)^{-1}}{} \le \frac{\norm{\mb Q^{-1}}{} \norm{\mb Q^{-1} \mb \Xi}{}}{1-\norm{\mb Q^{-1} \mb \Xi}{}} \le \frac{\delta}{1-\delta} , 
\end{align*}
where we have used perturbation bound for matrix inverse (see, e.g., Theorem 2.5 of Chapter III in~\cite{stewart1990matrix}). To bound the first term, from Lemma~\ref{lem:sp_angle_norm}, it is enough to upper bound the largest principal angle $\theta_1$ between the subspaces $\text{span}([\mb q_\star^1, \dots, \mb q_\star^\ell])$, and that spanned by $\mb Q [\mb e_1, \dots, \mb e_\ell]$. Write $\mb I_{[\ell]} \doteq [\mb e_1, \dots, \mb e_\ell]$ for short, we bound $\sin \theta_1$ as
\begin{align*}
\sin \theta_1 \le\; & \norm{\mb Q \mb I_{[\ell]} \mb I_{[\ell]}^* \mb Q^* - (\mb Q^* + \mb \Xi^*)^{-1} \mb I_{[\ell]} \paren{\mb I_{[\ell]}^* (\mb Q + \mb \Xi)^{-1} (\mb Q^* + \mb \Xi^*)^{-1} \mb I_{[\ell]} }^{-1} \mb I_{[\ell]}^* (\mb Q + \mb \Xi)^{-1}}{} \\
=\; & \norm{\mb Q \mb I_{[\ell]} \mb I_{[\ell]}^* \mb Q^* - (\mb Q^* + \mb \Xi^*)^{-1} \mb I_{[\ell]} \paren{ \mb I_{[\ell]}^* (\mb I +\mb \Delta_1)^{-1} \mb I_{[\ell]} }^{-1} \mb I_{[\ell]}^* (\mb Q + \mb \Xi)^{-1}}{} \\
\le\; & \norm{\mb Q \mb I_{[\ell]} \mb I_{[\ell]}^* \mb Q^* - (\mb Q^* + \mb \Xi^*)^{-1} \mb I_{[\ell]}\mb I_{[\ell]}^* (\mb Q + \mb \Xi)^{-1}}{} \\
& \qquad + \norm{(\mb Q^* + \mb \Xi^*)^{-1} \mb I_{[\ell]} \brac{\mb I - \paren{ \mb I_{[\ell]}^* (\mb I +\mb \Delta_1)^{-1} \mb I_{[\ell]} }^{-1} }\mb I_{[\ell]}^* (\mb Q + \mb \Xi)^{-1} }{} \\
\le\; & \paren{1+\frac{1}{\sigma_{\min}(\mb Q + \mb \Xi)}}\norm{\mb Q^{-1} - (\mb Q + \mb \Xi)^{-1}}{} + \frac{1}{\sigma^2_{\min}(\mb Q + \mb \Xi)} \norm{\mb I - \paren{ \mb I_{[\ell]}^* (\mb I +\mb \Delta_1)^{-1} \mb I_{[\ell]} }^{-1} }{} \\
\le\; & \paren{1 + \frac{1}{1-\delta}} \frac{\delta}{1-\delta}  + \frac{1}{(1-\delta)^2} \frac{\norm{\mb I_{[\ell]}^*(\mb I+\mb \Delta_1)^{-1} \mb I_{[\ell]}-\mb I  }{}}{1-\norm{\mb I_{[\ell]}^*(\mb I+\mb \Delta_1)^{-1} \mb I_{[\ell]}-\mb I  }{}  } \\
\le\;& \paren{1 + \frac{1}{1-\delta}} \frac{\delta}{1-\delta}  + \frac{1}{(1-\delta)^2} \frac{ \norm{\mb \Delta_1}{} }{1-2\norm{\mb \Delta_1}{} },
\end{align*}
where in the first line we have used the fact that for any full column rank matrix $\mb M$, $\mb M (\mb M^* \mb M)^{-1} \mb M^*$ is the orthogonal projection onto the its column span, and to obtain the fifth and six lines we have invoked the matrix inverse perturbation bound again. Use the facts that $\delta<1/20$ and $\norm{\mb \Delta_1}{} \le 3\delta < 1/2$, we have
\begin{align*}
\sin \theta_1 \leq \frac{(2-\delta)\delta}{(1-\delta)^2}	 + \frac{3\delta}{(1-\delta)^2(1-6\delta)} = \frac{5\delta - 13\delta^2+6\delta^3}{(1-\delta)^2(1-6\delta)}\leq 8\delta.
\end{align*}
For $\delta < 1/20$, the upper bound is nontrivial. By Lemma~\ref{lem:sp_angle_norm}, 
\begin{align*}
\min_{\mb R \in O_\ell}\norm{\mb U \mb R - \mb Q_{[n-\ell]} }{} \le \sqrt{2-2\cos\theta_1} \le \sqrt{2-2\cos^2\theta_1} = \sqrt{2} \sin \theta_1 \le 8\sqrt{2} \delta. 
\end{align*}
Put the estimates above, there exists an orthogonal matrix $\mb R \in O_\ell$ such that $\mb V = \mb U\mb R$ and $\wh{\mb V} = \mb V + \mb \Delta_3$ with 
\begin{align}\label{eqn:delta-3-bound}
	\norm{\mb \Delta_3}{} \leq \delta/(1-\delta) +8\sqrt{2}\delta\leq 12.5\delta.
\end{align}
Therefore, by \eqref{eqn:deflation-comp-1}, we obtain
\begin{align}
	\mb U^*(\mb Q+\mb \Xi)\mb X_0 = \mb U^*\mb V\mb X_0^{[n-\ell]} +\mb \Delta,\quad \text{with}\quad\mb \Delta \doteq \mb U^* \mb \Delta_3 \mb X_0^{[n-\ell]} + \mb \Delta_2 \mb X_0.
\end{align}
By using the results in \eqref{eqn:delta-2-bound} and \eqref{eqn:delta-3-bound}, we get the desired result.
\end{proof}

\begin{appendices}
\section{Technical Tools and Basic Facts Used in Proofs}
In this section, we summarize some basic calculations that are useful throughout, and also record major technical tools we use in proofs. 
\begin{lemma}[Derivates and Lipschitz Properties of $h_{\mu}\paren{z}$] \label{lem:derivatives_basic_surrogate}
For the sparsity surrogate 
\begin{align*}
h_{\mu}\left(z\right)= \mu \log\paren{\cosh\paren{z/\mu}}, 
\end{align*}
the first two derivatives are
\begin{align}
\dot{ h}_\mu (z) = \tanh\paren{\frac{z}{\mu}},\quad \ddot{h}_\mu (z) = \frac{1}{\mu}\brac{1-\tanh^2\paren{\frac{z}{\mu}}}. 
\end{align}
Also, for any $z>0$, we have
\begin{align}
\frac{1}{2}\paren{1-\exp\paren{-\frac{2z}{\mu}}}\;&\le \;\tanh\paren{\frac{z}{\mu}} \;\le\; 1-\exp\paren{-\frac{2z}{\mu}}, \\
\exp\paren{-\frac{2z}{\mu}}\;&\le \;1-\tanh^2\paren{\frac{z}{\mu}} \;\le\; 4\exp\paren{-\frac{2z}{\mu}}. 
\end{align}
Moreover, for any $z,~z^\prime\in \reals$, we have
\begin{align}
\abs{\dot{h}_{\mu}(z) - \dot{h}_{\mu}(z^\prime) } \le \frac{1}{\mu} \abs{z - z^\prime},\quad \abs{\ddot{h}_{\mu}(z) - \ddot{h}_{\mu}(z^\prime) } \le \frac{2}{\mu^2} \abs{z - z^\prime} 
\end{align}
\end{lemma}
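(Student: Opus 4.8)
The statement to prove is Lemma~\ref{lem:derivatives_basic_surrogate}, collecting elementary calculus facts about $h_\mu(z) = \mu\log\cosh(z/\mu)$. I will organize the proof into four short parts corresponding to the four displayed groups of claims.

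\textbf{Part 1: the derivatives.} The plan is simply to differentiate. Since $h_\mu(z) = \mu\log\cosh(z/\mu)$, the chain rule gives $\dot h_\mu(z) = \mu \cdot \frac{1}{\mu}\cdot\frac{\sinh(z/\mu)}{\cosh(z/\mu)} = \tanh(z/\mu)$. Differentiating once more, $\ddot h_\mu(z) = \frac{1}{\mu}\operatorname{sech}^2(z/\mu) = \frac{1}{\mu}(1-\tanh^2(z/\mu))$, using the identity $\operatorname{sech}^2 = 1-\tanh^2$. This part is immediate.

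\textbf{Part 2: the two-sided bounds on $\tanh$ and $1-\tanh^2$.} For $z>0$, write $u = \exp(-2z/\mu)\in(0,1)$, so that $\tanh(z/\mu) = \frac{1-u}{1+u}$ and $1-\tanh^2(z/\mu) = \frac{4u}{(1+u)^2}$. For the $\tanh$ bounds: since $1+u\ge 1$ we get $\frac{1-u}{1+u}\le 1-u$, which is the upper bound; since $1+u\le 2$ for $u\le 1$ we get $\frac{1-u}{1+u}\ge \frac{1-u}{2}$, the lower bound. For the $1-\tanh^2$ bounds: from $1+u\ge 1$ we get $\frac{4u}{(1+u)^2}\le 4u$, the upper bound; from $1+u\le 2$ we get $\frac{4u}{(1+u)^2}\ge \frac{4u}{4} = u$, the lower bound. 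Substituting back $u=\exp(-2z/\mu)$ yields all four inequalities.

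\textbf{Part 3: the Lipschitz bounds.} For $\dot h_\mu$, note $\|\ddot h_\mu\|_\infty = \sup_z \frac{1}{\mu}(1-\tanh^2(z/\mu)) = \frac{1}{\mu}$ attained at $z=0$; by the mean value theorem $|\dot h_\mu(z)-\dot h_\mu(z')|\le \frac{1}{\mu}|z-z'|$. For $\ddot h_\mu$, I would bound $|\dddot h_\mu|$. Differentiating $\ddot h_\mu(z) = \frac{1}{\mu}(1-\tanh^2(z/\mu))$ gives $\dddot h_\mu(z) = -\frac{2}{\mu^2}\tanh(z/\mu)(1-\tanh^2(z/\mu))$, whose magnitude is at most $\frac{2}{\mu^2}$ since both $|\tanh|\le 1$ and $0\le 1-\tanh^2\le 1$; the mean value theorem then gives $|\ddot h_\mu(z)-\ddot h_\mu(z')|\le\frac{2}{\mu^2}|z-z'|$. (A slightly sharper constant is available but $2/\mu^2$ suffices and is what the lemma asks for.)

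There is no real obstacle here — every step is a one-line calculus manipulation. The only mild care needed is the substitution $u = \exp(-2z/\mu)$ in Part 2 to turn the hyperbolic expressions into rational functions of $u$ on which the elementary inequalities $1\le 1+u\le 2$ do all the work; everything else is routine differentiation and the mean value theorem. I will write it out compactly in this order: derivatives, then the $u$-substitution bounds, then the two Lipschitz estimates.
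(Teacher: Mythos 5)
Your proof is correct in every part: the chain-rule computations for $\dot h_\mu$ and $\ddot h_\mu$, the substitution $u=\exp(-2z/\mu)$ turning $\tanh(z/\mu)=\tfrac{1-u}{1+u}$ and $1-\tanh^2(z/\mu)=\tfrac{4u}{(1+u)^2}$ so that the elementary bounds $1\le 1+u\le 2$ yield all four inequalities, and the mean-value-theorem bounds $|\ddot h_\mu|\le 1/\mu$ and $|\dddot h_\mu|\le \tfrac{2}{\mu^2}|\tanh|\,(1-\tanh^2)\le 2/\mu^2$ all check out. The paper states this lemma as a collection of basic facts without an accompanying proof, so there is nothing to compare against; your routine verification is exactly what is needed.
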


\begin{lemma}[Chebyshev's Association Inequality] \label{lemma:cheby_correlation}
Let $X$ denote a real-valued random variable, and $f, g: \R \mapsto \R$ nondecreasing (nonincreasing) functions of $X$ with $\bb E\left[f\left(X\right)\right] < \infty$ and $\bb E \left[g\left(X\right)\right] < \infty$. Then 
\begin{align}
\bb E\left[f\left(X\right)g\left(X\right)\right] \geq \bb E\left[f\left(X\right)\right] \bb E \left[g\left(X\right)\right]. 
\end{align}
If $f$ is nondecreasing (nonincreasing) and $g$ is nonincreasing (nondecreasing), we have
\begin{align}
\bb E\left[f\left(X\right)g\left(X\right)\right] \leq \bb E\left[f\left(X\right)\right] \bb E \left[g\left(X\right)\right]. 
\end{align}
\end{lemma}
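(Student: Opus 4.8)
The plan is to use the standard ``independent copy'' coupling argument. Let $X'$ be an independent copy of $X$, defined on a product probability space, so that $(X, X')$ has the product law. Suppose first that both $f$ and $g$ are nondecreasing. The key pointwise observation is that $f(X) - f(X')$ and $g(X) - g(X')$ always have the same sign as $X - X'$ (or one of them vanishes), so that
\begin{align*}
\bigl(f(X) - f(X')\bigr)\bigl(g(X) - g(X')\bigr) \;\ge\; 0 \qquad \text{almost surely.}
\end{align*}
Taking expectations over the product space and expanding the product gives
\begin{align*}
\bb E\brac{f(X)g(X)} + \bb E\brac{f(X')g(X')} \;\ge\; \bb E\brac{f(X)g(X')} + \bb E\brac{f(X')g(X)}.
\end{align*}
Because $X$ and $X'$ are i.i.d., the two terms on the left coincide and equal $\bb E\brac{f(X)g(X)}$, while by independence each term on the right factors as $\bb E\brac{f(X)}\bb E\brac{g(X)}$. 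Rearranging yields $2\bb E\brac{f(X)g(X)} \ge 2\bb E\brac{f(X)}\bb E\brac{g(X)}$, which is the claimed inequality. The case where both functions are nonincreasing is identical, since $(-f)$ and $(-g)$ are nondecreasing and the product $fg$ is unchanged. For the reversed inequality, when $f$ is nondecreasing and $g$ is nonincreasing, apply the first part to the pair $(f, -g)$: this gives $\bb E\brac{f(X)(-g(X))} \ge \bb E\brac{f(X)}\bb E\brac{-g(X)}$, i.e.\ $\bb E\brac{f(X)g(X)} \le \bb E\brac{f(X)}\bb E\brac{g(X)}$; the symmetric sub-case follows by swapping roles.

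The one technical point that needs care is integrability, so that the expansion above is legitimate and no $\infty - \infty$ arises. The cleanest route I would take is a truncation argument: replace $f$ by $f_N \doteq \max\{-N, \min\{N, f\}\}$ and similarly $g_N$, which are bounded, monotone in the same sense, and hence satisfy the inequality by the bounded case just proved; then let $N \to \infty$ and pass to the limit using monotone/dominated convergence, invoking the hypotheses $\bb E\brac{f(X)}, \bb E\brac{g(X)} < \infty$ together with the sign structure of $f$, $g$ (e.g.\ decompose into positive and negative parts). In the application in this paper the relevant $f$, $g$ are always bounded or have all moments finite (e.g.\ $\tanh(\cdot)$, polynomials of Gaussians), so one could alternatively just state the lemma under a boundedness or finite-second-moment hypothesis and skip the truncation entirely.

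I expect the sign observation and the independence factorization to be completely routine; the only mild obstacle is the integrability bookkeeping in the truncation limit, and even that is standard measure theory (Folland, as already cited elsewhere in the paper). No deep idea is required — the whole proof is a few lines once the independent copy is introduced.
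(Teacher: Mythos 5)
Your proposal is correct and follows essentially the same route as the paper: introduce an independent copy $X'$ of $X$, observe that $\paren{f(X)-f(X')}\paren{g(X)-g(X')} \ge 0$ almost surely, expand the expectation and use independence to factor the cross terms, then handle the mixed-monotonicity case by negating one function. The extra truncation discussion on integrability is a reasonable refinement but not a departure from the paper's argument.
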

\begin{proof}
Consider $Y$, an independent copy of $X$. Then it is easy to see 
\begin{align*}
\expect{\left(f\left(X\right) - f\left(Y\right)\right) \left(g\left(X\right) - g\left(Y\right)\right)} \geq 0. 
\end{align*}
Expanding the expectation and noticing $\expect{f\left(X\right)g\left(Y\right)} = \expect{f\left(Y\right)g\left(X\right)} = \expect{f\left(X\right)} \expect{g\left(X\right)}$ and also $\expect{f\left(X\right) g\left(X\right)} = \expect{f\left(Y\right)g\left(Y\right)}$ yields the result. Similarly, we can prove the second one. 
\end{proof}
This lemma implies the following lemma.

\begin{lemma}[Harris' Inequality, ~\cite{harris1960lower}, see also Theorem 2.15 of~\cite{boucheron2013concentration}] \label{lemma:harris_ineq}
Let $X_1, \dots, X_n$ be independent, real-valued random variables and $f, g: \R^n \mapsto \R$ be nonincreasing (nondecreasing) w.r.t. any one variable while fixing the others. Define a random vector $\mb X = \paren{X_1,\cdots,X_n}\in \R^n$, then we have 
\begin{align}
\expect{f\left(\mb X\right) g\left(\mb X\right)} \geq \expect{f\left(\mb X\right)} \expect{g\left(\mb X\right)}. 
\end{align}
Similarly, if $f$ is nondecreasing (nonincreasing) and $g$ is nonincreasing (nondecreasing) coordinatewise in the above sense, we have
\begin{align}
\expect{f\left(\mb X\right) g\left(\mb X\right)} \leq \expect{f\left(\mb X\right)} \expect{g\left(\mb X\right)}. 
\end{align}
\end{lemma}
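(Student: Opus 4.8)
```
\textbf{Proof plan for Lemma~\ref{lemma:harris_ineq} (Harris' inequality).}
The plan is to deduce the multivariate (coordinatewise monotone) statement from the univariate Chebyshev association inequality (Lemma~\ref{lemma:cheby_correlation}) by induction on the number $n$ of independent coordinates, conditioning on all but one variable at each step. This is the standard ``tensorization'' argument, and the only real content is keeping track of which monotonicity hypotheses are preserved when one integrates out a coordinate.

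First I would set up the base case $n=1$: here $f,g:\R\to\R$ are both nondecreasing (or both nonincreasing) in the single variable, so Lemma~\ref{lemma:cheby_correlation} applies verbatim and gives $\expect{f(X_1)g(X_1)} \ge \expect{f(X_1)}\expect{g(X_1)}$; the mixed-monotonicity case gives the reverse inequality, again directly from Lemma~\ref{lemma:cheby_correlation}. For the inductive step, assume the claim for $n-1$ independent variables. Given $\mb X = (X_1,\dots,X_n)$ with the $X_i$ independent and $f,g$ coordinatewise monotone in the same direction, I would condition on $X_n = x_n$ and define $\phi(x_1,\dots,x_{n-1}) \doteq f(x_1,\dots,x_{n-1},x_n)$ and $\psi \doteq g(\cdot,x_n)$. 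These are functions of $n-1$ independent variables, still coordinatewise monotone in the same direction, so the inductive hypothesis yields
\begin{align*}
\expect{f(\mb X)g(\mb X)\;\middle|\;X_n = x_n} \;\ge\; \expect{f(\mb X)\;\middle|\;X_n=x_n}\,\expect{g(\mb X)\;\middle|\;X_n=x_n}
\end{align*}
for every $x_n$. Now define $F(x_n) \doteq \expect{f(\mb X)\mid X_n = x_n}$ and $G(x_n) \doteq \expect{g(\mb X)\mid X_n = x_n}$. The key observation is that $F$ and $G$ are monotone in $x_n$ in the same direction as $f$ and $g$ are in their last argument: since $X_1,\dots,X_{n-1}$ are independent of $X_n$, the conditional expectation is just an average over a fixed distribution of a function that is monotone in $x_n$ pointwise, hence $F,G$ inherit that monotonicity. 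Taking expectations over $X_n$ of the displayed inequality and then applying Lemma~\ref{lemma:cheby_correlation} to the pair $F(X_n), G(X_n)$ gives
\begin{align*}
\expect{f(\mb X)g(\mb X)} \;\ge\; \expect{F(X_n)G(X_n)} \;\ge\; \expect{F(X_n)}\expect{G(X_n)} \;=\; \expect{f(\mb X)}\expect{g(\mb X)},
\end{align*}
which closes the induction. The mixed case (one coordinatewise nondecreasing, the other nonincreasing) is handled identically, with both uses of Lemma~\ref{lemma:cheby_correlation} delivering the reversed inequality and the conditional expectations $F,G$ again inheriting opposite monotonicities.

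The main obstacle — really the only subtlety worth stating carefully — is justifying that the conditional expectation $F(x_n) = \expect{f(\mb X)\mid X_n=x_n}$ preserves monotonicity in $x_n$; this relies crucially on the \emph{independence} of the coordinates, so that conditioning does not distort the law of $(X_1,\dots,X_{n-1})$ and one is genuinely averaging a monotone function against a fixed measure. I would also note in passing the mild integrability bookkeeping (the hypotheses $\expect{f}<\infty$, $\expect{g}<\infty$ in Lemma~\ref{lemma:cheby_correlation}, extended here to finiteness of the relevant products and conditional expectations), but this is routine and can be dispatched in a sentence.
```
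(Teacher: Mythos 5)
Your proof is correct and is essentially the paper's argument: induction on $n$, using independence to show that conditional expectations inherit coordinatewise monotonicity, with the base case supplied by Lemma~\ref{lemma:cheby_correlation}. The only (immaterial) difference is the order of the two ingredients — the paper applies the univariate inequality conditionally on $(X_1,\dots,X_{n-1})$ and then the inductive hypothesis to the resulting $(n-1)$-variable conditional expectations, whereas you condition on $X_n$, apply the inductive hypothesis first, and finish with the univariate inequality on $F(X_n),G(X_n)$.
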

\begin{proof}
Again, it suffices to prove the first equality, which can be shown by induction. For $n = 1$, it reduces to Lemma~\ref{lemma:cheby_correlation}. Suppose the claim is true for any $m < n$. Since both $g$ and $f$ are nondecreasing functions in $X_n$ given $\widehat{\mb X}=\paren{X_1,\cdots,X_{n-1}}$, then 
\begin{align*}
\expect{f\left(\mb X\right) g\left(\mb X\right)} = \bb E\brac{\bb E\brac{f(\mb X)g(\mb X) \mid \widehat{\mb X} } } \geq \bb E\brac{ \bb E\brac{f(\mb X) \mid \widehat{\mb X}}\bb E\brac{g(\mb X) \mid \widehat{\mb X}} }
\end{align*}
Now, it follows by independence that $f^\prime\paren{\widehat{\mb X}} = \bb E\brac{f(\mb X) \mid \widehat{\mb X}}$ and $g^\prime\paren{\widehat {\mb X}} = \bb E\brac{g(\mb X) \mid \widehat{\mb X} }$ are both nondecreasing functions, then by the induction hypothesis, we have
\begin{align*}
\expect{f\left(\mb X\right) g\left(\mb X\right)}\ge \bb E\brac{f^\prime\paren{\widehat{\mb X}}}\bb E\brac{g^\prime\paren{\widehat{\mb X}}} = \bb E\brac{f(\mb X)}\bb E\brac{g(\mb X)}, 
\end{align*}
as desired. 
\end{proof}

\begin{lemma}[Differentiation under the Integral Sign] \label{lemma:exchange_diff_int}
Consider a function $F: \R^n \times \R \mapsto \R$ such that $\frac{\partial F\left(\mb x, s\right)}{\partial s}$ is well defined and measurable over $\mc U \times \left(0, t_0\right)$ for some open subset $\mc U \subset \R^n$ and some $t_0 > 0$. For any probability measure $\mu$ on $\R^n$ and any $t \in \left(0, t_0\right)$ such that $\int_{0}^t \int_{\mc U} \abs{\frac{\partial F\left(\mb x, s\right)}{\partial s}} \; \mu\left(d \mb x\right) ds < \infty$, it holds that 
\begin{align}
\frac{d}{dt} \int_{\mc U} F\left(\mb x, t\right) \mu\left(d\mb x\right) = \int_{\mc U} \frac{\partial F\left(\mb x, t\right)}{\partial t} \mu \left(d\mb x\right), \; \text{or} \; \frac{d}{dt} \bb E_{\mb x} \left[F\left(\mb x, t\right) \indicator{\mc U}\right] = \bb E_{\mb x} \left[\frac{\partial F\left(\mb x, t\right)}{\partial t} \indicator{\mc U}\right]. 
\end{align}
\end{lemma}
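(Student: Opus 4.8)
The plan is to recognize this as the classical Leibniz rule and prove it by combining the one-dimensional fundamental theorem of calculus with Tonelli--Fubini. First I would fix a reference point $t_1 \in (0,t)$. Since $\partial_s F(\mathbf x,\cdot)$ exists on $(0,t_0)$ for each $\mathbf x$, and (as is tacit in the statement and automatic in every application here, where $F(\mathbf x,\cdot)$ is $C^1$) the map $s\mapsto F(\mathbf x,s)$ is absolutely continuous on $[t_1,t]$, the fundamental theorem of calculus gives
\[
F(\mathbf x, t) \;=\; F(\mathbf x, t_1) + \int_{t_1}^{t} \partial_s F(\mathbf x, s)\, ds \qquad \text{for } \mu\text{-a.e. } \mathbf x \in \mathcal U .
\]
Integrating this against $\mu$ over $\mathcal U$ and interchanging the $\mathbf x$- and $s$-integrations yields
\[
\int_{\mathcal U} F(\mathbf x, t)\,\mu(d\mathbf x) \;=\; \int_{\mathcal U} F(\mathbf x, t_1)\,\mu(d\mathbf x) + \int_{t_1}^{t}\!\Big(\int_{\mathcal U} \partial_s F(\mathbf x, s)\,\mu(d\mathbf x)\Big)\, ds .
\]
Writing $G(t) \doteq \int_{\mathcal U} F(\mathbf x,t)\,\mu(d\mathbf x)$ and $g(s) \doteq \int_{\mathcal U} \partial_s F(\mathbf x,s)\,\mu(d\mathbf x)$, this reads $G(t) = G(t_1) + \int_{t_1}^t g(s)\,ds$ with $g\in L^1(t_1,t)$ --- which is precisely the content of the hypothesis $\int_0^t\!\int_{\mathcal U}|\partial_s F|\,\mu(d\mathbf x)\,ds<\infty$. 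Hence $G$ is locally absolutely continuous, and the Lebesgue differentiation theorem gives $G'(t) = g(t)$, i.e. $\frac{d}{dt}\int_{\mathcal U} F\, d\mu = \int_{\mathcal U} \partial_t F\, d\mu$; the ``$\bb E_{\mathbf x}$'' form is the same statement with $\mu$ a probability measure and the integrand $F\indicator{\mathcal U}$.

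The one step that is more than bookkeeping is the interchange of integration orders, which I would justify in two moves: apply Tonelli to the nonnegative function $|\partial_s F(\mathbf x,s)|$ on $\mathcal U\times(t_1,t)$ to see that its iterated integrals in either order agree and are finite (by hypothesis), and only then apply Fubini to $\partial_s F$ itself. Joint measurability of $\partial_s F$ on $\mathcal U\times(0,t_0)$ is not assumed verbatim but follows by exhibiting it as a pointwise limit of difference quotients of the measurable $F$; I would flag this explicitly rather than sweep it in.

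A secondary subtlety is that the conclusion as written asserts differentiability at \emph{every} admissible $t$, whereas the Fubini/absolute-continuity route delivers $G'(t)=g(t)$ only for Lebesgue-a.e. $t$. To close this gap one needs $g$ to be continuous at $t$; this holds, for instance, whenever $\partial_t F(\mathbf x,\cdot)$ is continuous and locally $\mu$-dominated, in which case one may alternatively run the difference-quotient argument directly through the dominated convergence theorem (Theorems 2.24--2.25 of~\cite{folland1999real}), via the mean value theorem, to obtain the pointwise identity. In every use of this lemma in the paper --- where $F(\mathbf x,\cdot)$ is $h_\mu$ (or a partial derivative of it) composed with the smooth reparametrization $\mb q(\cdot)$, and explicit dominating bounds are verified on the spot --- this extra hypothesis is in force, so I would simply remark that the a.e.\ statement upgrades to an everywhere statement under that mild additional assumption.
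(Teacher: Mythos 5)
Your proof is correct and follows essentially the same route as the paper's: the fundamental theorem of calculus in the $s$-variable combined with Tonelli--Fubini to interchange the $\mb x$- and $s$-integrations, and then differentiation of the resulting absolutely continuous function of $t$. The additional care you take---flagging joint measurability of $\partial_s F$ and the a.e.-versus-everywhere issue in recovering $G'(t)=g(t)$---concerns exactly the points the paper's own proof leaves tacit (its first equality likewise requires continuity, or a Lebesgue-point property, of $t\mapsto\int_{\mc U}\tfrac{\partial F(\mb x,t)}{\partial t}\,\mu(d\mb x)$ at the given $t$), so there is no substantive difference in approach.
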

\begin{proof}
We have 
\begin{align*}
\int_{\mc U} \frac{\partial F\left(\mb x, t\right)}{\partial t} \mu\left(d\mb x\right) 
& = \frac{d}{dt} \int_{0}^t \int_{\mc U} \frac{\partial F\left(\mb x, s\right)}{\partial s} \mu\left(d\mb x\right) ds \\
& = \frac{d}{dt} \int_{\mc U} \int_0^t \frac{\partial F\left(\mb x, s\right)}{\partial s} \; ds\; \mu\left(d\mb x\right) \\
& = \frac{d}{dt} \int_{\mc U} \left(F\left(\mb x, t\right) - F\left(\mb x, 0\right)\right)\; \mu\left(d\mb x\right) \\
& =  \frac{d}{dt} \int_{\mc U} F\left(\mb x, t\right)\; \mu\left(d\mb x\right),  
\end{align*}
where we have used the fundamental theorem of calculus for the first and third equalities, and measure-theoretic Fubini's theorem (see, e.g., Theorem 2.37 of~\cite{folland1999real}) for the second equality (as justified by our integrability assumption). 
\end{proof}

\begin{lemma}[Gaussian Tail Estimates] \label{lem:gaussian_tail_est}
Let $X \sim \mc N\paren{0, 1}$ and $\Phi\paren{x}$ be CDF of $X$. For any $x \ge 0$, we have the following estimates for $\Phi^c\paren{x} \doteq 1 - \Phi\paren{x}$: 
\begin{align}
\paren{\frac{1}{x} - \frac{1}{x^3}}\frac{\exp\paren{-x^2/2}}{\sqrt{2\pi}} & \le \Phi^c\paren{x} \le \paren{\frac{1}{x} - \frac{1}{x^3} + \frac{3}{x^5}}\frac{\exp\paren{-x^2/2}}{\sqrt{2\pi}}, \quad (\text{Type I}) \\
\frac{x}{x^2 + 1} \frac{\exp\paren{-x^2/2}}{\sqrt{2\pi}}& \le \Phi^c\paren{x} \le \frac{1}{x} \frac{\exp\paren{-x^2/2}}{\sqrt{2\pi}},  \quad (\text{Type II}) \\
\frac{\sqrt{x^2 + 4} - x}{2} \frac{\exp\paren{-x^2/2}}{\sqrt{2\pi}} & \le  \Phi^c\paren{x} \le \paren{\sqrt{2 + x^2} - x}  \frac{\exp\paren{-x^2/2}}{\sqrt{2\pi}} \quad (\text{Type III}). 
\end{align}
\end{lemma}
\begin{proof}
Type I bounds can be obtained by integration by parts with proper truncations. Type II upper bound can again be obtained via integration by parts, and the lower bound can be obtained via considering the function $f\paren{x} \doteq  \Phi^c\paren{x} - \frac{x}{x^2 + 1} \frac{\exp\paren{-x^2/2}}{\sqrt{2\pi}}$ and noticing it is always nonnegative. Type III bounds are mentioned in~\cite{duembgen2010bounding} and reproduced by the systematic approach developed therein (section 2). 
\end{proof}

\begin{lemma}[Moments of the Gaussian Random Variables] \label{lem:guassian_moment}
If $X \sim \mc N\left(0, \sigma^2\right)$, then it holds for all integer $p \geq 1$ that
\begin{align}
\expect{\abs{X}^p} = \sigma^p \paren{p -1}!! \brac{ \sqrt{\frac{2}{\pi}} \indicator{p\; \text{odd}}+\indicator{p\; \text{even}} } \leq \sigma^p \paren{p -1}!!. 
\end{align}
\end{lemma}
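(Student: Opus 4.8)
The plan is to compute $\expect{\abs{X}^p}$ for $X\sim\mc N(0,\sigma^2)$ directly from the density, reducing to the standard normal by scaling and then evaluating the resulting one-dimensional integral by a reduction/recursion argument. First I would write $X = \sigma Z$ with $Z\sim\mc N(0,1)$, so that $\expect{\abs{X}^p} = \sigma^p\,\expect{\abs{Z}^p}$, which isolates the pure numerical constant $\expect{\abs{Z}^p} = \sqrt{2/\pi}\int_0^\infty z^p e^{-z^2/2}\,dz$ (using symmetry of the density to restrict to the positive half-line and doubling).

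Next I would establish the recursion $I_p \doteq \int_0^\infty z^p e^{-z^2/2}\,dz$ satisfies $I_p = (p-1)\,I_{p-2}$ for $p\ge 2$, via integration by parts: write $z^p e^{-z^2/2} = z^{p-1}\cdot z e^{-z^2/2}$ and integrate, noting the boundary terms vanish. The base cases are $I_0 = \int_0^\infty e^{-z^2/2}\,dz = \sqrt{\pi/2}$ and $I_1 = \int_0^\infty z e^{-z^2/2}\,dz = 1$. Unrolling the recursion: for $p$ even, $I_p = (p-1)!!\cdot I_0 = (p-1)!!\,\sqrt{\pi/2}$, hence $\expect{\abs{Z}^p} = \sqrt{2/\pi}\cdot 2\cdot (p-1)!!\,\sqrt{\pi/2}$... wait, I need to be careful: $\expect{\abs{Z}^p} = 2\cdot\frac{1}{\sqrt{2\pi}}I_p = \sqrt{2/\pi}\,I_p$. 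So for $p$ even, $\expect{\abs{Z}^p} = \sqrt{2/\pi}\,(p-1)!!\sqrt{\pi/2} = (p-1)!!$, and for $p$ odd, $\expect{\abs{Z}^p} = \sqrt{2/\pi}\,(p-1)!!\cdot 1 = \sqrt{2/\pi}\,(p-1)!!$. This matches the claimed formula, and the final inequality follows since $\sqrt{2/\pi} < 1$.

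This argument is entirely routine — there is no real obstacle, only bookkeeping. The one place to be slightly careful is the parity split and making sure the factor $\sqrt{2/\pi}$ ends up attached to exactly the odd-$p$ case; tracking the constants $I_0 = \sqrt{\pi/2}$ versus $I_1 = 1$ through the unrolled product is what produces the indicator expression $\sqrt{2/\pi}\,\indicator{p\text{ odd}} + \indicator{p\text{ even}}$. I would present the two cases in a single short display using the double-factorial recursion, then state the bound $\expect{\abs{X}^p}\le \sigma^p(p-1)!!$ as an immediate consequence of $\sqrt{2/\pi}\le 1$.

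\begin{proof}
Write $X = \sigma Z$ with $Z \sim \mc N(0,1)$, so that $\expect{\abs{X}^p} = \sigma^p \expect{\abs{Z}^p}$. By symmetry of the Gaussian density,
\begin{align*}
\expect{\abs{Z}^p} = \frac{2}{\sqrt{2\pi}} \int_0^\infty z^p e^{-z^2/2}\, dz \doteq \sqrt{\frac{2}{\pi}}\, I_p.
\end{align*}
Integration by parts gives, for $p \ge 2$,
\begin{align*}
I_p = \int_0^\infty z^{p-1} \paren{z e^{-z^2/2}}\, dz = \brac{-z^{p-1} e^{-z^2/2}}_0^\infty + (p-1)\int_0^\infty z^{p-2} e^{-z^2/2}\, dz = (p-1) I_{p-2},
\end{align*}
since the boundary term vanishes. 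The base cases are $I_0 = \sqrt{\pi/2}$ and $I_1 = 1$. Unrolling the recursion yields $I_p = (p-1)!!\, \sqrt{\pi/2}$ when $p$ is even, and $I_p = (p-1)!!$ when $p$ is odd. Therefore
\begin{align*}
\expect{\abs{Z}^p} = \sqrt{\frac{2}{\pi}}\, I_p = (p-1)!! \brac{\sqrt{\frac{2}{\pi}} \indicator{p\ \text{odd}} + \indicator{p\ \text{even}}},
\end{align*}
which establishes the stated identity after multiplying by $\sigma^p$. Finally, since $\sqrt{2/\pi} \le 1$, we have $\expect{\abs{X}^p} \le \sigma^p (p-1)!!$.
\end{proof}
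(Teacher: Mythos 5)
Your proof is correct: the scaling reduction to the standard normal, the integration-by-parts recursion $I_p=(p-1)I_{p-2}$ with base cases $I_0=\sqrt{\pi/2}$, $I_1=1$, and the parity bookkeeping all check out, and they reproduce exactly the stated identity and the bound $\sqrt{2/\pi}\le 1$. The paper records this lemma as a standard fact in its appendix without supplying a proof, and your computation is precisely the canonical argument one would give, so there is nothing to reconcile between the two.
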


\begin{lemma}[Moments of the $\chi^2$ Random Variables] \label{lem:chi_sq_moment}
If $X \sim \mc \chi^2\paren{n}$, then it holds for all integer $p \geq 1$,
\begin{align}
\expect{X^p} = 2^p \frac{\Gamma\paren{p + n/2}}{\Gamma\paren{n/2}} =  \prod_{k=1}^p (n+2k-2) \le \frac{p!}{2}\paren{2n}^p. 
\end{align}
\end{lemma}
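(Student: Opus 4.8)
\textbf{Proof proposal for Lemma~\ref{lem:chi_sq_moment}.}

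The plan is to compute the moments of a $\chi^2(n)$ random variable directly from the definition of the Gamma distribution, and then to turn the resulting closed form into the stated product and the clean factorial upper bound. Recall that $X \sim \chi^2(n)$ is exactly a Gamma random variable with shape parameter $n/2$ and scale parameter $2$, with density $f(x) = \frac{1}{2^{n/2}\Gamma(n/2)} x^{n/2-1} e^{-x/2}$ on $x > 0$. First I would write, for any integer $p \ge 1$,
\begin{align*}
\expect{X^p} = \int_0^\infty x^p \frac{x^{n/2-1} e^{-x/2}}{2^{n/2}\Gamma(n/2)}\, dx = \frac{1}{2^{n/2}\Gamma(n/2)} \int_0^\infty x^{(n/2 + p) - 1} e^{-x/2}\, dx,
\end{align*}
and then recognize the remaining integral as $2^{n/2+p}\Gamma(n/2 + p)$ by the substitution $u = x/2$ (or simply by matching it against the normalizing constant of a Gamma$(n/2+p, 2)$ density). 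This yields the first claimed identity $\expect{X^p} = 2^p \Gamma(p+n/2)/\Gamma(n/2)$.

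Next I would convert the ratio of Gamma functions into the finite product. Using the functional equation $\Gamma(z+1) = z\Gamma(z)$ repeatedly, one gets $\Gamma(n/2 + p) = \Gamma(n/2) \prod_{k=1}^p (n/2 + k - 1)$, so that $2^p \Gamma(n/2+p)/\Gamma(n/2) = 2^p \prod_{k=1}^p (n/2 + k - 1) = \prod_{k=1}^p (n + 2k - 2)$, which is the middle expression in the statement. This step is purely the telescoping of the Gamma recursion and needs no estimates.

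Finally, for the upper bound, I would bound each factor of the product: for $1 \le k \le p$ we have $n + 2k - 2 \le n + 2k \le n(1 + 2k) \le 2nk$ when $k \ge 1$ and $n \ge 1$ (indeed $n + 2k - 2 \le 2nk$ is equivalent to $2k - 2 \le 2nk - n = n(2k-1)$, which holds since $n(2k-1) \ge 2k - 1 \ge 2k - 2$). Hence $\prod_{k=1}^p (n + 2k - 2) \le \prod_{k=1}^p 2nk = (2n)^p\, p!$, and since $p! \le p! \cdot \tfrac{1}{2} \cdot 2$ is not quite what we want, I note the statement actually claims the bound $\tfrac{p!}{2}(2n)^p$; to get the extra factor $\tfrac12$ observe that the $k=1$ factor is $n$, not $2n$, so $\prod_{k=1}^p(n+2k-2) \le n \prod_{k=2}^p 2nk = \tfrac{1}{2}(2n)^p p!$. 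I do not anticipate a genuine obstacle here; the only mild subtlety is bookkeeping the constant $\tfrac12$ correctly, which is handled by treating the first factor separately as just indicated.
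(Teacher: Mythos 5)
Your proof is correct: the Gamma-integral computation, the telescoping of $\Gamma(z+1)=z\Gamma(z)$, and the bound obtained by treating the $k=1$ factor (which equals $n$) separately so that $\prod_{k=1}^p(n+2k-2)\le n\,(2n)^{p-1}p!=\tfrac{p!}{2}(2n)^p$ all check out, including the equality case $p=1$. The paper states this lemma as a standard fact without proof, and your argument is exactly the standard derivation one would supply.
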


\begin{lemma}[Moments of the $\chi$ Random Variables] \label{lem:chi_moment}
If $X \sim \mc \chi\paren{n}$, then it holds for all integer $p \geq 1$,
\begin{align}
\expect{X^p} = 2^{p/2} \frac{\Gamma\paren{p/2 + n/2}}{\Gamma\paren{n/2}} \le p! n^{p/2}. 
\end{align}
\end{lemma}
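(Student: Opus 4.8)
The plan is to reduce to a Gamma integral for the exact identity, and then obtain the crude bound $p!\,n^{p/2}$ through a short chain of elementary inequalities. Realize $X = \norm{\mb g}{}$ with $\mb g \sim \mc N\paren{\mb 0, \mb I_n}$, so that $X^2 \sim \chi^2\paren{n}$ has density $\frac{1}{2^{n/2}\Gamma\paren{n/2}}\, y^{n/2-1} e^{-y/2}$ on $\paren{0,\infty}$. Then $\expect{X^p} = \expect{\paren{X^2}^{p/2}} = \frac{1}{2^{n/2}\Gamma\paren{n/2}}\int_0^\infty y^{\paren{p+n}/2-1} e^{-y/2}\, dy$, and the substitution $y = 2u$ rewrites this as $\frac{2^{p/2}}{\Gamma\paren{n/2}}\int_0^\infty u^{\paren{p+n}/2-1} e^{-u}\, du = 2^{p/2}\,\Gamma\paren{\paren{p+n}/2}/\Gamma\paren{n/2}$, which is the claimed closed form.

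For the inequality I would square both sides and invoke log-convexity of $\Gamma$ on $\paren{0,\infty}$: since $\tfrac{p+n}{2} = \tfrac12\paren{p+\tfrac n2} + \tfrac12\cdot\tfrac n2$, one has $\Gamma\paren{\tfrac{p+n}{2}}^2 \le \Gamma\paren{p+\tfrac n2}\,\Gamma\paren{\tfrac n2}$, hence $\expect{X^p}^2 \le 2^p\,\Gamma\paren{p+n/2}/\Gamma\paren{n/2}$. Applying the functional equation $\Gamma\paren{x+1} = x\,\Gamma\paren{x}$ a total of $p$ times gives $\Gamma\paren{p+n/2}/\Gamma\paren{n/2} = \prod_{k=0}^{p-1}\paren{\tfrac n2+k}$, so $\expect{X^p}^2 \le \prod_{k=0}^{p-1}\paren{n+2k} \le n^p\prod_{k=0}^{p-1}\paren{2k+1} = n^p\,\paren{2p-1}!!$, where the middle step uses $n+2k \le n\paren{2k+1}$, valid for $n\ge 1$.

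It then remains to verify the combinatorial bound $\paren{2p-1}!! \le \paren{p!}^2$, which yields $\expect{X^p} \le n^{p/2}\sqrt{\paren{2p-1}!!} \le p!\,n^{p/2}$. Writing $\paren{2p-1}!! = \paren{2p}!/\paren{2^p\,p!}$, this amounts to $\paren{2p}! \le 2^p\,\paren{p!}^3$; since $\paren{2p}! = p!\prod_{k=1}^p\paren{p+k} \le p!\,\paren{2p}^p = 2^p\,p!\,p^p$, it suffices to know $p^p \le \paren{p!}^2$, which holds because $\paren{p!}^2 = \prod_{k=1}^p k\paren{p+1-k}$ with each factor at least $p$. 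I do not expect a genuine obstacle here — this is a routine moment computation — and the only mildly delicate point is keeping the constants honest; if one prefers to avoid log-convexity, an equivalent route handles even $p$ directly via Lemma~\ref{lem:chi_sq_moment} applied to $X^2 \sim \chi^2\paren{n}$, and odd $p$ via Cauchy--Schwarz, $\expect{X^p} \le \sqrt{\expect{X^{p-1}}\expect{X^{p+1}}}$, interpolating between the two neighboring even moments.
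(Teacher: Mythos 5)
Your proof is correct, and since the paper states Lemma~\ref{lem:chi_moment} without proof (it is treated as a standard fact in the appendix), there is nothing in the paper to compare against. Every step checks out: the Gamma-integral computation gives the exact identity; log-convexity of $\Gamma$ with $\tfrac{p+n}{2}=\tfrac12\paren{p+\tfrac n2}+\tfrac12\cdot\tfrac n2$ gives $\expect{X^p}^2 \le 2^p\,\Gamma\paren{p+n/2}/\Gamma\paren{n/2} = \prod_{k=0}^{p-1}\paren{n+2k} \le n^p\paren{2p-1}!!$; and the elementary verification $\paren{2p-1}!! \le \paren{p!}^2$ (via $\paren{2p}! \le 2^p\,p!\,p^p$ and $p^p \le \paren{p!}^2$) closes the bound. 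As you note at the end, an even shorter route is available with the paper's own toolkit: by Cauchy--Schwarz, $\expect{X^p} \le \paren{\expect{X^{2p}}}^{1/2}$ with $X^2 \sim \chi^2\paren{n}$, so Lemma~\ref{lem:chi_sq_moment} gives $\expect{X^p} \le \paren{\tfrac{p!}{2}\paren{2n}^p}^{1/2} \le p!\,n^{p/2}$, using only $2^{p-1} \le p!$; but your argument is complete as written.
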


\begin{lemma}[Moment-Control Bernstein's Inequality for Scalar RVs, Theorem 2.10 of~\cite{foucart2013mathematical}] \label{lem:mc_bernstein_scalar}
Let $X_1, \dots, X_p$ be i.i.d. real-valued random variables. Suppose that there exist some positive number $R$ and $\sigma^2$ such that
\begin{align*}
\expect{\abs{X_k}^m} \leq \frac{m!}{2} \sigma^2 R^{m-2}, \; \; \text{for all integers}\; m \ge 2. 
\end{align*} 
Let $S \doteq \frac{1}{p}\sum_{k=1}^p X_k$, then for all $t > 0$, it holds  that 
\begin{align}
\prob{\abs{S - \expect{S}} \ge t} \leq 2\exp\left(-\frac{pt^2}{2\sigma^2 + 2Rt}\right).   
\end{align}
\end{lemma}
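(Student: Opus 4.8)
The plan is to run the standard Cram\'er--Chernoff argument tailored to Bernstein's moment condition. First, note the hypothesis is symmetric under $X_k \mapsto -X_k$, so it suffices to bound the one-sided deviation $\prob{S - \expect{S} \ge t}$; the two-sided estimate follows by adding the bound for $\prob{-(S-\expect{S}) \ge t}$, which accounts for the leading factor $2$. Writing $Y_k \doteq X_k - \expect{X_k}$ (well-defined since $\expect{\abs{X_k}} \le \expect{\abs{X_k}^2}^{1/2}\le \sigma < \infty$), independence and the i.i.d.\ assumption give, for any $\lambda > 0$,
\[
\prob{S - \expect{S} \ge t} \;=\; \prob{\sum_{k=1}^p Y_k \ge pt} \;\le\; e^{-\lambda p t}\paren{\expect{e^{\lambda Y_1}}}^p,
\]
by Markov's inequality applied to $e^{\lambda \sum_k Y_k}$.

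The heart of the argument is the moment generating function bound: for $0 < \lambda < 1/R$,
\[
\expect{e^{\lambda Y_1}} \;\le\; \exp\paren{\frac{\lambda^2 \sigma^2}{2(1 - \lambda R)}}.
\]
To prove this I would expand $e^{\lambda X_1} = \sum_{m \ge 0} \lambda^m X_1^m/m!$ and take expectations term by term, the interchange being justified because $\sum_{m\ge 2}\lambda^m \expect{\abs{X_1}^m}/m! \le \tfrac{\lambda^2\sigma^2}{2}\sum_{m\ge 2}(\lambda R)^{m-2} < \infty$ for $\lambda R < 1$ (Tonelli). Bounding $\expect{X_1^m} \le \expect{\abs{X_1}^m} \le \tfrac{m!}{2}\sigma^2 R^{m-2}$ for $m \ge 2$ and summing the geometric series gives $\expect{e^{\lambda X_1}} \le 1 + \lambda\expect{X_1} + \tfrac{\lambda^2\sigma^2}{2(1-\lambda R)}$; multiplying by $e^{-\lambda\expect{X_1}}$ and using $1 + w \le e^{w}$ with $w = \lambda\expect{X_1} + \tfrac{\lambda^2\sigma^2}{2(1-\lambda R)}$ yields the claimed bound.

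Combining the two displays, $\prob{S - \expect{S}\ge t} \le \exp\paren{-\lambda p t + \tfrac{p\lambda^2 \sigma^2}{2(1-\lambda R)}}$ for every $\lambda \in (0, 1/R)$. Choosing $\lambda = t/(\sigma^2 + Rt)$ --- which lies in $(0,1/R)$ since $Rt/(\sigma^2+Rt) < 1$ --- one has $1 - \lambda R = \sigma^2/(\sigma^2 + Rt)$, so the exponent simplifies exactly to $-pt^2/(2\sigma^2 + 2Rt)$; adding the symmetric lower-tail bound finishes the proof. I do not anticipate a genuine obstacle here: the only points needing care are the justification of the term-by-term expectation of the MGF and the verification that the optimizing $\lambda$ respects the constraint $\lambda R < 1$, both of which are immediate consequences of the Bernstein moment condition.
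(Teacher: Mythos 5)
Your proof is correct, and it is essentially the argument behind the result the paper simply cites (Theorem 2.10 of the Foucart--Rauhut reference): the Cram\'er--Chernoff bound with the MGF estimate $\expect{e^{\lambda Y_1}} \le \exp\paren{\lambda^2\sigma^2/(2(1-\lambda R))}$ obtained by term-by-term expansion under the Bernstein moment condition, followed by the choice $\lambda = t/(\sigma^2+Rt)$. This is also the same Laplace-transform strategy the paper itself uses to prove the matrix analogue (Lemma~\ref{lem:mc_bernstein_matrix}), so there is nothing to flag beyond the minor bookkeeping points you already handled (justifying the interchange of sum and expectation, and checking $\lambda R<1$).
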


\begin{lemma}[Moment-Control Bernstein's Inequality for Matrix RVs, Theorem 6.2 of~\cite{tropp2012user}] \label{lem:mc_bernstein_matrix}
Let $\mb X_1, \dots, \mb X_p\in \R^{d \times d}$ be i.i.d. random, symmetric matrices. Suppose there exist some positive number $R$ and $\sigma^2$ such that
\begin{align*}
\expect{\mb X_k^m} \preceq \frac{m!}{2} \sigma^2 R^{m-2} \mb I \; \text{and} -\expect{\mb X_k^m} \preceq \frac{m!}{2} \sigma^2 R^{m-2} \mb I\;,  \; \text{for all integers $m \ge 2$}. \label{eqn:bern-moment-conditions}
\end{align*}
Let $\mb S \doteq \frac{1}{p} \sum_{k = 1}^p \mb X_k$, then for all $t > 0$, it holds that 
\begin{align}
\prob{\norm{\mb S - \expect{\mb S}}{} \ge t} \le 2d\exp\paren{-\frac{pt^2}{2\sigma^2 + 2Rt}}.
\end{align}
\end{lemma}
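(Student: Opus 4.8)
The plan is to prove this by the matrix Laplace transform method of Ahlswede--Winter and Tropp; this is precisely the argument behind Theorem~6.2 of~\cite{tropp2012user}, and I sketch its main steps. First I would reduce to a one-sided eigenvalue bound. Since $\norm{\mb S - \expect{\mb S}}{} = \max\Brac{\lambda_{\max}(\mb S - \expect{\mb S}),\ \lambda_{\max}(\expect{\mb S} - \mb S)}$, a union bound reduces the claim to showing $\prob{\lambda_{\max}\paren{\tfrac{1}{p}\sum_k \mb Z_k} \ge t} \le d\,\exp\paren{-pt^2/(2\sigma^2+2Rt)}$ for the centered summands $\mb Z_k \doteq \mb X_k - \expect{\mb X_k}$, together with the same statement for $-\mb Z_k$; adding the two failure probabilities produces the factor $2d$. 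At this stage one must transfer the moment hypotheses from $\mb X_k$ to $\mb Z_k$: the case $m=2$ gives $\expect{\mb X_k^2} \preceq \sigma^2 \mb I$, hence $(\expect{\mb X_k})^2 \preceq \expect{\mb X_k^2} \preceq \sigma^2 \mb I$ by operator convexity of $t \mapsto t^2$, so $\norm{\expect{\mb X_k}}{} \le \sigma$; combining this with the hypothesis (and, for the odd-order pieces, a trace bound $\expect{\norm{\mb X_k}{}^m} \le \trace \expect{\mb X_k^{2\lceil m/2\rceil}}^{1/\lceil m/2\rceil}\cdot$const together with Cauchy--Schwarz) yields $\pm\expect{\mb Z_k^m} \preceq \tfrac{m!}{2}\widetilde\sigma^2 \widetilde R^{m-2}\mb I$ with $\widetilde\sigma,\widetilde R$ controlled in terms of $\sigma,R$. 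In the applications made of this lemma in the present paper the summands are used directly (e.g.\ $\mb W_k, \mb V_k$ in the concentration proofs), so one simply verifies the centered-moment conditions in place.

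Next I would carry out the core Laplace-transform estimate. By the matrix Chernoff master inequality --- Markov's inequality applied to $\trace\exp$, combined with Lieb's concavity theorem to obtain subadditivity of the matrix cumulant generating function --- one has, for every $\theta>0$,
\[
\prob{\lambda_{\max}\paren{\textstyle\sum_k \mb Z_k} \ge p t} \;\le\; e^{-\theta p t}\,\trace\exp\paren{\textstyle\sum_k \log\expect{e^{\theta \mb Z_k}}}.
\]
To bound the per-term log-mgf, expand $\expect{e^{\theta\mb Z_k}} = \mb I + \sum_{m\ge 2}\tfrac{\theta^m}{m!}\expect{\mb Z_k^m} \preceq \mb I + \tfrac{\theta^2\widetilde\sigma^2}{2}\sum_{m\ge 0}(\theta\widetilde R)^m\,\mb I = \mb I + \tfrac{\theta^2\widetilde\sigma^2/2}{1-\theta\widetilde R}\,\mb I$ for $0<\theta<1/\widetilde R$; then operator monotonicity of $\log$ and the scalar inequality $\log(1+x)\le x$ give $\log\expect{e^{\theta\mb Z_k}} \preceq \tfrac{\theta^2\widetilde\sigma^2/2}{1-\theta\widetilde R}\,\mb I$. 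Summing over $k$, exponentiating, and using $\trace\exp(c\,\mb I) = d\,e^c$ (each $\mb Z_k \in \R^{d\times d}$), the right-hand side is at most $d\,\exp\paren{-\theta p t + \tfrac{p\theta^2\widetilde\sigma^2/2}{1-\theta\widetilde R}}$.

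Finally I would optimize the free parameter: choosing the standard Bernstein value $\theta = t/(\widetilde\sigma^2+\widetilde R t)\in(0,1/\widetilde R)$ collapses the exponent to $-\tfrac{pt^2}{2(\widetilde\sigma^2+\widetilde R t)}$, which gives the one-sided bound, and the union over $\pm$ supplies the $2d$ and the stated form (with $\sigma,R$ once the centering constants are tracked, which is lossless whenever the summands are already mean-zero as in the applications here). The only genuinely non-elementary ingredient is Lieb's concavity theorem underpinning CGF subadditivity; everything else is scalar Bernstein bookkeeping lifted through operator-monotone and operator-convex inequalities. I expect the main obstacle to be the first step: as literally stated, the moment conditions are imposed on $\mb X_k$ rather than on $\mb X_k-\expect{\mb X_k}$, so one must either verify the centered moments (which may degrade numerical constants, and dimension dependence gets absorbed into the $d$ from the trace bound) or, as is the case throughout the paper, apply the Laplace argument directly to the recentered summands for which the moment control is established by hand.
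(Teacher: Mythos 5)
Your core engine (matrix Laplace transform, Lieb's theorem via Tropp's subadditivity of the cumulant generating function, then the scalar Bernstein optimization of $\theta$) is exactly the machinery the paper uses, and that part of your sketch is fine. The genuine gap is in how you handle the centering. You reduce at the outset to the recentered summands $\mb Z_k = \mb X_k - \expect{\mb X_k}$ and then need semidefinite moment bounds $\pm\expect{\mb Z_k^m} \preceq \tfrac{m!}{2}\widetilde\sigma^2\widetilde R^{m-2}\mb I$, but the hypothesis only controls the \emph{uncentered} moments, and your proposed transfer does not go through as sketched: expanding $(\mb X_k-\expect{\mb X_k})^m$ produces noncommutative words whose expectations you can only control through quantities like $\expect{\norm{\mb X_k}{}^j}$, and the hypothesis bounds $\norm{\expect{\mb X_k^j}}{}$, not $\expect{\norm{\mb X_k}{}^j}$; passing between them via the trace (as you suggest) costs a factor of $d$ in $\widetilde\sigma^2$, so you would prove a strictly weaker inequality than the one stated, not the bound with exponent $-pt^2/(2\sigma^2+2Rt)$. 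Your fallback --- ``verify the centered moments in the applications'' --- also does not match the paper: in the concentration proofs the moment conditions are verified for the uncentered $\mb W_k$, $\mb V_k$ (which are not mean-zero), and the lemma is invoked exactly as stated.

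The paper sidesteps the transfer entirely by keeping the centering inside the trace-exponential. After the master bound for $\lambda_{\max}\paren{\mb S_p - \expect{\mb S_p}}$ with $\mb S_p = \sum_k \mb X_k$, it peels off one summand at a time using $\expect{\trace\exp\paren{\mb H + \mb X}} \le \trace\exp\paren{\mb H + \log \expect{e^{\mb X}}}$ and then applies $\log \mb Y \preceq \mb Y - \mb I$, so each term contributes
\begin{align*}
\log \expect{e^{\theta \mb X_k}} - \theta\expect{\mb X_k} \;\preceq\; \expect{e^{\theta \mb X_k}} - \mb I - \theta\expect{\mb X_k} \;=\; \sum_{\ell \ge 2} \frac{\theta^\ell}{\ell!}\expect{\mb X_k^\ell},
\end{align*}
which is exactly what the uncentered hypothesis bounds, with no loss of constants and no dimension factor; the lower tail is handled by repeating the argument for $-\mb X_k$ using the $-\expect{\mb X_k^m}$ half of the hypothesis, giving the $2d$. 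To repair your write-up, replace the ``center first, transfer moments'' step with this observation; the rest of your argument then goes through verbatim.
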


Proving this lemma requires some modification to the original proof of Theorem 6.2 in~\cite{tropp2012user}. We record it here for the sake of completeness. 

\begin{proof}
Let us define $\mb S_p = \sum_{k=1}^p \mb X_k$, by Proposition 3.1 of \cite{tropp2012user}, we have
\begin{align}
\bb P\brac{\lambda_{\max}\paren{\mb S_p - \bb E\brac{\mb S_p}} \geq t }\;\leq\; \inf_{t>0} e^{-\theta t} \bb E\brac{\trace\exp\paren{ \theta \mb S_p - \theta\bb E\brac{\mb S_p}}  }, \label{eqn:bern-matrix-1}
\end{align}
To proceed, notice that
\begin{align*}
&  \bb E\brac{\trace\exp\paren{ \theta \mb S_p - \theta\bb E\brac{\mb S_p}} } \\
=\;&  \bb E_{\mb S_{p-1}}\bb E_{\mb X_p}\brac{\trace\exp \paren{\theta\paren{\mb S_{p-1}- \bb E\brac{\mb S_{p-1}}}+ \theta \mb X_p - \theta\bb E\brac{\mb X_p} } }  \\
\leq \;&\bb E_{\mb S_{p-1}}\brac{\trace\exp \paren{\theta(\mb S_{p-1}- \bb E\brac{\mb S_{p-1}})+ \log \paren{\bb E\brac{e^{\theta \mb X_p}}} - \theta\bb E\brac{\mb X_p} } } \\
\leq \; & \bb E_{\mb S_{p-1}}\brac{\trace\exp \paren{\theta(\mb S_{p-1}- \bb E\brac{\mb S_{p-1}})+ \bb E\brac{e^{\theta \mb X_p}}-\mb I - \theta\bb E\brac{\mb X_p} } } \\
= \;& \bb E_{\mb S_{p-1}}\brac{\trace\exp \paren{\theta(\mb S_{p-1}- \bb E\brac{\mb S_{p-1}})+ \sum_{\ell=2}^\infty \frac{\theta^\ell\bb E\brac{\mb X_k^\ell}}{\ell!} } }
\end{align*}
where at the third line we have used the result of Corollary 3.3 of~\cite{tropp2012user}, i.e., $\bb E\brac{\trace\exp\paren{\mb H+\mb X} }\leq \trace\exp \paren{\mb H + \log\paren{\bb E\brac{e^{\mb X}}}}$ for any fixed $\mb H$ and random, symmetric $\mb X$, at the fourth we have used the fact that $\log\mb X \preceq \mb X- \mb I$ for any $\mb X\succ \mb 0$ (as $\log u \le u - 1$ for any $u > 0$ and transfer rule applies here), and the last line relies on exchange of infinite summation and expectation, justified as $\mb X_p$ has a bounded spectral radius. By repeating the argument backwards for $\mb X_{p-1}, \cdots, \mb X_{1}$, we get
\begin{align}
& \bb E\brac{\trace\exp\paren{ \theta \mb S_p - \theta\bb E\brac{\mb S_p}} } \nonumber \\
\le\; & \trace\exp\paren{p\sum_{\ell=2}^\infty \frac{\theta^\ell \bb E\brac{\mb X_k^\ell}}{\ell!} } \le \trace\exp\paren{p\sum_{\ell=2}^p\frac{\theta^\ell\sigma^2 R^{\ell-2}}{2}\mb I }\nonumber \\
\le\; & d \norm{\exp\paren{p\sum_{\ell=2}^p\frac{\theta^\ell\sigma^2 R^{\ell-2}}{2}\mb I }}{} \le d\exp\paren{\frac{p\theta^2\sigma^2}{2(1- \theta R)}},\label{eqn:bern-matrix-5}
\end{align}
where we used the fact that $\expect{\mb X_i^m} \preceq \frac{m!}{2} \sigma^2 R^{m-2} \mb I$ in \eqref{eqn:bern-moment-conditions} and restrict $\theta<\frac{1}{R}$. Combining the results in \eqref{eqn:bern-matrix-1} and \eqref{eqn:bern-matrix-5}, we have
\begin{align}
\bb P\brac{\lambda_{\max} \paren{\mb S_p - \bb E\brac{\mb S_p} } \geq t } \;\leq\;  d\inf_{\theta<1/R}  \exp\paren{\frac{p\theta^2\sigma^2}{2(1- \theta R)}-\theta t }
\end{align}
by taking $\theta = t/(p\sigma^2+Rt) < 1/R$, we obtain
\begin{align}
\bb P\brac{\lambda_{\max} \paren{\mb S_p - \bb E\brac{\mb S_p} } \geq t } \le d \exp \paren{- \frac{t^2}{2p\sigma^2+2Rt}}. 
\end{align}
Considering $\mb X'_k = -\mb X_k$ and repeating the above argument, we can similarly obtain 
\begin{align}
\bb P\brac{\lambda_{\min} \paren{\mb S_p - \bb E\brac{\mb S_p} } \le -t } \le d \exp \paren{- \frac{t^2}{2p\sigma^2+2Rt}}. 
\end{align}
Putting the above bounds together, we have 
\begin{align}
\bb P\brac{\norm{\mb S_p - \bb E\brac{\mb S_p}}{} \geq t } \;\leq\; 2d \exp \paren{- \frac{t^2}{2p\sigma^2+2Rt}}. 
\end{align}
We obtain the claimed bound by substituting $\mb S_p = p\mb S$ and simplifying the resulting expressions. 
\end{proof}

\begin{corollary}[Moment-Control Bernstein's Inequality for Vector RVs] \label{cor:vector-bernstein} 
Let $\mb x_1, \dots, \mb x_p \in \reals^d$ be i.i.d. random vectors. Suppose there exist some positive number $R$ and $\sigma^2$ such that
\begin{equation*}
\bb E\left[ \norm{\mb x_k }{}^m \right] \;\le\; \frac{m!}{2} \sigma^2R^{m-2}, \quad \text{for all integers $m \ge 2$}. 
\end{equation*}
Let $\mb s = \frac{1}{p}\sum_{k=1}^p \mb x_k$, then for any $t > 0$, it holds that
\begin{align}
\bb P\brac{\norm{\mb s - \bb E\brac{\mb s}}{} \geq t} \; \leq \; 2(d+1)\exp\paren{-\frac{pt^2}{2\sigma^2+2Rt}}.
\end{align}
\end{corollary}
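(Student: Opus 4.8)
The plan is to reduce the vector-valued Bernstein inequality (Corollary~\ref{cor:vector-bernstein}) to the matrix Bernstein inequality already proved as Lemma~\ref{lem:mc_bernstein_matrix}. The standard device is \emph{dilation}: to each vector $\mb x_k \in \reals^d$ associate the symmetric $(d+1)\times(d+1)$ matrix
\begin{align*}
\mb X_k \;\doteq\; \begin{bmatrix} \mb 0 & \mb x_k \\ \mb x_k^* & 0 \end{bmatrix}.
\end{align*}
A direct computation shows that $\mb X_k^2 = \diag\paren{\mb x_k \mb x_k^*,\ \norm{\mb x_k}{}^2}$, and more generally that the even powers are block-diagonal and the odd powers have the same off-diagonal-block structure as $\mb X_k$ itself. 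In particular $\norm{\mb X_k}{} = \norm{\mb x_k}{}$, so the spectral behavior of the dilations is controlled by the norms of the $\mb x_k$. First I would verify the moment hypotheses of Lemma~\ref{lem:mc_bernstein_matrix} for the $\mb X_k$: from $\mb X_k^{2\ell} = \diag\paren{\paren{\mb x_k \mb x_k^*}^{\ell},\ \norm{\mb x_k}{}^{2\ell}} \preceq \norm{\mb x_k}{}^{2\ell}\mb I$ and $\mb X_k^{2\ell+1} \preceq \norm{\mb X_k}{}^{2\ell+1}\mb I = \norm{\mb x_k}{}^{2\ell+1}\mb I$ (with the matching lower bound $-\norm{\mb x_k}{}^m \mb I \preceq \mb X_k^m$ since $\norm{\mb X_k^m}{} = \norm{\mb x_k}{}^m$), taking expectations and invoking the assumed bound $\expect{\norm{\mb x_k}{}^m} \le \tfrac{m!}{2}\sigma^2 R^{m-2}$ gives exactly
\begin{align*}
\expect{\mb X_k^m} \preceq \frac{m!}{2}\sigma^2 R^{m-2}\mb I, \qquad -\expect{\mb X_k^m} \preceq \frac{m!}{2}\sigma^2 R^{m-2}\mb I
\end{align*}
for all integers $m \ge 2$, with the same $\sigma^2$ and $R$.

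Next I would note that the dilation map is linear, so $\mb S \doteq \tfrac{1}{p}\sum_k \mb X_k$ has the off-diagonal block form with $\mb s = \tfrac{1}{p}\sum_k \mb x_k$ in the corner, and likewise $\expect{\mb S}$ corresponds to $\expect{\mb s}$; hence $\mb S - \expect{\mb S}$ is the dilation of $\mb s - \expect{\mb s}$, and therefore $\norm{\mb S - \expect{\mb S}}{} = \norm{\mb s - \expect{\mb s}}{}$. Applying Lemma~\ref{lem:mc_bernstein_matrix} in dimension $d+1$ to the i.i.d.\ symmetric matrices $\mb X_k$ yields
\begin{align*}
\prob{\norm{\mb s - \expect{\mb s}}{} \ge t} = \prob{\norm{\mb S - \expect{\mb S}}{} \ge t} \le 2(d+1)\exp\paren{-\frac{pt^2}{2\sigma^2 + 2Rt}},
\end{align*}
which is the claimed bound. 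The only points that need a line or two of justification are the power computation for the $2\times 2$ block matrix (an easy induction, using that off-diagonal block nilpotent-type products telescope) and the observation that the centering commutes with dilation; neither is a genuine obstacle. If one wanted to be careful about the i.i.d.\ hypothesis of Lemma~\ref{lem:mc_bernstein_matrix}, the $\mb X_k$ are deterministic measurable images of the i.i.d.\ $\mb x_k$ and hence i.i.d., so nothing further is required.

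I do not expect any serious difficulty here: the whole content is the dilation trick plus bookkeeping of moments, and all the analytic work (the matrix Bernstein bound itself, including its somewhat delicate trace-exponential argument) has already been carried out in the proof of Lemma~\ref{lem:mc_bernstein_matrix}. The ``hardest'' step, such as it is, is simply making sure the moment transfer $\expect{\norm{\mb x_k}{}^m} \le \tfrac{m!}{2}\sigma^2 R^{m-2} \Rightarrow \pm\expect{\mb X_k^m}\preceq \tfrac{m!}{2}\sigma^2 R^{m-2}\mb I$ is stated for \emph{all} $m\ge 2$ (both even and odd), which is immediate from $\norm{\mb X_k}{}=\norm{\mb x_k}{}$ and the operator-monotone transfer rule $\mb A \preceq \norm{\mb A}{}\mb I$ applied to $\mb X_k^m$.
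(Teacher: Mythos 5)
Your proposal is correct and follows essentially the same route as the paper's own proof: the same symmetric dilation $\mb X_k$ of $\mb x_k$, the same verification of the moment conditions via $\pm\mb X_k^m \preceq \norm{\mb x_k}{}^m \mb I$ and $\expect{\norm{\mb x_k}{}^m} \le \tfrac{m!}{2}\sigma^2 R^{m-2}$, and the same identity $\norm{\mb S - \expect{\mb S}}{} = \norm{\mb s - \expect{\mb s}}{}$ before invoking Lemma~\ref{lem:mc_bernstein_matrix} in dimension $d+1$. No gaps.
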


\begin{proof} 
To obtain the result, we apply the matrix Bernstein inequality in Lemma \ref{lem:mc_bernstein_matrix} to a suitable embedding of the random vectors $\Brac{\mb x_k}_{k=1}^p$. For any $k \in [p]$, define the symmetric matrix
\begin{equation*}
\mb X_k = \left[ \begin{array}{cc} 0 & \mb x_k^* \\ \mb x_k & \mb 0 \end{array} \right] \in \reals^{(d+1) \times (d+1)}.
\end{equation*}
Then it holds that 
\begin{align*}
\bm X_k^{2\ell+1} = \norm{\mb x_k}{2}^{2\ell} \brac{\begin{array}{cc}
0 & \mb x_k^*\\
\mb x_k & \mb 0
\end{array} },\; 
\bm X_k^{2\ell+2} = \norm{\mb x_k}{}^{2\ell}\brac{\begin{array}{cc}
\norm{\mb x_k}{}^{2} & \mb 0\\
\mb 0 &  \mb x_k\mb x_k^*
\end{array} },\; \text{for all integers $\ell \ge 0$}.  \label{eqn:vector-bern-1}
\end{align*}
Using the fact that 
\begin{align*}
 \mb x_k\mb x_k^* \preceq \norm{\mb x_k}{}^2 \mb I,\quad \norm{\mb X_k}{} = \sqrt{\norm{\mb X_k^2}{}} = \norm{\mb x_k}{} \Longrightarrow -\norm{\mb x_k}{} \mb I \preceq \mb X_k \preceq  \norm{\mb x_k}{} \mb I, 
\end{align*}
and combining the above expressions for $\mb X_k^{2\ell+1}$ and $\mb X_k^{2\ell+2}$, we obtain 
\begin{align}
\bb E\brac{\mb X_k^{m}}, -\expect{\mb X_k^m} \preceq \bb E\brac{\norm{\mb x_k}{2}^m }\mb I\preceq \frac{m!}{2}\sigma^2R^{m-2}\mb I,\quad \text{for all integers $m \ge 2$}, 
\end{align}
Let $\mb S = \frac{1}{p}\sum_{k=1}^p \mb X_k$, noting that 
\begin{align}
\norm{\mb S - \bb E\brac{\mb S}}{} = \norm{\mb s -\bb E\brac{\mb s}}{}, 
\end{align}
and applying Lemma \ref{lem:mc_bernstein_matrix}, we complete the proof.
\end{proof}

\begin{lemma}[Integral Form of Taylor's Theorem]\label{lem:Taylor-integral-form}
	Let $f(\mb x): \bb R^n \mapsto \bb R$ be a twice continuously differentiable function, then for any direction $\mb y\in \bb R^n$, we have
	\begin{align}
		f(\mb x+t\mb y) &= f(\mb x) + t \int_0^1\innerprod{\nabla f(\mb x+st\mb y)}{\mb y} \; ds, \\
		f(\mb x+t\mb y) &= f(\mb x) + t \innerprod{\nabla f(\mb x)}{\mb y} + t^2 \int_0^1 (1-s)\innerprod{\nabla^2 f(\mb x+st\mb y) \mb y}{\mb y}\; ds.
	\end{align}
\end{lemma}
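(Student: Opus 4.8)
The plan is to reduce both identities to the one-dimensional fundamental theorem of calculus by restricting $f$ to the line segment joining $\mb x$ and $\mb x + t\mb y$. Concretely, I would introduce the auxiliary function $\phi : [0,1] \to \reals$ defined by $\phi(s) \doteq f(\mb x + st\mb y)$. Since $f$ is twice continuously differentiable and $s \mapsto \mb x + st\mb y$ is affine, $\phi$ is of class $C^2$ on $[0,1]$, and the chain rule gives $\dot\phi(s) = t\innerprod{\nabla f(\mb x + st\mb y)}{\mb y}$ and $\ddot\phi(s) = t^2\innerprod{\nabla^2 f(\mb x + st\mb y)\,\mb y}{\mb y}$; in particular both expressions are continuous in $s$, hence Riemann integrable.

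For the first identity I would simply apply the fundamental theorem of calculus to $\phi$ on $[0,1]$, namely $\phi(1) - \phi(0) = \int_0^1 \dot\phi(s)\,ds$. Substituting $\phi(0) = f(\mb x)$, $\phi(1) = f(\mb x + t\mb y)$, and the formula for $\dot\phi$ gives exactly $f(\mb x + t\mb y) = f(\mb x) + t\int_0^1 \innerprod{\nabla f(\mb x + st\mb y)}{\mb y}\,ds$.

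For the second identity I would first establish the scalar integral-remainder Taylor formula $\phi(1) = \phi(0) + \dot\phi(0) + \int_0^1 (1-s)\ddot\phi(s)\,ds$. This follows from integration by parts applied to $\int_0^1 (1-s)\ddot\phi(s)\,ds$, with $u = 1-s$ and $dv = \ddot\phi(s)\,ds$: the boundary term is $\bigl[(1-s)\dot\phi(s)\bigr]_0^1 = -\dot\phi(0)$, and the remaining integral $\int_0^1 \dot\phi(s)\,ds$ equals $\phi(1) - \phi(0)$ by the first identity applied to $\phi$. Rearranging yields the scalar formula. Substituting $\phi(0) = f(\mb x)$, $\phi(1) = f(\mb x + t\mb y)$, $\dot\phi(0) = t\innerprod{\nabla f(\mb x)}{\mb y}$, and $\ddot\phi(s) = t^2\innerprod{\nabla^2 f(\mb x + st\mb y)\,\mb y}{\mb y}$ produces the claimed second-order expansion.

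There is no genuine obstacle here; the result is classical. The only points needing (minor) care are checking that the $C^2$ hypothesis on $f$ legitimately transfers to $\phi$ so the chain-rule computations of $\dot\phi$ and $\ddot\phi$ are valid and the integrands continuous, and bookkeeping the factors of $t$ that come out of differentiating the affine map $s \mapsto \mb x + st\mb y$.
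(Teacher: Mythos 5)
Your proof is correct and follows essentially the same route as the paper's: restrict $f$ to the segment, apply the fundamental theorem of calculus for the first identity, and integrate by parts against the weight $(1-s)$ for the second. The only cosmetic difference is that you parametrize by $s\in[0,1]$ from the start, whereas the paper integrates over $\tau\in[0,t]$ and then performs the change of variable $\tau = st$ at the end.
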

\begin{proof}
	By the fundamental theorem of calculus, since $f$ is continuous differentiable, it is obvious that
	\begin{align}
		f(\mb x+ t\mb y) = f(\mb x) +\int_0^t \innerprod{\nabla f(\mb x+ \tau \mb y) }{\mb y }d \tau. \label{eqn:taylor-integral-1}
	\end{align}
	If $f$ is twice continuously differentiable, by using integral by parts, we obtain
	\begin{align}
		f(\mb x+t \mb y) &= f(\mb x) + \left.\brac{(\tau-t)  \innerprod{\nabla f(\mb x+ \tau \mb y) }{\mb y } }\right|_{0}^t - \int_{0}^t (\tau-t) \; d \innerprod{\nabla f(\mb x+ \tau \mb y) }{\mb y } \nonumber \\
		& = f(\mb x) + t  \innerprod{\nabla f(\mb x+ \tau \mb y) }{\mb y } +\int_{0}^t (t-\tau ) \innerprod{\nabla^2 f(\mb x+ \tau \mb y) \mb y }{\mb y } d\tau. \label{eqn:taylor-integral-2}
	\end{align}
	By a change of variable $\tau = st~(0\leq s\leq 1)$ for \eqref{eqn:taylor-integral-1} and \eqref{eqn:taylor-integral-2}, we get the desired results.
\end{proof}

\section{Auxillary Results for Proofs}
\begin{lemma}\label{lem:aux_asymp_proof_a}
Let $X\sim \mc N(0,\sigma_X^2)$ and $Y\sim \mc N(0,\sigma_Y^2)$ be independent random variables and $\Phi^c\paren{t} = \frac{1}{\sqrt{2\pi}} \int_t^{\infty} \exp\paren{-x^2/2}\; dx$ be the complementary cumulative distribution function of the standard normal. For any $a>0$, we have 
\begin{align}
\expect{X\indicator{X>0}} \;&= \;\frac{\sigma_X}{\sqrt{2\pi}}, \label{eqn:lem-aux-aymp-proof-a-1} \\
\expect{\exp\paren{-aX}X\indicator{X>0}} \;&= \; \frac{\sigma_X}{\sqrt{2\pi}} - a \sigma_X^2\exp\paren{\frac{a^2\sigma_X^2}{2}}\Phi^c\paren{a\sigma_X}, \label{eqn:lem-aux-aymp-proof-a-2} \\
\bb E\brac{\exp\paren{-aX}\indicator{X>0} }\;&=\; \exp\paren{\frac{a^2\sigma_X^2}{2}} \Phi^c\paren{a\sigma_X},  \label{eqn:lem-aux-aymp-proof-a-3}\\
\bb E\brac{\exp\paren{-a(X+Y)}X^2\indicator{X+Y>0}}\; &=\;\sigma_X^2\paren{1+a^2\sigma_X^2}\exp\paren{\frac{a^2\sigma_X^2 + a^2 \sigma_Y^2}{2}}\Phi^c\paren{a\sqrt{\sigma_X^2 + \sigma_Y^2}} \nonumber \\
& \qquad - \frac{a\sigma_X^4}{\sqrt{2\pi}\sqrt{\sigma_X^2 + \sigma_Y^2}}, \label{eqn:lem-aux-aymp-proof-a-4}\\
\bb E\brac{\exp\paren{-a(X+Y)}XY\indicator{X+Y>0} }\; &=\; a^2\sigma_X^2\sigma_Y^2 \exp\paren{\frac{a^2 \sigma_X^2 + a^2 \sigma_Y^2}{2}} \Phi^c\paren{a\sqrt{\sigma_X^2 + \sigma_Y^2}} \nonumber \\
& \qquad - \frac{a\sigma_X^2\sigma_Y^2}{\sqrt{2\pi}\sqrt{\sigma_X^2 + \sigma_Y^2}},  \label{eqn:lem-aux-aymp-proof-a-5} \\
\bb E\brac{\tanh\paren{aX}X} \;&=\; a\sigma_X^2 \bb E\brac{1-\tanh^2\paren{aX}},  \label{eqn:lem-aux-aymp-proof-a-6}\\
\bb E\brac{\tanh\paren{a(X+Y)}X} \;&=\; a\sigma_X^2 \bb E\brac{1-\tanh^2\paren{a(X+Y)}}\label{eqn:lem-aux-aymp-proof-a-7}. 
\end{align}
\end{lemma}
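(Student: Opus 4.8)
The plan is to establish the one-dimensional identities \eqref{eqn:lem-aux-aymp-proof-a-1}--\eqref{eqn:lem-aux-aymp-proof-a-3} first, and then bootstrap the remaining identities from them. For \eqref{eqn:lem-aux-aymp-proof-a-1}, direct integration of $x$ against the $\mc N(0,\sigma_X^2)$ density over $(0,\infty)$ gives $\sigma_X/\sqrt{2\pi}$. For \eqref{eqn:lem-aux-aymp-proof-a-3}, I would complete the square in the exponent, $-ax - x^2/(2\sigma_X^2) = -(x+a\sigma_X^2)^2/(2\sigma_X^2) + a^2\sigma_X^2/2$, and recognize the remaining integral as $\prob{\mc N(-a\sigma_X^2,\sigma_X^2) > 0} = \Phi^c(a\sigma_X)$. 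Identity \eqref{eqn:lem-aux-aymp-proof-a-2} follows either by the same completing-the-square trick or, more uniformly, by differentiating \eqref{eqn:lem-aux-aymp-proof-a-3} in $a$: since $\tfrac{d}{da}\expect{\exp(-aX)\indicator{X>0}} = -\expect{X\exp(-aX)\indicator{X>0}}$, differentiating the right-hand side of \eqref{eqn:lem-aux-aymp-proof-a-3} and using $\Phi^{c\prime}(t) = -\tfrac{1}{\sqrt{2\pi}}e^{-t^2/2}$ yields \eqref{eqn:lem-aux-aymp-proof-a-2}. Differentiation under the integral sign here is justified by Lemma~\ref{lemma:exchange_diff_int} together with the fact that $X^m\exp(-aX)\indicator{X>0}$ has a Gaussian-integrable envelope uniformly for $a$ in a neighborhood of any fixed positive value.

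Differentiating \eqref{eqn:lem-aux-aymp-proof-a-3} once more gives, for $X \sim \mc N(0,\sigma_X^2)$,
\begin{align}
\expect{X^2\exp(-aX)\indicator{X>0}} = \sigma_X^2\paren{1+a^2\sigma_X^2}\exp\paren{\tfrac{a^2\sigma_X^2}{2}}\Phi^c(a\sigma_X) - \frac{a\sigma_X^3}{\sqrt{2\pi}}, \label{eqn:aux-plan-scalar2}
\end{align}
which is \eqref{eqn:lem-aux-aymp-proof-a-4} in the degenerate case $\sigma_Y = 0$. To obtain \eqref{eqn:lem-aux-aymp-proof-a-4} and \eqref{eqn:lem-aux-aymp-proof-a-5} in general, I would condition on $Z \doteq X+Y \sim \mc N(0,\tau^2)$ with $\tau^2 \doteq \sigma_X^2 + \sigma_Y^2$: conditionally on $Z=z$, $X$ is Gaussian with mean $\tfrac{\sigma_X^2}{\tau^2}z$ and variance $\tfrac{\sigma_X^2\sigma_Y^2}{\tau^2}$, hence $\expect{X^2 \mid Z=z} = \tfrac{\sigma_X^4}{\tau^4}z^2 + \tfrac{\sigma_X^2\sigma_Y^2}{\tau^2}$ and $\expect{XY \mid Z=z} = z\,\expect{X \mid Z=z} - \expect{X^2 \mid Z=z} = \tfrac{\sigma_X^2\sigma_Y^2}{\tau^4}z^2 - \tfrac{\sigma_X^2\sigma_Y^2}{\tau^2}$. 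Substituting these into $\expect{e^{-aZ}X^2\indicator{Z>0}}$ and $\expect{e^{-aZ}XY\indicator{Z>0}}$ reduces both quantities to linear combinations of $\expect{Z^2 e^{-aZ}\indicator{Z>0}}$ (given by \eqref{eqn:aux-plan-scalar2} with $\sigma_X$ replaced by $\tau$) and $\expect{e^{-aZ}\indicator{Z>0}}$ (given by \eqref{eqn:lem-aux-aymp-proof-a-3} with $\sigma_X$ replaced by $\tau$); after collecting terms the coefficients simplify exactly to the stated right-hand sides, using the identity $\sigma_X^2 + \sigma_Y^2 + a^2\sigma_X^2\tau^2 = \tau^2(1+a^2\sigma_X^2)$ for \eqref{eqn:lem-aux-aymp-proof-a-4}.

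Finally, \eqref{eqn:lem-aux-aymp-proof-a-6} is Stein's identity: integrating $\expect{X\tanh(aX)}$ by parts, using that the $\mc N(0,\sigma_X^2)$ density $\varphi$ satisfies $x\,\varphi(x) = -\sigma_X^2\,\varphi'(x)$ and that $\tfrac{d}{dx}\tanh(ax) = a(1-\tanh^2(ax))$, the boundary terms vanish because $\tanh$ is bounded and $\varphi$ decays, leaving $a\sigma_X^2\expect{1-\tanh^2(aX)}$. For \eqref{eqn:lem-aux-aymp-proof-a-7} I would condition on $Y$ and apply the same integration-by-parts computation to the inner expectation over $X$ with $g(x) = \tanh(a(x+y))$, then take the expectation over $Y$. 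The only genuinely delicate points are the routine but slightly tedious bookkeeping in simplifying the coefficients in \eqref{eqn:lem-aux-aymp-proof-a-4}--\eqref{eqn:lem-aux-aymp-proof-a-5} and the uniform-integrability justification for differentiating under the integral sign; neither presents a conceptual obstacle.
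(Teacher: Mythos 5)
Your proposal is correct: all of the claimed reductions check out (I verified the differentiation of \eqref{eqn:lem-aux-aymp-proof-a-3} in $a$ reproduces \eqref{eqn:lem-aux-aymp-proof-a-2} and the $X^2$ analogue, and the conditional-moment formulas $\expect{X^2\mid Z=z}=\tfrac{\sigma_X^4}{\tau^4}z^2+\tfrac{\sigma_X^2\sigma_Y^2}{\tau^2}$, $\expect{XY\mid Z=z}=\tfrac{\sigma_X^2\sigma_Y^2}{\tau^4}z^2-\tfrac{\sigma_X^2\sigma_Y^2}{\tau^2}$ do collapse \eqref{eqn:lem-aux-aymp-proof-a-4}--\eqref{eqn:lem-aux-aymp-proof-a-5} to the one-dimensional identities with exactly the stated coefficients). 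The paper's own proof is a one-liner: it asserts that \eqref{eqn:lem-aux-aymp-proof-a-1}--\eqref{eqn:lem-aux-aymp-proof-a-5} follow by ``direct integrations'' and \eqref{eqn:lem-aux-aymp-proof-a-6}--\eqref{eqn:lem-aux-aymp-proof-a-7} by integration by parts, so on the $\tanh$ identities (Stein/integration by parts, then conditioning on $Y$) you coincide with the paper, while for the exponential moments your route is organized differently: instead of computing the bivariate (or $Z$-marginal) integrals directly each time, you compute only \eqref{eqn:lem-aux-aymp-proof-a-3} by completing the square, generate \eqref{eqn:lem-aux-aymp-proof-a-2} and the second moment by differentiating in $a$, and then reduce the two-variable identities by Gaussian conditioning on $Z=X+Y$. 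This buys you less bookkeeping (the only genuine integral is the one in \eqref{eqn:lem-aux-aymp-proof-a-3}) at the small cost of having to justify differentiation under the integral sign, which your dominated-convergence envelope handles; the paper's ``direct integration'' avoids that justification but requires repeating the completing-the-square computation for each identity. Either way the statement is fully established.
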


\begin{proof}
Equalities \eqref{eqn:lem-aux-aymp-proof-a-1}, \eqref{eqn:lem-aux-aymp-proof-a-2}, \eqref{eqn:lem-aux-aymp-proof-a-3}, \eqref{eqn:lem-aux-aymp-proof-a-4} and \eqref{eqn:lem-aux-aymp-proof-a-5} can be obtained by direct integrations. Equalities \eqref{eqn:lem-aux-aymp-proof-a-6} and \eqref{eqn:lem-aux-aymp-proof-a-7} can be derived using integration by part. 
\end{proof}


\begin{proof}[of Lemma~\ref{lem:neg_curvature_norm_bound}]
Indeed $\frac{1}{\paren{1+\beta t}^2} = \sum_{k=0}^\infty (-1)^k(k+1)\beta^kt^k$, as 
\begin{align*}
\sum_{k=0}^\infty (-1)^k(k+1)\beta^kt^k = \sum_{k=0}^\infty(- \beta t)^k + \sum_{k=0}^\infty k (-\beta t)^k = \frac{1}{1+\beta t} + \frac{-\beta t}{(1+\beta t)^2} = \frac{1}{(1+\beta t)^2}. 
\end{align*}
The magnitude of the coefficient vector is 
\begin{align*}
\norm{\mb b}{\ell^1} &= \sum_{k=0}^\infty \beta^k (1+k) = \sum_{k=0}^\infty \beta^k + \sum_{k=0}^\infty k \beta^k = \frac{1}{1-\beta} +\frac{\beta}{(1-\beta)^2} = \frac{1}{(1-\beta)^2} = T. 
\end{align*}
Observing that $\frac{1}{\paren{1+\beta t}^2} > \frac{1}{\paren{1+t}^2}$ for $t \in \brac{0, 1}$ when $0 < \beta < 1$, we obtain 
\begin{align}
\norm{p -f}{L^1[0,1]} &= \int_0^1 \abs{p(t) - f(t)}dt = \int_0^1 \brac{\frac{1}{(1+\beta t)^2} -\frac{1}{(1+t)^2}} dt=\frac{1-\beta}{2(1+\beta)}  \le \frac{1}{2\sqrt{T}}. 
\end{align}
Moreover, we have 
\begin{align}
\norm{f-p}{L^\infty[0,1]} = \max_{t\in[0,1]} p(t) - f(t) = \max_{t\in[0,1]} \frac{t(1-\beta)\paren{2+t(1+\beta)}}{(1+t)^2(1+\beta t)^2} \le 1-\beta = \frac{1}{\sqrt{T}}.  
\end{align}
Finally, notice that
\begin{align}
\sum_{k=0}^\infty \frac{b_k}{(1+k)^3} = \sum_{k=0}^\infty \frac{\paren{-\beta}^k}{(1+k)^2} 
& = \sum_{i=0}^\infty\brac{\frac{\beta^{2i}}{(1+2i)^2} - \frac{\beta^{2i+1}}{(2i+2)^2} } \nonumber \\
& = \sum_{i=0}^\infty \beta^{2i} \frac{(2i+2)^2 -\beta(2i+1)^2 }{(2i+2)^2(2i+1)^2} > 0, 
\end{align}
where at the second equality we have grouped consecutive even-odd pair of summands. In addition, we have
\begin{align}
\sum_{k=0}^n \frac{b_k}{(1+k)^3}\le \sum_{k=0}^n \frac{\abs{b_k}}{(1+k)^3} = \sum_{k=0}^n\frac{\beta^k}{(1+k)^2} \le 1+ \sum_{k=1}^n \frac{1}{(1+k)k} = 2-\frac{1}{n+1}, 
\end{align}
which converges to $2$ when $n \to \infty$, completing the proof. 
\end{proof}

\begin{proof}[of Lemma~\ref{lem:U-moments-bound}]
The first inequality is obviously true for $\mb v = \mb 0$. When $\mb v \neq \mb 0$, we have 
\begin{align*}
\expect{\abs{\mb v^* \mb z}^m } 
& =  \sum_{\ell = 0}^n \theta^\ell \paren{1-\theta}^{n - \ell} \sum_{\mc J \in \binom{[n]}{\ell}} \bb E_{Z \sim \mc N\paren{0, \norm{\mb v_{\mc J}}{}^2}}\brac{\abs{Z}^m} \\
& \le \sum_{\ell = 0}^n \theta^\ell \paren{1-\theta}^{n - \ell} \sum_{\mc J \in \binom{[n]}{\ell}} \bb E_{Z \sim \mc N\paren{0, \norm{\mb v}{}^2}}\brac{\abs{Z}^m}\\
& = \bb E_{Z \sim \mc N\paren{0, \norm{\mb v}{}^2}}\brac{\abs{Z}^m} \sum_{\ell = 0}^n \theta^\ell \paren{1-\theta}^{n - \ell} \binom{n}{\ell} \\
& = \bb E_{Z \sim \mc N\paren{0, \norm{\mb v}{}^2}}\brac{\abs{Z}^m}, 
\end{align*}
where the second line relies on the fact $\norm{\mb v_{\mc J}}{} \le \norm{\mb v}{}$ and that for a fixed order, central moment of Gaussian is monotonically increasing w.r.t. its variance. Similarly, to see the second inequality, 
\begin{align*}
\expect{\norm{\mb z}{}^m}
& = \sum_{\ell = 0}^n \theta^\ell \paren{1 -\theta}^{n-\ell} \sum_{\mc J \in \binom{[n]}{\ell}} \expect{\norm{\mb z'_{\mc J}}{}^m} \\
& \le \expect{\norm{\mb z'}{}^m}  \sum_{\ell = 0}^n \theta^\ell \paren{1 -\theta}^{n-\ell} \binom{n}{\ell} = \expect{\norm{\mb z'}{}^m}, 
\end{align*}
as desired. 
\end{proof}

\begin{proof}[of Lemma~\ref{lem:X-infinty-tail-bound}]
Consider one component of $\mb X$, i.e., $X_{ij}=B_{ij}V_{ij}$ for $i \in [n]$ and $j \in [p]$, where $B_{ij}\sim \mathrm{Ber}\paren{\theta}$) and $V_{ij}\sim \mc N(0,1)$. We have 
\begin{align*}
\bb P\brac{\abs{X_{ij}}> 4\sqrt{\log\paren{np}}} \leq \theta \bb P\brac{\abs{V_{ij}}> 4\sqrt{\log(np)}} \leq \theta \exp\paren{-8\log(np)}= \theta (np)^{-8}. 
\end{align*}
And also 
\begin{align*}
\prob{\abs{X_{ij}} < 1} = 1 -\theta + \theta \prob{\abs{V_{ij}} < 1} \le 1-0.3\theta. 
\end{align*}
Applying a union bound as  
\begin{align*}
\prob{\norm{\mb X}{\infty} \le 1 \; \text{or} \; \norm{\mb X}{\infty} \ge 4\sqrt{\log\paren{np}}} \le \paren{1-0.3\theta}^{np} + np\theta \paren{np}^{-8} \le \exp\paren{-0.3\theta n p} + \theta \paren{np}^{-7}, 
\end{align*}
we complete the proof. 
\end{proof}

\begin{lemma} \label{lem:half_inverse_pert}
Suppose $\mb A \succ \mb 0$. Then for any symmetric perturbation matrix $\mb \Delta$ with $\norm{\mb \Delta}{} \le \tfrac{\sigma_{\min}\paren{\mb A}}{2}$, it holds that 
\begin{align}
\norm{\paren{\mb A + \mb \Delta}^{-1/2} - \mb A^{-1/2}}{} \le \frac{2\norm{\mb A}{}^{1/2} \norm{\mb \Delta}{} }{\sigma_{\min}^2\paren{\mb A}}. 
\end{align}
\end{lemma}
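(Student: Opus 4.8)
The plan is to reduce the claim to a Sylvester-equation estimate, which reproduces the stated constant exactly. First I would record that the hypothesis $\norm{\mb \Delta}{}\le \sigma_{\min}\paren{\mb A}/2$ together with Weyl's inequality gives $\mb A+\mb \Delta \succeq \tfrac12\sigma_{\min}\paren{\mb A}\,\mb I \succ \mb 0$, so $\paren{\mb A+\mb \Delta}^{-1/2}$ is well-defined and $\norm{\paren{\mb A+\mb \Delta}^{-1}}{}\le 2/\sigma_{\min}\paren{\mb A}$, while $\norm{\mb A^{-1}}{}=1/\sigma_{\min}\paren{\mb A}$.

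Next, set $\mb P\doteq \mb A^{-1/2}$ and $\mb Q\doteq\paren{\mb A+\mb \Delta}^{-1/2}$, both symmetric positive definite. The algebraic identity $\mb P\paren{\mb P-\mb Q}+\paren{\mb P-\mb Q}\mb Q=\mb P^2-\mb Q^2=\mb A^{-1}-\paren{\mb A+\mb \Delta}^{-1}$ holds, and by the resolvent identity $\mb A^{-1}-\paren{\mb A+\mb \Delta}^{-1}=\mb A^{-1}\mb \Delta\paren{\mb A+\mb \Delta}^{-1}$, so $\norm{\mb P^2-\mb Q^2}{}\le \norm{\mb A^{-1}}{}\,\norm{\mb \Delta}{}\,\norm{\paren{\mb A+\mb \Delta}^{-1}}{}\le 2\norm{\mb \Delta}{}/\sigma_{\min}^2\paren{\mb A}$. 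The crucial step is the Sylvester bound: for $\mb P,\mb Q\succ\mb 0$, the equation $\mb P\mb Z+\mb Z\mb Q=\mb C$ has the unique solution $\mb Z=\int_0^\infty e^{-s\mb P}\mb C\,e^{-s\mb Q}\,ds$, whence $\norm{\mb Z}{}\le \norm{\mb C}{}\int_0^\infty e^{-s\paren{\lambda_{\min}\paren{\mb P}+\lambda_{\min}\paren{\mb Q}}}\,ds=\norm{\mb C}{}/\paren{\lambda_{\min}\paren{\mb P}+\lambda_{\min}\paren{\mb Q}}$. Applying this with $\mb Z=\mb P-\mb Q$, $\mb C=\mb P^2-\mb Q^2$ and discarding the nonnegative $\lambda_{\min}\paren{\mb Q}$ term yields $\norm{\mb P-\mb Q}{}\le \norm{\mb P^2-\mb Q^2}{}/\lambda_{\min}\paren{\mb P}=\norm{\mb A}{}^{1/2}\norm{\mb P^2-\mb Q^2}{}$, since $\lambda_{\min}\paren{\mb A^{-1/2}}=\norm{\mb A}{}^{-1/2}$. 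Combining the two displayed bounds gives $\norm{\paren{\mb A+\mb \Delta}^{-1/2}-\mb A^{-1/2}}{}\le 2\norm{\mb A}{}^{1/2}\norm{\mb \Delta}{}/\sigma_{\min}^2\paren{\mb A}$, as claimed.

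The main obstacle is justifying the Sylvester-equation estimate rigorously: one must check that the integral converges, that differentiating $s\mapsto e^{-s\mb P}\mb C\,e^{-s\mb Q}$ and integrating from $0$ to $\infty$ indeed produces $-\mb C=-\mb P\mb Z-\mb Z\mb Q$ (the boundary term at $\infty$ vanishing because $\mb P,\mb Q\succ\mb 0$), and that $\norm{e^{-s\mb P}}{}=e^{-s\lambda_{\min}\paren{\mb P}}$ for symmetric positive definite $\mb P$; uniqueness follows since $\mb P$ and $-\mb Q$ have disjoint spectra.

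An alternative that avoids Sylvester equations entirely uses the integral representation $\mb X^{-1/2}=\tfrac1\pi\int_0^\infty \paren{t\mb I+\mb X}^{-1}t^{-1/2}\,dt$ (which follows from $\tfrac1\pi\int_0^\infty \tfrac{t^{-1/2}}{t+\lambda}\,dt=\lambda^{-1/2}$). Applying the resolvent identity inside the integral and bounding $\norm{\paren{t\mb I+\mb A}^{-1}}{}\le\paren{t+\sigma_{\min}\paren{\mb A}}^{-1}$ and $\norm{\paren{t\mb I+\mb A+\mb \Delta}^{-1}}{}\le\paren{t+\tfrac12\sigma_{\min}\paren{\mb A}}^{-1}$ reduces matters to the elementary integral $\int_0^\infty \tfrac{dt}{\sqrt t\,\paren{t+\sigma/2}\paren{t+\sigma}}\le \int_0^\infty \tfrac{dt}{\sqrt t\,\paren{t+\sigma/2}^2}=\pi\sqrt2\,\sigma^{-3/2}$ with $\sigma=\sigma_{\min}\paren{\mb A}$; this gives the sharper bound $\sqrt2\,\norm{\mb \Delta}{}\,\sigma_{\min}^{-3/2}\paren{\mb A}$, from which the stated estimate follows because $\norm{\mb A}{}\ge\sigma_{\min}\paren{\mb A}$ implies $\sqrt2\,\sigma_{\min}^{-3/2}\paren{\mb A}\le 2\norm{\mb A}{}^{1/2}\sigma_{\min}^{-2}\paren{\mb A}$.
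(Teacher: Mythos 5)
Your argument is correct, and its first route has the same two-step skeleton as the paper's proof, but it replaces both cited ingredients with self-contained arguments. The paper first invokes Theorem 6.2 of Higham's book, $\norm{\mb X^{1/2}-\mb Y^{1/2}}{}\le \norm{\mb X-\mb Y}{}/\bigl(\sigma_{\min}^{1/2}(\mb X)+\sigma_{\min}^{1/2}(\mb Y)\bigr)$, applied to $\mb X=(\mb A+\mb \Delta)^{-1}$, $\mb Y=\mb A^{-1}$, and then bounds $\norm{(\mb A+\mb \Delta)^{-1}-\mb A^{-1}}{}$ by the Neumann-series perturbation bound from Stewart--Sun, using $\norm{\mb A^{-1}}{}\norm{\mb \Delta}{}\le 1/2$ to absorb the denominator into a factor $2$. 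Your Sylvester-equation estimate is in effect a proof of the Higham inequality (note $\mb P-\mb Q$ solves $\mb P\mb Z+\mb Z\mb Q=\mb P^2-\mb Q^2$, which is exactly the square-root Lipschitz bound once you keep both $\lambda_{\min}$ terms), and your resolvent identity $\mb A^{-1}-(\mb A+\mb \Delta)^{-1}=\mb A^{-1}\mb \Delta(\mb A+\mb \Delta)^{-1}$ combined with Weyl's inequality replaces the Stewart--Sun bound; the factor $2$ then enters through $\norm{(\mb A+\mb \Delta)^{-1}}{}\le 2/\sigma_{\min}(\mb A)$ rather than through $1/(1-\norm{\mb A^{-1}}{}\norm{\mb \Delta}{})\le 2$, yielding the identical constant. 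What your version buys is that the lemma no longer rests on external references, at the cost of having to justify the operator-integral representation of the Sylvester solution (which you correctly flag). Your second route, via $\mb X^{-1/2}=\tfrac{1}{\pi}\int_0^\infty (t\mb I+\mb X)^{-1}t^{-1/2}\,dt$ with the resolvent identity under the integral, is genuinely different from the paper and is the stronger argument: it gives the sharper bound $\sqrt{2}\,\norm{\mb \Delta}{}\,\sigma_{\min}^{-3/2}(\mb A)$, from which the stated inequality follows since $\sigma_{\min}(\mb A)\le \norm{\mb A}{}$, and it sidesteps both the Sylvester machinery and the cited perturbation lemmas; either route would serve as a valid replacement for the paper's proof.
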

\begin{proof}
First note that 
\begin{align*}
\norm{\paren{\mb A + \mb \Delta}^{-1/2} - \mb A^{-1/2}}{} \le \frac{ \norm{\paren{\mb A + \mb \Delta}^{-1} - \mb A^{-1}}{}}{\sigma^{1/2}_{\min}\paren{\mb A^{-1}}}
\end{align*}
as by our assumption $\mb A + \mb \Delta \succ \mb 0$ and the fact (Theorem 6.2 in~\cite{higham2008functions}) that $\norm{\mb X^{1/2} - \mb Y^{1/2}}{} \le \norm{\mb X - \mb Y}{}/\paren{\sigma_{\min}^{1/2}\paren{\mb X} + \sigma_{\min}^{1/2}\paren{\mb Y}}$ for any $\mb X, \mb Y \succ \mb 0$ applies. Moreover, using the fact 
\begin{align*}
\norm{\paren{\mb X + \mb \Delta}^{-1} - \mb X^{-1}}{} \le \frac{\norm{\mb X^{-1}}{} \norm{\mb X^{-1} \mb \Delta}{}}{1-\norm{\mb X^{-1} \mb \Delta}{}} \le \frac{\norm{\mb \Delta}{} \norm{\mb X^{-1}}{}^2}{1-\norm{\mb X^{-1}}{} \norm{\mb \Delta}{}}
\end{align*} 
for nonsingular $\mb X$ and perturbation $\mb \Delta$ with $\norm{\mb X^{-1}}{} \norm{\mb \Delta}{} < 1$ (see, e.g., Theorem 2.5 of Chapter III in~\cite{stewart1990matrix}), we obtain 
\begin{align*}
\tfrac{1}{\sigma^{1/2}_{\min}\paren{\mb A^{-1} }}\norm{\paren{\mb A + \mb \Delta}^{-1} - \mb A^{-1}}{} \le \norm{\mb A}{}^{1/2} \frac{\norm{\mb \Delta}{} \norm{\mb A^{-1}}{}^2}{1-\norm{\mb A^{-1}}{} \norm{\mb \Delta}{}} \le \frac{2\norm{\mb A}{}^{1/2} \norm{\mb \Delta}{} }{\sigma_{\min}^2\paren{\mb A}}, 
\end{align*} 
where we have used the fact $\norm{\mb A^{-1}}{} \norm{\mb \Delta}{} \le 1/2$ to simplify at the last inequality. 
\end{proof}

\begin{lemma} \label{lem:bg_identity_diff}
There exists a positive constant $C$ such that for any $\theta \in \paren{0, 1/2}$ and $n_2 > C n_1^2 \log n_1$, the random matrix $\mb X \in \R^{n_1 \times n_2}$ with $\mb X\sim_{i.i.d.} \mathrm{BG}\paren{\theta}$ obeys 
\begin{align}
\norm{\frac{1}{n_2 \theta} \mb X \mb X^* - \mb I}{} \le 10\sqrt{\frac{\theta n_1 \log n_2}{n_2}}
\end{align}
with probability at least $1 - n_2^{-8}$. 
\end{lemma}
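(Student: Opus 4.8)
The plan is to write $\tfrac{1}{n_2\theta}\mb X\mb X^* - \mb I$ as a normalized sum of i.i.d.\ centered random matrices and invoke the moment-control matrix Bernstein inequality, Lemma~\ref{lem:mc_bernstein_matrix}. Let $\mb x_k$ denote the $k$-th column of $\mb X$, so that $\mb x_k \sim_{i.i.d.} \mathrm{BG}(\theta)$ with $\expect{\mb x_k\mb x_k^*} = \theta\mb I$; then
\begin{align*}
\tfrac{1}{n_2\theta}\mb X\mb X^* - \mb I \;=\; \tfrac{1}{n_2}\sum_{k=1}^{n_2}\mb Z_k, \qquad \mb Z_k \doteq \tfrac{1}{\theta}\paren{\mb x_k\mb x_k^* - \theta\mb I}, \qquad \expect{\mb Z_k} = \mb 0.
\end{align*}
Applying Lemma~\ref{lem:mc_bernstein_matrix} with $d = n_1$ and $p = n_2$, the two things to supply are: (i) constants $\sigma^2, R$ (depending on $\theta, n_1$) with $\expect{\mb Z_k^m} \preceq \tfrac{m!}{2}\sigma^2 R^{m-2}\mb I$ and $-\expect{\mb Z_k^m} \preceq \tfrac{m!}{2}\sigma^2 R^{m-2}\mb I$ for every integer $m \ge 2$; and (ii) verification that with $t = 10\sqrt{\theta n_1\log n_2 / n_2}$ one has $\tfrac{n_2 t^2}{2\sigma^2 + 2Rt} \ge \log(2n_1) + 8\log n_2$ whenever $n_2 > C n_1^2\log n_1$.

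For step (i), the observation is that $\mb x_k\mb x_k^* - \theta\mb I$ is a rank-one perturbation of a multiple of $\mb I$, hence diagonalizes explicitly: it has eigenvalue $\norm{\mb x_k}{}^2 - \theta$ on the line $\mathrm{span}(\mb x_k)$ and eigenvalue $-\theta$ on the orthogonal complement. Writing $\widehat{\mb x}_k = \mb x_k/\norm{\mb x_k}{}$ (with the convention $\widehat{\mb x}_k\widehat{\mb x}_k^* = \mb 0$ on the event $\mb x_k = \mb 0$),
\begin{align*}
\mb Z_k^m \;=\; \tfrac{1}{\theta^m}\brac{\paren{\norm{\mb x_k}{}^2 - \theta}^m\,\widehat{\mb x}_k\widehat{\mb x}_k^* \;+\; (-\theta)^m\paren{\mb I - \widehat{\mb x}_k\widehat{\mb x}_k^*}}.
\end{align*}
Since the $\mathrm{BG}$ law is invariant under sign flips and coordinate permutations, $\expect{g(\norm{\mb x_k}{}^2)\,\widehat{\mb x}_k\widehat{\mb x}_k^*}$ is a scalar multiple of $\mb I$ for any measurable $g$, and a short computation collapses the matrix expectation to a scalar: $\expect{\mb Z_k^m} = \theta^{-m}\paren{\tfrac{1}{n_1}\expect{(\norm{\mb x_k}{}^2-\theta)^m} + (-\theta)^m\tfrac{n_1-1}{n_1}}\mb I$. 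It then suffices to bound the \emph{scalar} moments $\expect{\abs{\norm{\mb x_k}{}^2-\theta}^m}$; splitting $\norm{\mb x_k}{}^2 - \theta = (\norm{\mb x_k}{}^2 - n_1\theta) + (n_1-1)\theta$, the centered fluctuation $\norm{\mb x_k}{}^2 - n_1\theta = \sum_i(\Omega_{ik}V_{ik}^2 - \theta)$ is a sum of i.i.d.\ centered sub-exponential variables whose moments are controlled via the $\mathrm{BG}$-versus-Gaussian comparison of Lemma~\ref{lem:U-moments-bound}, the $\chi^2$-moment bound of Lemma~\ref{lem:chi_sq_moment}, and the standard passage from a Bernstein MGF bound to moment bounds for independent sums. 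This delivers the required $\sigma^2$ and $R$.

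For step (ii), one substitutes into Lemma~\ref{lem:mc_bernstein_matrix}: with $t = 10\sqrt{\theta n_1\log n_2/n_2}$ one has $n_2 t^2 = 100\,\theta n_1\log n_2$, and the hypothesis $n_2 > C n_1^2\log n_1$ — which forces $n_1 \le \sqrt{n_2}$ and $\log n_2 \gtrsim \log n_1$ — is exactly what makes the ``$2Rt$'' term negligible against ``$2\sigma^2$'' in the Bernstein denominator and lets the exponent dominate $\log(2n_1) + 8\log n_2$, yielding failure probability $\le 2n_1\exp(-8\log n_2 - \log(2n_1)) \le n_2^{-8}$ once $C$ is large enough; a union bound over the few error events generated in step (i) closes the estimate. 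I expect the crux to be obtaining a \emph{sharp enough} variance proxy $\sigma^2$ in step (i): the naive bound $\expect{\mb Z_k^m}\preceq\expect{\norm{\mb Z_k}{}^m}\mb I$ via $\chi^2$-domination of $\norm{\mb x_k}{}^2$ effectively treats $\norm{\mb x_k}{}^2$ as having size $n_1$ rather than $n_1\theta$, and is too lossy to reach the claimed rate; one must genuinely exploit the concentration of the column norm $\norm{\mb x_k}{}^2$ around $n_1\theta$ and carry this through \emph{uniformly in $m$}, which is precisely what the scalar-reduction identity above enables. The remaining difficulty is routine but fiddly: tracking constants so that $\sigma^2$, $R$, and the several tail bounds interlock with $n_2 > C n_1^2\log n_1$ to produce both the stated numerical constant and the probability $1 - n_2^{-8}$.
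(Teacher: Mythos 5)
Your proposal is correct in outline and follows essentially the paper's route: a columnwise sum, the moment-control matrix Bernstein inequality (Lemma~\ref{lem:mc_bernstein_matrix}), reduction of matrix moments to scalar moments via the sign/permutation symmetry of the $\mathrm{BG}$ law, and the same choice $t = 10\sqrt{\theta n_1\log n_2/n_2}$, with $n_2 \gtrsim n_1^2\log n_1$ absorbing the dimensional prefactor and the $Rt$ term. The real difference is your diagnosis of the crux. The paper keeps the matrices uncentered (so every power is PSD and only the upper semidefinite moment bound is needed), uses $\expect{\paren{\mb x_k\mb x_k^*}^m} = \expect{\norm{\mb x_k}{}^{2m-2}\,x_k(1)^2}\,\mb I$, computes the $m=2$ moment exactly as $3\theta+(n_1-1)\theta^2$, and for $m\ge 3$ applies exactly the ``crude'' $\chi^2(n_1)$ domination you claim is too lossy; that crudeness is harmless because in a Bernstein bound only the $m=2$ term determines the variance proxy $\sigma^2$, while the higher moments merely set $R$, which enters only through $Rt$ and is negligible under the sample-size hypothesis. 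So your centered spectral identity for $\mb Z_k^m$ (which is correct, including the zero-column bookkeeping) is a clean variant, but the extra machinery for bounding $\expect{\abs{\norm{\mb x_k}{}^2-\theta}^m}$ sharply and uniformly in $m$ buys nothing over the paper's two-line estimate. One caution for your step (ii): done honestly, the variance proxy for the $\theta^{-1}$-scaled centered columns is $\expect{\mb Z_k^2} = \paren{3/\theta + n_1 - 2}\,\mb I$, not $O(n_1\theta)$ (the paper's $\sigma^2 = 3n_1\theta$ is a bound for the unscaled $\mb x_k\mb x_k^*$ that its final display reuses for the scaled sum), so the exponent closes cleanly only when $\theta^2 n_1$ is bounded below --- e.g.\ constant $\theta$, the regime in which the lemma is invoked --- and this, rather than the $m \ge 3$ moments, is where your ``routine but fiddly'' constant-tracking will actually bite.
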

\begin{proof}
Observe that $\expect{\tfrac{1}{\theta}\mb x_k \mb x_k^*} = \mb I$ for any column $\mb x_k$ of $\mb X$ and so $\tfrac{1}{n_2 \theta} \mb X \mb X^*$ can be considered as a normalize sum of independent random matrices. Moreover, for any integer $m \ge 2$, 
\begin{align*}
\expect{\paren{\frac{1}{\theta} \mb x_k \mb x_k^*}^m} = \frac{1}{\theta^m} \expect{\norm{\mb x_k}{}^{2m-2} \mb x_k \mb x_k^*}. 
\end{align*}
Now $\expect{\norm{\mb x_k}{}^{2m-2} \mb x_k \mb x_k^*}$ is a diagonal matrix (as $\expect{\norm{\mb x_k}{}^2 x_k\paren{i} x_k\paren{j}} = - \expect{\norm{\mb x_k}{}^2 x_k\paren{i} x_k\paren{j}}$ for any $i \neq j$ by symmetry of the distribution) in the form $\expect{\norm{\mb x_k}{}^{2m-2} \mb x_k \mb x_k^*} = \expect{\norm{\mb x}{}^{2m-2} x(1)^2}\mb I$ for $\mb x \sim_{i.i.d.} \mathrm{BG}\paren{\theta}$ with $\mb x \in \R^{n_1}$. Let $t^2\paren{\mb x} = \norm{\mb x}{}^2 - x(1)^2$. Then if $m = 2$, 
\begin{align*}
\expect{\norm{\mb x}{}^2 x(1)^2} 
& = \expect{x(1)^4} + \expect{t^2\paren{\mb x}} \expect{x(1)^2} \\
& = \expect{x(1)^4} + \paren{n_1-1} \paren{\expect{x(1)^2}}^2 = 3\theta + \paren{n_1-1} \theta^2 \le 3n_1 \theta, 
\end{align*}
where for the last simplification we use the assumption $\theta \le 1/2$. For $m \ge 3$, 

\begin{align*}
\expect{\norm{\mb x}{}^{2m-2} x(1)^2} 
& = \sum_{k=0}^{m-1} \binom{m-1}{k} \expect{t^{2k}\paren{\mb x} x(1)^{2m-2k}} =  \sum_{k=0}^{m-1} \binom{m-1}{k}  \expect{t^{2k}\paren{\mb x}} \expect{x(1)^{2m-2k}} \\
& \le \sum_{k=0}^{m-1} \binom{m-1}{k} \bb E_{Z \sim \chi^2\paren{n_1 -1}}\brac{Z^k} \theta \bb E_{W \sim \mc N\paren{0, 1}}\brac{W^{2m-2k}} \\
& \le \theta \sum_{k=0}^{m-1} \binom{m-1}{k} \frac{k!}{2} \paren{2n_1 - 2}^k \paren{2m-2k}!! \\
& \le \theta 2^m \frac{m!}{2} \sum_{k=0}^{m-1} \binom{m-1}{k} \paren{n_1-1}^k \\
& \le \frac{m!}{2} n_1^{m-1} 2^{m-1}, 
\end{align*}
where we have used the moment estimates for Gaussian and $\chi^2$ random variables from Lemma~\ref{lem:guassian_moment} and Lemma~\ref{lem:chi_sq_moment}, and also $\theta \le 1/2$. Taking $\sigma^2 = 3n_1 \theta$ and $R = 2n_1$, and invoking the matrix Bernstein in Lemma~\ref{lem:mc_bernstein_matrix}, we obtain 
\begin{align}
\expect{\norm{\frac{1}{p\theta} \sum_{k=1}^p \mb x_k \mb x_k^* - \mb I}{} > t} \le \exp\paren{-\frac{n_2 t^2}{6n_1 \theta + 4n_1 t} + 2\log n_1}
\end{align}
for any $t \ge 0$. Taking $t = 10\sqrt{\theta n_1 \log\paren{n_2}/n_2}$ gives the claimed result. 
\end{proof}

\begin{lemma} \label{lem:sp_angle_norm}
Consider two linear subspaces $\mc U$, $\mc V$ of dimension $k$ in $\R^n$ ($k \in [n]$) spanned by orthonormal bases $\mb U$ and $\mb V$, respectively. Suppose $\pi/2 \ge \theta_1 \ge \theta_2 \dots \ge \theta_k \ge 0$ are the principal angles between $\mc U$ and $\mc V$. Then it holds that \\
i) $\min_{\mb Q \in O_k} \norm{\mb U - \mb V \mb Q}{} \le \sqrt{2-2\cos \theta_1}$; \\
ii) $\sin \theta_1 = \norm{\mb U\mb U^* - \mb V\mb V^*}{}$;\\
iii) Let $\mc U^\perp$ and $\mc V^\perp$ be the orthogonal complement of $\mc U$ and $\mc V$, respectively. Then $\theta_1(\mc U, \mc V) = \theta_1(\mc U^\perp, \mc V^\perp)$. 
\end{lemma}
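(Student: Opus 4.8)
The plan is to reduce all three statements to the singular value decomposition of the $k\times k$ matrix $\mb U^*\mb V$, whose singular values are precisely the numbers $\cos\theta_i$ by the definition of the principal angles. First I would fix a full SVD $\mb U^*\mb V = \mb A\mb\Sigma\mb B^*$ with $\mb A,\mb B\in O_k$ and $\mb\Sigma$ diagonal with entries $\cos\theta_1,\dots,\cos\theta_k$ (their order being immaterial), and record the elementary facts $\mb U^*\mb U = \mb V^*\mb V = \mb I_k$, that $\mc P_{\mc U}=\mb U\mb U^*$ and $\mc P_{\mc V}=\mb V\mb V^*$ are orthoprojectors, and that $\mb V^*\mb U=(\mb U^*\mb V)^*$ has the same singular values $\cos\theta_i$. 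Everything below is then bookkeeping with these.

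For part (i) I would exhibit an explicit near‑optimal $\mb Q$. Take $\mb Q_\star = \mb B\mb A^*\in O_k$; then $\mb V\mb Q_\star$ is again an orthonormal basis of $\mc V$, and $\mb U^*\mb V\mb Q_\star = \mb A\mb\Sigma\mb A^*$ is symmetric and positive semidefinite. Hence
\begin{align*}
\paren{\mb U-\mb V\mb Q_\star}^*\paren{\mb U-\mb V\mb Q_\star} \;=\; 2\mb I_k - \mb Q_\star^*\mb V^*\mb U - \mb U^*\mb V\mb Q_\star \;=\; 2\mb A\paren{\mb I_k-\mb\Sigma}\mb A^*,
\end{align*}
whose eigenvalues are $2(1-\cos\theta_i)$. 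Since $\theta_1$ is the largest angle, $\norm{\mb U-\mb V\mb Q_\star}{}^2 = 2-2\cos\theta_1$, so $\min_{\mb Q\in O_k}\norm{\mb U-\mb V\mb Q}{}\le\sqrt{2-2\cos\theta_1}$ (in fact equality holds).

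For part (ii), set $P=\mc P_{\mc U}$, $Q=\mc P_{\mc V}$. I would use the algebraic identity $(P-Q)^2 = (\mb I-Q)P(\mb I-Q) + Q(\mb I-P)Q$, which is verified by expanding both sides. Both summands are positive semidefinite, and their ranges lie in $\mc V^\perp$ and $\mc V$ respectively; a positive semidefinite matrix whose range lies in a subspace $W$ is supported there, so the two summands block‑diagonalize against the decomposition $\mc V^\perp\oplus\mc V$, giving $\norm{P-Q}{}^2 = \max\set{\norm{(\mb I-Q)P}{}^2,\ \norm{(\mb I-P)Q}{}^2}$. Finally, using $\mb U^*\mb V\mb V^*\mb U = \mb A\mb\Sigma^2\mb A^*$,
\begin{align*}
\norm{(\mb I-Q)P}{}^2 \;=\; \norm{(\mb I-\mb V\mb V^*)\mb U}{}^2 \;=\; \norm{\mb I_k-\mb U^*\mb V\mb V^*\mb U}{} \;=\; \max_i\paren{1-\cos^2\theta_i} \;=\; \sin^2\theta_1,
\end{align*}
and symmetrically $\norm{(\mb I-P)Q}{}=\sin\theta_1$; therefore $\norm{\mb U\mb U^*-\mb V\mb V^*}{}=\sin\theta_1$.

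Part (iii) then follows immediately, since $\mc P_{\mc U^\perp}-\mc P_{\mc V^\perp} = \paren{\mb I-\mc P_{\mc U}}-\paren{\mb I-\mc P_{\mc V}} = \mc P_{\mc V}-\mc P_{\mc U}$; applying (ii) to the pair $(\mc U,\mc V)$ (dimension $k$) and to the pair $(\mc U^\perp,\mc V^\perp)$ (dimension $n-k$) gives $\sin\theta_1(\mc U,\mc V)=\norm{\mc P_{\mc U}-\mc P_{\mc V}}{}=\norm{\mc P_{\mc U^\perp}-\mc P_{\mc V^\perp}}{}=\sin\theta_1(\mc U^\perp,\mc V^\perp)$, and injectivity of $\sin$ on $[0,\pi/2]$ finishes it. The one genuinely nontrivial point is part (ii): one needs the fact that the operator norm of the difference of two equal‑rank orthoprojectors is the sine of the largest principal angle, and the crux is the identity for $(P-Q)^2$ above. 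An alternative is to invoke the CS decomposition of the pair $(\mb U,\mb V)$, which simultaneously reduces $\mc P_{\mc U}$ and $\mc P_{\mc V}$ to $1\times1$ and $2\times2$ blocks and makes all three claims transparent, but setting it up is more work than the three‑line identity, so I would prefer the route above.
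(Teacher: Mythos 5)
Your proposal is correct, but it does not follow the paper's route. The paper handles (i) by passing to canonical bases via the CS decomposition (citing II.~Theorem 4.11 of Stewart--Sun), bounding $\min_{\mb Q}\norm{\mb U-\mb V\mb Q}{}$ by the choice $\mb Q=\mb I$ in that canonical frame and computing the norm of $\brac{\mb I-\mb \Gamma;-\mb \Sigma}$, and it disposes of (ii) entirely by citation (Theorem 4.5 and Corollary 4.6 of Stewart--Sun); only your part (iii) coincides with the paper's argument, via $\mc P_{\mc U^\perp}-\mc P_{\mc V^\perp}=\mc P_{\mc V}-\mc P_{\mc U}$. You instead work directly from the SVD $\mb U^*\mb V=\mb A\mb \Sigma\mb B^*$: for (i) you exhibit the Procrustes choice $\mb Q_\star=\mb B\mb A^*$ and compute $(\mb U-\mb V\mb Q_\star)^*(\mb U-\mb V\mb Q_\star)=2\mb A(\mb I-\mb \Sigma)\mb A^*$ (this in fact yields equality, slightly more than the stated inequality, and avoids the case split $2k\le n$ versus $2k>n$); for (ii) you give a self-contained proof through the identity $(P-Q)^2=(\mb I-Q)P(\mb I-Q)+Q(\mb I-P)Q$, the block-diagonalization along $\mc V\oplus\mc V^\perp$, and $\norm{(\mb I-Q)P}{}^2=\norm{\mb I_k-\mb U^*\mb V\mb V^*\mb U}{}=\sin^2\theta_1$ — I checked the identity and the norm computations and they are sound (you tacitly use $\norm{P-Q}{}^2=\norm{(P-Q)^2}{}$ by symmetry, which is fine). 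What your approach buys is a fully self-contained, citation-free argument that needs only the SVD of $\mb U^*\mb V$; what the paper's buys is brevity, since it leans on standard results from Stewart--Sun rather than reproving the projector-difference fact.
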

\begin{proof}
Proof to i) is similar to that of II. Theorem 4.11 in~\cite{stewart1990matrix}. For $2k \le n$, w.l.o.g., we can assume $\mb U$ and $\mb V$ are the canonical bases for $\mc U$ and $\mc V$, respectively. Then 
\begin{align*}
\min_{\mb Q \in O_k} \norm{
\begin{bmatrix}
\mb I - \mb \Gamma \mb Q \\
- \mb \Sigma \mb Q \\
\mb 0
\end{bmatrix}
}{} \le 
\norm{
\begin{bmatrix}
\mb I - \mb \Gamma  \\
- \mb \Sigma \\
\mb 0
\end{bmatrix}
}{} \le \norm{
\begin{bmatrix}
\mb I - \mb \Gamma  \\
- \mb \Sigma
\end{bmatrix}
}{}. 
\end{align*}
Now by definition 
\begin{align*}
\norm{
\begin{bmatrix}
\mb I - \mb \Gamma  \\
- \mb \Sigma
\end{bmatrix}
}{}^2 
& = \max_{\norm{\mb x}{} = 1} \norm{\begin{bmatrix}
\mb I - \mb \Gamma  \\
- \mb \Sigma
\end{bmatrix} \mb x}{}^2 = \max_{\norm{\mb x}{} = 1} \sum_{i=1}^k (1 - \cos  \theta_i)^2 x_i^2 + \sin^2\theta_i x_i^2 \\
& = \max_{\norm{\mb x}{} = 1} \sum_{i=1}^k (2-2\cos \theta_i) x_i^2 \le 2- 2\cos \theta_1. 
\end{align*}
Note that the upper bound is achieved by taking $\mb x = \mb e_1$. When $2k > n$, by the results from CS decomposition (see, e.g., I Theorem 5.2 of~\cite{stewart1990matrix}). 
\begin{align*}
\min_{\mb Q \in O_k} \norm{
\begin{bmatrix}
\mb I & \mb 0 \\
\mb 0 & \mb I \\
\mb 0 & \mb 0
\end{bmatrix}
 - 
\begin{bmatrix}
\mb \Gamma & \mb 0 \\
\mb 0 & \mb I \\
\mb \Sigma & \mb 0
\end{bmatrix} 
}{} \le \norm{
\begin{bmatrix}
\mb I - \mb \Gamma  \\
- \mb \Sigma
\end{bmatrix}
}{}, 
\end{align*}
and the same argument then carries through. To prove ii), note the fact that $\sin \theta_1 = \norm{\mb U \mb U^* - \mb V \mb V^*}{}$ (see, e.g., Theorem 4.5 and Corollary 4.6 of~\cite{stewart1990matrix}). Obviously one also has 
\begin{align*}
\sin \theta_1 = \norm{\mb U \mb U^* - \mb V \mb V^*}{} = \norm{(\mb I - \mb U \mb U^*) - (\mb I - \mb V \mb V^*)}{}, 
\end{align*}
while $\mb I - \mb U \mb U^*$ and $\mb I - \mb V \mb V^*$ are projectors onto $\mc U^\perp$ and $\mc V^\perp$, respectively. This completes the proof. 
\end{proof}

\end{appendices}

{\small
\bibliographystyle{amsalpha}
\bibliography{dl_focm,ncvx}
}

\end{document}